\setlist{  
  listparindent=\parindent,
  parsep=0pt,
}
\theoremstyle{plain}
\newtheorem{thm}{Theorem}[section]
\newtheorem{prop}[thm]{Proposition}
\newtheorem{lemma}[thm]{Lemma}
\newtheorem{cor}[thm]{Corollary}
\theoremstyle{definition}
\newtheorem{mydef}[thm]{Definition}
\newtheorem{ex}[thm]{Example}
\newtheorem{remark}[thm]{Remark}
\newtheorem{ques}{Question}[section]
\newtheorem*{sum-invol}{Summary of result 1 - $\mathcal{H}_{n}$ is involution}
\newtheorem*{sum-GP_ham}{Summary of result 2 - GP Hamiltonian flows}
\numberwithin{equation}{section} %Equation numbering
\DeclarePairedDelimiter\ipp{\langle}{\rangle}
\DeclarePairedDelimiter{\brak}{\lbrack}{\rbrack}
\DeclarePairedDelimiter{\paren}{\lparen}{\rparen}
\DeclarePairedDelimiter{\jp}{\langle}{\rangle}
\DeclareMathOperator{\supp}{supp}
\DeclareMathOperator{\ssupp}{sing\,supp}
\DeclareMathOperator{\sgn}{sgn}
\renewcommand{\det}{\mathrm{det}}
\newcommand{\V}{{\mathcal{V}}}
\newcommand{\p}{{\partial}}
\renewcommand{\d}{\delta}
\newcommand{\tens}{\overset{\otimes}{,}}
\newcommand{\R}{{\mathbb{R}}}
\newcommand{\C}{{\mathbb{C}}}
\newcommand{\N}{{\mathbb{N}}}
\newcommand{\Z}{{\mathbb{Z}}}
\newcommand{\K}{{\mathbb{K}}}
\newcommand{\Ss}{{\mathbb{S}}}
\renewcommand{\H}{{\mathcal{H}}}
\newcommand{\T}{{\mathbb{T}}}
\newcommand{\g}{{\mathfrak{g}}}
\newcommand{\G}{{\mathfrak{G}}}
\renewcommand{\L}{{\mathcal{L}}}
\newcommand{\Sc}{{\mathcal{S}}}
\newcommand{\A}{{\mathcal{A}}}
\newcommand{\D}{\mathcal{D}}
\newcommand{\m}{\mathfrak{m}}
\newcommand{\wt}{\widetilde}
\newcommand{\tl}{\tilde}
\newcommand{\ol}{\overline}
\newcommand{\ul}{\underline}
\newcommand{\ux}{\underline{x}}
\newcommand{\W}{{\mathbf{W}}}
\newcommand{\E}{{\mathbf{E}}}
\DeclareMathOperator{\Sym}{Sym}
\newcommand{\wh}{\widehat}
\newcommand{\ueta}{\underline{\eta}}
\newcommand{\uxi}{\underline{\xi}}
\renewcommand{\ueta}{\underline{\eta}}
\newcommand{\F}{{\mathcal{F}}}
\DeclareMathOperator{\WF}{WF}
\def\XXint#1#2#3{{\setbox0=\hbox{$#1{#2#3}{\int}$ }
\vcenter{\hbox{$#2#3$ }}\kern-.6\wd0}}
\let\oldtocsection=\tocsection
\let\oldtocsubsection=\tocsubsection
\let\oldtocsubsubsection=\tocsubsubsection
\renewcommand{\tocsection}[2]{\hspace{0em}\oldtocsection{#1}{#2}}
\renewcommand{\tocsubsection}[2]{\hspace{1em}\oldtocsubsection{#1}{#2}}
\renewcommand{\tocsubsubsection}[2]{\hspace{2em}\oldtocsubsubsection{#1}{#2}}
\title[Poisson commuting energies for a system of infinitely many bosons]{Poisson commuting energies for a system of infinitely \\ many bosons}
\author[D. Mendelson]{Dana Mendelson$^1$} 
\address{$^1$  
Department of Mathematics \\ 
University of Chicago\\  
5734 S. University Avenue \\ 
Chicago, IL  60637}
\email{dana@math.uchicago.edu}
\thanks{\hspace{-4.6mm} $^1$ D.M. is funded in part by NSF DMS-1800697.}
\author[A. Nahmod]{Andrea R. Nahmod$^2$}
\address{$^2$ 
Department of Mathematics \\ University of Massachusetts\\  710 N. Pleasant Street, Amherst MA 01003}
\email{nahmod@math.umass.edu}
\thanks{$^2$ A.N. is funded in part by NSF DMS-1463714, DMS 1800852, and the Simons Foundation.}
\author[N. Pavlovi\'{c}]{Nata\v{s}a Pavlovi\'c$^3$}
\address{$^3$  
Department of Mathematics\\ 
University of Texas at Austin\\ 
2515 Speedway, Stop C1200\\
Austin, TX 78712}
\email{natasa@math.utexas.edu}
\thanks{$^3$ N.P. is funded in part by NSF DMS-1516228 and NSF DMS-1840314.}
\author[M. Rosenzweig]{Matthew Rosenzweig$^4$}
\address{$^4$  
Department of Mathematics\\ 
University of Texas at Austin\\ 
2515 Speedway, Stop C1200\\
Austin, TX 78712}
\email{rosenzweig.matthew@math.utexas.edu}
\thanks{$^4$ M.R. is funded in part by NSF DMS-1516228 and a Provost Excellence Graduate Fellowship from the University of Texas at Austin.}
\author[G. Staffilani]{Gigliola Staffilani$^5$}
\address{$^5$ Department of Mathematics\\
Massachusetts Institute of Technology\\ 
77 Massachusetts Avenue,  Cambridge, MA 02139}
\email{gigliola@math.mit.edu}
\thanks{$^5$ G.S. is funded in part by NSF DMS-1462401, DMS-1764403, and the Simons Foundation.}
\begin{document}
\maketitle{}

\begin{abstract}
We consider the cubic Gross-Pitaevskii (GP) hierarchy in one spatial dimension. We establish the existence of an infinite sequence of observables such that the corresponding trace functionals, which we call ``energies,'' commute with respect to the weak Lie-Poisson structure defined by the authors in \cite{MNPRS1_2019}. The Hamiltonian equation associated to the third energy functional is precisely the GP hierarchy. The equations of motion corresponding to the remaining energies generalize the well-known nonlinear Schr\"odinger hierarchy, the third element of which is the one-dimensional cubic nonlinear Schr\"odinger equation. This work provides substantial evidence for the GP hierarchy as a new integrable system.
\end{abstract}

\tableofcontents

\section{Introduction} 
\subsection{Main discussion}
Integrable partial differential equations (PDE) are a special class of equations which, broadly speaking, can be solved explicitly,\footnote{Originally, the typical method employed to solve such systems was by method of ``quadratures,'' or, in other words, integration.} for instance by the inverse scattering transform (IST) discovered by Gardner, Greene, Kruskal and Miura \cite{GGKM} and its subsequent reformulation by Lax \cite{Lax68}. In the years since these (and many other) landmark works, there has been much activity on determining which equations, and more generally, systems, are or should be integrable and the mathematical consequences of being integrable. The reader may acquire a sense for the scope of this activity in the very nice survey \cite{deift_survey} of Deift. Despite the lively, ongoing debate \cite{Zhakharov91} over the defining features of integrability, consensus holds that certain equations, such as the \emph{Korteweg-de Vries (KdV)} or \emph{one-dimensional cubic nonlinear Schr\"odinger equation (NLS)}, should be integrable under any reasonable definition of the term.

Even with the vast research on the implications of an equation's integrability, such as conserved quantities, solitons, or hidden symmetries, it remains unclear \emph{why} equations which are so physically relevant also happen to be integrable. Mathematical insight into this line of inquiry would certainly deepen our understanding of the important models that comprise the extensive catalog of known integrable systems. In an article \cite{Calogero1991} on this very question, Calogero advances his thesis that equations are integrable because they are scaling limits of integrable (or conjecturally integrable) systems, which we refer to as \emph{progenitor models} in this discussion.

Inspired by Calogero's suggestion, the present article considers the familiar one-dimensional cubic NLS
\begin{equation}\label{nls}
i\p_{t}\phi + \Delta \phi = 2\kappa|\phi|^{2}\phi, \qquad \phi : \R\times \mathbb{R}\to \mathbb{C}, \quad \kappa \in \{\pm 1\},
\end{equation}
which was shown by Zakharov and Shabat \cite{ZS72} to be exactly solvable by the IST (see also \cite{AKNS74, ZS79, FT07}). We consider equation \eqref{nls} from the viewpoint that it arises as a \emph{mean field} scaling limit from the progenitor \emph{Lieb-Liniger (LL) model} \cite{LL1963_I}, which is a well-known exactly solvable model describing finitely many bosons with $\delta$-potential interactions. Keeping with Calogero's thesis, we conjecture that integrability of the NLS is a consequence of the exact solvability of the underlying LL model, leading us to the expectation of some manifestation of integrability intrinsically at the level of the system describing infinitely many bosons with $\delta$-potential interactions called the \emph{Gross-Pitaevskii (GP) hierarchy}, for which the NLS is a special case.  Accordingly, this work focuses on providing evidence for the GP hierarchy as a new integrable system.

\medskip

Given the aforementioned debate over the precise definition of an integrable PDE, this work focuses on a particular type of integrability known as \emph{Liouville integrability}. The notion of a Liouville integrable Hamiltonian system was originally introduced in the 19th century and refers to a finite-dimensional Hamiltonian system where there is a maximal (in the sense of degrees of freedom) independent set of Poisson commuting integrals. In the finite-dimensional setting, a Liouville completely integrable system, which satisfies some technical conditions, can be solved by so-called action angle variables, which allow for explicit integration of the system.

The exact solvability of the one-dimensional cubic NLS by the IST was formally shown in the aforementioned work \cite{ZS72} and was mathematically revisited by Beals and Coifman \cite{BC1984, BC1985, BC1987, BC1989}, Terng and Uhlenbeck \cite{Terng1997, TU1998}, Deift and Zhou \cite{Zhou1998, Zhou1989, DZ2003}, among others. Liouville integrability is a particular consequence of this exact solvability, which asserts that the Hamiltonian is one element of a countable sequence of functionals in nontrivial\footnote{By nontrivial, we mean that these functionals are not all Casimirs for the Poisson structure (i.e. they Poisson commute with any functional).} mutual involution. More precisely, one recursively defines (see \cref{ssec:nls_pc_im}) a sequence of operators
\begin{equation}
\label{eq:w_rec}
\begin{split}
w_{n}:\Sc(\R) \rightarrow \Sc(\R), \qquad \begin{cases} \displaystyle w_{1}[\phi] &\coloneqq \phi \\
\displaystyle w_{n+1}[\phi] &\coloneqq -i\p_{x}w_{n}[\phi] + \kappa\bar{\phi}\sum_{k=1}^{n-1}w_{k}[\phi]w_{n-k}[\phi].\end{cases}
\end{split}
\end{equation}
Each $w_n$ generates a functional $I_{n}:\Sc(\R) \rightarrow \C$ by
\begin{equation}
\label{eq:In_def}
I_{n}(\phi) \coloneqq \int_{\R}dx\ol{\phi(x)}w_{n}[\phi](x), \qquad \forall \phi\in\Sc(\R),
\end{equation}
which is, in fact, real-valued (see \cref{lem:I_invol}). Endowing the Schwartz space $\Sc(\R)$ with the standard weak symplectic structure\footnote{See \cref{ssec:pre_WP} for background material on weak symplectic and weak Poisson structures.} given by
\begin{equation}\label{equ:poisson1}
\omega_{L^2}(\phi,\psi) = 2\Im{\int_{\R}dx\ol{\phi(x)}\psi(x)},
\end{equation}
we obtain a canonical weak Poisson structure on $\Sc(\R)$ as follows: consider the real unital\footnote{I.e. the algebra has a multiplicative identity.} algebra with respect to point-wise multiplication
\begin{align}\label{equ:A_sc}
\A_{\Sc} = \{ H \in C^\infty(\Sc(\R);\R) \,:\, \grad_{s}H \in C^{\infty}(\Sc(\R);\Sc(\R)) \}.
\end{align}
Here, $\grad_{s}$ is the symplectic gradient associated to the form $\omega_{L^2}$ (see \cref{def:re_grad} and \cref{schwartz_deriv} for definitions). For $F,G\in \A_{\Sc}$, define their Poisson bracket by
\begin{equation}\label{equ:poisson2}
\pb{F}{G}_{L^{2}}(\phi) \coloneqq \omega_{L^2}(\grad_{s}F(\phi), \grad_{s}G(\phi)), \qquad \forall \phi\in\Sc(\R).
\end{equation}
Then one can verify (see \cref{ssec:nls_pc}) that 
\begin{equation}
\label{eq:I_pb_com}
\pb{I_{n}}{I_{m}}_{L^2}(\phi)= 0, \qquad \forall \phi\in\Sc(\R), \enspace\forall n,m\in\N.
\end{equation}

Furthermore, the solution to the NLS \eqref{nls} is the integral curve to the Hamiltonian equation of motion associated to the third functional $I_3$. That is,
\begin{equation}
\paren*{\frac{d}{dt}\phi}(t) = \grad_{s}I_{3}(\phi(t)).
\end{equation}
In particular, if $\phi\in C^{\infty}([t_0,t_1];\Sc(\R))$ is a classical solution to \eqref{nls}, then $I_{n}(\phi)$ is conserved on the lifespan $[t_0,t_1]$ of $\phi$ for every $n \in \N$. Furthermore, each of the functionals $I_{n}$ has an associated equation of motion
\begin{equation}\label{equ:nnls}
\paren*{\frac{d}{dt}\phi}(t) = \grad_{s}I_{n}(\phi(t)).
\end{equation}
Following the terminology of Faddeev and Takhtajan \cite{FT07}, we call \eqref{equ:nnls} the \emph{$n$-th nonlinear Schr\"{o}dinger equation (nNLS)}. The $n=1,2$ equations are trivial, the $n=3$ equation is the NLS \eqref{nls}, and the $n=4$ equation is the complex mKdV equation
\begin{equation}
\label{eq:mkdv}
\p_{t}\phi = \p_{x}^{3}\phi -6\kappa|\phi|^{2}\p_{x}\phi, \qquad \kappa \in\{\pm 1\}.
\end{equation}
To our knowledge, the $n$-th nonlinear Schr\"odinger equations do not have specific names for $n\geq 5$. Together, the family of $n$-th nonlinear Schr\"odinger equations constitutes the \emph{nonlinear Schr\"odinger hierarchy}, as termed by Palais \cite{Palais1997}.

To set the stage for our work in this paper, we begin by briefly discussing the progenitor LL model and its relation to the NLS. The LL model is the many-body problem
\begin{equation}
\label{eq:LL_model}
i\p_t\Phi_N = H_N\Phi_N, \qquad H_N = \sum_{j=1}^N -\Delta_{x_j} + \frac{2\kappa}{(N-1)} \sum_{1 \leq j < k \leq N} \delta(X_j - X_k),
\end{equation}
where $\Phi_N\in L_{sym}^2(\R^N)$, the coupling constant has been taken to be proportional to $1/N$ so that we are in the mean field scaling regime. The value of $\kappa\in\{\pm 1\}$ determines whether the system is repulsive ($\kappa=1$) or attractive ($\kappa=-1$). Here, $H_N$ may be realized as a self-adjoint operator on $L_{sym}^2(\R^N)$, the space of bosonic wave functions, by means of the KLMN theorem (see Theorem X.17 of \cite{RSII}). The many-body problem \eqref{eq:LL_model} is a toy model for an interacting Bose gas in one dimension, and both mathematical and physical interest in \eqref{eq:LL_model} stems from its remarkable property of being \emph{exactly solvable}. More precisely, Lieb and Liniger used the Bethe ansatz\footnote{Bethe ansatz refers to a technique in the study of exactly solvable models introduced by Hans Bethe to find exact eigenvalues and eigenvectors of the antiferromagnetic Heisenberg spin chain \cite{Bethe1931}. For more on this technique, we refer the reader to the monographs 
\cite{Gaudin2014} and \cite{KBI1993}.} in their seminal paper \cite{LL1963_I} to obtain explicit formulae for the eigenfunctions and spectrum of the Hamiltonian $H_N$. Analogous to the free Schr\"{o}dinger equation, one has an explicit distorted Fourier transform associated to $H_N$, which by solving an ordinary differential equation in the distorted Fourier domain yields a formula for the solution to \eqref{eq:LL_model}.

The connection between the LL model and the NLS is via an infinite particle limit. To understand the dynamics of \eqref{eq:LL_model} in the limit as the particle number $N \to \infty$, one needs to address the fact that the wave functions $\{\Phi_N\}_{N \in \N}$ do not live in a common topological space. One way to do so is to consider sequences of hierarchies of reduced density matrices
\begin{equation}
\gamma_{N}^{(k)} \coloneqq \Tr_{k+1, \ldots, N}\paren*{\ket*{\Phi_N}\bra*{\Phi_N}}, \qquad k\in\N
\end{equation}
where $\Tr_{k+1,\ldots,N}$ denotes the partial trace over the $k+1,\ldots,N$ coordinates. The $\{\gamma_N^{(k)}\}_{k=1}^N$ then solve the \emph{BBGKY hierarchy},\footnote{Bogoliubov-Born-Green-Kirkwood-Yvon hierarchy} which is a coupled system of linear equations describing the evolution of finitely many interacting bosons. In the limit as $N \to\infty$, the sequence $\{\gamma_N^{(k)}\}_{k\in\N}$ (where by convention, $\gamma_N^{(k)} = 0$ for $k > N$) formally converges to a solution $\{\gamma^{(k)}\}_{k\in\N}$ of the cubic Gross-Pitaevskii (GP) hierarchy
\begin{equation}
\label{eq:GP}
i\p_t\gamma^{(k)} = \comm{-\Delta_{\ux_k}}{\gamma^{(k)}} + 2\kappa\sum_{j=1}^k \Tr_{k+1}\paren*{\comm{\delta(X_j-X_{k+1})}{\gamma^{(k+1)}}}, \qquad k\in\N,
\end{equation}
where we have introduced the notation $\Delta_{\ux_k}\coloneqq \sum_{j=1}^k \Delta_{x_j}$. While \eqref{eq:GP} is a linear system, it is \emph{coupled}, rendering its mathematical study nontrivial. The connection with the NLS \eqref{nls} is then as follows:
\begin{equation}\label{equ:fac}
(\gamma^{(k)})_{k\in\N}, \enspace \gamma^{(k)} \coloneqq \ket*{\phi^{\otimes k}}\bra*{\phi^{\otimes k}} \text{ solves the GP \eqref{eq:GP}} \Longleftrightarrow \phi:[0,T]\times\R\rightarrow\C \text{ sovles the NLS \eqref{nls}}.
\end{equation}
The above formal discussion has been made rigorous in works by Adami, Bardos, Golse, and Teta \cite{ABGT2004} and Ammari and Breteaux \cite{AmBre2012}.

\medskip
In light of our previous discussion on Liouville integrability of the NLS, we turn to our search for evidence of integrability at the infinite-particle level. We note that this search necessitates a Hamiltonian formulation of the GP hierarchy, for which we rely on the recent work of the authors \cite[Theorem 2.10]{MNPRS1_2019} that shows that the GP hierarchy is the equation of motion on a weak Poisson manifold for a Hamiltonian $\H_{GP}$. We formulate the following question:

\begin{ques}\label{q:GP_int}
Does the one-dimensional cubic GP hierarchy possess an infinite sequence of functionals $\{\H_{n}\}_{n \in \mathbb{N}}$ containing the Hamiltonian $\H_{GP}$ for the GP hierarchy, which are in nontrivial involution?
\end{ques}

We provide an affirmative answer to \cref{q:GP_int} with our \cref{thm:GP_invol}, evidencing Liouville integrability of the GP hierarchy. Note that an immediate consequence of the affirmative answer to \cref{q:GP_int} is that the functionals $\H_n$ are conserved along the flow of the GP hierarchy. 

\medskip
The functionals $\H_n$ which we construct are trace functionals associated to the family of observable $\infty$-hierarchies $\{-i\W_n\}_{n \in \N}$ which belong to the Lie algebra $\G_{\infty}$ defined in \cite{MNPRS1_2019}, the definition of which we review in \cref{prop:G_inf_br} below. Heuristically speaking, our definition of these observable hierarchies proceeds by a quantization of the recursive formula \eqref{eq:w_rec} for the one-particle nonlinear operators $\{w_n\}_{n \in \N}$. More precisely, we observe that the functionals $I_n$ defined in \eqref{eq:In_def} are finite sums of multilinear forms whose arguments are restricted to a single function $\phi\in\Sc(\R)$ and its complex conjugate $\ol{\phi}\in\Sc(\R)$: 
\begin{equation}
I_n(\phi) = \sum_{k=1}^{N(n)} I_n^{(k)}[\underbrace{\phi,\ldots,\phi}_{k};\underbrace{\ol{\phi},\ldots,\ol{\phi}}_{k}], \qquad N(n)\in\N.
\end{equation}
A posteriori of our construction, we show that the $k$-particle component $\W_n^{(k)}$ of $\W_n=~(\W_n^{(j)})_{j\in\N}$ is the Schwartz kernel of each $I_n^{(k)}$. 

To prove the Poisson commutativity of the functionals $\H_n$ with respect to the Poisson structure underlying the GP hierarchy from \cite{MNPRS1_2019}, we simultaneously proceed at the level of the GP hierarchy and at the level of the NLS equation. We combine a good understanding of the multilinear structure of the $I_n$ with a knowledge of the structure of bosonic density matrices to show that Poisson commutativity of the $\H_n$ is equivalent to that of certain functionals $I_{b,n}$ defined in \eqref{eq:Ibn_intro_def}, which are associated to an integrable system generalizing the NLS.\footnote{The inspiration for considering this system comes from a remark of Faddeev and Takhtajan \cite[Remark 13, pg. 181]{FT07}.}  We rewrite the NLS \eqref{nls} as the system
\begin{equation}
\label{nls_sys}
\begin{cases}
i\p_t\phi = -\Delta\phi + 2\kappa\phi^2\ol{\phi} \\
i\p_t\ol{\phi} = \Delta\ol{\phi} -2\kappa\ol{\phi}^2\phi
\end{cases},
\end{equation}
and relax the requirement that $\ol{\phi}$ denotes the complex conjugate of $\phi$ (i.e. $\phi$ and $\ol{\phi}$ are independent coordinates on $\Sc(\R)$). We then show that the family $\{I_{b,n}\}_{n\in\N}$ is mutually involutive (see \cref{prop:Ib_invol}). By also showing that there is a Poisson morphism from the phase space of \eqref{nls_sys}\footnote{Strictly speaking, the domain of the morphism is a quotient space of the phase space of \eqref{nls_sys} with the property that the elements are ``self-adjoint''.}  to the phase space of the GP hierarchy, we obtain the desired conclusion. This equivalence we prove, recorded in \cref{ibn_inv} below, is quite interesting in its own right and was not expected by the authors at the onset of this project.

\begin{remark}
In \cite{MNPS2016}, four of the co-authors of the present article identified an infinite sequence of conserved quantities for the GP hierarchy, which agreed with the $I_n$ defined in \eqref{eq:In_def} when evaluated on factorized states. At the time of \cite{MNPS2016}, a Hamiltonian structure for the GP hierarchy had not been identified, so it was premature to ask if the conservation of these quantities was a consequence of their Poisson commuting with the GP Hamiltonian, let alone their being in mutual involution, as is the case with the functionals $I_n$. The current work also provides a substantial generalization of the previous work \cite{MNPS2016}, in that the definition of the functionals $\H_n$ in \cite{MNPS2016} used the quantum de Finetti theorems \cite{HM1976, Stormer1969, LNR2015}. Indeed, these functionals are initially defined on factorized states of the form in \eqref{equ:fac}, and then their domain of definition is extended to statistical averages of such factorized states by means of quantum de Finetti. In contrast, we now establish that these functionals are defined on the entire GP phase space. In particular, we construct $\H_n$ without any considerations of admissibility\footnote{An infinite sequence of trace-class density matrices $\{\gamma^{(k)}\}_{k\in\N}$ is said to be \emph{admissible} if $\gamma^{(k)} = \Tr_{k+1}(\gamma^{(k+1)})$.} and without any recourse to representation theorems, such as the quantum de Finetti theorems. In fact, admissibility plays no role in this paper.
\end{remark}

\medskip
Following our affirmative answer to \cref{q:GP_int}, one may wonder from a more dynamical perspective, if there is a natural connection between the flows generated by the Poisson commuting functionals $\H_n$ and other well-known one-particle equations. We are thus motivated to address the following question:

\begin{ques}\label{q:nGP}
Does each of the functionals $\H_{n}$ generate a Hamiltonian equation of motion related to the $n$-th nonlinear Schr\"{o}dinger equation \eqref{equ:nnls} via factorized solutions in the spirit of \eqref{equ:fac}? 
\end{ques}

Our \cref{thm:GP_fac} below provides an affirmative answer to \cref{q:nGP}, proving that factorized solutions of the equation of motion with Hamiltonian $\H_n$ are of the form \eqref{equ:fac}, where now each factor solves the $n$-th NLS equation. In this sense, we establish that the family comprised of the $n$-th GP hierarchies is the appropriate infinite-particle generalization of the nonlinear Schr\"{o}dinger hierarchy. As with the proof of our involution result, our proof of this factorization connection relies on a good understanding of the multilinear structure underlying the $I_n$. We then use this understanding to find a formula for the symplectic gradients $\grad_s I_n$, which together with a general formula for Hamiltonian vector fields on the GP phase space allows us to arrive at the desired conclusion. The result \cref{thm:GP_fac} can be viewed as a one-dimensional extension of Theorem 2.10 from our companion work \cite{MNPRS1_2019}, which proves a Hamiltonian formulation for the GP hierarchy in all dimensions. We also include an explicit computation of the fourth GP hierarchy in \cref{ssec:gp_flows_ex}, which corresponds to the complex mKdV equation \eqref{eq:mkdv}.

\medskip
We close the introduction by returning to the aforementioned thesis of Calogero with an eye towards future work. As we previously commented, if Calogero's thesis is correct for the NLS, as we believe it is, then there should be some evidence of integrability at the level of the GP hierarchy. Our work provides such evidence by showing that there is a family of Poisson commuting functionals which encode the nonlinear Schr\"{o}dinger hierarchy. Given that the work of Adami et al. \cite{ABGT2004} and Ammari et al. \cite{AmBre2012} mathematically demonstrates that the NLS \eqref{nls_sys} is the mean field limit of the LL model \cref{eq:LL_model}, it is natural to ask if there exists a connection between our functionals $\H_n$ together with the family of $n$-th GP hierarchies--and by implication the functionals $I_n$ together with the nonlinear Schr\"odinger hierarchy--and the LL model. Establishing this connection in rigorous mathematical terms seems a difficult but worthwhile task. We believe that the core difficulty lies in understanding the connection between classical and quantum field theories via the processes of quantization and mean field limit. This connection figures prominently in the work of Fr\"ohlich, Tsai, and Yau \cite{FTY2000} and Fr\"olich, Knowles, and Pizzo \cite{FKP} and references therein. We also mention the work \cite{Thacker1978_pcl}, in which Thacker posits a conjecture related to this line of inquiry, and the work \cite{Davies1990}, in which Davies discusses the issues with naive quantization of classical approaches to integrability. We hope that the work of our paper together with the derivation of the Hamiltonian formulation of the NLS in our companion paper \cite{MNPRS1_2019} will inspire others to join us in elucidating these fascinating connections.

\subsection{Acknowledgments}
The authors thank Karen Uhlenbeck for helpful discussion on the role of geometry in the study of integrable systems during the course of this project. The authors also thank J\"urg Fr\"ohlich for helpful comments regarding references which have enhanced the presentation of the article.

\section{Statement of main results and blueprint of proofs}
\label{sec:mr_bp}
In this section, we provide an outline and discussion of the main results of this article and their proofs. We begin by recalling in \cref{ssec:mr_geo_rev} several of the main geometric results from \cite{MNPRS1_2019} which are needed in the current work.
  
\subsection{Review of \cite{MNPRS1_2019}}
\label{ssec:mr_geo_rev} 
A major soure of difficulty in \cite{MNPRS1_2019} is the construction of an infinite-dimensional Lie algebra of observable $\infty$-hierarchies and its dual weak Lie-Poisson manifold of density matrix $\infty$-hierarchies, which together form the geometric foundation of the Hamiltonian formulation of the GP hierarchy. The analytic difficulties in this definition stem primarily from the fact that the GP Hamiltonian $\H_{GP}=\H_3$ is the trace functional associated to a \emph{distribution-valued operator (DVO)}.\footnote{Not to be confused with operator-valued distributions in quantum field theory.} The natural Lie bracket for such operators requires composition of two operators in a given particle coordinate. Such a definition is not possible in general since the composition of two DVOs may be ill-defined. Overcoming these difficulties necessitated the identification of a property for DVOs which we termed the \emph{good mapping property}, whose definition we recall here.

\begin{restatable}[Good mapping property]{mydef}{gmp}
\label{def:gmp}
Let $\ell\in\N$. We say that an operator $A^{(\ell)}\in \L(\Sc(\R^{\ell}),\Sc'(\R^{\ell}))$ has the \emph{good mapping property} if for any $\alpha\in\N_{\leq \ell}$, the continuous bilinear map
\begin{equation*}
\begin{split}
&\Sc(\R^{\ell}) \times\Sc(\R^{\ell}) \rightarrow \Sc_{x_\alpha'}(\R;\Sc_{x_\alpha}'(\R)) \\
&(f^{(\ell)},g^{(\ell)}) \mapsto \int_{\R^{\ell-1}} dx_{1} \ldots dx_{\alpha-1}dx_{\alpha+1} \ldots dx_{\ell} A^{(\ell)}(f^{(\ell)})(x_1, \ldots, x_\ell) g^{(\ell)}(x_1, \ldots, x_{\alpha-1},x_{\alpha}',x_{\alpha+1}, \ldots,x_{\ell}),
\end{split}
\end{equation*}
may be identified with a continuous bilinear map $\Sc(\R^{\ell})\times \Sc(\R^{\ell}) \rightarrow \Sc(\R^{2})$.\footnote{Here and throughout this paper, an integral involving a distribution should be understood as a distributional pairing unless specified otherwise.} 
\end{restatable}

The good mapping property has the following important consequence: let $(\alpha,\beta)\in\N_{\leq\ell}\times \N_{\leq j}$, and let $A^{(\ell)}\in \L(\Sc(\R^\ell),\Sc'(\R^\ell))$ and $B^{(j)}\in \L(\Sc(\R^j),\Sc'(\R^j))$ have the good mapping property. If $k\coloneqq \ell+j-1$, then the bilinear map
\begin{equation}
\begin{split}
&\Sc(\R^k)^2 \rightarrow \Sc_{(\ux_{\alpha-1},\ux_{\alpha+1;\ell},\ux_\ell')}(\R^{\alpha-1}\times\R^{\ell-\alpha}\times\R^\ell;\Sc_{x_\alpha}'(\R))\\
&(f^{(k)},g^{(k)}) \mapsto \begin{cases} \ipp*{B_{(1,\ldots,j)}^{(j)}(f^{(k)}(\ux_{\alpha-1},\cdot,\ux_{\alpha+1;\ell},\cdot)),(\cdot)\otimes g^{(k)}(\ux_\ell',\cdot)}_{\Sc'(\R^j)-\Sc(\R^j)}, &{\beta=1} \\ \ipp*{B_{(2,\ldots,\beta,1,\beta+1,\ldots,j)}^{(j)}(f^{(k)}(\ux_{\alpha-1},\cdot,\ux_{\alpha+1;\ell},\cdot)),(\cdot)\otimes g^{(k)}(\ux_\ell',\cdot)}_{\Sc'(\R^j)-\Sc(\R^j)}, & {\beta \neq 1} \end{cases}
\end{split}
\end{equation}
may be identified with a unique smooth bilinear map
\begin{equation}
\Phi_{B^{(j)},\alpha,\beta}:\Sc(\R^k)\times\Sc(\R^k)\rightarrow \Sc_{(\ux_\ell,\ux_\ell')}(\R^{2\ell})
\end{equation}
via
\begin{equation}
\begin{split}
&\int_{\R}dx_\alpha \Phi_{B^{(j)},\alpha,\beta}(f^{(k)},g^{(k)})(\ux_\ell;\ux_\ell')\phi(x_\alpha) \\
&=\begin{cases}\displaystyle \ipp*{B_{(1,\ldots,j)}^{(j)}(f^{(k)}(\ux_{\alpha-1},\cdot,\ux_{\alpha+1;\ell},\cdot)),\phi\otimes g^{(k)}(\ux_\ell',\cdot)}_{\Sc'(\R^j)-\Sc(\R^j)}, & {\beta=1} \\
\displaystyle \ipp*{B_{(2,\ldots,\beta,1,\beta+1,\ldots,j)}^{(j)}(f^{(k)}(\ux_{\alpha-1},\cdot,\ux_{\alpha+1;\ell},\cdot)),\phi\otimes g^{(k)}(\ux_\ell',\cdot)}_{\Sc'(\R^j)-\Sc(\R^j)}, &{\beta\neq 1}, \end{cases}
\end{split}
\end{equation}
for any $\phi\in\Sc(\R)$ and $(\ux_{1;\alpha-1},\ux_{\alpha+1;\ell},\ux_\ell')\in \R^{2\ell-1}$. Here, the subscript $(2,\ldots,\beta,1,\beta+1,\ldots,j)$ is to be interpreted in the sense of the subscript notation in \eqref{eq:sym_repeat} (see also \cref{prop:ext_k}).\footnote{So as to avoid a cumbersome consideration of cases in the sequel, we will not distinguish between the $\beta=1$ and $\beta\neq 1$ cases going forward.} Hence, by the Schwartz kernel theorem isomorphism
\begin{equation}
\L(\Sc(\R^k),\Sc'(\R^k)) \cong \Sc(\R^{2k}),
\end{equation}
we can define the following composition as an element
\begin{equation}
(A^{(\ell)}\circ_\alpha^\beta B^{(j)})\in \L(\Sc(\R^k),\Sc'(\R^k))
\end{equation}
by
\begin{equation}
\ipp*{(A^{(\ell)}\circ_\alpha^\beta B^{(j)})f^{(k)},g^{(k)}}_{\Sc'(\R^k)-\Sc(\R^k)} \coloneqq \ipp*{K_{A^{(\ell)}},\Phi_{B^{(j)},\alpha,\beta}^t(f^{(k)},g^{(k)})}_{\Sc'(\R^{2k})-\Sc(\R^{2k})},
\end{equation}
where $K_{A^{(\ell)}}$ denotes the Schwartz kernel of $A^{(\ell)}$ and $\Phi_{B^{(j)},\alpha,\beta}^t(f^{(k)},g^{(k)})$ denotes the transpose of $\Phi_{B^{(j)},\alpha,\beta}(f^{(k)},g^{(k)})$ defined by
\begin{equation}
\Phi_{B^{(j)},\alpha,\beta}^t(f^{(k)},g^{(k)})(\ux_j;\ux_j') \coloneqq \Phi_{B^{(j)},\alpha,\beta}(f^{(k)},g^{(k)})(\ux_j';\ux_j), \qquad \forall (\ux_j,\ux_j')\in\R^{2j}.
\end{equation}
Note that $A^{(\ell)}\circ_\alpha^\beta B^{(j)}$ coincides with the composition
\begin{equation}
A_{(1,\ldots,\ell)}^{(\ell)}B_{(\ell+1,\ldots,\ell+\beta-1,\alpha,\ell+\beta,\ldots,k)}^{(j)}
\end{equation}
when the latter is defined. We let $\L_{gmp}(\Sc(\R^{\ell}),\Sc'(\R^{\ell}))$ denote the subset of $\L(\Sc(\R^{\ell}),\Sc'(\R^{\ell}))$ of elements with the good mapping property, and $\L_{gmp,*}(\Sc(\R^\ell),\Sc'(\R^\ell))$ denote the further subset of elements which are skew-adjoint (see \cref{lem:dvo_adj} and \cref{def:dvo_sa} for the definitions of adjoint and skew-adjoint for a DVO). We established in \cite[Lemma 6.1, Remark 6.3]{MNPRS1_2019} that the composition
\begin{equation}
(\cdot)\circ_\alpha^\beta(\cdot): \L_{gmp,*}(\Sc(\R^\ell),\Sc'(\R^\ell)) \times \L_{gmp,*}(\Sc(\R^j),\Sc'(\R^j)) \rightarrow \L_{gmp,*}(\Sc(\R^k),\Sc'(\R^k))
\end{equation}
is a separately continuous, bilinear map.

With the composition map $(\cdot)\circ_\alpha^\beta (\cdot)$ in hand, we proceed to reviewing the main geometric actors from \cite{MNPRS1_2019}. We recall that
\begin{align}\label{gk_def}
\g_{k,gmp} \coloneqq \{A^{(k)} \in  \L_{gmp}(\Sc_{s}(\R^{k}),\Sc_{s}'(\R^{k})) : (A^{(k)})^{*} = -A^{(k)} \},
\end{align}
where $\Sc_s(\R^k)$ is the subspace of $\Sc(\R^k)$ consisting of functions invariant under permutation of coordinates (see \cref{sym_schwartz}), and 
\begin{equation}
\G_{\infty} \coloneqq \bigoplus_{k=1}^{\infty} \g_{k,gmp}
\end{equation}
endowed with the locally convex topology. We equip $\G_{\infty}$ with a Lie bracket given by
\begin{equation}
\label{eq:LB_def}
\begin{split}
&\comm{A}{B}_{\G_{\infty}} = C=(C^{(k)})_{k\in\N}\\
&C^{(k)} \coloneqq \Sym_k\paren*{\sum_{\ell,j\geq 1; \ell+j-1=k} \sum_{\alpha=1}^{\ell}\sum_{\beta=1}^{j} \paren*{(A^{(\ell)} \circ_\alpha^\beta B^{(j)}) - (B^{(j)}\circ_\beta^\alpha A^{(\ell)})}},\footnotemark
\end{split}
\end{equation}
\footnotetext{Strictly speaking, a priori it is not the operators $A^{(\ell)}$ and $B^{(j)}$ that appear in the right-hand side, but instead extensions $\tl{A}^{(\ell)}\in \L_{gmp}(\Sc(\R^\ell),\Sc'(\R^\ell))$ and $\tl{B}^{(j)}\in\L_{gmp}(\Sc(\R^j),\Sc'(\R^j))$. The right-hand side is independent of the choice of extension, as shown in \cite[Remark 6.5]{MNPRS1_2019}, and therefore we will not comment on this technical point in the sequel.}
where $\Sym_k$ denotes the bosonic symmetrization operator given by
\begin{equation}\label{eq:sym_repeat}
\Sym_k(A^{(k)}) \coloneqq \frac{1}{k!} \sum_{\pi\in\Ss_{k}} A^{(k)}_{(\pi(1), \ldots, \pi(k))}, \quad A^{(k)}_{(\pi(1), \ldots, \pi(k))} = \pi\circ A^{(k)} \circ \pi^{-1}.
\end{equation}

\begin{prop}[\protect{\cite[Proposition 2.7]{MNPRS1_2019}}]
\label{prop:G_inf_br}
$(\G_{\infty},\comm{\cdot}{\cdot}_{\G_{\infty}})$ is a Lie algebra.
\end{prop}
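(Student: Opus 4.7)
My plan is to verify in turn the four defining properties of a Lie algebra: that the bracket $[\cdot,\cdot]_{\G_\infty}$ is well-defined as a map $\G_\infty\times\G_\infty\to\G_\infty$, bilinear, skew-symmetric, and satisfies the Jacobi identity. The first three properties follow cleanly from the properties of $(\cdot)\circ_\alpha^\beta(\cdot)$ recalled in the excerpt; the Jacobi identity is where the real work lies.

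For well-definedness, fix $k\in\N$ and note that the indexing set of quadruples $(\ell,j,\alpha,\beta)$ with $\ell,j\geq 1$, $\ell+j-1=k$, $1\leq\alpha\leq\ell$, $1\leq\beta\leq j$ is finite, so $C^{(k)}$ is a finite sum of well-defined elements of $\L_{gmp,*}(\Sc(\R^k),\Sc'(\R^k))$ by the recalled separate continuity of $(\cdot)\circ_\alpha^\beta(\cdot)$. The skew-adjointness of $C^{(k)}$ reduces to the identity $(A\circ_\alpha^\beta B)^* = B^*\circ_\beta^\alpha A^*$, which combined with $A^*=-A$ and $B^*=-B$ shows each summand $A\circ_\alpha^\beta B - B\circ_\beta^\alpha A$ is skew-adjoint; the bosonic symmetrization $\Sym_k$ preserves skew-adjointness and imposes bosonic symmetry, so $C^{(k)}\in\g_{k,gmp}$. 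Bilinearity of $[\cdot,\cdot]_{\G_\infty}$ is inherited termwise from bilinearity of each $\circ_\alpha^\beta$. Skew-symmetry is the observation that the reindexing $(\ell,j,\alpha,\beta)\leftrightarrow(j,\ell,\beta,\alpha)$ in the defining sum negates each summand.

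The Jacobi identity is the principal challenge. For $A\in\g_{\ell,gmp}$, $B\in\g_{j,gmp}$, $C\in\g_{m,gmp}$ and $k=\ell+j+m-2$, I need to show that the $k$-particle component of $[A,[B,C]]_{\G_\infty}+[B,[C,A]]_{\G_\infty}+[C,[A,B]]_{\G_\infty}$ vanishes. My approach is to re-express each partial composition $X\circ_\alpha^\beta Y$ in the extended notation of \eqref{eq:sym_repeat}, where it becomes an honest operator product $X^{(\ell)}_{(1,\ldots,\ell)}\,Y^{(j)}_{(\ell+1,\ldots,\ell+\beta-1,\alpha,\ell+\beta,\ldots,k)}$ on $\Sc(\R^k)$; in this form, iterated compositions $(X\circ Y)\circ Z$ and $X\circ(Y\circ Z)$ become genuinely associative operator products on a sufficiently large particle space, provided the coordinate relabelings between inner and outer compositions are tracked carefully. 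Once this is in place, expanding the inner brackets inside each term of the cyclic sum produces the familiar four-term shape $XYZ-XZY-YZX+ZYX$, and summing cyclically over $(A,B,C)$ gives the standard twelve-term cancellation valid in any associative algebra. The bosonic symmetrization $\Sym_k$ is what allows contributions arising from different cyclic permutations, but corresponding to the same underlying product up to a permutation of particle labels, to be identified with one another.

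The main obstacle will be the combinatorial bookkeeping in the Jacobi verification: tracking which particle indices each composition acts on, ensuring that the implicit coordinate shifts in $(X\circ_\alpha^\beta Y)\circ_\gamma^\delta Z$ are consistent when reassociated as $X\circ_\alpha^\beta(Y\circ_{\gamma'}^{\delta'} Z)$, and verifying that contributions from different contraction patterns (for instance, whether $A$ contracts simultaneously with both $B$ and $C$ at a common slot versus only with $B$) cancel in the appropriate triples after $\Sym_k$. A further technical point, flagged in the footnote following \eqref{eq:LB_def}, is that all partial compositions are implicitly defined via extensions of the operators to the full $\Sc(\R^k)$; the independence of the bracket from the choice of extension, established in \cite[Remark 6.5]{MNPRS1_2019}, is what makes it legitimate to carry out these manipulations in a fixed associative algebra of DVOs without ambiguity.
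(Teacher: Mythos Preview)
The paper does not prove this proposition; it is stated with attribution to \cite[Proposition 2.7]{MNPRS1_2019} and no argument is given here. Your outline is a reasonable sketch of the standard verification (well-definedness, bilinearity, skew-symmetry, Jacobi), and the points you flag---the bookkeeping of particle indices under iterated $\circ_\alpha^\beta$ compositions, the role of $\Sym_k$ in identifying permutation-equivalent contributions, and the extension-independence issue from \cite[Remark 6.5]{MNPRS1_2019}---are indeed the substantive content of the proof in the companion paper, but there is nothing in the present paper to compare against.
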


Next, we recall the definition of the weak Lie-Poisson manifold $(\G_{\infty}^{*},\A_{\infty},\pb{\cdot}{\cdot}_{\G_{\infty}^{*}})$, which is the phase space underlying the GP hierarchy. We  define the real topological vector space
\begin{equation}
\label{eq:gk*_def}
\g_k^* \coloneqq \left\{\gamma^{(k)} \in \L(\Sc_{s}'(\R^{k}),\Sc_{s}(\R^{k})) : \gamma^{(k)} = (\gamma^{(k)})^{*}\right\}
\end{equation}
and define the topological direct product
\begin{equation}
\label{eq:Ginf*_def}
\G_{\infty}^{*} \coloneqq \prod_{k=1}^{\infty} \g_k^*.
\end{equation}
Attached to $\G_\infty^*$ is the admissible algebra of functionals $\A_{\infty}$ defined to be the real algebra with respect to point-wise product generated by functionals in the set
\begin{equation}
\label{eq:Ainf_gen}
\begin{split}
&\{F\in C^{\infty}(\G_{\infty}^{*};\R) : F(\cdot) = i\Tr(\W\cdot),\,\, \W\in\G_{\infty}\}  \cup \{F\in C^{\infty}(\G_{\infty}^{*};\R) : F(\cdot) \equiv C\in\R\}.
\end{split}
\end{equation}
Most importantly, our choice of $\A_{\infty}$ contains the trace functionals associated to the observable $\infty$-hierarchies $\{-i\W_{n}\}_{n=1}^{\infty}$. We can then define the Poisson bracket of functionals $F,G\in\A_\infty$ by
\begin{equation}\label{equ:poisson_def}
\pb{F}{G}_{\G_{\infty}^{*}}(\Gamma) = i\Tr\paren*{\comm{dF[\Gamma]}{dG[\Gamma]}_{\G_{\infty}}\cdot\Gamma}, \qquad \forall\Gamma\in\G_\infty^*.
\end{equation}
In the right-hand side of \eqref{equ:poisson_def}, we identify the G\^ateaux derivatives $dF[\Gamma]$ and $dG[\Gamma]$, which are a priori continuous linear functionals, as elements of $\G_{\infty}$. This identification is possible thanks to the definition of $\A_\infty$ and the next lemma, which characterizes the dual of $\G_\infty^*$. 

\begin{lemma}[{\cite[Lemma 6.8]{MNPRS1_2019}}]\label{dual_dual}
The topological dual of $\G_{\infty}^{*}$, denoted by $(\G_{\infty}^{*})^{*}$ and endowed with the strong dual topology, is isomorphic to
\begin{equation}
\widetilde{\G}_{\infty}\coloneqq \{A\in \bigoplus_{k=1}^{\infty}\L(\Sc_{s}(\R^{k}), \Sc_{s}'(\R^{k})) : (A^{(k)})^{*} = -A^{(k)}\},
\end{equation}
equipped with the subspace topology induced by $\bigoplus_{k=1}^{\infty}\L(\Sc_{s}(\R^{k}),\Sc_{s}'(\R^{k}))$, via the canonical bilinear form
\begin{equation}
i\Tr(A\cdot \Gamma) = i\sum_{k=1}^{\infty}\Tr_{1,\ldots,k}(A^{(k)}\gamma^{(k)}), \qquad \forall \Gamma=(\gamma^{(k)})_{k\in\N}\in \G_{\infty}^{*}, \enspace A=(A^{(k)})_{k\in\N}\in\wt{\G}_\infty.
\end{equation}
\end{lemma}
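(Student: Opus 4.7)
My plan is to reduce the lemma to two facts: identifying the dual of each factor $\g_k^*$ with $\wt{\g}_k$, and then invoking the standard identification of the dual of a topological direct product with the direct sum of duals. Fix $k\in\N$. By the Schwartz kernel theorem together with the nuclearity of $\Sc(\R^k)$, I identify $\L(\Sc_s'(\R^k),\Sc_s(\R^k))\cong \Sc_s(\R^{2k})$ (smoothing operators correspond to Schwartz kernels) and $\L(\Sc_s(\R^k),\Sc_s'(\R^k))\cong \Sc_s'(\R^{2k})$. Under these identifications, the partial trace $\Tr_{1,\ldots,k}(A^{(k)}\gamma^{(k)})$ becomes the canonical $\Sc'(\R^{2k})$--$\Sc(\R^{2k})$ duality pairing of the kernels (after the appropriate coordinate swap), so it is well-defined and jointly continuous, and the complex topological dual of $\L(\Sc_s'(\R^k),\Sc_s(\R^k))$ is $\L(\Sc_s(\R^k),\Sc_s'(\R^k))$ with pairing $(A,\gamma)\mapsto \Tr(A\gamma)$.

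The heart of the proof is extracting the \emph{real}-linear dual of $\g_k^*$. Every continuous real-linear functional on a complex TVS is the real part of a unique continuous complex-linear functional, so any continuous $L:\g_k^*\to\R$ extends by Hahn--Banach and takes the form $\gamma^{(k)}\mapsto \Re\Tr(B^{(k)}\gamma^{(k)})$ for some $B^{(k)}\in \L(\Sc_s(\R^k),\Sc_s'(\R^k))$. Decomposing $B^{(k)} = B_+^{(k)}+B_-^{(k)}$ into self-adjoint and skew-adjoint parts and using cyclicity of the trace together with $(\gamma^{(k)})^*=\gamma^{(k)}$, a short computation shows $\Tr(B_+^{(k)}\gamma^{(k)})\in\R$ and $\Tr(B_-^{(k)}\gamma^{(k)})\in i\R$. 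Hence the restriction of $L$ to $\g_k^*$ is $\gamma^{(k)}\mapsto \Tr(B_+^{(k)}\gamma^{(k)})$, and setting $A^{(k)}\coloneqq -iB_+^{(k)}$ produces a skew-adjoint $A^{(k)}\in\wt{\g}_k$ with $L(\gamma^{(k)}) = i\Tr(A^{(k)}\gamma^{(k)})$. Uniqueness follows from the real-linear splitting $\L(\Sc_s'(\R^k),\Sc_s(\R^k)) = \g_k^* \oplus i\g_k^*$ and non-degeneracy of the kernel pairing: any skew-adjoint $A^{(k)}$ annihilating every self-adjoint $\gamma^{(k)}$ annihilates the entire complex space by $\C$-linearity, forcing $A^{(k)}=0$.

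For the passage from factors to the product, I invoke the standard fact that the topological dual of a countable locally convex direct product is the locally convex direct sum of the duals: continuity of a linear functional on $\prod_k \g_k^*$ forces it to be bounded on a basic neighborhood of the origin that constrains only finitely many coordinates, hence to depend on only finitely many factors. Combined with the single-level identification this yields $(\G_\infty^*)^*\cong \bigoplus_k \wt{\g}_k = \wt{\G}_\infty$ with pairing $i\sum_k \Tr_{1,\ldots,k}(A^{(k)}\gamma^{(k)})$ (a finite sum, since any $A\in\wt\G_\infty$ has only finitely many non-zero components). Since bounded subsets of the product are exactly products of bounded subsets in each factor, the strong dual topology factors across $k$ and matches the subspace topology from $\bigoplus_k \L(\Sc_s(\R^k),\Sc_s'(\R^k))$; within each factor the single-level identification is bicontinuous by nuclearity of $\Sc$.

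The main obstacle I anticipate is the single-level identification: while the Schwartz kernel theorem is routine, the bookkeeping between self-adjointness of $\gamma^{(k)}$, skew-adjointness of $A^{(k)}$, and the factor of $i$ in the pairing must be done carefully to ensure that a real-valued functional on the \emph{real} vector space $\g_k^*$ is faithfully represented by a unique skew-adjoint operator. The direct-sum-vs.-direct-product duality and the topology verification are essentially formal once this step is secured.
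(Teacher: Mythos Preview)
The paper does not prove this lemma; it is imported verbatim from the companion work \cite[Lemma 6.8]{MNPRS1_2019}, so there is no in-paper argument to compare against. Your approach---identifying each factor dual via the Schwartz kernel theorem plus the self-adjoint/skew-adjoint decomposition to handle the real structure, then invoking the standard duality $(\prod_k E_k)^* \cong \bigoplus_k E_k^*$ for countable locally convex products---is the natural one and is essentially what one expects the companion paper's proof to do. The bookkeeping you flag (real vs.\ complex linearity, the factor of $i$, and the trace computation showing $\Tr(B_+\gamma)\in\R$ and $\Tr(B_-\gamma)\in i\R$ for self-adjoint $\gamma$) is handled correctly; the only cosmetic point is that the kernel space is $\Sc_s(\R^k)\hat\otimes\Sc_s(\R^k)$ rather than $\Sc_s(\R^{2k})$ (the symmetry is separate in each block of $k$ variables), but this does not affect the argument.
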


In \cite{MNPRS1_2019}, classical results on the existence of a Lie-Poisson manifold associated to a Lie algebra were unavailable to us due to functional analytic difficulties, such as the fact that $\G_{\infty}\subsetneq \wt{\G}_\infty$. Nevertheless, we verified directly that our choices for $\G_\infty^*$, $\A_{\infty}$, and $\pb{\cdot}{\cdot}_{\G_{\infty}^{*}}$ satisfy the weak Poisson axioms of \cref{def:WP}, thereby establishing the following result. 

\begin{prop}[\protect{\cite[Proposition 2.8, Lemma 6.15]{MNPRS1_2019}}]
\label{prop:LP}
$(\G_{\infty}^{*},\A_{\infty},\pb{\cdot}{\cdot}_{\G_{\infty}^{*}})$ is a weak Poisson manifold. Furthermore, for any $F\in \A_\infty$, the Hamiltonian vector field $X_F$ is given by the formula
\begin{equation}
X_F(\Gamma)^{(\ell)} = \sum_{j=1}^\infty j\Tr_{\ell+1,\ldots,\ell+j-1}\paren*{\comm{\sum_{\alpha=1}^\ell dH[\Gamma]_{(\alpha,\ell+1,\ldots,\ell+j-1)}^{(j)}}{\gamma^{(\ell+j-1)}}}, \qquad \ell\in\N, \enspace \Gamma\in\G_\infty^*,
\end{equation}
where the extension $dH[\Gamma]_{(\alpha,\ell+1,\ldots,\ell+j-1)}^{(j)}$ is defined via \cref{prop:ext_k}.
\end{prop}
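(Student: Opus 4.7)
The plan is to verify in turn the three defining properties of a weak Poisson manifold in \cref{def:WP}—namely, that $\pb{\cdot}{\cdot}_{\G_{\infty}^{*}}$ is well-defined on $\A_\infty$, that $\A_\infty$ is closed under the bracket and the bracket satisfies the Lie and Leibniz axioms, and that each $F\in\A_\infty$ admits a Hamiltonian vector field—and then to extract the explicit formula for $X_F$ by testing the defining identity against generating functionals.

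First I would show that every $F \in \A_\infty$ is G\^ateaux smooth with $dF[\Gamma]$ identifiable as an element of $\G_\infty\subset\wt{\G}_\infty$ via the pairing of \cref{dual_dual}. For a generator $F(\Gamma)=i\Tr(W\cdot\Gamma)$ with $W\in\G_\infty$, the derivative is linear in $\Gamma$ and $dF[\Gamma]$ corresponds to $W$ itself; for a product $F_{1}F_{2}$ of such generators, the Leibniz rule for $d$ produces $F_{2}(\Gamma)dF_{1}[\Gamma]+F_{1}(\Gamma)dF_{2}[\Gamma]$, which is again represented by an element of $\G_\infty$ by linearity of $\G_\infty$. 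Consequently, $\pb{F}{G}_{\G_\infty^*}(\Gamma) = i\Tr([dF[\Gamma], dG[\Gamma]]_{\G_\infty}\cdot\Gamma)$ is well-defined since $[dF[\Gamma], dG[\Gamma]]_{\G_\infty} \in \G_\infty$ by \cref{prop:G_inf_br}, and the pairing is finite by \cref{dual_dual}. Closure of $\A_\infty$ under $\pb{\cdot}{\cdot}_{\G_\infty^*}$ follows because on two generators $F_{W}, F_{V}$ one has $\pb{F_W}{F_V}_{\G_\infty^*} = F_{[W,V]_{\G_\infty}}$, again a generator; the Leibniz rule (inherited from the G\^ateaux Leibniz rule) then extends closure to all products.

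Bilinearity and antisymmetry of $\pb{\cdot}{\cdot}_{\G_\infty^*}$ follow immediately from the corresponding properties of $[\cdot,\cdot]_{\G_\infty}$. The Jacobi identity is the substantive axiom: I would reduce it to the generating set by the already-verified Leibniz rule, and then use the fact that, for generators, the identification $dF[\Gamma]\leftrightarrow W$ is $\Gamma$-independent so the triple-sum ambient to Jacobi reduces to the trace of a nested commutator $i\Tr\bigl([[W_1,W_2]_{\G_\infty},W_3]_{\G_\infty}\cdot\Gamma\bigr)$ and its cyclic analogs; these sum to zero inside the trace by the Jacobi identity on $\G_\infty$ from \cref{prop:G_inf_br}.

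Finally, to derive the formula for $X_{F}$, I would test the defining relation $dG[\Gamma](X_{F}(\Gamma))=\pb{G}{F}_{\G_{\infty}^{*}}(\Gamma)$ against a generator $G(\Gamma)=i\Tr(B\cdot\Gamma)$ with $B\in\G_\infty$, which reduces to
\begin{equation*}
i\sum_{\ell\geq 1}\Tr_{1,\ldots,\ell}\bigl(B^{(\ell)}\,X_{F}(\Gamma)^{(\ell)}\bigr) = -i\Tr\bigl([dF[\Gamma],B]_{\G_\infty}\cdot\Gamma\bigr).
\end{equation*}
Expanding the right-hand side via the definition \eqref{eq:LB_def} of the Lie bracket, absorbing $\Sym_{k}$ onto the bosonic-symmetric $\gamma^{(k)}$, cyclically permuting the partial traces across the composition operators $\circ_\alpha^\beta$ to isolate $B^{(\ell)}$ on the outside, and using the bosonic symmetry of $dF[\Gamma]^{(j)}\in\g_{j,gmp}$ to collapse the $\beta$-sum into a multiplicative factor of $j$ gives the claimed expression for $X_{F}(\Gamma)^{(\ell)}$; reading off the coefficient of $B^{(\ell)}$ then yields the formula in \cref{prop:LP}. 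The main obstacle in the proof is precisely this last combinatorial/trace-manipulation step: the interplay of $\Sym_{k}$, the partial-trace cyclicity across $\circ_\alpha^\beta$, and the need to work with arbitrary extensions $\tl{A}^{(\ell)},\tl{B}^{(j)}$ as in the footnote after \eqref{eq:LB_def} must all be tracked carefully so that the final commutator lives in $\L(\Sc_{s}'(\R^{\ell}),\Sc_{s}(\R^{\ell}))$ and is independent of choices.
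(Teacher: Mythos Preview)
This proposition is quoted from the companion paper \cite{MNPRS1_2019} (as Proposition~2.8 and Lemma~6.15 there) and is not proved in the present paper, so there is no in-paper argument to compare your outline against. That said, your sketch has a genuine gap: you have misidentified the three axioms of \cref{def:WP}. What you list as the first property (``well-definedness of the bracket'') is not one of the axioms at all, and you never address \ref{item:wp_P2}, the separation condition that $dF[\Gamma](v)=0$ for all $F\in\A_\infty$ forces $v=0$. This is exactly the point the paper flags as delicate just before stating the proposition: since $\G_\infty\subsetneq\wt{\G}_\infty\cong(\G_\infty^*)^*$ by \cref{dual_dual}, the generating linear functionals $\Gamma\mapsto i\Tr(W\cdot\Gamma)$ with $W\in\G_\infty$ do not obviously separate tangent vectors in $\G_\infty^*$; one needs the density of $\L_{gmp}$ in $\L$ recorded in \cref{gmp_dense} to conclude. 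Without \ref{item:wp_P2} you also lose the uniqueness of $X_F$ (cf.\ \cref{rem:hvf_u}), so your derivation of the formula by testing against generators $G$ would at best produce \emph{a} candidate, not \emph{the} Hamiltonian vector field.

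A second, smaller gap: your derivation of $X_F$ only argues that the proposed expression pairs correctly against generators $G$, but you do not verify that the expression actually defines an element of $\g_\ell^*$ for each $\ell$ (i.e., that the sum over $j$ is finite, that the partial traces land in $\L(\Sc_s'(\R^\ell),\Sc_s(\R^\ell))$, and that the result is self-adjoint) or that the assignment $\Gamma\mapsto X_F(\Gamma)$ is smooth. These checks rely on the good mapping property of $dF[\Gamma]^{(j)}$ and on the generalized partial trace machinery of \cref{prop:partial_trace}; they are not automatic and form part of what \cite{MNPRS1_2019} has to verify directly.
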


\subsection{Statement of main results}
Having reviewed the results from \cite{MNPRS1_2019} presently germane, we are now prepared to state the main results of the current work.  We previously introduced the GP hierarchy in \eqref{eq:GP}, which we recall now.  We say that a sequence of time-dependent kernels $(\gamma^{(k)})_{k\in\N}$ of $k$-particle density matrices is a solution to the GP hierarchy if
\begin{equation}\label{eq:GP'}
i\p_{t}\gamma^{(k)} = -\comm{\Delta_{\ux_{k}}}{\gamma^{(k)}}+ 2\kappa B_{k+1}(\gamma^{(k+1)}), \qquad k\in\N,
\end{equation}
with $\kappa\in\{\pm 1\}$, and
\begin{equation}
\label{eq:B_con}
B_{k+1}(\gamma^{(k+1)}) = \sum_{j=1}^{k} \paren*{B_{j;k+1}^{+}-B_{j;k+1}^{-}}(\gamma^{(k+1)}),
\end{equation}
where for every $(\ux_k,\ux_k')\in\R^{2k}$,
\begin{equation}
\label{eq:B_pm}
\begin{split}
B_{j;k+1}^{+}(\gamma^{(k+1)})(t,\ux_{k};\ux_{k}') &\coloneqq \gamma^{(k+1)}(t,\ux_{k},x_j;\ux_{k}',x_j),\\
B_{j;k+1}^{-}(\gamma^{(k+1)})(t,\ux_k;\ux_k') &\coloneqq \gamma^{(k+1)}(t,\ux_k, x_j';\ux_k',x_j').
\end{split}
\end{equation}
When $\kappa=1$, we say that the hierarchy is \emph{defocusing} and for $\kappa=-1$, we say that the hierarchy is \emph{focusing} (in analogy with the defocusing and focusing NLS, respectively). 

\medskip
To address \cref{q:GP_int}, we must first establish the existence of an infinite sequence of observable $\infty$-hierarchies $\{-i\W_{n}\}_{n\in\N} \in \G_{\infty}$ by a recursion argument inspired by that for the operators $w_{n}$ in \eqref{eq:w_rec}. Due to analytic difficulties, once again stemming primarily from the need to consider the composition of DVOs, we proceed in three steps.

The first step consists of constructing an element
\[
\wt{\W}_{n}\in \bigoplus_{k=1}^{\infty} \L(\Sc(\R^{k}),\Sc'(\R^{k}))
\]
by the recursive formula
\begin{equation}\label{eq:Wn_rec}
\begin{split}
\wt{\W}_{1} &\coloneqq \E_{1} = (Id_1, 0, \ldots) \\
\wt{\W}_{n+1}^{(k)} &\coloneqq (-i\p_{x_{1}})\wt{\W}_{n}^{(k)}  +\kappa\sum_{m=1}^{n-1}\sum_{\ell,j\geq 1;\ell+j=k}\delta(X_{1}-X_{\ell+1}) \paren*{\wt{\W}_{m}^{(\ell)} \otimes \wt{\W}_{n-m}^{(j)}}, \qquad \forall k\in \N,
\end{split}
\end{equation}
Note the structural similarity between this recursion and the one for the operators $w_{n}$ stated in \cref{eq:w_rec}. While the DVO $\wt{\W}_{m}^{(\ell)} \otimes \wt{\W}_{n-m}^{(j)}$ is well-defined by the universal property of the tensor product, the composition
\begin{equation}\label{eq:intro_comp_ex}
\delta(X_{1}-X_{\ell+1}) \paren*{\wt{\W}_{m}^{(\ell)} \otimes \wt{\W}_{n-m}^{(j)}}
\end{equation}
is a priori purely formal, since evaluation on a Schwartz function leads to products of distributions, in particular products of $\delta$ functions and their higher-order derivatives. Thus, the challenge is to give meaning to this composition. The key property which allow us to make sense of the composition is that if we formally expand the recursion, we will only find products such as $\delta(x_{1}-x_{2})\delta(x_{2}-x_{3})$, which is well-defined as the Lebesgue measure on the hyperplane $\{\ux_{k}\in\R^{k}: x_{1}=x_{2}=x_{3}\}$. To systematically handle the products of distributions, we use the wave front set and a useful criterion of H\"{o}rmander for the multiplication of distributions (see \cref{prop:H_crit} and more generally, \cref{app:WF}).

A priori, H\"{o}rmander's criterion only yields that the product of two tempered distributions is a distribution, not necessarily tempered, which is problematic since we work exclusively with tempered distributions. Moreover, we wish any definition of the composition \eqref{eq:intro_comp_ex} to satisfy the property
\begin{equation}
\label{eq:des_def}
\begin{split}
&\ipp*{\delta(X_{1}-X_{\ell+1})\paren*{\wt{\W}_{m}^{(\ell)}\otimes\wt{\W}_{n-m}^{(j)}}(f^{(\ell)}\otimes f^{(j)}), g^{(\ell)}\otimes g^{(j)}}_{\Sc'(\R^{k})-\Sc(\R^{k})} \\
&\phantom{=}=\int_{\R}dx \: \Phi_{\wt{\W}_m^{(\ell)}}(f^{(\ell)},g^{(\ell)})(x,x) \Phi_{\wt{\W}_{n-m}^{(j)}}(f^{(j)},g^{(j)})(x,x),
\end{split}
\end{equation}
where
\begin{equation}
\Phi_{\wt{\W}_m^{(\ell)}}: \Sc(\R^\ell)^2\rightarrow \Sc(\R^2),\quad  \Phi_{\wt{\W}_{n-m}^{(j)}}:\Sc(\R^j)^2\rightarrow\Sc(\R^2)
\end{equation}
are the necessarily unique maps identifiable with
\begin{equation}
\begin{split}
&\Sc(\R^\ell)^2 \rightarrow \Sc_{x'}(\R;\Sc_x'(\R)) \quad (f^{(\ell)},g^{(\ell)}) \mapsto \ipp*{\wt{\W}_m^{(\ell)}f^{(\ell)}, (\cdot)\otimes g^{(\ell)}(x',\cdot)}_{\Sc'(\R^\ell)-\Sc(\R^\ell)},\\
&\Sc(\R^j)^2\rightarrow \Sc_{x'}(\R;\Sc_x'(\R)) \quad (f^{(j)},g^{(j)}) \mapsto \ipp*{\wt{\W}_{n-m}^{(j)}f^{(j)}, (\cdot)\otimes g^{(j)}(x',\cdot)}_{\Sc'(\R^j)-\Sc(\R^j)}
\end{split}
\end{equation}
via
\begin{equation}
\begin{split}
\int_{\R}dx \Phi_{\wt{\W}_m^{(\ell)}}(f^{(\ell)},g^{(\ell)})(x;x')\phi(x) &= \ipp*{\wt{\W}_m^{(\ell)}f^{(\ell)}, \phi\otimes g^{(\ell)}(x',\cdot)}_{\Sc'(\R^\ell)-\Sc(\R^\ell)},\\
\int_{\R}dx \Phi_{\wt{\W}_{n-m}^{(j)}}(f^{(j)},g^{(j)})(x;x')\phi(x) &= \ipp*{\wt{\W}_{n-m}^{(j)}f^{(j)}, \phi\otimes g^{(j)}(x',\cdot)}_{\Sc'(\R^j)-\Sc(\R^j)},
\end{split}
\end{equation}
for any $\phi\in\Sc(\R)$.

We ensure that this is achieved thanks once more to the good mapping property of \cref{def:gmp}. Indeed, proceeding inductively and exploiting the recursion formula and the induction hypothesis that 
\[
\wt{\W}_{1},\ldots,\wt{\W}_{n}\in \bigoplus_{k=1}^{\infty} \L_{gmp}(\Sc(\R^{k}),\Sc'(\R^{k}))
\]
 together with some Fourier analysis,  we show that the composition \eqref{eq:intro_comp_ex} is exactly what we think it should be, namely, the unique distribution in $\D'(\R^{k})$ satisfying \eqref{eq:des_def}, which can then be shown to be tempered. Moreover, by further appealing to the good mapping property and the universal property of the tensor product, we can show that the composition \eqref{eq:intro_comp_ex} indeed belongs to $\L_{gmp}(\Sc(\R^{k}),\Sc'(\R^{k}))$. The preceding discussion is summarized by the following proposition.

\begin{restatable}{prop}{WS}
\label{prop:W_S1}
For each $n\in\N$, there exists an element
\[
\wt{\W}_{n}\in\bigoplus_{k=1}^{\infty}\L_{gmp}(\Sc(\R^{k}),\Sc'(\R^{k}))
\]
defined according to the recursive formula \cref{eq:Wn_rec}, where the composition \eqref{eq:intro_comp_ex} is well-defined in the sense of \cref{prop:H_crit}.
\end{restatable}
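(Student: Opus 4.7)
The plan is to prove the proposition by strong induction on $n$. The base case $n=1$ is immediate: $\wt{\W}_1 = (Id_1, 0, 0, \ldots)$, and the identity operator on $\Sc(\R)$ trivially has the good mapping property, since for any $(f, g) \in \Sc(\R) \times \Sc(\R)$ the bilinear map of \cref{def:gmp} produces $f(x) g(x')$, which is manifestly a Schwartz function of $(x, x') \in \R^{2}$.

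For the inductive step, I assume that $\wt{\W}_1, \ldots, \wt{\W}_n \in \bigoplus_{k=1}^{\infty} \L_{gmp}(\Sc(\R^k), \Sc'(\R^k))$ and construct $\wt{\W}_{n+1}^{(k)}$ for each $k \in \N$ using the recursion \eqref{eq:Wn_rec}. The linear term $(-i\p_{x_1}) \wt{\W}_n^{(k)}$ is handled first and easily: it clearly defines an element of $\L(\Sc(\R^k), \Sc'(\R^k))$, and since differentiation of a map taking values in $\Sc(\R^2)$ again takes values in $\Sc(\R^2)$, it preserves the good mapping property. The crux therefore lies in giving meaning to each composition term $\delta(X_1 - X_{\ell+1})\bigl(\wt{\W}_m^{(\ell)} \otimes \wt{\W}_{n-m}^{(j)}\bigr)$ with $\ell + j = k$ and $1 \leq m \leq n-1$, and verifying that the resulting operator again lies in $\L_{gmp}(\Sc(\R^k), \Sc'(\R^k))$.

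My strategy for the composition is twofold and parallels the discussion in the paper. First, I take \eqref{eq:des_def} as the \emph{definition} of its action on tensor-product test functions: by the induction hypothesis, the good mapping property supplies smooth bilinear maps $\Phi_{\wt{\W}_m^{(\ell)}}$ and $\Phi_{\wt{\W}_{n-m}^{(j)}}$ into $\Sc(\R^2)$, so restricting to the diagonal $x = x'$ yields Schwartz functions of a single variable and the product integrates. This produces a separately continuous bilinear form on $\Sc(\R^\ell) \times \Sc(\R^j)$, which extends uniquely by the nuclearity of $\Sc$ and the Schwartz kernel theorem to an element of $\L(\Sc(\R^k), \Sc'(\R^k))$. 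Second, I verify that this operator agrees with the heuristic composition by checking \cref{prop:H_crit}: the wave front set of $\delta(X_1 - X_{\ell+1})$ is concentrated on the conormal bundle of the hyperplane $\{x_1 = x_{\ell+1}\}$, and the wave front set of the Schwartz kernel $K_{\wt{\W}_m^{(\ell)} \otimes \wt{\W}_{n-m}^{(j)}}$ in the $(x_1, x_{\ell+1})$ variables can be controlled inductively by tracking the $\delta$-insertions introduced at each stage of the recursion. A short Fourier analytic argument, using the fact that good-mapping-property operators yield kernels whose behavior in individual coordinates is tame, shows the needed disjointness condition. Temperedness of the resulting distribution then follows from the Schwartz-valued output of the bilinear form above.

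The final point is to confirm that $\wt{\W}_{n+1}^{(k)}$ so defined itself has the good mapping property --- for \emph{any} $(f^{(k)}, g^{(k)}) \in \Sc(\R^k)^2$, not merely tensor products. To do this, I fix $\alpha \in \N_{\leq k}$ and, using the induction hypothesis applied to $\wt{\W}_m^{(\ell)}$ and $\wt{\W}_{n-m}^{(j)}$ with the $\alpha$-coordinate singled out according to whether $\alpha \leq \ell$ or $\alpha > \ell$, show that the required bilinear map into $\Sc_{x_\alpha'}'(\R; \Sc_{x_\alpha}'(\R))$ in fact lands in $\Sc(\R^2)$. The key technical obstacle throughout is rigorously justifying the distributional product via the wave front set analysis while simultaneously maintaining the Schwartz regularity needed for the good mapping property; this is the step where the interplay between H\"ormander's criterion and \cref{def:gmp} is essential, and where non-tensor test functions must be accommodated by density and continuity.
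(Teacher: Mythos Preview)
Your outline is correct and follows essentially the same route as the paper: strong induction on $n$, using the good mapping property to construct an explicit tempered distribution realizing the composition, then verifying that this coincides with the H\"ormander product. Two points deserve sharpening.

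First, your wave front set argument is too vague. You write that the wave front set ``can be controlled inductively by tracking the $\delta$-insertions'' and that ``good-mapping-property operators yield kernels whose behavior in individual coordinates is tame.'' The good mapping property is \emph{not} what drives the wave front set control; it is used for temperedness and for constructing the explicit functional. The paper instead carries a stronger structural invariant through the induction (its \cref{lem:Wn_wd}): each $\wt{\W}_n^{(k)}$ has the form $\sum_{\ul{\alpha}_k} u_{\ul{\alpha}_k,n}\,\p_{\ux_k}^{\ul{\alpha}_k}$, where every distribution $u_{\ul{\alpha}_k,n}$ either has empty wave front set or satisfies the \emph{non-vanishing pair property}: any $(\ux_k,\uxi_k)\in\WF(u_{\ul{\alpha}_k,n})$ has at least two nonzero frequency components $\xi_\ell,\xi_j$. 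This invariant is exactly what rules out a collision with $\WF(\delta(x_1-x_{\ell+1}))$, whose nonzero covectors live in the single pair $(\xi_1,-\xi_1)$. Without naming and propagating such an invariant, ``tracking the $\delta$-insertions'' is not a proof.

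Second, you omit the verification that for each $n$ only finitely many $\wt{\W}_n^{(k)}$ are nonzero; this is required for the direct-sum assertion in the statement and is proved separately in the paper (its \cref{lem:aa_zero}) by a parity-based induction.
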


Since we are interested in the action of the elements $\wt{\W}_n$ on density matrices, which are self-adjoint, the second step in the construction is to make each $\wt{\W}_{n}$ self-adjoint in the sense of \cref{def:dvo_sa}. By the involution property of the adjoint operation (see \cref{lem:dvo_adj}), the DVO
\begin{equation}
\label{eq:Wn_self_ad}
\W_{n,sa} \coloneqq \frac{1}{2}\paren*{\wt{\W}_{n} + \wt{\W}_{n}^{*}}
\end{equation}
is a self-adjoint element of $\L(\Sc(\R^{k}),\Sc'(\R^{k}))$. Since we want to preserve the good mapping property throughout each step of the construction, the challenge is to show that $\wt{\W}_{n}^{*}$ also has the good mapping property. Naively taking the adjoint of the recursive formula \cref{eq:Wn_rec}, we should formally have that
\begin{equation}\label{eq:Wn_rec_adj}
\wt{\W}_{n+1}^{(k),*} \,``=\textup{''} \ \,  \wt{\W}_{n}^{(k),*}(-i\p_{x_{1}}) + \kappa\sum_{m=1}^{n-1}\sum_{\ell,j\geq 1;\ell+j=k}\paren*{\wt{\W}_{m}^{(\ell),*} \otimes \wt{\W}_{n-m}^{(j),*}}\delta(X_{1}-X_{\ell+1}).
\end{equation}
While the expression on the right-hand side is, a priori, meaningless,\footnote{Among other issues, we note that for $f^{(k)}\in\Sc(\R^{k})$, the tempered distribution $\delta(x_{1}-x_{\ell+1})f^{(k)}$ does not belong to the domain of $\wt{\W}_{m}^{(\ell),*} \otimes \wt{\W}_{n-m}^{(j),*}$,} by inducting on the statement that $\wt{\W}_1^*,\ldots,\wt{\W}_{n-1}^*$ having the good mapping property and exploiting duality, the recursion for $\wt{\W}_n$, and the good mapping property for $\wt{\W}_n$, we are able to prove that the $\wt{\W}_{n}^{*}$ have the good mapping property, as desired.

The third, final, and easiest step of the construction is to symmetrize the $\W_{n,sa}$, so that we obtain an $\infty$-hierarchy which belongs to $\G_{\infty}$. The motivation is that we always restrict to permutation-invariant test functions, reflecting the bosonic nature of the underlying physics. To obtain a formula for $\W_{n}$ from $\W_{n,sa}$ is straightforward. We record this definition in the following proposition:

\begin{prop}\label{prop:Wn_con}
For each $n\in\N$,
\begin{equation}
\label{eq:Wn_fin_def}
-i\W_n\coloneqq -i\Sym\paren*{\W_{n,sa}} = -\frac{i}{2}\paren*{\Sym\paren*{\wt{\W}_{n}} + \Sym\paren*{\wt{\W}_{n}^{*}}} \in \G_{\infty},
\end{equation}
where $\Sym$ is a bosonic symmetrization operator, the definition of which is given in \cref{def:sym_A}.
\end{prop}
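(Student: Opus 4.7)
The plan is to verify all defining properties of $\G_\infty = \bigoplus_{k=1}^\infty \g_{k,gmp}$ for $-i\W_n$: the good mapping property, skew-adjointness, bosonic symmetry, and the finite-support (direct sum) condition. The displayed equalities in \eqref{eq:Wn_fin_def} are immediate from the definition $\W_{n,sa} = \tfrac{1}{2}(\wt{\W}_n + \wt{\W}_n^*)$ and the $\R$-linearity of the symmetrization operator $\Sym$, so the substantive work lies in proving the membership $-i\W_n \in \G_\infty$.

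The direct-sum condition follows from a short induction on $n$ using the recursion \eqref{eq:Wn_rec}: the base case $\wt{\W}_1 = \E_1$ has only a single non-zero $1$-particle component, and in the inductive step, the first term $-i\p_{x_1}\wt{\W}_n^{(k)}$ preserves the particle number, while in the nonlinear term $\wt{\W}_m^{(\ell)} \otimes \wt{\W}_{n-m}^{(j)}$ the particle numbers add ($k = \ell + j$). Straightforward bookkeeping yields an explicit $N(n)\in\N$ (in fact $N(n)=n$) such that $\wt{\W}_n^{(k)}=0$ for $k>N(n)$; since adjunction and symmetrization manifestly preserve this vanishing, $-i\W_n$ lies in the direct sum.

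The main obstacle is showing that $\wt{\W}_n^*$ has the good mapping property, for only then is $\W_{n,sa}\in\L_{gmp}$ and can one make sense of the self-adjoint splitting as operators with well-behaved compositions. I would proceed by induction on $n$, with the base case $\wt{\W}_1^{(1)} = Id_1$ being self-adjoint and therefore trivial. For the inductive step, the formal identity \eqref{eq:Wn_rec_adj} is not literally meaningful, but its content can be extracted via duality: for Schwartz test tensors $(f^{(k)},g^{(k)})$, pairing gives $\langle \wt{\W}_{n+1}^* f^{(k)},g^{(k)}\rangle = \langle f^{(k)},\wt{\W}_{n+1} g^{(k)}\rangle$, which one unfolds using the recursion for $\wt{\W}_{n+1}$. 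The derivative term $-i\p_{x_1}\wt{\W}_n$ is handled by integration by parts on the test side, reducing to the good mapping property of $\wt{\W}_n^*$ available by induction; the $\delta$-composition term $\delta(X_1 - X_{\ell+1})(\wt{\W}_m^{(\ell)} \otimes \wt{\W}_{n-m}^{(j)})$ is handled by transposing its action onto $g^{(k)}$ and then invoking the good mapping property of $\wt{\W}_m, \wt{\W}_{n-m}$ established in \cref{prop:W_S1}, together with the inductive hypothesis applied to $\wt{\W}_m^*, \wt{\W}_{n-m}^*$ for the adjoint factors. The same H\"ormander-type wave front set analysis used in the proof of \cref{prop:W_S1} identifies the resulting bilinear map as an element of $\Sc(\R^2)$, establishing the good mapping property of $\wt{\W}_{n+1}^*$.

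With the good mapping property of $\wt{\W}_n^*$ in hand, self-adjointness of $\W_{n,sa}$ is immediate from the involution $(\wt{\W}_n^*)^* = \wt{\W}_n$, and skew-adjointness of $-i\W_{n,sa}$ follows from $(-iA)^* = iA^* = -(-iA)$ whenever $A^*=A$. Finally, the bosonic symmetrization $\Sym$ preserves the good mapping property (which is invariant under the coordinate relabeling $\pi\circ A\circ \pi^{-1}$) and commutes with adjunction in the sense $(\pi\circ A\circ\pi^{-1})^* = \pi\circ A^*\circ\pi^{-1}$, so skew-adjointness is preserved; moreover, by construction $\Sym(\W_{n,sa})^{(k)}$ is invariant under conjugation by $\Ss_k$ and therefore restricts to a map $\Sc_s(\R^k)\to\Sc_s'(\R^k)$. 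Collecting these verifications places each non-zero $k$-particle component of $-i\W_n$ in $\g_{k,gmp}$, giving $-i\W_n\in\G_\infty$ as claimed.
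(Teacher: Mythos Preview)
Your proposal is correct and mirrors the paper's approach: the paper packages the same ingredients as \cref{prop:W_S1} (good mapping property of $\wt{\W}_n$), \cref{lem:aa_zero} (finite support), and \cref{prop:Wn_sa}/\cref{lem:Wn_sa_recur} (good mapping property of $\wt{\W}_n^*$ via the duality-plus-induction argument you sketch), after which \cref{prop:Wn_con} is a one-line consequence of \cref{lem:sym_op_space}. One minor note: the inductive step for the adjoint's good mapping property needs no further wave front set analysis---once the H\"ormander product and its associated $\Phi$-maps are in hand from \cref{prop:W_S1}, that step is a direct manipulation of distributional pairings using only the induction hypothesis on $\wt{\W}_m^{(\ell),*}$.
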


\medskip
Having constructed the $\infty$-hierarchies $\{-i\W_{n}\}_{n=1}^{\infty}$, we define trace functionals $\H_n\in \A_\infty$ by
\begin{equation}
\label{Hn_trace}
\H_n(\Gamma) \coloneqq \Tr\paren*{\W_n\cdot\Gamma}, \qquad \Gamma\in\G_\infty^*.
\end{equation}
Since the functionals $I_{n}$ are generated by the operators $w_{n}$, much in the same manner as the trace functionals $\H_{n}$ are generated by the $\W_{n}$, our next task is to relate $\W_n$ to the one-particle nonlinear operators $w_n$ defined in \eqref{eq:w_rec}. Doing so necessitates understanding the action of the $k$-particle components $\wt{\W}_n^{(k)}$ and $\wt{\W}_n^{(k),*}$ on pure tensors of the form
\begin{equation}
\ket*{\phi_1\otimes\cdots\otimes\phi_k}\bra*{\psi_1\otimes\cdots\otimes\psi_k}, \qquad \phi_1,\ldots,\phi_k,\psi_1,\ldots,\psi_k\in\Sc(\R).
\end{equation}
To make this connection precise for the arguments in \cref{sec:gp_flows}, our strategy is to replace the nonlinear operator $w_{n}$ with a multilinear operator by generalizing the recursion \eqref{eq:w_rec}. See \cref{ssec:cor_multi} for more details. As most of the results in \cref{sec:cor} are of a technical nature, and perhaps not so enlightening at this stage, we mention only the following result, which connects $\H_n$ to the functionals $I_{n}$ and can be obtained as an easy corollary of \cref{prop:Hn_Ibn}:
\begin{equation}
\H_n(\Gamma) = I_n(\phi), \qquad \forall\Gamma = (\ket*{\phi^{\otimes k}}\bra*{\phi^{\otimes k}})_{k\in\N}, \enspace \phi\in\Sc(\R).
\end{equation}

Next, we turn to establishing the involution statement of \cref{q:GP_int}, which we record in the following theorem:

\begin{restatable}[Involution theorem]{thm}{GPinvol}
\label{thm:GP_invol}
Let $n,m\in\N$. Then
\begin{equation}
\pb{\H_{n}}{\H_{m}}_{\G_{\infty}^{*}} \equiv 0 \text{ on $\G_\infty^*$.}
\end{equation}
\end{restatable}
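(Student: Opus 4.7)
The plan is to reduce the involution of the trace functionals $\H_n$ on the infinite-dimensional GP phase space to the involution of a companion family of multilinear functionals $I_{b,n}$ on a simpler ``doubled'' NLS phase space in which $\phi$ and $\bar\phi$ are treated as independent coordinates. The bridge is built from two ingredients: a Poisson morphism from (a quotient of) the doubled NLS phase space into $\G_\infty^*$, and the identification of each $\H_n$, when pulled back along this morphism, with the corresponding $I_{b,n}$.

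First I would rewrite the target bracket using formula \eqref{equ:poisson_def} for the Poisson structure on $\G_\infty^*$. Since $\H_n(\Gamma) = i\Tr((-i\W_n)\,\Gamma)$ is affine-linear in $\Gamma$, its G\^ateaux derivative is constant: $dH_n[\Gamma] = -i\W_n$. Hence
\begin{equation*}
\{\H_n,\H_m\}_{\G_\infty^*}(\Gamma) \;=\; i\Tr\bigl([-i\W_n,-i\W_m]_{\G_\infty}\cdot\Gamma\bigr) \;=\; -i\Tr\bigl([\W_n,\W_m]_{\G_\infty}\cdot\Gamma\bigr),
\end{equation*}
so the theorem is equivalent to $\Tr([\W_n,\W_m]_{\G_\infty}\cdot\Gamma)=0$ for every $\Gamma\in\G_\infty^*$. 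A direct attack via the bracket formula \eqref{eq:LB_def} combined with the implicit recursion \eqref{eq:Wn_rec} would be combinatorially intractable, so instead I would test against the family of rank-one hierarchies
\begin{equation*}
\Gamma_{\phi,\psi} \;\coloneqq\; \bigl(\ket{\phi^{\otimes k}}\bra{\psi^{\otimes k}}\bigr)_{k\in\N}, \qquad \phi,\psi\in\Sc(\R),
\end{equation*}
in which $\psi$ is not forced to equal $\bar\phi$. Using the multilinear structure of $\W_n$ encoded in \eqref{eq:Wn_rec} together with the Schwartz-kernel identifications culminating in \cref{prop:Hn_Ibn}, the evaluation $\H_n(\Gamma_{\phi,\psi}) = \Tr(\W_n\,\Gamma_{\phi,\psi})$ coincides with a bosonic polarization $I_{b,n}(\phi,\psi)$ of the classical NLS integral $I_n$, in which every factor $\bar\phi$ is replaced by an independent copy of $\psi$. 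A parallel computation shows that $(\phi,\psi)\mapsto\Gamma_{\phi,\psi}$, after restriction to the quotient encoding the self-adjointness constraint on $\G_\infty^*$, is a Poisson morphism into $(\G_\infty^*,\A_\infty,\{\cdot,\cdot\}_{\G_\infty^*})$. Pulling back then gives
\begin{equation*}
\{\H_n,\H_m\}_{\G_\infty^*}(\Gamma_{\phi,\psi}) \;=\; \{I_{b,n},I_{b,m}\}(\phi,\psi),
\end{equation*}
which reduces the theorem on factorized states to the involution identity \cref{prop:Ib_invol} for the $I_{b,n}$.

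To upgrade vanishing on factorized states to vanishing on all of $\G_\infty^*$, I would use that elements of $\G_\infty$ are determined by their Schwartz kernels, and these kernels are in turn determined by their matrix elements against pure tensors $\phi^{\otimes k}\otimes\psi^{\otimes k}$ as $(\phi,\psi)$ ranges over $\Sc(\R)^2$ (polarization plus density of $\mathrm{span}\{\phi^{\otimes k} : \phi\in\Sc(\R)\}$ in $\Sc_s(\R^k)$). Vanishing of $\Tr([\W_n,\W_m]\cdot\Gamma_{\phi,\psi})$ for all $(\phi,\psi)$ therefore forces $[\W_n,\W_m]_{\G_\infty}=0$ outright, and the theorem then follows via the nondegenerate pairing of \cref{dual_dual}.

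The main obstacle is the underlying involution $\{I_{b,n},I_{b,m}\}\equiv 0$ for the doubled NLS system \eqref{nls_sys}. I would derive it from the classical involution $\{I_n,I_m\}_{L^2}=0$ for the genuine NLS hierarchy by polarization: since $I_n(\phi)$ is polynomial in $(\phi,\bar\phi)$, and so are its symplectic gradient and the bracket $\{I_n,I_m\}_{L^2}$, the identity $\{I_n,I_m\}_{L^2}\equiv 0$ extends uniquely (by the standard polarization/complexification argument) to a polynomial identity in independent variables $(\phi,\psi)$ which matches the definition of $I_{b,n}$. The technical bookkeeping required to verify that the doubled Poisson bracket is genuinely the polarization of $\{\cdot,\cdot\}_{L^2}$, and that all distributional compositions appearing in the symplectic gradients of the $I_{b,n}$ remain well-defined (via the good mapping property of \cref{def:gmp}), is where I expect most of the real work to sit.
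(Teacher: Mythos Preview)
Your overall architecture matches the paper's: reduce $\{\H_n,\H_m\}_{\G_\infty^*}=0$ to the involution of the doubled functionals $I_{b,n}$ via a Poisson morphism into $\G_\infty^*$, then upgrade from factorized states to all of $\G_\infty^*$ by density. Two points deserve comment.

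\textbf{Self-adjointness.} Your test hierarchies $\Gamma_{\phi,\psi}=(\ket{\phi^{\otimes k}}\bra{\psi^{\otimes k}})_{k\in\N}$ are not self-adjoint, hence not elements of $\G_\infty^*$ at all, and the bracket $\{\cdot,\cdot\}_{\G_\infty^*}$ is only defined there. The phrase ``restriction to the quotient encoding the self-adjointness constraint'' is exactly where the paper does real work: it builds the morphism $\iota_\m$ out of the \emph{real} space $\Sc(\R;\mathcal V)$ of matrix-valued Schwartz functions, landing on the self-adjoint mixed states $\tfrac12(\ket{\phi_1^{\otimes k}}\bra{\phi_2^{\otimes k}}+\ket{\phi_2^{\otimes k}}\bra{\phi_1^{\otimes k}})$, and proves \cref{thm:GP_pomo} for that map. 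Once you use these symmetrized states the density/continuity step goes through as in the paper (a scaling $\phi_j\mapsto\lambda\phi_j$ separates the $k$-components, then \cref{cor:Sch_ST_DM_d} gives density in each $\g_k^*$); your assertion that vanishing on factorized states forces $[\W_n,\W_m]_{\G_\infty}=0$ is correct in spirit but needs that scaling step, which you omit.

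\textbf{The $I_{b,n}$ involution.} Here you genuinely diverge from the paper. The paper (\cref{app:nls_pc}) reruns the entire $r$-matrix/monodromy computation for the system \eqref{eq:nls_sys_intro} with $\psi_1,\psi_2$ independent, obtaining \cref{prop:Ib_invol} directly. Your proposal to obtain it by polarizing the known identity $\{I_n,I_m\}_{L^2}\equiv 0$ is a legitimate and shorter alternative: since $\{I_n,I_m\}_{L^2}(\phi)$ is a finite sum of continuous multilinear forms in $(\phi,\bar\phi)$, its vanishing for all $\phi$ forces the complexified polynomial in independent $(\phi_1,\psi)$ to vanish identically (the diagonal $\{(\phi,\bar\phi)\}$ is totally real of maximal dimension). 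One still has to check that this complexified expression \emph{is} $\{\tilde I_n,\tilde I_m\}_{L^2,\C}$ (equivalently, that the variational-derivative formula \eqref{eq:L2_pb_vd} polarizes to \cref{rem:pb_vd_C}), and then average as in \eqref{eq:Ibn_intro_def} to land in $\A_{\Sc,\mathcal V}$; but this is straightforward. The paper's route has the advantage of being self-contained and exhibiting the integrable structure of the doubled system explicitly, while yours is more economical if one is willing to take the scalar NLS involution as given.
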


To prove \cref{thm:GP_invol}, we proceed on both the one-particle and infinite-particle fronts. We prove that there is an equivalence between the involution of the functionals $\H_n$ and the involution of certain real-valued functionals $I_{b,n}$, defined in \eqref{eq:Ibn_intro_def} below, on a weak Poisson manifold of mixed states. We find this equivalence, explicitly stated in \cref{thm:Equiv} below, quite interesting its own right. We now provide some details of the proof of this equivalence.

On the one-particle front, we relax \eqref{nls} to a system
\begin{equation}
\label{eq:nls_sys_intro}
\begin{cases}
i\p_t\phi_1 = -\Delta\phi_1 + 2\kappa\phi_1^2\phi_2, \\
i\p_t\phi_2 = \Delta\phi_2 -2\kappa\phi_2^2\phi_1
\end{cases},
\end{equation}
where $\phi_1,\phi_2:\R\times\R\rightarrow\C$. We study \eqref{eq:nls_sys_intro} as an integrable system on a \emph{complex} weak Poisson manifold $(\Sc(\R^2),\A_{\Sc,\C},\pb{\cdot}{\cdot}_{L^2,\C})$, see \cref{schwartz2_wpoiss} for the precise definition of this manifold, by revisiting in detail the treatment of the NLS \eqref{nls} in \cite{FT07}. Specifically, we show that there are functionals
\begin{equation}\label{tilde_in_intro}
\tl{I}_n(\phi_1,\phi_2) \coloneqq \int_{\R}dx\phi_2(x) w_{n,(\phi_1,\phi_2)}(x), \qquad \forall (\phi_1,\phi_2)\in\Sc(\R)^2, \enspace n\in\N,
\end{equation}
where $w_{n,(\phi_1,\phi_2)}(x)$ satisfies a similar recursion formula to the $w_n$, see \cref{eq:wn_b_rec}, such that $\tl{I}_3$ is the Hamiltonian for NLS system \eqref{eq:nls_sys_intro}, and such that the $\tl{I}_n$ commute on $(\Sc(\R^2),\A_{\Sc,\C},\pb{\cdot}{\cdot}_{L^2,\C})$.

Since we are ultimately interested in \emph{real}, not complex, weak Poisson manifolds, we pass to another weak Poisson manifold of \emph{mixed states}, $(\Sc(\R;\mathcal{V}),\A_{\Sc,\mathcal{V}},\pb{\cdot}{\cdot}_{L^2,\mathcal{V}})$, where the space $\Sc(\R;\mathcal{V})$ consists of Schwartz functions $\gamma$ taking values in the space $\mathcal{V}$ of self-adjoint, off-diagonal $4\times 4$ complex matrices:
\begin{equation}
\label{eq:Schw_ms}
\gamma = \frac{1}{2}\mathrm{odiag}(\phi_1,\ol{\phi_2},\phi_2,\ol{\phi_1}) = \frac{1}{2}\begin{pmatrix} 0&0&0&\phi_1\\0&0&\ol{\phi_2}&0\\0&\phi_2&0&0\\ \ol{\phi_1}&0&0&0\end{pmatrix}, \qquad \phi_1,\phi_2\in\Sc(\R).
\end{equation}
We refer to \eqref{symp_v}, \eqref{alg_v}, and \cref{prop:Schw_WP_V} for the precise definition and properties of this weak Poisson manifold.

We use the $\tl{I}_n$ to define real-valued functionals $I_{b,n}\in\A_{\Sc,\mathcal{V}}$ on the manifold $(\Sc(\R;\mathcal{V}),\A_{\Sc,\mathcal{V}},\pb{\cdot}{\cdot}_{L^2,\mathcal{V}})$ via the formula
\begin{equation}
\label{eq:Ibn_intro_def}
I_{b,n}(\gamma) \coloneqq \frac{1}{2}\paren*{\tl{I}_n(\phi_1,\ol{\phi_2}) + \tl{I}_n(\phi_2,\ol{\phi_1})},
\end{equation}
and we show in \cref{prop:Ib_invol} that the family $\{I_{b,n}\}_{n\in\N}$ is in mutual involution with respect to the Poisson bracket $\pb{\cdot}{\cdot}_{L^2,\mathcal{V}}$. As we do not feel the results described in this paragraph are the primary contribution of this work, but nevertheless believe they may be of independent interest to the community, we have placed them in \cref{app:nls_pc} and not the main body of the paper.

On the infinite-particle front, we first demonstrate that there is a Poisson morphism
\begin{equation}
\begin{split}
&\iota_{\m}: (\Sc(\R;\mathcal{V}), \A_{\Sc,\mathcal{V}}, \pb{\cdot}{\cdot}_{L^2,\mathcal{V}}) \rightarrow (\G_\infty^*,\A_{\infty},\pb{\cdot}{\cdot}_{\G_\infty^*})\\
&\iota_{\m}(\gamma) \coloneqq \frac{1}{2}\paren*{\ket*{\phi_1^{\otimes k}}\bra*{\phi_2^{\otimes k}} + \ket*{\phi_2^{\otimes k}}\bra*{\phi_1^{\otimes k}}}_{k\in\N}, \qquad \gamma = \frac{1}{2}\mathrm{odiag}(\phi_1,\ol{\phi_2},\phi_2,\ol{\phi_1}).
\end{split}
\end{equation}
The subscript $\m$ signifies that $\iota_\m$ produces a mixed state element of $\G_\infty^*$.

\begin{restatable}{thm}{Pomo}
\label{thm:GP_pomo}
The map $\iota_\m$ is a Poisson morphism of $(\Sc(\R;\mathcal{V}), \A_{\Sc,\mathcal{V}}, \pb{\cdot}{\cdot}_{L^2,\mathcal{V}})$ into $(\G_\infty^*,\A_{\infty},\pb{\cdot}{\cdot}_{\G_\infty^*})$; i.e., it is a smooth map with the property that
\begin{equation}
\iota_{\m}^*\pb{\cdot}{\cdot}_{\G_\infty^*} = \pb{\iota_{\m}^* \cdot }{\iota_{\m}^* \cdot }_{L^2,\mathcal{V}},
\end{equation}
where $\iota_\m^*$ denotes the pullback of $\iota_\m$.
\end{restatable}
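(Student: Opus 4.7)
The plan is to first verify smoothness of $\iota_\m$, then reduce the Poisson morphism identity to a check on a generating family, and finally match both sides by a direct computation exploiting the factorized structure of $\iota_\m(\gamma)$. For smoothness, for each $k\in\N$ the maps $(\phi_1,\phi_2)\mapsto \ket*{\phi_1^{\otimes k}}\bra*{\phi_2^{\otimes k}}$ and $(\phi_1,\phi_2)\mapsto \ket*{\phi_2^{\otimes k}}\bra*{\phi_1^{\otimes k}}$ from $\Sc(\R)^2$ into $\g_k^*$ are jointly continuous $(k,k)$-multilinear maps, hence smooth; since $\G_\infty^*$ is the topological product of the $\g_k^*$, the componentwise assembly $\iota_\m$ is smooth. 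For the bracket identity, since both Poisson brackets are bilinear derivations in each slot and $\A_\infty$ is generated as a real algebra by constants and linear functionals $F_\W(\Gamma)\coloneqq i\Tr(\W\cdot\Gamma)$ with $\W\in\G_\infty$, the Leibniz rule reduces matters to verifying
\begin{equation}
\iota_\m^*\pb{F_\W}{F_\mathcal{Y}}_{\G_\infty^*} = \pb{\iota_\m^* F_\W}{\iota_\m^* F_\mathcal{Y}}_{L^2,\mathcal{V}}
\end{equation}
for all $\W,\mathcal{Y}\in\G_\infty$, after first checking the compatibility $\iota_\m^* F_\W\in\A_{\Sc,\mathcal{V}}$. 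The latter holds because $\iota_\m^* F_\W$ is a finite polynomial in $(\phi_1,\phi_2)$ whose coefficients are distributional pairings against the Schwartz kernel of $\W$, yielding a smooth functional with Schwartz-valued symplectic gradient.

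For the bracket matching, the left-hand side expands as $i\Tr([\W,\mathcal{Y}]_{\G_\infty}\cdot\iota_\m(\gamma))$; inserting the Lie bracket formula \eqref{eq:LB_def} together with the factorized form of $\iota_\m(\gamma)^{(k)}$, each elementary trace
\begin{equation}
\Tr\paren*{(\W^{(\ell)}\circ_\alpha^\beta\mathcal{Y}^{(j)})\ket*{\phi^{\otimes k}}\bra*{\psi^{\otimes k}}}, \qquad \ell+j-1=k,
\end{equation}
is independent of $(\alpha,\beta)\in\{1,\ldots,\ell\}\times\{1,\ldots,j\}$ by the permutation symmetry of $\phi^{\otimes k}$ and $\psi^{\otimes k}$, and collapses via the defining property of $\circ_\alpha^\beta$ to the one-dimensional diagonal integral of $\Phi_{\W^{(\ell)}}(\phi^{\otimes\ell},\psi^{\otimes\ell})$ against $\Phi_{\mathcal{Y}^{(j)}}(\phi^{\otimes j},\psi^{\otimes j})$. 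On the right-hand side, direct differentiation of $\iota_\m^* F_\W$ via the Leibniz rule on pure tensors produces a symplectic gradient whose entries are obtained by evaluating the $\Phi_{\W^{(\ell)}}$ maps on $\ell-1$ arguments; pairing two such gradients through $\omega_{L^2,\mathcal{V}}$ reproduces exactly the same diagonal integrals, with the bra/ket doubling in $\iota_\m(\gamma)$ corresponding to the off-diagonal structure of a matrix $\gamma\in\mathcal{V}$.

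The main obstacle is combinatorial bookkeeping: one must align (i) the symmetrization $\Sym_k$ and the summations over $(\alpha,\beta,\ell,j)$ appearing in $[\W,\mathcal{Y}]_{\G_\infty}$, (ii) the factors of $k$ produced by differentiating the tensor power $\phi_i^{\otimes k}$ on the one-particle side, and (iii) the doubling of terms arising from the bra/ket structure of $\iota_\m(\gamma)$. The $(\alpha,\beta)$-independence observation collapses the outer sum to a single term with multiplicity $\ell j$; after this, the remaining identity reduces to the antisymmetry of $\omega_{L^2,\mathcal{V}}$ and a direct comparison of the infinite-particle diagonal traces with the Poisson bracket formula on $\Sc(\R;\mathcal{V})$.
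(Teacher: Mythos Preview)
Your approach is sound and would yield a complete proof, but it differs from the paper's argument in two structural respects. First, the paper does not reduce to algebra generators: it works directly with arbitrary $F,G\in\A_\infty$, using only that $dF[\Gamma],\,dG[\Gamma]\in\G_\infty$. Second, rather than expanding the Lie bracket $\comm{\W}{\mathcal{Y}}_{\G_\infty}$ via the $\circ_\alpha^\beta$ formula and computing traces against factorized states, the paper evaluates $\pb{F}{G}_{\G_\infty^*}(\iota_\m(\gamma))$ as $dF[\iota_\m(\gamma)]\bigl(X_G(\iota_\m(\gamma))\bigr)$, inserting the explicit Hamiltonian vector field formula of \cref{prop:LP}. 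The key intermediate objects in the paper are Schwartz functions $\psi_{F,\ell},\psi_{G,\ell}\in\Sc(\R)$ (for $\ell\in\{1,2\}$) extracted from the good mapping property; these simultaneously realize $\grad_{s,\V}(\iota_\m^*F)$ and organize the $\G_\infty^*$ side, so that the final identity becomes a direct matrix-trace computation against $\omega_{L^2,\V}$.

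What each route buys: your reduction to linear generators $F_\W$ makes the G\^ateaux derivatives constant, so the $(\alpha,\beta)$-independence on factorized tensors collapses the Lie bracket expansion cleanly to diagonal $\Phi$-integrals, and you never invoke \cref{prop:LP}. The paper's route avoids the Leibniz-rule induction over products in $\A_\infty$ and never unpacks the $\circ_\alpha^\beta$ formula explicitly; instead it leans on the Hamiltonian vector field formula as a black box, which packages the same combinatorics. Both handle the bra/ket doubling via the skew-adjointness of $dF[\Gamma]^{(k)}$, which you should make explicit when matching the $\mathcal{V}$-matrix structure.
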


\cref{thm:GP_pomo} is a generalization of \cite[Theorem 2.12]{MNPRS1_2019} in our companion paper and, in fact, recovers this previous theorem since \cref{prop:Schw_WP_V} demonstrates that there is also a Poisson morphism
\begin{equation}
\iota_{\mathfrak{pm}}: (\Sc(\R),\A_{\Sc},\pb{\cdot}{\cdot}_{L^2}) \rightarrow (\Sc(\R;\mathcal{V}),\A_{\Sc,\mathcal{V}},\pb{\cdot}{\cdot}_{L^2,\mathcal{V}}), \quad \phi \mapsto \frac{1}{2}\mathrm{odiag}(\phi,\ol{\phi},\phi,\ol{\phi}),
\end{equation}
and the composition of Poisson morphisms is again a Poisson morphism.

The motivation for \cref{thm:GP_pomo} is the following. Since
\begin{equation}
I_{b,n}(\gamma) = \H_n(\iota_{\m}(\gamma)), \qquad \forall\gamma\in\Sc(\R;\mathcal{V})
\end{equation}
by \cref{prop:Hn_Ibn}, and since $\pb{I_{b,n}}{I_{b,m}}_{L^2,\mathcal{V}}\equiv 0$ on $\Sc(\R;\V)$, for any $n,m\in\N$, by \cref{prop:Ib_invol}, \cref{thm:GP_pomo} implies that
\begin{equation}\label{com_span}
0=\pb{\H_n}{\H_m}_{\G_\infty^*}(\iota_{\m}(\gamma)) = \frac{1}{2}\sum_{k=1}^\infty i\Tr_{1,\ldots,k}\paren*{\comm{-i\W_n}{-i\W_m}_{\G_\infty}^{(k)}\paren*{\ket*{\phi_1^{\otimes k}}\bra*{\phi_2^{\otimes k}}+\ket*{\phi_2^{\otimes k}}\bra*{\phi_1^{\otimes k}}}}.
\end{equation}
Note that only finitely many terms in the above summation are nonzero. Next, we use a scaling argument to show that \eqref{com_span} implies that each of the summands in the right-hand side of \eqref{com_span} are identically zero:
\begin{equation}
\label{eq:intro_sum_van}
\frac{i}{2}\Tr_{1,\ldots,k}\paren*{\comm{-i\W_n}{-i\W_m}_{\G_\infty}^{(k)}\paren*{\ket*{\phi_1^{\otimes k}}\bra*{\phi_2^{\otimes k}}+\ket*{\phi_2^{\otimes k}}\bra*{\phi_1^{\otimes k}}}} = 0, \qquad \forall \phi_1,\phi_2\in\Sc(\R), \enspace k\in\N.
\end{equation}
The intuition is that if a polynomial is identically zero then all of its coefficients are zero. By unpacking the definition of the Poisson bracket $\pb{\H_n}{\H_m}_{\G_\infty^*}$, \eqref{eq:intro_sum_van} yields
\begin{equation}
\label{eq:pb_r1_simp}
\pb{\H_n}{\H_m}_{\G_\infty^*}(\Gamma) = 0, \qquad \forall\Gamma=\frac{1}{2}\paren*{\ket*{\phi_{k,1}^{\otimes k}}\bra*{\phi_{k,2}^{\otimes k}} + \ket*{\phi_{k,2}^{\otimes k}}\bra*{\phi_{k,1}^{\otimes k}}}_{k\in\N},
\end{equation}
where $\phi_{k,1},\phi_{k,2}\in\Sc(\R)$ for every $k\in\N$. By then using an approximation argument from \cref{app:multi_alg} involving symmetric-rank-1 approximations (see \cref{cor:Sch_ST_DM_d}) together with the continuity of $\pb{\H_n}{\H_m}_{\G_\infty^*}$, we obtain from \eqref{eq:pb_r1_simp} that Poisson commutativity of the $I_{b,n}$ implies the Poisson commutativity of $\H_n$. The reverse implication is a straightforward consequence of \cref{thm:GP_pomo}. Summarizing the preceding discussion, we have the following equivalence result:

\begin{restatable}[Poisson commutativity equivalence]{thm}{Equiv}
\label{thm:Equiv}
For any $n,m\in\N$,
\begin{equation}\label{ibn_inv}
\pb{I_{b,n}}{I_{b,m}}_{L^2,\mathcal{V}}(\gamma) =0, \qquad \forall \gamma \in\Sc(\R;\mathcal{V}),
\end{equation}
if and only if
\begin{equation}
\pb{\H_n}{\H_m}_{\G_\infty^*}(\Gamma) = 0, \qquad \forall \Gamma\in\G_\infty^*.
\end{equation}
\end{restatable}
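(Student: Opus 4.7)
The plan is to treat the two implications separately. The backward direction, from Poisson commutativity of the $\H_n$ to that of the $I_{b,n}$, is essentially immediate from the Poisson morphism property of $\iota_\m$ furnished by \cref{thm:GP_pomo} together with the pullback identity $\iota_\m^*\H_n = I_{b,n}$ supplied by \cref{prop:Hn_Ibn}: if $\pb{\H_n}{\H_m}_{\G_\infty^*}\equiv 0$ on $\G_\infty^*$, then pulling back yields $\pb{I_{b,n}}{I_{b,m}}_{L^2,\mathcal{V}} = \iota_\m^*\pb{\H_n}{\H_m}_{\G_\infty^*}\equiv 0$ on $\Sc(\R;\mathcal{V})$.

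The forward direction is the substantive one. Assuming $\pb{I_{b,n}}{I_{b,m}}_{L^2,\mathcal{V}}\equiv 0$, \cref{thm:GP_pomo} and \cref{prop:Hn_Ibn} yield
\begin{equation*}
0 = \pb{\H_n}{\H_m}_{\G_\infty^*}(\iota_\m(\gamma)) = \frac{i}{2}\sum_{k=1}^{\infty}\Tr_{1,\ldots,k}\paren*{\comm{-i\W_n}{-i\W_m}_{\G_\infty}^{(k)}\paren*{\ket*{\phi_1^{\otimes k}}\bra*{\phi_2^{\otimes k}}+\ket*{\phi_2^{\otimes k}}\bra*{\phi_1^{\otimes k}}}}
\end{equation*}
for every $\gamma\in\Sc(\R;\mathcal{V})$, and the sum is in fact finite since the recursive construction of $\W_n, \W_m$ caps the particle-number support of the bracket $\comm{-i\W_n}{-i\W_m}_{\G_\infty}$ by a function of $n+m$. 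To isolate each summand I would introduce a scaling parameter $\lambda>0$ and replace $(\phi_1,\phi_2)\mapsto(\lambda\phi_1,\lambda\phi_2)$; by multilinearity of the trace, the $k$-th summand picks up a factor of $\lambda^{2k}$. Since the full sum vanishes identically in $\lambda$ and the monomials $\{\lambda^{2k}\}_{k\in\N}$ are linearly independent, each summand must vanish separately, producing the refined identity
\begin{equation*}
\Tr_{1,\ldots,k}\paren*{\comm{-i\W_n}{-i\W_m}_{\G_\infty}^{(k)}\paren*{\ket*{\phi_1^{\otimes k}}\bra*{\phi_2^{\otimes k}}+\ket*{\phi_2^{\otimes k}}\bra*{\phi_1^{\otimes k}}}} = 0
\end{equation*}
for every $k\in\N$ and every $\phi_1,\phi_2\in\Sc(\R)$.

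Next, I would extend this particle-by-particle identity to an arbitrary $\Gamma=(\gamma^{(k)})_{k\in\N}\in\G_\infty^*$. The key ingredient is \cref{cor:Sch_ST_DM_d}, which asserts that each self-adjoint component $\gamma^{(k)}\in\g_k^*$ can be approximated, in the topology of $\L(\Sc_s'(\R^k),\Sc_s(\R^k))$, by finite linear combinations of symmetric-rank-one operators of the form $\ket*{\phi^{\otimes k}}\bra*{\psi^{\otimes k}}+\ket*{\psi^{\otimes k}}\bra*{\phi^{\otimes k}}$. Multilinearity in the factors $(\phi,\psi)$ transfers the refined identity from the single-pair case $(\phi_1,\phi_2)$ to every finite linear combination, and then the continuity of the functional $\pb{\H_n}{\H_m}_{\G_\infty^*}\in\A_\infty$ with respect to the direct-product topology of $\G_\infty^*$ permits passage to the limit componentwise, yielding $\pb{\H_n}{\H_m}_{\G_\infty^*}(\Gamma)=0$ for every $\Gamma\in\G_\infty^*$.

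The main obstacle I anticipate lies in coordinating the two limiting procedures --- the scaling/polarization argument and the symmetric-rank-one approximation --- with the subtleties of the direct-product topology on $\G_\infty^*$ and the continuity statement attached to $\A_\infty$. One must verify that the approximations provided by \cref{cor:Sch_ST_DM_d} converge componentwise in the correct sense and that the particle-number truncation licensed by the bounded support of $\comm{-i\W_n}{-i\W_m}_{\G_\infty}$ genuinely permits reduction to a finite combination of traces. Once these functional-analytic details are in hand, the remaining content is the elementary polynomial-identity principle and the bilinearity of the Poisson bracket, both of which are routine.
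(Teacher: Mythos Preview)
Your proposal is correct and follows essentially the same route as the paper: both directions rest on \cref{thm:GP_pomo} and \cref{prop:Hn_Ibn}, the forward direction uses a scaling argument to isolate the $k$-th summand and then the symmetric-rank-one approximation \cref{cor:Sch_ST_DM_d} together with continuity of $\pb{\H_n}{\H_m}_{\G_\infty^*}$. The only cosmetic difference is that the paper phrases the scaling step with a complex parameter $\lambda$ and applies $(\p_\lambda\p_{\ol{\lambda}})^r$ at the origin to extract the $|\lambda|^{2r}$ coefficient, whereas you invoke linear independence of the monomials $\{\lambda^{2k}\}$ directly; the content is identical.
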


In light of \cref{prop:Ib_invol}, which asserts the validity of \eqref{ibn_inv}, we then obtain \cref{thm:GP_invol} from \cref{thm:Equiv}, thus answering \cref{q:GP_int}.

\medskip
Having resolved  \cref{q:GP_int}, we turn to answering \cref{q:nGP}. For each $n\in\N$, we define the \emph{$n$-th~GP hierarchy (nGP)} to be the Hamiltonian equation of motion generated by the functional $\H_{n}$ with respect to the Poisson structure on $\G_{\infty}^{*}$:
\begin{equation}\label{eq:nGP}
\paren*{\frac{d}{dt}\Gamma} = X_{\H_{n}}(\Gamma),
\end{equation}
where $X_{\H_{n}}$ is the unique Hamiltonian vector field defined by $\H_{n}$. See \ref{item:wp_P3} of \cref{def:WP} for the definition of the Hamiltonian vector field. We generalize the fact that solutions to the NLS generate a special class of factorized solutions to the GP hierarchy by proving that the same correspondence is true for the (nNLS) and (nGP). Thus, we are led to our final main theorem, providing an affirmative answer to \cref{q:nGP}.

\begin{restatable}[Connection between (nGP) and (nNLS)]{thm}{GPfac}
\label{thm:GP_fac}
Let $n\in\N$. Let $I\subset\R$ be a compact interval and let $\phi \in C^{\infty}(I;\mathcal{S}(\R))$ be a solution to the (nNLS) with lifespan $I$. If we define
\begin{equation}
\Gamma\in C^{\infty}(I;\G_{\infty}^{*}), \qquad  \Gamma \coloneqq \paren*{\ket*{\phi^{\otimes k}}\bra*{\phi^{\otimes k}}}_{k \in \N},
\end{equation}
then $\Gamma$ is a solution to the (nGP).
\end{restatable}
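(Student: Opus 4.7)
\textbf{Proof proposal for \cref{thm:GP_fac}.} The plan is to establish the equation $\frac{d}{dt}\Gamma^{(\ell)} = X_{\H_{n}}(\Gamma)^{(\ell)}$ for every $\ell\in\N$ by computing both sides directly at the factorized state, matching them through the multilinear structure that connects $\W_{n}$ to the one-particle operators $w_{n}$ introduced in \eqref{eq:w_rec}. The first step is to differentiate $\Gamma^{(\ell)}=\ket{\phi^{\otimes\ell}}\bra{\phi^{\otimes\ell}}$ in time using Leibniz and the nNLS equation $\dot\phi=\grad_{s}I_{n}(\phi)$, obtaining
\[
\frac{d}{dt}\Gamma^{(\ell)} = \sum_{\alpha=1}^{\ell}\Bigl(\ket{\phi^{\otimes(\alpha-1)}\otimes\grad_{s}I_{n}(\phi)\otimes\phi^{\otimes(\ell-\alpha)}}\bra{\phi^{\otimes\ell}} + \ket{\phi^{\otimes\ell}}\bra{\phi^{\otimes(\alpha-1)}\otimes\grad_{s}I_{n}(\phi)\otimes\phi^{\otimes(\ell-\alpha)}}\Bigr),
\]
so the core task reduces to expressing $\grad_{s}I_{n}(\phi)$ in a form that can be read off from the action of $\W_{n}$ on tensor powers of $\phi$.

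Next, I would derive an explicit formula for $\grad_{s}I_{n}(\phi)$ by exploiting the decomposition $I_{n}(\phi)=\sum_{k=1}^{N(n)}I_{n}^{(k)}[\phi,\ldots,\phi;\ol\phi,\ldots,\ol\phi]$ indicated in the introduction. Taking the G\^ateaux derivative in $\phi$ (with $\ol\phi$ held independent in the complexified picture) and converting through the weak symplectic form $\omega_{L^{2}}$ produces a sum of multilinear contractions of the kernels of $I_{n}^{(k)}$ against $k-1$ copies of $\phi$ and $k$ copies of $\ol\phi$. By the construction in \cref{sec:cor} and the identification of $\W_{n}^{(k)}$ with the symmetrized, self-adjointified Schwartz kernel of $I_{n}^{(k)}$, these contractions are precisely the traces $\Tr_{2,\ldots,j}(\W_{n}^{(j)}\,\phi^{\otimes(j-1)}\otimes(\cdot))$ (suitably symmetrized), which is the bridge between the one-particle and infinite-particle pictures.

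For the right-hand side, I would insert $d\H_{n}[\Gamma]=-i\W_{n}$ into the Hamiltonian vector field formula from \cref{prop:LP}, noting that the sum over $j$ is finite because $\W_{n}$ has only finitely many nonzero particle components. This yields
\[
X_{\H_{n}}(\Gamma)^{(\ell)} = \sum_{j=1}^{n} j\,\Tr_{\ell+1,\ldots,\ell+j-1}\Bigl(\bigl[\sum_{\alpha=1}^{\ell}(-i\W_{n})^{(j)}_{(\alpha,\ell+1,\ldots,\ell+j-1)},\;\ket{\phi^{\otimes(\ell+j-1)}}\bra{\phi^{\otimes(\ell+j-1)}}\bigr]\Bigr).
\]
Using the factorized structure of $\Gamma^{(\ell+j-1)}$, the action of $\W_{n}^{(j)}$ on $\phi^{\otimes j}$ splits into two halves: the part coming from $\Sym(\wt\W_{n})$ produces, via the recursion \eqref{eq:Wn_rec} for $\wt\W_{n}$ paralleling \eqref{eq:w_rec} for $w_{n}$, a vector of the form $w_{n}[\phi]\otimes\phi^{\otimes(j-1)}$ up to symmetrization; taking the partial trace over coordinates $\ell+1,\ldots,\ell+j-1$ against $\bra{\phi^{\otimes(j-1)}}$ collapses $j-1$ of the slots and leaves exactly the "ket-side'' term of the LHS. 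The part coming from $\Sym(\wt\W_{n}^{*})$ produces, by duality, the conjugate "bra-side'' contribution. The combinatorial factor $j$ in the vector field formula cancels with the multiplicity coming from choosing which of the $j$ symmetrized factors plays the role of $w_{n}[\phi]$.

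The hardest step will be the bookkeeping of this matching: one must show that, after all symmetrizations in $\W_{n}$ and in the tensor product $\phi^{\otimes(\ell+j-1)}$, the partial traces exactly reproduce $\sum_{\alpha=1}^{\ell}\ket{\cdots\grad_{s}I_{n}(\phi)\cdots}\bra{\phi^{\otimes\ell}}$ and its adjoint, with correct coefficients and no residual cross terms. I expect this to require a careful induction on $n$ using the recursive definition \eqref{eq:Wn_rec}, propagating at each step the identity between the $j$-particle component of $\W_{n}$ acting on $\phi^{\otimes j}$ and the term-by-term structure of $w_{n}[\phi]$; the good mapping property and the associated composition rules from \cref{ssec:mr_geo_rev} will be essential to make the trace operations rigorous. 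Once this matching is carried out, the equality $\frac{d}{dt}\Gamma^{(\ell)} = X_{\H_{n}}(\Gamma)^{(\ell)}$ holds for every $\ell$, proving that $\Gamma$ solves the $n$-th GP hierarchy.
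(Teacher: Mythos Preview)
Your proposal is correct and matches the paper's approach: compute $\frac{d}{dt}\Gamma^{(\ell)}$ via Leibniz and the nNLS equation, compute $X_{\H_n}(\Gamma)^{(\ell)}$ on factorized states using the $\W_n\leftrightarrow w_n$ correspondence (established by induction in \cref{lem:Wn_wn_ptr}), and match via the explicit formula for $\grad_s I_n(\phi)$ in \cref{lem:tlI_grad_sym}. The only refinement is that the factor $j$ does not cancel against $j$ identical copies: under the bosonic symmetrization of $\W_n^{(j)}$ it becomes a sum $\sum_{\beta=1}^{j}$ of \emph{distinct} partial-transpose contributions $w_{n,\beta'}^{(j),t}$ and $w_{n,\beta}^{(j),t}$ (see \cref{lem:wn(k)_tran} and \cref{lem:XHn_Sch_ker}), and it is precisely this double sum over $(j,\beta)$ that reconstitutes $\grad_s I_n(\phi)$.
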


\begin{remark}\label{gp_ham}
In \cite{MNPRS1_2019}, we defined the \emph{Gross-Pitaevskii Hamiltonian functional} $\H_{GP}$ by
\begin{equation}
\H_{GP}(\Gamma) \coloneqq  \Tr_{1}\paren*{-\Delta_{x_1}\gamma^{(1)}} + \kappa\Tr_{1,2}\paren*{\delta(X_{1}-X_{2})\gamma^{(2)}}, \qquad \forall \Gamma=(\gamma^{(k)})_{k\in\N}\in\G_\infty^*.
\end{equation}
In particular, $\H_{GP} = \H_3$, and in the one-dimensional case, we recover Theorem 2.10 from \cite{MNPRS1_2019}, which asserts that the GP hierarchy \eqref{eq:GP'} is the Hamiltonian equation of motion on $(\G_\infty^*,\A_{\infty},\pb{\cdot}{\cdot}_{\G_\infty^*})$ induced by $\H_{GP}$.
\end{remark}

\begin{remark}
\cref{thm:GP_fac} does not assert that the factorized solution $(\ket*{\phi^{\otimes k}}\bra*{\phi^{\otimes k}})_{k\in\N}$ is the unique solution to the $n$-th GP hierarchy starting from factorized initial data, only that it is \emph{a} particular solution. More generally, \cref{thm:GP_fac} makes no assertion about the uniqueness of solutions to the (nGP) in the class $C^{\infty}(I;\G_{\infty}^{*})$. While the (nNLS) are known to be globally well-posed in the Schwartz class by the work of Beals and Coifman \cite{BC1984} and Zhou \cite{Zhou1989}, unconditional uniqueness of the $n$-th GP hierarchy in the class $C^{\infty}(I;\G_\infty^*)$, for some compact interval $I$, is an open problem, the resolution of which we do not address in this work.
\end{remark}

To prove \cref{thm:GP_fac}, we need to show that the $n$-th GP Hamiltonian vector field $X_{\H_n}$ can be written as
\begin{equation}
\label{eq:GPham_wts}
X_{\H_n}(\Gamma)^{(k)} = \sum_{\alpha=1}^{k} \paren*{\ket*{\phi^{\otimes (\alpha-1)}\otimes \grad_s I_n(\phi)\otimes \phi^{(k-\alpha)}}\bra*{\phi^{\otimes k}} + \ket*{\phi^{\otimes k}}\bra*{\phi^{\otimes (\alpha-1)}\otimes \grad_s I_n(\phi)\otimes \phi^{(k-\alpha)}}},
\end{equation}
for $\Gamma$ as in the statement of \cref{thm:GP_fac}. We remind the reader that $\grad_s I_n$ denotes the symplectic gradient of $I_n$ with respect to the form $\omega_{L^2}$, see \cref{def:re_grad}. To establish the identity \eqref{eq:GPham_wts}, we use a formula from \cref{ssec:cor_symgrad} for $\grad_s I_n$, which is in terms of the G\^ateaux derivatives of the nonlinear operators $w_n$. Combining this formula with the computation of $X_{\H_n}(\Gamma)$ for factorized $\Gamma$ (see \cref{lem:XHn_Sch_ker}), which extensively uses the good mapping property of the generators of the $\H_n$ (i.e. $-i\W_n$), we obtain \eqref{eq:GPham_wts} and hence the desired conclusion.

\subsection{Organization of the paper}
We close \cref{sec:mr_bp} by commenting on the organization of the paper. In \cref{sec:pre}, we review the notation and background material used throughout this paper. \cref{ssec:pre_calc} briefly reviews the G\^ateaux derivative and calculus in the setting of locally convex spaces. \cref{ssec:pre_bos} introduces the relevant spaces of bosonic test functions and distributions, symmetrization and contraction operators, and tensor products. \cref{ssec:pre_WP} is a crash course on weak symplectic and Poisson manifolds in addition to discussing several important examples of such objects which appear frequently in this work. Lastly, \cref{ssec:pre_LA_facts} quickly reviews the definition of a Lie algebra as well as the classical Lie-Poisson construction. As this subject is thoroughly treated in Section 4.2 of our companion paper \cite{MNPRS1_2019}, we have omitted proofs and instead refer the reader to that work for more details.

In \cref{sec:con_GP_W}, we construct our observable $\infty$-hierarchies $-i\W_{n}$, thereby proving \cref{prop:Wn_con}. The section is divided into three subsections corresponding to each stage of the construction: the preliminary version, followed by the self-adjoint version, followed by the final bosonic, self-adjoint version.

\cref{sec:cor} is devoted to analyzing the correspondence between the $w_n$ and the $\W_n$ and the consequences of this correspondence. \cref{ssec:cor_multi} contains the ``multilinearization'' of the $w_n$. \cref{ssec:cor_symgrad} contains the proof of a formula for the symplectic gradients of the $I_n$. \cref{ssec:cor_ptr} connects the multilinearizations of the $w_n$ from \cref{ssec:cor_multi} with the partial traces of the $\W_n$.

In \cref{sec:invol}, we prove our involution result, \cref{thm:GP_invol}, in addition to the main auxiliary results involved in the proof of this theorem, which might be of independent interest. This section is broken down into four subsections in order to make the presentation more modular. \cref{ssec:invol_pomo} contains the proof of the Poisson morphism result, \cref{thm:GP_pomo}. \cref{ssec:cor_Ibn_Hn} connects the infinite-particle functionals $\H_n$ to the one-particle functions $I_{b,n}$ via the Poisson morphism of \cref{thm:GP_pomo} and the correspondence results of \cref{ssec:cor_ptr}. \cref{ssec:invol_pb_an} contains the proofs of the Poisson commutativity equivalence result, \cref{thm:Equiv}, and the involution result, \cref{thm:GP_invol}. Lastly, \cref{ssec:invol_ntriv} contains the proof of \cref{prop:invol_ntriv}, which asserts that there is at least one functional which does not Poisson commute with a given $\H_n$. 

In the last section, \cref{sec:gp_flows}, of the main body of the paper, we prove our $n$-th GP/$n$-th NLS  correspodence result,  \cref{thm:GP_fac}. \cref{ssec:gp_flows_vf} is devoted to the computation of the Hamiltonian vector fields of the $\H_n$ evaluated on factorized states, and \cref{ssec:gp_flows_prf} is devoted to the proof of \cref{thm:GP_fac}. To close the section,  we compute in \cref{ssec:gp_flows_ex} the fourth GP hierarchy, which corresponds to the complex mKdV equation.

We have also included several appendices to make this work as self-contained as possible. \cref{app:nls_pc} revisits the treatment in Faddeev and Takhtajan's monograph \cite{FT07} of the involution of the functionals $I_n$ in the more general setting of the system \eqref{eq:nls_sys_intro}. We were unable to find a reference covering this generalization. Therefore, we provide a fairly thorough presentation at the expense of a lengthy appendix. \cref{app:multi_alg} contains a quick review of some facts from multilinear algebra on symmetric tensors, which we use to establish approximation results for bosonic Schwartz functions and density matrices. \cref{app:DVO} is devoted to technical facts about distribution-valued operators and topological tensor products, which justify the manipulations used extensively in this paper.  Furthermore, this appendix includes an elaboration on the good mapping property, in particular, some technical consequences of it which are used in the body of the paper. \cref{app:DVO} is also included in \cite{MNPRS1_2019}; however, we include it here, with most of the proofs omitted, for convenient referencing. \cref{app:WF} contains technical material on products of distributions, specifically on when the product of two distributions can be rigorously defined.

\section{Notation}\label{sec:not}
\subsection{Index of notation}

We include \cref{tab:notation}, located at the end of the manuscript, as a guide for the frequently used symbols in this work. In this table, we either provide a definition of the notation or a reference for where the symbol is defined.

\section{Preliminaries}\label{sec:pre}
\subsection{Calculus on locally convex spaces}\label{ssec:pre_calc}
We begin by recalling some definitions related to calculus on locally convex spaces, which we make use of in the sequel. For further background material, we refer the reader to the lecture notes of Milnor \cite{Milnor1984}.

\begin{mydef}[Locally convex space]
A topological vector space (tvs) $X$ over a scalar field $\K$ is said to be \emph{locally convex} if every neighborhood $U\ni 0$ contains a neighborhood $U'\ni 0$ which is convex.
\end{mydef}

A particularly nice consequence of local convexity is the following Hahn-Banach type result.
\begin{prop}[Hahn-Banach]\label{prop:HB}
If $X$ is locally convex, then given two distinct vectors $x,y\in X$, there exists a continuous $\K$-linear map $\ell: X\rightarrow \K$ with $\ell(x)\neq \ell(y)$.
\end{prop}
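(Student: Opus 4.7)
The plan is to reduce to separating a nonzero vector from the origin and then to invoke the algebraic Hahn--Banach extension theorem via a Minkowski gauge. First, set $z \coloneqq x - y$; since $x \neq y$, one has $z \neq 0$, and it suffices to produce a continuous $\K$-linear map $\ell : X \to \K$ with $\ell(z) \neq 0$, for then $\ell(x) - \ell(y) = \ell(z) \neq 0$ gives the conclusion.

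Next, I would construct a convex, balanced, open neighborhood $U$ of $0$ not containing $z$. Assuming $X$ is Hausdorff, as is the standard convention for locally convex spaces, $\{z\}$ is closed and so $X \setminus \{z\}$ is an open set containing $0$; by local convexity it contains a convex neighborhood $V$ of $0$, and replacing $V$ by $\bigcap_{|\lambda|=1} \lambda V$ (intersected with a smaller balanced neighborhood if needed to preserve openness) produces a convex, balanced, open neighborhood $U$ of $0$ still disjoint from $\{z\}$. The Minkowski gauge $p_U(v) \coloneqq \inf\{t > 0 : v \in tU\}$ is then a continuous seminorm on $X$ --- continuity being a consequence of $U$ being an open neighborhood of $0$, hence absorbing --- and $z \notin U$ forces $p_U(z) \geq 1$.

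Having the seminorm $p_U$ in hand, I would apply the algebraic Hahn--Banach theorem in its seminorm-dominated form. On the one-dimensional subspace $\K z$, define $\ell_0(\lambda z) \coloneqq \lambda\, p_U(z)$, with the standard Bohnenblust--Sobczyk modification if $\K = \C$, so that $|\ell_0| \leq p_U$ on $\K z$. The algebraic Hahn--Banach theorem then extends $\ell_0$ to a $\K$-linear functional $\ell : X \to \K$ satisfying $|\ell(v)| \leq p_U(v)$ for every $v \in X$. Continuity of $\ell$ at $0$ --- and hence everywhere by linearity --- is immediate from this bound together with continuity of $p_U$ at $0$, while $\ell(z) = p_U(z) \geq 1$ ensures $\ell(z) \neq 0$.

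There is really no substantive obstacle here; the result is a textbook consequence of Hahn--Banach once the Minkowski gauge is constructed. The only subtlety worth flagging is the tacit reliance on Hausdorffness of $X$: without it, two distinct points lying in the closure of $\{0\}$ could not be separated by any continuous linear functional, since such a functional must vanish on $\overline{\{0\}}$. I would therefore either add Hausdorffness to the hypothesis explicitly or note that it is the standing convention under which \cref{prop:HB} is stated.
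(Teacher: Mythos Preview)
Your argument is correct and is the standard textbook route to this separation result: reduce to separating a nonzero vector from $0$, build a continuous seminorm via the Minkowski gauge of a balanced convex open neighborhood avoiding that vector, and extend a linear functional dominated by that seminorm using the algebraic Hahn--Banach theorem. Your observation that Hausdorffness is tacitly assumed is also on point.

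The paper, however, does not prove this proposition at all; it is recorded in the preliminaries as a well-known consequence of local convexity and left unproved. So there is no approach to compare against --- your proof simply supplies what the paper omits as standard background.
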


\begin{mydef}[G\^{a}teaux derivative]\label{gateaux_deriv}
Let $X$ and $Y$ be locally convex $\K$-tvs, let $X_{0}\subset X$ and $Y_{0}\subset Y$ be open sets, and let $f: X_{0}\rightarrow Y_{0}$ be a continuous map. Given a point $x\in X_{0}$ and a direction $v\in X$, we define the \emph{directional derivative} or \emph{G\^{a}teaux derivative} of $f$ at $x$ in the direction $v$ to be the vector
\begin{equation}
f'(x;v) \coloneqq f_{x}'(v) \coloneqq \lim_{t\rightarrow 0} \frac{f(x+tv) - f(x)}{t},
\end{equation}
if this limit exists. We call the map $f_{x}': X\rightarrow Y$ the \emph{derivative of $f$ at the point $x_{0}$}. We use the notation $df[x](v) \coloneqq f'(x;v)$.

The map $f:X_{0}\rightarrow Y_{0}$ is \emph{$C^{2}$ G\^{a}teaux} if $f$ is a $C^{1}$ G\^{a}teaux map and for each $v_{1}\in X$ fixed, the map
\begin{equation}
X_{0} \rightarrow Y, \qquad x\mapsto f'(x;v_{1})
\end{equation}
is $C^1$ with G\^{a}teaux derivative
\begin{equation}
\lim_{t\rightarrow 0} \frac{f'(x+tv_{2}; v_{1}) - f'(x;v_{1})}{t}
\end{equation}
depending continuously on $(x;v_{1},v_{2})\in X_{0}\times X\times X$ equipped with the product topology. If this limit exists, we call it the \emph{second G\^{a}teaux derivative} of $f$ at $x$ in the directions $v_{1},v_{2}$ and denote it by $f''(x;v_{1},v_{2})$. We inductively define \emph{$C^{r}$ maps} $X_{0}\rightarrow Y_{0}$. If a map is $C^{r}$ for every $r\in \N$, then we say that $f$ is a \emph{$C^{\infty}$ map} or alternatively, a \emph{smooth map}.
\end{mydef}

\begin{prop}[Symmetry and $r$-linearity of $f_{x_{0}}^{(r)}$]
If for $r\in\N$, the map $f$ is $C^{r}$, then for each fixed $x_{0}\in X_{0}$, the map
\begin{equation}
\underbrace{X\times\cdots \times X}_{r} \rightarrow Y, \qquad (v_{1},\ldots,v_{r}) \mapsto f^{(r)}(x_{0};v_{1},\ldots,v_{r})
\end{equation}
is $r$-linear and symmetric, i.e. for any permutation $\pi\in\Ss_{r}$,
\begin{equation}
f^{(r)}(x_{0};v_{\pi(1)},\ldots,v_{\pi(r)}) = f^{(r)}(x_{0};v_{1},\ldots,v_{r}).
\end{equation}
\end{prop}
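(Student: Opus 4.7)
The plan is to reduce, via Hahn-Banach and composition with continuous linear functionals, to the finite-dimensional statement on $\R^{r}$ where Clairaut's theorem and the linearity of classical partial derivatives take over. Homogeneity of $f^{(r)}(x_{0};\cdot)$ in each slot is immediate from the very definition of the G\^{a}teaux derivative, so only symmetry and additivity in each slot need work.

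Reduction to $Y=\K$: by Proposition 3.2 it suffices to verify the proposed identities after pairing with an arbitrary continuous linear functional $\ell:Y\rightarrow\K$. An induction on $k\leq r$ using the linearity and continuity of $\ell$ shows that $\ell\circ f$ is again $C^{k}$ with $(\ell\circ f)^{(k)}(x_{0};v_{1},\ldots,v_{k}) = \ell(f^{(k)}(x_{0};v_{1},\ldots,v_{k}))$, so we may and do assume $Y=\K$.

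Reduction to classical partials: fix $v_{1},\ldots,v_{r}\in X$, choose an open neighborhood $U\subset\R^{r}$ of the origin such that $x_{0}+\sum_{i}t_{i}v_{i}\in X_{0}$ for $t\in U$, and define $g(t_{1},\ldots,t_{r}) \coloneqq f(x_{0}+\sum_{i}t_{i}v_{i})$. An induction on $k\leq r$ using the chain rule (which, in the Milnor framework, follows from the joint continuity of $f^{(k)}$ built into the $C^{k}$ hypothesis) shows that $g$ is classically $C^{r}$ on $U$, and
\[
\frac{\partial^{r}g}{\partial t_{\pi(r)}\cdots\partial t_{\pi(1)}}(0) = f^{(r)}(x_{0};v_{\pi(1)},\ldots,v_{\pi(r)})
\]
for every permutation $\pi\in\Ss_{r}$. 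Symmetry then follows immediately from Clairaut's theorem applied to $g$. For additivity in the first slot, I would enlarge the construction to
\[
\tl g(s_{1},s_{2},t_{2},\ldots,t_{r}) \coloneqq f\paren*{x_{0} + s_{1}v_{1} + s_{2}w + \sum_{j=2}^{r}t_{j}v_{j}},
\]
apply the classical chain rule to $u\mapsto \tl g(u,u,t_{2},\ldots,t_{r})$, identify the resulting mixed partial at the origin via the analogue of the previous display, and conclude $f^{(r)}(x_{0};v_{1}+w,v_{2},\ldots,v_{r}) = f^{(r)}(x_{0};v_{1},v_{2},\ldots,v_{r}) + f^{(r)}(x_{0};w,v_{2},\ldots,v_{r})$. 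Combined with the symmetry just established, this upgrades to additivity in every slot, and hence full $r$-linearity.

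The main obstacle is the inductive identification of iterated G\^{a}teaux derivatives with the classical mixed partials of $g$: each step requires the chain rule applied to an affine-linear curve in $X$, and it is precisely the joint continuity of $f^{(k)}$ in the full tuple $(x;v_{1},\ldots,v_{k})$ ensured by the $C^{k}$ hypothesis of Definition 3.3 that legitimizes this step. Once this identification is in place, everything else reduces to standard multivariable calculus on $\R^{r}$.
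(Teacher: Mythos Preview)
The paper does not actually supply a proof of this proposition: it is stated as background in \S4.1 with the surrounding remark ``For further background material, we refer the reader to the lecture notes of Milnor \cite{Milnor1984},'' and no argument is given in the text. So there is no in-paper proof to compare against.

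Your proposal is correct and is essentially the standard argument one finds in Milnor's notes and similar treatments: separate points with Hahn--Banach to reduce to scalar-valued $f$, then pull back along the affine map $(t_{1},\ldots,t_{r})\mapsto x_{0}+\sum_{i}t_{i}v_{i}$ to reduce to a classical $C^{r}$ function on an open set in $\R^{r}$, where Clairaut gives symmetry and the ordinary chain rule gives additivity. Your identification of the key technical point---that the joint continuity built into Definition~3.3 is exactly what one needs to push the chain rule through the inductive identification of mixed partials---is accurate, and your handling of additivity via the auxiliary function $\tl g$ is clean. Nothing is missing.
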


\begin{prop}[Composition]
If $f:X_{0}\rightarrow Y_{0}$ and $g: Y_{0}\rightarrow Z_{0}$ are $C^{r}$ maps, then $g\circ f: X_{0}\rightarrow Z_{0}$ is $C^{r}$ and the derivative of $(g\circ f)$ at the point $x\in X_{0}$ is the map $g_{f(x)}' \circ f_{x}': X\rightarrow Z$.
\end{prop}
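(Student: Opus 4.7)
The plan is to proceed by induction on $r$, with the substantive work concentrated in the base case $r=1$. For the base case, I fix $x_0 \in X_0$ and $v \in X$ and consider the auxiliary curve $\alpha(t) \coloneqq f(x_0 + tv)$, well-defined for small $|t|$ by continuity of $f$. The $C^1$ hypothesis on $f$ ensures $\alpha$ is differentiable with $\alpha'(t) = f'(x_0 + tv; v)$ depending continuously on $t$. I would then establish
\[
\lim_{t\to 0} \frac{g(f(x_0+tv)) - g(f(x_0))}{t} = g'(f(x_0); f'(x_0;v))
\]
by means of a fundamental-theorem-of-calculus representation,
\[
g(f(x_0+tv)) - g(f(x_0)) = \int_0^1 g'\bigl(f(x_0) + s(f(x_0+tv)-f(x_0));\, f(x_0+tv)-f(x_0)\bigr)\, ds,
\]
which itself follows by applying the one-variable scalar chain rule to $s \mapsto g(f(x_0) + s(f(x_0+tv) - f(x_0)))$ and integrating over $[0,1]$. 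Dividing by $t$, using linearity of $g'$ in its second argument to absorb the factor $1/t$ into the second slot, and passing the limit through the integral via joint continuity of $g'$ then yields the desired identity. Continuity of the candidate derivative $(x,v) \mapsto g'(f(x); f'(x;v))$ on $X_0 \times X$ is immediate from the joint continuity of $f$, $f'$, and $g'$.

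For the inductive step, assume the result holds through order $r$ and let $f, g$ be $C^{r+1}$. The base case gives that $g \circ f$ is $C^1$ with derivative map $D(x,v) \coloneqq g'(f(x); f'(x;v))$. Writing $D$ as the composition of the $C^r$ map $(x,v) \mapsto (f(x), f'(x;v))$ with the $C^r$ map $(y,w) \mapsto g'(y;w)$, the inductive hypothesis applied to this composition yields $D \in C^r$, hence $g \circ f \in C^{r+1}$ as required.

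The main technical obstacle will be rigorously justifying the integral representation in the base case, which requires making sense of a Riemann integral of a continuous $Z$-valued curve on $[0,1]$. This is standard in Milnor's framework (cited immediately before this proposition), which works under sequential completeness hypotheses on the target spaces that legitimize such integrals and permit the interchange of limit and integral through dominated-type arguments adapted to locally convex targets. In the absence of such completeness, one could alternatively pair with arbitrary continuous linear functionals using \cref{prop:HB} to reduce the limit identity to the scalar-valued chain rule; however, one would then still need a separate argument verifying that the vector-valued difference quotient actually converges in $Z$, rather than merely scalarly, which is itself nontrivial without additional structure.
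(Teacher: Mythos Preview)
The paper does not supply a proof of this proposition at all: it is stated as a standard fact in the preliminaries section, with the surrounding material referring the reader to Milnor's lecture notes \cite{Milnor1984} for details. Your outline is essentially the standard argument one finds in that reference (integral mean-value representation for the base case, induction via composing the first-derivative map for the step), and your caveats about sequential completeness of the target for the $Z$-valued Riemann integral are exactly the technical hypotheses Milnor imposes; so there is nothing to compare against in the paper itself, and your sketch is an appropriate reconstruction of the omitted proof.
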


\subsection{Bosonic functions, operators, and tensor products}\label{ssec:pre_bos}
We now review the main spaces of test functions and distributions and some basic facts about tensor products used extensively in the body of the paper.

We denote the pairing of a tempered distribution $u\in\Sc'(\R^{k})$ with a Schwartz function $f\in\Sc(\R^{k})$ by
\begin{equation}
\ipp{u,f}_{\Sc'(\R^{k})-\Sc(\R^{k})}.
\end{equation}
For $1\leq p\leq \infty$, we use the notation $L^{p}(\R^{k})$ to denote Banach space of $p$-integrable functions with norm $\|\cdot\|_{L^{p}(\R^{k})}$. In particular, when $p=2$, we denote the $L^{2}$ inner product by
\begin{equation}
\ip{f}{g} \coloneqq \int_{\R^{k}}d\ux_{k}\ol{f(\ux_{k})} g(\ux_{k}).
\end{equation}
Note that we use the physicist's convention that the inner product is complex linear in the second entry. Similarly, for $u\in\Sc'(\R^{k})$ and $f\in\Sc(\R^{k})$, we use the notation $\ip{u}{f}$ and $\ip{f}{u}$ to denote
\begin{equation}
\ip{u}{f} \coloneqq \ol{\ipp{u,\bar{f}}_{\Sc'(\R^{k})-\Sc(\R^{k})}} \qquad \text{ and }\qquad \ip{f}{u} \coloneqq \ol{\ip{u}{f}}.
\end{equation}
Alternatively, the right-hand side of the first definition may be taken as the definition of the tempered distribution $\bar{u}$. Throughout the paper, we will use an integral to represent the pairing of a distribution and a test function. 

We denote the symmetric group on $k$ letters by $\Ss_{k}$.  For a permutation $\pi\in\Ss_{k}$, we define the map $\pi: \R^{k}\rightarrow \R^{k}$ by
\begin{equation}
\label{eq:pi_vec_def}
\pi(\ul{x}_{k}) \coloneqq (x_{\pi(1)},\ldots,x_{\pi(k)}).
\end{equation}
For a complex-valued, measurable function $f : \R^k \to \C$, we define the permuted function
\begin{equation}\label{eq:pi_func_def}
(\pi f)(\ul{x}_{k}) \coloneqq (f\circ\pi)(\ul{x}_{k}) = f(x_{\pi(1)},\ldots,x_{\pi(k)}), \qquad \forall\ux_k\in\R^k.
\end{equation}

\begin{mydef}
We say that a measurable function $f:\R^{k}\rightarrow \C$ is \emph{symmetric} or \emph{bosonic} if
\begin{equation}
\pi(f) = f, \qquad \forall \pi\in\Ss_{k}.
\end{equation}
\end{mydef}

\begin{mydef}\label{def:sym_f}
We define the \emph{symmetrization operator} $\Sym_k$ on the space of measurable complex-valued functions by
\begin{equation}
\Sym_k(f)(\ul{x}_{k}) \coloneqq \frac{1}{k!}\sum_{\pi\in\Ss_{k}} \pi(f)(\ul{x}_{k}), \qquad \forall\ux_k\in\R^k.
\end{equation}
By duality and continuity of the symmetrizing operation, we can extend the symmetrization operator to $\mathcal{S}'(\R^{k})$.
\end{mydef}

\begin{mydef}[Bosonic test functions/distributions]\label{sym_schwartz}
For $k\in\N$, let $\mathcal{S}_{s}(\R^{k})$ denote the subspace of $\mathcal{S}(\R^{k})$ consisting of Schwartz functions which are bosonic. We say that a tempered distribution $u\in\mathcal{S}'(\R^{k})$ is \emph{symmetric} or \emph{bosonic} if for every permutation $\pi\in\Ss_{k}$,
\begin{equation}
\ipp{u,g\circ\pi^{-1}}_{\Sc'(\R^k)-\Sc(\R^k)} = \ipp{u,g}_{\Sc'(\R^k)-\Sc(\R^k)},
\end{equation}
for all $g\in\mathcal{S}(\R^{k})$. We denote the subspace of such tempered distributions by $\mathcal{S}_{s}'(\R^{k})$.
\end{mydef}

\begin{remark}\label{rem:sym}
It is straightforward to check that $\Sym_k$ is a continuous operator $\mathcal{S}(\R^{k}) \rightarrow \mathcal{S}_{s}(\R^{k})$ and $\mathcal{S}'(\R^{k}) \rightarrow \mathcal{S}_{s}'(\R^{k})$. Furthermore, a tempered distribution $u$ is bosonic if and only if $u=\Sym_k(u)$.
\end{remark}

\begin{lemma}
\label{lem:b_td}
We have the identification
\begin{equation}
\mathcal{S}_{s}'(\R^{k}) \cong (\mathcal{S}_{s}(\R^{k}))^*,
\end{equation}
where $(\Sc_s(\R^k))^*$ denotes the topological dual of $\Sc_s(\R^k)$.
\end{lemma}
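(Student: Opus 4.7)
The plan is to construct explicit mutually inverse maps between $\Sc_s'(\R^k)$ and $(\Sc_s(\R^k))^*$ using the symmetrization operator $\Sym_k$ as a continuous projection. The key observation is that $\Sym_k$ sends $\Sc(\R^k)$ continuously onto $\Sc_s(\R^k)$ (\cref{rem:sym}) and acts as the identity on $\Sc_s(\R^k)$.

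First, I would define the restriction map $R \colon \Sc_s'(\R^k) \to (\Sc_s(\R^k))^*$ by $R(u) \coloneqq u|_{\Sc_s(\R^k)}$. This is well-defined and continuous since $\Sc_s(\R^k)$ carries the subspace topology inherited from $\Sc(\R^k)$. For the inverse, I would define the extension map $E \colon (\Sc_s(\R^k))^* \to \Sc_s'(\R^k)$ by
\begin{equation*}
\ipp{E(\ell),f}_{\Sc'(\R^k) - \Sc(\R^k)} \coloneqq \ell(\Sym_k(f)), \qquad \forall f \in \Sc(\R^k).
\end{equation*}
The continuity of $E(\ell)$ on $\Sc(\R^k)$ follows from the continuity of $\ell$ together with the continuity of $\Sym_k$. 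To see that $E(\ell) \in \Sc_s'(\R^k)$, note that for any $\pi \in \Ss_k$ one has $\Sym_k(f \circ \pi^{-1}) = \Sym_k(f)$, so the bosonic property of \cref{sym_schwartz} holds automatically.

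Next I would verify that $R$ and $E$ are mutually inverse. In one direction, for $\ell \in (\Sc_s(\R^k))^*$ and $g \in \Sc_s(\R^k)$, since $\Sym_k(g) = g$,
\begin{equation*}
R(E(\ell))(g) = \ipp{E(\ell),g}_{\Sc'(\R^k)-\Sc(\R^k)} = \ell(\Sym_k(g)) = \ell(g).
\end{equation*}
In the other direction, for $u \in \Sc_s'(\R^k)$ and $f \in \Sc(\R^k)$, using that $u$ is invariant under all permutations of coordinates,
\begin{equation*}
\ipp{E(R(u)),f}_{\Sc'(\R^k)-\Sc(\R^k)} = u(\Sym_k(f)) = \frac{1}{k!}\sum_{\pi \in \Ss_k} \ipp{u, f \circ \pi}_{\Sc'(\R^k)-\Sc(\R^k)} = \ipp{u,f}_{\Sc'(\R^k)-\Sc(\R^k)},
\end{equation*}
since each term in the sum equals $\ipp{u,f}_{\Sc'(\R^k)-\Sc(\R^k)}$ by the bosonic property.

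There is no serious obstacle here; the result is essentially bookkeeping once one identifies $\Sym_k$ as a continuous projection onto $\Sc_s(\R^k)$. The only point worth flagging is the topological upgrade: if ``$\cong$'' is intended in the bicontinuous sense, then continuity of $R$ is automatic, and continuity of $E$ with respect to the strong dual topologies follows from the continuity of $\Sym_k \colon \Sc(\R^k) \to \Sc_s(\R^k)$, since $E$ is the transpose of $\Sym_k$ restricted to $\Sc_s(\R^k)$. Thus $R$ and $E$ witness a topological isomorphism, which is the desired identification.
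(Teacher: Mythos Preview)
Your proof is correct and is exactly the standard argument one would give; the paper itself states \cref{lem:b_td} without proof, so there is nothing to compare against beyond noting that your use of $\Sym_k$ as a continuous projection (cf.\ \cref{rem:sym}) is precisely the mechanism the surrounding text sets up for this purpose.
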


Given two locally convex spaces $E$ and $F$, we denote the space of continuous linear maps $E\rightarrow F$ by $\L(E,F)$. We topologize $\L(E,F)$ with the topology of bounded convergence.  For our purposes, we will typically have $E,F\in\{\mathcal{S}(\R^{k}), \mathcal{S}_{s}(\R^{k}), \mathcal{S}'(\R^{k}), \mathcal{S}_{s}'(\R^{k})\}$. In the case that $E=\Sc(\R^k)$ and $F=\Sc'(\R^k)$, the bounded topology is generated by the seminorms
\begin{equation}
\|A\|_{\mathfrak{R}} \coloneqq \sup_{f,g\in\mathfrak{R}} |\ipp{Af,g}_{\Sc'(\R^k)-\Sc(\R^k)}|, \qquad \forall A\in \L(\Sc(\R^k), \Sc'(\R^k)),
\end{equation}
where $\mathfrak{R}$ ranges over the bounded subsets of $\Sc(\R^k)$. An identical statement holds with all spaces replaced by their symmetric counterparts.  We topologize $\mathcal{S}'(\R^k)$ with the \emph{strong dual topology}, which is the locally convex topology generated by the seminorms of the form
\begin{equation}
\|f\|_{B} \coloneqq \sup_{\varphi\in B} \left|\ipp{f,\varphi}_{\Sc'(\R^k)-\Sc(\R^k)}\right|,
\end{equation}
where $B$ ranges over the family of all bounded subsets of $\mathcal{S}(\R^k)$. Note that since $\mathcal{S}(\R^k)$ is a Montel space, bounded subsets are precompact. An identical statement holds with all spaces replaced by their symmetric counterparts.

Given two locally convex spaces $E$ and $F$ over a field $\K$, we denote an\footnote{The reader will recall that the algebraic tensor product is only defined up to unique isomorphism.} algebraic tensor product of $E$ and $F$ consisting of finite linear combinations
\begin{equation}
\sum_{j=1}^{n} \lambda _{j} e_{j} \otimes f_{j}, \qquad e_{j}\in E, \enspace f_{j}\in F, \enspace \lambda_j\in\K
\end{equation}
by $E\otimes F$. We note that since the spaces we deal with in this paper are nuclear, the topologies of the injective and projective tensor products coincide. Hence, we can unambiguously write $E\hat{\otimes} F$ to denote the completion of $E\otimes F$ under either of the aforementioned topologies. Given locally convex spaces $E_{j}$ and $F_{j}$ for $j=1,2$ and linear maps $T:E_{1}\rightarrow E_{2}$ and $S:F_{1}\rightarrow F_{2}$, and a tensor product
\begin{equation}
B:E_{1}\times E_{2} \rightarrow E_{1}\otimes E_{2},
\end{equation}
the notation $T\otimes S$ denotes the unique linear map $T\otimes S:E_{1}\otimes F_{1} \rightarrow E_{2}\times F_{2}$ such that
\begin{equation}
(T\otimes S) \circ B = T\times S.
\end{equation}
Note that the existence of such a unique map is guaranteed by the universal property of the tensor product.

When $E$ and $F$ are subspaces of measurable functions on $\R^{m}$ and $\R^{n}$ respectively, and $e \in E$ and $f \in F$, we let $e\otimes f$ denote the realization of the tensor product given by
\begin{equation}
e\otimes f : \R^{m}\times\R^{n} \rightarrow \C, \qquad (e\otimes f)(\ux_{m};\ux_{n}') \coloneqq e(\ux_{m})f(\ux_{n}'), \qquad \forall (\ux_m,\ux_n')\in\R^m\times\R^n.
\end{equation}
which induces a bilinear map $E\times F\rightarrow E\otimes F$.  Similarly, if $E'$ and $F'$ are the duals of spaces of test functions $E$ and $F$ (e.g. $E'=\D'(\R^{m})$ and $F'=\D'(\R^{n})$ ), we let $u\otimes v$ denote the unique distribution satisfying 
\begin{equation}
(u\otimes v)(e\otimes f) = u(e) \cdot v(f).
\end{equation}
Finally, if $\phi:\R^{m}\rightarrow\C$ is a measurable function, we use the notation $\phi^{\otimes k}$, for $k\in\N$, to denote the measurable function $\phi^{\otimes k}: \R^{mk}\rightarrow\C$ defined by
\begin{equation}\label{tensor_def}
\phi^{\otimes k}(\ux_{m,1},\ldots,\ux_{m,k}) \coloneqq \prod_{\ell=1}^{k} \phi(\ux_{m;\ell}),
\end{equation}
and we use the notation $\phi^{\times k}$ to denote the measurable function $\phi^{\times k}:\R^{m}\rightarrow\C^k$
\begin{equation}\label{prod_coords}
\phi^{\times k}(\ux_m) \coloneqq (\phi(\ux_m),\ldots,\phi(\ux_m)).
\end{equation}

For $A^{(k)} \in \L(\Sc_{s}(\R^{k}),\Sc_{s}'(\R^{k}))$ and $\pi \in \Ss_k$, we define
\begin{equation}\label{eq:op_coord}
A_{(\pi(1),\ldots, \pi(k))}^{(k)} \coloneqq \pi \circ A^{(k)} \circ \pi^{-1}.
\end{equation}
In particular, $A_{(1,\ldots,k)}^{(k)}=A^{(k)}$.

\begin{mydef}\label{def:sym_A} Given $A^{(k)} \in \L(\Sc(\R^{k}),\Sc'(\R^{k}))$, we define its \emph{bosonic symmetrization} $\Sym_k(A^{(k)})$ by
\begin{equation}\label{eq:sym_repeat}
\Sym_k(A^{(k)})  \coloneqq \frac{1}{k!} \sum_{\pi\in\Ss_{k}} A_{(\pi(1), \ldots, \pi(k))}^{(k)}.
\end{equation}
For $A=(A^{(k)})_{k\in\N}\in\bigoplus_{k=1}^\infty \L(\Sc(\R^{k}),\Sc'(\R^{k}))$, we define
\begin{equation}
\Sym(A) \coloneqq \paren*{\Sym_k(A^{(k)})}_{k\in\N}.
\end{equation}
\end{mydef}

\begin{mydef}[Bosonic operators]
Let $k\in\N$. We say that an operator $A^{(k)}: \mathcal{S}(\R^k) \rightarrow \mathcal{S}'(\R^k)$ is \emph{bosonic} or \emph{permutation invariant} if $A^{(k)}$ maps $\mathcal{S}_{s}(\R^k)$ into $\mathcal{S}_{s}'(\R^k)$.
\end{mydef}

The analogue of \cref{rem:sym} holds for the symmetrization of operators: bosonically symmetrized operators are indeed maps from bosonic Schwartz functions to bosonic tempered distributions.

\begin{lemma}\label{lem:sym_op_space}
Let $k \in \N$. If $A^{(k)} \in \mathcal{L}(\mathcal{S}(\R^{k}), \mathcal{S}'(\R^{k}))$, then
\begin{equation}
\Sym_k(A^{(k)}) \in \mathcal{L}(\mathcal{S}_{s}(\R^{k}), \mathcal{S}_{s}'(\R^{k})).
\end{equation} 
Furthermore, if $A^{(k)}\in \L_{gmp}(\Sc(\R^{k}),\Sc'(\R^{k}))$, then
\begin{equation}
\Sym_k(A^{(k)}) \in \L_{gmp}(\Sc_{s}(\R^{k}),\Sc_{s}'(\R^{k})).
\end{equation}
\end{lemma}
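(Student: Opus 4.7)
The plan is to address the two claims in \cref{lem:sym_op_space} in turn; both reduce, after unpacking the definitions, to arguments about finite sums of coordinate permutations, which carry no analytic content beyond the fact that permutations are topological automorphisms of the relevant Schwartz spaces.

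For the first claim, starting from
\[
\Sym_k(A^{(k)})(f) = \frac{1}{k!}\sum_{\pi\in\Ss_k} \pi \circ A^{(k)} \circ \pi^{-1}(f), \qquad f\in\Sc_s(\R^k),
\]
I would use $\pi^{-1}f=f$ (by bosonic symmetry of $f$) to reduce the expression to the $\Ss_k$-average of $A^{(k)}(f)$. This average is bosonic: for any $\sigma\in\Ss_k$, reindexing $\pi\mapsto\sigma^{-1}\pi$ reproduces the same sum, so $\sigma\cdot\Sym_k(A^{(k)})(f)=\Sym_k(A^{(k)})(f)$, and \cref{rem:sym} gives $\Sym_k(A^{(k)})(f)\in\Sc_s'(\R^k)$. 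Continuity $\Sc_s(\R^k)\to\Sc_s'(\R^k)$ follows because each $\pi\circ A^{(k)}\circ\pi^{-1}$ is a composition of three continuous maps (permutations act continuously on $\Sc(\R^k)$ and $\Sc'(\R^k)$), and a finite sum of continuous linear maps is continuous.

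For the second claim, the strategy is to show that the good mapping property is closed under (i) conjugation by a permutation $\pi\in\Ss_k$ and (ii) finite linear combinations. Claim (ii) is immediate because the bilinear map appearing in \cref{def:gmp} depends linearly on the operator. For (i), fix $\pi\in\Ss_k$ and $\alpha\in\N_{\leq k}$; by writing out the integral defining the partial contraction of $A^{(k)}_{(\pi(1),\ldots,\pi(k))}=\pi\circ A^{(k)}\circ\pi^{-1}$ at index $\alpha$ and performing the change of variables induced by $\pi$ on the remaining $k-1$ integration variables, one sees that this bilinear map is obtained from the partial contraction of $A^{(k)}$ at index $\pi^{-1}(\alpha)$ simply by precomposing with $f\mapsto f\circ\pi^{-1}$ and $g\mapsto g\circ\pi^{-1}$ and postcomposing with a permutation in the $\Sc(\R^2)$-valued output. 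Since all of these operations are topological automorphisms of the relevant Schwartz spaces, the target $\Sc(\R^2)$ is preserved with continuous dependence, so $A^{(k)}_{(\pi(1),\ldots,\pi(k))}\in\L_{gmp}(\Sc(\R^k),\Sc'(\R^k))$. Combining (i) and (ii) yields $\Sym_k(A^{(k)})\in\L_{gmp}(\Sc(\R^k),\Sc'(\R^k))$, and the first part of the lemma together with the continuous inclusion $\Sc_s(\R^k)\hookrightarrow\Sc(\R^k)$ upgrades this to $\L_{gmp}(\Sc_s(\R^k),\Sc_s'(\R^k))$.

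I do not anticipate a genuine analytic obstruction in either part; the argument is essentially formal once the good mapping property is in hand. The only potential snag is the bookkeeping in step (i), namely tracking how the distinguished index $\alpha$ and the $k-1$ integration variables transform under conjugation by $\pi$, but this reduces cleanly to writing out \cref{def:gmp} for $A^{(k)}_{(\pi(1),\ldots,\pi(k))}$ and matching with the hypothesis on $A^{(k)}$ via the change of variables induced by $\pi$.
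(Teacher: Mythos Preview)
Your proposal is correct. The paper does not supply a proof of this lemma; it is stated immediately after the remark that ``the analogue of \cref{rem:sym} holds for the symmetrization of operators,'' and the details are left to the reader. Your argument fills in exactly those details along the expected lines: bosonic symmetry of the input kills the inner $\pi^{-1}$, the outer average lands in $\Sc_s'(\R^k)$, and the good mapping property is preserved under permutation-conjugation via the obvious change of variables and under finite sums by linearity. There is nothing to compare against, and no gap in what you wrote.
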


The following technical lemma is frequently used implicitly in the sequel. For definitions and discussion of the generalized trace, see \cref{def:gen_trace}.

\begin{lemma}\label{lem:tr_bos}
Let $k\in\N$, and let $\gamma^{(k)}\in\L(\Sc_{s}'(\R^{k}),\Sc_{s}(\R^{k}))$ and $A^{(k)}\in\L(\Sc(\R^{k}),\Sc'(\R^{k}))$. Then for any permutation $\tau\in\Ss_{k}$, we have that
\begin{equation}
\Tr_{1,\ldots,k}\paren*{A_{(\tau(1),\ldots,\tau(k))}^{(k)}\gamma^{(k)}} = \Tr_{1,\ldots,k}\paren*{A^{(k)}\gamma^{(k)}}.
\end{equation}
\end{lemma}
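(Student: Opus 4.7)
The plan is to reduce the claim to a statement about the Schwartz kernels of $\gamma^{(k)}$ and $A^{(k)}$, perform a change of variables on the kernel side, and then invoke the bosonic symmetry of the kernel of $\gamma^{(k)}$. The generalized trace of $A^{(k)}\gamma^{(k)}$ (from \cref{def:gen_trace}) is expressible as a distributional pairing
\begin{equation*}
\Tr_{1,\ldots,k}\paren*{A^{(k)}\gamma^{(k)}} = \ipp*{K_{A^{(k)}}(\ux_k;\ux_k'),\, K_{\gamma^{(k)}}(\ux_k';\ux_k)}_{\Sc'(\R^{2k})-\Sc(\R^{2k})},
\end{equation*}
where the pairing makes sense because $K_{\gamma^{(k)}}\in\Sc(\R^{2k})$ (as $\gamma^{(k)}\in\L(\Sc_s'(\R^k),\Sc_s(\R^k))$ implies its kernel is a Schwartz function of $(\ux_k;\ux_k')$), while $K_{A^{(k)}}\in\Sc'(\R^{2k})$.

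First, I would record the kernel of the conjugated operator. Unwinding \eqref{eq:op_coord}, a direct change of variables (using that permutations have unit Jacobian) shows that for any $\tau\in\Ss_k$, the Schwartz kernel of $\tau\circ A^{(k)}\circ\tau^{-1}$ is
\begin{equation*}
K_{A_{(\tau(1),\ldots,\tau(k))}^{(k)}}(\ux_k;\ux_k') = K_{A^{(k)}}(\tau(\ux_k);\tau(\ux_k')).
\end{equation*}
Substituting into the trace pairing, I would then make the change of variables $\uy_k = \tau(\ux_k)$ and $\uy_k' = \tau(\ux_k')$ inside the distributional pairing, which is justified since the map $(\ux_k,\ux_k')\mapsto(\tau(\ux_k),\tau(\ux_k'))$ is a smooth diffeomorphism of $\R^{2k}$ with unit Jacobian, and produces
\begin{equation*}
\Tr_{1,\ldots,k}\paren*{A_{(\tau(1),\ldots,\tau(k))}^{(k)}\gamma^{(k)}} = \ipp*{K_{A^{(k)}}(\uy_k;\uy_k'),\, K_{\gamma^{(k)}}(\tau^{-1}(\uy_k');\tau^{-1}(\uy_k))}_{\Sc'(\R^{2k})-\Sc(\R^{2k})}.
\end{equation*}

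The final step is to invoke the bosonic symmetry of $\gamma^{(k)}$. Since $\gamma^{(k)}\in\L(\Sc_s'(\R^k),\Sc_s(\R^k))$, for any $u\in\Sc_s'(\R^k)$ one has $\tau^{-1}u = u$, and $\gamma^{(k)}(u)\in\Sc_s(\R^k)$, so $\tau K_{\gamma^{(k)}}(\cdot;\tau^{-1}\cdot) = K_{\gamma^{(k)}}$ after taking adjoints on both sides. More concretely, the Schwartz kernel identification $\L(\Sc_s'(\R^k),\Sc_s(\R^k))\cong\Sc_s(\R^k)\hat\otimes\Sc_s(\R^k)$ tells us the kernel is bosonic in each of the two groups of variables separately, and in particular
\begin{equation*}
K_{\gamma^{(k)}}(\tau^{-1}(\uy_k');\tau^{-1}(\uy_k)) = K_{\gamma^{(k)}}(\uy_k';\uy_k).
\end{equation*}
Substituting this identity recovers $\Tr_{1,\ldots,k}(A^{(k)}\gamma^{(k)})$, completing the proof.

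The only nontrivial point is making sure the distributional pairing is insensitive to the diffeomorphic change of variables and that the symmetry statement for $K_{\gamma^{(k)}}$ is literally the one needed; both are standard but deserve a sentence of justification. An alternative, essentially equivalent route is to note that $\tau^{-1}\circ\gamma^{(k)}\circ\tau = \gamma^{(k)}$ as maps $\Sc_s'(\R^k)\to\Sc_s(\R^k)$ (this is just the bosonicity statement), and then write
\begin{equation*}
\Tr\paren*{\tau A^{(k)}\tau^{-1}\gamma^{(k)}} = \Tr\paren*{\tau A^{(k)}\tau^{-1}(\tau\gamma^{(k)}\tau^{-1})} = \Tr\paren*{\tau A^{(k)}\gamma^{(k)}\tau^{-1}} = \Tr\paren*{A^{(k)}\gamma^{(k)}},
\end{equation*}
using cyclicity of the generalized trace in the last step; this is really the same computation packaged operator-theoretically rather than kernel-theoretically.
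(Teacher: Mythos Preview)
The paper states this lemma without proof (it is introduced as a ``technical lemma\ldots frequently used implicitly in the sequel'' and no argument is given), so there is no paper proof to compare against. Your argument is correct: passing to Schwartz kernels, computing the kernel of $\tau\circ A^{(k)}\circ\tau^{-1}$, changing variables, and then invoking the separate bosonic symmetry of $K_{\gamma^{(k)}}$ in each group of $k$ variables (which follows from the identification $\L(\Sc_s'(\R^k),\Sc_s(\R^k))\cong \Sc_s(\R^k)\hat\otimes\Sc_s(\R^k)$) is exactly the natural way to prove this. The operator-theoretic repackaging you sketch at the end via cyclicity (\cref{prop:gtr_prop}\ref{item:gtr_cyc}) is also fine, though one should note that it requires interpreting $\gamma^{(k)}$ as an element of $\L(\Sc'(\R^k),\Sc(\R^k))$ via the inclusion $\Sc_s(\R^k)\hat\otimes\Sc_s(\R^k)\hookrightarrow\Sc(\R^{2k})$ so that the compositions $\gamma^{(k)}\tau$ and $\tau\gamma^{(k)}$ make literal sense; once that is said, $\tau$ acts trivially on both sides and the cyclicity argument goes through.
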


\subsection{Weak Poisson structures and Hamiltonian systems}
\label{ssec:pre_WP}
We recall the definition of a \emph{weak Poisson structure} due to Neeb et al. \cite{NST2014} generalized to allow for complex-valued functionals. The presentation closely follows that of Section 4.1 in our companion paper \cite{MNPRS1_2019}.

\begin{mydef}[Weak Poisson manifold]\label{def:WP}
A \emph{weak Poisson structure} on $M$ is a unital sub-algebra $\mathcal{A}\subset C^{\infty}(M;\C)$ and a bilinear map $\pb{\cdot}{\cdot}:\mathcal{A}\times\mathcal{A}\rightarrow\mathcal{A}$ satisfying the following properties:
\begin{enumerate}[(P1)]
\item\label{item:wp_P1}
The bilinear map $\pb{\cdot}{\cdot}$, is a Lie bracket\footnote{See \cref{def:la} for details.} and satisfies the Leibnitz rule
\begin{equation}
\pb{F}{GH} = \pb{F}{G}H+G\pb{F}{H}, \qquad \forall F,G,H\in\mathcal{A}.
\end{equation}
We call $\pb{\cdot}{\cdot}$ a \emph{Poisson bracket}.
\item\label{item:wp_P2}
For all $m\in M$ and $v\in T_{m}M$ satisfying $dF[m](v)=0$ for all $F\in\A$, we have that $v=0$.
\item\label{item:wp_P3}
For every $H\in\A$, there exists a smooth vector field $X_{H}$ on $M$ satisfying
\begin{equation}
X_{H}F=\pb{F}{H}, \qquad \forall F\in\A,
\end{equation} 
where in the left-hand side of the identity, we regard $X_H$ as a derivation. We call $X_{H}$ the \emph{Hamiltonian vector field} associated to $H$.
\end{enumerate}
If properties \ref{item:wp_P1} - \ref{item:wp_P3} are satisfied, then we call the triple $(M,\A,\pb{\cdot}{\cdot})$ a \emph{weak Poisson manifold}.
\end{mydef}

We now record some observations from \cite{NST2014} about the definition of a weak Poisson structure.

\begin{remark}\label{rem:hvf_u}
\ref{item:wp_P2} implies that the Hamiltonian vector field $X_{H}$ associated to some $H\in\A$ is uniquely determined by the relation
\begin{equation}
\pb{F}{H}(m) = (X_{H}F)(m)=dF[m](X_{H}(m)), \qquad \forall F\in\A.
\end{equation}
Indeed, if $X_{H,1}$ and $X_{H,2}$ are two smooth vector fields satisfying the preceding relation, then the smooth vector $\widetilde{X}_{H}\coloneqq X_{H,1}-X_{H,2}$ satisfies
\begin{equation}
dF[m](\widetilde{X}_{H}(m))=0, \qquad \forall F\in\A,
\end{equation}
for all $m\in M$, which by \ref{item:wp_P2} implies that $\widetilde{X}_{H}\equiv 0$.
\end{remark}

\begin{remark}
For all $F,G,H\in\A$, we have that
\begin{align}
\comm{X_{F}}{X_{G}}H &= \pb{\pb{H}{G}}{F}-\pb{\pb{H}{F}}{G} \nonumber\\
&= \pb{H}{\pb{G}{F}} \nonumber\\
&= X_{\pb{G}{F}}H.
\end{align}
Hence, by \cref{rem:hvf_u}, $\comm{X_{F}}{X_{G}}=X_{\pb{G}{F}}$ for $F,G\in\A$. Additionally, the Leibnitz rule for $\pb{\cdot}{\cdot}$ implies the identity
\begin{equation}\label{eq:vf_L}
X_{FG}=FX_{G}+GX_{F}, \qquad \forall F,G\in\A.
\end{equation}
\end{remark}

There is also a notion of a weak symplectic manifold, which we have generalized to allow for complex-valued symplectic forms.

\begin{mydef}[Weak symplectic manifold]\label{weak_sym}
Let $M$ be a smooth locally convex manifold, and let $\mathcal{X}(M)$ denote the space of smooth vector fields on $M$. A \emph{weak symplectic manifold} is a pair $(M,\omega)$ consisting of a smooth manifold $M$ and a closed non-degenerate 2-form $\omega: TM\times TM\rightarrow\C$.
\end{mydef}

Given a weak symplectic manifold, we denote the Lie algebra of Hamiltonian vector fields on $M$ by
\begin{equation}
\mathrm{ham}(M,\omega) \coloneqq \{X\in\mathcal{X}(M) : \exists H\in C^{\infty}(M;\C) \enspace \text{s.t.} \enspace \omega(X, \cdot) = dH\}.
\end{equation}

With this definition  in hand, we see that weak symplectic manifolds canonically lead to weak Poisson manifolds. 
\begin{remark}[Weak symplectic $\Rightarrow$ weak Poisson]
\label{rem:sym_poi}
Let $(M,\omega)$ be a weak symplectic manifold. Let
\begin{equation}
\label{eq:sym_alg}
\A \coloneqq \{H\in C^{\infty}(M;\C) : \exists X_{H}\in\mathcal{X}(M) \enspace \text{s.t.} \enspace \omega( X_{H}, \cdot) =dH\},
\end{equation}
and let
\begin{equation}
\label{eq:ws_pb}
\pb{\cdot}{\cdot}:\A\times\A\rightarrow\A, \qquad \pb{F}{G} \coloneqq \omega(X_{F},X_{G}) = dF[X_{G}]=X_{G}F.
\end{equation}
The formula \eqref{eq:ws_pb} defines a Poisson bracket satisfying properties \ref{item:wp_P1} and \ref{item:wp_P3}. If we additionally have that for each $v\in T_{m}M$, 
\begin{equation}
\paren*{\omega(X(m),v)=0, \enspace \forall X\in\mathrm{ham}(M,\omega)} \Longrightarrow v=0,
\end{equation}
then property \ref{item:wp_P2} is also satisfied. Consequently, the triple $(M,\A,\pb{\cdot}{\cdot})$ is a weak Poisson manifold.
\end{remark}

We now turn to mappings between weak Poisson manifolds which preserve the Poisson structures. This leads to the notion of a Poisson map, alternatively Poisson morphism.

\begin{mydef}[Poisson map]\label{def:po_map}
Let $(M_{j},\A_{j},\pb{\cdot}{\cdot}_{j})$, for $j=1,2$, be weak Poisson manifolds. We say that a smooth map $\varphi:M_{1}\rightarrow M_{2}$ is a \emph{Poisson map}, or \emph{morphism of Poisson manifolds}, if $\varphi^{*}\A_{2}\subset \A_{1}$ and
\begin{equation}
\varphi^{*}\pb{F}{G}_{2}=\pb{\varphi^{*}F}{\varphi^{*}G}_{1}, \qquad \forall F,G\in\A_{2},
\end{equation}
where $\varphi^*$ denotes the pullback of $\varphi$.
\end{mydef}

\begin{remark}
In \cite{NST2014}, the authors define a Poisson morphism 
\[
\varphi: (M_{1},\A_{1},\pb{\cdot}{\cdot}_{1}) \rightarrow (M_{2},\A_{2},\pb{\cdot}{\cdot}_{2})
\]
with the requirement that $\varphi^{*}\A_{2}=\A_{1}$. We relax this requirement in our \cref{def:po_map}.
\end{remark}

We need several examples of weak Poisson/symplectic manifolds in this work. An example we discussed at length in \cite{MNPRS1_2019} is the Schwartz space $\Sc(\R^{k})$, as well as its bosonic counterpart $\Sc_{s}(\R^{k})$. We collect the main conclusions and refer the reader to \cite{MNPRS1_2019} for proofs.

We equip the space $\Sc(\R^{k})$ with a real pre-Hilbert inner product by defining
\begin{equation}
\ip{f}{g}_{\Re} \coloneqq 2\Re{ \int_{\R^{k}}d\ux_{k}\ol{f(\ux_{k})} g(\ux_{k})}, \qquad \forall f,g\in\Sc(\R^k).
\end{equation}
The operator $J: \Sc(\R^{k})\rightarrow\Sc(\R^{k})$ defined by $J(f) \coloneqq if$ defines an almost complex structure on $(\Sc(\R^{k}),\ip{\cdot}{\cdot}_{\Re})$, leading to the \emph{standard $L^{2}$ symplectic form}
\begin{equation}\label{l2_symp}
\omega_{L^{2}}(f,g) \coloneqq \ip{Jf}{g}_{\Re} = 2\Im{\int_{\R^{k}}d\ux_{k}\ol{f(\ux_{k})}g(\ux_{k})}, \qquad \forall f,g\in\Sc(\R^{k}).
\end{equation}

With these definitions in hand, we record the following well-known fact. 

\begin{prop}\label{prop:Sch_wsym}
$(\Sc(\R^{k}),\omega_{L^2})$ is a weak symplectic manifold.
\end{prop}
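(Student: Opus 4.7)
The plan is to verify the four axioms of \cref{weak_sym} for the pair $(\Sc(\R^k),\omega_{L^2})$: that $\Sc(\R^k)$ is a smooth locally convex manifold, that $\omega_{L^2}$ is a smooth 2-form, that it is closed, and that it is (weakly) non-degenerate. Since $\Sc(\R^k)$ is a nuclear Fr\'echet space, it is trivially a smooth locally convex manifold modelled on itself, with the canonical identification $T_f\Sc(\R^k)\cong \Sc(\R^k)$ at every point $f$. This lets us view $\omega_{L^2}$ as a single continuous bilinear form $\Sc(\R^k)\times\Sc(\R^k)\to\C$ which is then transported as a constant section of $\Lambda^2 T^*\Sc(\R^k)$.

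First I would check the algebraic properties. Continuity and bilinearity of $\omega_{L^2}$ are immediate from the definition, since $\Im \int_{\R^k}\ol{f}g\,d\ux_k$ is a continuous bilinear form on the Schwartz space (dominated by Schwartz seminorms via, e.g., the estimate $|\int \ol{f}g| \leq \|f\|_{L^2}\|g\|_{L^2}$). For antisymmetry, I would use the identity $\ol{\int \ol{f}g\,d\ux_k} = \int\ol{g}f\,d\ux_k$, from which
\begin{equation*}
\omega_{L^2}(g,f)=2\Im\int\ol{g}f\,d\ux_k = -2\Im\int\ol{f}g\,d\ux_k=-\omega_{L^2}(f,g).
\end{equation*}

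Next, for smoothness and closedness, I would observe that $\omega_{L^2}$ is base-point independent. In the G\^ateaux calculus of \cref{ssec:pre_calc}, a constant-coefficient bilinear map on a locally convex vector space defines a smooth 2-form whose exterior derivative vanishes identically. In concrete terms, for any $f,g,h,p\in\Sc(\R^k)$, the first G\^ateaux derivative of the function $m\mapsto \omega_{L^2}(g(m),h(m))$ in the direction $p$, when the arguments are constant in $m$, is zero; applying the standard Cartan formula for the exterior derivative of a 2-form on a vector space therefore yields $d\omega_{L^2}\equiv 0$.

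The main (and really only substantive) point is weak non-degeneracy. Given $f\in\Sc(\R^k)$ with $\omega_{L^2}(f,g)=0$ for every $g\in\Sc(\R^k)$, I would test against $g\coloneqq if\in\Sc(\R^k)$ to get
\begin{equation*}
0=\omega_{L^2}(f,if)=2\Im\int_{\R^k}\ol{f(\ux_k)}\cdot i f(\ux_k)\,d\ux_k = 2\int_{\R^k}|f(\ux_k)|^2\,d\ux_k,
\end{equation*}
so $f\equiv 0$. This shows the map $f\mapsto \omega_{L^2}(f,\cdot)$ from $\Sc(\R^k)$ to its dual is injective, which is precisely the weak non-degeneracy condition in \cref{weak_sym}. (Note that strong non-degeneracy, i.e.\ surjectivity onto $\Sc'(\R^k)$, fails here since $\Sc(\R^k)\subsetneq\Sc'(\R^k)$, which is exactly why this is merely a \emph{weak} symplectic structure.) Having verified all four conditions, the conclusion of the proposition follows.
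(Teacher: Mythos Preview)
Your proof is correct. The paper does not actually supply a proof of this proposition; it merely states it as a well-known fact and moves on. Your verification of the axioms---antisymmetry, closedness via constancy of the form, and weak non-degeneracy by testing against $g=if$---is the standard argument and fills in exactly what the paper omits.
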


Now given a functional $F\in C^{\infty}(\Sc(\R^{k});\R)$, the G\^ateaux derivative of $F$ at the point $f\in\Sc(\R^{k})$ , denoted by $dF[f]$, defines an element of $\Sc'(\R^k)$. We consider the case when $dF[f]$ can be identified with a Schwartz function via the inner product $\ip{\cdot}{\cdot}_{\Re}$. 

\begin{lemma}[Uniqueness]\label{lem:grad_unq}
Let $F\in C^{\infty}(\Sc(\R^{k});\R)$ and $f\in\Sc(\R^{k})$. Suppose that there exist $g_{1},g_{2}\in\Sc(\R^{k})$ such that
\begin{equation}
\ip{g_{1}}{\delta f}_{\Re} = dF[f](\delta f) = \ip{g_{2}}{\delta f}_{\Re}, \qquad \forall \delta f\in\Sc(\R^{k}).
\end{equation}
Then $g_{1}=g_{2}$.
\end{lemma}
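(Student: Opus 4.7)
The plan is to exploit the nondegeneracy of the real inner product $\ip{\cdot}{\cdot}_{\Re}$ on $\Sc(\R^k)$, which follows directly from the positive-definiteness of the standard $L^2$ inner product restricted to Schwartz functions. The statement is essentially a one-line consequence of testing against a suitable direction.

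First I would subtract the two identities for $dF[f](\delta f)$ and use the bilinearity of $\ip{\cdot}{\cdot}_{\Re}$ to reduce the claim to the following nondegeneracy statement: if $h \coloneqq g_1 - g_2 \in \Sc(\R^k)$ satisfies
\begin{equation*}
\ip{h}{\delta f}_{\Re} = 0, \qquad \forall \delta f \in \Sc(\R^k),
\end{equation*}
then $h \equiv 0$. To prove this, I would specialize to $\delta f = h$, which is a legitimate test element since $h \in \Sc(\R^k)$, and invoke the definition of $\ip{\cdot}{\cdot}_{\Re}$ to obtain
\begin{equation*}
0 = \ip{h}{h}_{\Re} = 2\Re \int_{\R^k} d\ux_k\, \ol{h(\ux_k)} h(\ux_k) = 2\|h\|_{L^2(\R^k)}^2.
\end{equation*}
Hence $h = 0$ almost everywhere, and since $h$ is Schwartz, in particular continuous, this forces $h \equiv 0$, i.e. $g_1 = g_2$.

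There is no substantial obstacle here; the only subtlety worth noting is that $h$ itself lies in $\Sc(\R^k)$, so that evaluating the hypothesis at $\delta f = h$ is permitted. In other words, the lemma is simply the assertion that the real-bilinear pairing $\ip{\cdot}{\cdot}_{\Re}$ separates points of $\Sc(\R^k)$, which in turn reduces to the fact that $\Sc(\R^k) \hookrightarrow L^2(\R^k)$ continuously and the $L^2$ inner product is positive-definite on continuous functions.
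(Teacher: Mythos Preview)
Your proof is correct. The paper states this lemma without proof, treating it as an immediate consequence of the nondegeneracy of the real inner product $\ip{\cdot}{\cdot}_{\Re}$ on $\Sc(\R^k)$; your argument via testing against $\delta f = g_1 - g_2$ is exactly the standard way to see this.
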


Letting $\A$ be the algebra defined in \eqref{eq:sym_alg} and $F\in\A$, we see from \cref{rem:hvf_u} and \cref{rem:sym_poi} that $X_F(f)$ is the unique element, hereafter denoted by $\grad_s F(f)$, satisfying
\begin{equation*}
dF[f](\delta f) = \omega_{L^2}(\grad_s F(f), \delta f), \qquad \forall \delta f\in \Sc(\R^k).
\end{equation*}
Consequently, we can define the real and symplectic gradients of functionals.

\begin{mydef}[Real/Symplectic $L^{2}$ gradient]\label{def:re_grad}
We define the \emph{real and symplectic $L^{2}$ gradient} of $F\in C^{\infty}(\Sc(\R^{k});\R)$ at the point $f\in\Sc(\R^{k})$, denoted by $\grad F(f)$ and $\grad_s F(f)$, respectively, to be the unique elements of $\Sc(\R^{k})$ (if they exist) such that
\begin{equation}
dF[f](\delta f) = \ip{\grad F(f)}{\delta f}_{\Re} = \omega_{L^2}(\grad_s F(f),\delta f), \qquad \forall \delta f\in\Sc(\R^{k}).
\end{equation}
We say that $F$ has a real, or respectively symplectic, $L^{2}$ gradient if $\grad F:\Sc(\R^{k}) \rightarrow\Sc(\R^{k})$, respectively $\grad_s F: \Sc(\R^k)\rightarrow \Sc(\R^k)$, is a smooth map.
\end{mydef}

\begin{remark}\label{schwartz_deriv}
Recalling that
\begin{equation*}
\omega_{L^2}(f,g) = \ip{Jf}{g}_{\Re},
\end{equation*}
we see that $\grad_s F(f) = -i\grad F(f)$.
\end{remark}

\cref{rem:sym_poi} implies that the symplectic form $\omega_{L^2}$ canonically induces a Poisson structure on $\Sc(\R^{k})$, a fact we record in the next proposition.

\begin{prop}
\label{schwartz_wpoiss}
Define a subset $A_{\Sc}\subset C^{\infty}(\Sc(\R^{k});\R)$ by
\begin{equation}\label{equ:Asc}
\begin{split}
\A_{\Sc} &\coloneqq \{H\in C^\infty(\Sc(\R^k);\R) :  \grad_s H\in C^\infty(\Sc(\R^k);\Sc(\R^k))\}.
\end{split}
\end{equation}
and define a bracket $\pb{\cdot}{\cdot}_{L^{2}}$ by
\begin{equation}\label{l2_bracket}
\pb{F}{G}_{L^{2}} \coloneqq \omega_{L^{2}}(\grad_{s}F,\grad_{s}G), \qquad \forall F,G\in\A_{\Sc}.
\end{equation}
Then $(\Sc(\R^{k}),\A_{\Sc}, \pb{\cdot}{\cdot}_{L^{2}})$ is a weak Poisson manifold.
\end{prop}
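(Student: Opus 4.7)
The strategy is to leverage \cref{prop:Sch_wsym} together with the mechanism of \cref{rem:sym_poi}, which asserts that the weak symplectic form $\omega_{L^{2}}$ canonically induces a Poisson bracket on the subalgebra of smooth $\R$-valued functionals that admit a symplectic gradient in the sense of \cref{def:re_grad}. The task reduces to identifying $\A_{\Sc}$ with such an algebra and verifying the three axioms \ref{item:wp_P1}--\ref{item:wp_P3} of \cref{def:WP}.

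First I would verify that $\A_{\Sc}$ is a unital subalgebra of $C^{\infty}(\Sc(\R^{k});\R)$ closed under $\pb{\cdot}{\cdot}_{L^{2}}$. The constant functional $1$ lies in $\A_{\Sc}$ with $\grad_{s}1 \equiv 0$, and closure under sums is immediate from linearity of the G\^ateaux derivative. For closure under pointwise products, the Leibnitz rule gives $\grad_{s}(FG)(f) = F(f)\grad_{s}G(f) + G(f)\grad_{s}F(f)$, which is smooth whenever $\grad_{s}F$ and $\grad_{s}G$ are. For closure under the bracket, one establishes the standard symplectic identity $\grad_{s}\pb{F}{G}_{L^{2}} = [\grad_{s}F,\grad_{s}G]$, where the right side is the Lie bracket of the smooth vector fields $\grad_{s}F,\grad_{s}G:\Sc(\R^{k})\to\Sc(\R^{k})$; smoothness then follows from the chain rule for G\^ateaux derivatives on locally convex spaces.

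For axiom \ref{item:wp_P1}, bilinearity and antisymmetry of $\pb{\cdot}{\cdot}_{L^{2}}$ are inherited directly from the corresponding properties of $\omega_{L^{2}}$. The Leibnitz rule follows from the product rule for G\^ateaux derivatives applied to the representation $\pb{F}{G}_{L^{2}}(f) = dF[f](\grad_{s}G(f))$. The Jacobi identity is the classical consequence of the closedness of the symplectic form, which is automatic here because $\omega_{L^{2}}$ is constant on the linear manifold $\Sc(\R^{k})$. Axiom \ref{item:wp_P3} is satisfied by taking $X_{H} \coloneqq \grad_{s}H$, which is a smooth vector field by the very definition of $\A_{\Sc}$, and the identity $X_{H}F(f) = dF[f](\grad_{s}H(f)) = \omega_{L^{2}}(\grad_{s}F(f),\grad_{s}H(f)) = \pb{F}{H}_{L^{2}}(f)$ is immediate from \cref{def:re_grad}.

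The main obstacle is the non-degeneracy axiom \ref{item:wp_P2}, which demands enough functionals in $\A_{\Sc}$ to separate tangent vectors. My approach is to introduce, for each $\phi\in\Sc(\R^{k})$, the real-linear functional $F_{\phi}(f) \coloneqq \ip{\phi}{f}_{\Re}$. A short computation using the identity $\omega_{L^{2}}(\cdot,\cdot) = \ip{J\cdot}{\cdot}_{\Re}$ with $J\phi = i\phi$ shows that $\grad_{s}F_{\phi}(f) \equiv -i\phi$, a constant (hence smooth) map into $\Sc(\R^{k})$, so $F_{\phi}\in\A_{\Sc}$ for every $\phi$. Now if $v\in T_{f}\Sc(\R^{k})\cong\Sc(\R^{k})$ satisfies $dG[f](v)=0$ for every $G\in\A_{\Sc}$, then in particular $dF_{\phi}[f](v)=\ip{\phi}{v}_{\Re}=0$ for every $\phi\in\Sc(\R^{k})$, and non-degeneracy of $\ip{\cdot}{\cdot}_{\Re}$ (essentially \cref{lem:grad_unq}) forces $v=0$, completing the verification.
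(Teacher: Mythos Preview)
Your proposal is correct and follows the same approach the paper indicates: the paper does not give an explicit proof here but simply records the proposition as the consequence of \cref{rem:sym_poi} applied to the weak symplectic manifold of \cref{prop:Sch_wsym}, deferring details to the companion work. Your argument fills in precisely those details---verifying the algebra axioms, establishing \ref{item:wp_P1} and \ref{item:wp_P3} via the symplectic form, and checking \ref{item:wp_P2} by producing the coordinate functionals $F_{\phi}$---so there is no substantive divergence.
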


\begin{remark}[Variational derivatives]
\label{rem:pb_vd}
For functionals $F,G\in C^\infty(\Sc(\R^k);\R)$ having a special form discussed below, there is a computationally more convenient way to express their symplectic gradients and Poisson bracket in terms of \emph{variational derivatives}. Given a smooth functional $\tl{F}:\Sc(\R^k)^2\rightarrow\C$, we define the variational derivatives $\grad_1 \tl{F}$ and $\grad_{\bar{2}}\tl{F}$ by the property\footnote{Our notation for variational derivatives is nonstandard. In the calculus of variations literature, one typically finds $\frac{\d f}{\d\phi_1}$ and $\frac{\d f}{\d\ol{\phi_2}}$ instead of $\grad_1 f(\phi_1,\ol{\phi_2})$ and $\grad_{\bar{2}}(\phi_1,\ol{\phi_2})$, respectively. We prefer our notation as it emphasizes the nature of the variational derivatives as vector fields. The motivations for the seemingly odd use of the subscript $\bar{2}$, as opposed to just $2$, will become clear later in this subsection.}
\begin{equation}
\label{eq:vd_prop}
d\tl{F}[\phi_1,\ol{\phi_2}](\d\phi_1,\d\ol{\phi_2}) =\int_{\R^k}d\ux_k \paren*{\grad_1 \tl{F}(\phi_1,\ol{\phi_2})\d\phi_1 + \grad_{\bar{2}}\tl{F}(\phi_1,\ol{\phi_2})\d\ol{\phi_2}}(\ux_k), \ \forall(\phi_1,\ol{\phi_2}),(\d\phi_1,\d\ol{\phi_2})\in\Sc(\R^k)^2.
\end{equation}
The reader can verify that the variational derivatives, if they exist, are unique.

Let $F,G\in C^\infty(\Sc(\R^k);\R)$. Suppose that
\begin{equation}
F(\phi) = \tl{F}(\phi,\ol{\phi}), \qquad \tl{F}\in C^\infty(\Sc(\R^k)^2;\C),
\end{equation}
where $\tl{F}$ satisfies the conditions
\begin{equation}
\label{eq:tlF_cons}
\ol{\tl{F}(\phi_1,\ol{\phi_2})} = \tl{F}(\phi_2,\ol{\phi_1}), \qquad \grad_1 \tl{F}, \ \grad_{\bar{2}}\tl{F} \in C^\infty(\Sc(\R^k)^2; \Sc(\R^k)),
\end{equation}
and similarly for $G$ and $\tl{G}$. Then we claim that $F,G\in A_{\Sc}$ and their Poisson bracket $\pb{F}{G}_{L^2}$ may be rewritten as
\begin{equation}
\label{eq:L2_pb_vd}
\pb{F}{G}_{L^2}(\phi) = -i\int_{\R}dx\paren*{\grad_1 \tl{F}(\phi,\ol{\phi})\grad_{\bar{2}}\tl{G}(\phi,\ol{\phi}) - \grad_{\bar{2}}\tl{F}(\phi,\ol{\phi})\grad_1 \tl{G}(\phi,\ol{\phi})}(x).
\end{equation}
Indeed, observe that
\begin{align}
d\tl{F}[\phi_1,\ol{\phi_2}](\delta\phi_1,\delta\ol{\phi_2}) &=\lim_{\varepsilon\rightarrow 0} \frac{\tl{F}(\phi_1+\varepsilon\delta\phi_1,\ol{\phi_2}+\varepsilon\delta\ol{\phi_2}) - \tl{F}(\phi_1,\ol{\phi_2})}{\varepsilon} \nonumber\\
&=\lim_{\varepsilon\rightarrow 0} \frac{\ol{\tl{F}(\phi_2+\varepsilon\ol{\delta\ol{\phi_2}},\ol{\phi_1}+\varepsilon\ol{\delta\phi_1})-\tl{F}(\phi_2,\ol{\phi_1})}}{\varepsilon} \nonumber\\
&=\ol{d\tl{F}[\phi_2,\ol{\phi_1}](\ol{\delta\ol{\phi_2}},\ol{\delta\phi_1})} \nonumber\\
&=\int_{\R^k}d\ux_k \paren*{\ol{\grad_1 \tl{F}(\phi_2,\ol{\phi_1})}\delta\ol{\phi_2} + \ol{\grad_{\bar{2}}\tl{F}(\phi_2,\ol{\phi_1})}\delta\phi_1}(\ux_k),
\end{align}
where the ultimate equality follows by definition of the variational derivatives. Since
\begin{equation}
d\tl{F}[\phi_1,\ol{\phi_2}](\delta\phi_1,\delta\ol{\phi_2}) = \int_{\R^k}d\ux_k\paren*{\grad_1 \tl{F}(\phi_1,\ol{\phi_2})\delta\phi_1+\grad_{\bar{2}}\tl{F}(\phi_1,\ol{\phi_2})\delta\ol{\phi_2}}(\ux_k),
\end{equation}
we conclude by uniqueness of variational derivatives that
\begin{equation}
\label{eq:vd_invol}
\grad_1 \tl{F}(\phi_1,\ol{\phi_2}) = \ol{\grad_{\bar{2}}\tl{F}(\phi_2,\ol{\phi_1})}, \qquad \grad_{\bar{2}}\tl{F}(\phi_1,\ol{\phi_2}) = \ol{\grad_1 \tl{F}(\phi_2,\ol{\phi_1})}.
\end{equation}
Now recalling the definition of the symplectic gradient, we have that
\begin{align}
\omega_{L^2}(\grad_s F(\phi),\psi) &= dF[\phi](\psi) \nonumber\\
&=d\tl{F}[\phi,\ol{\phi}](\psi, \ol{\psi}) \nonumber\\
&=\int_{\R^k}d\ux_k \paren*{\grad_1 \tl{F}(\phi,\ol{\phi})\psi + \grad_{\bar{2}}\tl{F}(\phi,\ol{\phi})\ol{\psi}}(\ux_k) \nonumber\\
&=2\Re {\int_{\R^k}d\ux_k \grad_1 \tl{F}(\phi,\ol{\phi})(\ux_k) \psi(\ux_k)},
\end{align}
where the ultimate equality follows from the relations \eqref{eq:vd_invol}. By uniqueness of the symplectic gradient, we conclude that
\begin{equation}
\label{eq:sgrad_F}
\grad_s F(\phi) = -i\ol{\grad_1 \tl{F}(\phi,\ol{\phi})} = -i\grad_{\bar{2}} \tl{F}(\phi,\ol{\phi}) = \frac{1}{2}\paren*{\ol{i\grad_1 \tl{F}(\phi,\ol{\phi})} - i\grad_{\bar{2}}\tl{F}(\phi,\ol{\phi})}.
\end{equation}
Since the right-hand side of the preceding identity defines an element of $C^\infty(\Sc(\R^k);\Sc(\R^k))$, we obtain that $F\in\A_{\Sc}$. Now we can rewrite the Poisson bracket as
\begin{align}
\omega_{L^2}(\grad_s F(\phi),\grad_s G(\phi)) &= 2\Im{\int_{\R^k}d\ux_k \paren*{i\grad_1 \tl{F}(\phi,\ol{\phi}) \ol{i\grad_1 \tl{G}(\phi,\ol{\phi})}}(\ux_k)} \nonumber\\
&=-i\int_{\R^k}d\ux_k \paren*{\grad_1 \tl{F}(\phi,\ol{\phi})\ol{\grad_1 \tl{G}(\phi,\ol{\phi})} - \ol{\grad_1 \tl{F}(\phi,\ol{\phi})}\grad_1 \tl{G}(\phi,\ol{\phi})}(\ux_k) \nonumber\\
&=-i\int_{\R^k}d\ux_k \paren*{\grad_1 \tl{F}(\phi,\ol{\phi})\grad_{\bar{2}} \tl{G}(\phi,\ol{\phi}) - \grad_{\bar{2}} \tl{F}(\phi,\ol{\phi})\grad_1 \tl{G}(\phi,\ol{\phi})}(\ux_k),
\end{align}
where the ultimate equality follows from the relations \eqref{eq:vd_invol}.

In the sequel, all of the functionals we consider will satisfy the requirements \eqref{eq:tlF_cons}. Consequently, we will pass between the variational derivative formulation \eqref{eq:L2_pb_vd} and the symplectic gradient formulation of the Poisson bracket without comment.
\end{remark}

To motivate our next extension of the weak Poisson manifold $(\Sc(\R^k),\A_{\Sc},\pb{\cdot}{\cdot}_{L^2})$, we observe that we can identify a one-particle wave function $\phi$ with the pure state
\begin{equation*}
\ket*{\phi}\bra*{\phi}.
\end{equation*}
We can define a real topological vector space of pure states by considering the space of Schwartz functions taking values in the space of self-adjoint, off-diagonal $2\times 2$ complex matrices:
\begin{equation}
\begin{pmatrix} 0 & \phi \\ \ol{\phi} & 0\end{pmatrix}.
\end{equation}
The natural generalization of a pure state is a mixed state,
\begin{equation*}
\frac{1}{2}\paren*{\ket*{\phi_1}\bra*{\phi_2}+\ket*{\phi_2}\bra*{\phi_1}},
\end{equation*}
and we can define a real topological vector space of mixed states as follows: let $\mathcal{V}$ denote the real vector space of self-adjoint, off-diagonal $4\times 4$ matrices of the form 
\begin{equation} \label{equ:mixed}
\frac{1}{2}\mathrm{odiag}(a, \ol{b}, b, \ol{a}), \qquad a,b\in\C.
\end{equation}
We let $\Sc(\R^k;\mathcal{V})$ denote the space of Schwartz functions taking values in the space $\mathcal{V}$. Elements of $\Sc(\R^k;\mathcal{V})$ have the form
\begin{equation}
\gamma(\ux_k) = \frac{1}{2}\mathrm{odiag}(\phi_1(\ux_k),\ol{\phi_2}(\ux_k),\phi_2(\ux_k),\ol{\phi_1}(\ux_k)), \qquad \forall \ux_k\in\R^k, \enspace \phi_1,\phi_2\in\Sc(\R^k).
\end{equation}

We can define a real pre-Hilbert inner product on $\Sc(\R^k;\mathcal{V})$ by
\begin{equation}
\ip{\gamma_1}{\gamma_2}_{\Re,\mathcal{V}} \coloneqq 2\int_{\R^k}d\ux_k \tr_{\C^2\otimes \C^2}(\gamma_1(\ux_k)\gamma_{2,swap}(\ux_k)), \qquad \forall\gamma_1,\gamma_2\in\Sc(\R^k;\V),
\end{equation}
where $\tr_{\C^2\otimes\C^2}$ denotes the $4\times 4$ matrix trace and
\begin{equation}
\gamma_{2,swap} = \frac{1}{2}\mathrm{odiag}(\phi_2,\ol{\phi_1},\phi_1,\ol{\phi_2}), \qquad \gamma_2 = \frac{1}{2}\mathrm{odiag}(\phi_1,\ol{\phi_2},\phi_2,\ol{\phi_1}).
\end{equation}
The matrix left-multiplication operator
\begin{equation}
J:\Sc(\R^k;\mathcal{V}) \rightarrow \Sc(\R^k;\mathcal{V}), \qquad  J=\mathrm{diag}(i,-i,i,-i)
\end{equation}
defines an almost complex structure. We can then define a symplectic form $\omega_{L^2,\mathcal{V}}$ by
\begin{equation}\label{symp_v}
\omega_{L^2,\mathcal{V}}(\gamma_1,\gamma_2)\coloneqq \ip{J\gamma_1}{\gamma_{2,swap}}_{\Re,\mathcal{V}}.
\end{equation}
Analogous to \cref{schwartz_wpoiss}, we have that $(\Sc(\R^k; \mathcal{V}),\omega_{L^2,\mathcal{V}})$ is a weak symplectic manifold. Moreover, the obvious map
\begin{equation}
\iota_{\mathfrak{pm}}: \Sc(\R^k) \rightarrow \Sc(\R^k;\mathcal{V}), \qquad \phi \mapsto \frac{1}{2}\mathrm{odiag}(\phi,\ol{\phi},\phi,\ol{\phi})
\end{equation}
is a symplectomorphism. Additionally, if we denote the symplectic gradient with respect to the form $\omega_{L^2,\mathcal{V}}$ by $\grad_{s,\mathcal{V}}$, then one can show that if we define
\begin{equation}\label{alg_v}
\A_{\Sc,\mathcal{V}} \coloneqq \{F\in C^\infty(\Sc(\R^k;\mathcal{V});\R) : \grad_{s,\mathcal{V}} F\in C^\infty(\Sc(\R^k;\mathcal{V}),\Sc(\R^k;\mathcal{V}))\},
\end{equation}
and let $\pb{\cdot}{\cdot}_{L^2,\mathcal{V}}$ be the Poisson bracket canonically induced by the form $\omega_{L^2,\mathcal{V}}$, then the triple
\begin{equation}
(\Sc(\R^k;\mathcal{V}),\A_{\Sc,\mathcal{V}},\pb{\cdot}{\cdot}_{L^2,\mathcal{V}})
\end{equation}
is a weak Poisson manifold. We summarize the preceding discussion with the following proposition.

\begin{prop}
\label{prop:Schw_WP_V}
$(\Sc(\R^k;\mathcal{V}),\omega_{L^2,\mathcal{V}})$ is a weak symplectic manifold, and $(\Sc(\R^k;\mathcal{V}),\mathcal{A}_{\Sc,\mathcal{V}}, \pb{\cdot}{\cdot}_{L^2,\mathcal{V}})$ is a weak Poisson manifold, where
\begin{equation}
\label{eq:pb_L2_V}
\pb{F}{G}_{L^2,\V}(\gamma) \coloneqq \omega_{L^2,\V}(\grad_{s,\V} F(\gamma), \grad_{s,\V} G(\gamma)).
\end{equation}
Furthermore, the map $\iota_{\mathfrak{pm}}$ is a symplectomorphism; i.e., it is a smooth map such that
\begin{equation}
\iota_{\mathfrak{pm}}^*\omega_{L^2,\mathcal{V}} = \omega_{L^2},
\end{equation}
where $\iota_{\mathfrak{pm}}^*$ denotes the pullback of $\iota_{\mathfrak{pm}}$, so that
\begin{equation}
\iota_{\mathfrak{pm}}: (\Sc(\R^k),\mathcal{A}_{\Sc},\pb{\cdot}{\cdot}_{L^2}) \rightarrow (\Sc(\R^k;\mathcal{V}),\A_{\Sc,\mathcal{V}},\pb{\cdot}{\cdot}_{L^2,\mathcal{V}})
\end{equation}
is a Poisson morphism.
\end{prop}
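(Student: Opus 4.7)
\textbf{Proof plan for Proposition~\ref{prop:Schw_WP_V}.} The strategy is to mirror the argument used for $(\Sc(\R^k),\omega_{L^2})$ in Proposition~\ref{prop:Sch_wsym}, reducing the statement for mixed states to two copies of the pure-state case linked through the swap operation, and then obtain the Poisson-theoretic conclusions via Remark~\ref{rem:sym_poi}. The plan proceeds in three stages.

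\emph{Stage 1: Weak symplectic structure.} First I would verify that $\omega_{L^2,\V}$ is a closed, non-degenerate $2$-form on the vector space $\Sc(\R^k;\V)$. Closedness is automatic: since $\Sc(\R^k;\V)$ is a topological vector space and the bilinear map $\omega_{L^2,\V}$ is independent of the base point, the resulting $2$-form has constant coefficients and hence $d\omega_{L^2,\V}=0$. For non-degeneracy, I would unfold the definitions. Writing $\gamma_1=\frac{1}{2}\mathrm{odiag}(\phi_1,\ol{\phi_2},\phi_2,\ol{\phi_1})$ and $\gamma_2=\frac{1}{2}\mathrm{odiag}(\psi_1,\ol{\psi_2},\psi_2,\ol{\psi_1})$, applying $J=\mathrm{diag}(i,-i,i,-i)$ by left matrix multiplication, and performing the $4\times 4$ matrix trace computation, I expect the identity
\begin{equation*}
\omega_{L^2,\V}(\gamma_1,\gamma_2)=\tfrac{1}{2}\omega_{L^2}(\phi_1,\psi_1)+\tfrac{1}{2}\omega_{L^2}(\phi_2,\psi_2).
\end{equation*}
Given this reduction, non-degeneracy of $\omega_{L^2,\V}$ follows at once from non-degeneracy of $\omega_{L^2}$ established in Proposition~\ref{prop:Sch_wsym}: if the right-hand side vanishes for every $\gamma_2$, then in particular, by varying only $\psi_1$ or only $\psi_2$, each of $\phi_1,\phi_2$ must be zero.

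\emph{Stage 2: Weak Poisson structure.} With the weak symplectic structure in hand, Remark~\ref{rem:sym_poi} provides $\A_{\Sc,\V}$ and the bracket $\pb{\cdot}{\cdot}_{L^2,\V}$ as in \eqref{alg_v}--\eqref{eq:pb_L2_V}, and delivers properties \ref{item:wp_P1} and \ref{item:wp_P3} of Definition~\ref{def:WP} for free. The nontrivial point is property \ref{item:wp_P2}: given $\gamma\in\Sc(\R^k;\V)$ and $\delta\gamma\in T_\gamma\Sc(\R^k;\V)\cong\Sc(\R^k;\V)$ satisfying $\omega_{L^2,\V}(X(\gamma),\delta\gamma)=0$ for every Hamiltonian vector field $X\in\mathrm{ham}(\Sc(\R^k;\V),\omega_{L^2,\V})$, I would need to conclude $\delta\gamma=0$. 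For this I would exhibit a separating family of Hamiltonian functionals, namely those of the form
\begin{equation*}
F_{f,j}(\gamma)\coloneqq \Re\int_{\R^k}\ol{f(\ux_k)}\phi_j(\ux_k)\,d\ux_k, \qquad j\in\{1,2\},\ f\in\Sc(\R^k),
\end{equation*}
where $\gamma=\tfrac12\mathrm{odiag}(\phi_1,\ol{\phi_2},\phi_2,\ol{\phi_1})$. A direct computation analogous to \cref{rem:pb_vd} shows $F_{f,j}\in\A_{\Sc,\V}$ with symplectic gradient depending only on $f$, so the vanishing hypothesis would read $\omega_{L^2,\V}(\grad_{s,\V} F_{f,j}(\gamma),\delta\gamma)=0$, which by Stage~1's reduction forces the components of $\delta\gamma$ to pair trivially against every Schwartz function and hence vanish.

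\emph{Stage 3: Symplectomorphism and Poisson morphism.} Since $\iota_{\mathfrak{pm}}$ is $\R$-linear, its G\^ateaux derivative at any base point is itself: $d\iota_{\mathfrak{pm}}[\phi](\delta\phi)=\iota_{\mathfrak{pm}}(\delta\phi)$. Substituting into the identity from Stage~1 with $\phi_1=\phi_2=\delta\phi_1$ and $\psi_1=\psi_2=\delta\phi_2$ yields
\begin{equation*}
(\iota_{\mathfrak{pm}}^*\omega_{L^2,\V})(\delta\phi_1,\delta\phi_2)=\tfrac12\omega_{L^2}(\delta\phi_1,\delta\phi_2)+\tfrac12\omega_{L^2}(\delta\phi_1,\delta\phi_2)=\omega_{L^2}(\delta\phi_1,\delta\phi_2),
\end{equation*}
establishing the symplectomorphism property. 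The Poisson-morphism assertion then follows by a standard argument: for $F\in\A_{\Sc,\V}$ the pullback $\iota_{\mathfrak{pm}}^*F$ lies in $\A_{\Sc}$ with $\grad_s(\iota_{\mathfrak{pm}}^*F)(\phi)=(\iota_{\mathfrak{pm}})^{-1}\grad_{s,\V}F(\iota_{\mathfrak{pm}}(\phi))$ (using linearity of $\iota_{\mathfrak{pm}}$), and preservation of the symplectic form under pullback immediately gives $\iota_{\mathfrak{pm}}^*\pb{F}{G}_{L^2,\V}=\pb{\iota_{\mathfrak{pm}}^*F}{\iota_{\mathfrak{pm}}^*G}_{L^2}$.

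The main obstacle I anticipate is the bookkeeping in Stage~1: carrying out the $4\times 4$ matrix multiplication and trace with the off-diagonal/swap structure correctly to isolate the clean decomposition into two copies of $\omega_{L^2}$. Once that identity is in hand, everything else is largely formal, and Stage~2 reduces to exhibiting the linear functionals $F_{f,j}$ above, which is straightforward from \cref{rem:pb_vd}.
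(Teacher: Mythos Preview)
Your overall plan matches the paper's approach; the paper itself does not give a detailed proof and only remarks that the construction is analogous to that for $(\Sc(\R^k),\omega_{L^2})$ in Proposition~\ref{prop:Sch_wsym}, so your three-stage outline is exactly the kind of verification being left to the reader.

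Two points of caution. First, your Stage~1 identity has the indices crossed: carrying out the $4\times 4$ trace with the swap built into \eqref{symp_v} (as it is actually used in the proof of Theorem~\ref{thm:GP_pomo} and in Remark~\ref{rem:pb_V_vd}) gives
\[
\omega_{L^2,\V}(\gamma_1,\gamma_2)=\tfrac{1}{2}\omega_{L^2}(\phi_1,\psi_2)+\tfrac{1}{2}\omega_{L^2}(\phi_2,\psi_1),
\]
pairing the first component of $\gamma_1$ with the second of $\gamma_2$ and vice versa. This does not affect non-degeneracy or the symplectomorphism computation (when $\phi_1=\phi_2$ and $\psi_1=\psi_2$ the two formulae coincide), but you should get the cross-pairing right since it is what underlies the variational-derivative formula \eqref{eq:L2V_pb_vd}.

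Second, in Stage~3 you write $\grad_s(\iota_{\mathfrak{pm}}^*F)(\phi)=(\iota_{\mathfrak{pm}})^{-1}\grad_{s,\V}F(\iota_{\mathfrak{pm}}(\phi))$, but $\iota_{\mathfrak{pm}}$ is not surjective, and in general $\grad_{s,\V}F(\iota_{\mathfrak{pm}}(\phi))=\tfrac12\mathrm{odiag}(g_1,\ol{g_2},g_2,\ol{g_1})$ need not satisfy $g_1=g_2$. The correct statement, obtained by inserting the cross-paired identity above, is $\grad_s(\iota_{\mathfrak{pm}}^*F)(\phi)=\tfrac12(g_1+g_2)$, which is still a smooth $\Sc(\R^k)$-valued map and suffices for both $\iota_{\mathfrak{pm}}^*\A_{\Sc,\V}\subset\A_{\Sc}$ and the Poisson-bracket identity.
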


\begin{remark}
\label{rem:pb_V_vd}
\cref{rem:pb_vd} carries over to the setting of $\Sc(\R^k;\mathcal{V})$. More precisely, suppose $F\in C^\infty(\Sc(\R^k;\mathcal{V});\R)$ is such that
\begin{equation}
F(\gamma) = \tl{F}(\phi_1,\ol{\phi_2},\phi_2,\ol{\phi_1}), \qquad \gamma =\frac{1}{2}\mathrm{odiag}(\phi_1,\ol{\phi_2},\phi_2,\ol{\phi_1})\in\Sc(\R^k;\mathcal{V}),
\end{equation}
where $\tl{F}\in C^\infty(\Sc(\R^k)^4;\C)$, is such that
\begin{equation}
\grad_1 \tl{F}, \,\,\,  \grad_{\bar{2}}\tl{F}, \,\,\,\grad_{2}\tl{F}, \,\,\,\grad_{\bar{1}}\tl{F} \in C^\infty(\Sc(\R^k)^4;\Sc(\R^k)),
\end{equation}
where the four variational derivatives are uniquely defined by
\begin{equation}
\label{eq:vd_prop_V}
\begin{split}
&d\tl{F}[\phi_1,\phi_{\bar{2}},\phi_2,\phi_{\bar{1}}](\delta\phi_1,\delta\phi_{\bar{2}},\d\phi_2,\d\phi_{\bar{1}}) \\
&=\int_{\R^k}d\ux_k \paren*{\paren*{\grad_1 \tl{F}\d\phi_1 + \grad_{\bar{2}}\tl{F}\d\phi_{\bar{2}} + \grad_2 \tl{F} \d\phi_2 + \grad_{\bar{1}}\tl{F} \d\phi_{\bar{1}}}(\phi_1,\phi_{\bar{2}},\phi_2,\phi_{\bar{1}})}(\ux_k),
\end{split}
\end{equation}
and $\tl{F}$ has the involution property
\begin{equation}
\tl{F}(\phi_1,\phi_{\bar{2}},\phi_2,\phi_{\bar{1}}) = \ol{\tl{F}(\ol{\phi_{\bar{1}}}, \ol{\phi_2}, \ol{\phi_{\bar{2}}},\ol{\phi_1})}.
\end{equation}
Then $F\in\A_{\Sc,\mathcal{V}}$. Additionally, if $F,G$ are two such functionals, then their Poisson bracket may be rewritten as
\begin{equation}
\label{eq:L2V_pb_vd}
\begin{split}
\pb{F}{G}_{L^2,\mathcal{V}}(\gamma) &= -i\int_{\R^k}d\ux_k \paren*{\grad_1 \tl{F}(\gamma)\grad_{\bar{2}} \tl{G}(\gamma)- \grad_{\bar{2}}\tl{F}(\gamma)\grad_1 \tl{G}(\gamma}(\ux_k) \\
&\phantom{=} -i\int_{\R^k}d\ux_k \paren*{\grad_{2} \tl{F}(\gamma)\grad_{\bar{1}} \tl{G}(\gamma) - \grad_{\bar{1}}\tl{F}(\gamma)\grad_2 \tl{G}(\gamma)}(\ux_k),
\end{split}
\end{equation}
where we identify $\gamma$ with the 4-tuple $(\phi_1,\ol{\phi_2},\phi_2,\ol{\phi_1})$ for the sake of more compact notation.

In the sequel, all the functionals on $\Sc(\R^k;\mathcal{V})$ we consider satisfy the conditions of the remark. Consequently, we will pass between the variational derivative and symplectic gradient formulations for the Poisson bracket without comment.
\end{remark}

Lastly, we make heavy use of a ``complexified'' version of the weak symplectic manifold $(\Sc(\R^k),\omega_{L^2})$. More precisely, consider the cartesian product $\Sc(\R^k)^2$ and define a complex-valued map
\begin{equation}
\label{eq:om_L2_sq}
\omega_{L^2,\C}(\ul{f}_2,\ul{g}_2) \coloneqq \int_{\R^k}d\ux_k \tr_{\C^2}(J_{\C}\ul{f}_2\ul{g}_2) (\ux_k),
\end{equation}
where
\begin{equation}
\ul{f}_2 = \begin{pmatrix} 0 & f_1\\f_2&0\end{pmatrix}, \ \ul{g}_2=\begin{pmatrix}0&g_1\\g_2&0\end{pmatrix} \in \Sc(\R^k)^2,
\end{equation}
$\tr_{\C^2}$ denotes the $2\times 2$ matrix trace, and $J_\C$ is the left-matrix multiplication operator $\mathrm{diag}(i,-i)$. Here, we identify a Schwartz function taking values in the space of off-diagonal $2\times 2$ matrices with an element of $\Sc(\R^k)^2$ in the obvious manner.

\begin{remark}
\label{rem:sym_form_tr}
Note that if $\ul{f}_2 = \mathrm{odiag}(f,\ol{f})$ and $\ul{g}_2=\mathrm{odiag}(g,\ol{g})$, for $f,g\in\Sc(\R^k)$, then
\begin{equation}
\omega_{L^2,\C}(\ul{f}_2,\ul{g}_2) = i\int_{\R^k}d\ux_k\paren*{f\ol{g} - \ol{f}g}(\ux_k) = 2\Im{\int_{\R^k}d\ux_k\ol{f(\ux_k)}g(\ux_k)} = \omega_{L^2}(f,g).
\end{equation}
\end{remark}

\begin{prop}\label{schwartz2_wpoiss}
Define a subset $A_{\Sc,\C}\subset C^{\infty}(\Sc(\R^{k})^2;\C)$ by
\begin{equation}\label{eq:Asc}
\A_{\Sc,\C} \coloneqq \bigl \{ H \in C^\infty(\Sc(\R^k);\C) \,:\, \grad_{s,\C}H \in C^{\infty}(\Sc(\R)^2;\Sc(\R)^2) \bigr\} ,
\end{equation}
and define a bracket $\pb{\cdot}{\cdot}_{L^2,\C}$ by
\begin{equation}
\label{eq:pb_L2_C}
\pb{F}{G}_{L^2,\C} \coloneqq \omega_{L^2,\C}(\grad_{s,\C}F,\grad_{s,\C}G).
\end{equation}
Then $(\Sc(\R^{k})^2,\A_{\Sc,\C}, \pb{\cdot}{\cdot}_{L^2,\C})$ is a weak Poisson manifold.
\end{prop}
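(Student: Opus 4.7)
The plan is to follow the template provided by \cref{prop:Sch_wsym} and \cref{schwartz_wpoiss}, namely to first show that the pair $(\Sc(\R^k)^2, \omega_{L^2,\C})$ is a weak symplectic manifold and then appeal to the general principle recalled in \cref{rem:sym_poi}. The only additional subtleties, compared to the real case, are that the form is $\C$-valued rather than $\R$-valued and that the ``complex conjugate'' coordinates $f_1$ and $f_2$ are no longer slaved to one another. Neither of these affects the underlying arguments.

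First, I would verify the weak symplectic conditions for $\omega_{L^2,\C}$. A direct calculation using $\ul{f}_2=\mathrm{odiag}(f_1,f_2)$ and $\ul{g}_2=\mathrm{odiag}(g_1,g_2)$ gives
\begin{equation*}
\omega_{L^2,\C}(\ul{f}_2,\ul{g}_2) = i\int_{\R^k}d\ux_k\,(f_1 g_2 - f_2 g_1)(\ux_k),
\end{equation*}
from which bilinearity and antisymmetry are immediate. Since $\omega_{L^2,\C}$ does not depend on the base point of $\Sc(\R^k)^2$, it is automatically closed. For non-degeneracy, if $\omega_{L^2,\C}(\ul{f}_2,\ul{g}_2)=0$ for every $\ul{g}_2$, choosing $g_1=0$ and letting $g_2$ range over $\Sc(\R^k)$ forces $f_1\equiv 0$, and symmetrically $f_2\equiv 0$.

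Next I would derive an explicit formula for the symplectic gradient. Writing $\grad_{s,\C}F(\ul{\phi}_2)=\mathrm{odiag}(h_1,h_2)$ and matching against
\begin{equation*}
dF[\ul{\phi}_2](\mathrm{odiag}(g_1,g_2)) = \int_{\R^k}d\ux_k\,\bigl((\grad_1 F)g_1 + (\grad_2 F)g_2\bigr)(\ux_k),
\end{equation*}
where $\grad_1 F,\grad_2 F$ are the two variational derivatives in the sense of \cref{rem:pb_vd}, the defining equation $dF[\ul{\phi}_2](\ul{g}_2)=\omega_{L^2,\C}(\grad_{s,\C}F(\ul{\phi}_2),\ul{g}_2)$ forces $h_1 = -i\grad_2 F$ and $h_2 = i\grad_1 F$. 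In particular the membership condition defining $\A_{\Sc,\C}$ is equivalent to requiring both variational derivatives to be smooth $\Sc(\R^k)$-valued maps, and substituting back gives the Darboux-type formula
\begin{equation*}
\pb{F}{G}_{L^2,\C}(\ul{\phi}_2) = -i\int_{\R^k}d\ux_k\,\bigl(\grad_1 F \grad_2 G - \grad_2 F \grad_1 G\bigr)(\ul{\phi}_2)(\ux_k).
\end{equation*}
This formula makes the Leibniz rule in both arguments transparent (it follows from the product rule for variational derivatives) and shows that $\pb{F}{G}_{L^2,\C}$ is again smooth with smooth variational derivatives, so $\A_{\Sc,\C}$ is closed under the bracket.

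Finally, I would verify properties \ref{item:wp_P1}--\ref{item:wp_P3} of \cref{def:WP}. Property \ref{item:wp_P3} is already built into the definition of $\A_{\Sc,\C}$, since the Hamiltonian vector field at $\ul{\phi}_2$ is exactly $\grad_{s,\C}F(\ul{\phi}_2)$. For \ref{item:wp_P2}, the linear coordinate functionals $\ul{f}_2\mapsto \int \varphi f_j\,d\ux_k$ with $\varphi\in\Sc(\R^k)$ and $j\in\{1,2\}$ belong to $\A_{\Sc,\C}$ and separate vectors in $T_{\ul{\phi}_2}\Sc(\R^k)^2\cong\Sc(\R^k)^2$. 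The main obstacle is the Jacobi identity inside \ref{item:wp_P1}. Because $\omega_{L^2,\C}$ has constant coefficients, this reduces to a direct verification: expanding $\pb{\pb{F}{G}_{L^2,\C}}{H}_{L^2,\C}$ using the explicit formula above and cyclically summing, the second-variation terms cancel in pairs by symmetry of mixed variational derivatives (\cref{gateaux_deriv}), exactly as in the classical finite-dimensional Darboux calculation. Combining these verifications with \cref{rem:sym_poi} then yields the proposition.
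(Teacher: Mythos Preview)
Your proposal is correct and follows essentially the same approach as the paper, which does not give an explicit proof but treats the proposition as the ``complexified'' analogue of \cref{schwartz_wpoiss}, obtained via \cref{rem:sym_poi}. Your explicit Darboux-type formula for the bracket coincides (up to the notational choice $\grad_2$ versus the paper's $\grad_{\bar{2}}$) with the one recorded in \cref{rem:pb_vd_C}, and your verification of \ref{item:wp_P1}--\ref{item:wp_P3} fills in the details the paper leaves implicit.
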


\begin{remark}
\label{rem:pb_vd_C}
As before, if $F,G\in C^\infty(\Sc(\R^k)^2;\C)$ satisfy the condition \eqref{eq:tlF_cons}, then $F,G\in\A_{\Sc,\C}$ and
\begin{equation}
\pb{F}{G}_{L^2,\C}(\phi_1,\ol{\phi_2}) = -i\int_{\R^k}d\ux_k \paren*{\grad_1 F(\phi_1,\ol{\phi_2})\grad_{\bar{2}}G(\phi_1,\ol{\phi_2}) - \grad_{\bar{2}}F(\phi_1,\ol{\phi_2})\grad_1 G(\phi_1,\ol{\phi_2})}(\ux_k).
\end{equation}
\end{remark}

\begin{remark}
All the Schwartz space examples given in this subsection have their $2L$-periodic analogues, where $\Sc(\R^k)$ is replaced by $C^\infty(\T_L^k)$. We will need the periodic examples in \cref{app:nls_pc}.
\end{remark}

\subsection{Some Lie algebra facts}\label{ssec:pre_LA_facts}
We close \cref{sec:pre} by collecting some facts about Lie algebras for easy referencing. Following our presentation in \cite[Section 4.2]{MNPRS1_2019}, we outline a canonical construction of a Poisson structure on the dual of a Lie algebra, which is known as a \emph{Lie-Poisson structure} following the terminology of Marsden and Weinstein \cite{MW1983}. We refer the reader to \cite{MR2013, MMW1984} for a more thorough discussion. 

\begin{mydef}[Lie algebra]\label{def:la}
A \emph{Lie algebra} is a locally convex space $\g$ over the field $\mathbb{F} \in \{\R,\C\}$ together with a separately continuous binary operation $\brak{\cdot,\cdot}:\g\times \g \rightarrow \g$ called the \emph{Lie bracket}, which satisfies the following properties:
\begin{enumerate}[(L1)]
\item\label{item:LA_1}
$[\cdot,\cdot]$ is bilinear.
\item\label{item:LA_2}
$[x,x]=0$ for all $x\in \g$.
\item\label{item:LA_3}
$\brak{\cdot,\cdot}$ satisfies the \emph{Jacobi identity}
\begin{equation}
\brak*{x,\brak*{y,z}} + \brak*{z,\brak*{x,y}} + \brak*{y,\brak*{z,x}}=0
\end{equation}
for all $x,y,z\in \g$.
\end{enumerate}
\end{mydef}

\begin{remark}
Usually (see, for instance, \cite{Omori1979}), a Lie bracket is required to be continuous, as opposed to separately continuous. We drop this requirement in this work, due to functional analytic difficulties stemming from the separate continuity of the distributional pairing.
\end{remark}

\begin{mydef}[Nondegenerate pairings]
Let $V$ and $W$ be topological vector spaces over the field $\mathbb{F}$, and let 
\[
\ip{\cdot}: V\times W\rightarrow \mathbb{F}
\]
be a bilinear pairing between $V$ and $W$. We say that the pairing is \emph{$V$-nondegenerate} (resp., \emph{$W$-nondegenerate}) if the map $V\rightarrow W^{*}, x\mapsto \ip{x}{\cdot}$ (resp., $W\rightarrow V^{*}, y\mapsto \ip{\cdot}{y}$) is an isomorphism. If the pairing is both $V$- and $W$-nondegenerate, then we say that the pairing is \emph{nondegenerate}.
\end{mydef}

\begin{mydef}[dual space $\g^{*}$]
Let $(\g,\brak{\cdot,\cdot})$ be a Lie algebra. We say that a topological vector $\g^{*}$ is a \emph{dual space} to $\g$ if there exists a pairing $\ip{\cdot}_{\g-\g^*}: \g \times \g^{*}\rightarrow \mathbb{F}$ which is nondegenerate.
\end{mydef}

\begin{ex}
If $\g$ is a reflexive Fr\'{e}chet space, for instance the Schwartz space $\mathcal{S}(\mathbb{R}^{d})$, then taking $\g^{*}$ to be the topological dual of $\g$ equipped with the strong dual topology, the standard duality pairing
\begin{equation*}
\g\times \g^{*} \rightarrow \mathbb{F}: \ip{x}{\varphi}_{\g-\g^*} = \varphi(x)
\end{equation*}
is nondegenerate.
\end{ex}

\begin{lemma}[Existence of functional derivatives]
Let $\g$ be a Lie algebra, and let $\g^{*}$ be dual to $\g$ with respect to the nondegenerate pairing $\ip{\cdot}{\cdot}_{\g-\g^*}$. For any functional $F\in C^{1}(\g^{*};\mathbb{F})$, there exists a unique element $\frac{\delta F}{\delta \mu}\in \g$ such that
\begin{equation}
\ip{\frac{\delta F}{\delta \mu}}{\delta\mu}_{\g-\g^*} = dF[\mu](\delta \mu), \qquad \mu, \delta\mu \in \g^{*}.
\end{equation}
\end{lemma}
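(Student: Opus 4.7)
The plan is to extract the desired element directly from the nondegeneracy of the pairing $\ip{\cdot}{\cdot}_{\g-\g^*}$. For fixed $\mu \in \g^*$, the G\^ateaux derivative $dF[\mu]:\g^* \to \mathbb{F}$ is, by the definition of $C^1$ in the sense of \cref{gateaux_deriv}, a continuous linear functional on $\g^*$, i.e.\ an element of the topological dual $(\g^*)^*$.

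The $\g$-nondegeneracy clause in the definition of nondegenerate pairing asserts that the canonical map
\[
\iota:\g \longrightarrow (\g^*)^*, \qquad x \longmapsto \ip{x}{\cdot}_{\g-\g^*},
\]
is an \emph{isomorphism} (not merely an injection). I would therefore define
\[
\frac{\delta F}{\delta\mu} \coloneqq \iota^{-1}\bigl(dF[\mu]\bigr) \in \g,
\]
which by construction satisfies $\ip{\tfrac{\delta F}{\delta\mu}}{\delta\mu}_{\g-\g^*} = dF[\mu](\delta\mu)$ for every $\delta\mu \in \g^*$. Uniqueness is immediate from the injectivity component of $\iota$ being an isomorphism: if $x_1,x_2 \in \g$ both satisfied the defining identity, then $\ip{x_1 - x_2}{\delta\mu}_{\g-\g^*} = 0$ for all $\delta\mu \in \g^*$, whence $\iota(x_1-x_2)=0$ and thus $x_1 = x_2$.

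There is essentially no obstacle here; the lemma is a tautological consequence of the way the paper has chosen to define nondegeneracy. The only point worth flagging is that the definition being used builds the bijection $\g \cong (\g^*)^*$ directly into nondegeneracy. In other conventions, where nondegeneracy amounts only to injectivity of $\iota$, one would additionally need reflexivity of $\g$, or a Hahn--Banach type argument, to upgrade injectivity to surjectivity and thereby conclude existence of $\tfrac{\delta F}{\delta\mu}$. Since the stronger notion has been assumed, no such additional input is required, and no topological hypotheses beyond what is already in the hypothesis enter the proof.
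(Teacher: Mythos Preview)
Your proof is correct and is precisely the intended argument: the paper's definition of nondegeneracy already builds in that $\g \to (\g^*)^*$, $x \mapsto \ip{x}{\cdot}_{\g-\g^*}$, is an isomorphism, so existence and uniqueness of $\tfrac{\delta F}{\delta\mu}$ follow immediately by applying the inverse to $dF[\mu] \in (\g^*)^*$. The paper itself omits the proof here (referring to the companion work), but there is essentially only one way to argue this given the definitions in play, and your remark distinguishing this strong notion of nondegeneracy from the weaker injectivity-only convention is a useful clarification.
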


\begin{prop}[Lie-Poisson structure]\label{prop:LP_rev}
Let $(\g,\comm{\cdot}{\cdot}_{\g})$ be a Lie algebra, such that the Lie bracket is continuous, and let $\g^{*}$ be dual to $\g$ with respect to the nondegenerate pairing $\ip{\cdot}_{\g-\g^*}$. Define the \emph{Lie-Poisson bracket} 
\begin{equation}
\pb{\cdot}{\cdot} : C^{\infty}(\g^{*};\mathbb{F})\times C^{\infty}(\g^{*};\mathbb{F}) \rightarrow C^{\infty}(\g^{*};\mathbb{F})
\end{equation}
by
\begin{equation}
\pb{F}{G}(\mu) \coloneqq \ip{\comm{\frac{\delta F}{\delta\mu}}{\frac{\delta G}{\delta\mu}}_{\g}}{\mu}_{\g-\g^*}, \qquad \mu \in \g^{*}.
\end{equation}
Then $(C^{\infty}(\g^{*};\mathbb{F}), \pb{\cdot}{\cdot})$ is a Lie algebra.
\end{prop}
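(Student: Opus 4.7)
The plan is to verify the three Lie algebra axioms \ref{item:LA_1}--\ref{item:LA_3} for $(C^{\infty}(\g^{*};\mathbb{F}), \pb{\cdot}{\cdot})$. Before doing so, I would check well-definedness: the map $\mu \mapsto \pb{F}{G}(\mu)$ belongs to $C^{\infty}(\g^*;\mathbb{F})$ by combining the chain rule with continuity of $\comm{\cdot}{\cdot}_{\g}$, bilinearity of the pairing, and smoothness of the assignment $\mu \mapsto \delta F/\delta \mu \in \g$ (the latter follows from nondegeneracy of $\ip{\cdot}{\cdot}_{\g-\g^*}$ and the $C^\infty$ hypothesis on $F$). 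Bilinearity \ref{item:LA_1} is immediate from bilinearity of the pairing and of $\comm{\cdot}{\cdot}_{\g}$, together with the linearity of the functional derivative in $F$. Antisymmetry \ref{item:LA_2} reduces to the corresponding property on $\g$: $\pb{F}{F}(\mu) = \ip{\comm{\delta F/\delta\mu}{\delta F/\delta\mu}_{\g}}{\mu}_{\g-\g^*} = \ip{0}{\mu}_{\g-\g^*} = 0$.

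The substantive step is the Jacobi identity \ref{item:LA_3}. The key preliminary computation is a formula for $\delta \pb{G}{H}/\delta\mu$. Differentiating
\begin{equation}
\pb{G}{H}(\mu) = \ip{\comm{\delta G/\delta\mu}{\delta H/\delta\mu}_{\g}}{\mu}_{\g-\g^*}
\end{equation}
in a direction $\delta\mu \in \g^*$ and applying Leibniz produces three contributions: the term $\ip{\comm{\delta G/\delta\mu}{\delta H/\delta\mu}_{\g}}{\delta\mu}_{\g-\g^*}$, which identifies $\comm{\delta G/\delta\mu}{\delta H/\delta\mu}_{\g}$ as the ``principal part'' of $\delta\pb{G}{H}/\delta\mu$, together with two ``correction'' terms involving the second Gâteaux derivatives $d(\delta G/\delta\mu)$ and $d(\delta H/\delta\mu)$, rewritten as pairings with $\delta\mu$ via nondegeneracy of $\ip{\cdot}{\cdot}_{\g-\g^*}$ and symmetry of second derivatives.

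I would then expand the cyclic sum
\begin{equation}
\pb{F}{\pb{G}{H}}(\mu) + \pb{G}{\pb{H}{F}}(\mu) + \pb{H}{\pb{F}{G}}(\mu),
\end{equation}
pairing each outer functional derivative against the formula from the previous step. The contributions arising only from the principal parts assemble into
\begin{equation}
\ip{\comm{X}{\comm{Y}{Z}_{\g}}_{\g} + \comm{Y}{\comm{Z}{X}_{\g}}_{\g} + \comm{Z}{\comm{X}{Y}_{\g}}_{\g}}{\mu}_{\g-\g^*} = 0,
\end{equation}
with $X = \delta F/\delta\mu$, $Y = \delta G/\delta\mu$, $Z = \delta H/\delta\mu$, by Jacobi \ref{item:LA_3} inside $\g$. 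The remaining correction contributions, which are symmetric bilinear in two of their functional-derivative arguments (reflecting the symmetry of second Gâteaux derivatives) but antisymmetric in two via $\comm{\cdot}{\cdot}_{\g}$, cancel pairwise across the three cyclic positions.

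I expect the main obstacle to be functional-analytic bookkeeping rather than the algebraic identity itself: one must verify that $\delta F/\delta\mu$ genuinely lies in $\g$ (so that the Lie bracket is defined) and that $\pb{F}{G} \in C^{\infty}(\g^*;\mathbb{F})$ (so that iterated brackets make sense and the second-derivative manipulations above are justified). Here the continuity of $\comm{\cdot}{\cdot}_{\g}$ and the full nondegeneracy of the pairing are both essential. In the concrete setting of $(\G_\infty, \comm{\cdot}{\cdot}_{\G_\infty})$ relevant to the body of the paper, these technicalities fail in their classical form --- the bracket is only separately continuous and $\G_\infty \subsetneq \wt{\G}_\infty$ --- which is precisely why \cref{prop:LP} requires the direct verification of the weak Poisson axioms on the restricted algebra $\A_\infty$ rather than a black-box appeal to \cref{prop:LP_rev}.
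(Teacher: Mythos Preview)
The paper does not actually prove this proposition: the entirety of \cref{ssec:pre_LA_facts} is presented as review material, with the explicit disclaimer that proofs are omitted and the reader is referred to Section~4.2 of the companion paper \cite{MNPRS1_2019}. Your outline is the standard classical argument (as in, e.g., Marsden--Ratiu) and is correct; the decomposition of $\delta\pb{G}{H}/\delta\mu$ into a principal part plus second-derivative corrections, followed by cyclic cancellation, is exactly how the Jacobi identity is verified in this setting. Your closing paragraph correctly identifies why the paper cannot simply invoke this proposition for $(\G_\infty,\comm{\cdot}{\cdot}_{\G_\infty})$ and must instead establish \cref{prop:LP} directly.
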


\section{The construction: defining the $\W_{n}$}\label{sec:con_GP_W}
We now define the operators $\W_n$ giving rise to the Hamiltonian functionals $\H_n$. As detailed in \cref{sec:mr_bp}, in order to construct the operators $\W_n$, we proceed incrementally.

\subsection{Step 1: Preliminary definition of operators}\label{ssec:con_GP_W_S1}
Let
\begin{equation}\label{eq:Wn_base}
\widetilde{\W}_{1} = (\widetilde{\W}_{1}^{(k)})_{k\in\N} \in  \bigoplus_{k=1}^{\infty}\mathcal{L}_{gmp}(\mathcal{S}(\R^{k}), \mathcal{S}'(\R^{k})), \qquad \widetilde{\W}_{1} \coloneqq \E_{1},
\end{equation}
where we recall that
\begin{equation}\label{defn_ei}
\E_{j}=(\E_{j}^{(k)})_{k \in\N} \in \bigoplus_{k=1}^{\infty}\mathcal{L}_{gmp}(\mathcal{S}(\R^{k}), \mathcal{S}'(\R^{k})), \qquad \E_{j}^{(k)} \coloneqq Id_{k}\,\delta_{jk},
\end{equation}
where $Id_{k}$ is the identity operator in $\L(\Sc(\R^{k}),\Sc'(\R^{k}))$ and $\delta_{jk}$ is the Kronecker delta function. We regard $\E_{j}$ as the $j^{th}$ coordinate element of $\bigoplus_{k=1}^{\infty}\mathcal{L}(\mathcal{S}(\R^{k}), \mathcal{S}'(\R^{k}))$. It is clear that these operators satisfy the good mapping property.

We would like to recursively define
\begin{equation}
\widetilde{\W}_{n+1}=(\widetilde{\W}_{n+1}^{(k)})_{k\in\N} \in \bigoplus_{k=1}^{\infty}\mathcal{L}_{gmp}(\mathcal{S}(\R^{k}), \mathcal{S}'(\R^{k}))
\end{equation}
by the formula
\begin{equation}\label{eq:Wn_recur}
\widetilde{\W}_{n+1}^{(k)} \coloneqq -i\p_{x_{1}}\widetilde{\W}_{n}^{(k)}+ \kappa\sum_{m=1}^{n-1}\sum_{\ell,j\geq 1;\ell+j=k}\delta(X_{1}-X_{\ell+1}) \paren*{\widetilde{\W}_{m}^{(\ell)} \otimes \widetilde{\W}_{n-m}^{(j)}}, \qquad k\in\N,
\end{equation}
where we regard the multiplier operator $-i\p_{x_{1}}$ as a $k$-particle operator by tensoring with the identity in the $X_{2},\ldots,X_{k}$ coordinates. Similarly, we regard the multiplication $\delta(X_{1}-X_{\ell+1})$ as $k$-particle operator simply by tensoring with the identity in the $X_{2},\ldots,X_{\ell},X_{\ell+2},\ldots,X_{k}$ coordinates. 

Our aim is then two-fold. First, we need to make sense of the definition \cref{eq:Wn_recur}. At first glance, the right-hand side of \cref{eq:Wn_recur} is purely formal, since for $n\geq 4$, the sum will contain products of $\delta$ functions. However, as we will prove in the next lemma, the operators in \cref{eq:Wn_recur} are well-defined elements of $\mathcal{L}_{gmp}(\mathcal{S}(\R^{k}), \mathcal{S}'(\R^{k}))$. Intuitively, this is because the products in \eqref{eq:Wn_recur} never contain delta functions with identical arguments, such as $\delta^2(X_1-X_2)$. Subsequently, we will show that all but finitely many terms in the recursion are non-zero, which justifies our use of the direct sum notation. Thus, we are led to \cref{prop:W_S1}, the statement of which we recall below.

\WS*

We begin the proof of \cref{prop:W_S1} with establishing the recursion \eqref{eq:Wn_recur}.

\begin{lemma}[Rigorous recursion]\label{lem:Wn_wd}
For every $k, n \in \N$, the distribution-valued operator $\wt{\W}_{n}^{(k)}$ is an element of $\L_{gmp}(\Sc(\R^{k}),\Sc'(\R^{k}))$ and satisfies the following:
\begin{enumerate}[(R1)]
\item\label{item:Wn_wd_1}
There exists a finite subset $\mathsf{A}_{n}^{(k)}\subset \N_0^{k}$ of multi-indices such that
\begin{equation}\label{w_rep}
\widetilde{\W}_{n}^{(k)}f^{(k)} = \sum_{\ul{\alpha}_{k}\in \mathsf{A}_{n}^{(k)}} u_{\ul{\alpha}_{k},n}\p_{\ux_{k}}^{\ul{\alpha}_{k}}f^{(k)}, \qquad \forall f^{(k)}\in\Sc(\R^{k}),
\end{equation}
where $u_{\ul{\alpha}_{k},n}\in\Sc'(\R^{k})$.  
\item\label{item:Wn_wd_2} For every $\ul{\alpha}_{k}\in \mathsf{A}_{n}^{(k)}$, either
\begin{description}
\item[Case 1]\label{item:wf_c1}
$\WF(u_{\ul{\alpha}_{k},n}) = \emptyset$, or
\item[Case 2]\label{item:wf_c2}
$\WF(u_{\ul{\alpha}_{k},n})\neq \emptyset$ and satisfies the \emph{non-vanishing pair property}:
\begin{equation}
(\ux_{k},\uxi_{k}) \in \WF(u_{\ul{\alpha}_{k},n}) \Longrightarrow \exists \ell,j \in\N_{\leq k} \enspace \text{s.t.} \enspace \text{$\ell<j$ and  both $\xi_{\ell}\neq 0$ and $\xi_{j}\neq 0$}.
\end{equation}
\end{description}
\end{enumerate}
\end{lemma}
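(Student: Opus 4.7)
The proof is by induction on $n$, carrying both statements \ref{item:Wn_wd_1} and \ref{item:Wn_wd_2} together with membership in $\L_{gmp}(\Sc(\R^k),\Sc'(\R^k))$ as the induction hypothesis. The base case $n=1$ is immediate: $\wt{\W}_1^{(k)} = Id_k \delta_{1k}$, so we take $\mathsf{A}_1^{(k)} = \{\un_k\}$ for $k=1$ (the zero multi-index) with $u_{\un_1, 1} \equiv 1$, whose wave front set is empty, and $\mathsf{A}_1^{(k)} = \emptyset$ for $k\geq 2$. The good mapping property is trivially verified by direct inspection of \cref{def:gmp}. Before attacking the inductive step, I would note that for any $k$ only finitely many $\wt{\W}_n^{(k)}$ are nonzero (in fact the degree bound from the recursion shows $\wt{\W}_n^{(k)} = 0$ for $k > n$), which a posteriori justifies writing the direct sum.

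For the inductive step, fix $n\geq 1$ and assume the conclusion holds for all $n'\leq n$. The recursion has two contributions to $\wt{\W}_{n+1}^{(k)}$: the differentiated term $(-i\p_{x_1})\wt{\W}_n^{(k)}$ and the convolution-like sum over $m$ and splittings $\ell+j = k$. The first contribution is easy to handle: applying $-i\p_{x_1}$ to the representation \eqref{w_rep} for $\wt{\W}_n^{(k)}$ produces, via the Leibniz rule, terms of the form $(-i\p_{x_1} u_{\ul\alpha_k,n}) \p_{\ux_k}^{\ul\alpha_k} f^{(k)}$ and $u_{\ul\alpha_k,n} \p_{\ux_k}^{\ul\alpha_k + e_1} f^{(k)}$, giving a new finite multi-index set. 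For the wave front set claim \ref{item:Wn_wd_2}, I would use that $\WF(\p^\beta u)\subseteq \WF(u)$, so the non-vanishing pair property persists.

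The bulk of the work is analyzing the second contribution. Here the main obstacle, and the heart of the argument, is to make rigorous sense of the product $\delta(X_1 - X_{\ell+1})(\wt{\W}_m^{(\ell)}\otimes \wt{\W}_{n-m}^{(j)})$. My plan is as follows: apply the representation \eqref{w_rep} from the induction hypothesis to each factor in the tensor product, obtaining for $(f^{(\ell)},g^{(j)})\in\Sc(\R^\ell)\times\Sc(\R^j)$ an expansion
\begin{equation*}
(\wt{\W}_m^{(\ell)}\otimes \wt{\W}_{n-m}^{(j)})(f^{(\ell)}\otimes g^{(j)}) = \sum_{\substack{\ul\alpha_\ell\in\mathsf{A}_m^{(\ell)}\\ \ul\beta_j\in\mathsf{A}_{n-m}^{(j)}}} (u_{\ul\alpha_\ell,m}\otimes v_{\ul\beta_j,n-m}) \paren*{\p_{\ux_\ell}^{\ul\alpha_\ell}f^{(\ell)}\otimes \p_{\ux_j'}^{\ul\beta_j}g^{(j)}}.
\end{equation*}
Multiplication by $\delta(X_1-X_{\ell+1})$ then needs to be shown to be a well-defined operation on the distributional coefficient $u_{\ul\alpha_\ell,m}\otimes v_{\ul\beta_j,n-m}\in\Sc'(\R^k)$. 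For this I would invoke H\"ormander's criterion (\cref{prop:H_crit}): since $\WF(\delta(X_1-X_{\ell+1}))$ is concentrated on the conormal bundle to $\{x_1=x_{\ell+1}\}$ with $\xi_1 = -\xi_{\ell+1}\neq 0$ and zero in the other coordinates, a conflicting pair with the tensor-product wave front set would force a covector $(\ux_k,\uxi_k)\in\WF(u_{\ul\alpha_\ell,m}\otimes v_{\ul\beta_j,n-m})$ with $\xi_1 \neq 0$ and $\xi_{j}=0$ for all other $j\neq 1$, or $\xi_{\ell+1}\neq 0$ and all other $\xi_j=0$. This is exactly what the non-vanishing pair property from the induction hypothesis forbids on each factor, so the product is well-defined as a tempered distribution (temperedness following from density matrix tensor-product estimates in the Schwartz topology). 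This produces a finite new set of multi-indices $\mathsf{A}_{n+1}^{(k)}$ and new distributional coefficients of the form $\delta(X_1-X_{\ell+1})(u_{\ul\alpha_\ell,m}\otimes v_{\ul\beta_j,n-m})$.

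Finally I would verify the two remaining assertions. The wave front set property \ref{item:Wn_wd_2} for these new coefficients follows from the H\"ormander wave front set bound for products: any covector in the wave front set of the product lies in the sum of wave front sets (including the $\{\xi=0\}$ singular set of a smooth factor), which by the induction hypothesis and the explicit form of $\WF(\delta(X_1-X_{\ell+1}))$ is easily checked to satisfy the non-vanishing pair property in the full $k$-tuple. Membership in $\L_{gmp}(\Sc(\R^k),\Sc'(\R^k))$ for $\wt{\W}_{n+1}^{(k)}$ I would verify directly from \cref{def:gmp}: unpacking the pairing $\ipp{\wt{\W}_{n+1}^{(k)}f^{(k)},(\cdot)\otimes g^{(k)}(\ux_\alpha',\cdot)}$, the $\delta(X_1-X_{\ell+1})$ composition reduces the verification to the good mapping property of the $\wt{\W}_m^{(\ell)}$ and $\wt{\W}_{n-m}^{(j)}$ already available by induction, together with smoothness of the map $f\mapsto f(x,x)$ from $\Sc(\R^2)$ to $\Sc(\R)$. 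The main technical obstacle here, which I expect will require the most care, is bookkeeping the wave front directions through the tensor product in mixed coordinates so as to cleanly invoke H\"ormander's criterion; this is precisely the purpose of the non-vanishing pair property and is the reason for carrying \ref{item:Wn_wd_2} in the induction rather than just \ref{item:Wn_wd_1}.
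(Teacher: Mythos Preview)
Your overall architecture matches the paper's: strong induction on $n$, Leibniz rule for the $-i\partial_{x_1}$ term, H\"ormander's criterion \cref{prop:H_crit} to make sense of $\delta(X_1-X_{\ell+1})(\wt{\W}_m^{(\ell)}\otimes\wt{\W}_{n-m}^{(j)})$, and the non-vanishing pair property carried through the induction precisely so that the wave front set bookkeeping goes through. That part is fine.

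There is, however, a real gap at the temperedness step. H\"ormander's criterion only produces an element of $\D'(\R^k)$, not of $\Sc'(\R^k)$; your phrase ``temperedness following from density matrix tensor-product estimates in the Schwartz topology'' is not an argument. The paper handles this carefully and somewhat indirectly: rather than upgrading the H\"ormander product, it \emph{first} uses the good mapping property of $\wt{\W}_m^{(\ell)}$ and $\wt{\W}_{n-m}^{(j)}$ from the induction hypothesis to build, for each $f^{(k)}\in\Sc(\R^k)$, a tempered distribution $u_{f^{(k)}}$ by the explicit formula
\[
\ipp{u_{f^{(k)}},g^{(k)}} \coloneqq \int_{\R}dx\, \bar{\Phi}_{\wt{\W}_m^{(\ell)},1}\bigl(\Psi_{\wt{\W}_{n-m}^{(j)},1}(f^{(k)},g^{(k)})(\cdot,x;\cdot,x)\bigr)(x;x),
\]
where the $\Phi$ and $\Psi$ maps come from \cref{def:gmp}. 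This is manifestly tempered and continuous in $f^{(k)}$, and one checks directly that $f^{(k)}\mapsto u_{f^{(k)}}$ has the good mapping property. Only \emph{afterwards} does the paper verify, via a localization and Fourier-analytic argument invoking the uniqueness criterion \cref{lem:p_uniq}, that $u_{f^{(k)}}$ agrees with the H\"ormander product. In other words, the good mapping property is not merely the last thing to be checked---it is the mechanism by which temperedness is obtained in the first place. Your plan treats gmp as a separate verification at the end and leaves temperedness unexplained; you should reorganize so that the gmp of the inductive pieces is what \emph{produces} the tempered operator, with the H\"ormander-product identification coming second.
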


\begin{remark}
In other words, \ref{item:Wn_wd_1} means that $\widetilde{\W}_{n}^{(k)}$ can be written as a linear combination of terms, where each term consists of a differential operator left-composed with a distributional multiplication operator. The motivation for the non-vanishing pair property is to exploit the fact that the products of delta functions in \cref{eq:Wn_recur} do not have the same arguments.
\end{remark}

\begin{proof}[Proof of \cref{lem:Wn_wd}]
We prove the assertion by strong induction on $n \geq 1$.  The base case, namely that the claims hold for $n=1$, is clear. Next, let $n \geq 1$ and suppose that for every $k \in \N$, we have that
\begin{equation}
\widetilde{\W}_{1}^{(k)}, \ldots,\widetilde{\W}_{n}^{(k)}\in \L_{gmp}(\Sc(\R^{k}),\Sc'(\R^{k}))
\end{equation}
are defined according to \cref{eq:Wn_base} and \cref{eq:Wn_recur} and satisfy the properties \ref{item:Wn_wd_1} and \ref{item:Wn_wd_2}. We will show that for any $k \in \N$, the observable $\widetilde{\W}_{n+1}^{(k)}$ is a well-defined element of $\L_{gmp}(\Sc(\R^{k}),\Sc'(\R^{k}))$ and satisfies the properties \ref{item:Wn_wd_1} and \ref{item:Wn_wd_2}.  We organize our argument into several steps:

\medskip
\noindent \textbf{Step I:} We first prove \ref{item:Wn_wd_1}. If $\mathsf{A}_{n}^{(k)}\subset \N_{0}^{k}$ is a finite subset of multi-indices such that
\begin{equation}
\widetilde{\W}_{n}^{(k)}f^{(k)} = \sum_{\ul{\alpha}_{k}\in\mathsf{A}_{n}^{(k)}} u_{\ul{\alpha}_{k},n}\p_{\ux_{k}}^{\ul{\alpha}_{k}}f^{(k)}, \qquad \forall f^{(k)}\in\Sc(\R^{k}),
\end{equation}
where $u_{\ul{\alpha}_{k},n}\in\Sc'(\R^{k})$, then by the product rule,
\begin{equation}
(-i\p_{x_{1}})\widetilde{\W}_{n}^{(k)}f^{(k)} = \sum_{\ul{\alpha}_{k}\in\mathsf{A}_{n}^{(k)}} \paren*{(-i\p_{x_{1}}u_{\ul{\alpha}_{k},n})\p_{\ux_{k}}^{\ul{\alpha}_{k}}f^{(k)} -i u_{\ul{\alpha}_{k},n}\p_{x_{1}}\p_{\ux_{k}}^{\ul{\alpha}_{k}}f^{(k)}}, \qquad \forall f^{(k)}\in\Sc(\R^{k}).
\end{equation}
Let $\mathsf{A}_{m}^{(\ell)}$ and $\mathsf{A}_{n-m}^{(j)}$ be finite subsets of $\N_{0}^{\ell}$ and $\N_{0}^{j}$, respectively, such that
\begin{align}
\widetilde{\W}_{m}^{(\ell)}f^{(\ell)} &= \sum_{\ul{\alpha}_{\ell}\in\mathsf{A}_{m}^{(\ell)}} u_{\ul{\alpha}_{\ell},m}\p_{\ux_{\ell}}^{\ul{\alpha}_{\ell}}f^{(\ell)}, \qquad \forall f^{(\ell)}\in\Sc(\R^{\ell}) \\
\widetilde{\W}_{n-m}^{(j)}f^{(j)} &= \sum_{\ul{\alpha}_{j}\in\mathsf{A}_{n-m}^{(j)}} u_{\ul{\alpha}_{j},n-m}\p_{\ux_{j}}^{\ul{\alpha}_{j}}f^{(j)}, \qquad \forall f^{(j)}\in\Sc(\R^{j}),
\end{align}
where $u_{\ul{\alpha}_{\ell},m}\in\Sc'(\R^{\ell})$ and $u_{\ul{\alpha}_{j},n-m}\in\Sc'(\R^{j})$. Define the set
\begin{equation}\label{anm}
\mathsf{A}_{n,m}^{(k)} \coloneqq \mathsf{A}_{m}^{(\ell)}\times\mathsf{A}_{n-m}^{(j)} \subseteq \N_{0}^{\ell}\times \N_{0}^{j}
\end{equation}
so that
\begin{equation}
\label{eq:Wmn_ot_rep}
\paren*{\widetilde{\W}_{m}^{(\ell)} \otimes \widetilde{\W}_{n-m}^{(j)}}f^{(k)} = \sum_{(\ul{\alpha}_{\ell},\ul{\alpha}_{j})\in\mathsf{A}_{n,m}^{(k)}} \paren*{u_{\ul{\alpha}_{\ell},m}\otimes u_{\ul{\alpha}_{j},n-m}} \paren*{\p_{\ux_{\ell}}^{\ul{\alpha}_{\ell}}\otimes \p_{\ux_{j}}^{\ul{\alpha}_{j}}}f^{(k)}, \qquad \forall f^{(k)} \in\Sc(\R^{k}).
\end{equation}
Hence, to prove the claim, it suffices to show that
\begin{equation}\label{delta_def}
\delta(X_{1}-X_{\ell+1})\paren*{\widetilde{\W}_{m}^{(\ell)}\otimes \widetilde{\W}_{n-m}^{(j)}}
\end{equation}
is well-defined in $\L(\Sc(\R^{k}), \Sc'(\R^{k}))$, and that for all $f^{(k)}\in\Sc(\R^{k})$, \eqref{delta_def} admits the representation
\begin{equation}\label{delta_rep}
\paren*{\delta(X_{1}-X_{\ell+1})\paren*{\widetilde{\W}_{m}^{(\ell)} \otimes \widetilde{\W}_{n-m}^{(j)}}}f^{(k)} = \hspace{-5mm} \sum_{(\ul{\alpha}_{\ell},\ul{\alpha}_{j}) \in\mathsf{A}_{n,m}^{(k)}} \delta(x_{1}-x_{\ell+1})\paren*{u_{\ul{\alpha}_{\ell},m}\otimes u_{\ul{\alpha}_{j},n-m}}\paren*{\p_{\ul{x}_{\ell}}^{\ul{\alpha}_{\ell}} \otimes \p_{\ul{x}_{j}}^{\ul{\alpha}_{j}}}f^{(k)},
\end{equation}
where $\delta(x_{1}-x_{\ell+1})(u_{\ul{\alpha}_{\ell},m}\otimes u_{\ul{\alpha}_{j},n-m})$ is well-defined in $\Sc'(\R^{k})$. We will do this in two steps:

\begin{itemize}
\item First, we will show that \eqref{delta_def} admits the representation \eqref{delta_rep} for all $f^{(k)}\in\Sc(\R^{k})$, and that $\delta(x_{1}-x_{\ell+1})(u_{\ul{\alpha}_{\ell},m}\otimes u_{\ul{\alpha}_{j},n-m})\in\mathcal{D}'(\R^{k})$ in the H\"ormander product sense of \cref{prop:H_crit}.
\item Second, we will show that the products are, in fact, tempered distributions.
\end{itemize}

To show that the product of distributions
\begin{equation}
\delta(x_{1}-x_{\ell+1})\paren*{\widetilde{\W}_{m}^{(\ell)}\otimes \widetilde{\W}_{n-m}^{(j)}}(f^{(k)})
\end{equation}
is well-defined in $\D'(\R^{k})$ for every $f^{(k)}\in \Sc(\R^{k})$, it suffices by H\"{o}rmander's criterion (\cref{prop:H_crit}) to show that
\begin{equation}
(\ux_{k},\uxi_{k})\in \WF(\delta(x_{1}-x_{\ell+1})) \Longrightarrow (\ux_{k},-\uxi_{k})\notin \WF\paren*{\paren*{\widetilde{\W}_{m}^{(\ell)}\otimes \widetilde{\W}_{n-m}^{(j)}}f^{(k)}}.
\end{equation}
By \cref{lem:wf_del}, which computes the wave front set of $\delta(x_{1}-x_{\ell+1})$, we need to show that if $\xi_{1}\neq 0$, then
\begin{equation}\label{eq:wf_goal}
((x_{1},\ux_{2;\ell},x_{1},\ux_{\ell+2;k}), (\xi_{1},\ul{0}_{2;\ell},-\xi_{1},\ul{0}_{\ell+2;k})) \notin \WF\paren*{\paren*{\widetilde{\W}_{m}^{(\ell)}\otimes \widetilde{\W}_{n-m}^{(j)}}f^{(k)}}.
\end{equation}
Since for any $(\ul{\alpha}_{\ell},\ul{\alpha}_{j})\in\mathsf{A}_{n,m}^{(k)}$ and for any $g^{(k)}\in\Sc(\R^{k})$, we have the inclusion
\begin{equation}
\WF\paren*{\paren*{u_{\ul{\alpha}_{\ell},m}\otimes u_{\ul{\alpha}_{j},n-m}}g^{(k)}} \subset \WF\paren*{u_{\ul{\alpha}_{\ell},m}\otimes u_{\ul{\alpha}_{j},n-m}},
\end{equation}
by \cref{prop:wf_prop}\ref{inclusion}, it follows from \cref{prop:wf_prop}\ref{item:wf_prop_sum} and \eqref{eq:Wmn_ot_rep} that
\begin{equation}
\WF\paren*{\paren*{\widetilde{\W}_{m}^{(\ell)} \otimes \widetilde{\W}_{n-m}^{(j)}}f^{(k)}} \subset \bigcup_{(\ul{\alpha}_{\ell},\ul{\alpha}_{j})\in \mathsf{A}_{n,m}^{(k)}} \WF\paren*{u_{\ul{\alpha}_{\ell},m}\otimes u_{\ul{\alpha}_{j},n-m}}, \qquad \forall f^{(k)}\in\Sc(\R^{k}).
\end{equation}
Now by \cref{prop:wf_prop}\ref{item:wf_prop_tp}, we have that
\begin{equation}\label{eq:wf_set_ub}
\begin{split}
\WF\paren*{u_{\ul{\alpha}_{\ell},m}\otimes u_{\ul{\alpha}_{j},n-m}} &\subset \paren*{\WF\paren*{u_{\ul{\alpha}_{\ell},m}} \times \WF\paren*{u_{\ul{\alpha}_{j},n-m}}} \\
&\phantom{=} \cup \paren*{\supp\paren*{u_{\ul{\alpha}_{\ell},m}}\times\{\ul{0}_\ell\}} \times \WF\paren*{u_{\ul{\alpha}_{j},n-m}} \\
&\phantom{=} \cup \WF\paren*{u_{\ul{\alpha}_{\ell},m}} \times \paren*{\supp\paren*{u_{\ul{\alpha}_{j},n-m}}\times\{\ul{0}_j\}}.
\end{split}
\end{equation}
Note that we abuse notation with the cartesian products on the right-hand side of the preceding inclusion in the following sense: we denote an element of $\WF(u_{\ul{\alpha}_{\ell},m}) \times \WF(u_{\ul{\alpha}_{j},n-m})$ by
\begin{equation}
(\ux_{\ell},\ux_{\ell+1;k},\uxi_{\ell},\uxi_{\ell+1;k}), 
\end{equation}
where
\[
(\ux_{\ell},\uxi_{\ell}) \in \WF(u_{\ul{\alpha}_{\ell},m}), \quad (\ux_{\ell+1;k},\uxi_{\ell+1;k}) \in \WF(u_{\ul{\alpha}_{j},n-m})
\]
and similarly for elements of $(\supp(u_{\ul{\alpha}_{\ell},m})\times \{\ul{0}_\ell\})\times\WF(u_{\ul\alpha_{j},n-m})$ and $\WF(u_{\ul\alpha_{\ell},m})\times (\supp(u_{\ul\alpha_{j},n-m})\times\{\ul{0}_j\})$. We now consider three cases based on the values of the sets $\WF(u_{\ul{\alpha}_{\ell},m})$ and $\WF(u_{\ul{\alpha}_{j},n-m})$.
\begin{enumerate}[(i)]
\item %Case 1
Suppose that $\WF\paren*{u_{\ul{\alpha}_{\ell},m}}$ and $\WF\paren*{u_{\ul{\alpha}_{j},n-m}}$ are both empty. Then it follows readily from \cref{eq:wf_set_ub} that
\begin{equation}
\WF\paren*{u_{\ul{\alpha}_{\ell},m}\otimes u_{\ul{\alpha}_{j},n-m}} = \emptyset,
\end{equation}
and so \cref{eq:wf_goal} is satisfied.

\item %Case 2
Without loss of generality, suppose that $\WF\paren*{u_{\ul{\alpha}_{j},n-m}}=\emptyset$ and that $\WF\paren*{u_{\ul{\alpha}_{\ell},m}}\neq\emptyset$ and satisfies the non-vanishing pair property. Then by \cref{eq:wf_set_ub}, we have
\begin{equation}
\WF\paren*{u_{\ul{\alpha}_{\ell},m} \otimes u_{\ul{\alpha}_{j},n-m}} \subset \WF\paren*{u_{\ul{\alpha}_{\ell},m}} \times \paren*{\supp\paren*{u_{\ul{\alpha}_{j},n-m}}\times\{\ul{0}_j\}}.
\end{equation}
Observe that the set on the right-hand side does not contain an element of the form 
\begin{equation}
((x_{1},\ux_{2;\ell},x_{1},\ux_{\ell+2;k}), (\xi_{1},\ul{0}_{2;\ell},-\xi_{1},\ul{0}_{\ell+2;k})), \qquad \xi_{1}\neq 0.
\end{equation}
since $\WF( u_{\ul{\alpha}_{\ell},m} )$ is nonempty and satisfies the non-vanishing pair property.
\item %Case 3
Suppose that both $\WF\paren*{u_{\ul{\alpha}_{\ell},m}}$ and $\WF\paren*{u_{\ul{\alpha}_{j},n-m}}$ are both nonempty and satisfy the non-vanishing pair property. Then if $(\ux_{k},\uxi_{k}) \in \WF\paren*{u_{\ul{\alpha}_{\ell},m} \otimes u_{\ul{\alpha}_{j},n-m}}$, one of three sub-cases must occur:
\begin{enumerate}[1.]
\item
$\uxi_{\ell}=0$ and there exists $l_{1},l_{2}\in\{\ell+1,\ldots,\ell+j\}$ such that $\xi_{l_1} \neq 0$ and $\xi_{l_2} \neq 0$.
\item
$\uxi_{\ell+1;k}=0$ and there exists $l_{1},l_{2}\in\{1,\ldots,\ell\}$ such that $\xi_{l_1} \neq 0$ and $\xi_{l_2} \neq 0$.
\item
$\uxi_{\ell}\neq 0$, $\uxi_{\ell+1;k}\neq 0$, and there exist $l_{1},l_{2}\in\{1,\ldots,\ell\}$ and $l_{3},l_{4}\in\{\ell+1,\ldots,k\}$ such that $\xi_{l_{1}} \neq 0$, $\xi_{l_{2}} \neq 0$, $\xi_{l_{3}} \neq 0$ and $\xi_{l_{4}} \neq 0$.
\end{enumerate}
Any of these three sub-cases guarantees \eqref{eq:wf_goal}.
\end{enumerate}
To summarize, we have shown that
\begin{equation}
((x_{1},\ux_{2;\ell},x_{1},\ux_{\ell+2;k}), (\xi_{1},\ul{0}_{2;\ell},-\xi_{1},\ul{0}_{\ell+2;k})) \notin \bigcup_{(\ul{\alpha}_{\ell},\ul{\alpha}_{j}) \in \mathsf{A}_{n,m}^{(k)}} \WF\paren*{u_{\ul{\alpha}_{\ell},m} \otimes u_{\ul{\alpha}_{j},n-m}},
\end{equation}
and therefore
\begin{equation}
\delta(x_{1}-x_{2})\paren*{\widetilde{\W}_{m}^{(\ell)} \otimes \widetilde{\W}_{n-m}^{(j)}}(f^{(k)})
\end{equation}
is defined in $\D'(\R^{k})$ according to \cref{prop:H_crit}, proving the first claim.

We now show that this H\"ormander product is tempered:
\begin{equation}
\delta(x_{1}-x_{\ell+1})\paren*{\widetilde{\W}_{m}^{(\ell)} \otimes \widetilde{\W}_{n-m}^{(j)}}(f^{(k)}) \in \Sc'(\R^{k}), \qquad \forall f^{(k)}\in\Sc'(\R^{k}).
\end{equation}
Since by the inductive hypothesis, $\widetilde{\W}_{m}^{(\ell)}$ and $\widetilde{\W}_{n-m}^{(j)}$ satisfy the good mapping property of \cref{def:gmp} (and we refer to  \cref{ssec:GMP} for more details on the good mapping property), there exist unique continuous bilinear maps
\begin{equation}
\label{eq:Phi_W_def}
\Phi_{\wt{\W}_m^{(\ell)},\alpha}: \Sc(\R^\ell)^2\rightarrow \Sc_{(x_\alpha,x_\alpha')}(\R^2), \enspace \Phi_{\wt{\W}_{n-m}^{(j)},\beta}: \Sc(\R^j)^2\rightarrow\Sc_{(x_\beta,x_\beta')}(\R^2), \qquad \alpha\in\N_{\leq\ell}, \enspace \beta\in\N_{\leq j}
\end{equation}
identifiable with the maps
\begin{equation}
\begin{split}
\Sc(\R^\ell)^2\rightarrow \Sc_{x_\alpha'}(\R;\Sc_{x_\alpha}'(\R)), \quad (f^{(\ell)}, g^{(\ell)}) &\mapsto \ipp*{\wt{\W}_m^{(\ell)} f^{(\ell)},(\cdot)\otimes_\alpha g^{(\ell)}(\cdot,x_\alpha',\cdot)}_{\Sc'(\R^\ell)-\Sc(\R^\ell)}, \\
\Sc(\R^j)^2\rightarrow \Sc_{x_\beta'}(\R;\Sc_{x_\beta}'(\R)), \quad (f^{(j)}, g^{(j)}) &\mapsto \ipp*{\wt{\W}_m^{(j)} f^{(j)},(\cdot)\otimes_\beta g^{(j)}(\cdot,x_\beta',\cdot)}_{\Sc'(\R^j)-\Sc(\R^j)},
\end{split}
\end{equation}
via
\begin{equation}
\label{eq:Phi_W_id}
\begin{split}
&\int_{\R}dx_\alpha\Phi_{\wt{\W}_m^{(\ell)},\alpha}(f^{(\ell)},g^{(\ell)})(x_\alpha;x_\alpha')\phi(x_\alpha) = \ipp*{\wt{\W}_n^{(\ell)} f^{(\ell)}, \phi\otimes_\alpha g^{(\ell)}(\cdot,x_\alpha',\cdot)}_{\Sc'(\R^\ell)-\Sc(\R^\ell)}, \\
&\int_{\R}dx_\beta \Phi_{\wt{\W}_{n-m}^{(j)},\beta}(f^{(j)},g^{(j)})(x_\beta;x_\beta')\phi(x_\beta) = \ipp*{\wt{\W}_{n-m}^{(j)} f^{(j)}, \phi\otimes_\beta g^{(j)}(\cdot,x_\beta',\cdot)}_{\Sc'(\R^j)-\Sc(\R^j)}, 
\end{split}
\end{equation}
for $\phi\in\Sc(\R)$, respectively. Above, the notation $(\cdot) \otimes_\alpha g^{(\ell)}(\cdot,x_\alpha',\cdot)$ and $(\cdot)\otimes_{\beta} g^{(j)}(\cdot,x_\beta',\cdot)$ is defined by
\begin{equation}
\label{eq:otimes_sub}
\begin{split}
\paren*{\phi \otimes_\alpha g^{(\ell)}(\cdot,x_\alpha',\cdot)}(\ul{y}_\alpha) &\coloneqq \phi(y_\alpha) g^{(\ell)}(\ul{y}_{1;\alpha-1},x_\alpha',\ul{y}_{\alpha+1;\ell}), \qquad \forall \ul{y}_\ell\in\R^\ell \\
\paren*{\phi\otimes_\beta g^{(j)}(\cdot,x_\beta',\cdot)}(\ul{y}_\beta) &\coloneqq \phi(y_\beta)g^{(j)}(\ul{y}_{1;\beta-1},x_\beta',\ul{y}_{\beta+1;j}), \qquad \forall \ul{y}_j\in\R^j
\end{split}, \qquad \forall \phi\in\Sc(\R).
\end{equation}
Now given $f^{(k)},g^{(k)}\in\Sc(\R^k)$, we see that
\begin{equation}
(\ux_\ell,\ux_\ell') \mapsto \Phi_{\wt{\W}_{n-m}^{(j)},1}(f^{(k)}(\ux_\ell,\cdot), g^{(k)}(\ux_\ell',\cdot)) \in \Sc_{(\ux_\ell,\ux_\ell')}(\R^{2\ell}; \Sc_{(y_1,y_1')}(\R^2)).
\end{equation}
Thus, we can define a map $\Psi_{\wt{\W}_{n-m}^{(j)},1}:\Sc(\R^k)^2 \rightarrow \Sc(\R^{2(\ell+1)})$
\begin{equation}
\begin{split}
&\Psi_{\wt{\W}_{n-m}^{(j)},1}(f^{(k)},g^{(k)})(\ux_{\ell+1};\ux_{\ell+1}') \\
&\coloneqq \Phi_{\wt{\W}_{n-m}^{(j)},1}(f^{(k)}(\ux_\ell,\cdot),g^{(k)}(\ux_\ell',\cdot))(x_{\ell+1};x_{\ell+1}'), \qquad \forall (\ux_{\ell+1},\ux_{\ell+1}')\in\R^{2(\ell+1)},
\end{split}
\end{equation}
which is bilinear and continuous. Now since $\Phi_{\wt{\W}_m^{(\ell)},1}:\Sc(\R^{\ell})^2\rightarrow\Sc(\R^2)$ is bilinear and continuous, the universal property of the tensor product and the identification of $\Sc(\R^{2\ell})\cong \Sc(\R^\ell) \hat{\otimes} \Sc(\R^\ell)$ implies that there exists a unique continuous linear map
\begin{equation}
\bar{\Phi}_{\wt{\W}_m^{(\ell)},1}: \Sc(\R^{2\ell})\rightarrow \Sc(\R^2),
\end{equation}
with the property that
\begin{equation}
\label{eq:Phibar_tp}
\Phi_{\wt{\W}_m^{(\ell)},1}(f^{(\ell)},g^{(\ell)}) = \bar{\Phi}_{\wt{\W}_m^{(\ell)},1}(f^{(\ell)}\otimes g^{(\ell)}), \qquad \forall f^{(\ell)},g^{(\ell)}\in\Sc(\R^\ell).
\end{equation}
Hence, the function
\begin{equation*}
\bar{\Phi}_{\wt{\W}_{m,1}^{(\ell)},1}\paren*{\Psi_{\wt{\W}_{n-m}^{(j)},1}(f^{(k)},g^{(k)})(\cdot,x_{\ell+1};\cdot,x_{\ell+1}')}(x_1;x_1'), \qquad \forall (x_1,x_{\ell+1},x_1',x_{\ell+1}')\in\R^{4}
\end{equation*}
defines an element of $\Sc(\R^4)$, and moreover,
\begin{equation}
\begin{split}
&\Sc(\R^k)^2 \rightarrow \Sc(\R^{4}), \\
&(f^{(k)},g^{(k)})\mapsto \bar{\Phi}_{\wt{\W}_{m,1}^{(\ell)},1}\paren*{\Psi_{\wt{\W}_{n-m}^{(j)},1}(f^{(k)},g^{(k)})(\cdot,x_{\ell+1};\cdot,x_{\ell+1}')}(x_1;x_1'), \qquad \forall (x_1,x_{\ell+1},x_1', x_{\ell+1}')\in\R^{4}
\end{split}
\end{equation}
is a continuous bilinear map. Thus, we may define a functional $u_{f^{(k)}}$ on $\Sc(\R^k)$ by
\begin{equation}\label{eq:map_func}
\begin{split}
&\ipp{u_{f^{(k)}},g^{(k)}}_{\Sc'(\R^{k})-\Sc(\R^{k})} \\
&\coloneqq \int_{\R^2}dx_1 dx_{\ell+1} \delta(x_1-x_{\ell+1}) \bar{\Phi}_{\wt{\W}_m^{(\ell)},1}\paren*{\Psi_{\wt{\W}_{n-m}^{(j)},1}(f^{(k)},g^{(k)})(\cdot,x_{\ell+1};\cdot,x_{\ell+1})}(x_1;x_1), \qquad \forall g^{(k)}\in\Sc(\R^k).
\end{split}
\end{equation}
This functional $u_{f^{(k)}}$ is evidently linear, and it follows from the continuity of $\bar{\Phi}_{\wt{\W}_m^{(\ell)},1}$ and $\Psi_{\wt{\W}_{n-m}^{(j)},1}$ that it is continuous $\Sc(\R^{k}) \rightarrow\C$, hence a tempered distribution. Furthermore, we claim that the map
\begin{equation}
\Sc(\R^k)\rightarrow \Sc'(\R^k), \qquad f^{(k)} \mapsto \ipp{u_{f^{(k)}}, \cdot}_{\Sc'(\R^k)-\Sc(\R^k)}
\end{equation}
satisfies the good mapping property. Indeed, replacing $f^{(k)}, g^{(k)}$ with $\pi f^{(k)}, \pi g^{(k)}$, for any $\pi\in\Ss_k$, it suffices to verify this assertion for the case $\alpha=1$ in \cref{def:gmp}. Additionally, it suffices by the universal property of the tensor product and the Schwartz kernel theorem isomorphism $\Sc(\R^k)\cong \Sc(\R^\ell)\hat{\otimes}\Sc(\R^j)$ to show that there is a (necessarily unique) continuous, multilinear map
\begin{equation*}
\Phi_{u}: \paren*{\Sc(\R^\ell)\times\Sc(\R^j)}^2 \rightarrow\Sc(\R^2),
\end{equation*}
such that for $f^{(\ell)},g^{(\ell)}\in\Sc(\R^\ell)$ and $f^{(j)},g^{(j)}\in\Sc(\R^j)$, 
\begin{equation}
\begin{split}
&\int_{\R}dx \Phi_u(f^{(\ell)},f^{(j)}, g^{(\ell)}, g^{(j)})(x;x')\phi(x) \\
&= \ipp{u_{f^{(\ell)}\otimes f^{(j)}}, \phi\otimes (g^{(\ell)}\otimes g^{(j)})(x',\cdot)}_{\Sc'(\R^k)-\Sc(\R^k)}, \qquad \forall \phi\in\Sc(\R), \ x'\in\R.
\end{split}
\end{equation}
Now for any $\phi\in\Sc(\R)$, the bilinearity of $\Phi_{\wt{\W}_{n-m}^{(j)},1}$ implies
\begin{equation}
\begin{split}
&\Phi_{\wt{\W}_{n-m}^{(j)},1}\paren*{(f^{(\ell)}\otimes f^{(j)})(\ux_\ell,\cdot), (\phi \otimes (g^{(\ell)}\otimes g^{(j)})(x',\cdot))(\ux_\ell',\cdot)}(x_{\ell+1};x_{\ell+1}')\\
&=f^{(\ell)}(\ux_\ell) \phi(x_1')g^{(\ell)}(x',\ux_{2;\ell}') \Phi_{\wt{\W}_{n-m}^{(j)},1}\paren*{f^{(j)},g^{(j)}}(x_{\ell+1};x_{\ell+1}'), \qquad \forall (\ux_{\ell+1},\ux_{\ell+1}',x')\in\R^{2\ell+3}.
\end{split}
\end{equation}
Hence,
\begin{equation}
\begin{split}
&\Psi_{\wt{\W}_{n-m}^{(j)},1}\paren*{f^{(\ell)}\otimes f^{(j)}, \phi \otimes (g^{(\ell)}\otimes g^{(j)})(x',\cdot)}(\ux_{\ell+1};\ux_{\ell+1}')\\
&=f^{(\ell)}(\ux_\ell)\phi(x_1')g^{(\ell)}(x',\ux_{2;\ell}')\Phi_{\wt{\W}_{n-m}^{(j)},1}(f^{(j)},g^{(j)})(x_{\ell+1};x_{\ell+1}'), \qquad \forall (\ux_{\ell+1},\ux_{\ell+1}')\in\R^{2(\ell+1)}.
\end{split}
\end{equation}
For $x'\in\R$ and $\phi\in\Sc(\R)$, define the function $\tl{g}_{x',\phi}^{(\ell)}\in \Sc(\R^\ell)$ by
\begin{equation}
\label{eq:tlf_tlg}
\begin{split}
\tl{g}_{x',\phi}^{(\ell)}(\ux_\ell') &\coloneqq \phi(x_1')g^{(\ell)}(x',\ux_{2;\ell}'), \qquad \forall \ux_\ell'\in\R^{\ell},
\end{split}
\end{equation}
so that we can write
\begin{equation}
\label{eq:Psi_prod}
\begin{split}
&\Psi_{\wt{\W}_{n-m}^{(j)},1}\paren*{f^{(\ell)}\otimes f^{(j)}, \phi \otimes (g^{(\ell)}\otimes g^{(j)})(x',\cdot)}(\ux_{\ell+1};\ux_{\ell+1}') \\
&= (f^{(\ell)}\otimes \tl{g}_{x',\phi}^{(\ell)})(\ux_\ell;\ux_\ell')\Phi_{\wt{\W}_{n-m}^{(j)},1}(f^{(j)},g^{(j)})(x_{\ell+1};x_{\ell+1}'), \qquad \forall (\ux_{\ell+1},\ux_{\ell+1}')\in\R^{2(\ell+1)}.
\end{split}
\end{equation}
Therefore, using identity \eqref{eq:Psi_prod} and the linearity of the map $\bar{\Phi}_{\wt{\W}_m^{(\ell)},1}$, we see that
\begin{align}
&\bar{\Phi}_{\wt{\W}_m^{(\ell)},1}\paren*{\Psi_{\wt{\W}_{n-m}^{(j)},1}\paren*{f^{(\ell)}\otimes f^{(j)}, \phi \otimes (g^{(\ell)}\otimes g^{(j)})(x',\cdot)}(\cdot,x_{\ell+1};\cdot,x_{\ell+1}')}(x_1;x_1') \nonumber  \\
&= \Phi_{\wt{\W}_{n-m}^{(j)},1}(f^{(j)},g^{(j)})(x_{\ell+1};x_{\ell+1}')\bar{\Phi}_{\wt{\W}_m^{(\ell)},1}\paren*{f^{(\ell)}\otimes \tl{g}_{x',\phi}^{(\ell)}}(x_1;x_1')  \nonumber\\
&=\Phi_{\wt{\W}_{n-m}^{(j)},1}(f^{(j)},g^{(j)})(x_{\ell+1};x_{\ell+1}')\Phi_{\wt{\W}_m^{(\ell)},1}(f^{(\ell)}, \tl{g}_{x',\phi}^{(\ell)})(x_1;x_1'),
\end{align}
where the ultimate equality follows from the property \eqref{eq:Phibar_tp}. Recalling the definition \eqref{eq:map_func} for $u_{f^{(k)}}$, we obtain that
\begin{align}
&\ipp*{u_{f^{(\ell)}\otimes f^{(j)}}, \phi \otimes (g^{(\ell)}\otimes g^{(j)})(x',\cdot)}_{\Sc'(\R^k)-\Sc(\R^k)} \nonumber\\
&= \int_{\R^2}dx_1dx_{\ell+1}\delta(x_1-x_{\ell+1}) \bar{\Phi}_{\wt{\W}_m^{(\ell)},1}\paren*{\Psi_{\wt{\W}_{n-m}^{(j)},1}\paren*{f^{(\ell)}\otimes f^{(j)}, \phi\otimes (g^{(\ell)}\otimes g^{(j)})(x',\cdot)}(\cdot,x_{\ell+1};\cdot,x_{\ell+1})}(x_1;x_1) \nonumber\\
&=\int_{\R^2}dx_1dx_{\ell+1}\delta(x_1-x_{\ell+1}) \Phi_{\wt{\W}_{n-m}^{(j)},1}(f^{(j)},g^{(j)})(x_{\ell+1};x_{\ell+1})\Phi_{\wt{\W}_m^{(\ell)},1}(f^{(\ell)}, \tl{g}_{x',\phi}^{(\ell)})(x_1;x_1) \nonumber\\
&=\int_{\R}dx\Phi_{\wt{\W}_{n-m}^{(j)},1}(f^{(j)},g^{(j)})(x;x)\Phi_{\wt{\W}_m^{(\ell)},1}(f^{(\ell)}, \tl{g}_{x',\phi}^{(\ell)})(x;x) \nonumber \\
&=\ipp*{\wt{\W}_m^{(\ell)} f^{(\ell)}, \Phi_{\wt{\W}_{n-m}^{(j)},1}(f^{(j)},g^{(j)})|_{y=y'} \tl{g}_{x',\phi}^{(\ell)}}_{\Sc'(\R^\ell)-\Sc(\R^\ell)} \nonumber,
\end{align}
where $\Phi_{\wt{\W}_{n-m}^{(j)},1}(f^{(j)},g^{(j)})|_{y=y'}$ denotes the restriction to the hyperplane $\{(y,y'):y=y'\}\subset \R^2$ and the ultimate equality follows from the definition of $\Phi_{\wt{\W}_m^{(\ell)},1}$ in \eqref{eq:Phi_W_def}. Unpacking the definition of $\tl{g}_{x',\phi}^{(\ell)}$ from \eqref{eq:tlf_tlg} and applying the definition of $\Phi_{\wt{\W}_m^{(\ell)},1}$ once more, we conclude that
\begin{align}
&\ipp*{\wt{\W}_m^{(\ell)}f^{(\ell)}, \Phi_{\wt{\W}_{n-m}^{(j)},1}(f^{(j)},g^{(j)})|_{y=y'}\tl{g}_{x',\phi}^{(\ell)}}_{\Sc'(\R^\ell)-\Sc(\R^\ell)} \nonumber\\
&= \ipp*{\wt{\W}_m^{(\ell)}f^{(\ell)}, (\phi\Phi_{\wt{\W}_{n-m}^{(j)},1}(f^{(j)},g^{(j)})|_{y=y'})\otimes g^{(\ell)}(x',\cdot)}_{\Sc'(\R^\ell)-\Sc(\R^\ell)} \nonumber\\
&=\int_{\R}dx \Phi_{\wt{\W}_m^{(\ell)},1}(f^{(\ell)},g^{(\ell)})(x;x')\phi(x)\Phi_{\wt{\W}_{n-m}^{(j)},1}(f^{(j)},g^{(j)})(x;x).
\end{align}
Therefore, the desired map $\Phi_u$ is given by
\begin{equation}
\Phi_u(f^{(\ell)},f^{(j)},g^{(\ell)},g^{(j)})(x;x') \coloneqq \Phi_{\wt{\W}_m^{(\ell)},1}(f^{(\ell)},g^{(\ell)})(x;x')\Phi_{\wt{\W}_{n-m}^{(j)},1}(f^{(j)},g^{(j)})(x;x),
\end{equation}
which is evidently multilinear and continuous $(\Sc(\R^\ell)\times\Sc(\R^j))^2\rightarrow\Sc(\R^2)$ being the composition maps. Thus, the proof that $f^{(k)}\mapsto u_{f^{(k)}}$ has the good mapping property is complete.

Lastly, we claim that $u_{f^{(k)}}$ coincides with the H\"ormander product 
\[
\delta(x_{1}-x_{\ell+1})\paren*{\widetilde{\W}_{m}^{(\ell)}\otimes\widetilde{\W}_{n-m}^{(j)}}(f^{(k)})
\]
defined above via \cref{prop:H_crit}. To prove the claim, we rely on the uniqueness criterion for the product. We set
\begin{equation}
g^{(k)}\coloneqq g^{(1)}\otimes g^{(\ell-1)}\otimes \tilde{g}^{(1)}\otimes g^{(j-1)}, \quad \phi^{(k)}\coloneqq \phi^{(1)}\otimes \phi^{(\ell-1)}\otimes\tl{\phi}^{(1)}\otimes \phi^{(j-1)}
\end{equation}
for $g^{(1)},\tilde{g}^{(1)},\phi^{(1)},\tilde{\phi}^{(1)}\in\Sc(\R)$, $g^{(\ell-1)},\phi^{(\ell-1)}\in\Sc(\R^{i-1})$, and $g^{(j-1)},\phi^{(j-1)}\in\Sc(\R^{j-1})$. By density of linear combinations of tensor products, it suffices to show that
\begin{equation}
\ipp{\F({g^{(k)}}^{2}u_{f^{(k)}}),\phi^{(k)}}_{\Sc'(\R^{k})-\Sc(\R^{k})}=\ipp{\F(g^{(k)}\delta(x_{1}-x_{\ell+1})) \ast \F(g^{(k)}(\widetilde{\W}_{m}^{(\ell)}\otimes \widetilde{\W}_{n-m}^{(j)})(f^{(k)})), \phi^{(k)}}_{\Sc'(\R^{k})-\Sc(\R^{k})},
\end{equation}
since pointwise equality then follows from the localization lemma (see Chapter 2, \S{2} of \cite{HormPDOI}) together with the continuity of the Fourier transforms involved. This is then an exercise, the details of which we leave to the reader, relying on the good mapping property and the distributional Plancherel theorem.

\medskip
\noindent \textbf{Step II:} The property \ref{item:Wn_wd_2} is readily established by the arguments in the previous step and the fact that $\mathsf{A}_{n,m}^{(k)}$ defined in \eqref{anm} has finite cardinality, it then follows from another application of \cref{prop:wf_prop}\ref{item:wf_prop_sum}  that either 
\[
\WF\paren*{\widetilde{\W}_{n+1}^{(k)}f^{(k)}}=\emptyset
\]
or 
\[
\WF\paren*{\widetilde{\W}_{n+1}^{(k)}f^{(k)}}\neq \emptyset \textup{ and satisfies the non-vanishing pair property}.
\]

\medskip
\noindent \textbf{Step III:} Next, we show that the map $f^{(k)}\mapsto \widetilde{\W}_{n+1}^{(k)}f^{(k)}$ satisfies the good mapping property for every $k\in\N$. Since differentiation is a continuous endomorphism of $\Sc'(\R^{k})$, it is immediate from the induction hypothesis that
\begin{equation}
-i\p_{x_{1}}\widetilde{\W}_{n}^{(k)}\in \L_{gmp}(\Sc(\R^{k}),\Sc'(\R^{k})).
\end{equation}
Since $\L_{gmp}(\Sc(\R^k),\Sc'(\R^k))$ is a vector space, it remains to show that
\begin{equation}
\label{eq:hp_rhs}
f^{(k)}\mapsto \delta(x_{1}-x_{\ell+1})\paren*{\widetilde{\W}_{m}^{(\ell)}\otimes \widetilde{\W}_{n-m}^{(j)}}(f^{(k)})
\end{equation}
satisfies the good mapping property for every $\ell,j\in\N$ with $\ell+j=k$ and $m\in\N_{\leq n-1}$. But this follows from Step II, where we showed that $u_{f^{(k)}}$ defined in \eqref{eq:map_func} coincides with the H\"ormander product in the right-hand side of \eqref{eq:hp_rhs} and that the DVO $f^{(k)}\mapsto u_{f^{(k)}}$ defined in \eqref{eq:map_func} has the good mapping property.

\medskip
\noindent \textbf{Step IV:} Finally, we show that 
\[
\widetilde{\W}_{n}^{(k)}:\Sc(\R^{k})\rightarrow \Sc'(\R^{k})
\]
is a continuous map. As argued before, it suffices to show that the map
\begin{equation}
(f^{(\ell)},f^{(j)}) \mapsto \delta(x_{1}-x_{\ell+1})\paren*{\widetilde{\W}_{m}^{(\ell)}\otimes \widetilde{\W}_{n-m}^{(j)}}(f^{(\ell)}\otimes f^{(j)})
\end{equation}
is a continuous bilinear map $\Sc(\R^{\ell})\times\Sc(\R^{j}) \rightarrow \Sc'(\R^{k})$. Bilinearity is obvious. For continuity, suppose that $(f_{r}^{(\ell)},f_{r}^{(j)}) \rightarrow 0\in\Sc(\R^{\ell})\times\Sc(\R^{j})$ as $r\rightarrow\infty$. We need to show that for any bounded subset $\mathfrak{R}$ of $\Sc(\R^{k})$,
\begin{equation}\label{null_conv}
\lim_{r\rightarrow\infty} \sup_{g^{(k)}\in\mathfrak{R}} \left|\ipp{\delta(x_{1}-x_{\ell+1})\paren*{\widetilde{\W}_{m}^{(\ell)}\otimes \widetilde{\W}_{n-m}^{(j)}}(f_{r}^{(\ell)}\otimes f_{r}^{(j)}), g^{(k)}}_{\Sc'(\R^{k})-\Sc(\R^{k})}\right|=0.
\end{equation}
But this follows from our analysis proving the good mapping property of the map $f^{(k)} \mapsto u_{f^{(k)}}$ in Step II.
\end{proof}

We now turn to showing that only finitely many components of $\wt{\W}_{n}$ are nonzero for a given $n\in\N$. This property justifies our use of the direct sum notation.

\begin{lemma}\label{lem:aa_zero}
For all $n\in\N$, we have
\begin{equation}
\wt{\W}_{2n}^{(k)} = 0 \in \L(\mathcal{S}(\R^{k}), \mathcal{S}'(\R^{k})) \qquad k\in\N_{\geq n+1},
\end{equation}
and
\begin{equation}
\wt{\W}_{2n+1}^{(k)}=0\in \L(\Sc(\R^{k}),\Sc'(\R^{k})), \qquad k\in\N_{\geq n+2}.
\end{equation}
\end{lemma}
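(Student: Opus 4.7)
I will prove the unified bound
\begin{equation*}
\wt{\W}_n^{(k)} = 0 \in \L(\Sc(\R^k), \Sc'(\R^k)) \quad \text{whenever} \quad k > \lceil n/2 \rceil, \qquad \forall n\in\N,
\end{equation*}
from which the lemma is immediate: specializing to $n=2p$ gives $\lceil n/2\rceil = p$, so the operator vanishes for $k \geq p+1$; specializing to $n=2p+1$ gives $\lceil n/2\rceil = p+1$, so it vanishes for $k \geq p+2$. The argument is a strong induction on $n$.

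The base cases $n=1,2$ are direct. By \eqref{eq:Wn_base}, $\wt{\W}_1^{(k)} = Id_1\,\delta_{1k}$ is zero unless $k = 1 = \lceil 1/2\rceil$. Since the recursion \eqref{eq:Wn_recur} with $n=1$ has an empty sum over $m$, it reduces to $\wt{\W}_2^{(k)} = -i\p_{x_1}\wt{\W}_1^{(k)}$, which again vanishes unless $k = 1 = \lceil 2/2\rceil$.

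For the inductive step, assume the bound holds for all indices $1,\ldots,n$, and fix $k > \lceil (n+1)/2\rceil$. The differentiation term $-i\p_{x_1}\wt{\W}_n^{(k)}$ in \eqref{eq:Wn_recur} vanishes by the inductive hypothesis, since $k > \lceil (n+1)/2\rceil \geq \lceil n/2\rceil$. For each cross term $\delta(X_1-X_{\ell+1})(\wt{\W}_m^{(\ell)} \otimes \wt{\W}_{n-m}^{(j)})$ with $\ell+j = k$ and $1 \leq m \leq n-1$, bilinearity of the tensor product (and of the H\"ormander product with $\delta(X_1-X_{\ell+1})$ established in \cref{lem:Wn_wd}) forces the summand to vanish whenever either factor is zero. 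Hence by the inductive hypothesis the summand is nonzero only if $\ell \leq \lceil m/2\rceil$ \emph{and} $j \leq \lceil (n-m)/2\rceil$, in which case $k = \ell + j \leq \lceil m/2\rceil + \lceil (n-m)/2\rceil$.

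The only remaining ingredient is the elementary ceiling inequality
\begin{equation*}
\lceil m/2\rceil + \lceil (n-m)/2\rceil \leq \lceil (n+1)/2\rceil, \qquad 1 \leq m \leq n-1,
\end{equation*}
which I verify by a short case analysis on the parities of $m$ and $n-m$: both even yields $n/2$ on the left and $n/2 + 1$ on the right (since $n$ is even); both odd forces $n$ even with both sides equal to $n/2 + 1$; mixed parities force $n$ odd with both sides equal to $(n+1)/2$. This contradicts $k > \lceil (n+1)/2\rceil$ for every summand, closing the induction. No analytic input beyond \cref{lem:Wn_wd} is required; the only mildly subtle point is selecting $\lceil n/2 \rceil$ (rather than the weaker piecewise formulation stated in the lemma) as the inductive quantity, which is exactly the tight bound that makes the ceiling inequality work uniformly in $m$.
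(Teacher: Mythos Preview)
Your proof is correct and follows essentially the same strong-induction argument as the paper: use the recursion \eqref{eq:Wn_recur}, kill the differentiation term by the inductive hypothesis, and show that any nonvanishing tensor summand forces $k$ to be too small. The only difference is cosmetic---the paper treats even and odd indices separately while you package both cases into the single bound $k>\lceil n/2\rceil$ and the ceiling inequality $\lceil m/2\rceil+\lceil (n-m)/2\rceil\le\lceil (n+1)/2\rceil$, which is a tidy reformulation of the same parity case analysis.
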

\begin{proof}
We prove the lemma by strong induction on $n$. We first establish the base case $n=1$. It follows from the recursion \eqref{eq:Wn_recur} that
\begin{equation}
\widetilde{\W}_{2} = -i\p_{x_{1}}\E_{1}.
\end{equation}
Since $\E_{1}^{(k)}=0$ for $k\geq 2$, it follows that $\widetilde{\W}_{2}^{(k)}=0$ for $k\geq 2$. To see that $\wt{\W}_{3}^{(k)}=0$ for $k\geq 3$, observe that
\begin{equation}
(-i\p_{x_{1}})\wt{\W}_{2}^{(k)} = 0 \in\L(\Sc(\R^{k}),\Sc'(\R^{k})),
\end{equation}
since $\wt{\W}_{2}^{(k)}=0$. If $k\geq 3$ and $\ell,j\in\N$ satisfy $\ell+j=k$, then $\max\{\ell,j\} \geq 2$. Since $\wt{\W}_{1}^{(m)}=0$ for $m\geq 2$, we obtain that
\begin{equation}
\wt{\W}_{1}^{(\ell)}\otimes \wt{\W}_{1}^{(j)} = 0 \in\L(\Sc(\R^{k}),\Sc'(\R^{k})),
\end{equation}
which implies that $\delta(X_{1}-X_{\ell+1})\paren*{\wt{\W}_{1}^{(\ell)}\otimes\wt{\W}_{1}^{(j)}}=0$.

We now proceed to the inductive step. Let $n\in\N_{\geq 2}$ and suppose that for all integers $m\in\N_{\leq n}$,
\begin{align}
\wt{\W}_{2m}^{(k)}&=0\in\L(\Sc(\R^{k}),\Sc'(\R^{k})), \qquad \forall k\in\N_{\geq m+1} \\
\wt{\W}_{2m+1}^{(k)}&=0\in\L(\Sc(\R^{k}),\Sc'(\R^{k})), \qquad \forall k\in\N_{\geq m+2}.
\end{align}
We now need to show that these identities hold with $m = n+1$. We first handle the case of even indices. Specifically, we show that 
\[
\wt{\W}_{2(n+1)}^{(k)}=0\in\L(\Sc(\R^{k}),\Sc'(\R^{k})), \quad k\in \N_{\geq n+2}.
\]
Observe that if $k\geq n+2$, then by the induction hypothesis, $\widetilde{\W}_{2(n+1) -1}^{(k)}=0$ and therefore
\begin{equation}
-i\p_{x_{1}}\widetilde{\W}_{2(n+1) -1}^{(k)} = 0 \in\L(\Sc(\R^{k}),\Sc'(\R^{k})).
\end{equation}
We now consider the H\"ormander product terms
\begin{equation}
\delta(X_{1}-X_{\ell+1})\paren*{\wt{\W}_{m}^{(\ell)}\otimes\wt{\W}_{2n+1-m}^{(j)}}, \qquad \ell+j=k
\end{equation}
arising in the recursion relation \eqref{eq:Wn_recur} for $\wt{\W}_{2(n+1)}^{(k)}$. By symmetry, it suffices to consider the following case: if $m$ is odd (i.e. $m=2r+1$ for some $r\in\N_0$) then $2n+1-m$ is even (i.e. $2n+1-m=2r'$ for some $r'\in\N$), and we can write $n=r+r'$. By the induction hypothesis
\begin{align}
\wt{\W}_{m}^{(\ell)} &= 0 , \quad \forall \ell \in \N_{\geq r + 2}\\
\wt{\W}_{2n+1-m}^{(j)} &= 0 , \quad \forall j \in \N_{\geq r'+1}.
\end{align}
If $k\geq n + 2 = r+r'+2$, then either $\ell\geq r+2$ or $j\geq r'+1$, since if both $\ell\leq r+1$ and $j\leq r' $, then
\begin{equation}
k=\ell+j\leq r+r' + 1.
\end{equation}
Thus,
\begin{equation}
\delta(X_{1}-X_{\ell+1})\paren*{\wt{\W}_{m}^{(\ell)}\otimes\wt{\W}_{2n+1-m}^{(j)}} = 0\in\L(\Sc(\R^{k}),\Sc'(\R^{k})),
\end{equation}
and so it follows from the recursion relation \eqref{eq:Wn_recur} that $\wt{\W}_{2(n+1)}^{(k)}=0\in\L(\Sc(\R^{k}),\Sc'(\R^{k}))$ for $k\geq n+2$.

We next handle the case of odd indices, namely we show that
\begin{equation}
\wt{\W}_{2(n+1)+1}^{(k)}=0\in\L(\Sc(\R^{k}),\Sc'(\R^{k})), \qquad k\geq n+3.
\end{equation}
As before, observe that if $k\geq n+3$, then
\begin{equation}
(-i\p_{x_{1}})\wt{\W}_{2(n+1)}^{(k)} = 0\in\L(\Sc(\R^{k}),\Sc'(\R^{k}))
\end{equation}
by the result of the preceding paragraph. Now consider the H\"ormander product terms
\begin{equation}
\delta(X_{1}-X_{\ell+1})\paren*{\wt{\W}_{m}^{(\ell)}\otimes \wt{\W}_{2n+2-m}^{(j)}}
\end{equation}
in the recursion relation \eqref{eq:Wn_recur} for $\wt{\W}_{2(n+1)+1}^{(k)}$. We consider two cases:
\begin{enumerate}[C1.]
\item
Suppose $m$ is odd (i.e. $m=2r+1$ for some $r\in\N_{0}$). Then $2n+2-m$ is odd (i.e. $2n+2-m=2r'+1$ for some $r'\in\N_0$), and we can write $2(n+1)+1=2(r+r'+1)+1$. If $k\geq (r+r'+1)+2$, then either $\ell\geq r+2$ or $j\geq r'+2$, since if both $\ell\leq r+1$ and $j\leq r'+1$, we have that
\begin{equation}
k=\ell+j \leq (r+r'+1)+1.
\end{equation}
Hence applying the induction hypothesis to obtain $\wt{\W}_{m}^{(\ell)}=0$ or $\wt{\W}_{2n+2-m}^{(j)}=0$, respectively, we conclude that
\begin{equation}
\delta(X_{1}-X_{\ell+1})\paren*{\wt{\W}_{m}^{(\ell)}\otimes\wt{\W}_{2n+2-m}^{(j)}} = 0 \in\L(\Sc(\R^{k}),\Sc'(\R^{k})).
\end{equation}
\item
Suppose $m$ is even (i.e. $m=2r$ for some $r\in\N$). Then $2n+2-m$ is even (i.e. $2n+2-m=2r'$ for some $r'\in\N$), and we can write $2n+2=2(r+r')$. Once again, if $k\geq r+r'+1$, then either $\ell\geq r+1$ or $j\geq r'+1$, since if $\ell\leq r$ and $j\leq r'$, then
\begin{equation}
k=\ell+j\leq r+r'.
\end{equation}
Hence, we obtain again that
\begin{equation}
\delta(X_{1}-X_{\ell+1})\paren*{\wt{\W}_{m}^{(\ell)}\otimes\wt{\W}_{2n+2-m}^{(j)}}=0\in\L(\Sc(\R^{k}),\Sc'(\R^{k})).
\end{equation}
by the induction hypothesis.
\end{enumerate}
In now follows from the recursion relation \eqref{eq:Wn_recur} that $\wt{\W}_{2(n+1)+1}^{(k)}=0\in\L(\Sc(\R^{k}),\Sc'(\R^{k}))$ for $k\geq n+3$, completing the proof of the inductive step.
\end{proof}

\subsection{Step 2: Defining self-adjoint operators}
Our goal is now to define the self-adjoint elements $\W_{n,sa}$, proving the following:

\begin{prop}\label{prop:Wn_sa}
For each $n\in\N$, there exists an element
\[
\W_{n,sa} \in \bigoplus_{k=1}^{\infty}\L_{gmp,*}(\Sc(\R^{k}),\Sc'(\R^{k})) ,
\]
given by
\begin{equation}\label{wn_sa_def}
\W_{n,sa} \coloneqq \frac{1}{2}\paren*{\widetilde{\W}_{n}+\widetilde{\W}_{n}^{*}}.
\end{equation}
\end{prop}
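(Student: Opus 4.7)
The proposition has three assertions that I would separate immediately. First, the self-adjointness $\W_{n,sa}^* = \W_{n,sa}$ is automatic from the involution property $A^{**} = A$ in Lemma~\ref{lem:dvo_adj}. Second, the claim that $\W_{n,sa}$ belongs to a direct sum (i.e.\ has only finitely many nonzero components) follows from Lemma~\ref{lem:aa_zero}: whenever $\wt\W_n^{(k)} = 0$ for $k$ exceeding the threshold in that lemma, one has $\wt\W_n^{(k),*} = 0$ as well. The substantive content is thus to show that $\wt\W_n^{(k),*} \in \L_{gmp}(\Sc(\R^k),\Sc'(\R^k))$ for every $k,n \in \N$, after which the good mapping property for $\W_{n,sa}^{(k)}$ follows by linearity.

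The plan is to carry out this verification by strong induction on $n$. The base case $n = 1$ is immediate because $\wt\W_1 = \E_1$ is manifestly self-adjoint, so $\wt\W_1^{(k),*} = \wt\W_1^{(k)}$ and the good mapping property is inherited. For the inductive step, I would assume $\wt\W_m^{(k),*}$ has the good mapping property for every $m \leq n$ and $k \in \N$. One cannot define $\wt\W_{n+1}^{(k),*}$ by literally taking the formal adjoint of the recursion \eqref{eq:Wn_recur}, because for $f^{(k)} \in \Sc(\R^k)$ the tempered distribution $\delta(X_1 - X_{\ell+1}) f^{(k)}$ does not lie in the domain of $\wt\W_m^{(\ell),*} \otimes \wt\W_{n-m}^{(j),*}$. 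Instead, I would define $\wt\W_{n+1}^{(k),*}$ by duality as the unique operator in $\L(\Sc(\R^k),\Sc'(\R^k))$ whose Schwartz kernel is obtained from the Schwartz kernel of $\wt\W_{n+1}^{(k)} \in \L_{gmp}(\Sc(\R^k),\Sc'(\R^k))$ (Proposition~\ref{prop:W_S1}) by coordinate swap and conjugation, in accordance with the convention of Lemma~\ref{lem:dvo_adj}.

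To verify the good mapping property, I would fix $\alpha \in \N_{\leq k}$ (which may be taken as $1$ after a coordinate permutation) and seek a continuous bilinear map $\Phi_{\wt\W_{n+1}^{(k),*},1}:\Sc(\R^k)^2 \to \Sc(\R^2)$ identified with
\begin{equation*}
(f,g) \mapsto \ipp{\wt\W_{n+1}^{(k),*} f, (\cdot) \otimes_1 g(x_1', \cdot)}_{\Sc'(\R^k)-\Sc(\R^k)} = \ipp{f, \wt\W_{n+1}^{(k)}((\cdot) \otimes_1 g(x_1', \cdot))}_{\Sc'(\R^k)-\Sc(\R^k)}.
\end{equation*}
The right-hand side then splits, via the recursion \eqref{eq:Wn_recur}, into a derivative term and a finite sum of H\"ormander-product terms. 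The derivative contribution $-i\p_{x_1}\wt\W_n^{(k)}$ is handled using the good mapping property of $\wt\W_n^{(k)}$ from Proposition~\ref{prop:W_S1}, together with the continuity of $\p_{x_1}$ on $\Sc(\R^k)$. For each H\"ormander-product term $\delta(X_1 - X_{\ell+1})(\wt\W_m^{(\ell)} \otimes \wt\W_{n-m}^{(j)})$, I would recycle the bilinear machinery constructed in Step~I of the proof of Lemma~\ref{lem:Wn_wd}: the maps $\Phi_{\wt\W_m^{(\ell)},1}$, $\Psi_{\wt\W_{n-m}^{(j)},1}$, and their diagonal restrictions supply the required $\Sc(\R^2)$-valued factor in the extracted coordinate, while pairing against $f$ in the remaining $k-1$ coordinates is absorbed using the inductive hypothesis that $\wt\W_m^{(\ell),*}$ and $\wt\W_{n-m}^{(j),*}$ possess the good mapping property.

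The principal obstacle I anticipate is the case analysis for the H\"ormander-product terms: the structure of the resulting $\Phi_{\wt\W_{n+1}^{(k),*},1}(f,g)$ depends on whether the extraction coordinate $\alpha = 1$ coincides with one of the gluing coordinates $\{1,\ell+1\}$ of $\delta(X_1-X_{\ell+1})$, and on whether the test function $\phi \otimes_1 g(x_1',\cdot)$ is routed into the $\wt\W_m^{(\ell)}$ factor, the $\wt\W_{n-m}^{(j)}$ factor, or straddles both. In each subcase I would invoke the wave front set information from Lemma~\ref{lem:Wn_wd}, in particular the non-vanishing pair property, to ensure (via Proposition~\ref{prop:H_crit}) that the distributional products that arise are well-defined and tempered, and that the resulting bilinear map lands in $\Sc(\R^2)$ rather than a broader space of distributions, completing the inductive step and hence the proof.
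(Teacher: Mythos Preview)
Your overall structure matches the paper's exactly: self-adjointness from the involution in Lemma~\ref{lem:dvo_adj}, finite support from Lemma~\ref{lem:aa_zero}, and the good mapping property of $\wt\W_n^{(k),*}$ by strong induction on $n$, with the recursion splitting the work into a derivative term and the H\"ormander-product terms.

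The one substantive divergence is your final paragraph. You anticipate needing wave front set analysis and H\"ormander's criterion to control the adjoints of the H\"ormander-product terms, but the paper (Lemma~\ref{lem:Wn_sa_recur}) shows this is unnecessary. Since $\delta(X_1-X_{\ell+1})(\wt\W_m^{(\ell)}\otimes\wt\W_{n-m}^{(j)})$ is already a well-defined element of $\L_{gmp}(\Sc(\R^k),\Sc'(\R^k))$ by Proposition~\ref{prop:W_S1}, its adjoint is automatically a well-defined DVO and no new distributional products arise. The paper's argument is pure duality: unpack the pairing $\ip{f^{(k)}}{\delta(X_1-X_{\ell+1})(\wt\W_m^{(\ell)}\otimes\wt\W_{n-m}^{(j)})(\phi\otimes g(x_1',\cdot))}$ using the explicit action \eqref{eq:map_func} already established in Step~I of Lemma~\ref{lem:Wn_wd}, then recognize the result as $\wt\W_m^{(\ell),*}$ applied to the Schwartz function $f^{(\ell)}\,\ol{\Phi_{\wt\W_{n-m}^{(j)},1}(g^{(j)},\ol{f^{(j)}})|_{y=y'}}$. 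At that point the inductive hypothesis on $\wt\W_m^{(\ell),*}$ alone delivers the desired $\Sc(\R^2)$-valued map. Note in particular that only \emph{one} adjoint good mapping property is invoked, not both: for the $\wt\W_{n-m}^{(j)}$ factor the ordinary (non-adjoint) good mapping property from Proposition~\ref{prop:W_S1} suffices. Your anticipated case analysis on the extraction coordinate likewise collapses to the single remark that $\alpha\leq\ell$ and $\ell<\alpha\leq k$ are symmetric.

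Two minor slips: for the derivative term you cite the good mapping property of $\wt\W_n^{(k)}$ from Proposition~\ref{prop:W_S1}, but what is actually needed is the good mapping property of $\wt\W_n^{(k),*}$, which comes from your inductive hypothesis (the paper writes $(-i\p_{x_1}\wt\W_n^{(k)})^*=\wt\W_n^{(k),*}(-i\p_{x_1})$ via Lemma~\ref{lem:adj_comp}); and your displayed duality identity drops the complex conjugations mandated by Lemma~\ref{lem:dvo_adj}, which the paper tracks carefully since they determine which tensor factor receives the adjoint.
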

\begin{remark}
Recall that 
\[
(\widetilde{\W}_{n}^{*})^{(k)} \coloneqq  \widetilde{\W}_{n}^{(k),*}.
\]
is the adjoint operator defined in \cref{lem:dvo_adj}. 
\end{remark}

It follows readily from \cref{lem:dvo_adj} that 
\[
\W_{n,sa}\in \bigoplus_{k=1}^{\infty}\L(\Sc(\R^{k}),\Sc'(\R^{k}))
\]
and is self-adjoint. Thus, in order to prove \cref{prop:Wn_sa}, we only need to verify each $\W_{n,sa}$ satisfies the good mapping property, for which it suffices by linearity and the fact that each $\widetilde{\W}_{n}^{(k)}\in\L_{gmp}(\Sc(\R^{k}),\Sc'(\R^{k}))$ to prove that
\begin{align}\label{eq:Wn_a_gmp}
\widetilde{\W}_{n}^{(k),*}\in\L_{gmp}(\Sc(\R^{k}),\Sc'(\R^{k})), \qquad \forall k\in\N.
\end{align}
Using the recursion \eqref{eq:Wn_recur}, the linearity of the adjoint operation, and the fact that
\begin{equation}
\paren*{-i\p_{x_1}\wt{\W}_n^{(k)}}^* = \wt{\W}_n^{(k),*}(-i\p_{x_1}) \in \L_{gmp}(\Sc(\R^k),\Sc'(\R^k))
\end{equation}
by \cref{lem:adj_comp}, we just need to show that
\begin{equation}
\label{eq:HP_gmp}
\paren*{\delta(X_1-X_{\ell+1})\paren*{\wt{\W}_{m}^{(\ell)}\otimes\wt{\W}_{n-m}^{(j)}}}^* \in \L_{gmp}(\Sc(\R^k),\Sc'(\R^k))
\end{equation}
for any $m\in\N_{\leq n-1}$ and $\ell,j\in\N$ satisfying $\ell+j=k$. We prove this assertion by another induction argument.

\begin{lemma}
\label{lem:Wn_sa_recur}
Let $n\in\N_{\geq 2}$, and suppose that $\widetilde{\W}_{1}^{*},\ldots,\widetilde{\W}_{n-1}^{*}\in \bigoplus_{k=1}^{\infty}\L_{gmp}(\Sc(\R^{k}),\Sc'(\R^{k}))$. Then \eqref{eq:HP_gmp} holds.
\end{lemma}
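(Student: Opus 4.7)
The plan is to prove this lemma by an argument parallel to Step~II of the proof of \cref{lem:Wn_wd}, which established the good mapping property of $H := \delta(X_1-X_{\ell+1})(\wt{\W}_m^{(\ell)}\otimes\wt{\W}_{n-m}^{(j)})$, but now working with the adjoints of the factors. Since $m,n-m\in\N_{\leq n-1}$, the induction hypothesis supplies, for each $\beta_1\in\N_{\leq\ell}$ and $\beta_2\in\N_{\leq j}$, continuous bilinear maps $\Phi_{\wt{\W}_m^{(\ell),*},\beta_1}:\Sc(\R^\ell)^2\to\Sc(\R^2)$ and $\Phi_{\wt{\W}_{n-m}^{(j),*},\beta_2}:\Sc(\R^j)^2\to\Sc(\R^2)$ arising from the good mapping property of $\wt{\W}_m^{(\ell),*}$ and $\wt{\W}_{n-m}^{(j),*}$ (cf.~\eqref{eq:Phi_W_def}--\eqref{eq:Phi_W_id}), and these will serve as the building blocks used to assemble $\Phi_{H^*,\alpha}$.

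I would first compute the Schwartz kernel of $H^*$ using the relations $K_{H^*}(\ux;\uy)=\overline{K_H(\uy;\ux)}$ and $K_{\wt{\W}^{(\cdot),*}}(\ux;\uy)=\overline{K_{\wt{\W}^{(\cdot)}}(\uy;\ux)}$, obtaining formally
\[
K_{H^*}(\ux_k;\uy_k) = \delta(y_1-y_{\ell+1})\,K_{\wt{\W}_m^{(\ell),*}}(\ux_\ell;\uy_\ell)\,K_{\wt{\W}_{n-m}^{(j),*}}(\ux_{\ell+1;k};\uy_{\ell+1;k}),
\]
with the key structural difference from $K_H$ being that the delta function now constrains the primed (input) variables $\uy$ rather than the unprimed $\ux$. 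Well-definedness of this distributional product follows from a wave-front analysis directly analogous to Step~I of \cref{lem:Wn_wd}, applied to the kernels of the adjoints. Next, fixing $\alpha\in\N_{\leq k}$ and working first on pure tensors $f=f^{(\ell)}\otimes f^{(j)}$, $g=g^{(\ell)}\otimes g^{(j)}$ (then extending by the universal property of the tensor product and the Schwartz kernel theorem isomorphism $\Sc(\R^k)\cong\Sc(\R^\ell)\hat\otimes\Sc(\R^j)$), I would carry out a case analysis distinguishing $\alpha\in\{1,\ell+1\}$ (where the delta function interacts with the free slot) from $\alpha\notin\{1,\ell+1\}$, producing in each case an explicit candidate $\Phi_{H^*,\alpha}(f,g)\in\Sc(\R^2)$ built out of products, partial evaluations on diagonals, and contractions of the $\Phi_{\wt{\W}_m^{(\ell),*},\beta_1}$ and $\Phi_{\wt{\W}_{n-m}^{(j),*},\beta_2}$ for suitable $\beta_1,\beta_2$. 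Continuity of $(f,g)\mapsto\Phi_{H^*,\alpha}(f,g)$ as a bilinear map into $\Sc(\R^2)$ follows from the continuity of these constituent maps, together with the continuity of pointwise multiplication and partial evaluation on $\Sc$.

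The final step is to verify that this explicit candidate is indeed the Schwartz function identified with $H^*$ by \cref{def:gmp}, i.e.~that
\[
\int_{\R}dx_\alpha\,\phi(x_\alpha)\,\Phi_{H^*,\alpha}(f,g)(x_\alpha;x_\alpha') = \ipp*{H^* f,\phi\otimes_\alpha g(\cdot,x_\alpha',\cdot)}_{\Sc'(\R^k)-\Sc(\R^k)}
\]
for every $\phi\in\Sc(\R)$ and $x_\alpha'\in\R$. I would establish this by rewriting the right-hand side via the adjoint relation of \cref{lem:dvo_adj}, which expresses the pairing in terms of $H$ acting on a Schwartz function, and then invoking the explicit formula for $\Phi_H$ derived in Step~II of \cref{lem:Wn_wd} together with the defining properties of $\Phi_{\wt{\W}_m^{(\ell),*},\beta_1}$ and $\Phi_{\wt{\W}_{n-m}^{(j),*},\beta_2}$. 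The chief obstacle throughout is that the formal ``right-to-left'' adjoint formula displayed in \eqref{eq:Wn_rec_adj}, namely $(\wt{\W}_m^{(\ell),*}\otimes\wt{\W}_{n-m}^{(j),*})\circ\delta(X_1-X_{\ell+1})$, has no literal meaning on Schwartz input -- the delta-multiplied function fails to lie in the domain of the tensor-product operator -- so every step of the above plan must be routed through the dual pairing, with the delta function emerging on the primed side of the kernel of $H^*$ rather than as a multiplier on Schwartz functions.
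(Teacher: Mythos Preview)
Your plan would likely work but takes a detour compared with the paper's argument. The paper's proof is essentially your ``verification'' step alone: it never computes the kernel of $H^*$ directly or re-does any wave-front analysis. Instead, for fixed $\alpha$ (reduced by permutation to $\alpha=1$, with $\alpha>\ell$ handled by symmetry) it immediately applies the adjoint relation of \cref{lem:dvo_adj} to write
\[
\ip{v_{f^{(\ell)}\otimes f^{(j)}}}{\phi\otimes(g^{(\ell)}\otimes g^{(j)})(x_1',\cdot)}
=\ipp*{H\bigl(\phi\otimes g^{(\ell)}(x_1',\cdot)\otimes g^{(j)}\bigr),\,\overline{f^{(\ell)}\otimes f^{(j)}}}_{\Sc'-\Sc},
\]
then invokes the explicit action of $H$ established in \eqref{eq:map_func}. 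One further adjoint flip brings in $\wt{\W}_m^{(\ell),*}$, and the induction hypothesis (gmp of $\wt{\W}_m^{(\ell),*}$) produces the required Schwartz function directly. So your steps 1--3 (kernel of $H^*$, wave-front set, candidate building) are redundant once you recognize that step 4 already delivers the answer.

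Two smaller points. First, the building blocks actually used are $\Phi_{\wt{\W}_m^{(\ell),*},1}$ (from the induction hypothesis) together with $\Phi_{\wt{\W}_{n-m}^{(j)},1}$ (the \emph{non}-adjoint map, already available from \cref{lem:Wn_wd}), not both adjoint maps as you state; the delta constrains an \emph{input} coordinate of $H^*$, so after tracing out the $j$-block one needs an input slot free, which corresponds to the gmp of $\wt{\W}_{n-m}^{(j)}$ rather than of its adjoint. Second, your proposed case split $\alpha\in\{1,\ell+1\}$ versus $\alpha\notin\{1,\ell+1\}$ is off: the delta $\delta(y_1-y_{\ell+1})$ lives on the input side $\uy$, whereas $\alpha$ indexes the free output coordinate $x_\alpha$; the only relevant dichotomy is $\alpha\le\ell$ versus $\alpha>\ell$.
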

\begin{proof} 
Let $k\in\N$. Given $f^{(k)}\in\Sc(\R^{k})$, we define the tempered distribution $v_{f^{(k)}}$ by
\begin{equation}
g^{(k)}\mapsto \ip{f^{(k)}}{\delta(X_{1}-X_{\ell+1})\paren*{\widetilde{\W}_{m}^{(\ell)} \otimes \widetilde{\W}_{n-m}^{(j)}}g^{(k)}},
\end{equation}
where the composition $\delta(X_{1}-X_{\ell+1})(\widetilde{\W}_{m}^{(\ell)} \otimes \widetilde{\W}_{n-m}^{(j)})$ is well-defined by \cref{lem:Wn_wd}. It is easy to check that the map
\begin{equation}
\Sc(\R^{k}) \rightarrow \Sc'(\R^{k}), \qquad f^{(k)}\mapsto v_{f^{(k)}}
\end{equation}
is a continuous linear map, so it remains for us to verify the good mapping property. As in the proof of \cref{lem:Wn_wd}, it suffices to show that for any $\alpha\in\N_{\leq k}$, the map
\begin{equation}\label{eq:gmp_goal}
\begin{split}
&(\Sc(\R^\ell)\times\Sc(\R^j))^2 \rightarrow \Sc_{x_\alpha'}(\R;\Sc_{x_\alpha}'(\R)) \\
&(f^{(\ell)},f^{(j)},g^{(\ell)},g^{(j)}) \mapsto \ip{v_{f^{(\ell)}\otimes f^{(j)}}}{(\cdot)\otimes_\alpha (g^{(\ell)}\otimes g^{(j)})(\cdot,x_\alpha',\cdot)}, \qquad x_\alpha'\in\R.
\end{split}
\end{equation}
may be identified with a (necessarily unique) continuous map $(\Sc(\R^{\ell})\times\Sc(\R^{j}))^{2} \rightarrow \Sc(\R^{2})$, which is antilinear in the $f^{(\ell)},f^{(j)}$ variables and linear in the $g^{(\ell)}, g^{(j)}$ variables. The reader will recall that the notation $\otimes_\alpha$ is defined in \eqref{eq:otimes_sub}. To simplify the presentation, we will assume $\alpha\leq \ell$. The case $\ell<\alpha\leq k$ follows mutatis mutandis. Moreover, by replacing $f^{(\ell)}, g^{(\ell)}$ with $\pi f^{(\ell)}, \pi g^{(\ell)}$, for $\pi\in\Ss_\ell$, we may assume that $\alpha=1$. For any $\phi\in\Sc(\R)$, we have by the distributional Fubini-Tonelli theorem that,
\begin{align}
&\ipp*{\ip{v_{f^{(\ell)}\otimes f^{(j)}}}{(\cdot)\otimes (g^{(\ell)}\otimes g^{(j)})(x_1',\cdot)}, \phi}_{\Sc'(\R)-\Sc(\R)} \nonumber\\
& =\ip{v_{f^{(\ell)}\otimes f^{(j)}}}{\phi\otimes (g^{(\ell)}\otimes g^{(j)})(x_1',\cdot)} \nonumber\\
&= \ip{f^{(\ell)}\otimes f^{(j)}}{\delta(x_{1}-x_{\ell+1})\paren*{\widetilde{\W}_{m}^{(\ell)}\otimes \widetilde{\W}_{n-m}^{(j)}}\paren*{\phi \otimes g^{(\ell)}(x_1',\cdot)\otimes g^{(j)}}} \nonumber\\
&= \ipp*{\delta(x_{1}-x_{\ell+1})\paren*{\widetilde{\W}_{m}^{(\ell)}\otimes \widetilde{\W}_{n-m}^{(j)}}\paren*{\phi \otimes g^{(\ell)}(x_1',\cdot)\otimes g^{(j)}}, \ol{f^{(\ell)}\otimes f^{(j)}}}_{\Sc'(\R^k)-\Sc(\R^k)}. \label{eq:tp_adj}
\end{align}
Using the identifications of \eqref{eq:Phi_W_id} and the action of the DVO $\delta(X_1-X_{\ell+1})(\wt{\W}_m^{(\ell)}\otimes\wt{\W}_{n-m}^{(j)})$ given by \eqref{eq:map_func} in Step II of the proof of \cref{lem:Wn_wd}, we find that
\begin{align}
\eqref{eq:tp_adj} &= \int_{\R}dx_1 \Phi_{\wt{\W}_{n-m}^{(j)},1}(g^{(j)},\ol{f^{(j)}})(x_1;x_1) \Phi_{\wt{\W}_m^{(\ell)},1}(\phi\otimes g^{(\ell)}(x_1',\cdot),\ol{f^{(\ell)}})(x_1;x_1) \nonumber\\
&=\ip{f^{(\ell)} \ol{\Phi_{\W_{n-m}^{(j)},1}(g^{(j)},\ol{f^{(j)}})|_{y=y'}}}{\wt{\W}_m^{(\ell)}\paren*{\phi\otimes g^{(\ell)}(x_1',\cdot)}} \nonumber\\
&=\ip{\wt{\W}_m^{(\ell),*} \paren*{f^{(\ell)} \ol{\Phi_{\W_{n-m}^{(j)},1}(g^{(j)},\ol{f^{(j)}})|_{y=y'}}}}{\phi\otimes g^{(\ell)}(x_1',\cdot)}, \label{eq:Wm_adj_gmp}
\end{align}
where the ultimate equality follows from the definition of the adjoint of a DVO, see \cref{lem:dvo_adj}. As before, the notation $|_{y=y'}$ denotes restriction to the hyperplane $\{(y,y'):y=y'\}\subset\R^2$. By the induction hypothesis, $\wt{\W}_m^{(\ell),*}$ possesses the good mapping property. Therefore, for any $\alpha\in\N_{\leq \ell}$, we can uniquely identify the map
\begin{equation}
\Sc(\R^\ell)^2\rightarrow\Sc_{x_\alpha'}(\R;\Sc_{x_\alpha}'(\R)), \qquad (\tl{f}^{(\ell)},\tl{g}^{(\ell)}) \mapsto \ipp*{\wt{\W}_m^{(\ell),*}\tl{f}^{(\ell)}, (\cdot)\otimes_\alpha \tl{g}^{(\ell)}(\cdot,x_\alpha',\cdot)}_{\Sc'(\R^\ell)-\Sc(\R^\ell)}
\end{equation}
with a continuous bilinear map
\begin{equation}
\begin{split}
&\Phi_{\wt{\W}_m^{(\ell),*},\alpha}: \Sc(\R^\ell)^2\rightarrow\Sc_{(x_\alpha,x_\alpha')}(\R^2) \\
&\int_{\R}dx_\alpha\Phi_{\wt{\W}_m^{(\ell),*},\alpha}(\tl{f}^{(\ell)},\tl{g}^{(\ell)})(x_\alpha;x_\alpha')\phi(x_\alpha) = \ipp*{\wt{\W}_m^{(\ell),*}\tl{f}^{(\ell)}, \phi\otimes_\alpha \tl{g}^{(\ell)}(\cdot,x_\alpha',\cdot)}_{\Sc'(\R^\ell)-\Sc(\R^\ell)}, \qquad \phi\in\Sc(\R).
\end{split}
\end{equation}
Hence,
\begin{align}
\eqref{eq:Wm_adj_gmp} &= \ol{\ipp*{\wt{\W}_m^{(\ell),*} \paren*{f^{(\ell)} \ol{\Phi_{\W_{n-m}^{(j)},1}(g^{(j)},\ol{f^{(j)}})|_{y=y'}}}, \ol{\phi\otimes g^{(\ell)}(x_1',\cdot)}}}_{\Sc'(\R^\ell)-\Sc(\R^\ell)} \nonumber\\
&=\ol{\int_{\R}dx_1 \Phi_{\wt{\W}_m^{(\ell),*},1}(f^{(\ell)} \ol{\Phi_{\W_{n-m}^{(j)},1}(g^{(j)},\ol{f^{(j)}})|_{y=y'}}, \ol{g^{(\ell)}})(x_1;x_1')\ol{\phi(x_1)}} \nonumber\\
&=\int_{\R}dx_1 \ol{\Phi_{\wt{\W}_m^{(\ell),*},1}(f^{(\ell)} \ol{\Phi_{\W_{n-m}^{(j)},1}(g^{(j)},\ol{f^{(j)}})|_{y=y'}}, \ol{g^{(\ell)}})}(x_1;x_1') \phi(x_1).
\end{align}
Defining the map
\begin{equation}
(f^{(\ell)},f^{(j)},g^{(\ell)},g^{(j)}) \mapsto \ol{\Phi_{\wt{\W}_m^{(\ell),*},1}(f^{(\ell)} \ol{\Phi_{\W_{n-m}^{(j)}}(g^{(j)},\ol{f^{(j)}})|_{y=y'}}, \ol{g^{(\ell)}})}
\end{equation}
yields the desired conclusion, being the composition of continuous maps, antilinear in the $f^{(\ell)},f^{(j)}$ variables, and linear in the $g^{(\ell)},g^{(j)}$ variables.
\end{proof}

Since the base case $\wt{\W}_1^{(k),*}\in \L_{gmp}(\Sc(\R^k),\Sc'(\R^k))$ for every $k\in\N$ is trivial, the lemma and the remarks preceding it imply the \cref{prop:Wn_sa}.

\subsection{Step 3: Bosonic symmetrization} \label{subsec:bosonic_sym}
We now modify the definition of the operators $\W_{n,sa}$ from the previous subsection in order to obtain a bosonic operator which generates the same trace functional as $\W_{n,sa}$ when evaluated on elements of $\G_\infty^*$. As an immediate consequence of \cref{lem:sym_op_space}, we obtain \cref{prop:Wn_con}, completing the main objective of \cref{sec:con_GP_W}. We conclude this subsection by explicitly computing $\W_{3}$  and $\W_{4}$.

\begin{ex}[Computation of $\W_{3}$]
From the recursion \eqref{eq:Wn_recur}, we have that
\begin{align}
\widetilde{\W}_{3}^{(k)} &= (-i\p_{x_{1}})\widetilde{\W}_{2}^{(k)} + \kappa\sum_{\ell+j=k}\delta(X_{1}-X_{\ell+1})\paren*{\widetilde{\W}_{1}^{(\ell)}\otimes \widetilde{\W}_{1}^{(j)}} \nonumber\\
&=
\begin{cases}
(-i\p_{x_{1}})^{2}, & {k=1} \\
\kappa\delta(X_{1}-X_{2}) Id_{2} = \kappa\delta(X_{1}-X_{2}), & {k=2} \\
0_{k}, & {k\geq 3}.
\end{cases}
\end{align}
Since the components $\widetilde{\W}_{3}^{(k)}$ are already self-adjoint and bosonic, it follows that
\begin{equation}
\W_{3}=\widetilde{\W}_{3} = \paren*{(-i\p_{x_{1}})^{2}, \kappa\delta(X_{1}-X_{2}), 0_3,\ldots}.
\end{equation}
\end{ex}

\begin{ex}[Computation of $\W_{4}$]\label{ex:W_4}
Similarly, from the recursion \eqref{eq:Wn_recur}, we have that
\begin{align}
\widetilde{\W}_{4}^{(k)} &= (-i\p_{x_{1}})\widetilde{\W}_{3}^{(k)} + \kappa\sum_{m=1}^{2}\sum_{\ell+j=k}\delta(X_{1}-X_{\ell+1})\paren*{\widetilde{\W}_{m}^{(\ell)}\otimes \widetilde{\W}_{3-m}^{(j)}}.
\end{align}
If $k=1$, then
\begin{equation}
\widetilde{\W}_{4}^{(1)} = (-i\p_{x_{1}})\widetilde{\W}_{3}^{(1)} = (-i\p_{x_{1}})^{3}=\W_{4}^{(1)},
\end{equation}
since $(-i\p_{x_{1}})^{3}$ is self-adjoint and bosonic. If $k=2$, then
\begin{align}
\widetilde{\W}_{4}^{(2)} &= (-i\p_{x_{1}})\widetilde{\W}_{3}^{(2)} + \kappa\delta(X_{1}-X_{2})\paren*{\widetilde{\W}_{1}^{(1)} \otimes \widetilde{\W}_{2}^{(1)}} + \kappa\delta(X_{1}-X_{2})\paren*{\widetilde{\W}_{2}^{(1)}\otimes \widetilde{\W}_{1}^{(1)}} \nonumber\\
&= \kappa\paren*{(-i\p_{x_{1}})\delta(X_{1}-X_{2}) + \delta(X_{1}-X_{2})\paren*{Id_{1}\otimes (-i\p_{x})} + \delta(X_{1}-X_{2})\paren*{(-i\p_{x})\otimes Id_{1}}} \nonumber\\
&=-i\kappa\paren*{\p_{x_{1}}\delta(X_{1}-X_{2})+\delta(X_{1}-X_{2})\paren*{\p_{x_{1}}+\p_{x_{2}}}}.
\end{align}
The term $-i\delta(X_1-X_2)(\p_{x_1}+\p_{x_2})$ is evidently bosonic, and it is self-adjoint since 
\[
\comm{\p_{x_{1}}+\p_{x_{2}}}{\delta(X_{1}-X_{2})}=0.
\]
For the term $-i\p_{x_1}\delta(X_1-X_2)$, \cref{lem:adj_comp} implies that the adjoint is given by $-i\delta(X_{1}-X_{2})\p_{x_{1}}$, and therefore
\begin{align}
&\frac{\kappa}{2}\Sym_2\paren*{(-i\p_{x_{1}})\delta(X_{1}-X_{2})+\delta(X_{1}-X_{2})(-i\p_{x_{1}})} \nonumber\\
 &= \frac{\kappa}{4}\paren*{(-i\p_{x_{1}}-i\p_{x_{2}})\delta(X_{1}-X_{2}) + \delta(X_{1}-X_{2})(-i\p_{x_{1}}-i\p_{x_{2}})} \nonumber\\
&= \frac{\kappa}{2}(-i\p_{x_{1}}-i\p_{x_{2}})\delta(X_{1}-X_{2}),
\end{align}
where we use that $\delta$ is an even distribution and again that $\comm{\p_{x_{1}}+\p_{x_{2}}}{\delta(X_{1}-X_{2})}=0$. We conclude that
\begin{equation}
\W_{4}^{(2)} = \frac{3\kappa}{2}(-i\p_{x_{1}}-i\p_{x_{2}})\delta(X_{1}-X_{2}).
\end{equation}
Finally, it is evident that $\W_{4}^{(k)} = 0_{k}$ for $k\geq 3$.
\end{ex}

\section{The correspondence: $\W_n$ and $w_n$}
\label{sec:cor}
\subsection{Multilinear forms $w_n$}\label{ssec:cor_multi}
In this subsection, we analyze the structure of the nonlinear operators $w_n$ as sums of restricted multilinear forms. For each $k\in\N$, we define a $(2k-1)$-$\C$-linear operator
\begin{equation}
 w_{n}^{(k)}: \Sc(\R)^{k} \times \Sc(\R)^{k-1}\rightarrow \Sc(\R), \quad (\phi_{1},\ldots,\phi_{k}; \psi_{2},\ldots,\psi_{k}) \mapsto w_{n}^{(k)}[\phi_{1},\ldots,\phi_{k};\psi_{2},\ldots,\psi_{k}],
\end{equation}
recursively by
\begin{equation}\label{eq:wn(k)_recur}
\begin{split}
&w_{1}^{(k)}[\phi_{1},\ldots,\phi_k;\psi_2,\ldots,\psi_k]  \coloneqq \phi_{1} \delta_{k1},\\
&w_{n+1}^{(k)}[\phi_{1},\ldots,\phi_{k};\psi_{2},\ldots,\psi_{k}] \\
&= (-i\p_{x})w_{n}^{(k)}[\phi_{1},\ldots,\phi_{k};\psi_{2},\ldots,\psi_{k}] \\
&\phantom{=} \qquad + \kappa\sum_{m=1}^{n-1}\sum_{\ell,j\geq 1;\ell+j=k}\psi_{\ell+1}w_{m}^{(\ell)}[\phi_{1},\ldots,\phi_{\ell};\psi_{2},\ldots,\psi_{\ell}] w_{n-m}^{(j)}[\phi_{\ell+1},\ldots,\phi_{k}; \psi_{\ell+2},\ldots,\psi_{k}],
\end{split}
\end{equation}
where $\delta_{k1}$ denotes the usual Kronecker delta. The next lemma establishes several important structural properties of the $w_n$, including that $w_{n}^{(k)}$ is identically zero for all but finitely many $k\in\N$.

\begin{lemma}[Properties of $w_n^{(k)}$]
\label{lem:wn(k)_prop}
The following properties hold:
\begin{itemize}
\item  For each odd $n\in \N$, $w_n^{(k)}\equiv 0$ for $k>\frac{n+1}{2}$ and for $k \leq \frac{n+1}{2}$ we have
\begin{equation}
\label{eq:w_odd}
w_n^{(k)}[\phi_1,\ldots,\phi_k;\psi_2,\ldots,\psi_k] = \sum_{{ (\ul{\alpha}_k,\ul{\alpha}_{k-1}')\in \N_0^{2k-1}}\atop {|\ul{\alpha}_k| + |\ul{\alpha}_{k-1}'| = n-1-2(k-1)}} a_{n, (\ul{\alpha}_k,\ul{\alpha}_{k-1}')} (\prod_{r=1}^k \p_x^{\alpha_r} \phi_r) (\prod_{r=2}^k \p_x^{\alpha_r'}\psi_r),
\end{equation}
where $a_{n,(\ul{\alpha}_k,\ul{\alpha}_{k-1}')}\in\R$. 
\item For each even $n\in\N$, $w_n^{(k)} \equiv 0$ for $k>\frac{n}{2}$ and for $k \leq \frac{n}{2}$ we have
\begin{equation}
\label{eq:w_even}
w_n^{(k)}[\phi_1,\ldots,\phi_k;\psi_2,\ldots,\psi_k] = i\sum_{{ (\ul{\alpha}_k,\ul{\alpha}_{k-1}')\in \N_0^{2k-1}}\atop {|\ul{\alpha}_k| + |\ul{\alpha}_{k-1}'| = n-1-2(k-1)}} a_{n, (\ul{\alpha}_k,\ul{\alpha}_{k-1}')} (\prod_{r=1}^k \p_x^{\alpha_r} \phi_r) (\prod_{r=2}^k \p_x^{\alpha_r'}\psi_r),
\end{equation}
where $a_{n,(\ul{\alpha}_k,\ul{\alpha}_{k-1}')}\in\R$.
\end{itemize}
\end{lemma}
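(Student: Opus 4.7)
The plan is to prove both assertions simultaneously by strong induction on $n\in\N$, using the recursion \eqref{eq:wn(k)_recur}. The base case $n=1$ is immediate: $w_1^{(k)}=\phi_1\delta_{k1}$ vanishes for $k>1=(n+1)/2$, and for $k=1$ reduces to $a_{1,\ul{0}}\phi_1$ with $a_{1,\ul{0}}=1\in\R$ and $|\ul{\alpha}_1|=0=n-1-2(k-1)$, as required.

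For the inductive step, assume the claim holds for all integers $1,\ldots,n$, and decompose $w_{n+1}^{(k)}$ into the two contributions in \eqref{eq:wn(k)_recur}. The \emph{differentiation term} $-i\p_x w_n^{(k)}$ inherits the vanishing range of $w_n^{(k)}$: if $n$ is odd it vanishes for $k>(n+1)/2$, and if $n$ is even for $k>n/2$, both of which are contained in the threshold desired for $w_{n+1}^{(k)}$. In the admissible range, the product rule raises the total multi-index size by one, from $n-1-2(k-1)$ to $(n+1)-1-2(k-1)$, which is the correct target degree. The overall prefactor is multiplied by $-i$: when $n$ is odd the prefactor $1$ becomes $-i$ (the sign $-1$ is absorbed into the real coefficients, giving the $i$-prefactor required for $n+1$ even), and when $n$ is even the prefactor $i$ becomes $1$ (again with sign absorbed).

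For the \emph{interaction term} $\kappa\sum_{m=1}^{n-1}\sum_{\ell+j=k}\psi_{\ell+1}w_m^{(\ell)}[\ldots]w_{n-m}^{(j)}[\ldots]$, the key point is a parity analysis on the pair $(m,n-m)$. If $n$ is odd, then exactly one of $m,n-m$ is odd and the other even, so the inductive hypothesis forces each nonzero summand to obey $\ell+j\le (n+1)/2$, giving vanishing of the whole term for $k>(n+1)/2$. If $n$ is even, both parities must agree (both odd or both even), and in either sub-case one obtains $\ell+j\le (n+2)/2$. In the admissible range, the total multi-index size is
\begin{equation*}
(m-1-2(\ell-1))+((n-m)-1-2(j-1)) = (n+1)-1-2(k-1),
\end{equation*}
with $\psi_{\ell+1}$ contributing $\alpha_{\ell+1}'=0$, so the degree matches. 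For the prefactor: when $n$ is odd, one factor has prefactor $1$ and the other $i$, producing the $i$-prefactor needed for $n+1$ even; when $n$ is even, the two factors share prefactor, so either $1\cdot 1$ or $i\cdot i=-1$ produces a real prefactor (with the sign $-1$ absorbed into $a_{n+1,\cdot}$), matching the no-$i$ prefactor needed for $n+1$ odd.

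Adding the two contributions and re-indexing the resulting sum over multi-indices $(\ul{\alpha}_k,\ul{\alpha}_{k-1}')$ with $|\ul{\alpha}_k|+|\ul{\alpha}_{k-1}'|=(n+1)-1-2(k-1)$ closes the induction. The main obstacle is not conceptual but bookkeeping: one must simultaneously check that the vanishing thresholds, multi-index degrees, and prefactor types align across the two parities of $n$ and the two sub-cases of the interaction term. No tools beyond the recursion, the Leibniz rule, and careful absorption of signs into the real coefficients $a_{n,(\ul{\alpha}_k,\ul{\alpha}_{k-1}')}$ are needed.
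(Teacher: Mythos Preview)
Your proposal is correct and follows essentially the same approach as the paper's proof: strong induction on $n$, with the inductive step split into the differentiation and interaction contributions of the recursion \eqref{eq:wn(k)_recur}, followed by a parity case analysis on $(m,n-m)$ to handle the vanishing thresholds, a multi-index degree count, and bookkeeping of the $i$-prefactor. The paper organizes the cases by the parity of the index being proved (even first, then odd) rather than by the parity of $n$ in the step $n\to n+1$, but the content is identical.
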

\begin{proof}
We prove the lemma by strong induction on $n$. We begin with the base case $n=1$. That \eqref{eq:w_odd} holds for $n=1$ is tautological. For the induction step, suppose that there exists some $n\in\N$ such that either \eqref{eq:w_odd} or \eqref{eq:w_even} holds for every odd or even $j\in\N_{\leq n}$, respectively. We consider two cases based on whether $n$ is even or odd.

Consider the even index case. We first show that $w_n^{(k)} \equiv 0$ for $k>\frac{n}{2}$. Since $n-1$ is odd, the induction hypothesis implies that
\begin{equation}
(-i\p_x)w_{n-1}^{(k)} \equiv 0, \qquad k>\frac{n}{2}.
\end{equation}
Now suppose that $\ell, j\in\N$ are such that $\ell+j=k$ and
\begin{equation}
w_m^{(\ell)} \otimes w_{n-1-m}^{(j)} \not\equiv 0,
\end{equation}
where $1\leq m\leq n-2$. By symmetry, it suffices to consider when $m$ is odd and $n-1-m$ is even. By the induction hypothesis,
\begin{equation}
w_m^{(\ell)} \equiv 0, \enspace \ell >\frac{m+1}{2} \qquad \text{ and } \qquad w_{n-1-m}^{(j)} \equiv 0, \enspace j>\frac{n-1-m}{2}.
\end{equation}
Consequently, we must have that
\begin{equation}
k=\ell+j\leq \frac{m+1}{2}+\frac{n-1-m}{2} = \frac{n}{2}.
\end{equation}
It then follows from the recursion \eqref{eq:wn(k)_recur} that $w_n^{(k)} \equiv 0$ for $k>\frac{n}{2}$.

Next we establish the asserted expansion formula. By the induction hypothesis,
\begin{equation}
w_{n-1}^{(k)}[\phi_1,\ldots,\phi_k;\psi_2,\ldots,\psi_k] = \sum_{{(\ul{\alpha}_{k}, \ul{\alpha}_{k-1}')\in\N_0^{2k-1}}\atop {|\ul{\alpha}_{k}| + |\ul{\alpha}_{k-1}'|= n-2-2(k-1)}} a_{n-1,(\ul{\alpha}_{k}, \ul{\alpha}_{k-1}')} (\prod_{r=1}^k \p_x^{\alpha_r} \phi_r) (\prod_{r=2}^k\p_x^{\alpha_r'}\psi_r),
\end{equation}
where the coefficients $a_{n-1,(\ul{\alpha}_k,\ul{\alpha}_{k-1}')}$ are real. Hence by the Leibnitz rule, we can define real coefficients $b_{n,(\ul{\alpha}_k,\ul{\alpha}_{k-1})}$ such that
\begin{equation}
\label{eq:b_sum}
-i\p_x w_{n-1}^{(k)}[\phi_1,\ldots,\phi_k;\psi_2,\ldots,\psi_k] = i\sum_{ {(\ul{\alpha}_k,\ul{\alpha}_{k-1}')\in\N_0^{2k-1}} \atop {|\ul{\alpha}_k| + |\ul{\alpha}_{k-1}'| = n-1-2(k-1)}} b_{n,(\ul{\alpha}_k,\ul{\alpha}_{k-1})}(\prod_{r=1}^k \p_x^{\alpha_r} \phi_r) (\prod_{r=2}^k\p_x^{\alpha_r'}\psi_r).
\end{equation}
Similarly, for $m\in\N_{\leq n-2}$ and $\ell,j\in\N$, the induction hypothesis implies that
\begin{equation}
w_m^{(\ell)}[\phi_1,\ldots,\phi_\ell; \psi_2,\ldots,\psi_\ell] = \begin{cases} \displaystyle\sum_{{(\ul{\alpha}_\ell,\ul{\alpha}_{\ell-1}')\in\N_0^{2\ell-1}}\atop {|\ul{\alpha}_\ell| + |\ul{\alpha}_{\ell-1}'|= m-1-2(\ell-1)}} a_{m,(\ul{\alpha}_\ell,\ul{\alpha}_{\ell-1}')} (\prod_{r=1}^\ell\p_x^{\alpha_r}\phi_r)(\prod_{r=2}^{\ell}\p_x^{\alpha_r'}\psi_r), \enspace \text{$m$ odd},\\
\displaystyle i\sum_{{(\ul{\alpha}_\ell,\ul{\alpha}_{\ell-1}')\in\N_0^{2\ell-1}}\atop {|\ul{\alpha}_\ell| + |\ul{\alpha}_{\ell-1}'|= m-1-2(\ell-1)}} a_{m,(\ul{\alpha}_\ell,\ul{\alpha}_{\ell-1}')} (\prod_{r=1}^\ell\p_x^{\alpha_r}\phi_r)(\prod_{r=2}^{\ell}\p_x^{\alpha_r'}\psi_r), \enspace \text{$m$ even}\end{cases}
\end{equation}
and
\begin{equation}
\begin{split}
&w_{n-1-m}^{(j)}[\phi_{\ell+1,\ldots,\phi_k};\psi_{\ell+2},\ldots,\psi_{k}]\\
&=\begin{cases} \displaystyle i\sum_{{(\ul{\alpha}_j,\ul{\alpha}_{j-1}')\in\N_0^{2j-1}}\atop {|\ul{\alpha}_j| + |\ul{\alpha}_{j-1}'|= n-2-m-2(j-1)}} a_{n-1-m,(\ul{\alpha}_j,\ul{\alpha}_{j-1}')} (\prod_{r=\ell+1}^{k}\p_x^{\alpha_r}\phi_r)(\prod_{r=\ell+2}^{k}\p_x^{\alpha_r'}\psi_r), \enspace \text{$m$ odd}\\
\displaystyle\sum_{{(\ul{\alpha}_j,\ul{\alpha}_{j-1}')\in\N_0^{2j-1}}\atop {|\ul{\alpha}_j| + |\ul{\alpha}_{j-1}'|= n-2-m-2(j-1)}} a_{n-1-m,(\ul{\alpha}_j,\ul{\alpha}_{j-1}')} (\prod_{r=\ell+1}^{k}\p_x^{\alpha_r}\phi_r)(\prod_{r=\ell+2}^{k}\p_x^{\alpha_r'}\psi_r), \enspace \text{$m$ even}\end{cases},
\end{split}
\end{equation}
where $a_{n-1-m,(\ul{\alpha}_\ell,\ul{\alpha}_{\ell-1}')}, a_{n-1-m,(\ul{\alpha}_j,\ul{\alpha}_{j-1}')}\in\R$. For $\ell+j=k$ and $(\ul{\alpha}_\ell,\ul{\alpha}_{\ell-1}'), (\ul{\alpha}_j,\ul{\alpha}_{j-1}')$ as in the summations above, the multi-index
\begin{equation*}
(\ul{\alpha}_{\ell},\ul{\alpha}_j, \ul{\alpha}_{\ell-1}',\ul{\alpha}_{j-1}') \in \N_0^{2k-2}
\end{equation*}
satisfies
\begin{equation}
|(\ul{\alpha}_\ell,\ul{\alpha}_j)| + |(\ul{\alpha}_{\ell-1}',\ul{\alpha}_{j-1}')| = m-1-2(\ell-1) + n-2-m -2(j-1) = n-1-2(k-1).
\end{equation}
Consequently, we can define real coefficients $c_{n,(\ul{\alpha}_k,\ul{\alpha}_{k-1}')}$ such that
\begin{equation}
\label{eq:c_sum}
\begin{split}
&\sum_{m=1}^{n-1} \psi_{\ell+1} w_m^{(\ell)}[\phi_1,\ldots,\phi_\ell;\psi_2,\ldots,\psi_j] w_{n-1-m}^{(j)}[\phi_{\ell+1},\ldots,\phi_k;\psi_{\ell+2},\ldots,\psi_{k}] \\
&=i\sum_{{(\ul{\alpha}_k,\ul{\alpha}_{k-1}')\in\N_0^{2k-1}}\atop {|\ul{\alpha}_k| + |\ul{\alpha}_{k-1}'|= n-1-2(k-1)}} c_{n,(\ul{\alpha}_k,\ul{\alpha}_{k-1}')} (\prod_{r=1}^k\p_{x}^{\alpha_r}\phi_r)(\prod_{r=2}^k \p_{x}^{\alpha_r'}\psi_r).
\end{split}
\end{equation}
Defining
\begin{equation}
a_{n,(\ul{\alpha}_k,\ul{\alpha}_{k-1}')} \coloneqq b_{n,(\ul{\alpha}_k,\ul{\alpha}_{k-1}')} + c_{n,(\ul{\alpha}_k,\ul{\alpha}_{k-1}')},
\end{equation}
and summing \eqref{eq:b_sum} and \eqref{eq:c_sum} shows that \eqref{eq:w_even} holds.

Next, consider the odd index case. To establish that $w_n^{(k)}\equiv 0$ for $k>\frac{n+1}{2}$, we have by our previous discussion in the even case, that
\begin{equation}
-i\p_x w_{n-1}^{(k)} = 0, \qquad k>\frac{n-1}{2}.
\end{equation}
Suppose that $\ell,j\in\N$ are such that $\ell+j=k$ and
\begin{equation}
w_m^{(\ell)}\otimes w_{n-1-m}^{(j)}\not\equiv 0,
\end{equation}
where $1\leq m\leq n-2$. If $m$ is odd, then $n-1-m$ is odd, and so by the induction hypothesis,
\begin{equation}
w_m^{(\ell)}\equiv 0, \enspace \ell>\frac{m+1}{2} \qquad \text{ and } \qquad w_{n-1-m}^{(j)}\equiv 0, \enspace j>\frac{n-m}{2}.
\end{equation}
Consequently, we must have that
\begin{equation}
k=\ell+j\leq \frac{m+1}{2} + \frac{n-m}{2} = \frac{n+1}{2}.
\end{equation}
Similarly, if $m$ is even, then $n-1-m$ is even, and so by the induction hypothesis
\begin{equation}
w_m^{(\ell)}\equiv 0, \enspace \ell>\frac{m}{2} \qquad \text{ and } \qquad w_{n-1-m}^{(j)}\equiv 0, \enspace j>\frac{n-m-1}{2}.
\end{equation}
Consequently, we must have that
\begin{equation}
k=\ell+j\leq \frac{m}{2}+\frac{n-m-1}{2} = \frac{n-1}{2}.
\end{equation}
It now follows from the recursion \eqref{eq:wn(k)_recur} that $w_n^{(k)}\equiv 0$ for $k>\frac{n+1}{2}$. Repeating the proof mutatis mutandis from the $n$ even case, we see that $w_n^{(k)}$ has the representation \eqref{eq:w_odd}. Thus, the proof of the induction step is complete.
\end{proof}

We establish now some notation we will use here and in the sequel. For $k,n\in\N$, we define densities
\begin{equation}\label{eq:Pn(k)_def}
P_{n}^{(k)}[\phi_{1},\ldots,\phi_{k}; \psi_{1},\ldots,\psi_{k}] \coloneqq \psi_{1}w_{n}^{(k)}[\phi_{1},\ldots,\phi_{k}; \psi_{2},\ldots,\psi_{k}]\in\Sc(\R),
\end{equation}
and we define
\begin{equation}
\label{eq:In(k)_def}
I_n^{(k)}[\phi_1,\ldots,\phi_k;\psi_1,\ldots,\psi_k] \coloneqq \int_{\R}dx P_n^{(k)}[\phi_1,\ldots,\phi_k;\psi_1,\ldots,\psi_k](x).
\end{equation}
It is clear from \cref{lem:wn(k)_prop}, that $P_n^{(k)}:\Sc(\R)^{2k}\rightarrow\Sc(\R)$ is a $2k$-$\C$-linear, continuous map, and thus $I_n^{(k)}: \Sc(\R)^{2k}\rightarrow\C$ is a $2k$-$\C$-linear, continuous map. For $k \in \N$, we recall the notation $\phi^{\times k}$ from \eqref{prod_coords} to denote the measurable function $\phi^{\times k}:\R^{m}\rightarrow\C^k$
\begin{equation}
\phi^{\times k}(\ux_m) \coloneqq (\phi(\ux_m),\ldots,\phi(\ux_m)),
\end{equation}
and similarly for $\psi^{\times k}$. 
\begin{remark}
\label{rem:I_n/w_n}
It is clear from the recursion \eqref{eq:wn(k)_recur} that 
\begin{equation}
I_n(\phi) = \sum_{k=1}^\infty I_n^{(k)}[\phi^{\times k};\ol{\phi}^{\times k}], \qquad \forall \phi\in\Sc(\R),
\end{equation}
where $I_n$ is as defined in \eqref{eq:In_def}.
\end{remark}

\cref{rem:I_n/w_n} and the structure result \cref{lem:wn(k)_prop} allow us to give a proof of the seemingly obvious fact that the functionals $I_n$ are not constant on $\Sc(\R)$. We obtain this fact as a consequence of a more general lemma. Note that since $I_n(0)=0$, the nonconstancy of $I_n$ is equivalent to $I_n\not\equiv 0$.

\begin{lemma}
\label{lem:In_noncon}
Let $n\in\N$, and let $\ul{c} = \{c_k\}_{k\in\N}\subset\C$ such that $c_1\neq 0$. Define the map
\begin{equation}
I_{n,\ul{c}}:\Sc(\R)\rightarrow\C,\qquad I_{n,\ul{c}}(\phi)\coloneqq \sum_{k=1}^\infty c_k I_n^{(k)}[\phi^{\times k};\ol{\phi}^{\times k}], \qquad \forall\phi\in\Sc(\R).
\end{equation}
Then $I_{n,\ul{c}}\not\equiv 0$.
\end{lemma}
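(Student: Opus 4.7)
The plan is to use a scaling argument combined with an explicit computation of the $k=1$ term to reduce the problem to showing that $I_n^{(1)}[\phi;\ol{\phi}]\not\equiv 0$. First I would observe that by \cref{lem:wn(k)_prop} the sum defining $I_{n,\ul{c}}(\phi)$ has only finitely many nonzero terms (say indexed by $k\in\{1,\ldots,N(n)\}$), so for each $\phi\in\Sc(\R)$ the function $\R\ni\lambda\mapsto I_{n,\ul{c}}(\lambda\phi)$ is a polynomial in $\lambda$. Using the $2k$-multilinearity of $I_n^{(k)}$ we can write explicitly
\begin{equation*}
I_{n,\ul{c}}(\lambda\phi) \;=\; \sum_{k=1}^{N(n)} c_k\,\lambda^{2k}\,I_n^{(k)}[\phi^{\times k};\ol{\phi}^{\times k}], \qquad \lambda\in\R.
\end{equation*}
If $I_{n,\ul{c}}\equiv 0$, then this polynomial in $\lambda$ vanishes identically for every $\phi$, so each coefficient vanishes. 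In particular, since $c_1\neq 0$, we must have $I_n^{(1)}[\phi;\ol{\phi}]=0$ for all $\phi\in\Sc(\R)$, so the proof reduces to producing one $\phi$ for which this one-body quantity is nonzero.

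Next I would compute $I_n^{(1)}$ explicitly. Inspecting the recursion \eqref{eq:wn(k)_recur} for $k=1$, the sum over $\ell,j\geq 1$ with $\ell+j=1$ is empty, so $w_{n+1}^{(1)}[\phi]=-i\partial_x w_n^{(1)}[\phi]$. Together with the base case $w_1^{(1)}[\phi]=\phi$, this gives $w_n^{(1)}[\phi]=(-i\partial_x)^{n-1}\phi$, and consequently
\begin{equation*}
I_n^{(1)}[\phi;\ol{\phi}]=\int_{\R} \ol{\phi(x)}(-i\partial_x)^{n-1}\phi(x)\,dx=\frac{1}{2\pi}\int_{\R}\xi^{n-1}|\widehat{\phi}(\xi)|^2\,d\xi,
\end{equation*}
by Plancherel's theorem.

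Finally, I would exhibit an explicit Schwartz function making this nonzero. Taking the modulated Gaussian $\phi(x)=e^{i\xi_0 x}e^{-x^2/2}$ yields $\widehat{\phi}(\xi)=\sqrt{2\pi}\,e^{-(\xi-\xi_0)^2/2}$, so that
\begin{equation*}
I_n^{(1)}[\phi;\ol{\phi}]=\int_{\R}\xi^{n-1}e^{-(\xi-\xi_0)^2}\,d\xi.
\end{equation*}
For $\xi_0$ sufficiently large (depending on $n$) this integral is asymptotic to $\xi_0^{n-1}\sqrt{\pi}$ as $\xi_0\to\infty$, which is nonzero. This contradicts the conclusion of the scaling argument and completes the proof.

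The main (mild) obstacle is just the book-keeping: one needs to make sure the polynomial-in-$\lambda$ argument is valid (which it is, thanks to the finite upper bound in \cref{lem:wn(k)_prop}) and to produce an explicit $\phi$ that handles all $n$ uniformly; the modulated Gaussian $\phi$ is convenient precisely because it puts $|\widehat{\phi}|^2$ into a single bump away from the origin, thereby avoiding cancellations when $n-1$ is odd. No deeper structural information about $w_n^{(k)}$ for $k\geq 2$ is needed.
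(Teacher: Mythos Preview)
Your proof is correct and follows essentially the same approach as the paper: a scaling argument to isolate the $k=1$ coefficient, followed by the explicit identification $I_n^{(1)}[\phi;\ol{\phi}]=\int_{\R}\ol{\phi}(-i\partial_x)^{n-1}\phi$ and a choice of $\phi$ whose Fourier transform is concentrated away from the origin. The only cosmetic differences are that the paper uses complex scaling $\lambda\in\C$ and reads off the coefficient via $\partial_\lambda\partial_{\ol{\lambda}}|_{\lambda=0}$, and it takes $\widehat{\phi}$ to be a bump supported in $[1,4]$ rather than a shifted Gaussian.
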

\begin{proof}
Assume the contrary. Then for any $\lambda\in\C$, we find from the $2k$-complex linearity of the functionals $I_n^{(k)}$ that
\begin{equation}
0 = I_{n,\ul{c}}(\lambda\phi) = \sum_{k=1}^\infty c_k I_n^{(k)}[(\lambda\phi)^{\times k}; \ol{(\lambda\phi)}^{\times k}] = \sum_{k=1}^\infty c_k|\lambda|^{2k} I_n^{(k)}[\phi^{\times k};\ol{\phi}^{\times k}], \qquad \forall \phi\in\Sc(\R).
\end{equation}
Now fix $\phi\in\Sc(\R)$ and define a function
\begin{equation}
\rho_{\phi,\ul{c}}:\C\rightarrow\C, \qquad \rho_{\phi,\ul{c}}(\lambda) \coloneqq \sum_{k=1}^\infty c_k|\lambda|^{2k} I_n^{(k)}[\phi^{\times k};\ol{\phi}^{\times k}],
\end{equation}
which is well-defined and smooth since $I_n^{(k)}\equiv 0$ for all but finitely many indices $k$. Now observe that
\begin{equation}
0 = (\p_\lambda\p_{\ol{\lambda}} \rho_{\phi,\ul{c}})(0) = c_1 I_n^{(1)}[\phi;\ol{\phi}] = c_1\int_{\R}dx \ \ol{\phi}(x) (-i\p_x)^{n-1}\phi(x).
\end{equation}
Choosing $\phi\in\Sc(\R)$ to be a function whose Fourier transform $\wh{\phi}$ satisfies $0\leq \wh{\phi}\leq 1$,
\begin{equation}
\wh{\phi}(\xi) =
\begin{cases}
1, & {2\leq \xi \leq 3}\\
0, & {\xi\leq 1, \ \xi\geq 4}
\end{cases},
\end{equation}
we obtain a contradiction from Plancherel's theorem, since $c_1\neq 0$ by assumption.
\end{proof}

\subsection{Variational derivatives}\label{ssec:cor_symgrad}
In this subsection, we show that the functionals $I_n$ satisfy the conditions of \cref{rem:pb_vd} and explicitly compute their symplectic gradients. To this end, we record here a recursive formula for the functions $w_{n, (\psi_1,\psi_2)}$, which generalizes the recursive formula \cref{eq:w_rec} for $w_n$, given by
\begin{equation}
\label{eq:wn_bv_rec}
\begin{split}
w_{1,(\psi_1,\psi_2)}(x) &= \psi_1(x) \\
w_{n+1,(\psi_1,\psi_2)}(x) &= -i\p_x w_{n,(\psi_1,\psi_2)}(x) + \kappa\psi_2(x)\sum_{m=1}^{n-1}w_{m,(\psi_1,\psi_2)}(x)w_{n-m,(\psi_1,\psi_2)}(x)
\end{split},
\end{equation}
and we refer to \eqref{eq:wn(,)_def} for more details. We define $\tl{I}_n:\Sc(\R)^2 \rightarrow\C$ by
\begin{equation}
\label{eq:tlIn_def}
\tl{I}_n(\psi_1,\psi_2) \coloneqq \int_{\R}dx \psi_2(x) w_{n,(\psi_1,\psi_2)}(x), \qquad \forall (\psi_1,\psi_2)\in\Sc(\R)^2.
\end{equation}

\begin{remark}
\label{rem:rec_comp}
By comparing the recursion \eqref{eq:wn_bv_rec} to the recursion \eqref{eq:wn(k)_recur}, we see that
\begin{equation}
w_{n,(\psi_1,\psi_2)} = \sum_{k=1}^\infty w_n^{(k)}[\psi_1^{\times k}; \psi_2^{\times (k-1)}]
\end{equation}
and consequently
\begin{equation}
\tl{I}_n(\psi_1,\psi_2) = \sum_{k=1}^\infty I_n^{(k)}[\psi_1^{\times k};\psi_2^{\times k}].
\end{equation}
\end{remark}

We now use the multilinear $w_n^{(k)}$ introduced in the previous subsection in order to compute the variational derivatives, defined in \eqref{eq:vd_prop}, of the functions $\tl{I}_n$. We first dispense with a technical lemma asserting the existence of a partial transpose for the $w_n^{(k)}$ in $C^\infty(\Sc(\R)^{2k-1};\Sc(\R))$. The proof follows from the structural formula of \cref{lem:wn(k)_prop} and integration by parts; we leave the details to the reader.

\begin{lemma}
\label{lem:wn(k)_tran}
Let $n,k\in\N$. Then for $1\leq j\leq k$, there exists a unique partial transpose $w_{n,j}^{(k),t} \in C^\infty(\Sc(\R)^{2k-1};\Sc(\R))$, such that for every $\delta\phi\in\Sc(\R)$ and $\phi_1, \ldots, \phi_k, \psi_2 , \ldots \psi_k \in \Sc(\R)$ we have
\begin{equation}
\begin{split}
&\int_{\R} dx \delta\phi(x) w_{n,j}^{(k),t}[\phi_1,\ldots,\phi_k; \psi_2,\ldots,\psi_k](x)\\
&=\int_{\R}dx \phi_j(x) w_n^{(k)}[\phi_1,\ldots,\phi_{j-1},\delta\phi,\phi_{j+1},\ldots,\phi_k; \psi_2,\ldots,\psi_k](x) ,
\end{split}
\end{equation}
Similarly, for $2\leq j\leq k$, there exists a unique partial transpose $w_{n,j'}^{(k),t}\in C^\infty(\Sc(\R)^{2k-1};\Sc(\R))$, such that for every $\delta\psi\in\Sc(\R)$ and $\phi_1, \ldots, \phi_k, \psi_2 , \ldots \psi_k \in \Sc(\R)$ we have
\begin{equation}
\begin{split}
&\int_{\R}dx \delta\psi(x) w_{n,j'}^{(k),t}[\phi_1,\ldots,\phi_k; \psi_2,\ldots,\psi_k](x)\\
&=\int_{\R}dx \psi_j(x) w_n^{(k)}[\phi_1,\ldots,\phi_k; \psi_2,\ldots,\psi_{j-1},\delta\psi,\psi_{j+1},\ldots,\psi_k](x) .
\end{split}
\end{equation}
For convenience of notation, we define $w_{n,1'}^{(k),t}\in C^\infty(\Sc(\R)^{2k-1};\Sc(\R))$ by
\begin{equation}
w_{n,1'}^{(k),t}[\phi_1,\ldots,\phi_k;\psi_2,\ldots,\psi_k]\coloneqq w_n^{(k)}[\phi_1,\ldots,\phi_k;\psi_2,\ldots,\psi_k].
\end{equation}
\end{lemma}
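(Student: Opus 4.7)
The plan is to build the partial transpose explicitly from the structural formula recorded in \cref{lem:wn(k)_prop}, and then derive uniqueness from the non-degeneracy of the $L^2$ pairing on $\Sc(\R)$. Recall that \cref{lem:wn(k)_prop} tells us that, up to an overall factor of $i$ when $n$ is even, one has the finite representation
\begin{equation}
\label{eq:plan_struct}
w_n^{(k)}[\phi_1,\ldots,\phi_k;\psi_2,\ldots,\psi_k] = c_n\sum_{(\ul{\alpha}_k,\ul{\alpha}_{k-1}')\in \mathsf{B}_n^{(k)}} a_{n,(\ul{\alpha}_k,\ul{\alpha}_{k-1}')}\left(\prod_{r=1}^k\p_x^{\alpha_r}\phi_r\right)\left(\prod_{r=2}^k\p_x^{\alpha_r'}\psi_r\right),
\end{equation}
where $c_n\in\{1,i\}$, $a_{n,(\ul{\alpha}_k,\ul{\alpha}_{k-1}')}\in\R$, and $\mathsf{B}_n^{(k)}\subset\N_0^{2k-1}$ is a finite index set determined by the parity of $n$.

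Fix $j\in\{1,\ldots,k\}$. I would substitute $\delta\phi$ into the $j$th slot of \eqref{eq:plan_struct}, multiply by $\phi_j$, and integrate $\alpha_j$ times by parts in the variable $x$. This is legitimate because every factor is Schwartz, so there are no boundary contributions, and the sum is finite. Doing so puts all derivatives onto $\phi_j$ together with the remaining factors, yielding
\begin{equation*}
\int_{\R}dx\,\phi_j(x)w_n^{(k)}[\phi_1,\ldots,\delta\phi,\ldots,\phi_k;\psi_2,\ldots,\psi_k](x) = \int_{\R}dx\,\delta\phi(x)G_j(x),
\end{equation*}
where
\begin{equation*}
G_j \coloneqq c_n\sum_{(\ul{\alpha}_k,\ul{\alpha}_{k-1}')\in \mathsf{B}_n^{(k)}}(-1)^{\alpha_j}a_{n,(\ul{\alpha}_k,\ul{\alpha}_{k-1}')}\p_x^{\alpha_j}\left[\phi_j\prod_{r\neq j}\p_x^{\alpha_r}\phi_r\cdot\prod_{r=2}^{k}\p_x^{\alpha_r'}\psi_r\right].
\end{equation*}
I would then \emph{define} $w_{n,j}^{(k),t}[\phi_1,\ldots,\phi_k;\psi_2,\ldots,\psi_k]\coloneqq G_j$. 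Since $G_j$ is a finite sum of products of Schwartz functions and their derivatives, it clearly lies in $\Sc(\R)$; and since each such product depends smoothly (indeed multilinearly) on $(\phi_1,\ldots,\phi_k,\psi_2,\ldots,\psi_k)\in\Sc(\R)^{2k-1}$ with values in $\Sc(\R)$ (continuity of multiplication and differentiation on $\Sc(\R)$), the resulting map lies in $C^\infty(\Sc(\R)^{2k-1};\Sc(\R))$.

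For uniqueness, suppose $g_1,g_2\in\Sc(\R)$ satisfy the defining identity with fixed $\phi_1,\ldots,\phi_k,\psi_2,\ldots,\psi_k$. Then $\int_{\R}\delta\phi(x)(g_1-g_2)(x)\,dx=0$ for every $\delta\phi\in\Sc(\R)$, and the non-degeneracy of the $L^2$ pairing on $\Sc(\R)$ forces $g_1=g_2$. The construction of $w_{n,j'}^{(k),t}$ for $2\leq j\leq k$ proceeds by exactly the same mechanism, this time substituting $\delta\psi$ into the $j$th $\psi$-slot, multiplying by $\psi_j$, and integrating by parts $\alpha_j'$ times; the convention $w_{n,1'}^{(k),t}\coloneqq w_n^{(k)}$ is consistent with the pattern since no $\psi_1$ appears in $w_n^{(k)}$. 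There are no hidden obstacles: the strong form of \cref{lem:wn(k)_prop} (a \emph{finite} expansion in products of derivatives) reduces the entire argument to bookkeeping plus integration by parts. The only point that requires mild care is tracking the sign $(-1)^{\alpha_j}$ and ensuring that differentiation on $\Sc(\R)$ (which is continuous) preserves smoothness of the overall map; both are standard.
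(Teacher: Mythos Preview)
Your proposal is correct and follows precisely the approach the paper itself indicates: the paper's proof is a single sentence stating that the result follows from the structural formula of \cref{lem:wn(k)_prop} together with integration by parts, with details left to the reader. You have supplied exactly those details, including the explicit formula for $G_j$, the smoothness verification, and the uniqueness argument via non-degeneracy of the $L^2$ pairing.
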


We may now proceed to establish formulae for the variational derivatives of the $\tl{I}_n$.

\begin{lemma}
\label{lem:tlI_grad_sym}
For $n\in\N$, we have that
\begin{align}
\grad_1\tl{I}_n(\phi,\psi) &= \sum_{k=1}^\infty\sum_{j=1}^k w_{n,j}^{(k),t}[\phi^{\times(j-1)},\psi,\phi^{\times(k-j)};\psi^{\times(k-1)}], \\
 \grad_{\bar{2}}\tl{I}_n(\phi,\psi) &= \sum_{k=1}^\infty \sum_{j=1}^k w_{n,j'}^{(k),t}[\phi^{\times k};\psi^{\times(k-1)}]
\end{align}
for every $(\phi,\psi)\in\Sc(\R)^2$. In particular,
\begin{equation}\label{symp_grad_tli}
\begin{split}
\grad_s I_n(\phi) &= -i \sum_{k=1}^\infty\sum_{j=1}^k \ol{w_{n,j}^{(k),t}[\phi^{\times(j-1)},\ol{\phi},\phi^{\times(k-j)};\ol{\phi}^{\times(k-1)}]} \\
&=-i\sum_{k=1}^\infty \sum_{j=1}^ k w_{n,j'}^{(k),t}[\phi^{\times k};\ol{\phi}^{\times(k-1)}] \\
&=-\frac{i}{2}\sum_{k=1}^\infty\sum_{j=1}^k \paren*{\ol{w_{n,j}^{(k),t}[\phi^{\times(j-1)},\ol{\phi},\phi^{\times(k-j)};\ol{\phi}^{\times(k-1)}]} + w_{n,j'}^{(k),t}[\phi^{\times k};\ol{\phi}^{\times(k-1)}]}.
\end{split}
\end{equation}

\end{lemma}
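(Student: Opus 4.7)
The plan is to reduce the computation to manipulating the finite-index multilinear expansion of $\tl{I}_n$ from \cref{rem:rec_comp}, applying the partial-transpose lemma term by term, and then combining with \cref{rem:pb_vd} (specifically, the relation \cref{eq:sgrad_F}) to obtain the symplectic gradient. Since by \cref{lem:wn(k)_prop} only finitely many $w_n^{(k)}$ are nonzero for fixed $n$, all sums over $k$ are finite and differentiation commutes with summation without any analytic issue.

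First, using \cref{rem:rec_comp}, write
\begin{equation*}
\tl{I}_n(\phi,\psi) = \sum_{k=1}^{\infty} I_n^{(k)}[\phi^{\times k};\psi^{\times k}] = \sum_{k=1}^{\infty} \int_{\R}dx\,\psi(x)\,w_n^{(k)}[\phi^{\times k};\psi^{\times(k-1)}](x).
\end{equation*}
Since each $I_n^{(k)}$ is $2k$-$\C$-linear and continuous, the G\^ateaux derivative in the direction $(\delta\phi,\delta\psi)$ is obtained by the Leibniz rule as
\begin{equation*}
d\tl{I}_n[\phi,\psi](\delta\phi,\delta\psi) = \sum_{k=1}^{\infty}\sum_{j=1}^{k} I_n^{(k)}[\phi^{\times(j-1)},\delta\phi,\phi^{\times(k-j)};\psi^{\times k}] + \sum_{k=1}^{\infty}\sum_{j=1}^{k} I_n^{(k)}[\phi^{\times k};\psi^{\times(j-1)},\delta\psi,\psi^{\times(k-j)}].
\end{equation*}

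Next, to read off $\grad_1\tl{I}_n(\phi,\psi)$, the first term must be rewritten as an integral of $\delta\phi$ against a Schwartz function. Substituting the definition of $I_n^{(k)}$ and applying \cref{lem:wn(k)_tran} with the choice $\phi_1=\cdots=\phi_{j-1}=\phi_{j+1}=\cdots=\phi_k=\phi$, $\phi_j=\psi$, and $\psi_2=\cdots=\psi_k=\psi$ yields
\begin{equation*}
I_n^{(k)}[\phi^{\times(j-1)},\delta\phi,\phi^{\times(k-j)};\psi^{\times k}] = \int_{\R}dx\,\delta\phi(x)\,w_{n,j}^{(k),t}[\phi^{\times(j-1)},\psi,\phi^{\times(k-j)};\psi^{\times(k-1)}](x),
\end{equation*}
which, once summed in $j$ and $k$, gives the asserted formula for $\grad_1\tl{I}_n$. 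For $\grad_{\bar{2}}\tl{I}_n$, the case $j=1$ of the second sum is handled trivially (since $\delta\psi$ already sits out front in $I_n^{(k)}[\phi^{\times k};\delta\psi,\psi^{\times(k-1)}]$, matching the convention $w_{n,1'}^{(k),t}=w_n^{(k)}$), while the cases $2\leq j\leq k$ invoke the second partial transpose from \cref{lem:wn(k)_tran}, this time in the $\psi_2,\ldots,\psi_k$ slots, to produce $w_{n,j'}^{(k),t}$ in the integrand against $\delta\psi$.

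Finally, observing that $I_n(\phi) = \tl{I}_n(\phi,\ol{\phi})$ and that $\tl{I}_n$ verifies the involution hypothesis $\ol{\tl{I}_n(\phi_1,\ol{\phi_2})} = \tl{I}_n(\phi_2,\ol{\phi_1})$ required by \cref{rem:pb_vd} (which follows from the structural representation of \cref{lem:wn(k)_prop} after inspecting the parity of the coefficients, and is recorded in \cref{app:nls_pc}), the three formulas in \eqref{symp_grad_tli} for $\grad_s I_n(\phi)$ are immediate consequences of \cref{eq:sgrad_F}. The only mildly delicate point is the bookkeeping in the partial-transpose step, making sure that the factor pulled out front in the Leibniz expansion is correctly identified with the $\phi_j$ or $\psi_j$ slot demanded by \cref{lem:wn(k)_tran}; once the slots are matched, the identity is immediate, and the rest is notational.
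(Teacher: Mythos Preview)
Your proposal is correct and follows essentially the same approach as the paper: expand $\tl{I}_n$ via the multilinear decomposition, differentiate term-by-term using the Leibniz rule, apply \cref{lem:wn(k)_tran} to isolate $\delta\phi$ and $\delta\psi$ (handling the $j=1$ case for $\grad_{\bar{2}}$ separately via the convention $w_{n,1'}^{(k),t}=w_n^{(k)}$), and then invoke \cref{eq:sgrad_F} for the symplectic gradient. The paper's proof is organized around the density $P_n^{(k)}$ rather than $I_n^{(k)}$ directly, but this is a purely cosmetic difference.
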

\begin{proof}
Fix a point $(\phi,\psi)\in\Sc(\R)^2$. Unpacking the definition of $\tl{I}_n$ and using the chain rule for the G\^ateaux derivative, we obtain that
\begin{equation}
\label{eq:tlI_gd}
\begin{split}
d\tl{I}_n[\phi,\psi](\delta\phi,\delta\psi) &= \sum_{k=1}^\infty \sum_{j=1}^k  \biggl( \int_{\R}dx P_n^{(k)}[\phi^{\times (j-1)},\delta\phi,\phi^{\times (k-j)};\psi^{\times k}](x) \\
&\phantom{=} \hspace{25mm} +\int_{\R}dx P_n^{(k)}[\phi^{\times k};\psi^{\times(j-1)},\delta\psi,\psi^{\times (k-j)}](x) \biggr).
\end{split}
\end{equation}
Since
\begin{equation}
P_n^{(k)}[\phi^{\times (j-1)},\delta\phi,\phi^{\times(k-j)};\psi^{\times k}] = \psi w_n^{(k)}[\phi^{\times(j-1)},\delta\phi,\phi^{\times(k-j)};\psi^{\times(k-1)}]
\end{equation}
and
\begin{equation}
P_n^{(k)}[\phi^{\times k};\psi^{\times(j-1)},\delta\psi,\psi^{\times(k-j)}] = \begin{cases} \delta\psi w_n^{(k)}[\phi^{\times k};\psi^{\times(k-1)}], & {j=1} \\
\psi w_n^{(k)}[\phi^{\times k};\psi^{\times(j-2)},\delta\psi,\psi^{\times(k-j)}], & {2\leq j\leq k}
\end{cases},
\end{equation}
upon application of \cref{lem:wn(k)_tran}, we see that
\begin{equation}
\label{eq:Pn(k)_phi_tran}
\begin{split}
&\int_{\R}dx P_n^{(k)}[\phi^{\times(j-1)},\delta\phi,\phi^{\times(k-j)};\psi^{\times k}](x)\\
&=\int_{\R}dx \delta\phi(x) w_{n,j}^{(k),^t}[\phi^{\times(j-1)},\psi,\phi^{\times(k-j)};\psi^{\times(k-1)}](x)
\end{split}
\end{equation}
and
\begin{equation}
\label{eq:Pn(k)_psi_tran}
\begin{split}
\int_{\R}dx P_n^{(k)}[\phi^{\times k};\psi^{\times(j-1)},\delta\psi,\psi^{\times(k-j)}](x) &= \int_{\R}dx \delta\psi(x) w_{n,j'}^{(k),^t}[\phi^{\times k};\psi^{\times(k-1)}](x).
\end{split}
\end{equation}
Substituting \eqref{eq:Pn(k)_phi_tran} and \eqref{eq:Pn(k)_psi_tran} into \eqref{eq:tlI_gd}, we arrive at the identity
\begin{equation}
\begin{split}
d\tl{I}_n[\phi,\psi](\delta\phi,\delta\psi) &= \sum_{k=1}^\infty \sum_{j=1}^k \biggl( \int_{\R}dx \delta\phi(x) w_{n,j}^{(k),t}[\phi^{\times(j-1)},\psi,\phi^{\times(k-j)};\psi^{\times(k-1)}](x) \\
&\phantom{=} \hspace{20mm} + \int_{\R}dx \delta\psi(x) w_{n,j'}^{(k),t}[\phi^{\times k};\psi^{\times(k-1)}](x) \biggr).
\end{split}
\end{equation}
Using that there are only finitely many indices $k$ yielding a nonzero contribution by \cref{lem:wn(k)_prop}, we can move the summations inside the integral to conclude that
\begin{equation}\label{symp_grad_in}
\begin{split}
d\tl{I}_n[\phi,\psi](\delta\phi,\delta\psi) &= \int_{\R}dx \delta\phi(x) \paren*{\sum_{k=1}^\infty\sum_{j=1}^k w_{n,j}^{(k),t}[\phi^{\times(j-1)},\psi,\phi^{\times(k-j)};\psi^{\times(k-1)}](x)} \\
&\phantom{=}+ \int_{\R}dx\delta\psi(x)\paren*{\sum_{k=1}^\infty \sum_{j=1}^k w_{n,j'}^{(k),t}[\phi^{\times k};\psi^{\times(k-1)}](x)},
\end{split}
\end{equation}
which yields the desired formula for the variational derivatives in light of \eqref{eq:vd_prop}.

To see the second assertion for the symplectic gradient $\grad_s I_n(\phi)$, we recall that from the fact that $I_n(\phi) = \tl{I}_n(\phi,\ol{\phi})$, \cref{rem:pb_vd}, and \eqref{eq:sgrad_F} that we have the the identity
\begin{equation*}
\grad_s I_n(\phi) = -i\ol{\grad_1\tl{I}_n(\phi,\ol{\phi})} = -i\grad_{\bar{2}}\tl{I}_n(\phi,\ol{\phi}).
\end{equation*}
Substituting the identities for $\grad_{1}\tl{I}_n(\phi,\ol{\phi}), \grad_{\bar{2}}\tl{I}_n(\phi,\ol{\phi})$ into the right-hand side of the previous equality completes the proof.
\end{proof}

\subsection{Partial trace connection of $\W_n$ to $w_n$}\label{ssec:cor_ptr}
We next connect the linear DVOs $\wt{\W}_n^{(k)}$ constructed in \cref{sec:con_GP_W} to the multilinear Schwartz-valued operators $w_n^{(k)}$ constructed in \cref{ssec:cor_multi}. We note that since the definition of the $\W_n$ is fairly straightforward given the definition of $\wt{\W}_n$, it will suffice to establish these connections for the latter operators.

It will be important to remember the following consequence of the fact that $\wt{\W}_n^{(k)}$ satisfies the good mapping property: the generalized partial trace
\begin{equation}
\Tr_{2,\ldots,k}\paren*{\wt{\W}_n^{(k)} \ket*{\bigotimes_{\ell=1}^k\phi_\ell}\bra*{\bigotimes_{\ell=1}^k \psi_\ell}},
\end{equation}
which is a priori the element of $\L(\Sc(\R),\Sc'(\R))$ given by the property
\begin{align}
&\ipp*{\Tr_{2,\ldots,k}\paren*{\wt{\W}_n^{(k)} \ket*{\bigotimes_{\ell=1}^k\phi_\ell}\bra*{\bigotimes_{\ell=1}^k \psi_\ell}}\phi,\psi}_{\Sc'(\R)-\Sc(\R)} \nonumber\\
&= \ipp*{\wt{\W}_n^{(k)}\bigotimes_{\ell=1}^k\phi_\ell, \psi\otimes\ipp*{\bigotimes_{\ell=1}^k\ol{\psi_\ell}, \phi}_{\Sc_{x_1}'(\R)-\Sc_{x_1}(\R)} }_{\Sc'(\R^k)-\Sc(\R^k)} \nonumber\\
&=\ip{\psi_1}{\phi}\ipp*{\wt{\W}_n^{(k)}\bigotimes_{\ell=1}^k\phi_\ell, \psi \otimes \bigotimes_{\ell=2}^k \ol{\psi_\ell}}_{\Sc'(\R^k)-\Sc(\R^k)},
\end{align}
for every $\phi,\psi\in\Sc(\R)$, is in fact uniquely identifiable with the element in $\Sc(\R^2)$ which we denote by
\begin{equation*}
\Phi_{\wt{\W}_n^{(k)}}(\phi_1,\ldots,\phi_k;\ol{\psi_1},\ldots,\ol{\psi_k})
\end{equation*}
via
\begin{equation}
\label{eq:Phi_Wn}
\begin{split}
&\ipp*{\Tr_{2,\ldots,k}\paren*{\wt{\W}_n^{(k)} \ket*{\bigotimes_{\ell=1}^k\phi_\ell}\bra*{\bigotimes_{\ell=1}^k \psi_\ell}} f, g}_{\Sc'(\R)-\Sc(\R)} \\
&= \int_{\R^2}dx dx' \Phi_{\wt{\W}_n^{(k)}}(\phi_1,\ldots,\phi_k;\ol{\psi_1},\ldots,\ol{\psi_k})(x;x')f(x')g(x).
\end{split}
\end{equation}
Moreover, the map
\begin{equation}
\Sc(\R)^{2k} \rightarrow \Sc(\R^2), \qquad (\phi_1,\ldots,\phi_k,\psi_1,\ldots,\psi_k) \mapsto \Phi_{\wt{\W}_n^{(k)}}(\phi_1,\ldots,\phi_k;\ol{\psi_1},\ldots,\ol{\psi_k})
\end{equation}
is continuous. The objective of the next lemma is to obtain a formula for $\Phi_{\wt{\W}_n^{(k)}}$ in terms of $w_n^{(k)}$. 

\begin{lemma}
\label{lem:Wn_wn_ptr}
Let $k,n\in\N$. Then the following properties hold:
\begin{itemize}
%\item For each $k\in\N$ and for any $\phi_{1},\ldots,\phi_{k},\psi_{1},\ldots,\psi_{k}\in\Sc(\R)$, we have that
%\begin{equation}
%\begin{split}
%&\Phi_{\wt{\W}_n^{(k)}}(\phi_1,\ldots,\phi_k;\ol{\psi_1},\ldots,\ol{\psi_k})(x;x')\\
%&=\ol{\psi_1(x')} w_n^{(k)}[\phi_1,\ldots,\phi_k;\ol{\psi_2},\ldots,\ol{\psi_k}](x), \qquad \forall (x,x')\in\R^2.
%\end{split}
%\end{equation}
\item For any $\pi\in\Ss_k$ with $\pi(1) = 1$, we have that for all $(x,x')\in\R^2$,
\begin{equation}
\begin{split}
&\Phi_{\wt{\W}_{n,(\pi(1),\ldots,\pi(k))}^{(k)}}(\phi_1,\ldots,\phi_k;\ol{\psi_1},\ldots,\ol{\psi_k})(x;x') \\
&\hspace{14mm} =\ol{\psi_1(x')} w_n^{(k)}[\phi_{\pi(1)},\ldots,\phi_{\pi(k)}; \ol{\psi_{\pi(2)}},\ldots,\ol{\psi_{\pi(k)}}](x),
\end{split}
\end{equation}
and 
\begin{equation}
\begin{split}
&\Phi_{\wt{\W}_{n,(\pi(1),\ldots,\pi(k))}^{(k),*}}(\phi_1,\ldots,\phi_k;\ol{\psi_1},\ldots,\ol{\psi_k})(x;x') \\
&\hspace{14mm}= \ol{\psi_1(x') w_{n,1}^{(k),t}[\ol{\phi_{1}},\psi_{\pi(2)},\ldots,\psi_{\pi(k)};\ol{\phi_{\pi(2)}},\ldots,\ol{\phi_{\pi(k)}}](x)}.
\end{split}
\end{equation}
\item For any $\pi\in\Ss_k$ with $\pi(1) \neq 1$, we have that for all $(x,x')\in\R^2$,
\begin{equation}
\begin{split}
&\Phi_{\wt{\W}_{n,(\pi(1),\ldots,\pi(k))}^{(k)}}(\phi_1,\ldots,\phi_k;\ol{\psi_1},\ldots,\ol{\psi_k})(x;x') \\
&=\ol{\psi_1(x')} w_{n,\pi^{-1}(1)'}^{(k),t}[\phi_{\pi(1)},\ldots,\phi_{\pi(k)};\ol{\psi_{\pi(2)}},\ldots,\ol{\psi_{\pi(\pi^{-1}(1)-1)}},\ol{\psi_{\pi(1)}},\ol{\psi_{\pi(\pi^{-1}(1)+1)}},\ldots,\ol{\psi_{\pi(k)}}](x),
\end{split}
\end{equation}
and
\begin{equation}
\begin{split}
&\Phi_{\wt{\W}_{n,(\pi(1),\ldots,\pi(k))}^{(k),*}}(\phi_1,\ldots,\phi_k;\ol{\psi_1},\ldots,\ol{\psi_k})(x;x') \\
&= \ol{\psi_1(x') w_{n,\pi^{-1}(1)}^{(k),t}[\psi_{\pi(1)},\ldots,\psi_{\pi(\pi^{-1}(1)-1)},\ol{\phi_{\pi(1)}},\psi_{\pi(\pi^{-1}(1)+1)},\ldots,\psi_{\pi(k)};\ol{\phi_{\pi(2)}},\ldots,\ol{\phi_{\pi(k)}}](x)},
\end{split}
\end{equation}
%Then for any permutation {\color{blue}$\pi\in\Ss_n$},
%\begin{equation}
%\begin{split}
%&\Phi_{\wt{\W}_{n,(\pi(1),\ldots,\pi(k))}^{(k)}}(\phi_1,\ldots,\phi_k;\ol{\psi_1},\ldots,\ol{\psi_k})(x;x') \\
%&=\ol{\psi_1(x')} w_n^{(k)}[\phi_{\pi(1)},\ldots,\phi_{\pi(k)}; \ol{\psi_{\pi(2)}},\ldots,\ol{\psi_{\pi(k)}}](x),
%\end{split}
%\end{equation}
%if $\pi(1)=1$, and
%\begin{equation}
%\begin{split}
%&\Phi_{\wt{\W}_{n,(\pi(1),\ldots,\pi(k))}^{(k)}}(\phi_1,\ldots,\phi_k;\ol{\psi_1},\ldots,\ol{\psi_k})(x;x') \\
%&=\ol{\psi_1(x')} w_{n,\pi^{-1}(1)'}^{(k),t}[\phi_{\pi(1)},\ldots,\phi_{\pi(k)};\ol{\psi_{\pi(2)}},\ldots,\ol{\psi_{\pi(\pi^{-1}(1)-1)}},\ol{\psi_{\pi(1)}},\ol{\psi_{\pi(\pi^{-1}(1)+1)}},\ldots,\ol{\psi_{\pi(k)}}](x),
%\end{split}
%\end{equation}
%if $\pi(1)\neq 1$, . Similarly,
%\begin{equation}
%\begin{split}
%&\Phi_{\wt{\W}_{n,(\pi(1),\ldots,\pi(k))}^{(k),*}}(\phi_1,\ldots,\phi_k;\ol{\psi_1},\ldots,\ol{\psi_k})(x;x') \\
%&= \ol{\psi_1(x') w_{n,1}^{(k),t}[\ol{\phi_{1}},\psi_{\pi(2)},\ldots,\psi_{\pi(k)};\ol{\phi_{\pi(2)}},\ldots,\ol{\phi_{\pi(k)}}](x)},
%\end{split}
%\end{equation}
%if $\pi(1)=1$, and
%\begin{equation}
%\begin{split}
%&\Phi_{\wt{\W}_{n,(\pi(1),\ldots,\pi(k))}^{(k),*}}(\phi_1,\ldots,\phi_k;\ol{\psi_1},\ldots,\ol{\psi_k})(x;x') \\
%&= \ol{\psi_1(x') w_{n,\pi^{-1}(1)}^{(k),t}[\psi_{\pi(1)},\ldots,\psi_{\pi(\pi^{-1}(1)-1)},\ol{\phi_{\pi(1)}},\psi_{\pi(\pi^{-1}(1)+1)},\ldots,\psi_{\pi(k)};\ol{\phi_{\pi(2)}},\ldots,\ol{\phi_{\pi(k)}}](x)},
%\end{split}
%\end{equation}
%if $\pi(1)\neq 1$, for all $(x,x')\in\R^2$.
\end{itemize}
\end{lemma}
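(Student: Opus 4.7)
The proof proceeds by strong induction on $n$, exploiting the structural parallel between the recursion \eqref{eq:Wn_recur} for $\widetilde{\W}_n^{(k)}$ and the recursion \eqref{eq:wn(k)_recur} for $w_n^{(k)}$. The base case $n=1$ is immediate from the definitions: $\widetilde{\W}_1 = \E_1$ is self-adjoint with $\widetilde{\W}_1^{(1)} = Id_1$ and $\widetilde{\W}_1^{(k)} \equiv 0$ for $k\geq 2$, and likewise $w_1^{(1)}[\phi_1;] = \phi_1$ while $w_1^{(k)} \equiv 0$ for $k\geq 2$, so the claimed formulas (including their adjoint versions) reduce to a tautology.

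For the inductive step, I would first handle the non-adjoint formulas, starting with the identity permutation $\pi = \mathrm{Id}$ and then extending to general $\pi$. Substituting \eqref{eq:Wn_recur} splits $\widetilde{\W}_{n+1}^{(k)}$ into two pieces. The derivative piece $-i\p_{x_1}\widetilde{\W}_n^{(k)}$ acts on the untraced first coordinate, so $-i\p_x$ commutes with $\Tr_{2,\ldots,k}$, and by the induction hypothesis contributes exactly $\ol{\psi_1(x')}(-i\p_x)w_n^{(k)}[\phi_1,\ldots,\phi_k;\ol{\psi_2},\ldots,\ol{\psi_k}](x)$. For each delta-composition term $\delta(X_1-X_{\ell+1})(\widetilde{\W}_m^{(\ell)}\otimes\widetilde{\W}_{n-m}^{(j)})$, I would invoke the representation formula from Step II of the proof of \cref{lem:Wn_wd}, established via the good mapping property, which forces $x_1 = x_{\ell+1} = x$; the partial trace then splits across the tensor product, and the induction hypothesis identifies the two resulting factors as $w_m^{(\ell)}[\phi_1,\ldots,\phi_\ell;\ol{\psi_2},\ldots,\ol{\psi_\ell}](x)$ and $w_{n-m}^{(j)}[\phi_{\ell+1},\ldots,\phi_k;\ol{\psi_{\ell+2}},\ldots,\ol{\psi_k}](x)$, with an additional $\ol{\psi_{\ell+1}(x)}$ coming from the collapsed coordinate $x_{\ell+1}$. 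Summing over $m$ and the decompositions $\ell+j=k$ and comparing with \eqref{eq:wn(k)_recur} reproduces precisely $\ol{\psi_1(x')}w_{n+1}^{(k)}[\phi_1,\ldots,\phi_k;\ol{\psi_2},\ldots,\ol{\psi_k}](x)$, closing the inductive step for $\pi = \mathrm{Id}$.

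For general $\pi$ with $\pi(1) = 1$, the formula follows by relabeling of the coordinates $2,\ldots,k$ inside the partial trace, since the untraced slot is unaffected by such $\pi$. For $\pi$ with $\pi(1) \neq 1$, conjugation by $\pi$ replaces $-i\p_{x_1}$ by $-i\p_{x_{\pi(1)}}$, which is now a derivative in a traced-out coordinate; integration by parts moves this derivative onto the factor $\ol{\psi_{\pi(1)}}$ in the partial trace, and by the characterization of the partial transpose $w_{n,\pi^{-1}(1)'}^{(k),t}$ in \cref{lem:wn(k)_tran} one obtains the asserted formula after redistributing $\ol{\psi_{\pi(1)}}$ into the slot indexed by $\pi^{-1}(1)$; the delta-composition terms are treated analogously. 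The adjoint formulas are then obtained by duality: the kernel of $\widetilde{\W}_n^{(k),*}$ is the complex conjugate of the kernel of $\widetilde{\W}_n^{(k)}$ with swapped variables, so $\Phi_{\widetilde{\W}_n^{(k),*}}$ is related to $\ol{\Phi_{\widetilde{\W}_n^{(k)}}}$ by swapping the roles of $\phi$- and $\psi$-arguments together with $(x,x')$; the swap of roles in the first (untraced) position is corrected by the partial transpose $w_{n,1}^{(k),t}$ of \cref{lem:wn(k)_tran}, yielding the claimed formulas.

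The main obstacle is the bookkeeping: carefully tracking how $\pi$ acts on the list of arguments through each step of the recursion, correctly identifying which $\phi$- or $\psi$-slot is transposed, and rigorously interchanging the delta-product composition with the partial trace via the good mapping property of \cref{def:gmp}. With these technical points secured, the proof amounts to a careful matching of the two parallel recursions.
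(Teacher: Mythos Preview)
Your proposal is correct and follows the same route as the paper: strong induction on $n$ for the identity permutation by matching the two recursions term by term, then extension to general $\pi$ via the partial transposes of \cref{lem:wn(k)_tran}, then the adjoint formulas by conjugate-swap duality.

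The only expository difference is in the $\pi(1)\neq 1$ case. The paper does not revisit the recursion or integrate by parts on $-i\p_{x_{\pi(1)}}$; instead it uses the identity $\wt{\W}_{n,(\pi(1),\ldots,\pi(k))}^{(k)}\bigl(\bigotimes_r\phi_r\bigr)=\pi\circ\wt{\W}_n^{(k)}\bigl(\bigotimes_r\phi_{\pi(r)}\bigr)$ to rewrite the permuted partial trace as an \emph{unpermuted} one with relabeled inputs, in which the test function $g$ lands in the $\pi^{-1}(1)$-th $\psi$-slot. It then applies the already-established $\pi=\mathrm{Id}$ formula and invokes \cref{lem:wn(k)_tran} once to move $g$ out of that slot, producing $w_{n,\pi^{-1}(1)'}^{(k),t}$. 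This is a one-shot reduction rather than a term-by-term analysis of the conjugated recursion; it is somewhat cleaner, though your approach would also go through.
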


\begin{proof}
We will begin by establishing the first claim for the identity permutation, that is, for each $k\in\N$ and for any $\phi_{1},\ldots,\phi_{k},\psi_{1},\ldots,\psi_{k}\in\Sc(\R)$, we have that
\begin{equation}\label{equ:whattoprove}
\begin{split}
&\Phi_{\wt{\W}_n^{(k)}}(\phi_1,\ldots,\phi_k;\ol{\psi_1},\ldots,\ol{\psi_k})(x;x')\\
&=\ol{\psi_1(x')} w_n^{(k)}[\phi_1,\ldots,\phi_k;\ol{\psi_2},\ldots,\ol{\psi_k}](x), \qquad \forall (x,x')\in\R^2.
\end{split}
\end{equation}
By Lemma \ref{lem:aa_zero}, it suffices to prove \eqref{equ:whattoprove} by induction on 
\begin{equation}
\{(k,n)\in\N^{2} : k\leq n\}.
\end{equation}
We begin with the base case, $(k,n)=(1,1)$. Since $\wt{\W}_{1}^{(1)} = Id_{1}\in\L(\Sc(\R),\Sc'(\R))$, we have the Schwartz kernel identity
\begin{equation}
\paren*{\widetilde{\W}_{1}^{(1)}\ket*{\phi_{1}}\bra*{\psi_{1}}}(x_{1};x_{1}') = \phi_{1}(x_{1})\overline{\psi_{1}(x_{1}')} =\ol{\psi_{1}(x_{1}')}w_{1}^{(1)}[\phi_{1}](x_{1}), \qquad \forall (x_{1},x_{1}')\in\R^{2},
\end{equation}
which proves \eqref{equ:whattoprove} for the base case.

For the induction step, suppose that there is some $n\in\N$ such that for all integers $j\in\N_{\leq n}$ the following assertion holds: for all integers $k\in\N_{\leq j}$ and for all functions $\phi_{1},\ldots,\phi_{k},\psi_{1},\ldots,\psi_{k}\in\Sc(\R)$, we have that
\begin{equation}
\label{eq:ptr_con_IH}
\Phi_{\wt{\W}_j^{(k)}}(\phi_1,\ldots,\phi_k; \ol{\psi_1},\ldots,\ol{\psi_k})(x;x') = \ol{\psi_1}(x') w_j^{(k)}[\phi_{1},\ldots,\phi_{k};\ol{\psi}_{2},\ldots,\ol{\psi}_k](x), \qquad \forall (x,x')\in\R^2.
\end{equation}
We now prove \eqref{eq:ptr_con_IH} with $j=n+1$. By the recursion relation \eqref{eq:Wn_recur} and the bilinearity of the generalized trace, we have that
\begin{align}
&\Tr_{2,\ldots,k}\paren*{\widetilde{\W}_{n+1}^{(k)}\ket*{\bigotimes_{\ell=1}^k \phi_{\ell}}\bra*{\bigotimes_{\ell=1}^k \psi_{\ell}}} \nonumber\\
&\phantom{=} = \Tr_{2,\ldots,k}\paren*{(-i\p_{x_{1}})\widetilde{\W}_{n}^{(k)}\ket*{\bigotimes_{r=1}^k \phi_{r}}\bra*{\bigotimes_{r=1}^k \psi_{r}}} \nonumber\\
&\phantom{=}\qquad + \kappa\sum_{m=1}^{n-1}\sum_{\ell+j=k}\Tr_{2,\ldots,k}\paren*{\delta(X_{1}-X_{\ell+1})\paren*{\widetilde{\W}_{m}^{(\ell)} \otimes \widetilde{\W}_{n-m}^{(j)}}\ket*{\bigotimes_{r=1}^k \phi_{r}}\bra*{\bigotimes_{r=1}^k \psi_{r}}} \nonumber\\
&\phantom{=} \eqqcolon \mathrm{Term}_{1,k}+\mathrm{Term}_{2,k}. \label{twoterms}
\end{align}
We first analyze $\mathrm{Term}_{1,k}$. Since $(-i\p_{x_{1}})\wt{\W}_{n}^{(k)}\in \L(\Sc(\R^{k}),\Sc'(\R^{k}))$, it follows from the definition of the generalized partial trace that
\begin{align}
\label{eq:Wn_T1_sub}
\mathrm{Term}_{1,k} &= (-i\p_{x})\Tr_{2,\ldots,k}\paren*{\wt{\W}_n^{(k)}\ket*{\bigotimes_{r=1}^k\phi_r}\bra*{\bigotimes_{r=1}^k\psi_r}}.
\end{align}
It follows from the induction hypothesis that
\begin{align}
&(-i\p_x)\Tr_{2,\ldots,k}\paren*{\wt{\W}_n^{(k)}\ket*{\bigotimes_{r=1}^k\phi_r}\bra*{\bigotimes_{r=1}^k\psi_r}}(x;x') \nonumber\\
&=\paren*{(-i\p_x)\Phi_{\wt{\W}_n^{(k)}}(\phi_1,\ldots,\phi_k;\ol{\psi_1},\ldots,\ol{\psi_k})}(x;x') \nonumber\\
&=\ol{\psi_1(x')} (-i\p_x)w_n^{(k)}[\phi_1,\ldots,\phi_k;\ol{\psi_2},\ldots,\ol{\psi_k}](x) \label{eq:Wn_T1_RHS}
\end{align}
with equality in the sense of tempered distributions on $\R^2$. Substituting \eqref{eq:Wn_T1_RHS} into \eqref{eq:Wn_T1_sub}, we obtain that
\begin{equation}
\mathrm{Term}_{1,k} = \ol{\psi_1(x')} (-i\p_x)w_n^{(k)}[\phi_1,\ldots,\phi_k;\ol{\psi_2},\ldots,\ol{\psi_k}](x).
\end{equation}

We next analyze $\mathrm{Term}_{2,k}$. By the computed action of the H\"{o}rmander product $\delta(X_{1}-X_{\ell+1})\paren*{\wt{\W}_{m}^{(\ell)}\otimes\wt{\W}_{n-m}^{(j)}}$ given by \eqref{eq:map_func} and the definition of $\Phi_{\wt{\W}_m^{(\ell)}}$ and $\Phi_{\wt{\W}_{n-m}^{(j)}}$ we have that
\begin{align}
&\Tr_{2,\ldots,k}\paren*{\delta(X_{1}-X_{\ell+1})\paren*{\widetilde{\W}_{m}^{(\ell)} \otimes \widetilde{\W}_{n-m}^{(j)}}\ket*{\bigotimes_{r=1}^k \phi_{r}}\bra*{\bigotimes_{r=1}^k \psi_{r}}}(x;x') \nonumber\\
&= \Phi_{\wt{\W}_m^{(\ell)}}(\phi_1,\ldots,\phi_\ell;\ol{\psi_1},\ldots,\ol{\psi_\ell})(x;x')\Phi_{\wt{\W}_{n-m}^{(j)}}(\phi_{\ell+1},\ldots,\phi_k;\ol{\psi_{\ell+1}},\ldots,\ol{\psi_k})(x;x) \label{eq:Wnm_sub}
\end{align}
in the sense of tempered distributions. Using the induction hypothesis for $\wt{\W}_{m}^{(\ell)}$ and $\wt{\W}_{n-m}^{(j)}$, respectively, we also have that 
\begin{equation}
\begin{split}
&\Phi_{\wt{\W}_m^{(\ell)}}(\phi_1,\ldots,\phi_\ell;\ol{\psi_1},\ldots,\ol{\psi_\ell})(x;x') \\
&=\ol{\psi_1}(x') w_m^{(\ell)}[\phi_1,\ldots,\phi_\ell;\ol{\psi_2},\ldots,\ol{\psi_\ell}](x), \qquad \forall (x,x')\in\R^2
\end{split}
\end{equation}
and
\begin{equation}
\begin{split}
&\Phi_{\wt{\W}_{n-m}^{(j)}}(\phi_{\ell+1},\ldots,\phi_k; \ol{\psi_{\ell+1}},\ldots,\ol{\psi_k}](x;x')\\
&=\ol{\psi_{\ell+1}}(x') w_{n-m}^{(j)}[\phi_{\ell+1},\ldots,\phi_k; \ol{\psi_{\ell+2}},\ldots,\ol{\psi_k}](x), \qquad \forall (x,x')\in\R^2.
\end{split}
\end{equation}
Substituting the two preceding expressions into \eqref{eq:Wnm_sub}, we find that
\begin{equation}
\eqref{eq:Wnm_sub} = \ol{\psi_1(x')}\ol{\psi_{\ell+1}(x)} w_m^{(\ell)}[\phi_1,\ldots,\phi_\ell;\ol{\psi_2},\ldots,\ol{\psi_\ell}](x) w_{n-m}^{(j)}[\phi_{\ell+1},\ldots,\phi_k; \ol{\psi_{\ell+2}},\ldots,\ol{\psi_k}](x).
\end{equation}
Hence,
\begin{align}
&\mathrm{Term}_{2,k}(x;x')  \\
&= \kappa\sum_{m=1}^{n-1}\sum_{\ell+j=k} \ol{\psi_1(x')\psi_{\ell+1}(x)}  w_m^{(\ell)}[\phi_1,\ldots,\phi_\ell;\ol{\psi_2},\ldots,\ol{\psi_\ell}](x) w_{n-m}^{(j)}[\phi_{\ell+1},\ldots,\phi_k; \ol{\psi_{\ell+2}},\ldots,\ol{\psi_k}](x).\nonumber
\end{align}
Combining our identities for $\mathrm{Term}_{1,k}$ and $\mathrm{Term}_{2,k}$, we obtain that
\begin{equation} \nonumber 
\begin{split}
&\paren*{\mathrm{Term}_{1,k}+\mathrm{Term}_{2,k}}(x;x')  \\
&= \ol{\psi_1(x')} (-i\p_x)w_n^{(k)}[\phi_1,\ldots,\phi_k;\ol{\psi_2},\ldots,\ol{\psi_k}](x) \\
&\phantom{=} \quad + \kappa\sum_{m=1}^{n-1}\sum_{\ell+j=k} \ol{\psi_1(x')\psi_{\ell+1}(x)}  w_m^{(\ell)}[\phi_1,\ldots,\phi_\ell;\ol{\psi_2},\ldots,\ol{\psi_\ell}](x) w_{n-m}^{(j)}[\phi_{\ell+1},\ldots,\phi_k; \ol{\psi_{\ell+2}},\ldots,\ol{\psi_k}](x),
\end{split}
\end{equation}
with equality in $\Sc'(\R^2)$. Now applying the recursive relation \eqref{eq:wn(k)_recur} for $w_{n+1}^{(k)}[\phi_{1},\ldots,\phi_{k};\ol{\psi_{2}},\ldots,\ol{\psi_{k}}]$, we find that
\begin{equation}
\paren*{\mathrm{Term}_{1,k} + \mathrm{Term}_{2,k}}(x;x') = \ol{\psi_1(x')} w_{n+1}^{(k)}[\phi_1,\ldots,\phi_k;\ol{\psi_2},\ldots,\ol{\psi_k}](x),
\end{equation}
which completes the proof of the induction step for showing \eqref{equ:whattoprove}.

We now use \cref{equ:whattoprove} to prove the adjoint assertion of the lemma. For $f,g\in\Sc(\R)$, we have by definition of the generalized partial trace (see \cref{prop:partial_trace}) that
\begin{equation} \label{gen_trace_exp}
\begin{split}
&\ipp*{\Tr_{2,\ldots,k}\paren*{\wt{\W}_n^{(k),*} \ket*{\bigotimes_{r=1}^k \phi_r}\bra*{\bigotimes_{r=1}^k\psi_r}}f, g}_{\Sc'(\R)-\Sc(\R)} \\
&=\ip{\psi_1}{f} \ipp*{\wt{\W}_n^{(k),*}\bigotimes_{r=1}^k \phi_r, g\otimes \bigotimes_{r=2}^k \ol{\psi_r}}_{\Sc'(\R^k)-\Sc(\R^k)}.
\end{split}
\end{equation}
By \cref{lem:dvo_adj},
\begin{equation}
\ipp*{\wt{\W}_n^{(k),*}\bigotimes_{r=1}^k \phi_r, \ol{\ol{g}\otimes \bigotimes_{r=2}^k \psi_r}}_{\Sc'(\R^k)-\Sc(\R^k)} = \ol{\ipp*{\wt{\W}_n^{(k)}( \ol{g}\otimes\bigotimes_{r=2}^k \psi_r), \bigotimes_{r=1}^k\ol{\phi_r}}}_{\Sc'(\R^k)-\Sc(\R^k)}.
\end{equation}
We can rewrite
\begin{equation}
\begin{split}
&\ip{\psi_1}{f}\ol{\ipp*{\wt{\W}_n^{(k)}( \ol{g}\otimes\bigotimes_{r=2}^k \psi_r), \bigotimes_{r=1}^k\ol{\phi_r}}}_{\Sc'(\R^k)-\Sc(\R^k)}\\
&= \ol{\ipp*{\Tr_{2,\ldots,k}\paren*{\wt{\W}_n^{(k)}\ket*{\ol{g}\otimes \bigotimes_{r=2}^k \psi_r}\bra*{f\otimes \bigotimes_{r=2}^k\phi_r}}\psi_1, \ol{\phi_1}}}_{\Sc'(\R)-\Sc(\R)}. \label{prev_ex}
\end{split}
\end{equation}
Now applying \eqref{equ:whattoprove} to this expression, we obtain that the right-hand side of \eqref{prev_ex} equals
\begin{align}
&\ol{\int_{\R^2}dxdx' \Phi_{\wt{\W}_n^{(k)}}(\ol{g},\psi_2,\ldots,\psi_k; \ol{f},\ol{\phi_2},\ldots,\ol{\phi_k})(x;x')\psi_1(x')\ol{\phi_1(x)}} \nonumber\\
&=\ol{\int_{\R^2}dxdx' \ol{f}(x') w_n^{(k)}[\ol{g},\psi_2,\ldots,\psi_k;\ol{\phi_2},\ldots,\ol{\phi_k}](x)\psi_1(x')\ol{\phi_1}(x)} \nonumber\\
&=\int_{\R^2}dxdx' f(x') \ol{w_n^{(k)}[\ol{g},\psi_2,\ldots,\psi_k;\ol{\phi_2},\ldots,\ol{\phi_k}]}(x) \ol{\psi_1}(x')\phi_1(x). \label{eq:FT_adj_app}
\end{align}
Next, using the Fubini-Tonelli theorem and applying \cref{lem:wn(k)_tran} in the $x$-integration, we find that
\begin{align}
\eqref{eq:FT_adj_app} &= \ip{\psi_1}{f} \ol{\int_{\R}dx w_{n,1}^{(k),t}[\ol{\phi_1},\psi_2,\ldots,\psi_k;\ol{\phi_2},\ldots,\ol{\phi_k}](x)\ol{g}(x)} \nonumber\\
&=\ip{\psi_1}{f}\int_{\R}dx \ol{w_{n,1}^{(k),t}[\ol{\phi_1},\psi_2,\ldots,\psi_k;\ol{\phi_2},\ldots,\ol{\phi_k}]}(x) g(x).
\end{align}
Since $f,g\in\Sc(\R)$ were arbitrary, going back to the left-hand side of \eqref{gen_trace_exp} and using the uniqueness and properties of $\Phi_{\W_n^{(k),*}}$, we conclude the pointwise in $\R^2$ identity
\begin{equation} \label{equ:whattoprove_adj}
\Phi_{\W_n^{(k),*}}(\phi_1,\ldots,\phi_k; \ol{\psi_1},\ldots,\ol{\psi_k})(x;x') = \ol{\psi_1(x')w_{n,1}^{(k),t}[\ol{\phi_1},\psi_2,\ldots,\psi_k;\ol{\phi_2},\ldots,\ol{\phi_k}](x)}.
\end{equation} 

We next need to generalize \cref{equ:whattoprove} and \cref{equ:whattoprove_adj} to arbitrary permutations $\pi\in\Ss_{k}$. By definition of the notation 
\begin{equation*}
\wt{\W}_{n,(\pi(1),\ldots,\pi(k))}^{(k)} := \pi \circ \wt{\W}_{n}^{(k)} \circ \pi^{-1},
\end{equation*}
we have that for any $\phi_1,\ldots,\phi_k\in\Sc(\R)$,
\begin{equation}
\wt{\W}_{n,(\pi(1),\ldots,\pi(k))}^{(k)}(\bigotimes_{r=1}^k \phi_r) = \pi \circ \wt{\W}_n^{(k)}((\bigotimes_{r=1}^k\phi_r)\circ\pi^{-1}),
\end{equation}
where the reader will recall from \eqref{eq:pi_vec_def} and \eqref{eq:pi_func_def} how a permutation acts on vectors and functions, respectively. Setting $f^{(k)} \coloneqq \bigotimes_{r=1}^k \phi_r$, we have by definition that
\begin{equation}
(f^{(k)}\circ\pi^{-1})(\ux_k) = f^{(k)}(x_{\pi^{-1}(1)},\ldots,x_{\pi^{-1}(k)}) = \prod_{r=1}^k \phi_r(x_{\pi^{-1}(r)}).
\end{equation}
Making the change of variable $r'=\pi^{-1}(r)$, we see that
\begin{equation}
\label{eq:tp_perm_comp}
\prod_{r=1}^k \phi_r(x_{\pi^{-1}(r)}) = \prod_{r'=1}^k \phi_{\pi(r')}(x_{r'}) = (\bigotimes_{r=1}^k \phi_{\pi(r)})(\ux_k).
\end{equation}
Therefore,
\begin{align}
\Tr_{2,\ldots,k}\paren*{\wt{\W}_{n,(\pi(1),\ldots,\pi(k))}^{(k)}\ket*{\otimes_{\ell=1}^{k}\phi_{\ell} }\bra*{\otimes_{\ell=1}^{k}\psi_{\ell}}} &= \Tr_{2,\ldots,k}\paren*{\paren*{\pi\circ\wt{\W}_n^{(k)}}\ket*{\bigotimes_{\ell=1}^k \phi_{\pi(\ell)}}\bra*{\bigotimes_{\ell=1}^k \psi_{\ell}}}
\end{align}
as elements of $\L_{gmp}(\Sc(\R), \Sc'(\R))$. Next, it follows from the characterizing property of the generalized partial trace and the fact that we define a permutation to act on tempered distribution by duality that
\begin{align}
&\ipp*{\Tr_{2,\ldots,k}\paren*{\paren*{\pi\circ\wt{\W}_n^{(k)}}\ket*{\bigotimes_{\ell=1}^k \phi_{\pi(\ell)}}\bra*{\bigotimes_{\ell=1}^k \psi_{\ell}}}f,g}_{\Sc'(\R)-\Sc(\R)} \nonumber\\
&=\ip{\psi_1}{f}\ipp*{\wt{\W}_n^{(k)}\bigotimes_{\ell=1}^k \phi_{\pi(\ell)}, (g\otimes\bigotimes_{\ell=2}^k \ol{\psi_\ell})\circ\pi^{-1}}_{\Sc'(\R^k)-\Sc(\R^k)}. \label{eq:tp_perm_sub}
\end{align}
Repeating the computation which yielded \eqref{eq:tp_perm_comp}, we find that
\begin{equation}
(g\otimes \bigotimes_{\ell=2}^k \ol{\psi_\ell})\circ \pi^{-1} = (\bigotimes_{\ell=1}^{\pi^{-1}(1)-1}\ol{\psi_{\pi(\ell)}}) \otimes g \otimes (\bigotimes_{\ell=\pi^{-1}(1)+1}^k \ol{\psi_{\pi(\ell)}}),
\end{equation}
where per our notation convention, the tensor product on the right-hand side is to be interpreted as $g \otimes \bigotimes_{\ell=2}^k\ol{\psi_{\pi(\ell)}}$ if $\pi(1)=1$. Thus,
\begin{align}
\eqref{eq:tp_perm_sub} &=\ip{\psi_1}{f}\ipp*{\wt{\W}_n^{(k)}\bigotimes_{\ell=1}^k \phi_{\pi(\ell)}, (\bigotimes_{\ell=1}^{\pi^{-1}(1)-1}\ol{\psi_{\pi(\ell)}})\otimes g\otimes (\bigotimes_{\ell=\pi^{-1}(1)+1}^k \ol{\psi_{\pi(\ell)}})}_{\Sc'(\R^k)-\Sc(\R^k)} \nonumber\\
&=\ipp*{\Tr_{2,\ldots,k}\paren*{\wt{\W}_n^{(k)}\ket*{\bigotimes_{\ell=1}^k \phi_{\pi(\ell)}}\bra*{\psi_1\otimes(\bigotimes_{\ell=2}^{\pi^{-1}(1)-1}\psi_{\pi(\ell)})\otimes\ol{g}\otimes (\bigotimes_{\ell=\pi^{-1}(1)+1}^k \psi_{\pi(\ell)}}}f,\ol{\psi_{\pi(1)}}}_{\Sc'(\R)-\Sc(\R)} \nonumber.
\end{align}
By definition of $\Phi_{\wt{\W}_n^{(k)}}$, this last expression equals
\begin{equation*}
\int_{\R^2}dxdx'\Phi_{\wt{\W}_n^{(k)}}(\phi_{\pi(1)},\ldots,\phi_{\pi(k)}; \ol{\psi_1},\ol{\psi_{\pi(2)}},\ldots,\ol{\psi_{\pi(\pi^{-1}(1)-1)}},g,\ol{\psi_{\pi(\pi^{-1}(1)+1)}},\ldots,\ol{\psi_{\pi(k)}})(x;x')f(x')\ol{\psi_{\pi(1)}(x)}.
\end{equation*}
Applying the result we have just established for the identity permutation, recorded in \eqref{equ:whattoprove}, and using the Fubini-Tonelli theorem and \cref{lem:wn(k)_tran}, we obtain
\begin{align*}
&\int_{\R^2}dxdx' \ol{\psi_1(x')} w_n^{(k)}[\phi_{\pi(1)},\ldots,\phi_{\pi(k)};\ol{\psi_{\pi(2)}},\ldots,\ol{\psi_{\pi(\pi^{-1}(1)-1)}},g,\ol{\psi_{\pi(\pi^{-1}(1)+1)}},\ldots,\ol{\psi_{\pi(k)}})(x)f(x')\ol{\psi_{\pi(1)}(x)} \\
&=\int_{\R^2}dxdx'w_{n,\pi^{-1}(1)'}^{(k),t}[\phi_{\pi(1)},\ldots,\phi_{\pi(k)};\ol{\psi_{\pi(2)}},\ldots,\ol{\psi_{\pi(\pi^{-1}(1)-1)}},\ol{\psi_{\pi(1)}},\ol{\psi_{\pi(\pi^{-1}(1)+1)}},\ldots,\ol{\psi_{\pi(k)}}](x) \\
&\phantom{=}\hspace{25mm}\times  \ol{\psi_1(x')}g(x)f(x').
\end{align*}
Since $f,g\in\Sc(\R)$ were arbitrary, we conclude that
\begin{equation}
\begin{split}
&\Phi_{\W_{n,(\pi(1),\ldots,\pi(k))}^{(k)}}(\phi_1,\ldots,\phi_k;\ol{\psi_1},\ldots,\ol{\psi_k})(x;x') \\
&= \ol{\psi_1(x')} w_{n,\pi^{-1}(1)'}^{(k),t}[\phi_{\pi(1)},\ldots,\phi_{\pi(k)};\ol{\psi_{\pi(2)}},\ldots,\ol{\psi_{\pi(\pi^{-1}(1)-1)}},\ol{\psi_{\pi(1)}},\ol{\psi_{\pi(\pi^{-1}(1)+1)}},\ldots,\ol{\psi_{\pi(k)}}](x), \qquad (x,x')\in\R^2.
\end{split}
\end{equation}

For the assertions about the adjoint, consider the expression
\begin{equation}
\label{eq:Phi_perm_adj}
\int_{\R^2}dxdx'\Phi_{\wt{\W}_n^{(k),*}}(\phi_{\pi(1)},\ldots,\phi_{\pi(k)}; \ol{\psi_1},\ol{\psi_{\pi(2)}},\ldots,\ol{\psi_{\pi(\pi^{-1}(1)-1)}},g,\ol{\psi_{\pi(\pi^{-1}(1)+1)}},\ldots,\ol{\psi_{\pi(k)}})(x;x')f(x')\ol{\psi_{\pi(1)}(x)}.
\end{equation}
By \eqref{equ:whattoprove_adj}, we have
\begin{equation} \label{char_prop}
\begin{split}
&\Phi_{\wt{\W}_n^{(k),*}}(\phi_{\pi(1)},\ldots,\phi_{\pi(k)}; \ol{\psi_1},\ol{\psi_{\pi(2)}},\ldots,\ol{\psi_{\pi(\pi^{-1}(1)-1)}},g,\ol{\psi_{\pi(\pi^{-1}(1)+1)}},\ldots,\ol{\psi_{\pi(k)}})(x;x')\\
&=\ol{\psi_1(x')}\ol{w_{n,1}^{(k),t}[\ol{\phi_{\pi(1)}},\psi_{\pi(2)},\ldots,\psi_{\pi(\pi^{-1}(1)-1)},\ol{g},\psi_{\pi(\pi^{-1}(1)+1)},\ldots,\psi_{\pi(k)};\ol{\phi_{\pi(2)}},\ldots,\ol{\phi_{\pi(k)}}](x)}.
\end{split}
\end{equation}
By the characterizing property of $w_{n,1}^{(k),t}$ from \cref{lem:wn(k)_tran}, followed by a second application of \cref{lem:wn(k)_tran},  we have that
\begin{align}
&\int_{\R}dx \ \ol{\psi_{\pi(1)}(x)w_{n,1}^{(k),t}[\ol{\phi_{\pi(1)}},\psi_{\pi(2)},\ldots,\psi_{\pi(\pi^{-1}(1)-1)},\ol{g},\psi_{\pi(\pi^{-1}(1)+1)},\ldots,\psi_{\pi(k)};\ol{\phi_{\pi(2)}},\ldots,\ol{\phi_{\pi(k)}}](x)} \nonumber\\
&=\ol{\int_{\R}dx \ \ol{\phi_{\pi(1)}(x)} w_n^{(j)}[\psi_{\pi(1)},\ldots,\psi_{\pi(\pi^{-1}(1)-1)},\ol{g},\psi_{\pi(\pi^{-1}(1)+1)},\ldots,\psi_{\pi(k)};\ol{\phi_{\pi(2)}},\ldots,\ol{\phi_{\pi(k)}}](x)} \nonumber\\
&=\ol{\int_{\R}dx \ \ol{g(x)} w_{n,\pi^{-1}(1)}^{(k),t}[\psi_{\pi(1)},\ldots,\psi_{\pi(\pi^{-1}(1)-1)},\ol{\phi_{\pi(1)}},\psi_{\pi(\pi^{-1}(1)+1)},\ldots,\psi_{\pi(k)};\ol{\phi_{\pi(2)}},\ldots,\ol{\phi_{\pi(k)}}](x)}.
\end{align}
By substituting \eqref{char_prop} into \eqref{eq:Phi_perm_adj}, then using Fubini-Tonelli theorem and the preceding identity, we conclude that
\begin{equation}
\begin{split}
&\Phi_{\W_{n,(\pi(1),\ldots,\pi(k))}^{(k),*}}(\phi_1,\ldots,\phi_k;\ol{\psi_1},\ldots,\ol{\psi_k})(x;x') \\
&=\ol{\psi_1(x') w_{n,\pi^{-1}(1)}^{(k),t}[\psi_{\pi(1)},\ldots,\psi_{\pi(\pi^{-1}(1)-1)},\ol{\phi_{\pi(1)}},\psi_{\pi(\pi^{-1}(1)+1)},\ldots,\psi_{\pi(k)};\ol{\phi_{\pi(2)}},\ldots,\ol{\phi_{\pi(k)}}](x)}
\end{split}
\end{equation}
point-wise in $\R^2$, which establishes the final claim and completes the proof.
\end{proof}

By taking the (1-particle) trace of the DVOs
\begin{equation*}
\Tr_{2,\ldots,k}\paren*{\wt{\W}_{n,(\pi(1),\ldots,\pi(k))}^{(k)} \ket*{\bigotimes_{\ell=1}^k \phi_\ell}\bra*{\bigotimes_{\ell=1}^k \psi_\ell}}, \enspace \Tr_{2,\ldots,k}\paren*{\wt{\W}_{n,(\pi(1),\ldots,\pi(k))}^{(k),*} \ket*{\bigotimes_{\ell=1}^k \phi_\ell}\bra*{\bigotimes_{\ell=1}^k \psi_\ell}}
\end{equation*}
and using the definition \eqref{eq:In(k)_def} of $I_n^{(k)}$, we obtain the following corollary of \cref{lem:Wn_wn_ptr}:

\begin{cor}\label{cor:Wn_wn_tr}
Let $k,n\in\N$. Then for any permutation $\pi\in\Ss_{k}$ and any functions $\phi_{1},\ldots,\phi_{k},\psi_{1},\ldots,\psi_{k}\in\Sc(\R)$, we have the identities
\begin{align}
&\Tr_{1,\ldots,k}\paren*{\widetilde{\W}_{n,(\pi(1),\ldots,\pi(k))}^{(k)}\ket*{\otimes_{\ell=1}^k \phi_{\ell}}\bra*{\otimes_{\ell=1}^k \psi_{\ell}}} =  I_n^{(k)}[\phi_{\pi(1)},\ldots,\phi_{\pi(k)}; \overline{\psi_{\pi(1)}},\ldots,\overline{\psi_{\pi(k)}}],\\
&\Tr_{1,\ldots,k}\paren*{\widetilde{\W}_{n,(\pi(1),\ldots,\pi(k))}^{(k),*}\ket*{\otimes_{\ell=1}^k \phi_{\ell}}\bra*{\otimes_{\ell=1}^k \psi_{\ell}}} = \ol{I_n^{(k)}[{\psi}_{\pi(1)},\ldots,{\psi}_{\pi(k)}; \overline{\phi_{\pi(1)}},\ldots,\overline{\phi_{\pi(k)}}]}.
\end{align}
\end{cor}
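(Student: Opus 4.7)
The plan is to deduce the corollary from Lemma \ref{lem:Wn_wn_ptr} by performing one final integration, namely restricting the Schwartz kernels $\Phi_{\widetilde{\W}^{(k)}_{n,(\pi(1),\ldots,\pi(k))}}$ and $\Phi_{\widetilde{\W}^{(k),*}_{n,(\pi(1),\ldots,\pi(k))}}$ to the diagonal $x=x'$ and integrating in $x$. By the characterizing property of the generalized partial trace applied to the operators whose kernels were identified in \eqref{eq:Phi_Wn}, one has
\[
\Tr_{1,\ldots,k}\paren*{\widetilde{\W}_{n,(\pi(1),\ldots,\pi(k))}^{(k)}\ket*{\otimes_{\ell=1}^k \phi_{\ell}}\bra*{\otimes_{\ell=1}^k \psi_{\ell}}} = \int_{\R}dx\, \Phi_{\widetilde{\W}_{n,(\pi(1),\ldots,\pi(k))}^{(k)}}(\phi_1,\ldots,\phi_k;\ol{\psi_1},\ldots,\ol{\psi_k})(x;x),
\]
and analogously for the adjoint. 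This reduces the claim to an identification at the level of Schwartz functions of one variable.

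For the first identity, I will split into the cases $\pi(1)=1$ and $\pi(1)\neq 1$. When $\pi(1)=1$, one has $\ol{\psi_1}=\ol{\psi_{\pi(1)}}$, so substituting the first formula of Lemma \ref{lem:Wn_wn_ptr} immediately yields
\[
\int_{\R}dx\, \ol{\psi_{\pi(1)}(x)}\, w_n^{(k)}[\phi_{\pi(1)},\ldots,\phi_{\pi(k)};\ol{\psi_{\pi(2)}},\ldots,\ol{\psi_{\pi(k)}}](x),
\]
which is exactly $I_n^{(k)}[\phi_{\pi(1)},\ldots,\phi_{\pi(k)};\ol{\psi_{\pi(1)}},\ldots,\ol{\psi_{\pi(k)}}]$ by definitions \eqref{eq:Pn(k)_def} and \eqref{eq:In(k)_def}. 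When $\pi(1)\neq 1$, Lemma \ref{lem:Wn_wn_ptr} gives the same outer factor $\ol{\psi_1(x)}$ but with $w_n^{(k)}$ replaced by the partial transpose $w_{n,\pi^{-1}(1)'}^{(k),t}[\,\cdot\,]$ in which $\ol{\psi_{\pi(1)}}$ occupies the $\pi^{-1}(1)$-th $\psi$-slot. I then invoke Lemma \ref{lem:wn(k)_tran} with $j=\pi^{-1}(1)$ and test function $\delta\psi=\ol{\psi_1}$, which swaps $\ol{\psi_1}$ and $\ol{\psi_{\pi(1)}}$; since $\psi_1=\psi_{\pi(\pi^{-1}(1))}$, the resulting $\psi$-tuple collapses to $\ol{\psi_{\pi(2)}},\ldots,\ol{\psi_{\pi(k)}}$ and the outer factor becomes $\ol{\psi_{\pi(1)}}$, again producing the right-hand side of the claimed identity.

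The adjoint identity is handled along exactly the same lines. Starting from the adjoint formulae of Lemma \ref{lem:Wn_wn_ptr}, the diagonal integration produces the complex conjugate of an integral which, after an analogous application of Lemma \ref{lem:wn(k)_tran} (using the unprimed transpose $w_{n,\pi^{-1}(1)}^{(k),t}$ in the $\pi(1)\neq 1$ case, and doing nothing in the $\pi(1)=1$ case), matches $\ol{I_n^{(k)}[\psi_{\pi(1)},\ldots,\psi_{\pi(k)};\ol{\phi_{\pi(1)}},\ldots,\ol{\phi_{\pi(k)}}]}$; the outer conjugation persists throughout. The main (and really only) obstacle in this argument is careful index bookkeeping — tracking how $\pi$, the (primed vs.\ unprimed) transpose, and the insertion of $\ol{\psi_{\pi(1)}}$ (resp.\ $\ol{\phi_{\pi(1)}}$) into the $\pi^{-1}(1)$-th slot interact. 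Once the indices are sorted out, the identities follow directly from Lemmas \ref{lem:Wn_wn_ptr} and \ref{lem:wn(k)_tran} and the definitions of $P_n^{(k)}$ and $I_n^{(k)}$.
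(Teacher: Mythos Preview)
Your proposal is correct and follows essentially the same route as the paper: integrate the Schwartz kernels computed in Lemma~\ref{lem:Wn_wn_ptr} along the diagonal $x=x'$ and recognize the result as $I_n^{(k)}$ via \eqref{eq:Pn(k)_def}--\eqref{eq:In(k)_def}. The paper's one-line proof leaves implicit what you spell out explicitly, namely that in the $\pi(1)\neq 1$ case one must invoke the defining property of the partial transposes in Lemma~\ref{lem:wn(k)_tran} to convert the expression involving $w_{n,\pi^{-1}(1)'}^{(k),t}$ (resp.\ $w_{n,\pi^{-1}(1)}^{(k),t}$) back into one involving $w_n^{(k)}$ with the correct outer factor $\ol{\psi_{\pi(1)}}$ (resp.\ $\ol{\phi_{\pi(1)}}$); your index tracking there is correct.
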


\section{The involution: $\H_n$ and $I_{b,n}$} \label{sec:invol}
In this section, we prove \cref{thm:GP_invol}. We recall the definition of the trace functionals
\begin{equation}\label{tr_fun}
\H_{n}(\Gamma) \coloneqq \Tr\paren*{\W_{n}\cdot\Gamma}, \qquad \forall  \Gamma \in \G_{\infty}^{*}.
\end{equation}
The statement of the theorem is then the following:
\GPinvol*

As discussed in the introduction, we prove \cref{thm:GP_invol} by showing that the Poisson commutativity of the functionals $\H_n$ on the weak Poisson manifold $(\G_\infty^*,\A_\infty,\pb{\cdot}{\cdot}_{\G_\infty^*})$ is equivalent to the Poisson commutativity of the functionals $I_{b,n}$ on the weak Poisson manifold $(\Sc(\R;\mathcal{V}), \A_{\Sc,\mathcal{V}}, \pb{\cdot}{\cdot}_{L^2,\mathcal{V}})$. See \eqref{symp_v}, \eqref{alg_v}, and \cref{prop:Schw_WP_V} for definition and properties of this manifold. Since the Poisson commutativity of the $I_{b,n}$ is established in \cref{prop:Ib_invol}, this equivalence will complete the proof of \cref{thm:GP_invol}. 

Establishing this equivalence relies on the detailed correspondence between the observable $\infty$-hierarchies $-i\W_n$ and the multilinear forms $w_n$ which we have obtained in \cref{sec:cor}, the reduction to symmetric-rank-1 tensors described in \cref{app:multi_alg}, and the demonstration of a Poisson morphism
\begin{equation*}
\iota_{\m}: (\Sc(\R;\mathcal{V}), \A_{\Sc,\mathcal{V}}, \pb{\cdot}{\cdot}_{L^2,\mathcal{V}}) \rightarrow (\G_\infty^*,\A_{\infty}, \pb{\cdot}{\cdot}_{\G_\infty^*}).
\end{equation*}
We establish the existence of this Poisson morphism in the next subsection.

\subsection{The mixed state Poisson morphism}
\label{ssec:invol_pomo}
Analogous to Theorem 2.12 from our companion paper \cite{MNPRS1_2019}, which shows that there is a Poisson morphism between $(\Sc(\R),\A_{\Sc},\pb{\cdot}{\cdot}_{L^2})$ and $(\G_{\infty}^*,\A_\infty,\pb{\cdot}{\cdot}_{\G_\infty^*})$ given by
\begin{equation}
\label{eq:iota_def}
\iota(\phi) \coloneqq (\ket*{\phi^{\otimes k}}\bra*{\phi^{\otimes k}})_{k\in\N}, \qquad\forall \phi\in\Sc(\R)
\end{equation}
\cref{thm:GP_pomo} stated below demonstrates that we have a Poisson morphism $\iota_{\m}$ between the weak Poisson manifolds $(\Sc(\R;\mathcal{V}), \A_{\Sc,\mathcal{V}},\pb{\cdot}{\cdot}_{L^2,\mathcal{V}})$ and $(\G_{\infty}^*, \A_{\infty}, \pb{\cdot}{\cdot}_{\G_{\infty}^*})$ given by
\begin{equation}
%\begin{split}
\iota_{\m}(\gamma) \coloneqq \frac{1}{2}(\ket*{\phi_1^{\otimes k}}\bra*{\phi_2^{\otimes k}}+\ket*{\phi_2^{\otimes k}}\bra*{\phi_1^{\otimes k}})_{k\in\N}, \qquad \forall \gamma = \frac{1}{2}\mathrm{odiag}(\phi_1,\ol{\phi_2},\phi_2,\ol{\phi_1}) \in \Sc(\R;\mathcal{V}).
%\end{split}
\end{equation}

\Pomo*

Before proceeding with the proof of \cref{thm:GP_pomo}, we first record the G\^ateaux derivative of the map $\iota_{\m}$, which is used in the proof of the theorem. The computation is an easy exercise relying on multilinearity which we leave to the reader.

\begin{lemma}[Derivative of $\iota_\m$]
\label{lem:iotam_gd}
The G\^ateaux derivative of the map $\iota_\m$ is given by
\begin{equation}
\begin{split}
d\iota_{\m}[\gamma](\delta\gamma)^{(k)} &= \frac{1}{2}\sum_{\alpha=1}^k \left(\ket*{\phi_1^{\otimes (\alpha-1)} \otimes \delta\phi_1 \otimes \phi_1^{\otimes k-\alpha}}\bra*{\phi_2^{\otimes k}} +  \ket*{\phi_2^{\otimes k}}\bra*{\phi_1^{\otimes (\alpha-1)}\otimes\delta\phi_1\otimes\phi_1^{\otimes (k-\alpha)}}\right. \\
&\hspace{25mm} \left.+ \ket*{\phi_1^{\otimes k}}\bra*{\phi_2^{\otimes (\alpha-1)}\otimes\delta\phi_2\otimes\phi_2^{\otimes (k-\alpha)}} +\ket*{\phi_2^{\otimes (\alpha-1)}\otimes\delta\phi_2\otimes\phi_2^{\otimes (k-\alpha)}}\bra*{\phi_1^{\otimes k}}\right),
\end{split}
\end{equation}
for every $k\in\N$, where
\begin{equation}
\gamma = \frac{1}{2}\mathrm{odiag}(\phi_1,\ol{\phi_2},\phi_2,\ol{\phi_1}), \ \delta\gamma=\frac{1}{2}\mathrm{odiag}(\d\phi_1,\ol{\d\phi_2},\d\phi_2,\ol{\d\phi_1})\in\Sc(\R;\mathcal{V}).
\end{equation}
\end{lemma}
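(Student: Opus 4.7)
The plan is to compute the G\^ateaux derivative directly from the definition, exploiting the multilinearity of the tensor power map. Write $\gamma_t \coloneqq \gamma + t\delta\gamma \in \Sc(\R;\V)$, which by the prescribed $\V$-valued structure corresponds to the four functions $(\phi_1+t\d\phi_1,\ol{\phi_2}+t\ol{\d\phi_2},\phi_2+t\d\phi_2,\ol{\phi_1}+t\ol{\d\phi_1})$. Then by definition,
\begin{equation*}
d\iota_\m[\gamma](\d\gamma)^{(k)} = \lim_{t\to 0}\frac{\iota_\m(\gamma_t)^{(k)} - \iota_\m(\gamma)^{(k)}}{t} = \frac{1}{2}\left.\frac{d}{dt}\right|_{t=0}\paren*{\ket*{(\phi_1+t\d\phi_1)^{\otimes k}}\bra*{(\phi_2+t\d\phi_2)^{\otimes k}} + \paren*{1\leftrightarrow 2}},
\end{equation*}
where $(1\leftrightarrow 2)$ denotes the term obtained by swapping the roles of $\phi_1$ and $\phi_2$. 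So the task reduces to differentiating each summand at $t=0$.

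Next, I would establish the elementary identity
\begin{equation*}
\left.\frac{d}{dt}\right|_{t=0}(\phi + t\d\phi)^{\otimes k} = \sum_{\alpha=1}^k \phi^{\otimes(\alpha-1)}\otimes \d\phi\otimes \phi^{\otimes(k-\alpha)} \in \Sc(\R^k),
\end{equation*}
valid for any $\phi,\d\phi\in\Sc(\R)$. This follows by expanding $(\phi+t\d\phi)^{\otimes k}$ as a polynomial in $t$ (using bilinearity of $\otimes$ in each slot) and extracting the coefficient of $t$, and the limit can be taken in the Schwartz topology since the remainder is $O(t^2)$ times a Schwartz function. Passing to Schwartz kernels, the density matrix $\ket*{f^{(k)}}\bra*{g^{(k)}}$ has kernel $f^{(k)}(\ux_k)\ol{g^{(k)}(\ux_k')}$, so applying the Leibniz rule to the product of two $t$-dependent kernels yields
\begin{equation*}
\left.\frac{d}{dt}\right|_{t=0}\ket*{(\phi_1+t\d\phi_1)^{\otimes k}}\bra*{(\phi_2+t\d\phi_2)^{\otimes k}} = \sum_{\alpha=1}^k\paren*{\ket*{\phi_1^{\otimes(\alpha-1)}\otimes\d\phi_1\otimes\phi_1^{\otimes(k-\alpha)}}\bra*{\phi_2^{\otimes k}} + \ket*{\phi_1^{\otimes k}}\bra*{\phi_2^{\otimes(\alpha-1)}\otimes\d\phi_2\otimes\phi_2^{\otimes(k-\alpha)}}},
\end{equation*}
where the conjugation on the bra side absorbs the conjugation already present in the kernel expression.

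Finally, I would perform the analogous computation for the $(1\leftrightarrow 2)$ summand, add the two results, and multiply by $\tfrac{1}{2}$, arriving at exactly the four families of terms listed in the lemma statement. There is no genuine obstacle; the only point to be mindful of is bookkeeping to make sure the $\alpha$-th factor of $\d\phi_1$ on the ket side is matched against the full bra $\bra*{\phi_2^{\otimes k}}$ and vice versa, and that the two symmetric contributions from the $(1\leftrightarrow 2)$ term produce the remaining two families of the asserted sum.
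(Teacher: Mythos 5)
Your computation is correct and is precisely the "easy exercise relying on multilinearity" that the paper leaves to the reader: expand $(\phi+t\d\phi)^{\otimes k}$, extract the linear-in-$t$ term, and apply the Leibniz rule to the ket--bra kernel (noting that for real $t$ the antilinearity of the bra causes no issue). The bookkeeping of the four families of terms matches the lemma exactly, so there is nothing to add.
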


We now turn the proof of \cref{thm:GP_pomo}.

\begin{proof}[Proof of \cref{thm:GP_pomo}]
The proof of this result proceeds similarly to the proof of \cite[Theorem 2.12]{MNPRS1_2019}. Smoothness of $\iota_{\m}$ follows from its multilinear structure, therefore it remains to check that
\begin{enumerate}[(i)]
\item
\label{item:pomo_func}
$\iota_{\m}^* \A_{\infty} \subset \A_{\Sc,\mathcal{V}}$,
\item
\label{item:pomo_pb}
$\iota_{\m}^*\pb{\cdot}{\cdot}_{\G_{\infty}^*} = \pb{\iota_{\m}^*\cdot}{\iota_{\m}^*\cdot}_{L^2,\mathcal{V}}$.
\end{enumerate}

We first prove assertion \ref{item:pomo_func}. Let $F\in\A_{\infty}$, and set $f\coloneqq F\circ\iota_{\m}$. By the chain rule for the G\^ateaux derivative, we have that
\begin{align}
df[\gamma](\delta\gamma) &= dF[\iota_{\m}(\gamma)]\paren*{d\iota_{\m}[\gamma](\delta\gamma)} \nonumber\\
&=i\sum_{k=1}^\infty \Tr_{1,\ldots,k}\paren*{dF[\iota_{\m}(\gamma)]^{(k)} d\iota_{\m}[\gamma](\delta\gamma)^{(k)}} \nonumber\\
&=\frac{i}{2}\sum_{k=1}^\infty \Tr_{1,\ldots,k}\paren*{dF[\iota_{\m}(\gamma)]^{(k)} \ket*{\sum_{\alpha=1}^k \phi_1^{\otimes (\alpha-1)}\otimes\delta\phi_1\otimes \phi_1^{\otimes (k-\alpha)}}\bra*{\phi_2^{\otimes k}}} \nonumber\\
&\phantom{=} \hspace{20mm}+  \Tr_{1,\ldots,k}\paren*{dF[\iota_{\m}(\gamma)]^{(k)} \ket*{\phi_2^{\otimes k}}\bra*{\sum_{\alpha=1}^k \phi_1^{\otimes (\alpha-1)}\otimes \delta\phi_1\otimes \phi_1^{\otimes (k-\alpha)}}} \nonumber\\
&\phantom{=} \hspace{20mm} + \Tr_{1,\ldots,k}\paren*{dF[\iota_{\m}(\gamma)]^{(k)} \ket*{\sum_{\alpha=1}^k \phi_2^{\otimes (\alpha-1)}\otimes\delta\phi_2\otimes \phi_2^{\otimes (k-\alpha)}}\bra*{\phi_1^{\otimes k}}} \nonumber\\
&\phantom{=} \hspace{20mm}+  \Tr_{1,\ldots,k}\paren*{dF[\iota_{\m}(\gamma)]^{(k)} \ket*{\phi_1^{\otimes k}}\bra*{\sum_{\alpha=1}^k \phi_2^{\otimes (\alpha-1)}\otimes \delta\phi_2\otimes \phi_2^{\otimes (k-\alpha)}}}, \label{eq:pomo_sum_sub}
\end{align}
where the ultimate equality follows from application of \cref{lem:iotam_gd}.

Next, observe that by \cref{def:gen_trace} for the generalized trace and \cref{def:gmp} for the good mapping property, we have that
\begin{align}
&\Tr_{1,\ldots,k}\paren*{dF[\iota_{\m}(\gamma)]^{(k)} \ket*{\phi_2^{\otimes k}}\bra*{\sum_{\alpha=1}^k \phi_1^{\otimes (\alpha-1)}\otimes \delta\phi_1\otimes \phi_1^{\otimes (k-\alpha)}}}  \nonumber\\
&=\ip{\sum_{\alpha=1}^k \phi_1^{\otimes (\alpha-1)}\otimes \delta\phi_1\otimes \phi_1^{\otimes (k-\alpha)}}{dF[\iota_{\m}(\gamma)]^{(k)} \phi_2^{\otimes k}} \nonumber\\
&=\ip{\delta\phi_1}{\psi_{F,2,k}}, \label{eq:pomo_gmp_1_bar}
\end{align}
where $\psi_{F,2,k}\in\Sc(\R)$ is the necessarily unique Schwartz function coinciding with the antilinear functional
\begin{equation}
\begin{split}
\delta\phi_1&\mapsto\ipp*{\ip{\sum_{\alpha=1}^k (\cdot)\otimes_\alpha \phi_1^{\otimes (k-1)}}{dF[\iota_{\m}(\gamma)]^{(k)}\phi_2^{\otimes k}}\ , \delta\phi_1}_{\Sc'(\R)-\Sc(\R)}\\ 
&\coloneqq \ip{\sum_{\alpha=1}^k \phi_1^{\otimes (\alpha-1)}\otimes \delta\phi_1\otimes \phi_1^{\otimes(k-\alpha)}}{dF[\iota_{\m}(\gamma)]^{(k)}\phi_2^{\otimes k}}
\end{split}
\end{equation}
and where the reader will recall the definition of the notation $\otimes_\alpha$ from \eqref{eq:otimes_sub}. By the same reasoning,
\begin{align}
\label{eq:pomo_gmp_2_bar}
\Tr_{1,\ldots,k}\paren*{dF[\iota_{\m}(\gamma)]^{(k)} \ket*{\phi_1^{\otimes k}}\bra*{\sum_{\alpha=1}^k \phi_2^{\otimes (\alpha-1)}\otimes \delta\phi_2\otimes \phi_2^{\otimes (k-\alpha)}}} &=\ip{\delta\phi_2}{\psi_{F,1,k}},
\end{align}
where $\psi_{F,1,k}$ is the necessarily unique Schwartz function coinciding with the antilinear functional
\begin{equation}
\ip{\sum_{\alpha=1}^k  (\cdot)\otimes_\alpha \phi_2^{\otimes(k-1)}}{dF[\iota_{\m}(\gamma)]^{(k)}\phi_1^{\otimes k}}.
\end{equation}
Next, using that $dF[\iota_{\m}(\gamma)]^{(k)}$ is skew-adjoint,
\begin{align}
&\Tr_{1,\ldots,k}\paren*{dF[\iota_{\m}(\gamma)]^{(k)} \ket*{\sum_{\alpha=1}^k \phi_1^{\otimes (\alpha-1)}\otimes\delta\phi_1\otimes \phi_1^{\otimes (k-\alpha)}}\bra*{\phi_2^{\otimes k}}} \nonumber\\
&=-\ip{dF[\iota_{\m}(\gamma)]\phi_2^{\otimes k}}{\sum_{\alpha=1}^k \phi_1^{\otimes (\alpha-1)}\otimes\delta\phi_1\otimes\phi_1^{\otimes (k-\alpha)}} \nonumber\\
&=-\ol{\ip{\sum_{\alpha=1}^k \phi_1^{\otimes (\alpha-1)}\otimes\delta\phi_1\otimes\phi_1^{\otimes (k-\alpha)}}{dF[\iota_{\m}(\gamma)]\phi_2^{\otimes k}}} \nonumber\\
&=-\ol{\ip{\delta\phi_1}{\psi_{F,2,k}}} \nonumber\\
&=-\ip{\psi_{F,2,k}}{\delta\phi_1}. \label{eq:pomo_gmp_1}
\end{align}
By the same reasoning,
\begin{align}
\Tr_{1,\ldots,k}\paren*{dF[\iota_{\m}(\gamma)]^{(k)} \ket*{\sum_{\alpha=1}^k\phi_2^{\otimes (\alpha-1)}\otimes\delta\phi_2\otimes \phi_2^{\otimes (k-\alpha)}}\bra*{\phi_1^{\otimes k}}}&=-\ip{\psi_{F,1,k}}{\delta\phi_2}. \label{eq:pomo_gmp_2}
\end{align}

Substituting identities \eqref{eq:pomo_gmp_1_bar}, \eqref{eq:pomo_gmp_2_bar}, \eqref{eq:pomo_gmp_1}, and \eqref{eq:pomo_gmp_2} into \eqref{eq:pomo_sum_sub}, we find that
\begin{align}
df[\iota_{\m}(\gamma)](\delta\gamma) &=\frac{i}{2}\sum_{k=1}^\infty \paren*{\ip{\delta\phi_1}{\psi_{F,2,k}} + \ip{\delta\phi_2}{\psi_{F,1,k}} -\ip{\psi_{F,2,k}}{\delta\phi_1} - \ip{\psi_{F,1,k}}{\delta\phi_2}} \nonumber\\
&=\frac{i}{2}\paren*{\ip{\delta\phi_1}{\psi_{F,2}} + \ip{\delta\phi_2}{\psi_{F,1}} -\ip{\psi_{F,2}}{\delta\phi_1} -\ip{\psi_{F,1}}{\delta\phi_2}},
\end{align}
where we have defined $\psi_{F,1} \coloneqq \sum_{k=1}^\infty \psi_{F,1,k}$ and similarly for $\psi_{F,2}$. Note that these are well-defined Schwartz functions since $dF^{(k)}$ is zero for all but finitely many $k$ by assumption that $F\in\A_{\infty}$ (recall that $\A_\infty$ is generated by the set \eqref{eq:Ainf_gen}). The preceding formula can be rewritten as
\begin{equation}
\label{eq:sym_grad_rw}
df[\iota_m(\gamma)](\delta\gamma)=\frac{1}{2}\tr_{\C^2\otimes\C^2}\paren*{J \mathrm{odiag}(\psi_{F,1},\ol{\psi_{F,2}}, \psi_{F,2}, \ol{\psi_{F,1}}) \mathrm{odiag}(\delta\phi_2, \ol{\delta\phi_1},\delta\phi_1,\ol{\delta\phi_2})},
\end{equation}
where $J=\mathrm{diag}(i,-i,i,-i)$. Recalling definition \eqref{symp_v} for the symplectic form $\omega_{L^2,\V}$, we then see from \eqref{eq:sym_grad_rw} that the symplectic gradient of $f$ with respect to the form $\omega_{L^2,\V}$, which we denote by $\grad_{s,\V}f$, is given by
\begin{equation}
\label{eq:grad_form}
\grad_{s,\V} f(\gamma) = \frac{1}{2}\mathrm{odiag}(\psi_{F,1},\ol{\psi_{F,2}}, \psi_{F,2}, \ol{\psi_{F,1}}).
\end{equation}
That the map
\begin{equation}
\Sc(\R;\mathcal{V}) \rightarrow \Sc(\R;\V), \qquad \gamma \mapsto \grad_{s,\V}f(\gamma)
\end{equation}
is smooth follows from the fact that $\gamma$ depends smoothly on $(\psi_{F,1},\psi_{F,2})$, a consequence of the good mapping property. This completes our verification of assertion \ref{item:pomo_func}.

We now verify assertion \ref{item:pomo_pb} using the formula \eqref{eq:grad_form}. By definition of the Hamiltonian vector field in \ref{item:wp_P3} of \cref{def:WP} together with \cref{prop:LP}, which gives a formula for the vector field $X_G(\iota_{\m}(\gamma))$, we have that
\begin{align}
&\pb{F}{G}_{\G_{\infty}^*}(\iota_{\m}(\gamma)) \nonumber\\
&=dF[\iota_{\m}(\gamma)]\paren*{X_G(\iota_{\m}(\gamma)} \nonumber\\
&=\sum_{k=1}^\infty i\Tr_{1,\ldots,k}\paren*{dF[\iota_{\m}(\gamma)]^{(k)}\paren*{\sum_{j=1}^\infty j\Tr_{k+1,\ldots,k+j-1}\paren*{\comm{\sum_{\alpha=1}^k dG[\iota_{\m}(\gamma)]_{(\alpha,k+1,\ldots,k+j-1)}^{(j)}}{\iota_{\m}(\gamma)^{(k+j-1)}}}}}\nonumber
%\\
%&=\sum_{k=1}^\infty i\Tr_{1,\ldots,k}\paren*{dF[\iota_{\m}(\gamma)]^{(k)}\paren*{\sum_{j=1}^\infty \Tr_{k+1,\ldots,k+j-1}\paren*{\comm{\sum_{\alpha=1}^k \sum_{\beta=1}^j dG[\iota_{\m}(\gamma)]_{(k+1,\ldots,k+\beta-1,\alpha,k+\beta,\ldots,\ldots,k+j-1)}^{(j)}}{\iota_{\m}(\gamma)^{(k+j-1)}}}}}, \label{eq:pb_c_sub}
\end{align}
By the bosonic symmetry of $dG[\iota_{\m}(\gamma)]^{(j)}$,
\begin{align}
&\sum_{j=1}^\infty j\Tr_{k+1,\ldots,k+j-1}\paren*{\comm{\sum_{\alpha=1}^k dG[\iota_{\m}(\gamma)]_{(\alpha,k+1,\ldots,k+j-1)}^{(j)}}{\iota_{\m}(\gamma)^{(k+j-1)}}} \nonumber \\
& = \sum_{j=1}^\infty \Tr_{k+1,\ldots,k+j-1}\paren*{\comm{\sum_{\alpha=1}^k \sum_{\beta=1}^j dG[\iota_{\m}(\gamma)]_{(k+1,\ldots,k+\beta-1,\alpha,k+\beta,\ldots,\ldots,k+j-1)}^{(j)}}{\iota_{\m}(\gamma)^{(k+j-1)}}}. \label{eq:pb_c_sub}
\end{align}
It is then a short computation using the Schwartz kernel theorem and the definition of $\iota_{\m}$ that
\begin{equation}
\begin{split}
&\sum_{\beta=1}^j dG[\iota_{\m}(\gamma)]_{(k+1,\ldots,k+\beta-1,\alpha,k+\beta,\ldots,k+j-1)}^{(j)}\iota_{\m}(\gamma)^{(k+j-1)} 
\\
&= \frac{1}{2}\paren*{\ket*{\phi_1^{\otimes (k-1)} \otimes^\alpha dG[\iota_{\m}(\gamma)]^{(j)}(\phi_1^{\otimes j})}\bra*{\phi_2^{\otimes (k+j-1)}}  + \ket*{\phi_2^{\otimes (k-1)}\otimes^\alpha dG[\iota_{\m}(\gamma)]^{(j)}(\phi_2^{\otimes j})}\bra*{\phi_1^{\otimes (k+j-1)}}},
\end{split}
\end{equation}
where $\phi_1^{\otimes (k-1)}\otimes^\alpha dG[\iota_{\m}(\gamma)]^{(j)}(\phi_1^{\otimes j})$ is the element of $\Sc'(\R^{k+j-1})$ defined by
\begin{equation}
\begin{split}
&\paren*{\phi_1^{\otimes (k-1)} \otimes^\alpha dG[\iota_{\m}(\gamma)]^{(j)}(\phi_1^{\otimes j})}(\ux_{k+j-1}) \\
&\coloneqq \phi_1^{\otimes (\alpha-1)}(\ux_{\alpha-1}) \phi_1^{\otimes (k-\alpha)}(\ux_{\alpha+1;k})\paren*{\sum_{\beta=1}^j dG[\iota_{\m}(\gamma)]^{(j)}(\phi_1^{\otimes j})(\ux_{k+1;k+\beta-1},x_\alpha,\ux_{k+\beta;k+j-1})},
\end{split}
\end{equation}
and similarly for $\phi_2^{\otimes (k-1)}\otimes^\alpha dG[\iota_{\m}(\gamma)]^{(j)}(\phi_2^{\otimes j})$. Since $dG[\iota_{\m}(\gamma)]$ has the good mapping property by assumption that $G\in\A_{\infty}$, \cref{rem:gmp} and the definition of the generalized trace imply that for every $1\leq \alpha\leq k$,
\begin{equation}
\begin{split}
&\Tr_{k+1,\ldots,k+j-1}\paren*{\sum_{\beta=1}^j dG[\iota_{\m}(\gamma)]_{(k+1,\ldots,k+\beta-1,\alpha,k+\beta,\ldots,k+j-1)}^{(j)} \iota_{\m}(\gamma)^{(k+j-1)}} \\
&=\frac{1}{2}\paren*{\ket*{\phi_1^{\otimes (\alpha-1)} \otimes \psi_{G,1,j} \otimes \phi_1^{\otimes (k-\alpha)}}\bra*{\phi_2^{\otimes k}} + \ket*{\phi_2^{\otimes (\alpha-1)}\otimes\psi_{G,2,j}\otimes\phi_2^{\otimes (k-\alpha)}}\bra*{\phi_1^{\otimes k}}},
\end{split} \label{eq:psi_G_j}
\end{equation}
where $\psi_{G,1,j},\psi_{G,2,j}\in\Sc(\R)$ are the necessarily unique Schwartz functions satisfying
\begin{align}
\ip{\phi}{\psi_{G,1,j}} &= \ip{\sum_{\beta=1}^j \phi\otimes_\beta\phi_2^{\otimes (j-1)}}{dG[\iota_{\m}(\gamma)]^{(j)}\phi_1^{\otimes j}} \\
\ip{\phi}{\psi_{G,2,j}} &= \ip{\sum_{\beta=1}^j \phi\otimes_\beta\phi_1^{\otimes (j-1)}}{dG[\iota_{\m}(\gamma)]\phi_2^{\otimes j}}, \qquad \forall \phi\in\Sc(\R).
\end{align}
By repeating the same arguments and now using that the skew-adjointness of $dG[\iota_{\m}(\gamma)]^{(j)}$, we also obtain that for every $1\leq \alpha\leq k$,
\begin{equation}
\label{eq:psi_Gt_j}
\begin{split}
&\Tr_{k+1,\ldots,k+j-1}\paren*{\sum_{\beta=1}^j \iota_{\m}(\gamma)^{(k+j-1)}dG[\iota_{\m}(\gamma)]_{(\alpha,k+1,\ldots,k+j-1)}^{(j)}} \\
&=-\frac{1}{2}\paren*{\ket*{\phi_1^{\otimes k}}\bra*{\phi_2^{\otimes (\alpha-1)}\otimes \psi_{G,2,j}\otimes\phi_2^{\otimes (k-\alpha)}} + \ket*{\phi_2^{\otimes k}}\bra*{\phi_1^{\otimes (\alpha-1)}\otimes\psi_{G,1,j}\otimes\phi_1^{\otimes (k-\alpha)}}}.
\end{split}
\end{equation}

Substituting identities \eqref{eq:psi_G_j} and \eqref{eq:psi_Gt_j} into \eqref{eq:pb_c_sub} above, we find that
\begin{align}
&\pb{F}{G}_{\G_{\infty}^*}(\iota_{\m}(\gamma)) \nonumber\\
&=\frac{i}{2}\sum_{k=1}^\infty \Tr_{1,\ldots,k}\left( dF[\iota_{\m}(\gamma)]^{(k)} \left(\sum_{j=1}^\infty \ket*{\sum_{\alpha=1}^k\phi_1^{\otimes (\alpha-1)}\otimes\psi_{G,1,j}\otimes\phi_1^{\otimes(k-\alpha)}}\bra*{\phi_2^{\otimes k}} \right.\right. \nonumber\\
&\phantom{=} \hspace{70mm} \left.\left. + \ket*{\sum_{\alpha=1}^k\phi_2^{\otimes (\alpha-1)}\otimes \psi_{G,2,j}\otimes \phi_2^{\otimes (k-\alpha)}}\bra*{\phi_1^{\otimes k}}\right)\right) \nonumber\\
&\phantom{=} + \frac{i}{2}\sum_{k=1}^\infty \Tr_{1,\ldots,k}\left( dF[\iota_{\m}(\gamma)]^{(k)} \left(\sum_{j=1}^\infty \ket*{\phi_2^{\otimes k}}\bra*{\sum_{\alpha=1}^k\phi_1^{\otimes (\alpha-1)}\otimes\psi_{G,1,j}\otimes\phi_1^{\otimes(k-\alpha)}} \right.\right. \nonumber\\
&\phantom{=} \hspace{70mm} \left.\left. + \ket*{\phi_1^{\otimes k}}\bra*{\sum_{\alpha=1}^k\phi_2^{\otimes (\alpha-1)}\otimes \psi_{G,2,j}\otimes \phi_2^{\otimes (k-\alpha)}}\right)\right) \nonumber\\
&=\frac{i}{2}\sum_{j=1}^\infty\sum_{k=1}^\infty \ip{\phi_2^{\otimes k}}{dF[\iota_{\m}(\gamma)]^{(k)}\paren*{\sum_{\alpha=1}^k\phi_1^{\otimes (\alpha-1)}\otimes \psi_{G,1,j}\otimes\phi_1^{\otimes (k-\alpha)}}} \nonumber \\
&\phantom{=} \hspace{25mm} + \ip{\phi_1^{\otimes k}}{dF[\iota_{\m}(\gamma)]^{(k)}\paren*{\sum_{\alpha=1}^k\phi_2^{\otimes (\alpha-1)}\otimes \psi_{G,2,j}\otimes\phi_2^{\otimes (k-\alpha)}}} \nonumber \\
&\phantom{=} \hspace{25mm} + \ip{\sum_{\alpha=1}^k \phi_1^{\otimes(\alpha-1)}\otimes\psi_{G,1,j}\otimes\phi_1^{\otimes(k-\alpha)}}{dF[\iota_{\m}(\gamma)]^{(k)}\phi_2^{\otimes k}} \nonumber\\
&\phantom{=} \hspace{25mm} +\ip{\sum_{\alpha=1}^k \phi_2^{\otimes(\alpha-1)}\otimes\psi_{G,2,j}\otimes\phi_2^{\otimes(k-\alpha)}}{dF[\iota_{\m}(\gamma)]^{(k)}\phi_1^{\otimes k}},
\end{align}
where the ultimate equality is immediate from the definition of the generalized trace. Recalling the definitions of $\psi_{F,1,k}$ and $\psi_{F,2,k}$ in \eqref{eq:pomo_gmp_1_bar} and \eqref{eq:pomo_gmp_2_bar}, respectively, we have that
\begin{align}
\ip{\sum_{\alpha=1}^k \phi_1^{\otimes(\alpha-1)}\otimes\psi_{G,1,j}\otimes\phi_1^{\otimes(k-\alpha)}}{dF[\iota_{\m}(\gamma)]^{(k)}\phi_2^{\otimes k}} &= \ip{\psi_{G,1,j}}{\psi_{F,2,k}}, \\
\ip{\sum_{\alpha=1}^k \phi_2^{\otimes(\alpha-1)}\otimes\psi_{G,2,j}\otimes\phi_2^{\otimes(k-\alpha)}}{dF[\iota_{\m}(\gamma)]^{(k)}\phi_1^{\otimes k}} &= \ip{\psi_{G,2,j}}{\psi_{F,1,k}}.
\end{align}
Now using the skew-adjointness of $dF[\iota_{\m}(\gamma)]^{(k)}$, we find that
\begin{align}
&\ip{\phi_2^{\otimes k}}{dF[\iota_{\m}(\gamma)]^{(k)}\paren*{\sum_{\alpha=1}^k\phi_1^{\otimes (\alpha-1)}\otimes \psi_{G,1,j}\otimes\phi_1^{\otimes (k-\alpha)}}} \nonumber\\
&= -\ol{\ip{\sum_{\alpha=1}^k \phi_1^{\otimes (\alpha-1)}\otimes\psi_{G,1,j}\otimes\phi_1^{\otimes(k-\alpha)}}{dF[\iota_{\m}(\gamma)]^{(k)}\phi_2^{\otimes k}}} \nonumber\\
&=-\ip{\psi_{F,2,k}}{\psi_{G,1,j}}.
\end{align}
Similarly,
\begin{align}
&\ip{\phi_1^{\otimes k}}{dF[\iota_{\m}(\gamma)]^{(k)}\paren*{\sum_{\alpha=1}^k\phi_2^{\otimes (\alpha-1)}\otimes \psi_{G,2,j}\otimes\phi_2^{\otimes (k-\alpha)}}} =-\ip{\psi_{F,1,k}}{\psi_{G,2,j}}.
\end{align}
Hence,
\begin{align}
\pb{F}{G}_{\G_{\infty}^*}(\iota_{\m}(\gamma)) &= \frac{i}{2}\sum_{j=1}^\infty\sum_{k=1}^\infty \ip{\psi_{G,1,j}}{\psi_{F,2,k}} + \ip{\psi_{G,2,j}}{\psi_{F,1,k}} - \ip{\psi_{F,2,k}}{\psi_{G,1,j}} - \ip{\psi_{F,1,k}}{\psi_{G,2,j}} \nonumber\\
&= \frac{i}{2}\paren*{\ip{\psi_{G,1}}{\psi_{F,2}} + \ip{\psi_{G,2}}{\psi_{F,1}} -\ip{\psi_{F,2}}{\psi_{G,1}} -\ip{\psi_{F,1}}{\psi_{G,2}}} \label{eq:pb_grad_comp},
\end{align}
where we have defined $\psi_{F,\ell} \coloneqq \sum_{k=1}^\infty \psi_{F,\ell,k} $, for $\ell\in\{1,2\}$, and similarly for $\psi_{G,\ell}$. Note that these are well-defined elements of $\Sc(\R)$ since $\psi_{F,\ell,k},\psi_{G,\ell,j}$ are identically zero for all but finitely many $k,j$. By \eqref{eq:grad_form}, we know that
\begin{align}
\label{eq:grad_f_g}
\grad_{s,\V} f(\gamma) &= \frac{1}{2}\mathrm{odiag}(\psi_{F,1},\ol{\psi_{F,2}}, \psi_{F,2}, \ol{\psi_{F,1}}), \\
\grad_{s,\V} g(\gamma) &= \frac{1}{2}\mathrm{odiag}(\psi_{G,1},\ol{\psi_{G,2}}, \psi_{G,2}, \ol{\psi_{G,1}}).
\end{align}
Hence by recalling the definition \eqref{symp_v} for the symplectic form $\omega_{L^2,\V}$ and \cref{prop:Schw_WP_V}, then proceeding by direct computation, we find that
\begin{align}
&\pb{f}{g}_{L^2,\V}(\gamma) \nonumber\\
&= \omega_{L^2,\V}(\grad_{s,\V} f(\gamma), \grad_{s,\V} g(\gamma)) \nonumber\\
&=\frac{1}{2}\int_{\R}dx\tr_{\C^2\otimes\C^2}\paren*{\mathrm{diag}(i,-i,i,-i)\mathrm{odiag}(\psi_{F,1},\ol{\psi_{F,2}}, \psi_{F,2}, \ol{\psi_{F,1}}) \mathrm{odiag}(\psi_{G,2},\ol{\psi_{G,1}},\psi_{G,1},\ol{\psi_{G,2}})}(x) \nonumber\\
&=\eqref{eq:pb_grad_comp}.
\end{align}
Therefore, we have shown that
\begin{equation}
\pb{F}{G}_{\G_\infty^*}(\iota_{\m}(\gamma)) = \pb{f}{g}_{L^2,\V}(\gamma),
\end{equation}
completing the proof.
\end{proof}

\subsection{Relating the functionals $\H_n$ and $I_{b,n}$}
\label{ssec:cor_Ibn_Hn}
We now use the analysis of \cref{ssec:cor_ptr} to relate the functionals $\H_n$, defined in \eqref{Hn_trace}, on the infinite-particle phase space $\G_\infty^*$ to the functionals $I_{b,n}$, defined in \eqref{eq:Ibn_intro_def}, on the one-particle mixed-state phase space $\Sc(\R;\mathcal{V})$, defined in \eqref{eq:Schw_ms}.

\begin{prop}
\label{prop:Hn_Ibn}
For every $n\in\N$, it holds that
\begin{equation}
\H_n(\iota_{\m} (\gamma)) = I_{b,n}(\gamma), \qquad \forall \gamma\in\Sc(\R;\V).
\end{equation}
\end{prop}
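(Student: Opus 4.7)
My plan is to unfold both sides in terms of the multilinear forms $I_n^{(k)}$ and verify equality term-by-term. Unpacking the definition of $\H_n$ and the canonical pairing of \cref{dual_dual}, we have
\begin{equation*}
\H_n(\iota_{\m}(\gamma)) = \sum_{k=1}^\infty \Tr_{1,\ldots,k}\paren*{\W_n^{(k)} \iota_{\m}(\gamma)^{(k)}},
\end{equation*}
where only finitely many summands are nonzero by \cref{lem:aa_zero}. Recalling that $\W_n^{(k)} = \tfrac{1}{2}\Sym_k(\wt{\W}_n^{(k)} + \wt{\W}_n^{(k),*})$ by \cref{prop:Wn_con} and that $\iota_{\m}(\gamma)^{(k)} = \tfrac{1}{2}(\ket{\phi_1^{\otimes k}}\bra{\phi_2^{\otimes k}} + \ket{\phi_2^{\otimes k}}\bra{\phi_1^{\otimes k}})$ is bosonic, \cref{lem:tr_bos} allows us to drop the symmetrizer $\Sym_k$ from the operator under the trace.

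The next step is to apply \cref{cor:Wn_wn_tr} with $\pi = \mathrm{Id}$ to each of the four resulting rank-one trace terms. This gives
\begin{equation*}
\Tr_{1,\ldots,k}\paren*{\W_n^{(k)} \iota_{\m}(\gamma)^{(k)}} = \tfrac{1}{4}\paren*{I_n^{(k)}[\phi_1^{\times k};\ol{\phi_2}^{\times k}] + I_n^{(k)}[\phi_2^{\times k};\ol{\phi_1}^{\times k}] + \ol{I_n^{(k)}[\phi_2^{\times k};\ol{\phi_1}^{\times k}]} + \ol{I_n^{(k)}[\phi_1^{\times k};\ol{\phi_2}^{\times k}]}},
\end{equation*}
after which summing over $k$ and invoking \cref{rem:rec_comp} to identify $\tl{I}_n(\phi_1,\ol{\phi_2}) = \sum_k I_n^{(k)}[\phi_1^{\times k};\ol{\phi_2}^{\times k}]$ (and analogously for $\tl{I}_n(\phi_2,\ol{\phi_1})$) produces
\begin{equation*}
\H_n(\iota_{\m}(\gamma)) = \tfrac{1}{4}\paren*{\tl{I}_n(\phi_1,\ol{\phi_2}) + \tl{I}_n(\phi_2,\ol{\phi_1}) + \ol{\tl{I}_n(\phi_1,\ol{\phi_2})} + \ol{\tl{I}_n(\phi_2,\ol{\phi_1})}}.
\end{equation*}

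To finish, I would invoke the involution identity $\ol{\tl{I}_n(\phi_1,\ol{\phi_2})} = \tl{I}_n(\phi_2,\ol{\phi_1})$, which is the reality-of-$I_{b,n}$ statement that is (or will be) established in \cref{app:nls_pc} in the discussion surrounding the definition \eqref{eq:Ibn_intro_def}. This identity is morally an integration-by-parts consequence of the structural formulas \eqref{eq:w_odd}--\eqref{eq:w_even} and the partial transposes of \cref{lem:wn(k)_tran}, i.e., the real-coefficient nature of $w_n^{(k)}$ combined with the fact that the extra $i$ factor in the even-$n$ case is compensated by an odd number of derivatives that change sign under integration by parts. Once this identity is in hand, the four-term average above collapses to $\tfrac{1}{2}(\tl{I}_n(\phi_1,\ol{\phi_2}) + \tl{I}_n(\phi_2,\ol{\phi_1})) = I_{b,n}(\gamma)$, as desired.

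The main obstacle is really only bookkeeping: making sure the factors of $\tfrac{1}{2}$ from $\Sym$, from $\W_{n,sa}$, and from $\iota_{\m}$ line up correctly, and that we apply \cref{cor:Wn_wn_tr} with the correct arguments in the correct slots (the off-diagonal nature of $\iota_{\m}(\gamma)^{(k)}$ means the "bra" functions are $\phi_2$ or $\phi_1$, not conjugates, so one must be careful tracking which arguments acquire a complex conjugate in the output). The genuinely nontrivial input is the involution identity for $\tl{I}_n$, but this is an independent one-particle fact and does not require any of the infinite-particle machinery of this section.
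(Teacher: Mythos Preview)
Your proposal is correct and follows essentially the same route as the paper's proof: expand $\H_n(\iota_\m(\gamma))$ via the definition of $\W_n$, apply \cref{cor:Wn_wn_tr} to reduce each trace to an $I_n^{(k)}$, sum over $k$ via \cref{rem:rec_comp}, and collapse the four terms using the involution identity \eqref{eq:I_invol}. The only cosmetic difference is that the paper retains the $\frac{1}{k!}\sum_{\pi\in\Ss_k}$ from $\Sym_k$ and applies \cref{cor:Wn_wn_tr} for each $\pi$ (the result being $\pi$-independent since all tensor factors are equal), whereas you invoke \cref{lem:tr_bos} to drop the symmetrizer up front---a minor streamlining, not a different argument.
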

\begin{proof}
Fix $n\in\N$ and let $\gamma=\frac{1}{2}\mathrm{odiag}(\phi_1,\ol{\phi_2},\phi_2,\ol{\phi_1})$, for $\phi_1,\phi_2\in\Sc(\R)$. Unpacking the definition \eqref{Hn_trace} of $\H_n$, the definition \eqref{eq:Wn_fin_def} for $\W_n$, and the bilinearity of the generalized trace, we see that
\begin{equation}
\begin{split}
\H_n(\iota_{\m}(\gamma)) &= \frac{1}{4}\sum_{k=1}^\infty  \frac{1}{k!}\sum_{\pi\in\Ss_k} \Tr_{1,\ldots,k}\paren*{\wt{\W}_{n,(\pi(1),\ldots,\pi(k))}^{(k)}\ket*{\phi_1^{\otimes k}}\bra*{\phi_2^{\otimes k}}} + \Tr_{1,\ldots,k}\paren*{\wt{\W}_{n,(\pi(1),\ldots,\pi(k))}^{(k)}\ket*{\phi_2^{\otimes k}}\bra*{\phi_1^{\otimes k}}} \\
&\phantom{=}\hspace{25mm}+ \Tr_{1,\ldots,k}\paren*{\wt{\W}_{n,(\pi(1),\ldots,\pi(k))}^{(k),*}\ket*{\phi_1^{\otimes k}}\bra*{\phi_2^{\otimes k}}} + \Tr_{1,\ldots,k}\paren*{\wt{\W}_{n,(\pi(1),\ldots,\pi(k))}^{(k),*}\ket*{\phi_2^{\otimes k}}\bra*{\phi_1^{\otimes k}}}.
\end{split}
\end{equation}
By \cref{cor:Wn_wn_tr}, we have the identities
\begin{equation}
\begin{split}
\Tr_{1,\ldots,k}\paren*{\wt{\W}_{n,(\pi(1),\ldots,\pi(k))}^{(k)}\ket*{\phi_1^{\otimes k}}\bra*{\phi_2^{\otimes k}}} &= I_n^{(k)}(\phi_1^{\times k};\ol{\phi_2}^{\times k}),  \\
\Tr_{1,\ldots,k}\paren*{\wt{\W}_{n,(\pi(1),\ldots,\pi(k))}^{(k)}\ket*{\phi_2^{\otimes k}}\bra*{\phi_1^{\otimes k}}} &= I_n^{(k)}(\phi_2^{\times k};\ol{\phi_1}^{\times k}),\\
\Tr_{1,\ldots,k}\paren*{\wt{\W}_{n,(\pi(1),\ldots,\pi(k))}^{(k),*}\ket*{\phi_1^{\otimes k}}\bra*{\phi_2^{\otimes k}}} &= \ol{I_n^{(k)}(\phi_2^{\times k};\ol{\phi_1}^{\times k})},  \\
\Tr_{1,\ldots,k}\paren*{\wt{\W}_{n,(\pi(1),\ldots,\pi(k))}^{(k),*}\ket*{\phi_2^{\otimes k}}\bra*{\phi_1^{\otimes k}}} &= \ol{I_n^{(k)}(\phi_1^{\times k};\ol{\phi_2}^{\times k})}.
\end{split}
\end{equation}
for every $k\in\N$ and $\pi\in\Ss_k$. Consequently, by \cref{rem:rec_comp},
\begin{align}
\H_n(\iota_{\m}(\gamma)) &= \frac{1}{4}\sum_{k=1}^\infty \paren*{I_n^{(k)}(\phi_1^{\times k};\ol{\phi_2}^{\times k}) + I_n^{(k)}(\phi_2^{\times k};\ol{\phi_1}^{\times k})+\ol{I_n^{(k)}(\phi_2^{\times k};\ol{\phi_1}^{\times k})}+\ol{I_n^{(k)}(\phi_1^{\times k};\ol{\phi_2}^{\times k})}} \nonumber\\
&=\frac{1}{4}\paren*{\tl{I}_n(\phi_1,\ol{\phi_2}) + \tl{I}_n(\phi_2,\ol{\phi_1}) + \ol{\tl{I}_n(\phi_1,\ol{\phi_2})} + \ol{\tl{I}_n(\phi_2,\ol{\phi_1})}}.
\end{align}
By \eqref{eq:I_invol}, we know that the $\tl{I}_n$ have the involution property
\begin{equation}
\tl{I}_n(f,\ol{g}) = \ol{\tl{I}_n(g,\ol{f})}, \qquad \forall f,g\in\Sc(\R).
\end{equation}
So, we obtain by the definition of $I_{b,n}$ in \eqref{eq:Ibn_intro_def} that
\begin{equation}
\H_n(\iota_{\m}(\gamma)) = \frac{1}{2}\paren*{\tl{I}_n(\phi_1,\ol{\phi_2}) + \tl{I}_n(\phi_2,\ol{\phi_1})} = I_{b,n}(\gamma),
\end{equation}
as required.
\end{proof}

\subsection{Proof of \protect{\cref{thm:GP_invol}} and \protect{\cref{thm:Equiv}} }
\label{ssec:invol_pb_an}

The goal of this subsection is to complete the proof of \cref{thm:GP_invol}:

\GPinvol*

As detailed in the introduction, we will establish \cref{thm:GP_invol} by proving \cref{thm:Equiv}, the statement of which we recall here.

\Equiv*

We refer to \eqref{eq:Ibn_intro_def} for the definition of $I_{b,n}$. In light of \cref{prop:Ib_invol} which establishes the validity of \eqref{ibn_inv}, \cref{thm:GP_invol} is then an immediate corollary of \cref{thm:Equiv}. Thus we focus on proving \cref{thm:Equiv}.
\begin{proof}[Proof of \protect{\cref{thm:Equiv}}]
The implication that
\begin{equation*}
\pb{\H_n}{\H_m}_{\G_\infty^*}\equiv 0 \Longrightarrow \pb{I_{b,n}}{I_{b,m}}_{L^2,\V}\equiv 0
\end{equation*}
is a consequence of \cref{thm:GP_pomo} and \cref{prop:Hn_Ibn}. Indeed, the latter states that
\[
\H_n(\iota_{\m}(\gamma)) = I_{b,n}(\gamma),
\]
and hence by \cref{thm:GP_pomo}, we have
\[
\pb{I_{b,n}}{I_{b,m}}_{L^2,\V}(\gamma)  = \pb{\H_n}{\H_m}_{\G_\infty^*}(\iota_{\m} (\gamma)) = 0.
\]

To show the reverse implication, we first claim that it suffices to show that
\begin{equation}
\label{eq:DM_form}
\pb{\H_n}{\H_m}_{\G_\infty^*}(\Gamma)=0, \quad \forall \Gamma=(\gamma^{(k)})_{k\in\N}, \ \gamma^{(k)} = \frac{1}{2}\paren*{\ket*{f_k^{\otimes k}}\bra*{g_k^{\otimes k}} + \ket*{g_k^{\otimes k}}\bra*{f_k^{\otimes k}}}, \ f_k,g_k\in\Sc(\R).
\end{equation}
Indeed, for any $k\in\N$, \cref{cor:Sch_ST_DM_d} gives that finite linear combinations of the form
\begin{equation}
\sum_{j=1}^{N_k} \frac{a_j}{2}\paren*{\ket*{f_{j}^{\otimes k}}\bra*{g_j^{\otimes k}} + \ket*{g_j^{\otimes k}}\bra*{f_j^{\otimes k}}}, \quad a_j\in\C, \ f_j,g_j\in\Sc(\R), \ N_k\in\N
\end{equation}
are dense in $\g_k^*$ (recall \cref{eq:gk*_def}). Since by definition $\G_\infty^*$ is the topological direct product of the $\g_k^*$ (recall \cref{eq:Ginf*_def}), elements $\Gamma=(\gamma^{(k)})_{k\in\N}\in\G_\infty^*$ of the form
\begin{equation}
\label{eq:DM_dense}
\gamma^{(k)} = \sum_{j=1}^\infty \frac{a_{jk}}{2}\paren*{\ket*{f_{jk}^{\otimes k}}\bra*{g_{jk}^{\otimes k}} + \ket*{g_{jk}^{\otimes k}}\bra*{f_{jk}^{\otimes k}}}, \qquad k\in\N,
\end{equation}
where $f_{jk},g_{jk}\in\Sc(\R)$ and $a_{jk}\in\C$ with $a_{jk}=0$ for all but finitely many $j\in\N$, are dense in $\G_\infty^*$. Now recalling the definition \eqref{equ:poisson_def} for the Poisson bracket $\pb{\H_n}{\H_m}_{\G_\infty^*}$ and using the bilinearity of the generalized trace, we need to show that for $\Gamma$ in the form of \eqref{eq:DM_dense},
\begin{align}
0 &= \pb{\H_n}{\H_m}_{\G_\infty^*}(\Gamma) \nonumber \\
&=\sum_{k=1}^\infty \sum_{j=1}^\infty \frac{ia_{jk}}{2}\Tr_{1,\ldots,k}\paren*{\comm{-i\W_n}{-i\W_m}_{\G_\infty}^{(k)} \paren*{\ket*{f_{jk}^{\otimes k}}\bra*{g_{jk}^{\otimes k}} + \ket*{g_{jk}^{\otimes k}}\bra*{f_{jk}^{\otimes k}}}} \nonumber\\
&=\sum_{j=1}^\infty a_{jk}\pb{\H_n}{\H_m}_{\G_\infty^*}(\Gamma_j), \label{eq:sum_pb_gamj}
\end{align}
where
\begin{equation}
\Gamma_j = (\gamma_j^{(k)})_{k\in\N}, \qquad \gamma_j^{(k)} \coloneqq \frac{1}{2}\paren*{\ket*{f_{jk}^{\otimes k}}\bra*{g_{jk}^{\otimes k}} + \ket*{g_{jk}^{\otimes k}}\bra*{f_{jk}^{\otimes k}}}.
\end{equation}
Note that because $\comm{-i\W_n}{-i\W_m}_{\G_\infty}^{(k)}$ is zero for all but finitely many $k$, and for each fixed $k\in\N$, $a_{jk}$ is zero for all but finitely many $j$, it follows that there are only finitely many nonzero terms in the double series above, and consequently, there are no issues of convergence. \eqref{eq:DM_form} will imply that each summand in \eqref{eq:sum_pb_gamj} is zero, so by continuity of $\pb{\H_n}{\H_m}_{\G_\infty^*}$ and by density of elements of the form \eqref{eq:DM_dense} in $\G_\infty^*$, we arrive at the desired implication.

Thus, we proceed to show \eqref{eq:DM_form}. Unpacking the definition of $\pb{\H_n}{\H_m}_{\G_\infty^*}(\Gamma)$, we see that
\begin{equation}
\label{eq:sum_van}
\pb{\H_n}{\H_m}_{\G_\infty^*}(\Gamma) = \frac{i}{2}\sum_{k=1}^\infty \Tr_{1,\ldots,k}\paren*{\comm{-i\W_n}{-i\W_m}_{\G_\infty}^{(k)}\paren*{\ket*{f_k^{\otimes k}}\bra*{g_k^{\otimes k}}+\ket*{g_k^{\otimes k}}\bra*{f_k^{\otimes k}}}}
\end{equation}
For each $k\in\N$ and $\lambda\in\C$, consider the element $\gamma_{k,\lambda}\in\Sc(\R;\mathcal{V})$ defined by
\begin{equation}
\gamma_{k,\lambda} \coloneqq \frac{1}{2}\mathrm{odiag}(\lambda f_k,\ol{\lambda g_k}, \lambda g_k, \ol{\lambda f_k})
\end{equation}
Then by the assumption \cref{ibn_inv} and \cref{thm:GP_pomo},
\begin{align}
0=\pb{I_{b,n}}{I_{b,m}}_{L^2,\V}(\gamma_{k,\lambda})&=\pb{\H_n}{\H_m}_{\G_\infty^*}(\iota_{\m}(\gamma_{k,\lambda})) \nonumber\\
&= \sum_{j=1}^\infty i\Tr_{1,\ldots,j}\paren*{\comm{-i\W_n}{-i\W_m}_{\G_\infty}^{(j)}\iota_{\m}(\gamma_{k,\lambda})^{(j)}} \nonumber\\
&=\frac{i}{2}\sum_{j=1}^\infty |\lambda|^{2j}\Tr_{1,\ldots,j}\paren*{\comm{-i\W_n}{-i\W_m}_{\G_\infty}^{(j)}(\ket*{f_k^{\otimes j}}\bra*{g_k^{\otimes j}}+\ket*{g_k^{\otimes j}}\bra*{f_k^{\otimes j}})} \nonumber\\
&\eqqcolon \frac{i}{2}\rho_k(\lambda).
\end{align}
$\rho_k$ is well-defined on $\C$, since there are only finitely many indices $j$ for which the summand is nonzero. Since for any $r\in\N$,
\begin{equation}
0=\paren*{(\p_{\lambda}\p_{\ol{\lambda}})^r \rho_k }(0) = r! \Tr_{1,\ldots,r}\paren*{\comm{-i\W_n}{-i\W_m}_{\G_\infty}^{(r)} (\ket*{f_k^{\otimes r}}\bra*{g_k^{\otimes r}} + \ket*{g_k^{\otimes r}}\bra*{f_k^{\otimes r}})},
\end{equation}
it follows that
\begin{equation}
\Tr_{1,\ldots,k}\paren*{\comm{-i\W_n}{-i\W_m}_{\G_\infty}^{(k)}(\ket*{f_k^{\otimes k}}\bra*{g_k^{\otimes k}}+\ket*{g_k^{\otimes k}}\bra*{f_k^{\otimes k}})}=0.
\end{equation}
Therefore, each summand in the right-hand side of \eqref{eq:sum_van} vanishes, yielding \eqref{eq:DM_form}. Thus, the proof of \cref{thm:GP_pomo} is complete.
\end{proof}

\subsection{Nontriviality}
\label{ssec:invol_ntriv}
In this subsection, we prove that the statement of \cref{thm:GP_invol} is nontrivial in the sense that the functionals $\H_n$ do not Poisson commute with every element of $\A_{\infty}$. The proof of this fact proceeds by a reduction to proving a one-particle result.

\begin{prop}
\label{prop:invol_ntriv}
For every $n\in\N$, there exists a functional $F\in\A_\infty$ and an element $\Gamma\in\G_\infty^*$ such that
\begin{equation}
\pb{F}{\H_n}_{\G_\infty^*}(\Gamma) \neq 0.
\end{equation}
\end{prop}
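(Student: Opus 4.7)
The plan is to reduce the problem to a one-particle statement via the Poisson morphism
\begin{equation*}
\iota:(\Sc(\R),\A_\Sc,\pb{\cdot}{\cdot}_{L^2})\longrightarrow(\G_\infty^*,\A_\infty,\pb{\cdot}{\cdot}_{\G_\infty^*}),\qquad \iota(\phi)\coloneqq\paren*{\ket{\phi^{\otimes k}}\bra{\phi^{\otimes k}}}_{k\in\N},
\end{equation*}
from \cite[Theorem 2.12]{MNPRS1_2019}, which is also recoverable as $\iota=\iota_\m\circ\iota_{\mathfrak{pm}}$ using \cref{thm:GP_pomo} and \cref{prop:Schw_WP_V}. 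Applying \cref{prop:Hn_Ibn} at $\phi_1=\phi_2=\phi$ (or reading off from \cref{cor:Wn_wn_tr}) gives $\H_n\circ\iota=I_n$, and so for every $F\in\A_\infty$,
\begin{equation*}
\pb{F}{\H_n}_{\G_\infty^*}(\iota(\phi))=\pb{F\circ\iota}{I_n}_{L^2}(\phi).
\end{equation*}
Hence it suffices to produce some $F\in\A_\infty$ whose pullback $F\circ\iota$ fails to Poisson commute with $I_n$ somewhere on $\Sc(\R)$.

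To supply such an $F$, I would pick $V\in\Sc(\R;\R)$ nonconstant and let $M_V$ denote multiplication by $V$. Then $iM_V$ is skew-adjoint and has the good mapping property (since $Vf\in\Sc(\R)$ for $f\in\Sc(\R)$, the bilinear map of \cref{def:gmp} reduces to $(f,g)\mapsto V(x)f(x)g(x')\in\Sc(\R^2)$), so $iM_V\in\g_{1,gmp}$ and the hierarchy $\W_V\coloneqq(iM_V,0,0,\ldots)$ lies in $\G_\infty$. Setting $F_V(\Gamma)\coloneqq i\Tr(\W_V\cdot\Gamma)\in\A_\infty$, a direct trace computation gives $F_V\circ\iota(\phi)=-\int_\R V(x)|\phi(x)|^2\,dx$. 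To see the one-particle bracket is nonzero, I would isolate the quadratic piece of $I_n$: by \cref{lem:wn(k)_prop} and \cref{rem:I_n/w_n},
\begin{equation*}
I_n(\phi)=\sum_{k\geq 1}I_n^{(k)}[\phi^{\times k};\ol\phi^{\times k}],\qquad I_n^{(1)}(\phi)=\ip{\phi}{A_n\phi},\qquad A_n\coloneqq(-i\p_x)^{n-1},
\end{equation*}
with each $I_n^{(k)}$ being $2k$-homogeneous in $\phi$. Using the standard identity that, for self-adjoint operators $A,B$ on $\Sc(\R)$, the Poisson bracket of the quadratic functionals $F_A(\phi)=\ip{\phi}{A\phi}$ and $F_B(\phi)=\ip{\phi}{B\phi}$ equals $\pb{F_B}{F_A}_{L^2}(\phi)=-i\ip{\phi}{[B,A]\phi}$, we obtain
\begin{equation*}
\pb{F_V\circ\iota}{I_n^{(1)}}_{L^2}(\phi)=i\ip{\phi}{[M_V,A_n]\phi}.
\end{equation*}
For $n\geq 2$ and $V$ nonconstant, $[M_V,A_n]$ is a nonzero skew-adjoint differential operator of order $n-2$ whose top-order coefficient is proportional to $V'$; by polarization, the associated quadratic form on $\Sc(\R)$ is not identically zero, so we may pick $\psi\in\Sc(\R)$ with $\ip{\psi}{[M_V,A_n]\psi}\neq 0$. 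Scaling $\phi\mapsto\epsilon\psi$, the contribution $\pb{F_V\circ\iota}{I_n^{(k)}}_{L^2}(\epsilon\psi)$ scales as $\epsilon^{2k}$ by the homogeneity in \cref{lem:wn(k)_prop}, so the $k=1$ term dominates as $\epsilon\to 0^+$ and $\pb{F_V}{\H_n}_{\G_\infty^*}(\iota(\epsilon\psi))\neq 0$ for sufficiently small $\epsilon>0$.

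The principal subtlety is the case $n=1$. There $A_1=\mathrm{Id}$ forces $[M_V,A_1]=0$, and in fact a direct application of the formula in \cref{prop:LP} with $d\H_1[\Gamma]=\E_1=(Id_1,0,\ldots)$ gives $X_{\H_1}\equiv 0$, so $\H_1$ is a Casimir and the statement literally fails for $n=1$; the proposition is thus to be read for $n\geq 2$, and the construction above handles all such $n$ uniformly. The only other point demanding care is making the scaling argument precise: one invokes the finiteness statement in \cref{lem:wn(k)_prop} that $I_n^{(k)}\equiv 0$ for all but finitely many $k$, so the expansion of $\pb{F_V\circ\iota}{I_n}_{L^2}(\epsilon\psi)$ in $\epsilon$ is a finite polynomial whose leading $\epsilon^2$ coefficient is nonzero by construction.
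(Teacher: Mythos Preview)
Your argument is correct for $n\geq 2$ and takes a genuinely different route from the paper. The paper argues by contradiction: assuming $\pb{F}{\H_n}_{\G_\infty^*}\equiv 0$ for every $F\in\A_\infty$, it deduces from property \ref{item:wp_P2} that $X_{\H_n}\equiv 0$ on $\G_\infty^*$; evaluating on factorized $\Gamma=\iota(\phi)$ via \cref{thm:GP_fac} gives $\ket{\grad_s I_n(\phi)}\bra{\phi}+\ket{\phi}\bra{\grad_s I_n(\phi)}=0$, and the paper then takes the $1$-particle trace and invokes \cref{lem:In_noncon} for a contradiction. Your approach is instead constructive: you exhibit an explicit $F_V\in\A_\infty$ built from a multiplication operator, pull back through the Poisson morphism $\iota$, and isolate the nonvanishing quadratic contribution $i\ip{\psi}{[M_V,(-i\p_x)^{n-1}]\psi}$ by a homogeneity/scaling argument in $\epsilon$. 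The paper's route is shorter and avoids singling out any particular $F$; your route produces a concrete witness and needs only elementary commutator algebra together with \cref{lem:wn(k)_prop}.

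Your observation about $n=1$ is correct and is a genuine point the paper's statement and proof overlook. Since $\W_1=(Id_1,0,\ldots)$, the formula in \cref{prop:LP} gives $X_{\H_1}(\Gamma)^{(\ell)}=\sum_{\alpha=1}^\ell[-iId_\ell,\gamma^{(\ell)}]=0$, so $\H_1$ is a Casimir and the conclusion fails at $n=1$. Consistently, the paper's trace step cannot produce a contradiction there: with $\grad_s I_1(\phi)=-i\phi$ one has $\Tr_1\bigl(\ket{\grad_s I_1(\phi)}\bra{\phi}+\ket{\phi}\bra{\grad_s I_1(\phi)}\bigr)=2\Re\ip{\phi}{-i\phi}=0$, not $dI_1[\phi](\phi)=2\|\phi\|_{L^2}^2$. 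So the proposition should be read for $n\geq 2$, and your construction handles that range cleanly.
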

\begin{proof}
We proceed by contradiction and suppose that for every $F\in\A_\infty$, it holds that $\pb{F}{\H_n}_{\G_\infty^*}\equiv 0$ on $\G_\infty^*$. So by the \cref{def:WP}\ref{item:wp_P3} for the Hamiltonian vector field, we have that
\begin{align}
0=\pb{F}{\H_n}_{\G_\infty^*}(\Gamma) &= dF[\Gamma](X_{\H_n}(\Gamma)).
\end{align}
By duality, it follows that $X_{\H_n}\equiv 0$ on $\G_\infty^*$. In particular, for any pure state $\Gamma=\iota(\phi)$, where $\iota$ is as in \eqref{eq:iota_def} and $\phi\in\Sc(\R)$, we have by \cref{thm:GP_fac} (to be proved in the next section) that
\begin{equation}
X_{\H_n}(\iota(\phi))^{(1)} = \ket*{\phi}\bra*{\grad_s I_n(\phi)} + \ket*{\grad_s I_n(\phi)}\bra*{\phi} = 0\in\g_1^*.
\end{equation}
Taking the 1-particle trace of the right-hand side and using the characterization of the symplectic gradient (see \cref{def:re_grad}), we obtain that
\begin{equation}
\label{eq:In_van}
0=dI_n[\phi](\phi) = \sum_{k=1}^\infty 2k I_n^{(k)}[\phi^{\times k};\ol{\phi}^{\times k}],
\end{equation}
where the ultimate equality follows by direct computation. However, \eqref{eq:In_van} is a contradiction by \cref{lem:In_noncon}, and therefore the proof is complete.
\end{proof}

\section{The equations of motion: $n$GP and $n$NLS}\label{sec:gp_flows}
In this last section, we prove \cref{thm:GP_fac}. Before recalling the statement of this theorem, we first recall that for each $n \in \N$, the Hamiltonian functionals $\mathcal{H}_{n}$ are given by the formula
\begin{equation}\label{hamil_func}
\H_{n}(\Gamma) \coloneqq \Tr\paren*{\W_{n}\cdot\Gamma}, \qquad \forall  \Gamma \in \G_{\infty}^{*}
\end{equation}
and the Hamiltonian equation of motion defined by the functional $\mathcal{H}_{n}$ on $\G_{\infty}^{*}$, which we have called the \emph{$n$-th GP-hierarchy (nGP)}, is given by
\begin{align}
\frac{d}{dt}\Gamma &=  X_{\H_{n}}(\Gamma),
\end{align}
where $X_{\H_n}$ is the Hamiltonian vector field associated to $\H_n$.

\GPfac*

\cref{thm:GP_fac} asserts that (nGP) admits a special class of factorized solutions of the form
\begin{equation}
\Gamma=(\gamma^{(k)})_{k\in\N}, \qquad \gamma^{(k)} \coloneqq \ket*{\phi^{\otimes k}}\bra*{\phi^{\otimes k}}, \qquad \phi\in C^{\infty}(I;\mathcal{S}(\R)),
\end{equation}
where $\phi$ solves the $n$-th nonlinear Schr\"{o}dinger equation (nNLS):
\begin{equation}\label{phi_evol}
\paren*{\frac{d}{dt}\phi}(t)= \grad_{s}I_{n}(\phi(t)), \qquad \forall t\in I,
\end{equation}
and where $\grad_{s}$ is the symplectic gradient with respect to the $L^{2}$ standard symplectic structure (recall \cref{def:re_grad} and \cref{schwartz_deriv}). We note that existence and uniqueness for the (nNLS) equation in the class $C^{\infty}(I;\Sc(\R))$ follows from the inverse scattering results of \cite{BC1984, Zhou1989, Zhou1998}. 

\subsection{nGP Hamiltonian vector fields}
\label{ssec:gp_flows_vf}
We first relate the formula given by \cref{prop:LP} for the Hamiltonian vector field $X_{\H_n}$ to the nonlinear operators $w_n$. This connection underpins the proof of \cref{thm:GP_fac}. For $n\in\N$, \cref{prop:LP} gives
\begin{equation}
\label{eq:XH_n}
X_{\H_n}(\Gamma)^{(\ell)} = \sum_{j=1}^\infty j\Tr_{\ell+1,\ldots,\ell+j-1}\paren*{\comm{\sum_{\alpha=1}^\ell (-i\W_n)_{(\alpha,\ell+1,\ldots,\ell+j-1)}^{(j)}}{\gamma^{(\ell+j-1)}}}, \qquad \ell\in\N, \enspace \Gamma\in\G_\infty^*.
\end{equation}
The main lemma is a formula for
\begin{equation*}
\Tr_{\ell+1,\ldots,\ell+j-1}\paren*{\comm{\sum_{\alpha=1}^\ell (-i\W_n)_{(\alpha,\ell+1,\ldots,\ell+j-1)}^{(j)}}{\gamma^{(\ell+j-1)}}}
\end{equation*}
in the special case where $\gamma^{(\ell+j-1)}$ is a mixed state, i.e.
\begin{equation}
\label{eq:gam_form}
\gamma^{(\ell+j-1)} = \frac{1}{2}\paren*{\ket*{f^{\otimes (\ell+j-1)}}\bra*{g^{\otimes (\ell+j-1)}} + \ket*{g^{\otimes (\ell+j-1)}}\bra*{f^{\otimes (\ell+j-1)}}}, \qquad f,g\in\Sc(\R).
\end{equation}

\begin{lemma}
\label{lem:tr_Wn_form}
Let $\ell,j\in\N$. Suppose that $\gamma^{(\ell+j-1)}$ is of the form \eqref{eq:gam_form}. Then for any $\alpha\in\N_{\leq \ell}$ and $\beta \in\N_{\leq j}$, it holds that
\begin{equation} 
\label{eq:tr_Wn_form}
\begin{split}
&\Tr_{\ell+1,\ldots,\ell+j-1}\paren*{(\W_{n,sa})_{(\ell+1,\ldots,\ell+\beta-1,\alpha,\ell+\beta,\ldots,\ell+j-1)}^{(j)}\gamma^{(\ell+j-1)}}(\ux_\ell;\ux_\ell') \\
&=\frac{1}{4}f^{\otimes (\ell-1)}(\ux_{\alpha-1},\ux_{\alpha+1;\ell}) \ol{g^{\otimes \ell}(\ux_\ell')} \\
&\phantom{=}\qquad\times \paren*{w_{n,\beta'}^{(j),t}[f^{\times j};\ol{g}^{\times (j-1)}](x_\alpha) + \ol{w_{n,\beta}^{(j),t}[g^{\times(\beta-1)},\ol{f},g^{(j-\beta)};\ol{f}^{\times(j-1)}](x_\alpha)}} \\
&\phantom{=} +\frac{1}{4}g^{\otimes (\ell-1)}(\ux_{\alpha-1},\ux_{\alpha+1;\ell}) \ol{f^{\otimes \ell}(\ux_\ell')}\\
&\phantom{=} \qquad\times \paren*{w_{n,\beta'}^{(j),t}[g^{\times j};\ol{f}^{\times(j-1)}](x_\alpha) +\ol{w_{n,\beta}^{(j),t}[f^{\times (\beta-1)},\ol{g},f^{\times (j-\beta)};\ol{g}^{\times(j-1)}](x_\alpha)}}
\end{split},
\end{equation}
and
\begin{equation}
\label{eq:tr_Wn_form*}
\begin{split}
&\Tr_{\ell+1,\ldots,\ell+j-1}\paren*{\gamma^{(\ell+j-1)}(\W_{n,sa})_{(\ell+1,\ldots,\ell+\beta-1,\alpha,\ell+\beta,\ldots,\ell+j-1)}^{(j)}}(\ux_\ell;\ux_\ell') \\
&=\frac{1}{4}g^{\otimes \ell}(\ux_{\ell}) \ol{f^{\otimes (\ell-1)}(\ux_{\alpha-1}',\ux_{\alpha+1;\ell}')} \\
&\phantom{=}\qquad\times \paren*{\ol{w_{n,\beta'}^{(j),t}[f^{\times j};\ol{g}^{\times(j-1)}](x_\alpha')} + w_{n,\beta}^{(j),t}[g^{\times (\beta-1)},\ol{f},g^{\times(j-\beta)};\ol{f}^{\times(j-1)}](x_\alpha')} \\
&\phantom{=} +\frac{1}{4}f^{\otimes \ell}(\ux_{\ell}) \ol{g^{\otimes (\ell-1)}(\ux_{\alpha-1}',\ux_{\alpha+1;\ell}')}\\
&\phantom{=} \qquad\times \paren*{\ol{w_{n,\beta'}^{(j),t}[g^{\times j};\ol{f}^{\times(j-1)}](x_\alpha')} +w_{n,\beta}^{(j),t}[f^{\times (\beta-1)},\ol{g},f^{\times(j-\beta)};\ol{g}^{\times(j-1)}](x_\alpha')}
\end{split}.
\end{equation}
In all cases, equality holds in the sense of tempered distributions. 
\end{lemma}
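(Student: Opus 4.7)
The strategy is to reduce the claim to four applications of Lemma~\ref{lem:Wn_wn_ptr} by fully exploiting the tensor-product structure. First, decompose $\W_{n,sa}^{(j)} = \frac{1}{2}(\wt{\W}_{n}^{(j)} + \wt{\W}_{n}^{(j),*})$ and expand
\[
\gamma^{(\ell+j-1)} = \frac{1}{2}\paren*{\ket*{f^{\otimes(\ell+j-1)}}\bra*{g^{\otimes(\ell+j-1)}} + \ket*{g^{\otimes(\ell+j-1)}}\bra*{f^{\otimes(\ell+j-1)}}}.
\]
By bilinearity of the generalized partial trace, \eqref{eq:tr_Wn_form} reduces to computing, for each ordered pair $(h_1,h_2)\in\{(f,g),(g,f)\}$, the two partial traces $\Tr_{\ell+1,\ldots,\ell+j-1}\paren*{\wt{\W}_{n,(\ell+1,\ldots,\ell+\beta-1,\alpha,\ell+\beta,\ldots,\ell+j-1)}^{(j)}\ket*{h_1^{\otimes(\ell+j-1)}}\bra*{h_2^{\otimes(\ell+j-1)}}}$ and its analogue with $\wt{\W}_n^{(j),*}$, each with an overall coefficient of $1/4$ arising from $\tfrac12\cdot\tfrac12$.

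Next, in any such expression, the unprimed coordinates $\{1,\ldots,\ell\}\setminus\{\alpha\}$ and every primed coordinate are untouched by the operator. A direct kernel computation, justified distributionally by the good mapping property of $\wt{\W}_n^{(j)}$, shows that the result factors as
\[
h_1^{\otimes(\ell-1)}(\ux_{\alpha-1},\ux_{\alpha+1;\ell})\,\ol{h_2^{\otimes\ell}(\ux_\ell')}\cdot \wt{K}(x_\alpha),
\]
where $\wt{K}(x_\alpha)$ is obtained by pairing the distribution $\wt{\W}_n^{(j)}h_1^{\otimes j}$, whose $\beta$-th internal slot is identified with $x_\alpha$, against $\ol{h_2}$ in each of the remaining $j-1$ slots and integrating. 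Note that the factor $\ol{h_2(x_\alpha')}$ has been absorbed into $\ol{h_2^{\otimes\ell}(\ux_\ell')}$.

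To evaluate $\wt{K}(x_\alpha)$ explicitly, observe that $\ket*{h_1^{\otimes j}}\bra*{h_2^{\otimes j}}$ is permutation-invariant. Letting $\sigma\in\Ss_j$ denote the transposition swapping $1$ and $\beta$, we may therefore replace $\wt{\W}_n^{(j)}$ by $\wt{\W}_{n,(\sigma(1),\ldots,\sigma(j))}^{(j)}$ without affecting the state, and reinterpret $\wt{K}$ as the kernel of $\Tr_{2,\ldots,j}(\wt{\W}_{n,(\sigma(1),\ldots,\sigma(j))}^{(j)}\ket*{h_1^{\otimes j}}\bra*{h_2^{\otimes j}})$ evaluated on the diagonal $x=x_\alpha$. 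Since $\sigma(1)=\sigma^{-1}(1)=\beta$, Lemma~\ref{lem:Wn_wn_ptr} applies — the second case when $\beta\neq 1$, and the first case when $\beta=1$, with the two unified by the convention $w_{n,1'}^{(j),t}=w_n^{(j)}$ from Lemma~\ref{lem:wn(k)_tran} — yielding $\wt{K}(x_\alpha) = w_{n,\beta'}^{(j),t}[h_1^{\times j};\ol{h_2}^{\times(j-1)}](x_\alpha)$. The parallel argument for $\wt{\W}_n^{(j),*}$, using the adjoint half of Lemma~\ref{lem:Wn_wn_ptr}, produces $\ol{w_{n,\beta}^{(j),t}[h_2^{\times(\beta-1)},\ol{h_1},h_2^{\times(j-\beta)};\ol{h_1}^{\times(j-1)}](x_\alpha)}$.

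Collecting the four contributions and summing with the $1/4$ prefactor yields \eqref{eq:tr_Wn_form}. The companion identity \eqref{eq:tr_Wn_form*} then follows immediately: both $\W_{n,sa}^{(j)}$ and $\gamma^{(\ell+j-1)}$ are self-adjoint, so $\gamma^{(\ell+j-1)}\W_{n,sa}^{(j)} = (\W_{n,sa}^{(j)}\gamma^{(\ell+j-1)})^{*}$, and the generalized partial trace intertwines operator adjunction with the $\ell$-particle kernel involution $(\ux_\ell,\ux_\ell')\mapsto(\ux_\ell',\ux_\ell)$ together with complex conjugation. Applying this transformation to \eqref{eq:tr_Wn_form} directly gives \eqref{eq:tr_Wn_form*}. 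The main technical point that demands care is the rigorous justification of the kernel factorization in the second paragraph as an identity in $\Sc'(\R^{2\ell})$; this is handled as in the proof of Lemma~\ref{lem:Wn_wn_ptr}, relying crucially on the good mapping property of $\wt{\W}_n^{(j)}$ to make sense of pairing its distributional output against Schwartz test data in $j-1$ of the $j$ variables.
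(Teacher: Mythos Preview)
Your proof is correct and follows essentially the same approach as the paper: decompose $\W_{n,sa}^{(j)}$ and $\gamma^{(\ell+j-1)}$ into four pieces, exploit the tensor-product structure to factor off the spectator coordinates, reduce to the $j$-particle partial trace formulae of Lemma~\ref{lem:Wn_wn_ptr}, and then deduce \eqref{eq:tr_Wn_form*} from \eqref{eq:tr_Wn_form} via the self-adjointness of both $\W_{n,sa}^{(j)}$ and $\gamma^{(\ell+j-1)}$. The only difference is cosmetic: the paper's relabeling of coordinates $(\ell,\ell+1,\ldots,\ell+j-1)\mapsto(1,2,\ldots,j)$ naturally produces the cycle $\pi=(2,\ldots,\beta,1,\beta+1,\ldots,j)$, whereas you invoke the symmetry of $\ket*{h_1^{\otimes j}}\bra*{h_2^{\otimes j}}$ to conjugate by the transposition $\sigma=(1\ \beta)$ instead; since both permutations satisfy $\pi^{-1}(1)=\sigma^{-1}(1)=\beta$ and the arguments of $w_{n,\beta'}^{(j),t}$ collapse to $[h_1^{\times j};\ol{h_2}^{\times(j-1)}]$ on the symmetric state, the two routes yield identical output. (One small wording issue: ``evaluated on the diagonal $x=x_\alpha$'' is a slight misnomer---you mean evaluated at $x=x_\alpha$ after the $\ol{h_2(x')}$ factor has already been stripped off and absorbed into $\ol{h_2^{\otimes\ell}(\ux_\ell')}$.)
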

\begin{proof}
By considerations of symmetry, it suffices to consider the case $\alpha=\ell$. Then by \cref{prop:ext_k} for the $(\ell+j-1)$-particle extension, \cref{prop:partial_trace} for the generalized partial trace, and the definition \eqref{wn_sa_def} for $\W_{n,sa}$, we find that
\begin{equation}
\begin{split}
&\Tr_{\ell+1,\ldots,\ell+j-1}\paren*{\W_{n,sa,(\ell+1,\ldots,\ell+\beta-1,\ell,\ell+\beta,\ldots,\ell+j-1)}^{(j)}\gamma^{(\ell+j-1)}}\\
&=\frac{1}{4}\Tr_{\ell+1,\ldots,\ell+j-1}\paren*{\wt{\W}_{n,(\ell+1,\ldots,\ell+\beta-1,\ell,\ell+\beta,\ldots,\ell+j-1)}^{(j)} \ket*{f^{\otimes (\ell+j-1)}}\bra*{g^{\otimes (\ell+j-1)}}} \\
&\phantom{=} +\frac{1}{4}\Tr_{\ell+1,\ldots,\ell+j-1}\paren*{\wt{\W}_{n,(\ell+1,\ldots,\ell+\beta-1,\ell,\ell+\beta,\ldots,\ell+j-1)}^{(j),*} \ket*{f^{\otimes (\ell+j-1)}}\bra*{g^{\otimes (\ell+j-1)}}} \\
&\phantom{=} +\frac{1}{4}\Tr_{\ell+1,\ldots,\ell+j-1}\paren*{\wt{\W}_{n,(\ell+1,\ldots,\ell+\beta-1,\ell,\ell+\beta,\ldots,\ell+j-1)}^{(j)} \ket*{g^{\otimes (\ell+j-1)}}\bra*{f^{\otimes (\ell+j-1)}}} \\
&\phantom{=} +\frac{1}{4}\Tr_{\ell+1,\ldots,\ell+j-1}\paren*{\wt{\W}_{n,(\ell+1,\ldots,\ell+\beta-1,\ell,\ell+\beta,\ldots,\ell+j-1)}^{(j),*} \ket*{g^{\otimes (\ell+j-1)}}\bra*{f^{\otimes (\ell+j-1)}}} \\
&= \frac{1}{4}\ket*{f^{\otimes (\ell-1)}}\bra*{g^{\otimes (\ell-1)}}\otimes\left( \Tr_{2,\ldots,j}\paren*{\wt{\W}_{n,(2,\ldots,\beta,1,\beta+1,\ldots,j)}^{(j)} \ket*{f^{\otimes j}}\bra*{g^{\otimes j}}} \right.\\
&\phantom{=}\hspace{55mm} \left.+ \Tr_{2,\ldots,j}\paren*{\wt{\W}_{n,(2,\ldots,\beta,1,\beta+1,\ldots,j)}^{(j),*} \ket*{f^{\otimes j}}\bra*{g^{\otimes j}}} \right) \\
&\phantom{=} + \frac{1}{4}\ket*{g^{\otimes (\ell-1)}}\bra*{f^{\otimes (\ell-1)}}\otimes\left(\Tr_{2,\ldots,j}\paren*{\wt{\W}_{n,(2,\ldots,\beta,1,\beta+1,\ldots,j)}^{(j)} \ket*{g^{\otimes j}}\bra*{f^{\otimes j}}} \right.\\
&\phantom{=}\hspace{55mm}\left. + \Tr_{2,\ldots,j}\paren*{\wt{\W}_{n,(2,\ldots,\beta,1,\beta+1,\ldots,j)}^{(j),*} \ket*{g^{\otimes j}}\bra*{f^{\otimes j}}} \right),
\end{split}
\end{equation}
where the ultimate equality follows from the tensor product structure. We introduce the permutation $\pi\in\Ss_j$ defined by
\begin{equation}
\pi(a) \coloneqq \begin{cases} a+1, & {1\leq a\leq \beta-1}\\ 1, &{a=\beta}\\ a, & {\beta+1\leq a\leq j} \end{cases},
\end{equation}
so that we can then write
\begin{equation}
\wt{\W}_{n,(2,\ldots,\beta,1,\beta+1,\ldots,j)}^{(j)} = \wt{\W}_{n,(\pi(1),\ldots,\pi(j))}^{(j)}
\end{equation}
and similarly for the adjoint. Using the notation $\Phi_{\wt{\W}_{n,(\pi(1),\ldots,\pi(j))}^{(j)}}$ introduced in \eqref{eq:Phi_Wn},  and similarly for the adjoint, we have that
\begin{equation}
\label{eq:Wn_f_g}
\begin{split}
&\Tr_{2,\ldots,j}\paren*{\wt{\W}_{n,(\pi(1),\ldots,\pi(j))}^{(j)} \ket*{f^{\otimes j}}\bra*{g^{\otimes j}}}(x;x') + \Tr_{2,\ldots,j}\paren*{\wt{\W}_{n,(\pi(1),\ldots,\pi(j))}^{(j),*} \ket*{f^{\otimes j}}\bra*{g^{\otimes j}}}(x;x')\\
&=\Phi_{\wt{\W}_{n,(\pi(1),\ldots,\pi(j))}^{(j)}}(f,\ldots,f;\ol{g},\ldots,\ol{g})(x;x') + \Phi_{\wt{\W}_{n,(\pi(1),\ldots,\pi(j))}^{(j),*}}(f,\ldots,f;\ol{g},\ldots,\ol{g})(x;x')
\end{split}
\end{equation}
and
\begin{equation}
\label{eq:Wn_g_f}
\begin{split}
&\Tr_{2,\ldots,j}\paren*{\wt{\W}_{n,(\pi(1),\ldots,\pi(j))}^{(j)} \ket*{g^{\otimes j}}\bra*{f^{\otimes j}}}(x;x') + \Tr_{2,\ldots,j}\paren*{\wt{\W}_{n,(\pi(1),\ldots,\pi(j))}^{(j),*} \ket*{g^{\otimes j}}\bra*{f^{\otimes j}}}(x;x') \\
&=\Phi_{\wt{\W}_{n,(\pi(1),\ldots,\pi(j))}^{(j)}}(g,\ldots,g;\ol{f},\ldots,\ol{f})(x;x') + \Phi_{\wt{\W}_{n,(\pi(1),\ldots,\pi(j))}^{(j),*}}(g,\ldots,g;\ol{f},\ldots,\ol{f})(x;x')
\end{split}
\end{equation}
in the sense of tempered distributions on $\R^2$. Next, applying \cref{lem:Wn_wn_ptr}, we obtain that for $\pi(1) = 1$ it holds that
\begin{align}
\eqref{eq:Wn_f_g} &= \ol{g(x')} \paren*{w_n^{(j)}[f^{\times j};\ol{g}^{\times(j-1)}](x) + \ol{w_{n,1}^{(j),t}[\ol{f},g^{\times(j-1)};\ol{f}^{\times(j-1)}](x)}},
\end{align}
and 
\begin{align}
\eqref{eq:Wn_g_f} &= \ol{f(x')}\paren*{w_n^{(j)}[g^{\times j};\ol{f}^{\times(j-1)}](x) + \ol{w_{n,1}^{(j),t}[\ol{g},f^{\times (j-1)};\ol{g}^{\times(j-1)}](x)}},
\end{align}
while if $\pi(1)\neq 1$, we have
\begin{align}
\eqref{eq:Wn_f_g} &= \ol{g(x')}\paren*{w_{n,\pi^{-1}(1)'}^{(j),t}[f^{\times j};\ol{g}^{\times(j-1)}](x) + \ol{w_{n,\pi^{-1}(1)}^{(j),t}[g^{\times(\pi^{-1}(1)-1)},\ol{f},g^{\times(j-\pi^{-1}(1))};\ol{f}^{\times(j-1)}](x)}},
\end{align}
and
\begin{align}
\eqref{eq:Wn_g_f} &= \ol{f(x')}\paren*{w_{n,\pi^{-1}(1)'}^{(j),t}[g^{\times j};\ol{f}^{\times(j-1)}](x)+\ol{w_{n,\pi^{-1}(1)}^{(j),t}[f^{(\pi^{-1}(1)-1)},\ol{g},f^{\times (j-\pi^{-1}(1))};\ol{g}^{\times(j-1)}](x)}}.
\end{align}
Since $\pi^{-1}(1)=\beta$ by definition of the permutation $\pi$, we obtain \eqref{eq:tr_Wn_form} after a little bookkeeping.

To obtain \eqref{eq:tr_Wn_form*} from \eqref{eq:tr_Wn_form}, observe that the self-adjointness of $\W_{n,sa}^{(j)}$ and $\gamma^{(\ell+j-1)}$ implies the Schwartz kernel identity
\begin{equation}
\begin{split}
&\ol{\Tr_{\ell+1,\ldots,\ell+j-1}\paren*{\W_{n,sa,(\ell+1,\ldots,\ell+\beta-1,\alpha,\ell+\beta,\ldots,\ell+j-1)}^{(j)}\gamma^{(\ell+j-1)}}(\ux_\ell';\ux_\ell)}\\
&=\Tr_{\ell+1,\ldots,\ell+j-1}\paren*{\gamma^{(\ell+j-1)}\W_{n,sa,(\ell+1,\ldots,\ell+\beta-1,\alpha,\ell+\beta,\ldots,\ell+j-1)}^{(j)}}(\ux_\ell;\ux_\ell').
\end{split}
\end{equation}
Substituting \eqref{eq:tr_Wn_form} into the left-hand side of the preceding identity yields the desired conclusion.
\end{proof}

We conclude this subsection by recording the required formula of the Hamiltonian vector field $X_{\H_n}$ which follows from the previous lemma and some algebraic manipulations.

\begin{lemma}
\label{lem:XHn_Sch_ker}
Suppose that $\Gamma=(\ket*{\phi^{\otimes k}}\bra*{\phi^{\otimes k}})_{k\in\N}$, for some $\phi\in\Sc(\R)$. Then for any $n\in\N$, we have the Schwartz kernel identity
\begin{equation}
\label{eq:XHn_Sch_ker}
\begin{split}
&X_{\H_n}(\Gamma)^{(\ell)}(\ux_\ell;\ux_\ell') \\
&=-\frac{i}{2}\sum_{j=1}^\infty \sum_{\alpha=1}^\ell \ket*{\phi^{\otimes (\ell-1)}}\bra*{\phi^{\otimes (\ell-1)}}(\ux_{\alpha-1},\ux_{\alpha+1;\ell};\ux_{\alpha-1}',\ux_{\alpha+1;\ell}') \\
&\phantom{=} \hspace{25mm}\times \left(\ol{\phi(x_\alpha')}\sum_{\beta=1}^j \paren*{w_{n,\beta'}^{(j),t}[\phi^{\times j};\ol{\phi}^{\times(j-1)}]+\ol{w_{n,\beta}^{(j),t}[\phi^{\times(\beta-1)},\ol{\phi},\phi^{(j-\beta)};\ol{\phi}^{\times(j-1)}]}}(x_\alpha)\right.\\
&\phantom{=} \hspace{40mm} \left.- \phi(x_\alpha)\sum_{\beta=1}^j \paren*{\ol{w_{n,\beta'}^{(j),t}[\phi^{\times j};\ol{\phi}^{\times(j-1)}]}+w_{n,\beta}^{(j),t}[\phi^{\times(\beta-1)},\ol{\phi},\phi^{(j-\beta)};\ol{\phi}^{\times(j-1)}]}(x_\alpha')\right)
\end{split}
\end{equation}
for every $\ell\in\N$.
\end{lemma}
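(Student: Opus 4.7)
The plan is to start from the Hamiltonian vector field formula of \cref{prop:LP}:
\[
X_{\H_n}(\Gamma)^{(\ell)} = \sum_{j=1}^\infty j\, \Tr_{\ell+1,\ldots,\ell+j-1}\paren*{\comm{\sum_{\alpha=1}^\ell (-i\W_n)_{(\alpha,\ell+1,\ldots,\ell+j-1)}^{(j)}}{\gamma^{(\ell+j-1)}}}.
\]
The key structural observation is that $\W_n^{(j)}$ is bosonic, so the embedded operator $\W_{n,(\varepsilon(1),\ldots,\varepsilon(j))}^{(j)}$ depends only on the image of the embedding $\varepsilon$. In particular, the $j$ embeddings $\varepsilon_\beta^\alpha \coloneqq (\ell+1,\ldots,\ell+\beta-1,\alpha,\ell+\beta,\ldots,\ell+j-1)$ indexed by $\beta\in\{1,\ldots,j\}$ all produce the same operator, and hence
\[
j\,\W_{n,(\alpha,\ell+1,\ldots,\ell+j-1)}^{(j)} = \sum_{\beta=1}^{j}\W_{n,\varepsilon_\beta^\alpha}^{(j)}.
\]
This introduces a summation over $\beta$ whose structure matches the left-hand sides of \cref{lem:tr_Wn_form}.

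Next, I will replace $\W_n$ with $\W_{n,sa}$ inside each partial trace. Since $\W_n=\Sym(\W_{n,sa})$, a short adjointness argument, analogous to \cref{lem:tr_bos}, yields the identity $\ip{g}{\W_n^{(j)}f} = \ip{g}{\W_{n,sa}^{(j)}f}$ whenever $f,g$ are bosonic. Applied to the fully bosonic tensors $\phi^{\otimes(\ell+j-1)}$ appearing in $\gamma^{(\ell+j-1)}$, this justifies replacing $\W_n$ by $\W_{n,sa}$ term-by-term in the partial traces. With this reduction in hand, I invoke \cref{lem:tr_Wn_form} with $f=g=\phi$: the two symmetric ``halves'' in the right-hand sides of \eqref{eq:tr_Wn_form} and \eqref{eq:tr_Wn_form*} collapse onto one another, and their $\tfrac14$ coefficients combine to yield
\[
\begin{split}
&\Tr_{\ell+1,\ldots,\ell+j-1}\paren*{\W_{n,sa,\varepsilon_\beta^\alpha}^{(j)}\gamma^{(\ell+j-1)}}(\ux_\ell;\ux_\ell') \\
&\quad = \frac{1}{2}\phi^{\otimes(\ell-1)}(\ux_{\alpha-1},\ux_{\alpha+1;\ell})\,\ol{\phi^{\otimes\ell}(\ux_\ell')}\paren*{w_{n,\beta'}^{(j),t}[\phi^{\times j};\ol{\phi}^{\times(j-1)}] + \ol{w_{n,\beta}^{(j),t}[\phi^{\times(\beta-1)},\ol{\phi},\phi^{\times(j-\beta)};\ol{\phi}^{\times(j-1)}]}}(x_\alpha),
\end{split}
\]
together with the analogous formula for $\gamma^{(\ell+j-1)}\W_{n,sa,\varepsilon_\beta^\alpha}^{(j)}$ that exchanges primed and unprimed coordinates and conjugates the bracket.

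To close the argument, I will expand the commutator as $A\gamma-\gamma A$ and extract the common rank-one factor using $\phi^{\otimes\ell}(\ux_\ell) = \phi(x_\alpha)\cdot\phi^{\otimes(\ell-1)}(\ux_{\alpha-1},\ux_{\alpha+1;\ell})$ and the analogous identity in the primed variables, so that $\ket*{\phi^{\otimes(\ell-1)}}\bra*{\phi^{\otimes(\ell-1)}}(\ux_{\alpha-1},\ux_{\alpha+1;\ell};\ux_{\alpha-1}',\ux_{\alpha+1;\ell}')$ emerges as an overall prefactor. The $-i$ from $-i\W_n$ multiplies the $\frac{1}{2}$ from \cref{lem:tr_Wn_form} to produce $-\frac{i}{2}$, while assembling the $A\gamma$ and $\gamma A$ contributions with their correct signs reproduces precisely the bracket $\ol{\phi(x_\alpha')}\sum_\beta(\cdots)(x_\alpha) - \phi(x_\alpha)\sum_\beta(\cdots)(x_\alpha')$ appearing in \eqref{eq:XHn_Sch_ker}. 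I expect the principal difficulty to be purely combinatorial bookkeeping: correctly tracking the adjoint contribution hidden in $\W_{n,sa}=\frac{1}{2}(\wt{\W}_n+\wt{\W}_n^*)$ through the two formulas of \cref{lem:tr_Wn_form}, and aligning the embedding-index conventions of \cref{prop:LP} with those of \cref{lem:tr_Wn_form} so that the substitution $\W_n\mapsto\W_{n,sa}$ is transparently justified under the $(\alpha,\beta)$ summation.
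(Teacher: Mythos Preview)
Your route is essentially the paper's: introduce a $\beta$-sum, pass from $\W_n$ to $\W_{n,sa}$, apply \cref{lem:tr_Wn_form} with $f=g=\phi$, and collect terms. The one step that needs sharpening is the replacement $\W_n\mapsto\W_{n,sa}$. Your stated justification, namely $\ip{g}{\W_n^{(j)}f}=\ip{g}{\W_{n,sa}^{(j)}f}$ for bosonic $f,g$, is a \emph{full}-trace identity; in the partial trace $\Tr_{\ell+1,\ldots,\ell+j-1}$ the effective $j$-particle test function on one side is $\psi\otimes_\alpha\phi^{\otimes(j-1)}$, which is \emph{not} bosonic, so the term-by-term replacement for each fixed $\beta$ actually fails (indeed, the right-hand sides of \cref{lem:tr_Wn_form} depend on $\beta$, whereas $\W_{n,\varepsilon_\beta^\alpha}^{(j)}$ does not).

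What is true---and what the paper proves---is the identity \emph{after summing over} $\beta$. The paper obtains this by writing $\W_n=\Sym(\W_{n,sa})$, which turns the prefactor $j$ into $\frac{1}{(j-1)!}\sum_{\pi\in\Ss_j}$, then decomposing $\Ss_j=\bigcup_r\{\pi:\pi^{-1}(\alpha)=r\}$ and using the bosonic symmetry of $\gamma^{(\ell+j-1)}$ in the traced variables to show that the partial trace depends only on the class $r$ (i.e.\ the position of $\alpha$). Each class has $(j-1)!$ elements, so the average collapses to $\sum_\beta\Tr_{\ell+1,\ldots,\ell+j-1}\bigl(\W_{n,sa,\varepsilon_\beta^\alpha}^{(j)}\gamma^{(\ell+j-1)}\bigr)$. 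You anticipated in your closing remark that the substitution needs care ``under the $(\alpha,\beta)$ summation''; this permutation--counting argument is precisely that care, and once you supply it your proof and the paper's coincide.
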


\begin{proof}
We use the formula \eqref{eq:XH_n} and recalling definition \cref{eq:Wn_fin_def} for $\W_n$, we obtain that
\begin{equation}
\label{eq:vf_app}
\begin{split}
&X_{\H_n}(\Gamma)^{(\ell)}(\ux_\ell;\ux_\ell')\\
&=-i\sum_{j=1}^\infty \frac{1}{(j-1)!}\sum_{\pi\in\Ss_j}\Tr_{\ell+1,\ldots,\ell+-1}\paren*{\comm{\sum_{\alpha=1}^\ell \W_{n,sa,(\pi(\alpha),\pi(\ell+1),\ldots,\pi(\ell+\beta-1))}^{(j)}}{\gamma^{(\ell+j-1)}}},
\end{split}
\end{equation}
where here, $\Ss_j$ denotes the symmetric group on the set $\{\alpha,\ell+1,\ldots,\ell+j-1\}$. We can decompose $\Ss_j$ by 
\begin{equation}
\Ss_j = \bigcup_{r\in\{\alpha,\ell+1,\ldots,\ell+j-1\}} \{\pi\in\Ss_j : \pi^{-1}(\alpha) = r\} \eqqcolon \Ss_{j,r}.
\end{equation}
Note that each set in the partition has cardinality $(j-1)!$. It is a straightforward computation using the bosonic symmetry of $\gamma^{(\ell+j-1)}$ that
\begin{equation}
\begin{split}
&\Tr_{\ell+1,\ldots,\ell+j-1}\paren*{\comm{\W_{n,sa,(\pi(\alpha),\pi(\ell+1),\ldots,\pi(\ell+j-1))}^{(j)}}{\gamma^{(\ell+j-1)}}} \\
&=\begin{cases}
\Tr_{\ell+1,\ldots,\ell+j-1}\paren*{\comm{\W_{n,sa,(\alpha,\ell+1,\ldots,\ell+j-1)}^{(j)}}{\gamma^{(\ell+j-1)}}}, & r=\alpha\\
\Tr_{\ell+1,\ldots,\ell+j-1}\paren*{\comm{\W_{n,sa,(\ell+1,\ldots,r,\alpha,r+1,\ldots,\ell+j-1)}^{(j)}}{\gamma^{(\ell+j-1)}}}, & r\in\{\ell+1,\ldots,\ell+j-1\}
\end{cases}.
\end{split}
\end{equation}
Using these observations and applying \cref{lem:tr_Wn_form} to \eqref{eq:vf_app}, we obtain the Schwartz kernel identity
\begin{align}
\eqref{eq:vf_app} &= -i\sum_{j=1}^\infty \sum_{\alpha=1}^\ell \sum_{\beta=1}^j \Tr_{\ell+1,\ldots,\ell+j-1}\paren*{\comm{\W_{n,sa,(\ell+1,\ldots,\ell+\beta-1,\alpha,\ell+\beta,\ldots,\ell+j-1)}^{(j)}}{\gamma^{(\ell+j-1)}}}(\ux_\ell;\ux_\ell') \nonumber\\
&=-\frac{i}{2}\sum_{j=1}^\infty \sum_{\alpha=1}^\ell \ket*{\phi^{\otimes (\ell-1)}}\bra*{\phi^{\otimes (\ell-1)}}(\ux_{\alpha-1},\ux_{\alpha+1;\ell};\ux_{\alpha-1}',\ux_{\alpha+1;\ell}') \nonumber\\
&\phantom{=} \hspace{25mm}\times \left(\ol{\phi(x_\alpha')}\sum_{\beta=1}^j \paren*{w_{n,\beta'}^{(j),t}[\phi^{\times j};\ol{\phi}^{\times(j-1)}]+ \ol{w_{n,\beta}^{(j),t}[\phi^{\times(\beta-1)},\ol{\phi},\phi^{\times(j-\beta)};\ol{\phi}^{\times(j-1)}]}}(x_\alpha)\right. \nonumber\\
&\phantom{=} \hspace{40mm} -\left.\phi(x_\alpha)\sum_{\beta=1}^j \paren*{\ol{w_{n,\beta'}^{(j),t}[\phi^{\times j};\ol{\phi}^{\times(j-1)}]} + w_{n,\beta}^{(j),t}[\phi^{\times(\beta-1)},\ol{\phi},\phi^{\times(j-\beta)};\ol{\phi}^{\times(j-1)}]}(x_\alpha')\right).
\end{align}
This yields the desired formula.
\end{proof}

\subsection{Proof of \cref{thm:GP_fac}}\label{ssec:gp_flows_prf}
In this subsection, we prove \cref{thm:GP_fac}.

\begin{proof}[Proof of \cref{thm:GP_fac}]
Fix $n \in \N$. We would like to establish that $\Gamma=(\ket*{\phi^{\otimes k}}\bra*{\phi^{\otimes k}})_{k\in\N}$, where $\phi \in C^\infty(I; \Sc(\R))$, satisfies
\begin{align}
\label{nGP_pf}
\frac{d}{dt} \Gamma = X_{\H_n}(\Gamma),
\end{align}
i.e. $\Gamma$ is a solution to the $n$-th GP hierarchy, if
\begin{align}\label{nnls_pf}
\frac{d}{dt} \phi = \grad_s I_n(\phi),
\end{align}
i.e. $\phi$ is a solution to the $n$-th NLS. By the Leibnitz rule,
\begin{align}
\left(\frac{d}{dt} \Gamma \right)^{(\ell)} = \sum_{\alpha=1}^\ell \ket*{\phi^{\otimes (\alpha-1)}\otimes \frac{d}{dt} \phi \otimes \phi^{\otimes (\ell-\alpha)}}\bra*{\phi^{\otimes \ell}} + \ket*{\phi^{\otimes \ell}}\bra*{\phi^{\otimes (\alpha-1)}\otimes \frac{d}{dt} \phi \otimes \phi^{\otimes (\ell-\alpha)}}.
\end{align}
\begin{comment}
If $\Gamma$ satisfies \eqref{nGP_pf}, then we have by \cref{lem:XHn_Sch_ker} that
\begin{equation}
\begin{split}
&\paren*{\frac{d}{dt}\Gamma}^{(\ell)} \\
&= -\frac{i}{2}\sum_{\alpha=1}^\ell \ket*{\phi^{\otimes(\alpha-1)}\otimes \sum_{\beta=1}^j w_{n,\beta'}^{(j),t}[\phi^{\times j};\ol{\phi}^{\times (j-1)} + \ol{w_{n,\beta}^{(j),t}[\phi^{\times(\beta-1)},\ol{\phi},\phi^{\times(j-\beta)};\ol{\phi}^{\times(j-1)}]}\otimes \phi^{\otimes(\ell-\alpha)}}\bra*{\phi^{\otimes \ell}}\\
&\phantom{=} \hspace{15mm}- \ket*{\phi^{\otimes \ell}}\bra*{\phi^{\otimes (\alpha-1)} \otimes \sum_{\beta=1}^j w_{n,\beta'}^{(j),t}[\phi^{\times j};\ol{\phi}^{\times (j-1)} + \ol{w_{n,\beta}^{(j),t}[\phi^{\times(\beta-1)},\ol{\phi},\phi^{\times(j-\beta)};\ol{\phi}^{\times(j-1)}]}\otimes \phi^{\otimes(\ell-\alpha)}}.
\end{split}
\end{equation}
\end{comment}
Substituting equation \eqref{nnls_pf} into the right-hand side of the preceding equality, we obtain that 
\begin{equation}
\label{eq:grads_I_sub}
\paren*{\frac{d}{dt}\Gamma}^{(\ell)} = \sum_{\alpha=1}^\ell \ket*{\phi^{\otimes (\alpha-1)}\otimes \grad_s I_n(\phi) \otimes \phi^{\otimes (\ell-\alpha)}}\bra*{\phi^{\otimes \ell}} + \ket*{\phi^{\otimes \ell}}\bra*{\phi^{\otimes (\alpha-1)}\otimes \grad_s I_n(\phi) \otimes \phi^{\otimes (\ell-\alpha)}}.
\end{equation}
Now the reader will recall that $\grad_s I_n$ is the symplectic gradient with respect to the form $\omega_{L^2}$ and by \eqref{symp_grad_tli} is given by the formula
\begin{equation}
\label{eq:grads_I_rec}
\grad_s I_n(\phi) = -\frac{i}{2}\sum_{j=1}^\infty \left\{\sum_{\beta=1}^j \paren*{\ol{w_{n,\beta}^{(j),t}[\phi^{\times(\beta-1)},\ol{\phi},\phi^{\times(j-\beta)};\ol{\phi}^{\times(j-1)}]} + w_{n,\beta'}^{(j),t}[\phi^{\times j};\ol{\phi}^{\times(j-1)}] }\right\}.
\end{equation}
Substituting identity \eqref{eq:grads_I_rec} into the right-hand side of \eqref{eq:grads_I_sub} and comparing the resulting expression with the formula \eqref{eq:XHn_Sch_ker} given by \cref{lem:XHn_Sch_ker} yields the desired conclusion.
\end{proof}

\subsection{An example: the fourth GP hierarchy}\label{ssec:gp_flows_ex}
We conclude this section with an example computation of one the $n$-th GP hierarchies. Specifically, we explicitly compute the equation of motion for the fourth GP hierarchy, which is the next one after the usual GP hierarchy (the third one in our terminology). In light of our \cref{thm:GP_fac}, the fourth GP hierarchy corresponds to the complex mKdV equation
\begin{equation}\label{cplx_mkdv}
\p_{t}\phi  = \p_{x}^{3}\phi -6\kappa|\phi|^{2}\p_{x}\phi, \qquad \kappa \in\{\pm 1\}.
\end{equation}

\begin{ex}[Fourth GP hierarchy]
We first recall from \cref{ex:W_4} that the
\begin{equation}\label{w4}
\W_{4}=\paren*{ (-i\p_{x_{1}})^{3}, -\frac{3\kappa i}{2}(\p_{x_{1}}+\p_{x_{2}})\delta(X_{1}-X_{2}), 0,\ldots}.
\end{equation}
Substituting \eqref{w4} into the right-hand side of the \eqref{eq:nGP}, using \cref{dual_dual} and the fact that $dH[\Gamma]^{(j)} = -i\W_{n}^{(j)}$ once again, the fourth GP equation, written in operator form, simplifies to
\begin{align}
\p_{t}\gamma^{(\ell)} &=\sum_{\alpha=1}^{\ell}\sum_{j=1}^{2}\sum_{\beta=1}^{j}\Tr_{\ell,\ldots,\ell+j-1}\paren*{(-i\W_{4}^{(j)})_{(\ell+1,\ldots,\ell+\beta-1,\alpha,\ell+\beta,\ldots,\ell+j-1)}\gamma^{(\ell+j-1)}} \nonumber\\
&\hspace{25mm}\phantom{=}-\Tr_{\ell+1,\ldots,\ell+j-1}\paren*{\gamma^{(\ell+j-1)}(-i\W_{4}^{(j)})_{(\ell+1,\ldots,\ell+\beta-1,\alpha,\ell+\beta,\ldots,\ell+j-1)}} \nonumber\\
&=-i\sum_{\alpha=1}^{\ell} \biggl( \W_{4,(\alpha)}^{(1)}\gamma^{(\ell)} + \gamma^{(\ell)}\W_{4,(\alpha)}^{(1)} + \Tr_{\ell+1}\paren*{\W_{4,(\alpha,\ell+1)}^{(2)}\gamma^{(\ell+1)} - \gamma^{(\ell+1)}\W_{4,(\alpha,\ell+1)}^{(2)}} \nonumber\\
& \hspace{60mm} \phantom{=}+\Tr_{\ell+1}\paren*{\W_{4,(\ell+1,\alpha)}^{(2)}\gamma^{(\ell+1)} - \gamma^{(\ell+1)}\W_{4,(\ell+1,\alpha)}^{(2)} } \biggr)\nonumber,
%&=\mathrm{Term}_{1}+\mathrm{Term}_{2}+\mathrm{Term}_{3}.
\end{align}
where we recall that the subscript notation is used to specify the variables on which the $\W_n^{(j)}$ operators act. By direct computation, this expression simplifies to yield
\begin{equation}
\p_{t}\gamma^{(\ell+1)} = \sum_{\alpha=1}^{\ell}(\p_{x_{\alpha}}^{3}+\p_{x_{\alpha}'}^{3})\gamma^{(\ell)} -6\kappa\paren*{B_{\alpha;\ell+1}^{+}(\p_{x_{\alpha}}\gamma^{(\ell+1)}) + B_{\alpha;\ell+1}^{-}(\p_{x_{\alpha}'}\gamma^{(\ell+1)})},
\end{equation}
which is the fourth GP hierarchy, and which can readily be seen to yield \eqref{cplx_mkdv} for factorized solutions.
\end{ex}

\appendix

\section{NLS Poisson commutativity}
\label{app:nls_pc}
\subsection{Transition and monodromy matrices}
In this appendix, we sketch the proof that the 1-particle functionals $I_n$ are involution with respect to the Poisson bracket $\pb{\cdot}{\cdot}_{L^2}$. We generalize the presentation to allow for the case where the two Schwartz functions $\psi,\bar{\psi}$ are independent, since this is the actual 1-particle result that we use in \cref{sec:invol}. Hence, rather than considering the scalar NLS equation \eqref{nls}, we consider the system
\begin{equation}
\label{eq:nls_sys}
\begin{cases}
(i\p_t+\Delta)\psi_1 = 2\kappa\psi_1^2\psi_2\\
(i\p_t-\Delta)\psi_2 = -2\kappa\psi_2^2\psi_1
\end{cases}, \qquad \kappa\in\{\pm 1\}.
\end{equation}
Our presentation will proceed at a high level, following the exposition in \cite[Chapter I and Chapter III]{FT07}; however, the reader may consult Chapter I, \S 7 and Chapter III, \S 4 of the aforementioned reference to fill in any omitted analytic details. We also consider the $L$-periodic case rather than entire real line. The extension to the latter case follows from truncation and periodization to fundamental domain $[-L,L]$, application of the periodic result, and then passage to the limit $L\rightarrow\infty$.

We start by fixing some notation. For $L>0$, we let $\T_L$ denote the domain $[-L,L]$ with periodic boundary conditions and $C^\infty(\T_L)$ the space of smooth functions on $\T_L$. Equivalently, $C^\infty(\T_L)$ is the space of smooth functions on the real line whose derivatives of all order are $2L$-periodic. Given a $(\C^2\otimes\C^2)$-valued functional $M_{(\psi_1,\psi_2)}$ on $C^\infty(\T_L)$, we define
\begin{equation}
M_{(\psi_1,\psi_2)}^\dagger \coloneqq \ol{M_{(\ol{\psi_2},\ol{\psi_1})}}
\end{equation}
where the complex conjugate of the matrix is taken entry-wise. Evidently, the $\dagger$ operation is involutive.

The system \eqref{eq:nls_sys} is a compatibility condition for the overdetermined system of equations
\begin{equation}
\label{eq:od_sys}
\begin{cases}
\p_x F_{(\psi_1,\psi_2)}(t,x,\lambda) = U_{(\psi_1,\psi_2)}(t,x,\lambda)F_{(\psi_1,\psi_2)}(t,x,\lambda), \\
\p_t F_{(\psi_1,\psi_2)}(t,x,\lambda) = V_{(\psi_1,\psi_2)}(t,x,\lambda)F_{(\psi_1,\psi_2)}(t,x,\lambda)
\end{cases},
\end{equation}
where $F_{(\psi_1,\psi_2)}$ is a spacetime $\C^2$-valued column vector and $U_{(\psi_1,\psi_2)}$ and $V_{(\psi_1,\psi_2)}$ are $\lambda$-dependent $2\times 2$ matrices given by
\begin{equation}
\label{eq:U_def}
U_{(\psi_1,\psi_2)}(\lambda) \coloneqq U_{0,(\psi_1,\psi_2)} + \lambda U_{1}, \qquad U_{0,(\psi_1,\psi_2)} \coloneqq \sqrt{\kappa}\begin{pmatrix} 0 & \psi_2 \\ \psi_1 & 0 \end{pmatrix}, \enspace U_1 \coloneqq \frac{1}{2i}\begin{pmatrix} 1 & 0 \\ 0 & -1 \end{pmatrix}
\end{equation}
and
\begin{equation}
\label{eq:V_def}
\begin{split}
&V_{(\psi_1,\psi_2)}(\lambda) \coloneqq V_{0,(\psi_1,\psi_2)} + \lambda V_{1,(\psi_1,\psi_2)} + \lambda^2 V_2,\\
&V_{0,(\psi_1,\psi_2)} \coloneqq i\sqrt{\kappa}\begin{pmatrix}\sqrt{\kappa}\psi_1\psi_2 & -\p_x\psi_2 \\ \p_x\psi_1 & -\sqrt{\kappa}\psi_1\psi_2\end{pmatrix}, \enspace V_{1,(\psi_1,\psi_2)} \coloneqq -U_{0,(\psi_1,\psi_2)}, \enspace V_2 \coloneqq -U_1.
\end{split}
\end{equation}
In the preceding and following material, $\lambda$ plays the role of an auxiliary spectral parameter. It will be convenient going forward to introduce notation for the $2\times 2$ Pauli matrices:
\begin{equation}
\label{eq:Pauli}
\sigma_1 \coloneqq \begin{pmatrix} 0 & 1\\ 1 & 0\end{pmatrix}, \enspace \sigma_2 \coloneqq \begin{pmatrix} 0 & -i \\ i & 0\end{pmatrix}, \enspace \sigma_3 \coloneqq \begin{pmatrix} 1 & 0 \\ 0 & -1\end{pmatrix}, \enspace \sigma_+ \coloneqq \frac{\sigma_1+i\sigma_2}{2}, \enspace \sigma_{-} \coloneqq \frac{\sigma_1-i\sigma_2}{2}.
\end{equation}
Written using $U$ and $V$, the compatibility condition for the system \eqref{eq:od_sys} is then
\begin{equation}
\label{eq:cc}
\p_t U_{(\psi_1,\psi_2)} -\p_x V_{(\psi_1,\psi_2)} + \comm{U_{(\psi_1,\psi_2)}}{V_{(\psi_1,\psi_2)}} =0
\end{equation}
point-wise in $\lambda$. In the sequel, we will omit the subscript $(\psi_1,\psi_2)$, which shows that the matrices are really matrix-valued functionals evaluated at a specific point, except when invoking the dependence is necessary. We hope that this omission will not result in any confusion on the reader's part.

There is a geometric interpretation to \eqref{eq:cc} in terms of local connection coefficients in the vector bundle $\R^2\times\C^2$. Equation \eqref{eq:cc} then says that the $(U,V)$-connection has zero curvature. For this reason, \eqref{eq:cc} is often called the \emph{zero curvature representation} in the literature. We will not emphasize this geometric aspect in the appendix, as it does not play a role for us.

Now fix a time $t_0$ and consider the \emph{auxiliary linear problem}
\begin{equation}
\label{eq:aux_lp}
\p_x F = U(t_0,x,\lambda)F.
\end{equation}
The object of interest associated to \eqref{eq:aux_lp} is the \emph{monodromy matrix}, which is the matrix of parallel transport along the contour $t=t_0$, $-L\leq x\leq L$ positively oriented:
\begin{equation}
\label{eq:mm}
T_L(\lambda,t_0) \coloneqq \overset{\curvearrowleft}\exp\paren*{\int_{-L}^L dx U(x,t_0,\lambda)},
\end{equation}
where $ \overset{\curvearrowleft}\exp$ denotes the path-ordered exponential.\footnote{For $A\in L^\infty(\T_L;\C^n\otimes\C^n)$, the \emph{path-ordered exponential} of $A$ is defined by
\begin{equation} \overset{\curvearrowleft}\exp\paren*{\int_{-L}^x dz A(z)} \coloneqq \sum_{n=0}^\infty \int_{-L}^x dx_n\int_{-L}^{x_n}dx_{n-1}\cdots\int_{-L}^{x_2}dx_1 A(x_n)\cdots A(x_1).\end{equation}} By using the superposition principle for parallel transport and the fact that parallel transport along a closed curve is trivial, one can show that the monodromy matrices are conjugate for different values of $t$. Consequently, the trace of the monodromy matrix is constant in time:
\begin{equation}
\tr_{\C^2} T_L(\lambda,t_2) = \tr_{\C^2} T_L(\lambda,t_1), \qquad \forall t_1,t_2\in\R,
\end{equation}
where $\tr_{\C^2}$ denotes the $2\times 2$ matrix trace. Furthermore, one can show that the choice of fundamental domain $[-L,L]$ in the definition \eqref{eq:mm} is immaterial to computing the trace. We conclude that
\begin{equation}
\label{eq:F_gfunc}
\tl{F}_L(\lambda) \coloneqq \tr_{\C^2} T_{L}(\lambda)
\end{equation}
is a \emph{generating function} for the conservation laws of \eqref{eq:nls_sys}.

More generally, we have the \emph{transition matrix}, which is the matrix of parallel transport from $y$ to $x$ along the $x$-axis:
\begin{equation}
\label{eq:tm}
T(x,y,\lambda) \coloneqq \overset{\curvearrowleft}\exp\paren*{\int_y^x dz U(z,\lambda)}.
\end{equation}
The monodromy matrix is then the special case of the transition matrix obtained by setting $(x,y)=(L,-L)$. From the definition \eqref{eq:tm}, it is immediate that the transition matrix satisfies the Cauchy problem
\begin{equation}
\label{eq:tm_ODE_x}
\begin{cases}
\p_x T(x,y,\lambda)  = U(x,\lambda)T(x,y,\lambda) \\
T(x,y,\lambda)|_{x=y} = I_{\C^2}
\end{cases},
\end{equation}
where $I_{\C^2}$ is the identity matrix on $\C^2$. $T(x,y,\lambda)$ is a smooth function of $(x,y)$ and is also analytic in $\lambda$ due to the analyticity of $U(x,\lambda)$ and the initial datum. By using that $\int_y^x = -\int_x^y$ in \eqref{eq:tm}, we see that $T(x,y,\lambda)$ also satisfies the ODE
\begin{equation}
\label{eq:tm_ODE_y}
\p_y T(x,y,\lambda) = -T(x,y,\lambda)U(y,\lambda).
\end{equation}
Additionally, the transition matrix has several elementary properties, which we record with the following lemma.

\begin{lemma}
\label{lem:tm_props}
The following properties hold:
\begin{enumerate}[(i)]
\item
\label{item:tm_sup}
$T(x,z,\lambda) T(z,y,\lambda) = T(x,y,\lambda)$,
\item
\label{item:tm_rev}
$T(x,y,\lambda)=T^{-1}(y,x,\lambda)$,
\item
\label{item:tm_unimod}
$\det_{\C^2} T(x,y,\lambda)=1$.
\end{enumerate}
\end{lemma}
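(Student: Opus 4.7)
The plan is to prove the three properties in the order listed, exploiting in each case that $T(x,y,\lambda)$ is uniquely characterized by the Cauchy problem \eqref{eq:tm_ODE_x}.

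For property \ref{item:tm_sup}, I would argue by uniqueness of solutions to linear ODEs. Both sides of the identity, viewed as matrix-valued functions of $x$ with $y,z,\lambda$ fixed, satisfy the same linear ODE $\p_x M = U(x,\lambda) M$: for the left-hand side this follows from \eqref{eq:tm_ODE_x} applied to $T(x,z,\lambda)$ with $T(z,y,\lambda)$ treated as a constant right-multiplier, and for the right-hand side this is \eqref{eq:tm_ODE_x} itself. At $x=z$, the left-hand side equals $I_{\C^2} \cdot T(z,y,\lambda) = T(z,y,\lambda)$, and the right-hand side also equals $T(z,y,\lambda)$, so both sides solve the same initial-value problem and must coincide.

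For property \ref{item:tm_rev}, I would specialize \ref{item:tm_sup} by setting $z = y$ in one slot and $z = x$ in the other. Taking $z = x$ and swapping the roles of $x$ and $y$ in \ref{item:tm_sup} yields
\begin{equation*}
T(x,y,\lambda)\,T(y,x,\lambda) = T(x,x,\lambda) = I_{\C^2},
\end{equation*}
where the final equality is the initial condition in \eqref{eq:tm_ODE_x}. Symmetrically, $T(y,x,\lambda)T(x,y,\lambda) = I_{\C^2}$, so $T(y,x,\lambda)$ is a genuine two-sided inverse of $T(x,y,\lambda)$.

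For property \ref{item:tm_unimod}, I would apply Jacobi's formula (Abel--Liouville) to $\det_{\C^2} T(x,y,\lambda)$. Differentiating in $x$ and using \eqref{eq:tm_ODE_x},
\begin{equation*}
\p_x \det{}_{\C^2} T(x,y,\lambda) = \det{}_{\C^2} T(x,y,\lambda)\,\tr_{\C^2}\!\bigl(U(x,\lambda)\bigr).
\end{equation*}
From the explicit form \eqref{eq:U_def}, both $U_0$ and $U_1$ are off-diagonal or traceless diagonal, so $\tr_{\C^2} U(x,\lambda) \equiv 0$. Combined with the initial condition $\det_{\C^2} T(y,y,\lambda) = \det_{\C^2} I_{\C^2} = 1$, this forces $\det_{\C^2} T(x,y,\lambda) \equiv 1$.

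No genuine obstacle is anticipated here, since all three properties follow from standard linear ODE facts; the only point requiring care is property \ref{item:tm_sup}, where one must verify that both sides genuinely satisfy the same Cauchy problem in $x$ before invoking uniqueness. Properties \ref{item:tm_rev} and \ref{item:tm_unimod} are then immediate corollaries of \ref{item:tm_sup} and of the tracelessness of $U$, respectively.
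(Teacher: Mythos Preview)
Your proposal is correct and essentially matches the paper's approach: the paper leaves \ref{item:tm_sup} and \ref{item:tm_rev} to the reader as straightforward (and your ODE-uniqueness argument is exactly the intended one), and for \ref{item:tm_unimod} the paper also uses Jacobi's formula together with the tracelessness of $U(x,\lambda)$ and the initial condition $T(y,y,\lambda)=I_{\C^2}$. The only cosmetic difference is that the paper writes out Jacobi's formula via the adjugate and then simplifies using $T\,\mathrm{adj}(T)=\det_{\C^2}(T)\,I_{\C^2}$, whereas you invoke the Abel--Liouville form directly.
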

\begin{proof}
Properties \ref{item:tm_sup} and \ref{item:tm_rev} are straightforward, and we leave them to the reader. For property \ref{item:tm_unimod}, the reader will recall Jacobi's formula that for any $n \times n$ matrix $A(t)$,
\begin{equation}
\frac{d}{dt}\det_{\C^n}(A(t)) = \tr_{\C^n}\paren*{\mathrm{adj}(A(t))\dfrac{dA(t)}{dt}},
\end{equation}
where $\mathrm{adj}(A(t))$ is the adjugate of $A(t)$ (i.e. the transpose of the cofactor matrix of $A(t)$). Fixing $y,\lambda$ and applying Jacobi's formula to $T(x,y,\lambda)$ with independent variable $x$ instead of $t$ and also using the equation \eqref{eq:tm_ODE_x}, we find that $\det_{\C^2}(T(x,y,\lambda))$ is a solution to the Cauchy problem
\begin{equation}
\begin{cases}
\p_x \det_{\C^2}(T(x,y,\lambda)) &= \tr_{\C^2}\paren*{\mathrm{adj}(T(x,y,\lambda)) U(x,\lambda)T(x,y,\lambda)},\\
\det_{\C^2}(T(x,y,\lambda))|_{x=y} &= 1
\end{cases}
\end{equation}
Since
\begin{equation}
\mathrm{adj}(T(x,y,\lambda)) = \begin{pmatrix} T^{22}(x,y,\lambda) & - T^{12}(x,y,\lambda)\\ -T^{21}(x,y,\lambda) & T^{11}(x,y,\lambda)\end{pmatrix},
\end{equation}
it follows by direct computation that
\begin{equation}
T(x,y,\lambda)\mathrm{adj}(T(x,y,\lambda)) = \det_{\C^2}(T(x,y,\lambda))Id_{\C^2}.
\end{equation}
So by the cyclicity and linearity of trace,  $\det_{\C^2}(T(x,y,\lambda))$ is the unique constant solution to the Cauchy problem
\begin{equation}
\begin{cases}
\p_x\det_{\C^2}(T(x,y,\lambda))) &= \det_{\C^2}(T(x,y,\lambda))\tr_{\C^2}\paren*{U(x,y,\lambda)I_{\C^2}} = 0 \\
\det_{\C^2}(T(x,y,\lambda))|_{x=y} &= 1
\end{cases},
\end{equation}
where we use that $U(x,y,\lambda)$ is trace-less. Thus, the proof of \ref{item:tm_unimod} is complete.
\end{proof}

It is evident from its definition \eqref{eq:U_def} that
\begin{equation}
\label{eq:U_invol}
U_{(\psi_1,\psi_2)}^\dagger(x,\lambda) = \sigma U_{(\psi_1,\psi_2)}(x,\bar{\lambda})\sigma,
\end{equation}
where
\begin{equation}
\label{eq:sigma_def}
\sigma =
\begin{cases}
\sigma_1, & \kappa=1 \\
\sigma_2, & \kappa=-1
\end{cases},
\end{equation}
where $\kappa$ is the defocusing/focusing parameter in \eqref{eq:nls_sys} and $\sigma_1,\sigma_2$ are the Pauli matrices in \eqref{eq:Pauli}. The transition matrix also satisfies an important involution relation leading to the special structure of the matrix $T(x,y,\lambda)$, which we isolate in the next lemma.

\begin{lemma}
\label{lem:tm_invol}
$T(x,y,\lambda)$ has the involution property
\begin{equation}
\label{eq:tm_invol}
\sigma T_{(\psi_1,\psi_2)}(x,y,\bar{\lambda})\sigma = T_{(\psi_1,\psi_2)}^\dagger(x,y,\lambda).
\end{equation}
Consequently, we can write the monodromy matrix $T_{L,\psi_1,\psi_2}(\lambda)$ as
\begin{equation}
T_{L,(\psi_1,\psi_2)}(\lambda) =
\begin{pmatrix}
a_{L,(\psi_1,\psi_2)}(\lambda) & \sgn(\kappa)b_{L,(\psi_1,\psi_2)}^\dagger(\bar{\lambda}) \\ b_{L,(\psi_1,\psi_2)}(\lambda) & a_{L,(\psi_1,\psi_2)}^\dagger(\bar{\lambda})
\end{pmatrix},
\end{equation}
where $a_{L,(\psi_1,\psi_2)}^\dagger(\lambda) \coloneqq \ol{a_{L,(\ol{\psi_2},\ol{\psi_1})}(\lambda)}$ and analogously for $b_{L,(\psi_1,\psi_2)}^\dagger$.
\end{lemma}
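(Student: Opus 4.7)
The strategy is to establish the involution $\sigma T(x,y,\bar\lambda)\sigma = T^\dagger(x,y,\lambda)$ by showing that both sides solve the same linear first-order matrix ODE in $x$ with the same initial condition at $x=y$, whence uniqueness of solutions to linear ODEs forces equality. The structural corollary for the monodromy matrix then follows from a direct entry-by-entry comparison after specializing to $(x,y)=(L,-L)$, split according to the two cases $\kappa \in \{\pm 1\}$.

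For the ODE argument, set $A(x,y,\lambda) \coloneqq \sigma T_{(\psi_1,\psi_2)}(x,y,\bar\lambda)\sigma$ and $B(x,y,\lambda) \coloneqq T^\dagger_{(\psi_1,\psi_2)}(x,y,\lambda) = \ol{T_{(\ol{\psi_2},\ol{\psi_1})}(x,y,\lambda)}$. Differentiating $A$ in $x$ using \eqref{eq:tm_ODE_x} and inserting $\sigma^2 = I_{\C^2}$ (valid for both $\sigma_1$ and $\sigma_2$) yields
\begin{equation*}
\partial_x A = \paren*{\sigma U_{(\psi_1,\psi_2)}(x,\bar\lambda)\sigma}\,A = U^\dagger_{(\psi_1,\psi_2)}(x,\lambda)\,A,
\end{equation*}
where the second equality uses the assumed involution \eqref{eq:U_invol}. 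Since the $\dagger$ operation only swaps/conjugates the functional arguments $(\psi_1,\psi_2)$ and commutes with $\partial_x$, applying \eqref{eq:tm_ODE_x} to the functional $T_{(\ol{\psi_2},\ol{\psi_1})}$ and then conjugating gives
\begin{equation*}
\partial_x B = \ol{U_{(\ol{\psi_2},\ol{\psi_1})}(x,\lambda)\,T_{(\ol{\psi_2},\ol{\psi_1})}(x,y,\lambda)} = U^\dagger_{(\psi_1,\psi_2)}(x,\lambda)\,B.
\end{equation*}
At $x=y$ we have $A(y,y,\lambda) = \sigma I_{\C^2} \sigma = I_{\C^2}$ and $B(y,y,\lambda) = \ol{I_{\C^2}} = I_{\C^2}$, so uniqueness for linear matrix ODEs forces $A \equiv B$.

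For the monodromy-matrix consequence, write $T_{L,(\psi_1,\psi_2)}(\lambda) = \begin{pmatrix}a(\lambda) & c(\lambda) \\ b(\lambda) & d(\lambda)\end{pmatrix}$ and apply the just-proved involution at $(x,y)=(L,-L)$. A direct $2\times 2$ computation gives $\sigma_1 T_L(\bar\lambda)\sigma_1 = \begin{pmatrix}d(\bar\lambda) & b(\bar\lambda)\\ c(\bar\lambda) & a(\bar\lambda)\end{pmatrix}$ in the defocusing case and $\sigma_2 T_L(\bar\lambda)\sigma_2 = \begin{pmatrix}d(\bar\lambda) & -b(\bar\lambda)\\ -c(\bar\lambda) & a(\bar\lambda)\end{pmatrix}$ in the focusing case. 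Matching these against $T_L^\dagger(\lambda)$ entry-by-entry and using the involutivity $(M^\dagger)^\dagger = M$ yields $d(\lambda) = a^\dagger(\bar\lambda)$ and $c(\lambda) = \sgn(\kappa)\,b^\dagger(\bar\lambda)$, which is precisely the asserted form.

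The main obstacle here is purely bookkeeping: one must track carefully that the $\dagger$ operation does not act on the spectral parameter $\lambda$ but does conjugate the scalar constants $\sqrt{\kappa}$ and $1/(2i)$ appearing in $U$, which is exactly why the involution \eqref{eq:U_invol} must carry $\bar\lambda$ on one side. There is no serious analytic difficulty, since \eqref{eq:tm_ODE_x} has smooth coefficients in $x$ and analytic dependence on $\lambda$, so the ODE uniqueness invoked above is standard.
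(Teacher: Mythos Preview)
Your proof is correct and follows essentially the same approach as the paper: both arguments establish \eqref{eq:tm_invol} by showing two matrix-valued functions solve the same linear ODE in $x$ with identical initial data at $x=y$, invoking uniqueness, and then deducing the monodromy structure by direct entry comparison. The only cosmetic difference is that the paper conjugates $T^\dagger(\bar\lambda)$ by $\sigma$ and compares it to $T(\lambda)$ against the ODE with coefficient $U(x,\lambda)$, whereas you conjugate $T(\bar\lambda)$ and compare it to $T^\dagger(\lambda)$ against the ODE with coefficient $U^\dagger(x,\lambda)$; these are equivalent reformulations of the same uniqueness argument.
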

\begin{proof}
Since the Cauchy problem \eqref{eq:tm_ODE_x} has a unique solution and $\sigma^2=I_{\C^2}$, it suffices to show that the matrix
\begin{equation}
\label{eq:tl_T}
\tl{T}_{(\psi_1,\psi_2)}(x,y,\lambda) \coloneqq \sigma T_{(\psi_1,\psi_2)}^\dagger(x,y,\bar{\lambda})\sigma
\end{equation}
is a solution of \eqref{eq:tm_ODE_x}.

It is evident from $T_{(\psi_1,\psi_2)}(x,y,\lambda)|_{x=y}=I_{\C^2}$ and $\sigma^2=I_{\C^2}$ that the initial condition holds. Now using that $\p_x$ commutes with left- (and right-) multiplication by a constant matrix and complex conjugation, we find that
\begin{align}
\p_x\tl{T}_{(\psi_1,\psi_2)}(x,y,\lambda) &= \sigma\ol{\p_x T_{(\ol{\psi_2},\ol{\psi_1})}(x,y,\bar{\lambda})}\sigma \nonumber\\
&=\sigma\ol{ U_{(\ol{\psi_2},\ol{\psi_1})}(x,\ol{\lambda})T_{(\ol{\psi_2},\ol{\psi_1})}(x,y,\bar{\lambda})}\sigma \nonumber\\
&=\sigma U_{(\psi_1,\psi_2)}^\dagger(x,\ol{\lambda}) T_{(\psi_1,\psi_2)}^\dagger(x,y,\bar{\lambda})\sigma,
\end{align}
where the penultimate equality follows from application of \eqref{eq:tm_ODE_x} with $(\psi_1,\psi_2)$ replaced by $(\ol{\psi_2},\ol{\psi_1})$ and the ultimate equality follows from the definition of the dagger superscript. Since $\sigma^2=I_{\C^2}$, we can use the associativity of matrix multiplication together with the identity \eqref{eq:U_invol} to write
\begin{align}
\sigma U_{(\psi_1,\psi_2)}^\dagger(x,\ol{\lambda}) T_{(\psi_1,\psi_2)}^\dagger(x,y,\ol{\lambda})\sigma &= \paren*{\sigma U_{(\psi_1,\psi_2)}^\dagger(x,\ol{\lambda})\sigma}\paren*{\sigma T_{(\psi_1,\psi_2)}^\dagger(x,y,\ol{\lambda})\sigma} \nonumber\\
&=U_{(\psi_1,\psi_2)}(x,\lambda)\tl{T}_{(\psi_1,\psi_2)}(x,y,\lambda),
\end{align}
which is exactly what we needed to show.

We now show the second assertion concerning the structure of the monodromy matrix. We only present the details in the case $\kappa=1$ and leave the $\kappa=-1$ case as an exercise for the reader. Writing
\begin{equation}
T_{(\psi_1,\psi_2)}(x,y,\lambda) =
\begin{pmatrix}
T_{(\psi_1,\psi_2)}^{11}(x,y,\lambda) & T_{(\psi_1,\psi_2)}^{12}(x,y,\lambda) \\
T_{(\psi_1,\psi_2)}^{21}(x,y,\lambda) & T_{(\psi_1,\psi_2)}^{22}(x,y,\lambda)
\end{pmatrix},
\end{equation}
we see from direct computation that
\begin{align}
\sigma T_{(\psi_1,\psi_2)}(x,y,\bar{\lambda})\sigma &=
\begin{pmatrix}
0 & 1\\
1 & 0
\end{pmatrix}
\begin{pmatrix}
T_{(\psi_1,\psi_2)}^{12}(x,y,\bar{\lambda}) & T_{(\psi_1,\psi_2)}^{11}(x,y,\bar{\lambda}) \\
T_{(\psi_1,\psi_2)}^{22}(x,y,\bar{\lambda}) & T_{(\psi_1,\psi_2)}^{21}(x,y,\bar{\lambda})
\end{pmatrix} \nonumber\\
&=\begin{pmatrix} T_{(\psi_1,\psi_2)}^{22}(x,y,\bar{\lambda}) & T_{(\psi_1,\psi_2)}^{21}(x,y,\bar{\lambda}) \\ T_{(\psi_1,\psi_2)}^{12}(x,y,\bar{\lambda}) & T_{(\psi_1,\psi_2)}^{11}(x,y,\bar{\lambda}) \end{pmatrix}.
\end{align}
Now by the involution property \eqref{eq:tm_invol} and the definition of $T_{(\psi_1,\psi_2)}^\dagger$, we see that
\begin{align}
\begin{pmatrix}
\ol{T_{(\ol{\psi_2},\ol{\psi_1})}^{11}(x,y,\lambda)} & \ol{T_{(\ol{\psi_2},\ol{\psi_1})}^{12}(x,y,\lambda)} \\
\ol{T_{(\ol{\psi_2},\ol{\psi_1})}^{21}(x,y,\lambda)} & \ol{T_{(\ol{\psi_2},\ol{\psi_1})}^{22}(x,y,\lambda)}
\end{pmatrix}
&= T_{(\psi_1,\psi_2)}^\dagger(x,y,\lambda) \nonumber\\
&= 
\begin{pmatrix} T_{(\psi_1,\psi_2)}^{22}(x,y,\bar{\lambda}) & T_{(\psi_1,\psi_2)}^{21}(x,y,\bar{\lambda}) \\ T_{(\psi_1,\psi_2)}^{12}(x,y,\bar{\lambda}) & T_{(\psi_1,\psi_2)}^{11}(x,y,\bar{\lambda}) \end{pmatrix}.
\end{align}
Evaluating this identity at $(x,y) = (L,-L)$ and defining
\begin{equation}
a_{L,(\psi_1,\psi_2)}(\lambda) \coloneqq T_{L,(\psi_1,\psi_2)}^{11}(\lambda) , \quad b_{L,(\psi_1,\psi_2)}(\lambda) \coloneqq T_{L,(\psi_1,\psi_2)}^{21}(\lambda),
\end{equation}
we obtain the desired conclusion.
\end{proof}

\begin{remark}
Since the transition matrix is an entire function of $\lambda$, it follows that the functions $a_{L,(\psi_1,\psi_2)}, a_{L,(\psi_1,\psi_2)}^\dagger, b_{L,(\psi_1,\psi_2)}, b_{L,(\psi_1,\psi_2)}^\dagger$ are entire functions as well. In fact, they are of exponential type $L$. Moreover, the unimodularity property \ref{item:tm_unimod} for the transition matrix implies the normalization condition
\begin{equation}
a_{L,(\psi_1,\psi_2)}(\lambda)a_{L,(\psi_1,\psi_2)}^\dagger(\lambda) -\sgn(\kappa) b_{L,(\lambda_1,\lambda_2)}(\lambda)b_{L,(\psi_1,\psi_2)}^\dagger(\lambda)=1, \qquad \lambda\in\R.
\end{equation}
\end{remark}

We close this subsection with an alternative way to see that the trace of the monodromy matrix, which we called $\tl{F}_L(\lambda)$ in \eqref{eq:F_gfunc}, is conserved in time. By differentiating both sides of equation \eqref{eq:tm_ODE_x} with respect to time and performing some algebraic manipulation, one finds that
\begin{equation}
\p_t T(t,x,y,\lambda) = V(t,x,\lambda)T(t,x,y,\lambda) - T(t,x,y,\lambda) V(t,y,\lambda)
\end{equation}
Since $V$ is $2L$-periodic and therefore $V(t,L,\lambda) = V(t,-L,\lambda)$, it follows that the monodromy matrix satisfies the von Neumann equation
\begin{equation}
\p_t T_L(t,\lambda) = \comm{V(t,L,\lambda)}{T_L(t,\lambda)}.
\end{equation}
Since differentiation commutes with the trace and the trace of a commutator is zero, it follows that
\begin{equation}
\p_t \tr_{\C^2}(T_L(t,\lambda)) = 0.
\end{equation}

\subsection{Integrals of motion}
\label{ssec:nls_pc_im}
We now use an asymptotic expansion for the generating functional $\tl{F}_L(\lambda)$ (recall \eqref{eq:F_gfunc}) to identify conserved quantities for the system \eqref{eq:nls_sys}. We start by finding a gauge transformation that reduces the transition matrix to diagonal form $\exp Z(x,y,\lambda)$:
\begin{equation}
\label{eq:tm_gt}
T(x,y,\lambda) = \paren*{I_{\C^2}+W(x,\lambda)} \exp\paren*{Z(x,y,\lambda)}\paren*{I_{\C^2}+W(y,\lambda)}^{-1},
\end{equation}
where $W$ and $Z$ are off-diagonal and diagonal matrices, respectively. We will see that $W$ and $Z$ have the large real $\lambda$ asymptotic expansions
\begin{equation}
\label{eq:W_asy}
W(x,\lambda) \sim \sum_{n=1}^\infty \frac{W_n(x)}{\lambda^n}, \enspace Z(x,y,\lambda) \sim \frac{(x-y)\lambda\sigma_3}{2i}+ \sum_{n=1}^\infty \frac{Z_n(x,y,\lambda)}{\lambda^n},
\end{equation}
where the reader will recall the Pauli matrix $\sigma_3$ from \eqref{eq:Pauli}. Here and throughout the appendix, the asymptotic should be interpreted as follows: for any $k\in\N$,
\begin{equation}
\begin{split}
o(|\lambda|^{-k}) &= \sup_{-L\leq x\leq L} \|W(x,\lambda)-\sum_{n=1}^k \frac{W_n(x)}{\lambda^n}\| \\
&\phantom{=} + \sup_{-L\leq x,y,\leq L} \|Z(x,y,\lambda)-\frac{(x-y)\lambda\sigma_3}{2i} -\sum_{n=1}^k \frac{Z_n(x,y,\lambda)}{\lambda^n}\|
\end{split}
\end{equation}
as $|\lambda| \rightarrow \infty$ on the real line, where $\|\cdot\|$ denotes any matrix norm.

Proceeding formally to identify the relevant equations, we substitute \eqref{eq:tm_gt} into the transition matrix differential equation \eqref{eq:tm_ODE_x} and use the Leibnitz rule to obtain that
\begin{equation}
\begin{split}
&U(x,\lambda)\paren*{I_{\C^2}+W(x,\lambda)}\exp\paren*{Z(x,y,\lambda)}\paren*{I_{\C^2}+W(y,\lambda)}^{-1}\\
&=\p_x W(x,\lambda) \exp\paren*{Z(x,y,\lambda)}\paren*{I_{\C^2}+W(y,\lambda)}^{-1}\\
&\phantom{=} + \paren*{I_{\C^2}+W(x,\lambda)}\p_x Z(x,y,\lambda) \exp\paren*{Z(x,y,\lambda)}\paren*{I_{\C^2}+W(y,\lambda)}^{-1},
\end{split}
\end{equation}
which can be manipulated to yield
\begin{equation}
\label{eq:di_odi_split}
U(x,\lambda)\paren*{I_{\C^2}+W(x,\lambda)} = \p_x W(x,\lambda)+\paren*{I_{\C^2}+W(x,\lambda)}\p_x Z(x,y,\lambda).
\end{equation}
Recalling from \eqref{eq:U_def} that $U(x,\lambda)=U_0(x) + \lambda U_1$, where $U_0$ is off-diagonal and $U_1$ is diagonal, and decomposing both sides of \eqref{eq:di_odi_split} into off-diagonal and diagonal parts, we find that $W$ and $Z$ satisfy the coupled system of equations
\begin{equation}
\begin{cases}
\p_x W + W\p_x Z = U_0 + \lambda U_1 W \\
\p_x Z= U_0 W+\lambda U_1
\end{cases}.
\end{equation}
Substituting the second equation into the first one and using that $U_1$ anticommutes with $W$, we find that $W$ satisfies the matrix Riccati equation
\begin{equation}
\label{eq:W_ODE}
\p_x W+i\lambda\sigma_3 W+ W U_0 W-U_0 = 0.
\end{equation}
One can rewrite \eqref{eq:W_ODE} as an integral equation and use the fixed-point method to show that \eqref{eq:W_ODE} has a smooth solution on $\T_L$ for sufficiently large $\lambda$ depending on the data $(\|\phi\|_{L^1(\T_L)}, \|\phi\|_{L^\infty(\T_L)}, L)$, with the asymptotic expansion \eqref{eq:W_asy}. We can then solve for $Z$ subject to the initial condition $Z(x,y,\lambda)|_{x=y}=0_{\C^2}$ by
\begin{equation}
\label{eq:Z_sol}
Z(x,y,\lambda) = \frac{\lambda(x-y)}{2i}\sigma_3 + \int_y^x dz \ U_0(z)W(z,\lambda).
\end{equation}
In particular, the asymptotic expansion of $Z$ is then determined by the asymptotic expansion for $W$. $W$ and $Z$ satisfy \eqref{eq:tm_gt} since both the left-hand side and right-hand side of the equation \eqref{eq:tm_gt} are solutions to the same Cauchy problem, which has a unique solution.

Next, substituting the expansion $\sum_{n=1}^\infty \frac{W_n(x)}{\lambda^n}$ into equation \eqref{eq:W_ODE}, we find that the coefficients $W_n(x)$ satisfy the recursion relation
\begin{equation}
\label{eq:Wn_m_recur}
\begin{split}
W_1(x) &= -i\sigma_3 U_0(x) = i\sqrt{\kappa}\begin{pmatrix} 0 & -\psi_2(x) \\ \psi_1(x) & 0\end{pmatrix}, \\
W_{n+1}(x) &= i\sigma_3\paren*{\p_x W_n(x) + \sum_{k=1}^{n-1} W_k(x) U_0(x) W_{n-k}(x)}.
\end{split}
\end{equation}
Evidently, the matrices $W_n(x)$ are $2L$-periodic and are polynomials of the derivatives of $U_0(x)$. By equation \eqref{eq:W_ODE} for $W$ and the continuity method together with the equation \eqref{eq:Z_sol} for $Z$, one can show that the asymptotic \eqref{eq:W_asy} holds. In the next lemma, we record an important involution property of the $W_n$. As before with $U$, we include the subscripts $(\psi_1,\psi_2)$ in the sequel to denote the underlying dependence.

\begin{lemma}
\label{lem:W_invol}
For every $n\in\N$, it holds that $W_n$ is off-diagonal and 
\begin{equation}
\label{eq:W_invol}
W_{n,(\psi_1,\psi_2)}^\dagger(x) = \sigma W_{n,(\psi_1,\psi_2)}(x)\sigma,
\end{equation}
where $\sigma$ is as in \eqref{eq:sigma_def}. Additionally, $W_{n,(\psi_1,\psi_2)}(x)$ has the form
\begin{equation}
i\sqrt{\kappa}
\begin{pmatrix}
0 & -w_{n,(\psi_1,\psi_2)}^\dagger(x)\\
w_{n,(\psi_1,\psi_2)}(x) & 0
\end{pmatrix},
\end{equation}
where the functions $w_{n,(\psi_1,\psi_2)}(x)$ satisfy the recursion relation
\begin{equation}
\label{eq:wn_b_rec}
\begin{split}
w_{1,(\psi_1,\psi_2)}(x) &= \psi_1(x), \\
w_{n+1,(\psi_1,\psi_2)}(x) &= -i\p_x w_n(x) + \kappa\psi_2(x)\sum_{k=1}^{n-1}w_{k,(\psi_1,\psi_2)}(x)w_{n-k,(\psi_1,\psi_2)}(x).
\end{split}
\end{equation}
\end{lemma}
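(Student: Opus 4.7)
The plan is to establish all three assertions — off-diagonality of $W_n$, the template form in terms of $w_n$ and $w_n^\dagger$, and the involution $W_n^\dagger = \sigma W_n \sigma$ — simultaneously by strong induction on $n$, using the recursion~\eqref{eq:Wn_m_recur}. The base case $n=1$ is immediate from the explicit formula $W_1 = -i\sigma_3 U_0 = i\sqrt{\kappa}\begin{pmatrix}0 & -\psi_2 \\ \psi_1 & 0\end{pmatrix}$: this identifies $w_1 = \psi_1$ and $w_1^\dagger = \psi_2$ (consistent with $\dagger$ defined via the swap $(\psi_1,\psi_2)\mapsto(\ol{\psi_2},\ol{\psi_1})$ followed by complex conjugation), and the involution is checked by direct matrix computation in each case $\kappa=\pm 1$ using $\sigma_1^2=\sigma_2^2=I$.

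For the inductive step, assuming the three assertions hold for all indices up to $n$, off-diagonality of $W_{n+1}$ is structural. Indeed, $\p_x W_n$ inherits the off-diagonal shape from $W_n$; in each summand $W_k U_0 W_{n-k}$ the three factors are off-diagonal $2\times 2$ matrices, and since the product of two off-diagonals is diagonal while a diagonal times an off-diagonal is off-diagonal, the product is off-diagonal; finally, left-multiplication by the diagonal matrix $i\sigma_3$ preserves off-diagonality.

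To recover the recursion for $w_n$, I would substitute the inductive templates
\[
W_k = i\sqrt{\kappa}\begin{pmatrix}0 & -w_k^\dagger \\ w_k & 0\end{pmatrix}, \qquad U_{0,(\psi_1,\psi_2)} = \sqrt{\kappa}\begin{pmatrix}0 & \psi_2 \\ \psi_1 & 0\end{pmatrix}
\]
into the recursion and read off the $(2,1)$ entry of $W_{n+1}$. A short computation shows that the term $i\sigma_3 \p_x W_n$ contributes $\sqrt{\kappa}\,\p_x w_n$, while each $i\sigma_3 W_k U_0 W_{n-k}$ contributes $i\kappa\sqrt{\kappa}\,\psi_2 w_k w_{n-k}$ to this entry (using $(\sqrt{\kappa})^2 = \kappa$). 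Summing in $k$ and factoring out $i\sqrt{\kappa}$ produces exactly
\[
w_{n+1} = -i\p_x w_n + \kappa\psi_2 \sum_{k=1}^{n-1} w_k w_{n-k},
\]
as required; the parallel computation of the $(1,2)$ entry yields the conjugate recursion, which agrees with what one obtains by applying the recursion at $(\ol{\psi_2},\ol{\psi_1})$ and conjugating, thereby confirming internal consistency of the notation $w_{n+1}^\dagger$.

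The involution $W_{n+1}^\dagger = \sigma W_{n+1}\sigma$ then follows directly from the now-established template form by a matrix-level verification, using $\sigma^2 = I$ and the anticommutation $\{\sigma_3,\sigma\} = 0$ for $\sigma\in\{\sigma_1,\sigma_2\}$. I do not anticipate a conceptual obstacle; the bulk of the work is bookkeeping. The main subtlety to watch is the behavior of the prefactor $\sqrt{\kappa}$ under complex conjugation — specifically, $\ol{i\sqrt{\kappa}}$ equals $-i$ when $\kappa=1$ but $-1$ when $\kappa=-1$ — together with sign tracking when commuting $\sigma_3$ past off-diagonal matrices. Once these conventions are handled consistently, the off-diagonality, the template form, and the recursion all fall out of the recursion~\eqref{eq:Wn_m_recur} in a single induction.
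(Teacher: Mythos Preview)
Your proposal is correct and follows essentially the same strong induction on the recursion \eqref{eq:Wn_m_recur} as the paper, using the same key facts ($\sigma^2=I_{\C^2}$, the anticommutation $\{\sigma,\sigma_3\}=0$, and off-diagonality of $U_0$). The only organizational difference is that the paper establishes the involution $\sigma W_{n+1}\sigma = W_{n+1}^\dagger$ first---via the observation that $\dagger$ is an algebra homomorphism commuting with $\p_x$, so the recursion transforms cleanly under conjugation by $\sigma$---and then reads off the template form and the scalar recursion from the involution plus off-diagonality; you instead compute the template entries directly and deduce the involution from the template, which is equally valid.
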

\begin{proof}
We prove the lemma by strong induction on $n$ using the recursion formula \eqref{eq:Wn_m_recur}. The base case $n=1$ follows from
\begin{equation}
\label{eq:U0_invol}
U_{0,(\psi_1,\psi_2)}^\dagger(x) = \sigma U_{0,(\psi_1,\psi_2)}(x)\sigma
\end{equation}
and the fact that $\sigma$ anti-commutes with $\sigma_3$.

For the induction step, suppose that for some $n\in\N$, the involution relation holds for all $k\in\N_{\leq n-1}$. Multiplying \eqref{eq:Wn_m_recur} by $\sigma$ on the left and right and using that $\sigma^2=I_{\C^2}$, we find that
\begin{align}
&\sigma W_{n+1,(\psi_1,\psi_2)}(x)\sigma \nonumber\\
&= i\sigma \sigma_3\paren*{\p_x W_{n,(\psi_1,\psi_2)}(x) + \sum_{k=1}^{n-1} W_{k,(\psi_1,\psi_2)}(x) U_{0,(\psi_1,\psi_2)}(x) W_{n-k,(\psi_1,\psi_2)}(x)}\sigma \nonumber\\
&=-i\sigma_3\paren*{\p_x (\sigma W_{n,(\psi_1,\psi_2)}(x)\sigma) + \sum_{k=1}^{n-1} (\sigma W_{k,(\psi_1,\psi_2)}(x)\sigma)(\sigma U_{0,(\psi_1,\psi_2)}(x)\sigma) (\sigma W_{n-k,(\psi_1,\psi_2)}(x)\sigma)} \nonumber\\
&=-i\sigma_3\paren*{\p_x W_{n,(\psi_1,\psi_2)}^\dagger(x) + \sum_{k=1}^n W_{k,(\psi_1,\psi_2)}^\dagger(x) U_{0,(\psi_1,\psi_2)}^\dagger(x) W_{n-k,(\psi_1,\psi_2)}^\dagger(x)},
\end{align}
where we again use \eqref{eq:U0_invol} and the anti-commutativity of $\sigma$ and $\sigma_3$ to obtain the penultimate equality and the induction hypothesis to obtain the ultimate equality. Since $(i\sigma_3)^\dagger=-i\sigma_3$ and the $\dagger$ operation is a homomorphism of algebras which commutes with differentiation, \eqref{eq:W_invol} is proved. Since $W_{1,(\psi_1,\psi_2)},\ldots,W_{n,(\psi_1,\psi_2)}$ are off-diagonal, it it follows from some basic algebra and the diagonality and off-diagonality of $\sigma_3$ and $U_0$, respectively, that $W_{n+1, (\psi_1,\psi_2)}$ is off-diagonal. Thus, the proof of the induction step is complete.

Now since $W_{n,(\psi_1,\psi_2)}$ is off-diagonal, it takes the form
\begin{equation}
W_{n,(\psi_1,\psi_2)} =
\begin{pmatrix}
0 & w_{n,(\psi_1,\psi_2)}^{12} \\
w_{n,(\psi_1,\psi_2)}^{21} & 0
\end{pmatrix}, \qquad w_{n,(\psi_1,\psi_2)}^{12}, w_{n,(\psi_1,\psi_2)}^{21}\in C^\infty(\T_L),
\end{equation}
which by direct computation implies that
\begin{equation}
\sigma W_{n,(\psi_1,\psi_2)} \sigma = \begin{pmatrix} 0 & \sgn(\kappa)w_{n,(\psi_1,\psi_2)}^{21} \\ \sgn(\kappa)w_{n,(\psi_1,\psi_2)}^{12} & 0 \end{pmatrix},
\end{equation}
Now the involution relation \eqref{eq:W_invol} implies the equality
\begin{equation}
\begin{pmatrix} 0 & \sgn(\kappa)w_{n,(\psi_1,\psi_2)}^{21} \\ \sgn(\kappa)w_{n,(\psi_1,\psi_2)}^{12} & 0 \end{pmatrix} = W_{n,(\psi_1,\psi_2)}^{\dagger} = \begin{pmatrix} 0 & w_{n,(\psi_1,\psi_2)}^{12,\dagger} \\ w_{n,(\psi_1,\psi_2)}^{21,\dagger} & 0\end{pmatrix}.
\end{equation}
Therefore, defining $w_{n,(\psi_1,\psi_2)} \coloneqq w_{n,(\psi_1,\psi_2)}^{21}/(i\sqrt{\kappa})$, we can write $W_{n,(\psi_1,\psi_2)}$ in the form
\begin{equation}
W_{n,(\psi_1,\psi_2)} = 
i\sqrt{\kappa}
\begin{pmatrix}
0 & -w_{n,(\psi_1,\psi_2)}^\dagger(x)\\
w_{n,(\psi_1,\psi_2)}(x) & 0
\end{pmatrix},
\end{equation}
where by \eqref{eq:Wn_m_recur}, the functions $w_{n,(\psi_1,\psi_2)}(x)$ satisfy the recursion relation
\begin{equation}
\label{eq:wn(,)_def}
\begin{split}
w_{1,(\psi_1,\psi_2)}(x) = \psi_1(x), \\
w_{n+1,(\psi_1,\psi_2)}(x) = -i\p_x w_{n,(\psi_1,\psi_2)}(x) + \kappa\psi_2(x)\sum_{k=1}^{n-1}w_{k,(\psi_1,\psi_2)}(x)w_{n-k,(\psi_1,\psi_2)}(x).
\end{split}
\end{equation}
Thus, the proof of the lemma is complete.
\end{proof}

By using the equation \eqref{eq:W_ODE}, one can also show that $W_{(\psi_1,\psi_2)}(x,\lambda)$ satisfies the same involutive property as $W_n$. So we can write
\begin{equation}
\label{eq:W_w}
W_{(\psi_1,\psi_2)}(x,\lambda) = i\sqrt{\kappa}\paren*{w_{(\psi_1,\psi_2)}(x,\lambda)\sigma_{-} - w_{(\psi_1,\psi_2)}^\dagger(x,\bar{\lambda})\sigma_+},
\end{equation}
where $\sigma_{\pm}$ are defined in \eqref{eq:Pauli} and where $w_{(\psi_1,\psi_2)}$ has the large real lambda asymptotic expansion
\begin{equation}
\label{eq:w_asy}
w_{(\psi_1,\psi_2)}(x,\lambda)\sim \sum_{n=1}^\infty \frac{w_{n,(\psi_1,\psi_2)}(x)}{\lambda^n}.
\end{equation}
Using equation \eqref{eq:Z_sol} for $Z_{(\psi_1,\psi_2)}(x,y,\lambda)$ and evaluating $(x,y)=(L,-L)$, we find that
\begin{align}
Z_{L,(\psi_1,\psi_2)}(\lambda) &\coloneqq Z_{(\psi_1,\psi_2)}(L,-L,\lambda) \nonumber\\
&= \frac{\lambda L}{i}\sigma_3 + \int_{-L}^L dz U_{(\psi_1,\psi_2)}(z)W_{(\psi_1,\psi_2)}(z,\lambda) \nonumber\\
&=\begin{pmatrix} -i\lambda L & 0 \\ 0 & i\lambda L\end{pmatrix} + \int_{-L}^L dz \begin{pmatrix}0 &\sqrt{\kappa}\psi_2(z)\\ \sqrt{\kappa}\psi_1(z) & 0\end{pmatrix}\begin{pmatrix} 0 & -i\sqrt{\kappa}w_{(\psi_1,\psi_2)}^\dagger(z,\lambda) \\ i\sqrt{\kappa}w_{(\psi_1,\psi_2)}(z,\lambda) & 0\end{pmatrix} \nonumber\\
&=\begin{pmatrix} -i\lambda L + i\kappa\int_{-L}^L dz \psi_2(z)w_{(\psi_1,\psi_2)}(z,\lambda) & 0 \\ 0 & i\lambda L -i\kappa\int_{-L}^L dz \psi_1(z)w_{(\psi_1,\psi_2)}^\dagger(z,\lambda) \end{pmatrix} \label{eq:ZL_form}
\end{align}
Evaluating both sides of equation \eqref{eq:tm_gt} at $(x,y)=(L,-L)$, we find that the monodromy matrix $T_L(\lambda)$ has the representation
\begin{equation}
\label{eq:TL_form}
\begin{split}
T_{L,(\psi_1,\ol{\psi_2})}(\lambda) &= \paren*{I_{\C^2}+W_{(\psi_1,\ol{\psi_2})}(L,\lambda)}\exp(Z_{L,(\psi_1,\ol{\psi_2})}(\lambda))\paren*{I_{\C^2}+W_{(\psi_1,\ol{\psi_2})}(-L,\lambda)}^{-1}.
\end{split}
\end{equation}

We now turn to finding a formula for the generating function $\tl{F}_L(\lambda)$ (recall \eqref{eq:F_gfunc}) in terms of the functions $w$ and $w^\dagger$. We first have an important involution property for the entries of $Z_L(\lambda)$.

\begin{lemma}
\label{lem:I_invol}
For every $(\psi_1,\ol{\psi_2})\in\Sc(\R)^2$ and $\lambda\in\R$ sufficiently large so that $w_{(\psi_1,\ol{\psi_2})}(\cdot,\lambda)$ exists, it holds that
\begin{equation}
\int_{-L}^L dx \ol{\psi_2}(x)w_{(\psi_1,\ol{\psi_2})}(x,\lambda) = \int_{-L}^L dx\psi_1(x) w_{(\psi_1,\ol{\psi_2})}^\dagger(x,\lambda) = \ol{\int_{-L}^L dx \ol{\psi_1}(x)w_{(\psi_2,\ol{\psi_1})}(x,\lambda)}.
\end{equation}
In particular, if for every $n\in\N$, we define
\begin{equation}
\label{eq:tlIn_def}
\tl{I}_n(\psi_1,\ol{\psi_2}) \coloneqq \int_{-L}^L dx \ol{\psi_2}(x)w_{n,(\psi_1,\ol{\psi_2})}(x), \qquad \forall (\psi_1,\ol{\psi_2})\in\Sc(\R)^2,
\end{equation}
then
\begin{equation}
\label{eq:I_invol}
\tl{I}_n(\psi_1,\ol{\psi_2}) = \ol{\tl{I}_n(\psi_2,\ol{\psi_1})}.
\end{equation}
\end{lemma}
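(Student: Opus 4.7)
The plan is to reduce both integral equalities in the statement to the single identity $\tr_{\C^2} Z_L(\lambda) = 0$, where $Z_L(\lambda)$ is the diagonal matrix arising in the gauge factorization of the monodromy matrix. Reading off the diagonal entries from \eqref{eq:ZL_form}, one sees immediately that
\[
\tr_{\C^2} Z_L(\lambda) = i\kappa\Bigl(\int_{-L}^{L} \ol{\psi_2}(z)\, w_{(\psi_1,\ol{\psi_2})}(z,\lambda)\, dz - \int_{-L}^{L} \psi_1(z)\, w^\dagger_{(\psi_1,\ol{\psi_2})}(z,\lambda)\, dz\Bigr),
\]
so vanishing of this trace is precisely the first equality. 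The second equality is just a restatement of the defining identity $w^\dagger_{(\psi_1,\ol{\psi_2})}(x,\lambda) = \ol{w_{(\psi_2,\ol{\psi_1})}(x,\lambda)}$ valid for real $\lambda$ together with conjugation under the integral sign.

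To establish $\tr_{\C^2} Z_L(\lambda) = 0$, I will take $\det_{\C^2}$ in the gauge factorization \eqref{eq:tm_gt} evaluated at $(x,y) = (L,-L)$:
\[
1 = \det\nolimits_{\C^2} T_L(\lambda) = \det\nolimits_{\C^2}(I_{\C^2}+W(L,\lambda))\, e^{\tr_{\C^2} Z_L(\lambda)}\, \det\nolimits_{\C^2}(I_{\C^2}+W(-L,\lambda))^{-1},
\]
where the unimodularity of $T_L$ on the left comes from Lemma~\ref{lem:tm_props}\ref{item:tm_unimod}. In the periodic setting the potential $U_{0,(\psi_1,\ol{\psi_2})}$ is $2L$-periodic, so the unique smooth solution $W(\cdot,\lambda)$ of the matrix Riccati equation \eqref{eq:W_ODE} on $\T_L$ is likewise $2L$-periodic; in particular $W(L,\lambda) = W(-L,\lambda)$ and the two determinantal factors cancel, leaving $e^{\tr_{\C^2} Z_L(\lambda)} = 1$, i.e.\ $\tr_{\C^2} Z_L(\lambda) \in 2\pi i\,\Z$.

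To pin down the integer, I will invoke the asymptotic expansion \eqref{eq:w_asy}: as $\lambda \to +\infty$ along the real axis, both $w_{(\psi_1,\ol{\psi_2})}(\cdot,\lambda)$ and $w^\dagger_{(\psi_1,\ol{\psi_2})}(\cdot,\lambda)$ tend to zero uniformly on $[-L,L]$, so the explicit formula above shows $\tr_{\C^2} Z_L(\lambda) \to 0$. On the connected open ray $\lambda \in (\lambda_0,\infty)$ where the fixed-point construction of $W(\cdot,\lambda)$ is valid, $\lambda\mapsto\tr_{\C^2} Z_L(\lambda)$ is continuous (indeed analytic) and $2\pi i\Z$-valued, so by connectedness and discreteness it is constant, and decay at infinity forces that constant to be $0$. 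This completes the proof of the two integral equalities.

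Finally, to deduce \eqref{eq:I_invol}, I substitute the asymptotic \eqref{eq:w_asy} into the identity $\int \ol{\psi_2}\, w_{(\psi_1,\ol{\psi_2})}(\cdot,\lambda) = \int \psi_1\, w^\dagger_{(\psi_1,\ol{\psi_2})}(\cdot,\lambda)$ and match coefficients of $\lambda^{-n}$; this is legitimate since the expansion is uniform in $x\in[-L,L]$, permitting term-by-term integration. The resulting identity $\int \ol{\psi_2}\, w_{n,(\psi_1,\ol{\psi_2})} = \int \psi_1\, w^\dagger_{n,(\psi_1,\ol{\psi_2})}$, combined with $w^\dagger_{n,(\psi_1,\ol{\psi_2})} = \ol{w_{n,(\psi_2,\ol{\psi_1})}}$ and one more conjugation, yields $\tl{I}_n(\psi_1,\ol{\psi_2}) = \ol{\tl{I}_n(\psi_2,\ol{\psi_1})}$. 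The main technical obstacle is the branch-selection step: one needs continuity of $\lambda\mapsto W(\cdot,\lambda)$ on a connected region reaching $\lambda=+\infty$ in order to rule out nonzero integer multiples of $2\pi i$ in $\tr_{\C^2} Z_L(\lambda)$. This is precisely what the fixed-point construction of $W$ provides, after which the discreteness of $2\pi i\Z$ does the rest.
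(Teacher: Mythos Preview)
Your proof is correct and follows essentially the same route as the paper: take determinants in the gauge factorization \eqref{eq:TL_form}, use unimodularity of $T_L$ together with $2L$-periodicity of $W$ to obtain $\det_{\C^2}\exp Z_L(\lambda)=1$, invoke $\det e^A=e^{\tr A}$, and then read off the integral identities from \eqref{eq:ZL_form} before passing to coefficients in the asymptotic expansion. In fact you are more careful than the paper on one point: the paper writes $\exp(\tr_{\C^2} Z_L(\lambda))=1 \Rightarrow \tr_{\C^2} Z_L(\lambda)=0$ without comment, whereas you explicitly rule out nonzero multiples of $2\pi i$ via the continuity-plus-decay argument on the real ray, which is the honest way to do it.
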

\begin{proof}
Since $\det_{\C^2}(T_{L,(\psi_1,\ol{\psi_2})}(\lambda))=1$ by the unimodularity property \cref{lem:tm_props}\ref{item:tm_unimod} and
\begin{equation}
\paren*{I_{\C^2}+W_{(\psi_1,\ol{\psi_2})}(L,\lambda)}^{-1} = I_{\C^2}+W_{(\psi_1,\ol{\psi_2})}(-L,\lambda)
\end{equation}
by the $2L$-periodicity of $W(\cdot,\lambda)$, it follows from the multiplicative property of determinant that
\begin{equation}
1=\det_{\C^2}(T_{L,(\psi_1,\ol{\psi_2})}(\lambda)) = \det_{\C^2}\paren*{\exp Z_{L,(\psi_1,\ol{\psi_2})}(\lambda)}.
\end{equation}
Now for any matrix $A\in\C^n\otimes \C^n$, Jacobi's formula implies the trace identity
\begin{equation}
\det_{\C^n}(e^A)=\exp(\tr_{\C^n} A).
\end{equation}
Hence,
\begin{equation}
1= \exp\paren*{\tr_{\C^2} Z_{L,(\psi_1,\ol{\psi_2})}(\lambda)} = 1 \Longrightarrow \tr_{\C^2} Z_{L,(\psi_1,\ol{\psi_2})}(\lambda)=0.
\end{equation}
So by identity \eqref{eq:ZL_form}, we obtain that
\begin{align}
\int_{-L}^L dx \ol{\psi_2}(x)w_{(\psi_1,\ol{\psi_2})}(x,\lambda) = \int_{-L}^L dx\psi_1(x) w_{(\psi_1,\ol{\psi_2})}^\dagger(x,\lambda) = \ol{\int_{-L}^L dx \ol{\psi_1}(x)w_{(\psi_2,\ol{\psi_1})}(x,\lambda)}.
\end{align}
where the ultimate equality follows by definition of the $\dagger$ superscript. Substituting the asymptotic expansions \eqref{eq:w_asy} for $w_{(\psi_1,\ol{\psi_2})}(x,\lambda)$ and $w_{(\psi_2,\ol{\psi_1})}(x,\lambda)$ into the left-hand and right-hand sides of the preceding equation, respectively, and using the definition \cref{eq:tlIn_def} for $\tl{I}_n(\psi_1,\ol{\psi_2})$ and $\tl{I}_n(\psi_2,\ol{\psi_1})$, the second assertion follows as well.
\end{proof}

\begin{lemma}
\label{lem:FL_form}
For every $(\psi_1,\ol{\psi_2})\in\Sc(\R)^2$ and $\lambda\in\R$ sufficiently large as in \cref{lem:I_invol}, it holds that
\begin{equation}
\label{eq:FL_form}
\begin{split}
\tl{F}_{L}(\psi_1,\ol{\psi_2};\lambda) &= 2\cos\paren*{-\lambda L + \kappa\int_{-L}^L dx \ol{\psi_2}(x)w_{(\psi_1,\ol{\psi_2})}(x,\lambda)},
\end{split}
\end{equation}
where $\tl{F}_L$ is defined in \cref{eq:F_gfunc}.
\end{lemma}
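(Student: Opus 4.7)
The plan is to exploit the gauge decomposition \eqref{eq:TL_form} of the monodromy matrix together with the fact that conjugation leaves the trace invariant. Observe that the matrix $W_{(\psi_1,\ol{\psi_2})}(x,\lambda)$ is $2L$-periodic in $x$, since $\psi_1,\ol{\psi_2}$ are, and $W$ is obtained by solving the Riccati equation \eqref{eq:W_ODE} with the (implicit) periodicity requirement built in through the fixed-point construction on $\T_L$. Consequently $I_{\C^2}+W_{(\psi_1,\ol{\psi_2})}(L,\lambda) = I_{\C^2}+W_{(\psi_1,\ol{\psi_2})}(-L,\lambda)$, so the two outer factors in \eqref{eq:TL_form} are inverses of each other. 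Applying the cyclicity of the trace then yields
\begin{equation*}
\tl{F}_L(\psi_1,\ol{\psi_2};\lambda) = \tr_{\C^2}\bigl(T_{L,(\psi_1,\ol{\psi_2})}(\lambda)\bigr) = \tr_{\C^2}\bigl(\exp Z_{L,(\psi_1,\ol{\psi_2})}(\lambda)\bigr).
\end{equation*}

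Next I would use the explicit formula \eqref{eq:ZL_form} for $Z_{L,(\psi_1,\ol{\psi_2})}(\lambda)$, which expresses it as a diagonal matrix whose entries are
\begin{equation*}
z_+ = -i\lambda L + i\kappa\int_{-L}^L dz\,\ol{\psi_2}(z)w_{(\psi_1,\ol{\psi_2})}(z,\lambda), \qquad z_- = i\lambda L - i\kappa\int_{-L}^L dz\,\psi_1(z)w_{(\psi_1,\ol{\psi_2})}^{\dagger}(z,\lambda).
\end{equation*}
By the first assertion of \cref{lem:I_invol}, we have
\begin{equation*}
\int_{-L}^L dz\,\ol{\psi_2}(z)w_{(\psi_1,\ol{\psi_2})}(z,\lambda) = \int_{-L}^L dz\,\psi_1(z)w_{(\psi_1,\ol{\psi_2})}^{\dagger}(z,\lambda),
\end{equation*}
so $z_- = -z_+$, which says exactly that $Z_{L,(\psi_1,\ol{\psi_2})}(\lambda) = i\theta\,\sigma_3$, where
\begin{equation*}
\theta \coloneqq -\lambda L + \kappa\int_{-L}^L dz\,\ol{\psi_2}(z)w_{(\psi_1,\ol{\psi_2})}(z,\lambda).
\end{equation*}

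Since $\sigma_3^2 = I_{\C^2}$, the matrix exponential is $\exp(i\theta\sigma_3) = \cos(\theta)\,I_{\C^2} + i\sin(\theta)\,\sigma_3$, whose trace equals $2\cos\theta$. Inserting this into the identity derived in the first paragraph gives the claimed formula \eqref{eq:FL_form}. The verification is essentially algebraic and modular: the only nontrivial input is the symmetry relation from \cref{lem:I_invol} ensuring that $Z_{L,(\psi_1,\ol{\psi_2})}(\lambda)$ is a scalar multiple of $\sigma_3$. I expect no serious obstacles; the periodicity of $W(\cdot,\lambda)$, which follows from the periodicity of $U_{0,(\psi_1,\ol{\psi_2})}$ and the well-posedness of the Riccati equation \eqref{eq:W_ODE} on $\T_L$ for sufficiently large real $\lambda$, is the one point worth verifying carefully if it has not been justified earlier in the construction of $W$.
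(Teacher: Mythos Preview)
Your proposal is correct and follows essentially the same route as the paper: reduce $\tr_{\C^2} T_L$ to $\tr_{\C^2}\exp Z_L$ via the $2L$-periodicity of $W$ and cyclicity of the trace, then use \eqref{eq:ZL_form} together with \cref{lem:I_invol} to see that the diagonal entries of $Z_L$ are $\pm i\theta$, and finally compute the trace of the exponential. The only cosmetic difference is that the paper writes the trace directly as $e^{i\theta}+e^{-i\theta}=2\cos\theta$, whereas you package the same computation as $\exp(i\theta\sigma_3)=\cos\theta\,I_{\C^2}+i\sin\theta\,\sigma_3$; both are equivalent.
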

\begin{proof}
Since the trace is invariant under unitary transformation and $W_{(\psi_1,\ol{\psi_2})}$ is $2L$-periodic, we have that
\begin{equation}
\tl{F}_L(\psi_1,\ol{\psi_2};\lambda) = \tr_{\C^2} T_{L,(\psi_1,\ol{\psi_2})}(\lambda) = \tr_{\C^2}\exp\paren*{Z_{L,(\psi_1,\ol{\psi_2})}(\lambda)},
\end{equation}
so we have reduced to considering the right-hand side expression.

Using that $Z_{L,(\psi_1,\ol{\psi_2})}(\lambda)$ is diagonal and applying formula \eqref{eq:ZL_form} and \cref{lem:I_invol}, we find that
\begin{align}
\label{eq:Z_form}
Z_{L,(\psi_1,\ol{\psi_2})}(\lambda) &=\begin{pmatrix}
-i\lambda L + i\kappa\int_{-L}^L dx \ol{\psi_2}(x)w_{(\psi_1,\ol{\psi_2})}(x,\lambda) & 0\\
0 & i\lambda L - i\kappa \int_{-L}^L dx \ol{\psi_2}(x) w_{(\psi_1,\ol{\psi_2})}(x,\lambda))
\end{pmatrix},
\end{align}
it follows that the exponential of $Z_{L,(\psi_1,\ol{\psi_2})}(\lambda)$ is the diagonal matrix with the entries given by the exponential of the entries of $Z_L(\lambda)$. Using the elementary trigonometric identity
\begin{equation}
e^{iz} + e^{-iz} = 2\cos(z), \qquad z\in\C,
\end{equation}
we then obtain that
\begin{equation}
\begin{split}
\tr_{\C^2} \exp\paren*{Z_{L,(\psi_1,\ol{\psi_2})}(\lambda)} &= 2\cos\paren*{-\lambda L + \kappa\int_{-L}^L dx \ol{\psi_2}(x)w_{(\psi_1,\ol{\psi_2})}(x,\lambda)},
\end{split}
\end{equation}
which completes the proof of the lemma.
\end{proof}

\begin{remark}
By \cref{lem:tm_invol}, we have the involution relation
\begin{equation}
\tr_{\C^2} T_{L,(\psi_1,\ol{\psi_2})}(\lambda) = \tr_{\C^2}\paren*{\sigma T_{L,(\psi_1,\ol{\psi_2})}^\dagger(\bar{\lambda}) \sigma} =  \tr_{\C^2} T_{L,(\psi_1,\ol{\psi_2})}^\dagger(\bar{\lambda}) = \ol{\tr_{\C^2}\paren*{T_{L,(\psi_2,\ol{\psi_1})}(\bar{\lambda})}},
\end{equation}
where we use the cyclicity of trace and $\sigma^2=I_{\C^2}$ to obtain the penultimate equality. Consequently, we have that
\begin{equation}
\tl{F}_L(\psi_1,\ol{\psi_2};\lambda) = \ol{\tl{F}_L(\psi_2,\ol{\psi_1};\bar{\lambda})}.
\end{equation}
Consequently, if we take twice the real part of $\tl{F}_L(\psi_1,\ol{\psi_2};\lambda)$,
\begin{equation}
\label{eq:FL_Re}
F_{L,\Re}(\psi_1,\ol{\psi_2};\lambda) \coloneqq 2\Re{\tl{F}_L(\psi_1,\ol{\psi_2};\lambda)}, \qquad \forall (\psi_1,\ol{\psi_2},\lambda)\in C^\infty(\T_L)^2\times \C,
\end{equation}
then we obtain from \eqref{eq:FL_form} that
\begin{equation}
\label{eq:FL_Re_form}
\begin{split}
F_{L,\Re}(\psi_1,\ol{\psi_2};\lambda) &= 2\cos\paren*{-\lambda L+\kappa\int_{-L}^L dx \ol{\psi_2}(x)w_{(\psi_1,\ol{\psi_2})}(x,\lambda)} \\
&\phantom{=}+ 2\cos\paren*{-\ol{\lambda} L+\kappa\int_{-L}^L dx\ol{\psi_1}(x)w_{(\psi_2,\ol{\psi_1})}(x,\bar{\lambda})}.
\end{split}
\end{equation}
Similarly, if we take twice the imaginary part of $\tl{F}_{L}(\psi_1,\ol{\psi_2};\lambda)$,
\begin{equation}
\label{eq:FL_Im}
F_{L,\Im}(\psi_1,\ol{\psi_2};\lambda) \coloneqq 2\Im{\tl{F}_L(\psi_1,\ol{\psi_2})},
\end{equation}
then we have that
\begin{equation}
\label{eq:FL_Im_form}
\begin{split}
F_{L,\Im}(\psi_1,\ol{\psi_2};\lambda) &= -i \left(2\cos\paren*{-\lambda L+\kappa\int_{-L}^L dx\ol{\psi_2}(x)w_{(\psi_1,\ol{\psi_2})}(x,\lambda)} \right.\\
&\phantom{=}\hspace{15mm} \left.-2\cos\paren*{-\ol{\lambda} L+\kappa\int_{-L}^L dx\ol{\psi_1}(x)w_{(\psi_2,\ol{\psi_1})}(x,\ol{\lambda})}\right).
\end{split}
\end{equation}
Moreover, we can regard $F_{L,\Re}(\cdot,\cdot;\lambda)$ and $F_{L,\Im}(\cdot,\cdot;\lambda)$, respectively, as restrictions of the complex functionals of four variables to the subspace $\psi_{\bar{1}} =\ol{\psi_1}, \psi_{\bar{2}}=\ol{\psi_2}$. More precisely, for fixed $\lambda\in\C$, define complex-valued functionals on $C^\infty(\T_L)^4$ by
\begin{equation}
\begin{split}
\tl{F}_{L,\Re}(\psi_1, \psi_{\bar{2}}, \psi_{2},\psi_{\bar{1}};\lambda) &\coloneqq \tl{F}_{L}(\psi_1,\psi_{\bar{2}};\lambda) + \tl{F}_L(\psi_2,\psi_{\bar{1}};\ol{\lambda}), \\
\tl{F}_{L,\Im}(\psi_1, \psi_{\bar{2}}, \psi_{2}, \psi_{\bar{1}};\lambda) &\coloneqq -i\paren*{\tl{F}_L(\psi_1,\psi_{\bar{2}};\lambda) - \tl{F}_L(\psi_2,\psi_{\bar{1}};\ol{\lambda})},
\end{split}
\end{equation}
so that
\begin{equation}
\label{eq:FL_restric}
\begin{split}
F_{L,\Re}(\psi_1,\psi_{\bar{2}};\lambda) &= \tl{F}_{L,\Re}(\psi_1,\ol{\psi_2},\psi_2,\ol{\psi_1};\lambda)\\
F_{L,\Im}(\psi_1,\psi_{\bar{2}};\lambda) &= \tl{F}_{L,\Im}(\psi_1,\ol{\psi_2},\psi_2,\ol{\psi_1};\lambda).
\end{split}
\end{equation}
Consequently, $F_{L,\Re}(\lambda)$ and $F_{L,\Im}(\lambda)$ extend with an abuse of notation to well-defined smooth functionals on the space $C^\infty(\T_L;\mathcal{V})$ (recall the space of matrices $\V$ in \eqref{equ:mixed}) given by
\begin{equation}
\begin{cases} F_{L,\Re}(\gamma;\lambda) \coloneqq F_{L,\Re}(\phi_1,\ol{\phi_2};\lambda), \\ F_{L,\Im}(\gamma;\lambda) \coloneqq F_{L,\Im}(\phi_1,\ol{\phi_2};\lambda)\end{cases}, \qquad \forall \gamma=\frac{1}{2}\mathrm{odiag}(\phi_1,\ol{\phi_2},\phi_2,\ol{\phi_1}),
\end{equation}
which belong to the admissible algebra $\A_{\Sc,\V}$, provided that $\tl{F}_{L}\in\A_{\Sc,\C}$, a result we postpone until the next subsection. By the same reasoning, the functionals
\begin{equation}
\label{eq:Ibn_def}
\begin{split}
I_{b,n}(\gamma) &\coloneqq \frac{1}{2}\paren*{\tl{I}_n(\phi_1,\ol{\phi_2}) + \tl{I}_n(\phi_2,\ol{\phi_1})} \\
&=\frac{1}{2}\int_{-L}^L dx\paren*{\ol{\phi_2}(x)w_{n,(\phi_1,\ol{\phi_2})}(x)+\ol{\phi_1}(x)w_{n,(\phi_2,\ol{\phi_1})}(x)}, \quad \forall \gamma=\frac{1}{2}\mathrm{odiag}(\phi_1,\ol{\phi_2},\phi_2,\ol{\phi_1}),
\end{split}
\end{equation}
where the subscript $b$ is to denote the dependence on two inputs, extend to smooth functionals on $C^\infty(\T_L;\V)$ which belong to $\A_{\Sc,\V}$. This latter admissibility can be verified using the results of \cref{ssec:cor_symgrad}. Note that by \cref{lem:I_invol}, the functionals $I_{b,n}$ are real-valued.
\end{remark}

\subsection{Poisson commutativity}
\label{ssec:nls_pc}
In this last subsection of the appendix, we show that the functionals $I_{b,n}$ defined in \eqref{eq:Ibn_def} are in involution with respect to the Poisson bracket $\pb{\cdot}{\cdot}_{L^2,\mathcal{V}}$ defined in \cref{prop:Schw_WP_V}. We obtain this result by first showing that the generating functionals $\tl{F}_{L}(\lambda),\tl{F}_L(\mu)$, for $\lambda,\mu\in\C$, are in involution with respect to the Poisson bracket $\pb{\cdot}{\cdot}_{L^2,\C}$. The reader will recall that the $\tl{F}_L$ was defined in \eqref{eq:F_gfunc} above.

Given two complex-valued functionals $F,G$ on $C^\infty(\T_L)^2$ satisfying the conditions of \cref{rem:pb_vd_C}, we recall their Poisson bracket is defined by
\begin{equation}
\label{eq:pb_L2C_rev}
\pb{F}{G}_{L^2,\C}(\psi_1,\psi_2) = -i\int_{-L}^L dx\paren*{\grad_1 F(\psi_1,\psi_2)\grad_{\bar{2}} G(\psi_1,\psi_2) - \grad_{\bar{2}} F(\psi_1,\psi_2)\grad_1 G(\psi_1,\psi_2)}(x),
\end{equation}
where $\grad_1$ and $\grad_{\bar{2}}$ denote the variational derivatives defined in \eqref{eq:vd_prop}. Now let $A$ and $B$ be two complex-matrix-valued functionals on $C^\infty(\T_L)^2$. We introduce the notation
\begin{equation}
\{A\tens B\}_{L^2,\C}(\psi_1,\psi_2) \coloneqq -i\int_{-L}^Ldx \paren*{\grad_1 A(\psi_1,\psi_2)\otimes \grad_{\bar{2}} B(\psi_1,\psi_2) - \grad_{\bar{2}} A(\psi_1,\psi_2)\otimes\grad_1 B(\psi_1,\psi_2)}(x),
\end{equation}
where our identification of the tensor product is the $4\times 4$ matrix
\begin{equation}
(A\otimes B)_{jk,mn} = A_{jm}B_{kn}, \qquad j,m,k,n\in\{1,2\},
\end{equation}
so that
\begin{equation}
{\{A\tens B\}_{L^2,\C}}_{jk,mn} = \pb{A_{jm}}{B_{kn}}_{L^2,\C}.
\end{equation}

\begin{remark}
An observation important for our identities in the sequel is that the notation $\{\tens\}$ admits an obvious extension to general $n\times n$ matrices.
\end{remark}

The reader may check that the above tensor Poisson bracket notation has the following properties:
\begin{description}
\item[Skew-symmetry]
\begin{equation}
\{A\tens B\}_{L^2,\C} = -P\{B\tens A\}_{L^2,\C}P,
\end{equation}
where $P$ is the permutation matrix in $\C^2\otimes\C^2$ defined by $P(\xi\otimes \eta) = \eta\otimes\xi$, for $\xi,\eta\in\C^2$.
\item[Leibnitz rule]
\begin{equation}
\{A\tens BC\}_{L^2,\C} = \{A\tens B\}_{L^2,\C}(I_{\C^2}\otimes C) + (I_{\C^2}\otimes B)\{A\tens C\}_{L^2,\C},
\end{equation}
\item[Jacobi identity]
\begin{equation}
\begin{split}
0&=\{A\tens\{B\tens C\}_{L^2,\C}\}_{L^2,\C} + P_{13}P_{23}\{C\tens\{A\tens B\}_{L^2,\C}\}_{L^2,\C}P_{23}P_{13} \\
&\phantom{=} + P_{13}P_{12}\{B\tens\{C\tens A\}_{L^2,\C}\}_{L^2,\C}P_{12}P_{13},
\end{split}
\end{equation}
where $P_{ij}$ is the permutation matrix in $(\C^2)^{\otimes 3}$ which swaps the $i^{th}$ and $j^{th}$ element of a tensor $\xi_1\otimes\xi_2\otimes\xi_3$, for $i,j\in\{1,2,3\}$.
\end{description}

\begin{remark}
The reader can also check that $P$ is idempotent (i.e. $P^2=I_{\C^2}$) and $P(A\otimes B)=(B\otimes A)P$, for any $2\times 2$ matrices $A,B$.
\end{remark}

With the above notation in hand, we proceed to compute Poisson brackets. Let us consider $U_{(\psi_1,\psi_2)}(z,\lambda)$ from \eqref{eq:U_def} as a functional of $(\psi_1,\psi_2)$, for fixed $(z,\lambda)$. For the reader's benefit, we recall that
\begin{equation}
\label{eq:U_recall}
U_{(\psi_1,\psi_2)}(x,\lambda) = \frac{\lambda}{2i}\sigma_3 + U_0(x)=\frac{\lambda}{2i}\sigma_3 + \sqrt{\kappa}\paren*{\psi_2(x)\sigma_{+}+\psi_1(x)\sigma_{-}},
\end{equation}
where $U_0(x)$ is defined in \eqref{eq:U_def}. The first objective is to prove the following lemma which gives the so-called \emph{fundamental Poisson brackets}.

\begin{lemma}[Fundamental Poisson brackets]
\label{lem:fund_PB}
For any $(\lambda,\mu)\in\C^2$, we have the distributional (on $\T_L^2$) identity
\begin{equation}
\{U(x,\lambda)\tens U(y,\mu)\}_{L^2,\C}= -\comm{r(\lambda-\mu)}{U(x,\lambda)\otimes I_{\C^2} + I_{\C^2}\otimes U(y,\mu)}\delta(x-y),
\end{equation}
where $r(\lambda-\mu)\coloneqq -\frac{\kappa}{(\lambda-\mu)}P$.\footnote{This matrix $r$ is called an \emph{$r$-matrix} in the integrable systems literature and is a central object in the study of such systems.}
\end{lemma}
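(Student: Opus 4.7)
The plan is to prove the fundamental Poisson bracket identity by a direct computation of both sides and a matching step based on the permutation properties of $P$. Since $U(x,\lambda)= \tfrac{\lambda}{2i}\sigma_3+\sqrt{\kappa}\bigl(\psi_2(x)\sigma_+ + \psi_1(x)\sigma_-\bigr)$, the purely $\lambda$-dependent piece $\tfrac{\lambda}{2i}\sigma_3$ is independent of $(\psi_1,\psi_2)$ and contributes nothing to the variational derivatives. I would first record the (obvious) identities
\begin{equation*}
\grad_1 U(x,\lambda)(z)=\sqrt{\kappa}\,\sigma_-\,\delta(z-x),\qquad \grad_{\bar 2}U(x,\lambda)(z)=\sqrt{\kappa}\,\sigma_+\,\delta(z-x),
\end{equation*}
which should follow directly from the definition \eqref{eq:vd_prop_V} once one remembers that in our convention $\psi_2$ plays the role of the $\bar 2$-variable.

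Next, I would substitute these into the definition of $\{\,\cdot\,\tens\,\cdot\,\}_{L^2,\C}$. Because both variational derivatives are proportional to a delta function in the integration variable $z$, the $z$-integration collapses two delta functions into a single $\delta(x-y)$, giving
\begin{equation*}
\{U(x,\lambda)\tens U(y,\mu)\}_{L^2,\C}=-i\kappa\bigl(\sigma_-\otimes\sigma_+-\sigma_+\otimes\sigma_-\bigr)\,\delta(x-y).
\end{equation*}
Notice that the right-hand side is independent of $\lambda,\mu$, which explains the ``rigid'' combinatorial form of the claimed $r$-matrix identity.

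The remaining task is to show that the right-hand side of the lemma reproduces this expression. Using $P(A\otimes I)P = I\otimes A$ (and its mirror), the commutator $[P,\,U(x,\lambda)\otimes I+I\otimes U(y,\mu)]$ simplifies to $\bigl(I\otimes U(x,\lambda)-U(x,\lambda)\otimes I + U(y,\mu)\otimes I - I\otimes U(y,\mu)\bigr)P$ when paired with $\delta(x-y)$ (so that we may evaluate at $x=y$). The $\psi_1,\psi_2$-dependent parts of $U$ cancel pairwise, leaving only the diagonal contribution
\begin{equation*}
[P,\,U(x,\lambda)\otimes I+I\otimes U(y,\mu)]\delta(x-y)=\frac{\lambda-\mu}{2i}\bigl(I\otimes\sigma_3-\sigma_3\otimes I\bigr)P\,\delta(x-y),
\end{equation*}
so the factor $(\lambda-\mu)$ cancels the denominator in $r(\lambda-\mu)=-\kappa P/(\lambda-\mu)$ and the right-hand side of the lemma becomes $\tfrac{\kappa}{2i}(I\otimes\sigma_3-\sigma_3\otimes I)P\,\delta(x-y)$.

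The final step is a pure linear algebra verification that
\begin{equation*}
\tfrac{1}{2}(I\otimes\sigma_3-\sigma_3\otimes I)P=\sigma_-\otimes\sigma_+-\sigma_+\otimes\sigma_-,
\end{equation*}
which can be done on the standard basis $\{e_i\otimes e_j\}$ of $\C^2\otimes\C^2$ in a few lines using $\sigma_+e_2=e_1$, $\sigma_-e_1=e_2$, $\sigma_3 e_1=e_1$, $\sigma_3 e_2=-e_2$. Comparing this with the expression already obtained for the left-hand side completes the proof. The only subtle point is that one must be careful to track where $P\delta(x-y)$ acts; the main obstacle to a slicker proof is essentially bookkeeping, since the identity is equivalent to the classical $r$-matrix computation for the $\mathfrak{su}(2)$-NLS system and has no real content beyond the algebraic cancellations just described.
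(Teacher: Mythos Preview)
Your proposal is correct and follows essentially the same route as the paper: compute the variational derivatives of $U$, collapse the $z$-integral to obtain $-i\kappa(\sigma_-\otimes\sigma_+-\sigma_+\otimes\sigma_-)\delta(x-y)$, and then match this with the commutator on the right using the permutation identity $P(A\otimes I)P=I\otimes A$ together with the fact that the $U_0$-contributions cancel on the support of $\delta(x-y)$. The only cosmetic difference is that the paper phrases the final algebraic step as $\sigma_-\otimes\sigma_+-\sigma_+\otimes\sigma_-=\tfrac12[P,\sigma_3\otimes I]$ and inserts $\lambda,\mu$ by writing $1=\tfrac{\lambda}{\lambda-\mu}-\tfrac{\mu}{\lambda-\mu}$, whereas you compute the right-hand side directly and cancel the $(\lambda-\mu)$ factor; these are the same identity in different packaging.
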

\begin{proof}
We recall the (classical) canonical commutation relations
\begin{equation}
\pb{\psi_1(x)}{\psi_1(y)}_{L^2,\C}=\pb{\psi_2(x)}{\psi_2(y)}_{L^2,\C}=0, \quad \pb{\psi_1(x)}{\psi_2(y)}_{L^2,\C}=-i\delta(x-y),
\end{equation}
which should be interpreted in the sense of tempered distributions on $\T_L^2$. It then follows from \eqref{eq:U_recall} that
\begin{equation}
(\grad_1 U(x,\lambda))(\psi_1,\psi_2) = \sqrt{\kappa}\sigma_{-}\delta_x, \quad (\grad_{\bar{2}} U(x,\lambda))(\psi_1,\psi_2) = \sqrt{\kappa}\sigma_{+}\delta_x,
\end{equation}
where $\delta_x$ is the Dirac mass centered at the point $x$. Hence,
\begin{align}
&\{U(x,\lambda)\tens U(y,\mu)\}_{L^2,\C}(\psi_1,\psi_2) \nonumber\\
&= -i\int_{-L}^L dz \paren*{(\grad_1 U(x,\lambda))(\psi_1,\psi_2) (\grad_{\bar{2}} U(y,\mu))(\psi_1,\psi_2)-(\grad_{\bar{2}} U(x,\lambda))(\psi_1,\psi_2)(\grad_1 U(y,\mu))(\psi_1,\psi_2)}(z) \nonumber\\
&=-i\kappa\int_{-L}^L dz \delta(z-x)\delta(z-y)\paren*{\sigma_{-}\otimes\sigma_{+} - \sigma_{+}\otimes\sigma_{-}} \nonumber\\
&=-i\kappa\delta(x-y)\paren*{\sigma_{-}\otimes\sigma_{+} - \sigma_{+}\otimes\sigma_{-}} \nonumber .
\end{align}
One can check from the commutation relations for the Pauli matrices defined in \eqref{eq:Pauli} that
\begin{equation}
\sigma_{-}\otimes\sigma_{+}-\sigma_{+}\otimes\sigma_{-} = \frac{1}{2}\comm{P}{\sigma_3\otimes I_{\C^2}}=-\frac{1}{2}\comm{P}{I_{\C^2}\otimes\sigma_3}.
\end{equation}
Therefore,
\begin{align}
i\kappa\paren*{\sigma_{-}\otimes\sigma_{+} - \sigma_{+}\otimes\sigma_{-}} &= \frac{i\kappa\lambda}{\lambda-\mu}\paren*{\sigma_{-}\otimes\sigma_{+} - \sigma_{+}\otimes\sigma_{-}}  -\frac{i\kappa\mu}{\lambda-\mu}\paren*{\sigma_{-}\otimes\sigma_{+} - \sigma_{+}\otimes\sigma_{-}} \nonumber\\
&=-\frac{\kappa}{\lambda-\mu}\paren*{\frac{\lambda}{2i}\comm{P}{\sigma_3\otimes I_{\C^2}} +\frac{\mu}{2i}\comm{P}{I_{\C^2}\otimes\sigma_3}}.
\end{align}
Now recalling the definition of $U(x,\lambda)$ in \eqref{eq:U_recall} and that $P$ commutes with the tensor $U_0(x)\otimes I_{\C^2} + I_{\C^2}\otimes U_0(x)$ by the symmetry of the latter, we obtain the desired conclusion.
\end{proof}

The importance of the fundamental Poisson brackets is that they yield a formula for the Poisson brackets between the entries of the transition matrices $T(x,y,\lambda)$ and $T(x,y,\mu)$, regarded as matrix-valued functionals, as the next lemma shows.

\begin{lemma}
\label{lem:tm_pb}
For fixed $-L<y<x<L$ and $(\lambda,\mu)\in\C^2$, regard $T(x,y,\lambda)$ as the $\C^2\otimes\C^2$-matrix valued functional $C^\infty(\T_L)^2$ defined by $(\psi_1,\psi_2) \mapsto T_{(\psi_1,\psi_2)}(x,y,\lambda)$ and similarly for $T(x,y,\mu)$. Then it holds that
\begin{equation}
\{T(x,y,\lambda)\tens T(x,y,\mu)\}_{L^2,\C} = -\comm{r(\lambda-\mu)}{T(x,y,\lambda)\otimes T(x,y,\mu)}.
\end{equation}
\end{lemma}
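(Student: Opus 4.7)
\bigskip

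\noindent\textbf{Proof proposal.} The plan is to show that both sides of the claimed identity, viewed as $\C^2\otimes\C^2$-matrix-valued functions of $x$ on $(y,L)$ with $y$ fixed, solve the same linear matrix Cauchy problem, and then invoke uniqueness. Define
\[
V(x) \coloneqq \{T(x,y,\lambda)\tens T(x,y,\mu)\}_{L^2,\C}, \qquad W(x) \coloneqq -\comm{r(\lambda-\mu)}{T(x,y,\lambda)\otimes T(x,y,\mu)}.
\]
First I would check the initial condition at $x=y$: by \eqref{eq:tm_ODE_x}, $T(y,y,\cdot)=I_{\C^2}$ is a constant (i.e., functional-independent) matrix, hence all its variational derivatives vanish, giving $V(y)=0$; and $W(y)=-[r,I_{\C^2}\otimes I_{\C^2}]=0$. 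Thus it suffices to verify that $V$ and $W$ satisfy the same first-order linear ODE in $x$.

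The ODE for $W$ is immediate from the defining equation $\p_x T=U\cdot T$ and the Leibnitz rule for $\otimes$: setting $\mathcal{A}(x)\coloneqq U(x,\lambda)\otimes I_{\C^2} + I_{\C^2}\otimes U(x,\mu)$, one has
\[
\p_x W(x) = -\comm{r(\lambda-\mu)}{\mathcal{A}(x)\bigl(T(x,y,\lambda)\otimes T(x,y,\mu)\bigr)}.
\]
For $V$ I would combine $\p_x T=UT$ with the Leibnitz rule for $\{\cdot\tens\cdot\}_{L^2,\C}$ to obtain
\[
\p_x V(x) = \{U(x,\lambda)\tens T(x,y,\mu)\}_{L^2,\C}\,(T(x,y,\lambda)\otimes I_{\C^2}) + (I_{\C^2}\otimes U(x,\mu))\,\{T(x,y,\lambda)\tens U(x,\mu)\}_{L^2,\C} + \mathcal{A}(x)\,V(x),
\]
modulo rearrangement of the surviving mixed terms (symmetric in the roles of $\lambda$ and $\mu$).

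The key step, and the one that will do the real work, is to evaluate the mixed brackets $\{U(x,\lambda)\tens T(x,y,\mu)\}_{L^2,\C}$ and $\{T(x,y,\lambda)\tens U(x,\mu)\}_{L^2,\C}$ using \cref{lem:fund_PB}. Here I would expand $T(x,y,\mu)$ via the path-ordered exponential \eqref{eq:tm} (or equivalently its Volterra integral equation) and apply the fundamental Poisson bracket termwise. Each term contributes a factor $\delta(x-z)$ for $z\in[y,x]$; crucially, because we have fixed $x$ strictly interior to $[y,L)$ and the delta is supported at the upper endpoint $z=x$, only the boundary contribution survives, producing the local identity
\[
\{U(x,\lambda)\tens T(x,y,\mu)\}_{L^2,\C} = -\comm{r(\lambda-\mu)}{U(x,\lambda)\otimes I_{\C^2} + I_{\C^2}\otimes U(x,\mu)}\bigl(I_{\C^2}\otimes T(x,y,\mu)\bigr),
\]
with the analogous formula for the symmetric counterpart. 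Substituting back and using that $r(\lambda-\mu)$ commutes with $T(x,y,\lambda)\otimes I_{\C^2}$ up to the very same $\mathcal{A}$-commutator, a direct algebraic reshuffling yields
\[
\p_x V(x) = \mathcal{A}(x)V(x) - \comm{r(\lambda-\mu)}{\mathcal{A}(x)\bigl(T(x,y,\lambda)\otimes T(x,y,\mu)\bigr)},
\]
which equals $\mathcal{A}(x)V(x) + \p_x W(x) - \mathcal{A}(x)W(x)$, i.e., the quantity $V-W$ solves the homogeneous linear ODE $\p_x(V-W)=\mathcal{A}(x)(V-W)$ with vanishing initial data; uniqueness gives $V\equiv W$ on $(y,L)$.

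The main obstacle I anticipate is handling the delta function at the boundary of the integration range when evaluating the mixed brackets: strictly speaking one must give a precise meaning to $\int_y^x \delta(x-z)\,f(z)\,dz$ via a one-sided limit, and verify that the ``diagonal'' coincidences $z=x$ and $z=y$ contribute correctly (the latter being killed by $T(y,y,\mu)=I_{\C^2}$). Once this is handled carefully---for instance by approximating $T$ by its finite Picard iterates and passing to the limit in the sense of tempered distributions---the algebraic identities fall into place, and the ODE uniqueness argument above completes the proof.
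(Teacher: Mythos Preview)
Your approach is genuinely different from the paper's, and it has a concrete gap at exactly the place you flag as the ``main obstacle.'' The paper does not set up an ODE in $x$ and invoke uniqueness. Instead it computes the variational derivative of $T(x,y,\lambda)$ with respect to the entries of $U$ directly, obtaining $(\grad_{U^{jk}(\lambda)}T^{ab}(x,y,\lambda))(z)=T^{aj}(x,z,\lambda)T^{kb}(z,y,\lambda)$ for $z\in(y,x)$; this turns the Poisson bracket into a \emph{double} integral $\int_y^x\int_y^x dz\,dz'$ against $\{U(z,\lambda)\tens U(z',\mu)\}$, which \cref{lem:fund_PB} collapses via $\delta(z-z')$ to a single integral with $z$ strictly interior. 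The integrand is then recognized as the $z$-derivative of $(T(x,z,\lambda)\otimes T(x,z,\mu))\,r(\lambda-\mu)\,(T(z,y,\lambda)\otimes T(z,y,\mu))$, and the fundamental theorem of calculus finishes. No delta ever sits on an endpoint.

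In your ODE route, the endpoint issue is not merely a technicality---it determines the numerical constant. With the Leibnitz expansion $\p_xV=\mathcal A V+\{U_\lambda\tens T_\mu\}(T_\lambda\otimes I)+\{T_\lambda\tens U_\mu\}(I\otimes T_\mu)$ (your displayed formula misplaces the $(I\otimes U_\mu)$ factor, but this is the correct form), the mixed brackets reduce to $\int_y^x\delta(x-z)f(z)\,dz$. If you take the full value $f(x)$, as your stated identity $\{U(x,\lambda)\tens T(x,y,\mu)\}=-[r,\mathcal A](I\otimes T_\mu)$ does, then the two mixed terms together contribute $-2[r,\mathcal A](T_\lambda\otimes T_\mu)$, whereas $\p_xW-\mathcal A W=-[r,\mathcal A](T_\lambda\otimes T_\mu)$. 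The homogeneous ODE for $V-W$ then fails by a factor of two. Your formula for the mixed bracket is only correct with the symmetric convention $\int_y^x\delta(x-z)f(z)\,dz=\tfrac12 f(x)$, and that is precisely what must be justified; the vague appeal to Picard iterates does not select this value without further argument. The paper's integral-plus-FTC proof avoids the whole question by keeping the delta in the interior.
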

\begin{proof}
We use the differential equations \eqref{eq:tm_ODE_x} and \eqref{eq:tm_ODE_y} for the transition matrix in order to prove the lemma. Since the $(a,b)$ entry of the matrix-valued functional $T(x,y,\lambda)$ depends on $(\psi_1,\psi_2)$ through the entries of the matrix-valued functional $U(z,\lambda)$ it follows from the definition of the Poisson bracket $\pb{\cdot}{\cdot}_{L^2,\C}$ reviewed in \eqref{eq:pb_L2C_rev} and the chain rule that
\begin{equation}
\label{eq:pb_chain}
\begin{split}
&\pb{T^{ab}(x,y,\lambda)}{T^{cd}(x,y,\mu)}_{L^2,\C}(\psi_1,\psi_2)\\
&=\int_{y}^{x}\int_{y}^x dzdz' (\grad_{U^{jk}(\lambda)}T^{ab}(x,y,\lambda)(\psi_1,\psi_2))(z)\pb{U^{jk}(z,\lambda)}{U^{\ell m}(z',\mu)}_{L^2,\C}(\psi_1,\psi_2) \\
&\phantom{=} \hspace{35mm}\times (\grad_{U^{\ell m}(\mu)} T^{cd}(x,y,\mu)(\psi_1,\psi_2))(z'),
\end{split}
\end{equation}
where $\grad_{U^{jk}(\lambda)}T^{ab}(x,y,\lambda)$ and $\grad_{U^{\ell m}(\mu)}T^{cd}(x,y,\mu)$ are the variational derivatives uniquely defined by (a priori in the sense of distributions)
\begin{equation}
\label{eq:Tab_vd}
\begin{split}
dT^{ab}(x,y,\lambda)[\psi_1,\psi_2](\delta U^{jk}(\lambda)) &= \int_{-L}^Ldz (\grad_{U^{jk}(\lambda)}T^{ab}(x,y,\lambda)(\psi_1,\psi_2))(z)\delta U^{jk}(z,\lambda), \\
dT^{cd}(x,y,\mu)[\psi_1,\psi_2](\delta U^{\ell m}(\mu)) &= \int_{-L}^L dz (\grad_{U^{\ell m}(\mu)}T^{cd}(x,y,\mu)(\psi_1,\psi_2))(z')\delta U^{\ell m}(z',\mu).
\end{split}
\end{equation}
In \eqref{eq:pb_chain}, we use the convention of Einstein summation, so the summation over repeated indices is implicit.

We now seek a formula for $\grad_{U^{jk}(\lambda)}T^{ab}(x,y,\lambda)$ and $\grad_{U^{\ell m}(\mu)}T^{cd}(x,y,\mu)$. To find such a formula, we take the G\^ateaux derivative of both sides of \eqref{eq:tm_ODE_x} at the point $U(\cdot,\lambda)$ in the direction $\delta U(\cdot,\lambda)$ to obtain the equation
\begin{equation}
\label{eq:T_tl_ODE}
\begin{cases}
\p_x dT(x,y,\lambda)[U(\cdot,\lambda)](\delta U(\cdot,\lambda))=U(x,\lambda)dT(x,y,\lambda)[U(\cdot,\lambda)](\delta U(\cdot,\lambda)) + \delta U(x,\lambda) T(x,y,\lambda), \\
dT(x,y,\lambda)[U(\cdot,\lambda)](\delta U(\cdot,\lambda))|_{x=y}=I_{\C^2}.
\end{cases}
\end{equation}
The reader can check by direct computation that the solution to this equation is given by
\begin{equation}
\label{eq:dT_U}
dT(x,y,\lambda)[U(\cdot,\lambda)](\delta U(\cdot,\lambda)) = \int_y^x dz T(x,y,\lambda)\delta U(z,\lambda) T(z,y,\lambda).
\end{equation}
Examining identity \eqref{eq:dT_U} entry-wise, we have that
\begin{equation}
\begin{split}
dT^{ab}(x,y,\lambda)[U(\cdot,\lambda)](\delta U(\cdot,\lambda)) &= \int_y^x dz T^{a j}(x,y,\lambda) \delta U^{j k}(z,\lambda) T^{k b}(z,y,\lambda),\\
dT^{cd}(x,y,\mu)[U(\cdot,\lambda)](\delta U(\cdot,\lambda)) &= \int_y^x dz T^{c \ell}(x,y,\mu) \delta U^{\ell m}(z',\mu) T^{md}(z',y,\mu),
\end{split}
\end{equation}
which upon comparison with \eqref{eq:Tab_vd} yields the identity
\begin{equation}
\label{eq:grad_U_T}
\begin{split}
(\grad_{U^{jk}(\lambda)} T^{ab}(x,y,\lambda)(\psi_1,\psi_2))(z) = \begin{cases} T_{(\psi_1,\psi_2)}^{aj}(x,y,\lambda) T_{(\psi_1,\psi_2)}^{kb}(z,y,\lambda), & -L<y<z<x<L\\ 0, & {\text{otherwise}}\end{cases},\\
(\grad_{U^{\ell m}(\lambda)} T^{cd}(x,y,\mu)(\psi_1,\psi_2))(z') = \begin{cases} T_{(\psi_1,\psi_2)}^{c\ell}(x,y,\mu) T_{(\psi_1,\psi_2)}^{md}(z',y,\mu), & -L<y<z'<x<L\\ 0, & {\text{otherwise}}\end{cases}
\end{split}.
\end{equation}
Substituting the identity \eqref{eq:grad_U_T} into \eqref{eq:pb_chain}, we find that
\begin{equation}
\begin{split}
&\{T(x,y,\lambda)\tens T(x,y,\mu)\}_{L^2,\C}(\psi_1,\psi_2) \\
&=\int_y^x\int_y^x dzdz' \big(T_{(\psi_1,\psi_2)}(x,z,\lambda)\otimes T_{(\psi_1,\psi_2)}(x,z',\mu)\big)\{U(z,\lambda)\tens U(z',\mu)\}_{L^2,\C}(\psi_1,\psi_2) \\
&\phantom{=}\hspace{35mm}\times \big(T_{(\psi_1,\psi_2)}(z,y,\lambda)\tens T_{(\psi_1,\psi_2)}(z',y,\mu)\big).
\end{split}
\end{equation}
Using the formula given by \cref{lem:fund_PB}, we obtain that the right-hand equals
\begin{equation}
\begin{split}
&-\int_y^x dz \big(T_{(\psi_1,\psi_2)}(x,z,\lambda)\otimes T_{(\psi_1,\psi_2)}(x,z,\mu)\big) \comm{r(\lambda-\mu)}{U(z,\lambda)\otimes I_{\C^2} + I_{\C^2}\otimes U(z,\mu)} \\
&\phantom{=}\hspace{35mm}\times \big(T_{(\psi_1,\psi_2)}(z,y,\lambda)\otimes T_{(\psi_1,\psi_2)}(z,y,\mu)\big).
\end{split}
\end{equation}
We now claim that the integrand is the partial derivative with respect to $z$ of
\begin{equation}
\begin{split}
\paren*{T_{(\psi_1,\psi_2)}(x,z,\lambda)\otimes T_{(\psi_1,\psi_2)}(x,z,\mu)}r(\lambda-\mu)\paren*{T_{(\psi_1,\psi_2)}(z,y,\lambda)\otimes T_{(\psi_1,\psi_2)}(z,y,\mu)},
\end{split}
\end{equation}
which then completes the proof. Indeed, the reader may verify this is the case by direct computation using the Leibnitz rule and the equations \eqref{eq:tm_ODE_x} and \eqref{eq:tm_ODE_y} for the transition matrix. So upon application of the fundamental theorem of calculus and using the initial condition $T(x,y,\lambda)|_{x=y}=I_{\C^2}$, we obtain the desired conclusion.
\end{proof}

We next check that the functional $\tl{F}_L(\lambda)$ defined in \eqref{eq:F_gfunc}, is admissible (i.e. it belongs to $\A_{\Sc,\C}$ defined in \eqref{equ:Asc}). This admissibility will then imply that $F_{L,\Re}(\lambda)$ and $F_{L,\Im}(\lambda)$ defined in \eqref{eq:FL_Re} and \eqref{eq:FL_Im}, respectively, belong to $\A_{\Sc,\V}$ defined in \eqref{alg_v}. First, observe that by taking the direction
\begin{equation}
\delta U(z,\lambda) = \sqrt{\kappa}\paren*{\delta\psi_2(z)\sigma_{+}+\delta\psi_1(z)\sigma_{-}}
\end{equation}
in \eqref{eq:dT_U}, we find that
\begin{equation}
\begin{split}
(\grad_{1}T(x,y,\lambda)(\psi_1,\psi_2))(z) &= \sqrt{\kappa} T_{(\psi_1,\psi_2)}(x,z,\lambda)\sigma_{-} T_{(\psi_1,\psi_2)}(z,y,\lambda), \\
(\grad_{\bar{2}}T(x,y,\lambda)(\psi_1,\psi_2))(z) &= \sqrt{\kappa}T_{(\psi_1,\psi_2)}(x,z,\lambda)\sigma_{+}T_{(\psi_1,\psi_2)}(z,y,\lambda),
\end{split}
\end{equation}
for $z\in [y,x]$, and zero for $z\in (-L,L)\setminus (y,x)$. Letting $x\rightarrow L^+$ and $y\rightarrow L^{-}$, we find that
\begin{equation}
\begin{split}
(\grad_1 T_L(\lambda)(\psi_1,\psi_2))(z) &= \sqrt{\kappa} T_{(\psi_1,\psi_2)}(L,z,\lambda)\sigma_{-}T_{(\psi_1,\psi_2)}(z,-L,\lambda),\\
(\grad_{\bar{2}}T_L(\lambda)(\psi_1,\psi_2))(z) &= \sqrt{\kappa} T_{(\psi_1,\psi_2)}(L,z,\lambda)\sigma_{+}T_{(\psi_1,\psi_2)}(z,-L,\lambda).
\end{split}
\end{equation}
Note that $\grad_1 T_L(\lambda)(\psi_1,\psi_2),\grad_{\bar{2}}T_L(\lambda)(\psi_1,\psi_2)$ are smooth in $(-L,L)$ but discontinuous at the boundary, and consequently do no belong to $C^\infty(\T_L)$ (i.e. $T_L(\lambda)$ is not an admissible functional). However, if we take the $2\times 2$ matrix trace of both sides of the preceding identities and use that the variational derivative commutes with the trace together with the cyclicity of trace, we obtain that the resulting expressions extend smoothly periodically to the entire real line. We summarize the preceding discussion with the following lemma.

\begin{lemma}
\label{lem:FL_adm}
For any $\lambda\in\C$, $\tl{F}_L\in \A_{\Sc,\C}$. Consequently, $F_{L,\Re}(\lambda), F_{L,\Im}(\lambda)\in \A_{\Sc,\mathcal{V}}$.
\end{lemma}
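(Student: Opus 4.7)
The plan is to verify the two defining properties of $\A_{\Sc,\C}$ (in its $2L$-periodic avatar, with $\Sc(\R)$ replaced by $C^\infty(\T_L)$) for the functional $\tl{F}_L(\lambda) = \tr_{\C^2} T_L(\lambda)$: first, smoothness as a complex-valued functional on $C^\infty(\T_L)^2$; second, existence of a symplectic gradient depending smoothly on the data and taking values in $C^\infty(\T_L)^2$. Smoothness of the map $(\psi_1,\psi_2) \mapsto T_L(\lambda)$ will follow from smooth dependence on parameters for the linear ODE \eqref{eq:tm_ODE_x} in the locally convex setting, and post-composition with $\tr_{\C^2}$ then gives smoothness of $\tl{F}_L(\lambda)$.

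For the symplectic gradient I would begin with the formula \eqref{eq:dT_U} evaluated at $(x,y) = (L,-L)$, specialized to directions $\delta U(z,\lambda) = \sqrt{\kappa}(\delta\psi_1(z)\sigma_- + \delta\psi_2(z)\sigma_+)$. This extracts the raw matrix-valued variational derivatives
\[
(\grad_1 T_L(\lambda)(\psi_1,\psi_2))(z) = \sqrt{\kappa}\,T(L,z,\lambda)\sigma_- T(z,-L,\lambda), \qquad (\grad_{\bar{2}} T_L(\lambda)(\psi_1,\psi_2))(z) = \sqrt{\kappa}\, T(L,z,\lambda)\sigma_+ T(z,-L,\lambda),
\]
for $z \in (-L,L)$. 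As noted in the discussion immediately preceding the lemma, these are smooth on $(-L,L)$ but jump at $z = \pm L$, so they do not define elements of $C^\infty(\T_L; \C^2\otimes\C^2)$. The decisive maneuver is to take the $2 \times 2$ trace (which commutes with variational differentiation) and apply cyclicity of trace to rewrite
\[
(\grad_1 \tl{F}_L(\lambda)(\psi_1,\psi_2))(z) = \sqrt{\kappa}\,\tr_{\C^2}\bigl(\sigma_- \Phi(z,\lambda)\bigr), \qquad (\grad_{\bar{2}} \tl{F}_L(\lambda)(\psi_1,\psi_2))(z) = \sqrt{\kappa}\,\tr_{\C^2}\bigl(\sigma_+ \Phi(z,\lambda)\bigr),
\]
where $\Phi(z,\lambda) \coloneqq T_{(\psi_1,\psi_2)}(z,-L,\lambda) T_{(\psi_1,\psi_2)}(L,z,\lambda)$.

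The main obstacle is then to show that $\Phi(\cdot,\lambda)$ extends to a smooth $2L$-periodic function of $z$ depending smoothly on $(\psi_1,\psi_2)$. Combining \eqref{eq:tm_ODE_x} and \eqref{eq:tm_ODE_y} via the Leibnitz rule yields the matrix von Neumann equation $\p_z \Phi(z,\lambda) = [U(z,\lambda), \Phi(z,\lambda)]$, and evaluating at the endpoints gives $\Phi(L,\lambda) = T(L,-L,\lambda)\cdot I_{\C^2} = T_L(\lambda)$ and $\Phi(-L,\lambda) = I_{\C^2}\cdot T(L,-L,\lambda) = T_L(\lambda)$. Since $U(\cdot,\lambda)$ is $2L$-periodic, the $2L$-periodic extension of $\Phi$ satisfies the same ODE on all of $\R$ with the same value at $z=-L$, hence agrees with $\Phi$ on $[-L,L]$ by uniqueness. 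Thus $\Phi(\cdot,\lambda) \in C^\infty(\T_L; \C^2 \otimes \C^2)$; smooth dependence on $(\psi_1,\psi_2)$ then follows from smooth dependence on parameters applied to this linear ODE. In view of \cref{rem:pb_vd_C}, this places $\tl{F}_L(\lambda)$ in $\A_{\Sc,\C}$.

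For the second assertion, the identities \eqref{eq:FL_restric} exhibit $F_{L,\Re}(\lambda)$ and $F_{L,\Im}(\lambda)$ as pullbacks under the canonical inclusion $\gamma = \tfrac12 \mathrm{odiag}(\phi_1,\ol{\phi_2},\phi_2,\ol{\phi_1}) \mapsto (\phi_1,\ol{\phi_2},\phi_2,\ol{\phi_1})$ of complex four-variable functionals built from $\tl{F}_L(\lambda)$ and $\tl{F}_L(\ol\lambda)$. The four variational derivatives of these functionals are, up to index permutations and the swap $\lambda \leftrightarrow \ol\lambda$, given by the formulae computed above, so smoothness is already in hand. The involution identity $\tl{F}_L(\psi_1,\psi_{\bar{2}};\lambda) = \ol{\tl{F}_L(\psi_2,\psi_{\bar{1}};\ol\lambda)}$ (a consequence of \cref{lem:tm_invol} and cyclicity of trace) supplies the reality hypothesis of \cref{rem:pb_V_vd}, thereby placing $F_{L,\Re}(\lambda), F_{L,\Im}(\lambda) \in \A_{\Sc,\V}$ and completing the proof.
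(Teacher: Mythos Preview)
Your proof is correct and follows essentially the same approach as the paper: compute the variational derivatives of $T_L(\lambda)$ via \eqref{eq:dT_U}, observe they fail to be periodic, then take the matrix trace and use cyclicity to restore periodicity. Your von Neumann equation argument for $\Phi(z,\lambda) = T(z,-L,\lambda)T(L,z,\lambda)$ spells out in detail what the paper merely asserts when it says the traced expressions ``extend smoothly periodically to the entire real line.''
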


We now show that traces $\tl{F}_L(\lambda), \tl{F}_L(\mu)$, for fixed $\mu,\lambda\in\C$, are in involution with respect to the Poisson bracket $\pb{\cdot}{\cdot}_{L^2,\C}$. They key ingredient of this result is the identity of \cref{lem:tm_pb} for the Poisson brackets between the entries of the transition matrices.

\begin{lemma}
\label{lem:tl_F_invol}
For any $\lambda,\mu\in\C$, we have that
\begin{equation}
\pb*{\tl{F}_L(\lambda)}{\tl{F}_L(\mu)}_{L^2,\C}\equiv 0.
\end{equation}
\end{lemma}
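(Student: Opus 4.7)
The plan is to deduce the involution from the $r$-matrix identity of \cref{lem:tm_pb} by passing to the double trace. Specifically, for each fixed $\lambda,\mu\in\C$, I will argue that
\begin{equation}
\pb*{\tl{F}_L(\lambda)}{\tl{F}_L(\mu)}_{L^2,\C}(\psi_1,\psi_2) = \tr_{\C^2\otimes\C^2}\{T_L(\lambda)\tens T_L(\mu)\}_{L^2,\C}(\psi_1,\psi_2),
\end{equation}
where $\tr_{\C^2\otimes\C^2}$ denotes the $4\times 4$ matrix trace. This follows directly from unpacking the tensor Poisson bracket notation: by definition,
\begin{equation}
\tr_{\C^2\otimes\C^2}\{T_L(\lambda)\tens T_L(\mu)\}_{L^2,\C} = \sum_{a,c}\pb{T_L^{aa}(\lambda)}{T_L^{cc}(\mu)}_{L^2,\C},
\end{equation}
while the linearity of variational derivatives, together with \cref{lem:FL_adm} which guarantees that $\tl{F}_L(\lambda)=\tr_{\C^2}T_L(\lambda)$ is an admissible functional, shows that the right-hand side coincides with $\pb{\tl{F}_L(\lambda)}{\tl{F}_L(\mu)}_{L^2,\C}$.

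Next I will apply \cref{lem:tm_pb}. Strictly speaking, the cited identity is stated for $-L<y<x<L$; I will pass to the limits $x\to L^{-}$ and $y\to -L^{+}$, using the continuity of $T(x,y,\lambda)$ and the explicit formulae for $\grad_1 T(x,y,\lambda)$ and $\grad_{\bar{2}}T(x,y,\lambda)$ obtained in the discussion preceding \cref{lem:FL_adm}, to conclude
\begin{equation}
\{T_L(\lambda)\tens T_L(\mu)\}_{L^2,\C} = -\comm*{r(\lambda-\mu)}{T_L(\lambda)\otimes T_L(\mu)}.
\end{equation}
Taking the trace $\tr_{\C^2\otimes\C^2}$ of both sides and invoking cyclicity to see that the trace of any commutator vanishes, I obtain
\begin{equation}
\pb*{\tl{F}_L(\lambda)}{\tl{F}_L(\mu)}_{L^2,\C} = -\tr_{\C^2\otimes\C^2}\comm*{r(\lambda-\mu)}{T_L(\lambda)\otimes T_L(\mu)} = 0,
\end{equation}
which is the claimed involution.

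The argument is essentially the classical ``$r$-matrix yields commuting integrals'' computation, and I do not expect any genuine obstacle beyond two small bookkeeping points. The first is verifying that \cref{lem:tm_pb} can indeed be applied at the boundary values $(x,y)=(L,-L)$, which amounts to checking that the variational derivatives of $T_L(\lambda)$, taken as a functional on the periodic class $C^\infty(\T_L)^2$, coincide with the formal limits of $\grad_1 T(x,y,\lambda)$ and $\grad_{\bar{2}}T(x,y,\lambda)$; this is handled by the matrix-trace extension already exploited in \cref{lem:FL_adm}. The second is keeping track of the tensor-product conventions in the definition of $\{\cdot\tens\cdot\}_{L^2,\C}$ so that the entry-wise identification of the double trace with $\pb{\tr T_L(\lambda)}{\tr T_L(\mu)}_{L^2,\C}$ is unambiguous, but this is purely a matter of notation. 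No analytic issue arises for generic $\lambda,\mu\in\C$, since \cref{lem:tm_pb} holds pointwise in the spectral parameters.
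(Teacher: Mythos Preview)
Your proposal is correct and follows essentially the same classical $r$-matrix argument as the paper: take the $\C^2\otimes\C^2$ trace of the identity in \cref{lem:tm_pb}, use that the trace of a commutator vanishes together with $\tr_{\C^2\otimes\C^2}(A\otimes B)=\tr_{\C^2}(A)\tr_{\C^2}(B)$, and pass to the boundary values $(x,y)\to(L,-L)$. The only cosmetic difference is that the paper takes the trace first (so that one works with admissible functionals throughout) and then the limit, whereas you take the limit first and then the trace; since you already flagged the admissibility subtlety and noted it is resolved by \cref{lem:FL_adm}, the two orderings amount to the same proof.
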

\begin{proof}
Applying \cref{lem:tm_pb}, we have that
\begin{equation}
\begin{split}
&\comm{r(\lambda-\mu)}{T_{(\psi_1,\psi_2)}(x,y,\lambda)\otimes T_{(\psi_1,\psi_2)}(x,y,\mu)}\\
&=\int_{-L}^L dz \paren*{\grad_1 T(x,y,\lambda)\otimes \grad_{\bar{2}} T(x,y,\mu) - \grad_{\bar{2}} T(x,y,\lambda)\otimes\grad_{1}T(x,y,\mu)}(\phi_1,\phi_2)(z).
\end{split}
\end{equation}
Taking the $4\times 4$ matrix trace $\tr_{\C^2\otimes\C^2}$ of both sides and using that the trace of a commutator is zero together with the algebraic identity
\begin{equation}
\tr_{\C^2\otimes \C^2}(A\otimes B) = \tr_{\C^2}(A)\tr_{\C^2}(B),
\end{equation}
for any $2\times 2$ matrices $A,B$, we obtain that
\begin{equation}
\begin{split}
0 &= -\int_{-L}^L dz \left(\grad_1 \paren*{\tr_{\C^2}(T(x,y,\lambda))\grad_{\bar{2}}\tr_{\C^2}(T(x,y,\mu))}(\phi_1,\phi_2)(z)  \right.\\
&\phantom{=}\hspace{30mm} \left.- \paren*{\grad_{\bar{2}}\tr_{\C^2}(T(x,y,\lambda))\grad_{1}\tr_{\C^2}(T(x,y,\mu))}(\phi_1,\phi_2)(z)\right),
\end{split}
\end{equation}
where we also use that the trace commutes with the variational derivative. Now using the continuity in $(x,y)$ of the integrand, we can let $x\rightarrow L^{-}$ and $y\rightarrow -L^{+}$ and use that $\tr_{\C^2}(T_L(\lambda))=\tl{F}_L(\lambda)$ by definition \eqref{eq:F_gfunc} and $\tr_{\C^2}(T_L(\mu))=\tl{F}_L(\mu)$ to obtain the desired conclusion.
\end{proof}

Now we show that the functionals $I_{b,n}$ defined in \cref{eq:Ibn_def} are mutually involutive with respect to the Poisson structure on $C^\infty(\T_L;\mathcal{V})$. We begin by defining the generating functional
\begin{equation}
\label{eq:tlp_def}
\tl{p}_L(\phi_1,\ol{\phi_2};\lambda) \coloneqq \arccos(\frac{1}{2}\tl{F}_L(\phi_1,\ol{\phi_2};\lambda)), \qquad \forall (\phi_1,\ol{\phi_2},\lambda)\in C^\infty(\T_L)^2\times\C,
\end{equation}
where we take the principal branch of the function $\arccos$. We first want to show that
\begin{equation}
\label{eq:p_invol}
\pb{\tl{p}_L(\lambda)}{\tl{p}_L(\mu)}_{L^2,\C}(\phi_1,\ol{\phi_2}) = 0, \qquad \forall (\phi_1,\ol{\phi_2})\in C^\infty(\T_L)^2,
\end{equation}
for $\lambda,\mu\in\R$ with sufficiently large modulus, which requires us to compute the variational derivatives of $\tl{p}_L(\lambda),\tl{p}_L(\mu)$.

Recall from \eqref{eq:FL_form} that
\begin{equation}
\frac{1}{2}\tl{F}_L(\phi_1,\ol{\phi_2};\lambda) = \cos\paren*{-\lambda L + \kappa\int_{-L}^L dx\ol{\phi_2}(x)w_{(\phi_1,\ol{\phi_2})}(x,\lambda)}.
\end{equation}
We want to show that we can choose $\lambda$ so that the $\cos$ in the right-hand side of the preceding equation is at positive distance from $\pm 1$ for all $(\phi_1,\ol{\phi_2})$ in a closed ball of $C^\infty(\T_L)$. To this end, we know from \cref{ssec:nls_pc_im} that given $(\phi_1,\ol{\phi_2})\in C^\infty(\T_L)^2$, we can choose
\begin{equation*}
\lambda=\lambda(\|\phi_1\|_{L^1(\T_L)}, \|\phi_1\|_{L^\infty(\T_L)}, \|\phi_2\|_{L^1(\T_L)}, \|\phi_2\|_{L^\infty(\T_L)}, L)\in \R
\end{equation*}
with sufficiently large modulus so that there exists $w_{(\phi_1,\ol{\phi_2})}(\lambda)$ in \eqref{eq:W_w} with the asymptotic expansion \eqref{eq:w_asy}. Consequently, for any $k\in\N$, we have that
\begin{align}
\|w_{(\phi_1,\ol{\phi_2})}(\lambda)\|_{L^\infty(\T_L)} &\leq \left\|w_{(\phi_1,\ol{\phi_2})}(\lambda) - \sum_{n=1}^k \frac{w_{k,(\phi_1,\ol{\phi_2})}}{\lambda^n}\right\|_{L^\infty(\T_L)} + \sum_{n=1}^k \frac{\|w_{k,(\phi_1,\ol{\phi_2})}\|_{L^\infty(\T_L)}}{\lambda^n} \nonumber\\
&= o(|\lambda|^k) + \sum_{n=1}^k \frac{\|w_{k,(\phi_1,\ol{\phi_2})}\|_{L^\infty(\T_L)}}{\lambda^n},
\end{align}
where the implicit constant in $o(|\lambda|^k)$ depends only the data $\|\p_x^{n-1}\phi_j\|_{L^\infty(\T_L)}$ for $n\in\N_{\leq k+1}$ and $j\in\{1,2\}$. By the analysis of \cref{ssec:cor_multi},
\begin{equation}
\|w_{k,(\phi_1,\ol{\phi_2})}\|_{L^\infty(\T_L)} \lesssim_k \sum_{n=0}^k \paren*{\|\p_x^n\phi_1\|_{L^\infty(\T_L)} + \|\p_x^n\phi_2\|_{L^\infty(\T_L)}}.
\end{equation}
Hence,
\begin{align}
\left|\int_{-L}^L dx \ol{\phi_2}(x)w_{(\phi_1,\ol{\phi_2})}(x,\lambda)\right| &\leq 2L \|\phi_2\|_{L^\infty(\T_L)} \|w_{(\phi_1,\ol{\phi_2})}(\lambda)\|_{L^\infty(\T_L)} \nonumber\\
&\lesssim \frac{2L}{\lambda}\sum_{n=0}^1 \paren*{\|\p_x^n\phi_1\|_{L^\infty(\T_L)} + \|\p_x^n\phi_2\|_{L^\infty(\T_L)}}.
\end{align}
Thus, given $\varepsilon>0$, we can choose $\lambda\in\R$ with sufficiently large modulus depending the data
\begin{equation*}
(\varepsilon,L,\|\p_x^n\phi_j\|_{L^\infty(\T_L)}), \qquad \forall (n,j)\in\{0,1\}\times\{1,2\},
\end{equation*}
so that
\begin{equation}
\left|\int_{-L}^L dx \ol{\phi_2}(x)w_{(\phi_1,\ol{\phi_2})}(x,\lambda)\right| < \varepsilon.
\end{equation}
Also choosing $\lambda$ so that $\min_{k\in\Z}\{|\lambda L - k\pi|\}>2\varepsilon$, we conclude that given $R>0$,
\begin{equation}
\min_{k\in\Z}\left\{\left|k\pi-\lambda L + \kappa\int_{-L}^L dx \ol{\phi_2}(x)w_{(\phi_1,\ol{\phi_2})}(x,\lambda)\right|\right\} \geq \delta>0
\end{equation}
for all $\phi_1,\ol{\phi_2}\in C^\infty(\T_L)$ with $\|\p_x^n\phi_1\|_{L^\infty(\T_L)}, \|\p_x^n\phi_2\|_{L^\infty(\T_L)} \leq R$, for $n\in\{0,1\}$. For such choice of $\lambda$, we have that
\begin{equation}
\label{eq:tl_pL_form}
\tl{p}_L(\phi_1,\ol{\phi_2};\lambda) = -\lambda L + \kappa\int_{-L}^L dx \ol{\phi_2}(x)w_{(\phi_1,\ol{\phi_2})}(x,\lambda), \qquad \phi_1,\ol{\phi_2}\in C^\infty(\T_L),
\end{equation}
for all $\phi_1,\ol{\phi_2}\in C^\infty(\T_L)$ with $\max\{\|\p_x^n\phi_1\|_{L^\infty(\T_L)}, \|\phi_2\|_{L^\infty(\T_L)}\}\leq R, \enspace n\in\{0,1\}$.  Moreover, for such $\phi_1,\ol{\phi_2}$, we can use the chain rule without concern over the singularity of $\arccos(z)$ at $z=\pm 1$ to compute the variational derivatives $\tl{p}_L$, finding
\begin{equation}
\begin{split}
(\grad_1 \tl{p}_L(\lambda))(\phi_1,\ol{\phi_2}) &= \frac{1}{2}\paren*{1-\paren*{\frac{\tl{F}_L(\phi_1,\ol{\phi_2};\lambda)}{2}}^2}^{-1/2} (\grad_1 \tl{F}(\lambda))(\phi_1,\ol{\phi_2}), \\
(\grad_{\bar{2}}\tl{p}_L(\lambda))(\phi_1,\ol{\phi_2}) &= \frac{1}{2}\paren*{1-\paren*{\frac{\tl{F}_L(\phi_1,\ol{\phi_2};\lambda)}{2}}^2}^{-1/2} (\grad_{\bar{2}} \tl{F}(\lambda))(\phi_1,\ol{\phi_2}),
\end{split}
\end{equation}
where by \cref{lem:FL_adm}, the variational derivatives of $\tl{F}_L(\lambda)$ are elements of $C^\infty(C^\infty(\T_L)^2;C^\infty(\T_L))$. Recalling the definition \eqref{eq:pb_L2_C} for the Poisson bracket $\pb{\cdot}{\cdot}_{L^2,\C}$, we then find that for appropriate $\lambda,\mu\in\R$,
\begin{align}
&\pb{\tl{p}_L(\lambda)}{\tl{p}_L(\mu)}_{L^2,\C}(\phi_1,\ol{\phi_2}) \nonumber\\
&= -\frac{i}{4} \paren*{1-\paren*{\frac{\tl{F}_L(\phi_1,\ol{\phi_2};\lambda)}{2}}^2}^{-1/2}\paren*{1-\paren*{\frac{\tl{F}_L(\phi_1,\ol{\phi_2};\mu)}{2}}^2}^{-1/2} \nonumber\\
&\phantom{=}\hspace{5mm}\times \int_{-L}^L dx \paren*{(\grad_1 \tl{F}_L(\lambda))(\phi_1,\ol{\phi_2})(\grad_{\bar{2}}\tl{F}_L(\mu))(\phi_1,\ol{\phi_2}) - (\grad_{\bar{2}}\tl{F}_L(\lambda))(\phi_1,\ol{\phi_2})(\grad_1\tl{F}_L(\mu))(\phi_1,\ol{\phi_2})}(x) \nonumber\\
&=\frac{1}{4} \paren*{1-\paren*{\frac{\tl{F}_L(\phi_1,\ol{\phi_2};\lambda)}{2}}^2}^{-1/2}\paren*{1-\paren*{\frac{\tl{F}_L(\phi_1,\ol{\phi_2};\mu)}{2}}^2}^{-1/2}\pb{\tl{F}_L(\lambda)}{\tl{F}_L(\mu)}_{L^2,\C}(\phi_1,\ol{\phi_2}) \nonumber\\
&=0, \nonumber
\end{align}
where the ultimate equality follows from an application of \cref{lem:tl_F_invol}.

\medskip
We now use \eqref{eq:p_invol} to prove the mutual involution of the functionals $I_{b,n}$.
\begin{prop}
\label{prop:Ib_invol}
For any $n,m\in\N$, it holds that
\begin{equation}
\pb{I_{b,n}}{I_{b,m}}_{L^2,\mathcal{V}}\equiv 0 \text{ on $C^\infty(\T_L;\mathcal{V})$}.
\end{equation}
\end{prop}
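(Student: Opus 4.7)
The plan is to reduce the proposition to the mutual involution of the two-variable functionals $\tl{I}_n$ with respect to $\pb{\cdot}{\cdot}_{L^2,\C}$, then extract that involution from the already-established statement $\pb{\tl{p}_L(\lambda)}{\tl{p}_L(\mu)}_{L^2,\C}\equiv 0$ by reading off coefficients of the asymptotic expansion of $\tl{p}_L$, and finally pass from the periodic setting back to $\Sc(\R)$.

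First I would apply Remark \ref{rem:pb_V_vd}. For $\tl{F}(\phi_1,\psi_{\bar 2},\phi_2,\psi_{\bar 1})\coloneqq \tfrac{1}{2}(\tl I_n(\phi_1,\psi_{\bar 2})+\tl I_n(\phi_2,\psi_{\bar 1}))$, the four variational derivatives at a point $\gamma=\tfrac{1}{2}\mathrm{odiag}(\phi_1,\ol{\phi_2},\phi_2,\ol{\phi_1})$ are $\grad_1\tl F=\tfrac12\grad_1\tl I_n(\phi_1,\ol{\phi_2})$, $\grad_{\bar 2}\tl F=\tfrac12\grad_{\bar 2}\tl I_n(\phi_1,\ol{\phi_2})$, $\grad_2\tl F=\tfrac12\grad_1\tl I_n(\phi_2,\ol{\phi_1})$, $\grad_{\bar 1}\tl F=\tfrac12\grad_{\bar 2}\tl I_n(\phi_2,\ol{\phi_1})$, and similarly for $I_{b,m}$. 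Substituting into \eqref{eq:L2V_pb_vd} and observing that the "cross" terms (mixing the two pairs) vanish because the variational derivatives are functions of disjoint variable sets, the bracket decouples as
\[
\pb{I_{b,n}}{I_{b,m}}_{L^2,\V}(\gamma)=\tfrac14\pb{\tl I_n}{\tl I_m}_{L^2,\C}(\phi_1,\ol{\phi_2})+\tfrac14\pb{\tl I_n}{\tl I_m}_{L^2,\C}(\phi_2,\ol{\phi_1}).
\]
So it suffices to prove $\pb{\tl I_n}{\tl I_m}_{L^2,\C}\equiv 0$ on $C^\infty(\T_L)^2$, and then to pass to $\Sc(\R)^2$.

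Second, using the formula \eqref{eq:tl_pL_form} for $\tl p_L(\phi_1,\ol{\phi_2};\lambda)$ together with the asymptotic expansion \eqref{eq:w_asy} of $w_{(\phi_1,\ol{\phi_2})}(x,\lambda)$ and the definition \eqref{eq:tlIn_def}, for each $N\in\N$,
\[
\tl p_L(\phi_1,\ol{\phi_2};\lambda)= -\lambda L+\kappa\sum_{n=1}^N\lambda^{-n}\tl I_n(\phi_1,\ol{\phi_2})+r_N(\phi_1,\ol{\phi_2};\lambda),
\]
with $r_N=o(|\lambda|^{-N})$ uniformly on bounded subsets of $C^\infty(\T_L)^2$. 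The same fixed-point construction that yields \eqref{eq:w_asy} (Chapter I of \cite{FT07}) also controls variational derivatives of $w_{(\phi_1,\ol{\phi_2})}(\cdot,\lambda)$, so the expansion may be differentiated variationally term by term, i.e.\ $\grad_1 r_N,\grad_{\bar 2}r_N$ are also $o(|\lambda|^{-N})$ uniformly on bounded sets. Inserting into the Poisson bracket and using the involution \eqref{eq:p_invol} gives
\[
0\;=\;\pb{\tl p_L(\lambda)}{\tl p_L(\mu)}_{L^2,\C}(\phi_1,\ol{\phi_2})\;=\;\kappa^2\!\!\sum_{n,m=1}^{N}\lambda^{-n}\mu^{-m}\pb{\tl I_n}{\tl I_m}_{L^2,\C}(\phi_1,\ol{\phi_2})+R_N(\phi_1,\ol{\phi_2};\lambda,\mu),
\]
where $R_N$ decays faster than any monomial $\lambda^{-n}\mu^{-m}$ with $n+m\le N$. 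A standard induction on $n+m$, extracting leading coefficients by letting $\lambda\to\infty$ then $\mu\to\infty$ along the real line, yields $\pb{\tl I_n}{\tl I_m}_{L^2,\C}(\phi_1,\ol{\phi_2})=0$ for every pair $(n,m)$ and every $(\phi_1,\ol{\phi_2})\in C^\infty(\T_L)^2$.

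Third, to descend to $\Sc(\R)^2$, I would fix $\psi_1,\ol{\psi_2}\in\Sc(\R)$, choose a smooth cutoff equal to one on an arbitrary compact set, truncate and $2L$-periodize to obtain $(\psi_1^L,\ol{\psi_2}^L)\in C^\infty(\T_L)^2$, and exploit Lemma \ref{lem:wn(k)_prop} and the recursion \eqref{eq:wn_bv_rec}, which show that $w_{n,(\cdot,\cdot)}$ and hence $\grad_1\tl I_n,\grad_{\bar 2}\tl I_n$ depend only on finitely many derivatives of the inputs at a single point. Thus on any compact subset of $[-L,L]$ the $C^\infty(\T_L)^2$ Poisson densities agree with the Schwartz ones for $L$ large, and the Schwartz decay kills the boundary tails after $L\to\infty$. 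The hard part of the proof will be Step two, namely making the extraction of the asymptotic coefficients rigorous; the challenge is not the algebra of the expansion but verifying that the variational-derivative remainders in the expansion of $\tl p_L$ are $o(|\lambda|^{-N})$ \emph{uniformly} in the base point on bounded subsets, so that $\pb{\cdot}{\cdot}_{L^2,\C}$ respects the asymptotic expansion. Once this uniformity is in hand, combining with Proposition \ref{schwartz2_wpoiss} and the decoupling identity above delivers the proposition.
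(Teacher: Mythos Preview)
Your proposal is essentially the same argument as the paper's: decouple the $\pb{\cdot}{\cdot}_{L^2,\V}$ bracket into two copies of the $\pb{\cdot}{\cdot}_{L^2,\C}$ bracket (the paper does this at the level of the generating functional $p_L(\lambda)=\tl{p}_L(\phi_1,\ol{\phi_2};\lambda)+\tl{p}_L(\phi_2,\ol{\phi_1};\lambda)$ rather than at the level of $I_{b,n}$, but this is cosmetic), feed in the involution \eqref{eq:p_invol}, insert the asymptotic expansion of $\tl p_L$ and its variational derivatives, and read off the coefficients. One small correction: in the formula \eqref{eq:L2V_pb_vd} there are no ``cross'' terms to begin with, since it only pairs $\grad_1$ with $\grad_{\bar 2}$ and $\grad_2$ with $\grad_{\bar 1}$; the decoupling is immediate from the structure of $\tl F$. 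Also note that the proposition is stated on $C^\infty(\T_L;\V)$, so your Step~3 is not needed here (the passage to $\Sc(\R)$ is handled separately, as the paper indicates at the outset of the appendix).
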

\begin{proof}
Fix $n,m\in\N$, and let $\gamma=\frac{1}{2}\mathrm{odiag}(\phi_1,\ol{\phi_2},\phi_2,\ol{\phi_1})\in C^\infty(\T_L;\mathcal{V})$. Let us first introduce some notation that will simplify the computations in the sequel. Define
and
\begin{equation}
\label{eq:pL_def}
p_L(\gamma;\lambda) \coloneqq \tl{p}_L(\phi_1,\ol{\phi_2};\lambda) + \tl{p}_L(\phi_2,\ol{\phi_1};\lambda), \qquad \forall (\gamma,\lambda)\in C^\infty(\T_L;\mathcal{V})\times\C,
\end{equation}
where we recall that $\tl{p}_L$ is defined in \eqref{eq:tlp_def}. Note that it is tautological that $p_L$ is the restriction of a complex-valued functional on $C^\infty(\T_L)^4$, which by an abuse of notation we write as
\begin{equation}
\label{eq:pL_ext}
p_L(\phi_1,\phi_{\bar{2}},\phi_2,\phi_{\bar{1}};\lambda) = \tl{p}_L(\phi_1,\phi_{\bar{2}};\lambda) + \tl{p}_L(\phi_2,\phi_{\bar{1}};\lambda), \qquad \phi_1,\phi_{\bar{1}},\phi_2,\phi_{\bar{2}}\in C^\infty(\T_L).
\end{equation}

Now for $\gamma\in C^\infty(\T_L;\V)$, we have by the variational derivative formulation of the Poisson bracket $\pb{p_L(\lambda)}{p_L(\mu)}_{L^2,\V}$ (recall \eqref{eq:L2V_pb_vd}) and \eqref{eq:pL_ext} that
\begin{align}
&\pb{p_L(\lambda)}{p_L(\mu)}_{L^2,\V}(\gamma) \nonumber\\
&= -i\int_{-L}^L dz \paren*{(\grad_1 p_L(\lambda))(\grad_{\bar{2}}p_L(\mu)) - (\grad_{\bar{2}}p_L(\lambda))(\grad_{1} p_L(\mu))}(\phi_1,\ol{\phi_2},\phi_2,\ol{\phi_1})(z) \nonumber\\
&\phantom{=} -i\int_{-L}^L dz \paren*{(\grad_2 p_L(\lambda))(\grad_{\bar{1}}p_L(\mu)) - (\grad_{\bar{1}}p_L(\lambda))(\grad_{2} p_L(\mu))}(\phi_1,\ol{\phi_2},\phi_2,\ol{\phi_1})(z) \nonumber\\
&=-i\int_{-L}^L dz \paren*{(\grad_1\tl{p}_L(\lambda)(\grad_{\bar{2}}\tl{p}_L(\mu)) - (\grad_{\bar{2}} \tl{p}_L(\lambda))(\grad_1 \tl{p}_L(\mu))}(\phi_1,\ol{\phi_2})(z) \nonumber\\
&\phantom{=} -i\int_{-L}^L dz \paren*{(\grad_1\tl{p}_L(\lambda)(\grad_{\bar{2}}\tl{p}_L(\mu)) - (\grad_{\bar{2}} \tl{p}_L(\lambda))(\grad_1 \tl{p}_L(\mu))}(\phi_2,\ol{\phi_1})(z). \label{eq:pL_vd_sub}
\end{align}
Recalling \cref{rem:pb_vd_C} for the variational derivative formulation of the Poisson bracket $\pb{\cdot}{\cdot}_{L^2,\C}$, we can rewrite the right-hand side of the preceding equality to obtain that
\begin{equation}
\pb{p_L(\lambda)}{p_L(\mu)}_{L^2,\V}(\gamma) = \pb{\tl{p}_L(\lambda)}{\tl{p}_L(\mu)}_{L^2,\C}(\phi_1,\ol{\phi_2}) + \pb{\tl{p}_L(\lambda)}{\tl{p}_L(\mu)}_{L^2,\C}(\phi_2,\ol{\phi_1}).
\end{equation}
Given $R>0$, for all $\gamma \in C^\infty(\T_L;\V)$ with $\|\p_x^n\gamma\|_{L^\infty(\T_L)}\leq R$, for $n\in\{0,1\}$, we can choose $\lambda,\mu\in\R$ arbitrarily large to apply \eqref{eq:p_invol}, yielding that both terms in the right-hand side of the preceding equality are zero. Hence,
\begin{equation}
\pb{p_L(\lambda)}{p_L(\mu)}_{L^2,\V}(\gamma) = 0.
\end{equation}

Now by the formula \eqref{eq:tl_pL_form} for $\tl{p}_L(\lambda)$ and the large real $\lambda$ asymptotic expansion \eqref{eq:w_asy} for $w_{(\phi_1,\ol{\phi_2})}(\lambda)$, we see that
\begin{equation}
\tl{p}_L(\phi_1,\ol{\phi_2};\lambda) \sim -\lambda L + \kappa\sum_{k=1}^\infty \frac{\int_{-L}^Ldx \ol{\phi_2}(x) w_{k,(\phi_1,\ol{\phi_2})}(x)}{\lambda^k} = -\lambda L + \kappa\sum_{k=1}^\infty \frac{\tl{I}_k(\phi_1,\ol{\phi_2})}{\lambda^k},
\end{equation}
where the ultimate equality follows from the definition \eqref{eq:tlIn_def} for $\tl{I}_k$. Taking the variational derivatives of both sides of the preceding identity, we find that
\begin{equation}
\label{eq:tlpL_vd}
\grad_1 \tl{p}_L (\phi_1,\ol{\phi_2};\lambda) \sim \kappa\sum_{k=1}^\infty \frac{\grad_1 \tl{I}_k(\phi_1,\ol{\phi_2})}{\lambda^k}, \qquad \grad_{\bar{2}} \tl{p}_L(\phi_1,\ol{\phi_2};\lambda) \sim \kappa\sum_{k=1}^\infty \frac{\grad_{\bar{2}}\tl{I}_k(\phi_1,\ol{\phi_2})}{\lambda^k}.
\end{equation}
Substituting the asymptotic expansions \eqref{eq:tlpL_vd} into \eqref{eq:pL_vd_sub}, we see that
\begin{align}
0 &= \pb{p_L(\lambda)}{p_L(\mu)}_{L^2,\mathcal{V}}(\gamma) \nonumber\\
&\sim -i\kappa^2\sum_{k,j=1}^\infty \frac{1}{\lambda^k \mu^j} \int_{-L}^L dz \left(\grad_1 \tl{I}_k(\phi_1,\ol{\phi_2})\grad_{\bar{2}}\tl{I}_j(\phi_1,\ol{\phi_2}) - \grad_{\bar{2}}\tl{I}_k(\phi_1,\ol{\phi_2})\grad_1\tl{I}_j(\phi_1,\ol{\phi_2}) \right)(z) \nonumber\\
&\phantom{=} - i\kappa^2\sum_{k,j=1}^\infty \frac{1}{\lambda^k \mu^j} \int_{-L}^L dz \left(\grad_1 \tl{I}_k(\phi_2,\ol{\phi_1})\grad_{\bar{2}}\tl{I}_j(\phi_2,\ol{\phi_1}) - \grad_{\bar{2}}\tl{I}_k(\phi_2,\ol{\phi_1})\grad_1\tl{I}_j(\phi_2,\ol{\phi_1}) \right)(z) \nonumber\\
&=\sum_{k,j=1}^\infty \frac{4\pb{I_{b,k}}{I_{b,j}}_{L^2,\V}(\gamma)}{\lambda^k\mu^j},
\end{align}
where the ultimate equality follows from \cref{rem:pb_V_vd} and the definition \eqref{eq:Ibn_def} of the functionals $I_{b,n}$. By the uniqueness of coefficients of asymptotic expansions, we conclude that $\pb{I_{b,k}}{I_{b,j}}_{L^2,\V}\equiv 0$ on $C^\infty(\T_L;\V)$, completing the proof of the proposition.
\end{proof}

\begin{comment}
Lastly, we show how to pass from the $2L$-periodic case to the general rapidly decreasing case (i.e. $\psi_1,\psi_2\in\Sc(\R)$ as opposed to in $C^\infty(\T_L)$). Let $\psi_1,\psi_2\in\Sc(\R)$, and let $\chi\in C_c^\infty(\R)$ be a cutoff function satisfying $0\leq \chi\leq 1$,
\begin{equation}
\chi(x) = 
\begin{cases}
1, & {|x|\leq 1/2} \\
0, & {|x|\geq 3/4}
\end{cases}.
\end{equation}
Now rescale $\chi$ by setting $\chi_L(x)\coloneqq \chi(x/L)$, so that $\chi_L$ is supported in the interval $[-3L/4,3L/4]$, and set
\begin{equation}
\psi_{j,L} \coloneqq \psi_j\chi_L, \qquad j\in\{1,2\}.
\end{equation}
By periodization, we can regard $\psi_{j,L}\in C_c^\infty(\T_L)$.
\end{comment}

\section{Multilinear algebra}
\label{app:multi_alg}
In this appendix, we review some useful facts from multilinear algebra about symmetric tensors, which we make use of to prove \cref{thm:GP_invol}. Throughout this appendix, $V$ denotes a finite-dimensional complex vector space unless specified otherwise. For concreteness, the reader can just take $V=\C^d$, where $d$ is the dimension of $V$. For more details and the omitted proofs, we refer the reader to \cite{Greub1978} and \cite{CGLM2008}, in particular the latter for a concise, pedestrian exposition.

Let $n\in\N$, and let $V^{\times n}\rightarrow V^{\otimes n}$ be an algebraic $n$-fold tensor product\footnote{The reader will recall that the tensor product is only defined up to unique isomorphism.} for $V$. Now given any $n$-linear map $T:V^{\times n}\rightarrow W$, where $W$ is another complex finite-dimensional vector space, the universal property of the tensor product asserts that there exists a unique linear map $\bar{T}: V^{\otimes n}\rightarrow W$, such that the following diagram commutes
\begin{equation}
\begin{tikzcd}
V^{\times n} \arrow[rd, "T"] \arrow[r] & V^{\otimes n} \arrow[d, dashrightarrow, "\bar{T}"] \\
& W
\end{tikzcd}.
\end{equation}
In particular, given any permutation $\pi\in \Ss_n$, there is a unique map $\bar{\pi}: V^{\otimes n} \rightarrow V^{\otimes n}$ with the property that
\begin{equation}
\bar{\pi}(v_1\otimes \cdots \otimes v_n) = v_{\pi(1)}\otimes \cdots \otimes v_{\pi(n)}, \qquad \forall v_1,\ldots,v_n\in V.
\end{equation}
Using these maps $\bar{\pi}$, we can define the symmetrization operator $\Sym_n$ on $V^{\otimes n}$ by
\begin{equation}
\Sym_n(u) \coloneqq \frac{1}{n!}\sum_{\pi\in\Ss_n} \bar{\pi}(u), \qquad \forall u\in\ V^{\otimes n}
\end{equation}
and define what it means for a tensor to be symmetric.

\begin{mydef}[Symmetric tensor]
We say that $u\in V^{\otimes n}$ is \emph{symmetric} if $\Sym_n(u) = u$. Equivalently, $\bar{\pi}(u) = u$ for every $\pi\in\Ss_n$. We let $\Sym_n(V^{\otimes n})$, alternatively $\bigotimes_s^n V$ or $V^{\otimes_s^n}$, denote the subspace of $V^{\otimes n}$ consisting of symmetric tensors.
\end{mydef}

\begin{remark}
If $\{e_1,\ldots,e_d\}$ is a basis for $V$, then $\{e_{j_1}\otimes\cdots\otimes e_{j_n}\}_{j_1,\ldots,j_n=1}^d$ is a basis for $V^{\otimes n}$, so that $\dim(V^{\otimes n}) = d^n$. Similarly, $\{\Sym_n(e_{j_1}\otimes \cdots \otimes e_{j_n})\}_{1\leq j_1\leq \cdots \leq j_n\leq d}$ is a basis for $V^{\otimes_s^n}$, so that $\dim(V^{\otimes_s^n})={d+n-1\choose n}$.
\end{remark}

We now claim that any element of $V^{\otimes_s^n}$ is uniquely identifiable with an element of $\C[x_1,\ldots,x_d]_n$, the space of homogeneous polynomials of degree $n$ in $d$ variables. Indeed, fix a basis $\{e_1,\ldots,e_d\}$ for $V$, so that $\{\Sym_n(e_{j_1}\otimes\cdots\otimes e_{j_n})\}_{1\leq j_1\leq\cdots\leq j_n\leq d}$ is a basis for $V^{\otimes_s^n}$. By mapping
\begin{equation}
\Sym_n(e_{j_1}\otimes \cdots\otimes e_{j_n}) \mapsto x_1^{\alpha_1}\cdots x_d^{\alpha_d} \eqqcolon \ux_d^{\ul{\alpha}_d},
\end{equation}
where $\ul{\alpha}_d$ is the multi-index defined by
\begin{equation}
\label{eq:alpha_func}
\alpha_j \coloneqq \sum_{i=1}^n \delta_j(j_i), \qquad \forall j\in \N_{\leq d},
\end{equation}
where $\delta_j$ is the discrete Dirac mass centered at $j$, one obtains a linear map from $V^{\otimes_s^n} \rightarrow \C[x_1,\ldots,x_d]_n$. One can show this map is, in fact, an isomorphism. Consequently, if
\begin{equation}
u = \sum_{1\leq j_1\leq \cdots \leq j_n \leq d} u_{j_1\cdots j_n} \Sym_n\paren*{e_{j_1}\otimes\cdots\otimes e_{j_n}}
\end{equation}
is an element of $V^{\otimes_s^n}$, then $u$ is uniquely identifiable with the element $F\in \C[x_1,\ldots,x_d]_n$ given by
\begin{equation}
\label{eq:sym_pol_iso}
F(\ux_d) = \sum_{1\leq j_1\leq \cdots\leq j_n\leq d}  u_{j_1\cdots j_n}\ux_d^{\ul{\alpha}_d(\ul{j}_n)},
\end{equation}
where we write $\ul{\alpha}_d(\ul{j}_n)$ to emphasize that $\ul{\alpha}_d$ is intended as a function of $\ul{j}_n$ according to the rule \eqref{eq:alpha_func}.

There is a useful bilinear form on $\C[x_1,\ldots,x_d]_n$ defined as follows: if $F,G\in \C[x_1,\ldots,x_d]_n$ are respectively given by
\begin{equation}
F(\ux_d) = \sum_{|\ul{\alpha}_d|=n} {n\choose {\alpha_1,\ldots,\alpha_d}} a_{\ul{\alpha}_d} \ul{x}_d^{\ul{\alpha}_d}, \qquad G(\ux_d) = \sum_{|\ul{\alpha}_d|=n} {n\choose {\alpha_1,\ldots,\alpha_d}} b_{\ul{\alpha}_d} \ul{x}_d^{\ul{\alpha}_d},
\end{equation}
then we define
\begin{equation}
\ipp{F,G} \coloneqq \sum_{|\ul{\alpha}_d|=n} {n\choose {\alpha_1,\ldots,\alpha_d}} a_{\ul{\alpha}_d}b_{\ul{\alpha}_d}.
\end{equation}

The form $\ipp{\cdot,\cdot}$, which is evidently symmetric, has the important property of nondegeneracy, as the next lemma shows.

\begin{lemma}[Nondegeneracy]
\label{lem:form_nd}
The symmetric bilinear form $\ipp{\cdot,\cdot}: \C[x_1,\ldots,x_d]_n \times \C[x_1,\ldots,x_d]_n \rightarrow\C$ is nondegenerate: if $\ipp{F,G}=0$ for all $G\in\C[x_1,\ldots,x_d]_n$, then $F\equiv 0$.
\end{lemma}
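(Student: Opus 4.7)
The plan is to establish nondegeneracy by the classical trick of testing $F$ against monomials. More precisely, given the hypothesis that $\ipp{F,G} = 0$ for every $G \in \C[x_1,\ldots,x_d]_n$, I would fix an arbitrary multi-index $\ul{\beta}_d \in \N_0^d$ with $|\ul{\beta}_d| = n$ and choose the specific test polynomial $G_{\ul{\beta}_d}(\ux_d) \coloneqq \ux_d^{\ul{\beta}_d}$. The goal is then to show that $\ipp{F,G_{\ul{\beta}_d}}$ reduces to a scalar multiple of the coefficient of $\ux_d^{\ul{\beta}_d}$ in $F$, whence vanishing of this pairing for all $\ul{\beta}_d$ forces $F \equiv 0$.

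The main step is to re-expand the monomial $G_{\ul{\beta}_d}$ in the normalized basis $\{\binom{n}{\alpha_1,\ldots,\alpha_d} \ux_d^{\ul{\alpha}_d}\}_{|\ul{\alpha}_d|=n}$ used to define the pairing. Writing
\begin{equation*}
G_{\ul{\beta}_d}(\ux_d) = \sum_{|\ul{\alpha}_d|=n} \binom{n}{\alpha_1,\ldots,\alpha_d} b_{\ul{\alpha}_d} \ux_d^{\ul{\alpha}_d},
\end{equation*}
I would observe that $b_{\ul{\alpha}_d}$ vanishes whenever $\ul{\alpha}_d \neq \ul{\beta}_d$, while $b_{\ul{\beta}_d} = \binom{n}{\beta_1,\ldots,\beta_d}^{-1}$. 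Substituting these coefficients into the defining formula of $\ipp{\cdot,\cdot}$ and using that only the $\ul{\alpha}_d = \ul{\beta}_d$ term contributes gives
\begin{equation*}
\ipp{F,G_{\ul{\beta}_d}} = \binom{n}{\beta_1,\ldots,\beta_d} a_{\ul{\beta}_d} \cdot \binom{n}{\beta_1,\ldots,\beta_d}^{-1} = a_{\ul{\beta}_d},
\end{equation*}
where $a_{\ul{\beta}_d}$ is the corresponding normalized coefficient of $F$. The hypothesis therefore yields $a_{\ul{\beta}_d} = 0$ for every $\ul{\beta}_d$ with $|\ul{\beta}_d|=n$, and since $\{\binom{n}{\alpha_1,\ldots,\alpha_d}\ux_d^{\ul{\alpha}_d}\}_{|\ul{\alpha}_d|=n}$ is a basis of $\C[x_1,\ldots,x_d]_n$, this is equivalent to $F \equiv 0$.

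There is no genuine obstacle here; the only thing to be careful about is the bookkeeping of the multinomial normalization in the definition of $\ipp{\cdot,\cdot}$, so that the cancellation of the two multinomial factors is explicit. A conceptually cleaner (but essentially equivalent) alternative would be to identify $\ipp{F,G}$ with the apolar/Fischer pairing $\tfrac{1}{n!} F(\partial_{\ux_d}) G(\ux_d)$ and note that $F(\partial_{\ux_d})$ annihilates every homogeneous polynomial of degree $n$ only if $F = 0$; however, the direct monomial-testing argument above is the shortest route and requires nothing beyond the definition.
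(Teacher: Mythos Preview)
Your argument is correct: testing $F$ against the monomials $G_{\ul{\beta}_d}=\ux_d^{\ul{\beta}_d}$ isolates the normalized coefficient $a_{\ul{\beta}_d}$ exactly as you compute, and the multinomial factors cancel as stated. Note that the paper does not actually supply its own proof of this lemma; it states the result and refers the reader to \cite{Greub1978} and \cite{CGLM2008} for the omitted arguments, so there is nothing to compare against beyond observing that your direct monomial-testing proof (equivalently, the observation that the Gram matrix of the form in the monomial basis is diagonal with nonzero entries) is the standard one found in those references.
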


When $G$ is of the form $G(\ux_d) = (\beta_1 x_1+\cdots+\beta_d x_d)^n$ (i.e. an $n^{th}$ power of a linear form), then the next lemma provides an explicit formula for $\ipp{F,G}$.

\begin{lemma}
\label{lem:form_eval}
If $G(\ux_d) = (\beta_1 x_1+\cdots+\beta_d x_d)^n$, where $\ul{\beta}_d\in\C^d$, then for every $F\in \C[x_1,\ldots,x_d]_n$, we have that
\begin{equation}
\ipp{F,G} = F(\ul{\beta}_d).
\end{equation}
\end{lemma}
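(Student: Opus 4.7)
The plan is a direct computation whose only ingredient is the multinomial theorem. The hypothesis on $G$ is that it is the $n$-th power of a single linear form, so the multinomial theorem gives the expansion
$$G(\ux_d) = (\beta_1 x_1 + \cdots + \beta_d x_d)^n = \sum_{|\ul{\alpha}_d|=n} {n\choose{\alpha_1,\ldots,\alpha_d}} \ul{\beta}_d^{\ul{\alpha}_d}\, \ux_d^{\ul{\alpha}_d}.$$
Comparing this with the normalized expansion $G(\ux_d) = \sum_{|\ul{\alpha}_d|=n} {n\choose{\alpha_1,\ldots,\alpha_d}} b_{\ul{\alpha}_d}\, \ux_d^{\ul{\alpha}_d}$ used in the definition of $\ipp{\cdot,\cdot}$ just before the lemma, I read off the coefficients $b_{\ul{\alpha}_d} = \ul{\beta}_d^{\ul{\alpha}_d}$.

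Next, writing $F$ in the same normalized form $F(\ux_d) = \sum_{|\ul{\alpha}_d|=n} {n\choose{\alpha_1,\ldots,\alpha_d}} a_{\ul{\alpha}_d}\, \ux_d^{\ul{\alpha}_d}$ and substituting into the definition of the bilinear form yields
$$\ipp{F,G} = \sum_{|\ul{\alpha}_d|=n} {n\choose{\alpha_1,\ldots,\alpha_d}} a_{\ul{\alpha}_d}\, \ul{\beta}_d^{\ul{\alpha}_d},$$
and the right-hand side is precisely the expansion of $F$ evaluated at the point $\ul{\beta}_d \in \C^d$. This gives $\ipp{F,G} = F(\ul{\beta}_d)$.

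There is no genuine obstacle here: the entire content of the lemma is the observation that the multinomial-coefficient normalization built into $\ipp{\cdot,\cdot}$ is exactly what is needed to make pairing against the $n$-th power of a linear form coincide with evaluation at its coefficient vector. I would also note in passing that this lemma is the key computational input behind \cref{lem:form_nd}: combined with the classical fact that the $n$-th powers of linear forms span $\C[x_1,\ldots,x_d]_n$ (via polarization), nondegeneracy reduces to the statement that a polynomial vanishing identically on $\C^d$ is zero.
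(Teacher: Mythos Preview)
Your proof is correct and is exactly the standard argument: the paper omits the proof of this lemma entirely (referring the reader to \cite{Greub1978} and \cite{CGLM2008}), so there is nothing to compare against, but your multinomial-theorem computation is the expected one-line verification. One minor quibble with your closing remark: nondegeneracy (\cref{lem:form_nd}) does not actually need this lemma, since the form is diagonal in the monomial basis with strictly positive multinomial coefficients on the diagonal; rather, \cref{lem:form_eval} together with \cref{lem:form_nd} is what drives the proof of the symmetric-rank-1 decomposition \cref{lem:sr1_dcomp}.
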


We now use \cref{lem:form_eval} to prove the existence of a special decomposition for elements of $V^{\otimes_s^n}$. We have included a proof as it is a nice argument.

\begin{lemma}[Symmetric rank-1 decomposition]
\label{lem:sr1_dcomp}
For any $u\in V^{\otimes_s^n}$, there exists an integer $N\in\N$, coefficients $\{a_j\}_{j=1}^N\subset\C$, and elements $\{v_j\}_{j=1}^N \subset V$, such that
\begin{equation}
\label{eq:sr1_dcomp}
u = \sum_{j=1}^N a_j v_j^{\otimes n}.
\end{equation}
\end{lemma}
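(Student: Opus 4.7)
The plan is to exploit the isomorphism $V^{\otimes_s^n}\cong \C[x_1,\ldots,x_d]_n$ from \eqref{eq:sym_pol_iso} together with the nondegeneracy of the pairing $\ipp{\cdot,\cdot}$ established in \cref{lem:form_nd} and the evaluation formula from \cref{lem:form_eval}. Specifically, fix a basis $\{e_1,\ldots,e_d\}$ of $V$, and let $W\subseteq V^{\otimes_s^n}$ denote the linear span of the set of \emph{symmetric rank-1 tensors}
\begin{equation*}
\mathcal{R}_1 \coloneqq \{v^{\otimes n} : v\in V\}\subseteq V^{\otimes_s^n}.
\end{equation*}
The goal is to show $W = V^{\otimes_s^n}$, from which \eqref{eq:sr1_dcomp} follows immediately.

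First I would compute the image of a rank-1 tensor $v^{\otimes n}$ under the isomorphism \eqref{eq:sym_pol_iso}. Writing $v = \sum_{j=1}^d \beta_j e_j$ and expanding $v^{\otimes n}$ via the multinomial theorem, one finds (up to the unimportant multinomial coefficients absorbed in the basis normalization) that $v^{\otimes n}$ corresponds to the polynomial $G_v(\ux_d) \coloneqq (\beta_1 x_1 + \cdots + \beta_d x_d)^n$. Thus under the isomorphism, $W$ is identified with the subspace $\wt{W}\subseteq \C[x_1,\ldots,x_d]_n$ spanned by all $n$-th powers of linear forms.

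Next I would argue by contradiction using duality in finite dimensions. Suppose $\wt{W}\subsetneq \C[x_1,\ldots,x_d]_n$. Then since $\ipp{\cdot,\cdot}$ is nondegenerate by \cref{lem:form_nd}, the orthogonal complement of $\wt{W}$ with respect to $\ipp{\cdot,\cdot}$ is nonzero; pick $F\in \C[x_1,\ldots,x_d]_n$ nonzero with $\ipp{F,G}=0$ for all $G\in\wt{W}$. Applying \cref{lem:form_eval} to $G=G_v$ gives $F(\ul{\beta}_d)=\ipp{F,G_v}=0$ for every $\ul{\beta}_d\in\C^d$. Since a polynomial over $\C$ vanishing identically on $\C^d$ is the zero polynomial, this contradicts $F\neq 0$. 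Therefore $\wt{W}=\C[x_1,\ldots,x_d]_n$, hence $W = V^{\otimes_s^n}$, and every $u\in V^{\otimes_s^n}$ can be written as a finite linear combination of symmetric rank-1 tensors, yielding the desired decomposition with some $N\in\N$, coefficients $a_j\in\C$, and vectors $v_j\in V$.

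The only potentially delicate step is the identification of $\mathcal{R}_1$ with the $n$-th powers of linear forms; one must be careful about the multinomial normalization conventions used in \eqref{eq:sym_pol_iso} when passing from the basis $\{\Sym_n(e_{j_1}\otimes\cdots\otimes e_{j_n})\}$ to the monomial basis $\{\ux_d^{\ul\alpha_d}\}_{|\ul\alpha_d|=n}$. However, the argument is insensitive to overall scalar factors since we only care about the linear span, so the contradiction step via \cref{lem:form_eval} and \cref{lem:form_nd} goes through verbatim.
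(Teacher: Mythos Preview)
Your proposal is correct and follows essentially the same argument as the paper: identify the span of rank-1 symmetric tensors with the span of $n$-th powers of linear forms under the isomorphism \eqref{eq:sym_pol_iso}, then use nondegeneracy of $\ipp{\cdot,\cdot}$ (\cref{lem:form_nd}) together with the evaluation formula (\cref{lem:form_eval}) to derive a contradiction from the existence of a nonzero $F$ in the orthogonal complement. Your remark about the normalization being irrelevant since only the span matters is also in keeping with the paper's treatment.
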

\begin{proof}
Let $W\subset V^{\otimes_s^n}$ denote the set of elements which admit a decomposition of the form \eqref{eq:sr1_dcomp}. Evidently, $W$ is a subspace of $V^{\otimes_s^n}$. Fix a basis $\{e_1,\ldots,e_d\}$ for $V$. If $v=\beta_1 e_1+\cdots+ \beta_d e_d$, then one can check that under the isomorphism given by \eqref{eq:sym_pol_iso}, $v^{\otimes n}$ is uniquely identifiable with the polynomial
\begin{equation*}
(\beta_1 x_1+\cdots+\beta_d x_d)^n,
\end{equation*}
i.e. an $n^{th}$ power of a linear form. Consequently, $W$ is isomorphic to the span of $n^{th}$ powers of linear forms in $\C[x_1,\ldots,x_d]_n$.

Assume for the sake of a contradiction that $W$ is a proper subspace, so that the orthogonal complement $W^{\perp}$ with respect to the form $\ipp{\cdot,\cdot}$ is nontrivial. Then it follows from the embedding of $W\subset \C[x_1,\ldots,x_d]_n$ that there exists a nonzero polynomial $F\in \C[x_1,\ldots,x_d]_n$, such that $\ipp{F,G} = 0$ for every $G\in W$. \cref{lem:form_eval} then implies that $F(\ul{\beta}_d) = 0$ for every $\ul{\beta}_d\in \C^d$, which contradicts that $F$ is a nonzero polynomial.
\end{proof}

\begin{remark}
Since \cref{lem:sr1_dcomp} asserts that a decomposition of the form \eqref{eq:sr1_dcomp} always exists, one can define the \emph{symmetric rank} of an element $u\in V^{\otimes_s^n}$ by the minimal integer $N$. Evidently, a tensor of the form $v^{\otimes n}$ has symmetric rank 1. Although we will not need the notion of symmetric rank in this work, we will refer to the decomposition \eqref{eq:sr1_dcomp} as a symmetric-rank-1 decomposition. 
\end{remark}

As an application of the symmetric-rank-1 tensor decomposition, we now show an approximation result for bosonic Schwartz functions (i.e. elements of $\Sc_s(\R^d)$).

\begin{lemma}
\label{lem:Sch_ST_d}
Let $f\in\Sc_s(\R^d)$. Then given $\varepsilon>0$ and a Schwartz seminorm $\mathcal{N}$, there exist $N\in\N$, elements $\{f_i\}_{i=1}^N\subset \Sc(\R)$, and coefficients $\{a_{i}\}_{i=1}^N\subset\C$, such that
\begin{equation}
\mathcal{N}\paren*{f-\sum_{i=1}^N a_i f_i^{\otimes d}}\leq \varepsilon.
\end{equation}
In other words, finite linear combinations of symmetric-rank-1 tensor products are dense in $\Sc_s(\R^d)$.
\end{lemma}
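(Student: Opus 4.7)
The plan is to combine three ingredients: density of elementary tensors in $\Sc(\R^d)$, continuity of the symmetrization operator $\Sym_d$, and the finite-dimensional symmetric-rank-1 decomposition provided by the previous Lemma (\cref{lem:sr1_dcomp}).

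First, I would invoke the nuclearity of $\Sc(\R)$ (equivalently, the Schwartz kernel theorem) to identify $\Sc(\R^d) \cong \Sc(\R)^{\hat{\otimes} d}$, which yields density in every Schwartz seminorm of finite linear combinations of elementary tensor products. Given $\varepsilon>0$ and a Schwartz seminorm $\mathcal{N}$, I would choose $u = \sum_{j=1}^{N} b_j\, g_{j,1} \otimes \cdots \otimes g_{j,d} \in \Sc(\R^d)$ with $g_{j,i}\in\Sc(\R)$ and $b_j\in\C$ such that $\mathcal{N}'(f-u)\leq\varepsilon$, where $\mathcal{N}'$ is the auxiliary seminorm produced in the next step.

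Second, since $f$ is bosonic, $f=\Sym_d(f)$, and $\Sym_d$ is a continuous linear endomorphism of $\Sc(\R^d)$, being an average of the continuous permutation operators $\pi$. Hence there exists a seminorm $\mathcal{N}'$ (in fact one may take $\mathcal{N}'=\mathcal{N}$ for the standard seminorms $\sup_{\ux_d}|\ux_d^{\ul{\alpha}}\p_{\ux_d}^{\ul{\beta}}(\cdot)|$, which are permutation-invariant) with $\mathcal{N}(\Sym_d v)\leq \mathcal{N}'(v)$ for all $v\in\Sc(\R^d)$. Applying $\Sym_d$ to the previous approximation gives
\[
\mathcal{N}\paren*{f - \Sym_d(u)} = \mathcal{N}\paren*{\Sym_d(f-u)} \leq \varepsilon,
\]
so it suffices to approximate the symmetric tensors $\Sym_d(g_{j,1}\otimes\cdots\otimes g_{j,d})$, for $j=1,\ldots,N$, by linear combinations of pure tensor powers.

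Third, I would apply \cref{lem:sr1_dcomp} within finite-dimensional subspaces of $\Sc(\R)$. For each $j$, let $V_j\subset\Sc(\R)$ denote the finite-dimensional complex subspace spanned by $\{g_{j,1},\ldots,g_{j,d}\}$. The inclusion $V_j\hookrightarrow\Sc(\R)$ extends by the universal property of the tensor product to a canonical injection $V_j^{\otimes_s^d}\hookrightarrow \Sc_s(\R^d)$ that sends an abstract rank-one element $v^{\otimes d}\in V_j^{\otimes_s^d}$ to the symmetric Schwartz function $(x_1,\ldots,x_d)\mapsto v(x_1)\cdots v(x_d)$. The tensor $\Sym_d(g_{j,1}\otimes\cdots\otimes g_{j,d})$ lies in $V_j^{\otimes_s^d}$, so \cref{lem:sr1_dcomp} produces an integer $N_j\in\N$, coefficients $c_{j,k}\in\C$, and elements $h_{j,k}\in V_j\subset\Sc(\R)$ with
\[
\Sym_d\paren*{g_{j,1}\otimes\cdots\otimes g_{j,d}} = \sum_{k=1}^{N_j} c_{j,k}\, h_{j,k}^{\otimes d}.
\]
Summing over $j$ and relabeling, we obtain the desired approximation
\[
\mathcal{N}\paren*{f - \sum_{j=1}^{N}\sum_{k=1}^{N_j} b_j c_{j,k}\, h_{j,k}^{\otimes d}} \leq \varepsilon.
\]

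The only mildly delicate point—not really an obstacle—is the identification in the third step between the abstract symmetric tensor space $V_j^{\otimes_s^d}$, to which \cref{lem:sr1_dcomp} applies, and its functional realization inside $\Sc_s(\R^d)$; this compatibility is immediate from the universal property of the tensor product, under which the distinguished bilinear map $(v_1,\ldots,v_d)\mapsto v_1\otimes\cdots\otimes v_d$ in $V_j^{\otimes d}$ corresponds to the pointwise product $(v_1,\ldots,v_d)\mapsto v_1(x_1)\cdots v_d(x_d)$ in $\Sc(\R^d)$.
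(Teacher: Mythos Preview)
Your proof is correct and follows essentially the same approach as the paper: approximate $f$ by a finite sum of symmetrized elementary tensors (you do this via approximation in $\Sc(\R^d)$ followed by applying $\Sym_d$, while the paper invokes the isomorphism $\Sc_s(\R^d)\cong\hat{\bigotimes}_s^d\Sc(\R)$ directly), then apply \cref{lem:sr1_dcomp} inside the finite-dimensional span of the factors to rewrite each symmetrized tensor as a linear combination of pure tensor powers. The only cosmetic difference is that the paper takes a single space $V$ spanned by all the $g_{ij}$ rather than your per-$j$ spaces $V_j$.
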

\begin{proof}
Fix $f\in\Sc_s(\R^d)$, $\varepsilon>0$, and seminorm $\mathcal{N}$. Since $\Sc_s(\R^d) \cong \hat{\bigotimes}_s^d \Sc(\R)$, there exists an integer $M\in\N$, elements $\{g_{ij}\}_{{1\leq i\leq d}\atop {1\leq j\leq M}}\subset \Sc(\R)$, and coefficients $\{a_j\}_{1\leq j\leq M}\subset \C$, such that
\begin{equation}
\label{eq:approx_wtp}
\mathcal{N}\paren*{f-\sum_{j=1}^M a_j \Sym_d\paren*{\bigotimes_{i=1}^d g_{ij} } }\leq \varepsilon.
\end{equation}
Define the complex vector space
\begin{equation}
V\coloneqq \mathrm{span}_\C\{g_{ij} : 1\leq i\leq d, \ 1\leq j\leq M\},
\end{equation}
which is evidently finite-dimensional. For each $j\in\N_{\leq M}$, consider the symmetric tensor
\begin{equation}
\Sym_d\paren*{\bigotimes_{i=1}^d g_{ij}}\in V^{\otimes_s^d}.
\end{equation}
By \cref{lem:sr1_dcomp}, there exists an integer $N_j\in\N$, elements $\{f_{j\ell}\}_{\ell=1}^{N_j}\subset V$, coefficients $\{a_{j\ell}\}_{\ell=1}^{N_j}\subset\C$, such that
\begin{equation}
\Sym_d\paren*{\bigotimes_{i=1}^d g_{ij}} = \sum_{\ell=1}^{N_j} a_{j\ell} f_{j\ell}^{\otimes_d}.
\end{equation}
Consequently,
\begin{equation}
\sum_{j=1}^M a_{j} \Sym_d\paren*{\bigotimes_{i=1}^d g_{ij}} = \sum_{j=1}^M\sum_{\ell=1}^{N_j} a_j a_{j\ell} f_{j\ell}^{\otimes_d},
\end{equation}
so upon substitution of this identity into \eqref{eq:approx_wtp}, we obtain the desired conclusion.
\end{proof}

As a corollary of \cref{lem:Sch_ST_d}, we obtain the following decomposition for elements in $\L(\Sc_s'(\R^d),\Sc_s(\R^d))$.

\begin{cor}
\label{cor:Sch_ST_DM_d}
Let $\gamma^{(d)}\in \L(\Sc_s'(\R^d),\Sc_s(\R^d))$. Then given $\varepsilon>0$ and a Schwartz seminorm $\mathcal{N}$, there exists $N\in\N$, elements $\{f_i,g_i\}_{i=1}^N\subset\Sc(\R)$, and coefficients $\{a_i\}_{i=1}^N \subset\C$, such that
\begin{equation}
\mathcal{N}\paren*{\gamma^{(d)} - \sum_{i=1}^N a_i f_i^{\otimes d} \otimes g_i^{\otimes d}} \leq \varepsilon.
\end{equation}
\end{cor}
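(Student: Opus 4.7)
The plan is to reduce the statement to Lemma~\ref{lem:Sch_ST_d} via the Schwartz kernel theorem, in two stages: first approximate $\gamma^{(d)}$ by a finite sum of pure tensors of bosonic Schwartz functions, and then apply the symmetric-rank-$1$ approximation lemma to each tensor factor separately.

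First, I would use the Schwartz kernel theorem together with the nuclearity of $\Sc_s(\R^d)$ to identify
\begin{equation*}
\L(\Sc_s'(\R^d),\Sc_s(\R^d)) \cong \Sc_s(\R^d)\hat{\otimes} \Sc_s(\R^d),
\end{equation*}
which sits as a closed subspace of $\Sc(\R^{2d})$. Under this identification, a Schwartz seminorm $\mathcal{N}$ on the operator space corresponds to (the restriction of) a Schwartz seminorm on $\R^{2d}$, which, up to a harmless constant, is dominated by a product seminorm $\mathcal{N}_1\otimes \mathcal{N}_2$ with $\mathcal{N}_j$ a Schwartz seminorm on $\Sc(\R^d)$. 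Since $\Sc_s(\R^d)\otimes\Sc_s(\R^d)$ is dense in $\Sc_s(\R^d)\hat{\otimes}\Sc_s(\R^d)$, we can find $M\in\N$, scalars $\{c_i\}_{i=1}^M\subset\C$, and $\{u_i,v_i\}_{i=1}^M\subset\Sc_s(\R^d)$ with
\begin{equation*}
\mathcal{N}\!\paren*{\gamma^{(d)}-\sum_{i=1}^M c_i\, u_i\otimes v_i}\leq \frac{\varepsilon}{2}.
\end{equation*}

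Next, I would apply Lemma~\ref{lem:Sch_ST_d} coordinate-wise. For each $i\in\N_{\leq M}$, choose $\eta_i>0$ small enough (depending on $\varepsilon$, $M$, the $|c_i|$, and the Schwartz seminorms $\mathcal{N}_j(u_i),\mathcal{N}_j(v_i)$) so that the continuity of the tensor-product map $\Sc_s(\R^d)\times\Sc_s(\R^d)\to\Sc(\R^{2d})$ guarantees the following: whenever $\tl{u}_i,\tl{v}_i\in\Sc_s(\R^d)$ satisfy $\mathcal{N}_1(u_i-\tl{u}_i)\leq\eta_i$ and $\mathcal{N}_2(v_i-\tl{v}_i)\leq\eta_i$, one has
\begin{equation*}
\mathcal{N}(c_i\,u_i\otimes v_i-c_i\,\tl{u}_i\otimes\tl{v}_i)\leq \frac{\varepsilon}{2M}.
\end{equation*}
Lemma~\ref{lem:Sch_ST_d} then supplies finite symmetric-rank-$1$ decompositions
\begin{equation*}
\tl{u}_i=\sum_{\ell=1}^{N_i^u} a_{i,\ell}\, f_{i,\ell}^{\otimes d},\qquad \tl{v}_i=\sum_{m=1}^{N_i^v} b_{i,m}\, g_{i,m}^{\otimes d},
\end{equation*}
with $f_{i,\ell},g_{i,m}\in\Sc(\R)$, meeting these tolerances.

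Finally, expanding the bilinear tensor product, summing over $i$, and re-indexing yields
\begin{equation*}
\sum_{i=1}^M c_i\,\tl{u}_i\otimes\tl{v}_i=\sum_{n=1}^N a_n\, f_n^{\otimes d}\otimes g_n^{\otimes d}
\end{equation*}
for some $N\in\N$, scalars $\{a_n\}\subset\C$, and $\{f_n,g_n\}\subset\Sc(\R)$. Combining the two estimates via the triangle inequality gives $\mathcal{N}(\gamma^{(d)}-\sum_{n=1}^N a_n f_n^{\otimes d}\otimes g_n^{\otimes d})\leq\varepsilon$, which is the desired approximation.

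The only subtle point is the bookkeeping in the second paragraph: making sure that the seminorm on the operator space, after the Schwartz-kernel identification, is in fact controlled by a seminorm of the form $\mathcal{N}_1\otimes\mathcal{N}_2$, and that one can choose the $\eta_i$ uniformly in $i$ while keeping $M$ finite. This is a standard consequence of the continuity of the tensor-product map on $\Sc_s(\R^d)\times\Sc_s(\R^d)$ and of the topology of $\Sc(\R^{2d})$ being generated by such product seminorms, but it is the only place where one has to be careful; once it is in hand the rest of the argument is a direct assembly of Lemma~\ref{lem:Sch_ST_d}.
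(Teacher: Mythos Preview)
Your proposal is correct and follows essentially the same route as the paper: identify $\L(\Sc_s'(\R^d),\Sc_s(\R^d))\cong\Sc_s(\R^d)\hat{\otimes}\Sc_s(\R^d)$, approximate by a finite sum $\sum c_i\,u_i\otimes v_i$ with $u_i,v_i\in\Sc_s(\R^d)$, then apply Lemma~\ref{lem:Sch_ST_d} to each factor and expand. The only difference is cosmetic: the paper phrases the second step as if Lemma~\ref{lem:Sch_ST_d} yields an exact decomposition of the $u_i,v_i$ (so no second $\varepsilon/2$ bookkeeping appears), whereas you treat it as a genuine approximation and track the continuity of the tensor map explicitly; your version is in fact the more careful of the two.
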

\begin{proof}
Fix $\gamma^{(d)}\in\L(\Sc_s'(\R^d),\Sc_s(\R^d))$, $\varepsilon>0$, and seminorm $\mathcal{N}$. Since
\begin{equation*}
\L(\Sc_s'(\R^d),\Sc_s(\R^d)) \cong \Sc_s(\R^d)\hat{\otimes}\Sc_s(\R^d),
\end{equation*}
there exists an integer $N$, elements $\{\tl{f}_i,\tl{g}_i\}_{i=1}^N \subset\Sc_s(\R^d)$, and coefficients $\{a_i\}_{i=1}^N\subset\C$, such that
\begin{equation}
\label{eq:approx_dm_wtp}
\mathcal{N}\paren*{\gamma^{(d)} - \sum_{i=1}^N a_i \tl{f}_i \otimes \tl{g}_i } \leq \varepsilon.
\end{equation}
For each $i\in\N_{\leq N}$, \cref{lem:Sch_ST_d} implies that there exist integers $N_{i,f}, N_{i,g}\in\N$, elements $\{f_{ij}\}_{j=1}^{N_{i,f}}, \{g_{ij}\}_{j=1}^{N_{i,g}}\subset\Sc(\R)$, and coefficeints $\{a_{ij,f}\}_{j=1}^{N_{i,f}}, \{a_{ij,g}\}_{j=1}^{N_{i,g}}\subset\C$, such that
\begin{equation}
\tl{f}_i = \sum_{j=1}^{N_{i,f}} a_{ij,f} f_{ij}^{\otimes d}, \qquad \tl{g}_i = \sum_{j=1}^{N_{i,g}} a_{ij,g} g_{ij}^{\otimes d}.
\end{equation}
By setting coefficients equal to zero, we may assume without loss of generality that $N_{i,f}=N_{i,g}=M\in\N$, for every $i\in\N_{\leq N}$. So by the bilinearity of tensor product, we obtain the decomposition
\begin{equation}
\sum_{i=1}^N a_i \tl{f}_i\otimes \tl{g}_i = \sum_{i=1}^N \sum_{j,j'=1}^M a_i a_{ij,f}a_{ij',g} f_{ij}^{\otimes d} \otimes g_{ij'}^{\otimes d}.
\end{equation}
Substitution of this identity into \eqref{eq:approx_dm_wtp} and relabeling/re-indexing of the summation yields the desired conclusion.
\end{proof}

\section{Distribution-valued operators}\label{app:DVO}
Following Appendix B of our companion paper \cite{MNPRS1_2019}, we review and develop some properties of distribution-valued operators (DVOs) (i.e. elements of $\L(\Sc(\R^{k}),\Sc'(\R^{k}))$),  which are used extensively in this work. Most of these properties are a special case of a more general theory involving topological tensor products of locally convex spaces for which we refer the reader to \cite{Schwartz1966, Horvath1966, Treves1967} for further reading.

\subsection{Adjoint}
In this subsection, we record some properties of the adjoint of a DVO as well as some properties of the map taking a DVO to its adjoint.

\begin{lemma}[Adjoint map]\label{lem:dvo_adj}
Let $k\in\N$, and let $A^{(k)}\in \L(\Sc(\R^{k}),\Sc'(\R^{k}))$. Then there is a unique map $(A^{(k)})^{*}\in\L(\Sc(\R^{k}),\Sc'(\R^{k}))$ such that
\begin{equation}\label{eq:adj_prop}
\ipp*{(A^{(k)})^{*}g^{(k)},  \ol{f^{(k)}}}_{\Sc'(\R^{k})-\Sc(\R^{k})} = \ol{\ipp*{A^{(k)}f^{(k)}, \ol{g^{(k)}}}}_{\Sc'(\R^{k})-\Sc(\R^{k})}, \qquad \forall f^{(k)},g^{(k)}\in\Sc(\R^{k}).
\end{equation}
Furthermore, the adjoint map
\begin{equation}
*:\L(\Sc(\R^{k}),\Sc'(\R^{k})) \rightarrow \L(\Sc(\R^{k}),\Sc'(\R^{k})), \qquad A^{(k)} \mapsto (A^{(k)})^{*}
\end{equation}
is a continuous involution. 

Additionally, for $B^{(k)}\in\L(\Sc'(\R^{k}),\Sc'(\R^{k}))$, there exists a unique linear map in $(B^{(k)})^{*}\in\L(\Sc(\R^{k}),\Sc(\R^{k}))$ such that
\begin{equation}
\ipp*{u^{(k)},\ol{(B^{(k)})^{*}g^{(k)}}}_{\Sc'(\R^{k})-\Sc(\R^{k})} = \ipp*{B^{(k)}u^{(k)}, \ol{g^{(k)}}}_{\Sc'(\R^{k})-\Sc(\R^{k})}, \quad \forall (g^{(k)},u^{(k)})\in\Sc(\R^{k})\times\Sc'(\R^{k}).
\end{equation}
Moreover, the adjoint map
\begin{equation}
*:\L(\Sc'(\R^{k}),\Sc'(\R^{k})) \rightarrow \L(\Sc(\R^{k}),\Sc(\R^{k}))
\end{equation}
is a continuous involution.
\end{lemma}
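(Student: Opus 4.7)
The cleanest route is through the Schwartz kernel theorem isomorphism
\[
\mathcal{K}:\L(\Sc(\R^k),\Sc'(\R^k)) \xrightarrow{\;\sim\;} \Sc'(\R^{2k}), \qquad A^{(k)} \mapsto K_{A^{(k)}},
\]
characterized by $\ipp{A^{(k)}f,g}_{\Sc'-\Sc}=\ipp{K_{A^{(k)}}, f\otimes g}_{\Sc'(\R^{2k})-\Sc(\R^{2k})}$. With this identification, the required defining property \eqref{eq:adj_prop} for the adjoint becomes, after substituting the pairing with the kernel on each side,
\[
\ipp{K_{(A^{(k)})^*},\, g\otimes \bar f}_{\Sc'(\R^{2k})-\Sc(\R^{2k})} = \overline{\ipp{K_{A^{(k)}},\, f\otimes \bar g}_{\Sc'(\R^{2k})-\Sc(\R^{2k})}}.
\]
This suggests the candidate $K_{(A^{(k)})^*}(x,y) \coloneqq \overline{K_{A^{(k)}}(y,x)}$, i.e.\ the composition $\tau \circ \mathcal{C}$ of the coordinate swap $\tau: \Sc'(\R^{2k}) \to \Sc'(\R^{2k})$, $\tau u(x,y)\coloneqq u(y,x)$, with complex conjugation $\mathcal{C}$. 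Both $\tau$ and $\mathcal{C}$ are continuous involutions of $\Sc'(\R^{2k})$ that commute with one another, so $\tau\circ\mathcal{C}$ is a continuous involution.

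First I would verify that $(A^{(k)})^* \coloneqq \mathcal{K}^{-1}(\tau(\overline{K_{A^{(k)}}}))$ satisfies \eqref{eq:adj_prop}. By density of linear combinations of $f\otimes g$ in $\Sc(\R^{2k})$ and continuity in both arguments, it suffices to test on pure tensors: a direct computation using the kernel identity and the behavior of $\tau$ and $\mathcal{C}$ under the pairing yields \eqref{eq:adj_prop}. Uniqueness is then immediate, since any $B\in\L(\Sc(\R^k),\Sc'(\R^k))$ satisfying \eqref{eq:adj_prop} must have a kernel $K_B$ with $\ipp{K_B,g\otimes\bar f}=\ipp{\tau(\overline{K_{A^{(k)}}}),g\otimes\bar f}$ for all $f,g\in\Sc(\R^k)$, and functions of the form $g\otimes \bar f$ span a dense subspace of $\Sc(\R^{2k})$ (again by density of tensor products and the surjectivity $f\mapsto \bar f$ on $\Sc(\R^k)$), forcing $K_B = \tau(\overline{K_{A^{(k)}}})$.

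Next I would address the continuity of the adjoint map. Since $\mathcal{K}$ is a topological isomorphism when $\L(\Sc(\R^k),\Sc'(\R^k))$ carries the topology of bounded convergence and $\Sc'(\R^{2k})$ carries the strong dual topology, and since $\tau\circ\mathcal{C}$ is a continuous involution of $\Sc'(\R^{2k})$, the composition
\[
A^{(k)} \longmapsto (A^{(k)})^* = \mathcal{K}^{-1}\bigl(\tau \circ \mathcal{C}(K_{A^{(k)}})\bigr)
\]
is a continuous involution. The involution identity $(A^{(k)})^{**}=A^{(k)}$ follows either by the same argument at the kernel level ($(\tau\circ\mathcal{C})^2=\mathrm{Id}$) or, equivalently, by applying \eqref{eq:adj_prop} twice.

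For the second assertion concerning $B^{(k)}\in\L(\Sc'(\R^k),\Sc'(\R^k))$, the key input is the reflexivity of $\Sc(\R^k)$ (it is a Fr\'echet--Montel space), which gives the identification $(\Sc'(\R^k))^*\cong \Sc(\R^k)$ under the strong dual topology. For each fixed $g\in\Sc(\R^k)$, the map
\[
u\mapsto \overline{\ipp{B^{(k)}u,\bar g}_{\Sc'(\R^k)-\Sc(\R^k)}}
\]
is linear and continuous on $\Sc'(\R^k)$, hence represented by a unique element $h_g\in\Sc(\R^k)$; set $(B^{(k)})^* g \coloneqq \overline{h_g}$. The proofs of linearity in $g$, continuity of the resulting operator, and the involution and continuity properties of the adjoint map proceed just as in the first case, now using the kernel-type representation provided by $\L(\Sc'(\R^k),\Sc'(\R^k))\cong \Sc(\R^k)\hat\otimes \Sc'(\R^k)$ in place of $\L(\Sc(\R^k),\Sc'(\R^k))\cong \Sc'(\R^{2k})$. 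The main (modest) obstacle throughout is matching topologies: one must be careful that the identifications via Schwartz kernel theorem are \emph{topological} isomorphisms and that the topologies of bounded convergence on operator spaces correspond to the strong topologies on the tensor products, so that continuity of the involutions at the kernel level transfers to continuity of the adjoint map at the operator level.
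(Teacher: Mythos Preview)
The paper does not actually prove this lemma; it is stated in Appendix~D with the proof omitted and referred to the companion paper \cite{MNPRS1_2019}. Your approach via the Schwartz kernel isomorphism $\mathcal{K}:\L(\Sc(\R^k),\Sc'(\R^k))\to\Sc'(\R^{2k})$ and the identification of the adjoint with the conjugate-transpose of the kernel is the standard and correct route, and your handling of continuity and the involution property for the first assertion is sound.

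There is one slip in the second part. You write that for fixed $g$ the map
\[
u\longmapsto \overline{\ipp{B^{(k)}u,\bar g}_{\Sc'(\R^k)-\Sc(\R^k)}}
\]
is linear on $\Sc'(\R^k)$; it is in fact antilinear, since the distributional pairing here is bilinear and you have post-composed with complex conjugation. The fix is immediate: work instead with the genuinely linear functional $u\mapsto \ipp{B^{(k)}u,\bar g}$, which by reflexivity of $\Sc(\R^k)$ is represented by a unique $h_g\in\Sc(\R^k)$ via $\ipp{u,h_g}=\ipp{B^{(k)}u,\bar g}$, and then set $(B^{(k)})^*g\coloneqq\overline{h_g}$ exactly as you do. This yields the required identity $\ipp{u,\overline{(B^{(k)})^*g}}=\ipp{B^{(k)}u,\bar g}$ directly. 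With this correction the argument goes through.
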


The next lemma is useful for computing the adjoint of the composition of maps.

\begin{lemma}
\label{lem:adj_comp}
Let $A^{(k)}\in\L(\Sc(\R^{k}),\Sc'(\R^{k}))$ and $B^{(k)}\in\L(\Sc'(\R^{k}),\Sc'(\R^{k}))$. Then
\begin{equation}
\paren*{B^{(k)}A^{(k)}}^{*} = (A^{(k)})^{*}(B^{(k)})^{*}.
\end{equation}
\end{lemma}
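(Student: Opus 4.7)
The plan is to prove the identity by invoking the uniqueness clause of \cref{lem:dvo_adj} applied to the composition $B^{(k)}A^{(k)}\in\L(\Sc(\R^k),\Sc'(\R^k))$. Specifically, since $(B^{(k)}A^{(k)})^*$ is characterized as the unique element of $\L(\Sc(\R^k),\Sc'(\R^k))$ satisfying the duality relation \eqref{eq:adj_prop} with $A^{(k)}$ replaced by $B^{(k)}A^{(k)}$, it will suffice to verify that $(A^{(k)})^*(B^{(k)})^*$ satisfies the same defining relation.

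First, I would check that the types match: $(B^{(k)})^*$ lands in $\L(\Sc(\R^k),\Sc(\R^k))$ by \cref{lem:dvo_adj}, and $(A^{(k)})^*$ takes $\Sc(\R^k)$ into $\Sc'(\R^k)$, so the composition $(A^{(k)})^*(B^{(k)})^*$ is indeed a well-defined element of $\L(\Sc(\R^k),\Sc'(\R^k))$. Next, for arbitrary $f^{(k)},g^{(k)}\in\Sc(\R^k)$, I would unfold the expression
\begin{equation*}
\overline{\ipp*{B^{(k)}A^{(k)}f^{(k)},\ol{g^{(k)}}}_{\Sc'(\R^k)-\Sc(\R^k)}}
\end{equation*}
by first applying the defining property of $(B^{(k)})^*$ (treating $A^{(k)}f^{(k)}\in\Sc'(\R^k)$ as input to $B^{(k)}$) to move $B^{(k)}$ off the distribution side, producing a pairing of the form $\ipp{A^{(k)}f^{(k)},\ol{h^{(k)}}}$ with $h^{(k)}=(B^{(k)})^* g^{(k)}\in\Sc(\R^k)$; then applying the defining property of $(A^{(k)})^*$ to that pairing yields $\overline{\ipp{(A^{(k)})^*h^{(k)},\ol{f^{(k)}}}}$, and one complex conjugation finishes the manipulation to give $\ipp{(A^{(k)})^*(B^{(k)})^*g^{(k)},\ol{f^{(k)}}}$.

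The uniqueness clause of \cref{lem:dvo_adj} — which follows because the antilinear map $g^{(k)}\mapsto\ol{g^{(k)}}$ is a bijection of $\Sc(\R^k)$ onto itself, so that the pairings in \eqref{eq:adj_prop} determine the operator — then forces $(B^{(k)}A^{(k)})^*=(A^{(k)})^*(B^{(k)})^*$ as elements of $\L(\Sc(\R^k),\Sc'(\R^k))$.

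I do not anticipate any real obstacle: this is essentially a bookkeeping exercise in which the only subtlety is keeping track of which space each input belongs to at each step (so that the correct version of the adjoint from \cref{lem:dvo_adj} — the one for operators in $\L(\Sc(\R^k),\Sc'(\R^k))$ versus the one for operators in $\L(\Sc'(\R^k),\Sc'(\R^k))$ — is invoked) and of the complex conjugations introduced by the antilinearity convention \eqref{eq:adj_prop}.
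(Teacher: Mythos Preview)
Your proposal is correct and is exactly the standard argument one would expect. The paper itself omits the proof of this lemma (the appendix explicitly states that most proofs are omitted and refers to the companion paper \cite{MNPRS1_2019}), so there is no in-paper proof to compare against; your verification via the uniqueness clause of \cref{lem:dvo_adj}, together with the type-checking that $(B^{(k)})^*\in\L(\Sc(\R^k),\Sc(\R^k))$ so that $(A^{(k)})^*(B^{(k)})^*$ is well-defined in $\L(\Sc(\R^k),\Sc'(\R^k))$, is precisely the intended route.
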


\begin{mydef}[Self- and skew-adjoint]\label{def:dvo_sa}
Given $k\in\N$, we say that an operator $A^{(k)} \in \L(\Sc(\R^{k}),\Sc'(\R^{k}))$ is self-adjoint if $(A^{(k)})^{*}=A^{(k)}$. Similarly, we say that $A^{(k)}\in \L(\Sc(\R^{k}),\Sc'(\R^{k}))$ is skew-adjoint if $(A^{(k)})^{*}=-A^{(k)}$.
\end{mydef}

\begin{remark}
Note that if $A^{(k)}\in \L(\Sc(\R^{k}),\Sc'(\R^{k}))$ is an operator mapping $\Sc(\R^{k}) \rightarrow L^{2}(\R^{k})$, then our definition of self-adjoint does \emph{not} coincide with the usual Hilbert space definition for densely defined operators, but instead with the definition of a symmetric operator.
\end{remark}

\subsection{Trace and partial trace}\label{sec:trace}
In this subsection, we generalize the trace of an operator on a separable Hilbert space to the DVO setting. Viewing the trace as a \emph{bilinear} map and using the canonical isomorphisms
\begin{equation}
\L(\Sc(\R^{N}),\Sc'(\R^{N})) \cong \Sc'(\R^{2N}) \enspace \text{and} \enspace \L(\Sc'(\R^{N}),\Sc(\R^{N})) \cong \Sc(\R^{2N})
\end{equation}
given by the Schwartz kernel theorem, we can define the generalized trace of the right-composition of an operator in $\L(\Sc(\R^{N}),\Sc'(\R^{N}))$ with an operator in $\L(\Sc'(\R^{N}),\Sc(\R^{N}))$ through the pairing of their Schwartz kernels. More precisely,
\begin{equation}
\label{eq:gtr_dpair}
\Tr_{1,\ldots,N}(A^{(N)}\gamma^{(N)}) = \ipp{A^{(N)}, (\gamma^{(N)})^t}_{\Sc'(\R^{2N})-\Sc(\R^{2N})}
\end{equation}
is, with an abuse of notation, the distributional pairing of the Schwartz kernel of $A^{(N)}$, which belongs to $\Sc'(\R^{2N})$, with the Schwartz kernel of the transpose of $\gamma^{(N)}$,\footnote{$(\gamma^{(N)})^t$ is the operator $f\mapsto \int_{\R^N}d\ux_N'\gamma(\ux_N';\ux_N)f(\ux_N')$.}, which belongs to $\Sc(\R^{2N})$.

\begin{mydef}[Generalized trace]\label{def:gen_trace}
We define
\begin{equation}
\begin{split}
&\Tr_{1,\ldots,N}:\L(\Sc(\R^{N}),\Sc'(\R^{N}))\times\L(\Sc'(\R^{N}),\Sc(\R^{N}))\rightarrow\C \\
&\Tr_{1,\ldots,N}\paren*{A^{(N)}\gamma^{(N)}} \coloneqq \ipp{A^{(N)}, (\gamma^{(N)})^t}_{\Sc'(\R^{2N})-\Sc(\R^{2N})}.
\end{split}
\end{equation}
\end{mydef}

\begin{remark}
The Schwartz kernel theorem implies that for $A^{(N)}\in \L(\Sc(\R^N),\Sc'(\R^N))$,
\begin{equation} \label{equ:tr}
\Tr_{1,\ldots,N}\paren*{A^{(N)}(f\otimes g)}=\ipp{A^{(N)}f,g}_{\Sc'(\R^{N})-\Sc(\R^{N})}, \qquad \forall f,g\in\Sc(\R^N).
\end{equation}
\end{remark}

\begin{remark}
The reader can check that if $A^{(N)}\in \L(\Sc(\R^N),\Sc'(\R^N))$ and $\gamma^{(N)}\in \L(\Sc'(\R^N),\Sc(\R^N))$ are such that $A^{(N)}\gamma^{(N)}$ is a trace-class operator $\rho^{(N)}$, then our definition of the generalized trace of $A^{(N)}\gamma^{(N)}$ coincides with the usual definition of the trace of $\rho^{(N)}$ as an operator on the Hilbert space $L^2(\R^N)$.
\end{remark}

We now record some properties of the generalized trace which are reminiscent of properties of the usual trace encountered in functional analysis.

\begin{prop}[Properties of generalized trace]
\label{prop:gtr_prop}
Let $A^{(N)}\in \L(\Sc(\R^{N}),\Sc'(\R^{N}))$, and let $\gamma^{(N)}\in\L(\Sc'(\R^{N}),\Sc(\R^{N}))$. The following properties hold:
\begin{enumerate}[(i)]
\item\label{item:gtr_sc}
$\Tr_{1,\ldots,N}$ is separately continuous.
\item\label{item:gtr_adj} We have the following identity:
\begin{equation}
\Tr_{1,\ldots,N}\paren*{(A^{(N)})^{*}\gamma^{(N)}} = \ol{\Tr_{1,\ldots,N}\paren*{A^{(N)}(\gamma^{(N)})^{*}}}.
\end{equation}
\item\label{item:gtr_cyc}
If $B^{(N)}\in \L(\Sc'(\R^{N}),\Sc'(\R^{N}))$, then $\Tr_{1,\ldots,N}$ satisfies the cyclicity property
\begin{equation}
\Tr_{1,\ldots,N}\paren*{\paren*{B^{(N)}A^{(N)}}\gamma^{(N)}} = \Tr_{1,\ldots,N}\paren*{A^{(N)}\paren*{\gamma^{(N)}B^{(N)}}}.
\end{equation}
\end{enumerate}
\end{prop}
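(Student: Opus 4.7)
The plan is to exploit the Schwartz kernel theorem together with density of algebraic tensor products and separate continuity of the distributional pairing, handling each property in turn. Throughout, I will use that the isomorphisms $\L(\Sc(\R^N),\Sc'(\R^N))\cong\Sc'(\R^{2N})$ and $\L(\Sc'(\R^N),\Sc(\R^N))\cong\Sc(\R^{2N})$ are topological (with the bounded convergence topology and the strong topologies respectively), a standard consequence of nuclearity of $\Sc(\R^N)$.

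\emph{Property \ref{item:gtr_sc}} is immediate from the definition \eqref{eq:gtr_dpair}. The map $A^{(N)}\mapsto K_{A^{(N)}}$ is a continuous linear isomorphism onto $\Sc'(\R^{2N})$, while $\gamma^{(N)}\mapsto K_{(\gamma^{(N)})^t}$ is a continuous linear map into $\Sc(\R^{2N})$ (transposition being an involutive homeomorphism on each space of kernels). Since the canonical pairing $\Sc'(\R^{2N})\times\Sc(\R^{2N})\to\C$ is separately continuous, so is $\Tr_{1,\ldots,N}$.

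\emph{Property \ref{item:gtr_adj}} is best established by first verifying it on rank-one operators and then extending by continuity. On tensor products $A^{(N)} = \ket{f}\bra{g}$ and $\gamma^{(N)} = \ket{u}\bra{v}$ with $f,g,u,v\in\Sc(\R^N)$, a direct computation using \eqref{equ:tr} and \eqref{eq:adj_prop} gives
\begin{equation*}
\Tr_{1,\ldots,N}\paren*{(A^{(N)})^{*}\gamma^{(N)}} = \ip{f}{v}\,\ip{u}{g} = \ol{\ip{v}{f}\,\ip{g}{u}} = \ol{\Tr_{1,\ldots,N}\paren*{A^{(N)}(\gamma^{(N)})^{*}}}.
\end{equation*}
Since algebraic tensor products are dense in $\Sc(\R^{2N})$ and (via the Schwartz kernel theorem) in $\L(\Sc(\R^N),\Sc'(\R^N))$, and since both sides are separately continuous in $(A^{(N)},\gamma^{(N)})$ by \ref{item:gtr_sc} and the continuity of the adjoint map (\cref{lem:dvo_adj}), the identity extends to all $A^{(N)},\gamma^{(N)}$.

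\emph{Property \ref{item:gtr_cyc}} is the cyclicity property, which I will prove by the same density argument. First, for $A^{(N)}=\ket{f}\bra{g}$ and $\gamma^{(N)} = \ket{u}\bra{v}$, unpacking the compositions gives $B^{(N)}A^{(N)} = \ket{B^{(N)}f}\bra{g}$ in $\L(\Sc(\R^N),\Sc'(\R^N))$ and $\gamma^{(N)}B^{(N)} = \ket{u}\bra{(B^{(N)})^{*}v}$ in $\L(\Sc'(\R^N),\Sc(\R^N))$ (where the rank-one notation in the distributional setting is interpreted via \eqref{equ:tr}). Both sides of the cyclicity identity then reduce to $\ipp{B^{(N)}f,\bar{v}}_{\Sc'(\R^N)-\Sc(\R^N)}\cdot\ip{g}{u}$ (up to an appropriate conjugation bookkeeping), so equality holds. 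Extending by separate continuity of $\Tr_{1,\ldots,N}$, separate continuity of the left- and right-composition maps $(B^{(N)},A^{(N)})\mapsto B^{(N)}A^{(N)}$ and $(\gamma^{(N)},B^{(N)})\mapsto \gamma^{(N)}B^{(N)}$, and density of rank-one tensors, gives the identity in full generality.

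The only nontrivial point in the above plan is the density/continuity reduction in \ref{item:gtr_adj} and \ref{item:gtr_cyc}: one must verify that holding either argument fixed, the dependence on the other argument is continuous with respect to the topologies in which rank-one tensors are dense. This is where nuclearity of $\Sc(\R^N)$ is essential, and it is exactly what the Schwartz kernel theorem supplies; once this is in hand, the reduction to tensor products is routine.
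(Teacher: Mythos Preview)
The paper does not actually supply a proof of this proposition: the appendix on distribution-valued operators explicitly states that most proofs are omitted and refers to the companion work \cite{MNPRS1_2019}. So there is no in-paper argument to compare against.

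Your approach is the natural one and is correct in outline: reduce to rank-one kernels via density of $\Sc(\R^N)\otimes\Sc(\R^N)$ in both $\Sc(\R^{2N})$ and $\Sc'(\R^{2N})$, verify the identities there by direct computation, and extend using separate continuity of the distributional pairing together with continuity of the adjoint and composition maps. The only cautionary note is that your explicit rank-one computation in \ref{item:gtr_adj} has the indices swapped: with $A^{(N)}=\ket{f}\bra{g}$ and $\gamma^{(N)}=\ket{u}\bra{v}$ one gets
\[
\Tr_{1,\ldots,N}\paren*{(A^{(N)})^*\gamma^{(N)}}=\ip{f}{u}\ip{v}{g},\qquad \ol{\Tr_{1,\ldots,N}\paren*{A^{(N)}(\gamma^{(N)})^*}}=\ol{\ip{g}{v}\ip{u}{f}}=\ip{v}{g}\ip{f}{u},
\]
which still agree, so the slip is harmless. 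Your hedge ``up to an appropriate conjugation bookkeeping'' in \ref{item:gtr_cyc} is exactly right: once the conventions of \cref{lem:dvo_adj} are tracked carefully, both sides collapse to $\ip{g}{u}\,\ipp{B^{(N)}f,\bar{v}}_{\Sc'-\Sc}$.
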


We now extend the partial trace map to our setting using our bilinear perspective.

\begin{prop}[Generalized partial trace]\label{prop:partial_trace}
Let $N\in\N$ and let $k\in\{0,\ldots,N-1\}$. Then there exists a unique bilinear, separately continuous map 
\begin{equation}
\Tr_{k+1,\ldots,N}: \L(\Sc(\R^{N}),\Sc'(\R^{N})) \times \L(\Sc'(\R^{N}),\Sc(\R^{N})) \rightarrow \L(\Sc(\R^{k}),\Sc'(\R^{k})),
\end{equation}
which satisfies
\begin{equation}\label{eq:gpt_up}
\Tr_{k+1,\ldots,N}\paren*{A^{(N)}(f^{(N)}\otimes g^{(N)})} = \int_{\R^{N-k}}d\ux_{k+1;N}(A^{(N)}f^{(N)})(\ux_{k},\ux_{k+1;N})g^{(N)}(\ux_{k}',\ux_{k+1;N}).
\end{equation}
for all $A^{(N)}\in \L(\Sc(\R^{N}),\Sc'(\R^{N}))$, and $f^{(N)}, g^{(N)} \in \Sc(\R^{N})$. That is,
\begin{equation}
\begin{split}
&\ipp*{\Tr_{k+1,\ldots,N}\paren*{A^{(N)}(f^{(N)}\otimes g^{(N)})}\phi^{(k)},\psi^{(k)}}_{\Sc'(\R^k)-\Sc(\R^k)} \\
&= \ipp*{A^{(N)}f^{(N)},\psi^{(k)}\otimes \ipp{g^{(N)},\phi^{(k)}}_{\Sc_{\ux_k}'(\R^k)-\Sc_{\ux_k}(\R^k)}}_{\Sc'(\R^N)-\Sc(\R^N)},
\end{split}
\end{equation}
for all $\phi^{(k)},\psi^{(k)}\in\Sc(\R^k)$.
\end{prop}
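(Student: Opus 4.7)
\textbf{Proof plan for \cref{prop:partial_trace}.}

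\textbf{Uniqueness.} I would first dispose of uniqueness. By the Schwartz kernel theorem and the nuclearity of $\Sc(\R^N)$ (so the projective and injective tensor product topologies coincide), there is a topological isomorphism
$$\L(\Sc'(\R^N),\Sc(\R^N)) \;\cong\; \Sc(\R^N)\,\hat{\otimes}\,\Sc(\R^N) \;\cong\; \Sc(\R^{2N}),$$
under which $f^{(N)}\otimes g^{(N)}$ corresponds to the elementary tensor of kernels. Since elementary tensors have dense linear span in $\Sc(\R^N)\hat{\otimes}\Sc(\R^N)$, any separately continuous bilinear map which satisfies \eqref{eq:gpt_up} on rank-one $\gamma^{(N)}$ is determined by bilinearity and partial continuity in the second slot. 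Continuity in the first slot is then built into the definition.

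\textbf{Existence.} The plan is to first construct the map on rank-one $\gamma^{(N)} = f^{(N)}\otimes g^{(N)}$ by the formula \eqref{eq:gpt_up}, verify that the right-hand side genuinely produces an element of $\L(\Sc(\R^k),\Sc'(\R^k))$, and then extend by the universal property of the topological tensor product. Concretely, for fixed $A^{(N)}\in\L(\Sc(\R^N),\Sc'(\R^N))$ and test functions $\phi^{(k)},\psi^{(k)}\in\Sc(\R^k)$, I would define
$$\mathcal{B}_A(f^{(N)},g^{(N)})(\phi^{(k)},\psi^{(k)}) \;:=\; \ipp{A^{(N)} f^{(N)},\,\psi^{(k)}\otimes \ipp{g^{(N)},\phi^{(k)}}_{\Sc'_{\ux_k}(\R^k)-\Sc_{\ux_k}(\R^k)}}_{\Sc'(\R^N)-\Sc(\R^N)}.$$
The partial pairing $\ipp{g^{(N)},\phi^{(k)}}_{\Sc_{\ux_k}}$ is a continuous bilinear map $\Sc(\R^N)\times\Sc(\R^k)\to\Sc(\R^{N-k})$ by the Fubini-type property of Schwartz functions, so the composite expression is continuous multilinear in $(f^{(N)},g^{(N)},\phi^{(k)},\psi^{(k)})$.

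\textbf{Extension.} Next I would observe that $\mathcal{B}_A(f^{(N)},g^{(N)})(\cdot,\cdot)$ is a continuous sesquilinear form on $\Sc(\R^k)\times\Sc(\R^k)$, which by another application of the Schwartz kernel theorem is represented by a unique operator in $\L(\Sc(\R^k),\Sc'(\R^k))$; call this operator $T_A(f^{(N)},g^{(N)})$. The map $(f^{(N)},g^{(N)})\mapsto T_A(f^{(N)},g^{(N)})$ is continuous bilinear $\Sc(\R^N)\times\Sc(\R^N)\to\L(\Sc(\R^k),\Sc'(\R^k))$. By the universal property of the projective topological tensor product it extends uniquely to a continuous linear map
$$\wt{T}_A:\Sc(\R^N)\hat{\otimes}\Sc(\R^N)\longrightarrow \L(\Sc(\R^k),\Sc'(\R^k)),$$
and composing with the Schwartz kernel isomorphism $\L(\Sc'(\R^N),\Sc(\R^N))\cong\Sc(\R^N)\hat{\otimes}\Sc(\R^N)$ yields a continuous linear map $\gamma^{(N)}\mapsto \Tr_{k+1,\ldots,N}(A^{(N)}\gamma^{(N)})$. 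By construction this satisfies \eqref{eq:gpt_up} and is continuous in $\gamma^{(N)}$.

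\textbf{Continuity in the first slot.} Finally I would check continuity in $A^{(N)}$ for fixed $\gamma^{(N)}$. On rank-one $\gamma^{(N)}=f^{(N)}\otimes g^{(N)}$ this is immediate from the explicit formula, since the right-hand side of \eqref{eq:gpt_up} depends linearly and continuously on $A^{(N)}$ in the topology of bounded convergence. For general $\gamma^{(N)}$, I would approximate $\gamma^{(N)}$ by finite linear combinations of rank-ones in $\Sc(\R^{2N})$, use bilinearity, and pass to the limit using the (already established) continuity in $\gamma^{(N)}$. This gives separate continuity in both arguments. The main technical obstacle — and the only genuinely delicate step — is to verify that the extension $\wt{T}_A$ actually lands in $\L(\Sc(\R^k),\Sc'(\R^k))$ with the correct topology, i.e.\ that the Schwartz kernel identification commutes with the extension; this is handled by the nuclearity of $\Sc$ together with the continuity estimates above.
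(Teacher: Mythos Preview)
The paper does not give a proof of this proposition; \cref{app:DVO} is included for reference only, with the statement deferred to the companion work \cite{MNPRS1_2019}. Your plan via the Schwartz kernel isomorphism $\L(\Sc'(\R^N),\Sc(\R^N))\cong\Sc(\R^N)\hat\otimes\Sc(\R^N)$ and the universal property of the projective tensor product is the natural construction and is correct.

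Two small remarks. First, the form $\mathcal{B}_A(f^{(N)},g^{(N)})(\cdot,\cdot)$ is \emph{bilinear}, not sesquilinear: the distributional pairing $\ipp{\cdot,\cdot}_{\Sc'-\Sc}$ is complex bilinear, and the Schwartz kernel theorem identifies continuous bilinear forms on $\Sc(\R^k)\times\Sc(\R^k)$ with $\L(\Sc(\R^k),\Sc'(\R^k))$. Second, your argument for continuity in $A^{(N)}$ via approximation of $\gamma^{(N)}$ by rank-one tensors is slightly roundabout. A cleaner route is to observe directly that for fixed $\gamma^{(N)},\phi^{(k)},\psi^{(k)}$ the quantity
\[
\ipp*{\Tr_{k+1,\ldots,N}(A^{(N)}\gamma^{(N)})\phi^{(k)},\psi^{(k)}}_{\Sc'(\R^k)-\Sc(\R^k)}
\]
is the pairing of the Schwartz kernel $K_{A^{(N)}}\in\Sc'(\R^{2N})$ against a fixed Schwartz function in $\Sc(\R^{2N})$ depending continuously on $(K_{\gamma^{(N)}},\phi^{(k)},\psi^{(k)})$; since continuous multilinear maps between Fr\'echet spaces send bounded sets to bounded sets, continuity in $A^{(N)}$ with respect to the topology of bounded convergence is then immediate. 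Your approximation argument can also be completed, but it requires a hypocontinuity statement (that for $\gamma_n\to\gamma$ the convergence $\Tr(A\gamma_n)\to\Tr(A\gamma)$ is uniform for $A$ in bounded sets) rather than just separate continuity in $\gamma$.
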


\begin{remark}
Our notation $\Tr_{k+1, \ldots, N}$ implies a partial trace over the variables with indices belonging to the index set $\{ i \,:\, k+1 \leq i \leq N\}$. To alleviate some notational complications, we will use the convention that if the index set of the partial trace is empty, we do not take a partial trace.
\end{remark}

\subsection{Contractions and the ``good mapping property''}\label{ssec:GMP}
Given $A^{(i)}\in \L(\Sc(\R^{i}),\Sc'(\R^{i}))$, an integer $k\geq i$, and a cardinality-$i$ subset $\{\ell_{1},\ldots,\ell_{i}\} \subset \N_{\leq k}$, we want to define to an operator acting only on the variables associated to $\{\ell_{1},\ldots,\ell_{i}\}$. We have the following result.

\begin{prop}[$k$-particle extensions]
\label{prop:ext_k}
There exists a unique $A_{(\ell_{1},\ldots,\ell_{i})}^{(i)}\in \L(\Sc(\R^{k}),\Sc'(\R^{k}))$, which satisfies
\begin{equation}\label{eq:op_index_notation}
A_{(\ell_{1},\ldots,\ell_{i})}^{(i)}(f_{1}\otimes\cdots\otimes f_{k})(\ux_{k}) = A^{(i)}(f_{\ell_{1}}\otimes\cdots\otimes f_{\ell_{i}})(x_{\ell_{1}},\ldots,x_{\ell_{i}}) \cdot \biggl(\prod_{\ell\in\N_{\leq k}\setminus\{\ell_{1},\ldots,\ell_{i}\}} f_{\ell}(x_{\ell})\biggr)
\end{equation}
in the sense of tempered distributions.
\end{prop}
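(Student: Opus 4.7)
The plan is to construct $A_{(\ell_1,\ldots,\ell_i)}^{(i)}$ via the Schwartz kernel theorem together with the nuclearity of the Schwartz space, and to obtain uniqueness from the density of finite linear combinations of pure tensor products in $\Sc(\R^k)$.

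For uniqueness, observe that any two operators $B_1, B_2 \in \L(\Sc(\R^k),\Sc'(\R^k))$ agreeing on pure tensors $f_1\otimes\cdots\otimes f_k$ must agree on all of $\Sc(\R^k)$: by the Schwartz kernel theorem isomorphism $\Sc(\R^k)\cong\Sc(\R)\hat{\otimes}\cdots\hat{\otimes}\Sc(\R)$, finite linear combinations of such pure tensors are dense in $\Sc(\R^k)$, so continuity of $B_1,B_2$ forces $B_1=B_2$. Thus at most one such extension can exist.

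For existence, I would reduce to the case $\{\ell_1,\ldots,\ell_i\}=\{1,\ldots,i\}$ by conjugation with a permutation: if $\pi\in\Ss_k$ is any permutation with $\pi(\alpha)=\ell_\alpha$ for $\alpha\in\N_{\leq i}$, then one first constructs $A_{(1,\ldots,i)}^{(i)}$ and defines $A_{(\ell_1,\ldots,\ell_i)}^{(i)}\coloneqq \pi\circ A_{(1,\ldots,i)}^{(i)}\circ\pi^{-1}$, where $\pi$ acts on Schwartz functions and tempered distributions in the standard way described in \eqref{eq:pi_func_def}. For $A_{(1,\ldots,i)}^{(i)}$ itself, let $K_{A^{(i)}}\in\Sc'(\R^{2i})$ denote the Schwartz kernel of $A^{(i)}$, and let $\delta_{k-i}\in\Sc'(\R^{2(k-i)})$ denote the distributional kernel of the identity in $\L(\Sc(\R^{k-i}),\Sc'(\R^{k-i}))$, i.e.\ the kernel of the multiplication operator, formally $\prod_{\ell=i+1}^k \delta(x_\ell-x_\ell')$. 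Then $K_{A^{(i)}}\otimes\delta_{k-i}\in\Sc'(\R^{2k})$ is well-defined as a tempered distribution, and by the Schwartz kernel theorem in the reverse direction, it is the kernel of a unique operator in $\L(\Sc(\R^k),\Sc'(\R^k))$, which I take as the definition of $A_{(1,\ldots,i)}^{(i)}$.

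It remains to verify that this operator satisfies the defining identity \eqref{eq:op_index_notation} on pure tensors, which is immediate from the separation-of-variables structure of the kernel $K_{A^{(i)}}\otimes\delta_{k-i}$: pairing against $f_1\otimes\cdots\otimes f_k$ factors into the action of $A^{(i)}$ on $f_1\otimes\cdots\otimes f_i$ and the trivial action of the identity kernel on $f_{i+1}\otimes\cdots\otimes f_k$. Conjugating by $\pi$ then produces the general case with the correct index placement. I do not expect any serious obstacle here; the construction is entirely formal once one invokes the Schwartz kernel theorem and nuclearity. The only mildly technical point is the verification that the map sending $A^{(i)}$ to $A_{(\ell_1,\ldots,\ell_i)}^{(i)}$ is well-behaved enough to conjugate under a permutation, but this follows from the continuity of permutation action on $\Sc(\R^k)$ and $\Sc'(\R^k)$ together with the continuity of the Schwartz kernel isomorphism.
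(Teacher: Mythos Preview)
Your proposal is correct. The paper does not actually prove this proposition here: the appendix containing \cref{prop:ext_k} explicitly states that it is included ``with most of the proofs omitted, for convenient referencing,'' deferring instead to the companion paper \cite{MNPRS1_2019}. Your argument via the Schwartz kernel theorem---constructing the extension as the operator with kernel $K_{A^{(i)}}\otimes\delta_{k-i}$ (after a suitable reordering of variables), then conjugating by a permutation to handle general index placements, with uniqueness following from density of pure tensors---is the standard and natural route, and is precisely the kind of argument the cited references on topological tensor products would furnish.
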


An important property of the above $k$-particle extension is that it preserves self- and skew-adjointness.

\begin{lemma}\label{lem:ext_sa}
Let $i\in\N$, let $k\in\N_{\geq i}$, and let $A^{(i)}\in\L(\Sc(\R^{k}),\Sc'(\R^{i}))$ be self-adjoint (resp skew-adjoint). Then for any cardinality-$i$ subset $\{\ell_{1},\ldots,\ell_{i}\}\subset\N_{\leq k}$, we have that $A_{(\ell_{1},\ldots,\ell_{i})}^{(i)}$ is self-adjoint (resp. skew-adjoint).
\end{lemma}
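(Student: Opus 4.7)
The plan is to verify the defining adjoint relation \eqref{eq:adj_prop} for the extended operator $A_{(\ell_1,\ldots,\ell_i)}^{(i)}$ directly on pure tensor products, then conclude by density and separate continuity. By the Schwartz kernel theorem, $\Sc(\R^k) \cong \hat{\bigotimes}^k \Sc(\R)$, so finite linear combinations of pure tensors $f_1\otimes\cdots\otimes f_k$ are dense in $\Sc(\R^k)$. Since both sides of the identity
\begin{equation*}
\ipp*{A_{(\ell_1,\ldots,\ell_i)}^{(i)} g^{(k)}, \ol{f^{(k)}}}_{\Sc'(\R^k)-\Sc(\R^k)} = \ol{\ipp*{A_{(\ell_1,\ldots,\ell_i)}^{(i)} f^{(k)}, \ol{g^{(k)}}}}_{\Sc'(\R^k)-\Sc(\R^k)}
\end{equation*}
are separately continuous bilinear forms in $(f^{(k)}, g^{(k)}) \in \Sc(\R^k)^2$, by sesquilinearity it suffices to check the relation when both $f^{(k)}$ and $g^{(k)}$ are pure tensors.

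For pure tensors $f^{(k)} = \bigotimes_{\ell=1}^k f_\ell$ and $g^{(k)} = \bigotimes_{\ell=1}^k g_\ell$, I would invoke the characterizing property \eqref{eq:op_index_notation} to compute
\begin{equation*}
\ipp*{A_{(\ell_1,\ldots,\ell_i)}^{(i)} f^{(k)}, \ol{g^{(k)}}} = \ipp*{A^{(i)}(f_{\ell_1}\otimes\cdots\otimes f_{\ell_i}), \ol{g_{\ell_1}\otimes\cdots\otimes g_{\ell_i}}}_{\Sc'(\R^i)-\Sc(\R^i)} \prod_{\ell \notin \{\ell_1,\ldots,\ell_i\}} \int_\R f_\ell(x) \ol{g_\ell(x)}\, dx,
\end{equation*}
where the factorization uses that the distributional pairing on $\R^k$ splits along the tensor product decomposition into the $\ell_j$-coordinates and their complement. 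The elementary identity $\int f_\ell \ol{g_\ell} = \ol{\int g_\ell \ol{f_\ell}}$ on each complementary coordinate, combined with the self-adjointness hypothesis $(A^{(i)})^* = A^{(i)}$ applied to the first factor via \eqref{eq:adj_prop}, yields
\begin{equation*}
\ipp*{A_{(\ell_1,\ldots,\ell_i)}^{(i)} f^{(k)}, \ol{g^{(k)}}} = \ol{\ipp*{A^{(i)}(g_{\ell_1}\otimes\cdots\otimes g_{\ell_i}), \ol{f_{\ell_1}\otimes\cdots\otimes f_{\ell_i}}}} \cdot \ol{\prod_{\ell\notin\{\ell_1,\ldots,\ell_i\}}\int_\R g_\ell(x)\ol{f_\ell(x)}\,dx}.
\end{equation*}
Recognizing the right-hand side as $\ol{\ipp*{A_{(\ell_1,\ldots,\ell_i)}^{(i)} g^{(k)}, \ol{f^{(k)}}}}$ via a second application of \eqref{eq:op_index_notation}, and invoking the uniqueness of the adjoint from Lemma \ref{lem:dvo_adj}, gives $(A_{(\ell_1,\ldots,\ell_i)}^{(i)})^* = A_{(\ell_1,\ldots,\ell_i)}^{(i)}$. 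The skew-adjoint case is handled identically, the only modification being that the self-adjointness step inserts a single minus sign which propagates through to the end.

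The main obstacle, though not a deep one, is justifying the factorization of the distributional pairing along the tensor decomposition of $\R^k$ into the $\ell_j$-coordinates and their complement. This is a consequence of the distributional Fubini-Tonelli theorem together with the fact that the Schwartz kernel of $A_{(\ell_1,\ldots,\ell_i)}^{(i)}$, viewed as an element of $\Sc'(\R^{2k})$, is by construction the tensor product of the Schwartz kernel of $A^{(i)}$ (in the $\ell_j$-coordinates) with a product of delta distributions $\prod_{\ell\notin\{\ell_j\}} \delta(x_\ell - x_\ell')$ (in the remaining coordinates). Once this factorization is granted, the computation is a symbolic manipulation.
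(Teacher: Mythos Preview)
Your argument is correct. The paper itself does not prove this lemma: it appears in \cref{app:DVO}, which is explicitly a review appendix with ``most of the proofs omitted'' and deferred to the companion paper \cite{MNPRS1_2019}. Your approach---verify the adjoint identity on pure tensors using the characterizing property \eqref{eq:op_index_notation}, factorize the pairing along the tensor decomposition, apply the hypothesis on $A^{(i)}$, and close by density and separate continuity---is precisely the natural one and is what the companion paper's proof would amount to. One cosmetic point: you call the two sides ``bilinear forms'' and then invoke ``sesquilinearity''; the maps are sesquilinear in $(f^{(k)},g^{(k)})$, so just use that word throughout.
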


Now let $i,j\in\N$, let $k\coloneqq i+j-1$, and let $(\alpha,\beta)\in\N_{\leq i}\times\N_{\leq j}$. The proof of \cref{prop:G_inf_br} in \cite{MNPRS1_2019} requires us to give meaning to the composition
\begin{equation}
\label{eq:comp_wd}
A_{(1,\ldots,i)}^{(i)}B_{(i+1,\ldots,i+\beta-1,\alpha,i+\beta,\ldots,k)}^{(j)}
\end{equation}
as an operator in $\L(\Sc(\R^{k}),\Sc'(\R^{k}))$, when $A^{(i)}\in\L(\Sc(\R^{i}),\Sc'(\R^{i}))$ and $B^{(j)}\in \L(\Sc(\R^{j}),\Sc'(\R^{j}))$. 

\begin{remark}\label{not_well_def}
Without further conditions on $A^{(i)}$ or $B^{(j)}$, the composition \cref{eq:comp_wd} may not be well-defined. Indeed, consider the operator $A\in \L(\Sc(\R^{2}),\Sc'(\R^{2}))$ defined by
\begin{equation}
Af \coloneqq \delta_{0} f, \qquad \forall f\in\Sc(\R^{2}),
\end{equation}
where $\delta_{0}$ denotes the Dirac mass about the origin in $\R^{2}$. Then for $f,g\in\Sc(\R)$,
\begin{equation}
\int_{\R}dx_{2}(Af^{\otimes 2})(x_{1},x_{2})g^{\otimes 2}(x_{1}',x_{2}) = f(0)g(0) f(x_{1})g(x_{1}')\delta_{0}(x_{1}) \in\Sc'(\R)\otimes\Sc(\R).
\end{equation}
It is easy to show that $f\delta_{0}\in\Sc'(\R)$ does not coincide with a Schwartz function.
\end{remark}

This issue leads us to a property we call the \emph{good mapping property}. The intuition for the good mapping property is the basic fact from distribution theory that the convolution of a distribution of compact support with a Schwartz function is again a Schwartz function. We recall the definition of the good mapping property here.

\gmp*

\begin{remark}\label{rem:gmp}
By tensoring with identity, we see that if $A^{(i)}$ has the good mapping property, then $A_{(\ell_{1},\ldots,\ell_{i})}^{(i)}$ has the good mapping property, where $i$ is replaced by $k$ and $\alpha\in\N_{\leq k}$.
\end{remark}

\subsection{The subspace $\L_{gmp}(\Sc(\R^{k}),\Sc'(\R^{k}))$}
Lastly, we expand more on $\L_{gmp}(\Sc(\R^{k}),\Sc'(\R^{k}))$ as a topological vector subspace of $\L(\Sc(\R^{k}),\Sc'(\R^{k}))$ with the following lemma.

\begin{lemma}\label{gmp_dense}
For $k\in\N$, it holds that
\begin{enumerate}[(i)]
\item
$\L_{gmp}(\Sc(\R^{k}),\Sc'(\R^{k}))$ is a dense subspace of $\L(\Sc(\R^{k}),\Sc'(\R^{k}))$;
\item
The topological dual $\L_{gmp}(\Sc(\R^{k}),\Sc'(\R^{k}))^{*}$ endowed with the strong dual topology is isomorphic to $\L(\Sc'(\R^{k}),\Sc(\R^{k}))$.
\end{enumerate}
\end{lemma}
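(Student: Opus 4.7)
The approach is to reduce both statements to classical duality and density facts about the spaces $\Sc(\R^{2k})$ and $\Sc'(\R^{2k})$, connected to the DVO setting via the Schwartz kernel theorem isomorphism $\L(\Sc(\R^k),\Sc'(\R^k))\cong \Sc'(\R^{2k})$. Under this isomorphism (which is topological when the left-hand side carries the bounded-convergence topology and the right-hand side the strong dual topology), the natural embedding $\Sc(\R^{2k})\hookrightarrow \Sc'(\R^{2k})$ corresponds to the subspace of operators with jointly Schwartz kernels.

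For part (i), I would first verify that every operator with Schwartz kernel belongs to $\L_{gmp}$. Given $K\in \Sc(\R^{2k})$ and the associated DVO $A^{(k)}f = \int K(\cdot,y)f(y)\,dy$, the bilinear map appearing in \cref{def:gmp} unpacks via distributional Fubini to
\begin{equation*}
(f^{(k)},g^{(k)}) \mapsto \paren*{(x_\alpha,x_\alpha') \mapsto \int K(x_1,\ldots,x_k;y)f^{(k)}(y)g^{(k)}(x_1,\ldots,x_{\alpha-1},x_\alpha',x_{\alpha+1},\ldots,x_k)\,dy\,\prod_{i\neq \alpha}dx_i},
\end{equation*}
and standard Schwartz-seminorm bounds on $K,f^{(k)},g^{(k)}$ show that this gives a continuous bilinear map into $\Sc(\R^2)$, so the Schwartz-kernel subspace is contained in $\L_{gmp}$. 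To complete (i), I invoke the classical fact that $\Sc(\R^{2k})$ is dense in $\Sc'(\R^{2k})$ with the strong dual topology, which one proves by convolving any tempered distribution with a Schwartz mollifier and multiplying by a Schwartz cutoff, the combined approximations converging uniformly on bounded subsets of $\Sc(\R^{2k})$. Transporting this density along the Schwartz kernel isomorphism gives density of $\L_{gmp}$ in $\L(\Sc(\R^k),\Sc'(\R^k))$.

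For part (ii), I would exhibit the duality explicitly through the generalized trace. Define $\Phi:\L(\Sc'(\R^k),\Sc(\R^k))\to \L_{gmp}(\Sc(\R^k),\Sc'(\R^k))^*$ by $\Phi(\gamma^{(k)})(A^{(k)})\coloneqq \Tr_{1,\ldots,k}(A^{(k)}\gamma^{(k)})$, which is well-defined and continuous by the separate continuity of the generalized trace, see \cref{prop:gtr_prop}\ref{item:gtr_sc}. Injectivity of $\Phi$ follows from part (i): if $\Tr(A^{(k)}\gamma^{(k)})=0$ for every $A^{(k)}\in\L_{gmp}$ then in particular for every $A^{(k)}$ with Schwartz kernel, and by \eqref{eq:gtr_dpair} this reads $\ipp{K_{A^{(k)}},K_{(\gamma^{(k)})^t}}_{\Sc'(\R^{2k})-\Sc(\R^{2k})}=0$ for every $K_{A^{(k)}}$ in a dense subspace of $\Sc'(\R^{2k})$, forcing $K_{(\gamma^{(k)})^t}=0$ and hence $\gamma^{(k)}=0$. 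Surjectivity combines the Hahn--Banach theorem for locally convex spaces---extending any $\ell\in\L_{gmp}^*$ to some $\tilde\ell\in\L(\Sc(\R^k),\Sc'(\R^k))^*$ (uniquely, by the density from (i))---with the reflexivity of $\Sc$: via the Schwartz kernel theorem the extension $\tilde\ell$ lives in $\Sc'(\R^{2k})^*\cong \Sc(\R^{2k})\cong \L(\Sc'(\R^k),\Sc(\R^k))$, yielding the desired $\gamma^{(k)}$ with $\Phi(\gamma^{(k)})=\ell$. Finally, bicontinuity of $\Phi$ for the strong dual topologies follows by comparing generating seminorms: a bounded $\mathcal{B}\subset \L_{gmp}$ corresponds via the kernel isomorphism to a bounded subset of $\Sc'(\R^{2k})$, which by the Montel property of $\Sc$ is relatively compact, and the seminorm $\sup_{A\in\mathcal{B}}|\Tr(A\cdot)|$ on $\L(\Sc'(\R^k),\Sc(\R^k))$ is equivalent to one of the standard Schwartz-type seminorms on $\Sc(\R^{2k})$.

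The main obstacle I anticipate is ensuring that the Hahn--Banach extension remains continuous for the \emph{strong} dual topology on $\L_{gmp}^*$ (rather than merely the weak-$*$ topology), which amounts to showing that a bounded subset of $\L_{gmp}$ in the subspace topology is automatically bounded as a subset of $\L(\Sc(\R^k),\Sc'(\R^k))$, together with the seminorm comparison above. Everything else is routine topological bookkeeping, granted the Schwartz kernel theorem and the reflexivity of $\Sc$.
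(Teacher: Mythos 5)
The paper does not actually prove \cref{gmp_dense}: Appendix C is imported from the companion paper \cite{MNPRS1_2019} ``with most of the proofs omitted,'' so there is no in-text argument to compare against. Your proposal is, however, correct and follows the route the paper's framework clearly intends: identify $\L(\Sc(\R^k),\Sc'(\R^k))$ with $\Sc'(\R^{2k})$ via the Schwartz kernel theorem, check that jointly-Schwartz kernels yield operators with the good mapping property (routine seminorm estimates), and then invoke density of $\Sc(\R^{2k})$ in $\Sc'(\R^{2k})$ together with reflexivity and the Montel property for the duality statement. The two genuinely delicate points are exactly the ones you flag: first, that the bounded-convergence topology on $\L(\Sc(\R^k),\Sc'(\R^k))$ is a priori only the topology of uniform convergence on sets of the form $\mathfrak{R}\otimes\mathfrak{R}$, which suffices for (i) since it is coarser than the strong topology, and which still yields the dual $\Sc(\R^{2k})$ by the Mackey--Arens theorem since bounded subsets of $\Sc'$ are relatively weakly compact; second, that for the strong-topology bicontinuity in (ii) one must replace an arbitrary bounded subset of $\Sc'(\R^{2k})$ by a bounded family of mollified-and-truncated Schwartz kernels whose closure contains it, which your mollification construction provides. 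With those points spelled out the argument is complete.
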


\section{Products of distributions and the wave front set}\label{app:WF}
In this appendix, we review some basic facts from microlocal analysis about the wave front set of a distribution and its application to proving the well-definedness of the product of two distributions, as used in \cref{ssec:con_GP_W_S1}. We mostly follow the exposition in Chapter VIII of \cite{HormPDOI}, but refer the reader to Chapter IX, \S{10} of \cite{RSII} for a more pedestrian treatment.

\begin{mydef}[Singular support]
Let $u\in \mathcal{D}'(\R^{k})$. We say that $x\in\R^{k}$ is a \emph{regular point} of $u$ if and only if there exists an open neighborhood $U\ni x$ and a function $f:U\rightarrow\C$ which is $C^{\infty}$ on $U$, such that
\begin{equation}
\ipp{u,\phi}_{\D'(\R^{k})-\D(\R^{k})} =\int_{\R^{k}}f(x)\phi(x)dx, \qquad \forall \phi \in C_{c}^{\infty}(\R^{k}) \enspace \text{with $\supp(\phi)\subset U$}.
\end{equation}
We call the set
\begin{equation}
\R^{k}\setminus\{x\in\R^{k} : \enspace \text{$x$ is a regular point for $u$}\}
\end{equation}
the \emph{singular support} of $u$, denoted by $\ssupp(u)$.
\end{mydef}

\begin{remark}
It is evident that $\ssupp(u)\subset\supp(u)$. Since the set of regular points is open (any other point in the neighborhood $U$ above also belongs to the singular support), it follows that $\ssupp(u)$ is a closed subset of $\supp(u)$.
\end{remark}

The singular support is useful for establishing the well-definedness of a product of distributions $uv$ via localization, as the next proposition shows.

\begin{prop}\label{prop:sing_prop}
Let $u,v\in\mathcal{D}'(\R^{k})$, and suppose that $\ssupp(u)\cap \ssupp(v)=\emptyset$. Then there is a unique $w\in \mathcal{D}'(\R^{k})$ such that the following holds:
\begin{enumerate}[(i)]
\item
If $x\notin \ssupp(v)$ and $v=f$ in a neighborhood of $x$, where $f\in C^{\infty}(\R^{k})$, then $w=fu$ in a neighborhood of $x$.
\item
If $x\notin \ssupp(u)$ and $u=g$ in a neighborhood of $x$, where $g\in C^{\infty}(\R^{k})$, then $w=gv$ in a neighborhood of $x$.
\end{enumerate}
\end{prop}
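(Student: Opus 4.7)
The plan is to exploit the sheaf property of distributions to patch together local products defined where at least one factor is smooth. First I will set $U_{1}\coloneqq\R^{k}\setminus\ssupp(v)$ and $U_{2}\coloneqq\R^{k}\setminus\ssupp(u)$, both of which are open since the singular support is closed. The disjointness hypothesis $\ssupp(u)\cap\ssupp(v)=\emptyset$ guarantees that $\{U_{1},U_{2}\}$ is an open cover of $\R^{k}$.

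Next, I will construct a distribution $w_{1}\in\D'(U_{1})$ that should serve as the product $fu$ wherever $v$ is smoothly represented by $f$. For each $x\in U_{1}$, I pick an open neighborhood $V_{x}\subset U_{1}$ and $f_{x}\in C^{\infty}(V_{x})$ with $v|_{V_{x}}=f_{x}$; the product $f_{x}u|_{V_{x}}$ is a bona fide distribution on $V_{x}$, since multiplication of a distribution by a smooth function is always well-defined. Any two such representatives $f_{x},f_{x'}$ must agree on $V_{x}\cap V_{x'}$ (they realize the same distribution there and smooth representatives are unique), so the local products $f_{x}u$ and $f_{x'}u$ coincide on overlaps. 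The sheaf property of $\D'$ then patches them into a single $w_{1}\in\D'(U_{1})$. The symmetric construction, using smooth representatives $g_{y}\in C^{\infty}(V_{y})$ of $u$ near each $y\in U_{2}$, produces $w_{2}\in\D'(U_{2})$.

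The key compatibility check is that $w_{1}$ and $w_{2}$ agree on $U_{1}\cap U_{2}$. On this overlap, both $u$ and $v$ are simultaneously smooth: for any $x\in U_{1}\cap U_{2}$ there exist representations $v=f$ and $u=g$ in a common neighborhood of $x$, whence $fu=fg=gv$ as an equality of smooth functions, hence of distributions. Thus $w_{1}|_{U_{1}\cap U_{2}}=w_{2}|_{U_{1}\cap U_{2}}$, and a final application of the sheaf property yields a unique $w\in\D'(\R^{k})$ with $w|_{U_{j}}=w_{j}$ for $j\in\{1,2\}$, which satisfies (i) and (ii) by construction.

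For uniqueness, if $w'\in\D'(\R^{k})$ also satisfies (i) and (ii), then every point $x\in\R^{k}$ has a neighborhood on which $w'$ coincides with one of the local forms $fu$ or $gv$, hence with $w$; the sheaf property (equivalently, the localization lemma for distributions, Chapter 2, \S{2} of \cite{HormPDOI}) then forces $w'=w$ globally. The only mildly technical point in this argument is the sheaf property itself for $\D'(\R^{k})$, but it is standard and follows from the existence of smooth partitions of unity subordinate to any open cover.
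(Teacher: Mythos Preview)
Your argument is correct and is precisely the standard partition-of-unity/sheaf argument for this result. The paper itself does not give a proof but simply cites Theorem IX.42 in \cite{RSII}; your write-up is essentially the proof one finds there.
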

\begin{proof}
See Theorem IX.42 in \cite{RSII}.
\end{proof}

Next, we introduce the wave front set of a distribution. While the singular support captures the location of the singularities of a distribution, the wave front set also contains information about the directions of the high frequencies that cause these singularities.

\begin{mydef}[Wave front set]\label{def:WF}
Let $u\in\D'(\R^{k})$. We say that a point $(\ux_{k},\uxi_{k})\in\R^{k}\times(\R^{k}\setminus\{0\})$ is a \emph{regular directed point} for $u$ if and only if there exist radii $\varepsilon_{x}, \varepsilon_{\xi}>0$ and a function $\phi\in C_c^\infty(\R^{k})$ which is identically one on the open ball $B(\ux_{k}, \varepsilon_{x})$, such that
\begin{equation}
\left|\wh{\phi u}(\lambda \ueta_{k})\right| \lesssim_{N} \paren*{1+|\lambda|}^{-N}, \qquad \forall (\ueta_{k},\lambda) \in B(\uxi_{k}, \varepsilon_{\xi}) \times [0,\infty), \enspace \forall N\in\N_{0}.
\end{equation}
We define the \emph{wave front set} of $u$ to be the complement in $\R^k\times (\R^k\setminus\{0\})$ of the set of regular directed points:
\begin{equation}
\WF(u) \coloneqq \paren*{\R^{k}\times(\R^{k}\setminus\{0\})} \setminus \{(\ux_{k},\uxi_{k})\in\R^{k}\times (\R^{k}\setminus\{0\}) : (\ux_{k},\uxi_{k}) \enspace \text{is a regular directed point for $u$}\}.
\end{equation}
\end{mydef}

\begin{remark}
In \cite{HormPDOI}, H\"{o}rmander uses a definition of the wave front set of a distribution $u$, which is seemingly different from our \cref{def:WF}. More precisely, for any $\ux_{k}\in\R^{k}$ and $\phi\in C_{c}^{\infty}(\R^{k})$, such that $\phi(\ux_{k})\neq 0$, he defines the set $\Sigma(\phi u)$ consisting of all $\uxi_{k}\in\R^{k}\setminus\{0\}$ having no conic neighborhood $U$ such that
\begin{equation}
|\wh{\phi u}(\uxi_{k})| \lesssim_{N} \paren*{1+|\uxi_{k}|}^{-N}, \qquad \forall \uxi_{k}\in U, \enspace \forall N\in\N.
\end{equation}
He then defines the set $\Sigma_{x}(u)$ by
\begin{equation}
\Sigma_{\ux_k}(u) \coloneqq \bigcap_{\phi} \Sigma(\phi u), \enspace \phi\in C_{c}^{\infty}(\R^{k}) \text{ s.t. } \phi(\ux_{k}) \neq 0.
\end{equation}
H\"{o}rmander's definition of the wave front set of $u$, which we denote by $\wt{\WF}(u)$, is then given by
\begin{equation}
\wt{\WF}(u) \coloneqq \{(\ux_{k},\uxi_{k}) \in \R^{k} \times (\R^{k}\setminus\{0\}) : \uxi_{k}\in\Sigma_{\ux_k}(u)\}.
\end{equation}
It follows from \cref{lem:WF_loc_incl} below that $\wt{\WF}(u)=\WF(u)$ (i.e. the two definitions are equivalent).
\end{remark}

We record some properties of the wave front set.

\begin{lemma}\label{lem:WF_loc_incl}
If $u\in\D'(\R^{k})$ and $g\in C_c^\infty(\R^{k})$, then $\WF(gu)\subset \WF(u)$. Similarly, if $u\in\Sc'(\R^{k})$ and $g\in\Sc(\R^{k})$, then $\WF(gu) \subset \WF(u)$.
\end{lemma}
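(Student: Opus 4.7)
\medskip
\noindent\textbf{Proof proposal for \cref{lem:WF_loc_incl}.} The plan is to show directly that every regular directed point of $u$ remains a regular directed point of $gu$. So fix $(\ux_k^0, \uxi_k^0) \notin \WF(u)$, and let $\varepsilon_x, \varepsilon_\xi > 0$ and $\phi \in C_c^\infty(\R^k)$ with $\phi\equiv 1$ on $B(\ux_k^0, \varepsilon_x)$ be as in \cref{def:WF}, so that $\widehat{\phi u}(\lambda \ueta_k)$ decays rapidly in $\lambda$ uniformly for $\ueta_k \in B(\uxi_k^0, \varepsilon_\xi)$. The goal is to produce a new cutoff $\phi'$ around $\ux_k^0$ and a (possibly smaller) radius $\varepsilon_\xi' > 0$ such that $\widehat{\phi'(gu)}$ enjoys the same rapid-decay estimate on the cone over $B(\uxi_k^0, \varepsilon_\xi')$.

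\medskip
\noindent The key algebraic manipulation is the following. I would choose $\phi' \in C_c^\infty(\R^k)$ with $\supp(\phi') \subset B(\ux_k^0, \varepsilon_x)$ and $\phi' \equiv 1$ on $B(\ux_k^0, \varepsilon_x/2)$, so that $\phi \phi' = \phi'$. Then
\begin{equation*}
\phi'(gu) = (\phi' g)\, (\phi u),
\end{equation*}
where in both the $C_c^\infty$ and the Schwartz cases one checks $\phi' g \in C_c^\infty(\R^k)$. Applying the Fourier transform and the convolution theorem for the product of a Schwartz function with a compactly supported distribution yields
\begin{equation*}
\widehat{\phi'(gu)} = c\, \widehat{\phi' g} \ast \widehat{\phi u},
\end{equation*}
for a harmless constant $c$. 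Here $\widehat{\phi'g}\in\Sc(\R^k)$ and $\widehat{\phi u}$ is, by Paley--Wiener, a smooth function of polynomial growth, say $|\widehat{\phi u}(\uzeta_k)| \leq C(1+|\uzeta_k|)^{M}$ for some $M\in\N_0$.

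\medskip
\noindent The analytic core is to estimate this convolution on a cone of directions close to $\uxi_k^0$. Fix $\varepsilon_\xi' \in (0, \varepsilon_\xi/2)$ and let $\ueta_k \in B(\uxi_k^0, \varepsilon_\xi')$ and $\lambda\geq 1$. Split the integral
\begin{equation*}
(\widehat{\phi' g}\ast\widehat{\phi u})(\lambda\ueta_k) = \int_{\Gamma} \widehat{\phi' g}(\lambda\ueta_k - \uzeta_k)\widehat{\phi u}(\uzeta_k)\,d\uzeta_k + \int_{\R^k\setminus\Gamma}\widehat{\phi' g}(\lambda\ueta_k - \uzeta_k)\widehat{\phi u}(\uzeta_k)\,d\uzeta_k,
\end{equation*}
where $\Gamma\coloneqq [0,\infty)\cdot B(\uxi_k^0, \varepsilon_\xi)$ is the closed cone on which $\widehat{\phi u}$ decays rapidly. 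On $\Gamma$ one uses rapid decay of $\widehat{\phi u}$ together with Schwartz decay of $\widehat{\phi' g}$, which is a standard convolution estimate yielding an $(1+\lambda)^{-N}$ bound for every $N$. Outside $\Gamma$, one invokes the elementary conic-separation lemma: there exists $\kappa>0$ such that $\ueta_k\in B(\uxi_k^0,\varepsilon_\xi')$ and $\uzeta_k \notin \Gamma$ imply $|\lambda\ueta_k - \uzeta_k|\geq \kappa(\lambda + |\uzeta_k|)$. This lets one absorb the polynomial growth bound on $\widehat{\phi u}$ by the Schwartz seminorms of $\widehat{\phi' g}$ evaluated at $\lambda\ueta_k-\uzeta_k$, again yielding arbitrary polynomial decay in $\lambda$ uniformly in $\ueta_k$. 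Summing the two estimates shows $(\ux_k^0,\uxi_k^0)$ is a regular directed point for $gu$.

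\medskip
\noindent The two cases of the lemma differ only in the ambient distribution space for $u$ and the regularity of $g$, but since $\phi' g$ ends up in $C_c^\infty(\R^k)$ in both cases and $\phi u$ is always a compactly supported distribution (tempered in the Schwartz case, general in the distributional case), the argument is uniform. The main obstacle is really just the cone-geometry estimate $|\lambda\ueta_k - \uzeta_k|\gtrsim \lambda + |\uzeta_k|$, which forces us to shrink the angular aperture from $\varepsilon_\xi$ to some $\varepsilon_\xi'<\varepsilon_\xi$; everything else is bookkeeping with Schwartz seminorms and the convolution theorem.
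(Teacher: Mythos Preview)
The paper does not actually prove this lemma; it is stated without proof in the appendix on wave front sets, with the reader referred to Chapter~VIII of H\"ormander's text for background. Your argument is the standard one found there (essentially the convolution estimate behind Lemma~8.1.1 in \cite{HormPDOI}): localize so that $\phi'(gu)=(\phi'g)(\phi u)$, pass to Fourier side, and split the convolution integral into the good cone (where $\widehat{\phi u}$ decays rapidly) and its complement (where the conic-separation bound $|\lambda\ueta_k-\uzeta_k|\gtrsim \lambda+|\uzeta_k|$ lets the Schwartz decay of $\widehat{\phi'g}$ absorb the polynomial growth). The argument is correct as written, and the observation that $\phi'g\in C_c^\infty(\R^k)$ in both the $C_c^\infty$ and Schwartz cases is exactly what makes the two statements uniform.
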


\begin{prop}\label{prop:wf_prop}
Let $u\in\D'(\R^{k})$.
\begin{enumerate}[(a)]
\item\label{item:wf_prop_closed}
$\WF(u)$ is a closed subset of $\R^{k}\times (\R^{k}\setminus\{0\})$.
\item\label{item:wf_prop_cone}
For each $\ux_{k}\in\R^{k}$, the set
\begin{equation}
\WF_{\ux_{k}}(u) \coloneqq \{\uxi_{k} \in\R^{k}\setminus\{0\} : (\ux_{k},\uxi_{k}) \in \WF(u)\}
\end{equation}
is a cone.
\item\label{item:wf_prop_sum}
If $v\in\D'(\R^{k})$, then
\begin{equation}
\WF(u+v) \subset \WF(u) \cup \WF(v).
\end{equation}
\item\label{item:wf_prop_ssupp}
$\ssupp(u) = \{\ux_{k}\in\R^{k} : \WF_{\ux_{k}}(u) \neq \emptyset\}$.
\item\label{item:wf_prop_tp}
If $v\in \D'(\R^{j})$, then
\begin{equation}
\WF(u\otimes v) \subset \paren*{\WF(u) \times \WF(v)} \cup \paren*{\paren*{\supp(u) \times \{0\}} \times \WF(v)} \cup \paren*{\WF(u) \times \paren*{\supp(v)\times\{0\}}}.
\end{equation}
\item\label{inclusion}
If $u \in \mathcal{S}'(\R^i), v \in \mathcal{S}'(\R^j)$ and $w \in \mathcal{S}(\R^{i+j})$ then
\[
\WF((u\otimes v) w ) \subset \WF(u \otimes v).
\]
\end{enumerate}
\end{prop}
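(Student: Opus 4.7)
The plan is to verify (a)--(f) in turn, leaning on standard microlocal machinery from Chapter VIII of \cite{HormPDOI}. Properties (a), (b), and (c) follow directly from Definition \ref{def:WF}: for (a), the regular-directed-point condition is patently open in $(\ux_k, \uxi_k)$ since the witnessing cutoff $\phi$ and decay estimate persist under small perturbations of $(\ux_k, \uxi_k)$; for (b), the decay estimate $|\wh{\phi u}(\lambda \ueta_k)| \lesssim_N (1+|\lambda|)^{-N}$ is stated along the positive ray $\lambda \geq 0$ and therefore respects positive rescaling of $\uxi_k$; and for (c), linearity of the Fourier transform combined with a common choice of cutoff gives $\wh{\phi(u+v)} = \wh{\phi u} + \wh{\phi v}$, so regularity of $(\ux_k,\uxi_k)$ for both $u$ and $v$ transfers to $u+v$.

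For (d), I would argue by compactness of the unit sphere $\Ss^{k-1}$. If $\WF_{\ux_k}(u) = \emptyset$, then for each $\uxi \in \Ss^{k-1}$ there exists a cutoff $\phi_{\uxi} \in C_c^\infty(\R^k)$ (identically $1$ near $\ux_k$) and a conic neighborhood $\Gamma_\uxi$ of $\uxi$ on which $\wh{\phi_\uxi u}$ has rapid decay. A finite subcover $\{\Gamma_{\uxi_1}, \ldots, \Gamma_{\uxi_M}\}$ of $\Ss^{k-1}$ produces a single cutoff $\phi \coloneqq \phi_{\uxi_1}\cdots\phi_{\uxi_M}$ (still identically $1$ near $\ux_k$), and since each factor contributes rapid decay of $\wh{\phi u}$ in its respective cone (via the convolution identity $\wh{fg}=\wh f*\wh g$ and Schwartz-decay bookkeeping), $\phi u \in \Sc(\R^k)$. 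Hence $u$ is smooth near $\ux_k$, so $\ux_k \notin \ssupp(u)$; the reverse inclusion is immediate from the definition of the singular support.

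Property (e) is the most delicate and I would follow the standard proof. Using product cutoffs $\phi_1 \otimes \phi_2$ with $\phi_1 \in C_c^\infty(\R^i)$ and $\phi_2 \in C_c^\infty(\R^j)$, we have the factorization
\[
\wh{(\phi_1 \otimes \phi_2)(u \otimes v)}(\uxi, \ueta) = \wh{\phi_1 u}(\uxi)\, \wh{\phi_2 v}(\ueta).
\]
A case analysis on whether $\uxi = 0$, $\ueta = 0$, or both are nonzero on a ray $\lambda(\uxi_0,\ueta_0)$ yields precisely the three sets on the right-hand side of the asserted inclusion (using that $\wh{\phi_1 u}$ is a tempered function of polynomial growth, and similarly for $\wh{\phi_2 v}$, so the non-singular factor can at worst grow polynomially while the singular factor provides rapid decay). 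Reducing from tensor-product cutoffs to a general $\phi \in C_c^\infty(\R^{i+j})$ proceeds by approximating $\phi$ in $C_c^\infty$-topology by finite sums of tensor products, which suffices since the wave front set is closed by (a).

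Finally, (f) is an immediate corollary of Lemma \ref{lem:WF_loc_incl}: since $u \otimes v \in \Sc'(\R^{i+j})$ and $w \in \Sc(\R^{i+j})$, the Schwartz-multiplier case of that lemma (with $k$ replaced by $i+j$) yields $\WF((u\otimes v) w) \subset \WF(u \otimes v)$ directly. The main obstacle will be the case-analysis bookkeeping in (e); all other parts reduce either to definitional unpacking, the compactness argument of (d), or the previously stated Lemma \ref{lem:WF_loc_incl}.
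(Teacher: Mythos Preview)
Your proposal is correct and mirrors the paper's approach, which for (a)--(c) appeals directly to the definition, for (d) and (e) simply cites Theorem IX.44 of \cite{RSII} and Theorem 8.2.9 of \cite{HormPDOI} respectively, and for (f) invokes Lemma~\ref{lem:WF_loc_incl}; you have essentially sketched the content of those references. One small point on (e): the approximation of a general cutoff by sums of tensor products is unnecessary (and delicate, since wave front sets are not stable under general limits)---it suffices to note that exhibiting \emph{any} single cutoff establishes that a point is regular-directed, so a tensor-product cutoff works directly.
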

\begin{proof}
Properties \ref{item:wf_prop_closed} - \ref{item:wf_prop_sum} are quick consequences of the definition of the wave front set. For \ref{item:wf_prop_ssupp}, see Theorem IX.44 in \cite{RSII}. For property \ref{item:wf_prop_tp}, see Theorem 8.2.9 in \cite{HormPDOI}. Property \ref{inclusion} follows from \cref{lem:WF_loc_incl}.
\end{proof}

In our proof of \cref{lem:Wn_wd}, we will need the following result.
\begin{lemma}[Wave front set of $\delta(x_{i}-x_{j})$]\label{lem:wf_del}
Let $k\in\N$, and let $i<j\in\N_{\leq k}$. Then
\begin{equation*}
\WF(\delta(x_{i}-x_{j})) = \{(\ux_{k},\uxi_{k})\in\R^{k}\times (\R^{k}\setminus\{0\}) : x_{i}=x_{j}, \enspace \xi_{i}+\xi_{j}=0, \enspace \text{and} \enspace \xi_{\ell}=0 \enspace \forall l\in\N_{\leq k}\setminus\{i,j\}\}.
\end{equation*}
\end{lemma}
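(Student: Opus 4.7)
Without loss of generality, by relabeling coordinates we may assume $i=1$ and $j=2$. The plan is to identify $\delta(x_1-x_2)\in\Sc'(\R^k)$ with the tensor product $\delta(y_1-y_2)\otimes 1_{\ux_{3;k}}$, where $\delta(y_1-y_2)\in\Sc'(\R^2)$ and $1\in\Sc'(\R^{k-2})$ is the constant function. Since $1$ is smooth, $\WF(1)=\emptyset$, and so in the tensor product bound of \cref{prop:wf_prop}\ref{item:wf_prop_tp} two of the three terms vanish, leaving the clean inclusion
\begin{equation*}
\WF(\delta(y_1-y_2)\otimes 1) \subset \WF(\delta(y_1-y_2))\times(\R^{k-2}\times\{\ul{0}_{k-2}\}).
\end{equation*}
This reduces the upper inclusion in the lemma to the computation of $\WF(\delta(y_1-y_2))$ on $\R^2$.

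\paragraph{Computation in $\R^2$ and the reverse inclusion.} By \cref{prop:wf_prop}\ref{item:wf_prop_ssupp}, $\WF(\delta(y_1-y_2))$ projects into the singular support, which is the diagonal. For any $\phi\in C_c^\infty(\R^2)$, a direct computation yields
\begin{equation*}
\widehat{\phi\,\delta(y_1-y_2)}(\eta_1,\eta_2) = \hat\psi(\eta_1+\eta_2), \qquad \psi(y)\coloneqq\phi(y,y)\in C_c^\infty(\R).
\end{equation*}
Since $\hat\psi$ is Schwartz, this decays rapidly in any closed cone avoiding the line $\{\eta_1+\eta_2=0\}$, giving the containment $\WF(\delta(y_1-y_2))\subset\{((y,y),(t,-t)):y\in\R,\ t\ne 0\}$. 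Choosing $\phi\geq 0$ with $\phi(y_0,y_0)>0$ makes $\hat\psi(0)=\int\phi(y,y)\,dy>0$, so $\widehat{\phi\,\delta(y_1-y_2)}(t,-t)$ is a strictly positive constant in $t$, blocking any polynomial decay along the $(1,-1)$ direction; hence equality holds. Lifting back to $\R^k$ via the tensor product identification settles the containment part of the lemma. For the reverse inclusion on $\R^k$, given $(\ux_k,\uxi_k)$ in the conjectured set (i.e. $x_1=x_2=y_0$, $\xi_\ell=0$ for $\ell\ne 1,2$, and $\xi_1=-\xi_2\ne 0$) and any non-negative cutoff $\phi$ identically $1$ on a ball $B(\ux_k,\varepsilon_x)$, the Fourier transform at $\lambda\uxi_k$ collapses, since $\xi_1+\xi_2=0$ and $\xi_\ell=0$ for $\ell\geq 3$, to the positive constant
\begin{equation*}
\int_{\R^{k-1}}\phi(z,z,y_3,\ldots,y_k)\,dz\,dy_3\cdots dy_k > 0,
\end{equation*}
independent of $\lambda$; this obstructs any polynomial decay and so places $(\ux_k,\uxi_k)$ in $\WF(\delta(x_1-x_2))$.

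\paragraph{Main obstacle.} The principal subtlety is the fact that the definition of regular directed point quantifies existentially over cutoffs $\phi$ identically one near $\ux_k$: to prove non-regularity one must obstruct decay for every admissible $\phi$, not merely the convenient non-negative one above. This is handled either by invoking the standard equivalence that the decay condition in \cref{def:WF} may be tested against any $\phi\in C_c^\infty$ with $\phi(\ux_k)\neq 0$ (a localization statement in the spirit of Chapter~VIII, \S{1} of \cite{HormPDOI}), or by noting that for an arbitrary $\phi$ identically $1$ on $B(\ux_k,\varepsilon_x)$ the integrand above equals $1$ on the full $(k-1)$-dimensional diagonal slice of that ball, so the integral is bounded below by the Lebesgue measure of this slice, independently of the behavior of $\phi$ away from $\ux_k$. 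Once this point is addressed, the two inclusions combine to give the claimed equality.
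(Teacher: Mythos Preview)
Your proof is correct and takes a somewhat different route from the paper's for the forward inclusion. The paper works directly in $\R^k$: after reducing to $(i,j)=(1,2)$ and using \cref{prop:wf_prop}\ref{item:wf_prop_ssupp} to get $x_1=x_2$, it computes $\F(\varphi\,\delta(x_1-x_2))(\uxi_k')$ explicitly and then runs a case analysis on the signs of $\xi_1,\xi_2$ and on the relative size of $|\xi_1|$ versus $|\uxi_{3;k}|$ to extract rapid decay whenever $(\ux_k,\uxi_k)$ lies outside the claimed set. Your approach---factoring $\delta(x_1-x_2)=\delta(y_1-y_2)\otimes 1_{\ux_{3;k}}$ and invoking the tensor-product wave-front bound of \cref{prop:wf_prop}\ref{item:wf_prop_tp} with $\WF(1)=\emptyset$---replaces that case analysis by the single clean two-dimensional computation $\widehat{\phi\,\delta(y_1-y_2)}(\eta_1,\eta_2)=\hat\psi(\eta_1+\eta_2)$; this is more structural and arguably more transparent. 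For the reverse inclusion both arguments are essentially identical: one evaluates $\F(\varphi\,\delta(x_1-x_2))$ along the ray $\lambda(-\xi_2,\xi_2,\ul{0}_{3;k})$ and observes that it is a nonzero constant in $\lambda$.

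One small caveat: your option (b) in the ``Main obstacle'' paragraph is not quite right as stated. For an arbitrary $\phi\in C_c^\infty(\R^k)$ that is identically $1$ on $B(\ux_k,\varepsilon_x)$ but not assumed non-negative, the integral $\int_{\R^{k-1}}\phi(z,z,\ul{y}_{3;k})\,dz\,d\ul{y}_{3;k}$ need not be bounded below by the measure of the diagonal slice, since $\phi$ may take negative (or complex) values outside the ball and produce cancellation. Your option (a)---invoking the equivalence $\WF=\wt{\WF}$ and the localization principle behind \cref{lem:WF_loc_incl}---is the correct fix, and is in fact what the paper's own proof implicitly relies on when it writes ``for a bump function $\varphi$'' without further comment.
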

\begin{proof}
By symmetry, it suffices to consider the case $(i,j)=(1,2)$. Since $\delta(x_{1}-x_{2})$ has singular support in the hyperplane $\{x_{1}=x_{2}\}\subset \R^{k}$, it follows from \cref{prop:wf_prop}\ref{item:wf_prop_ssupp} that $(\ux_{k},\uxi_{k})\in\WF(\delta(x_{1}-x_{2}))$ implies that $x_{1}=x_{2}$.

Now suppose that $(\ux_{k},\uxi_{k})\in \R^{k}\times (\R^{k}\setminus\{0\})$ and $\xi_{1}+\xi_{2}\neq 0$. We claim that such a point is a regular directed point for $\delta(x_{1}-x_{2})$ (i.e. it does not belong to the wave front set). Indeed, let $\varphi\in C_{c}^{\infty}(\R^{k})$ be such that $\varphi(\ux_{k})\neq 0$. Then
\begin{equation}
\mathcal{F}\paren*{\delta(x_{1}-x_{2})\varphi}(\uxi_{k}') = \int_{\R^{k-1}}d\ul{y}_{2;k} \varphi(y_{2},\ul{y}_{2;k})e^{-i(\xi_{1}'+\xi_{2}')y_{2} + \uxi_{3;k}'\cdot \ul{y}_{3;k}}, \qquad \forall \uxi_{k}'\in \R^{k}.
\end{equation}
Since $\varphi$ is Schwartz class, repeated integration by parts in $\ul{y}_{2;k}$ yields
\begin{equation}
\left|\mathcal{F}\paren*{\delta(x_{1}-x_{2})\varphi}(\uxi_{k}')\right| \lesssim_{N} \paren*{1+|\xi_{1}'+\xi_{2}'| + |\uxi_{3;k}'|}^{-N}, \qquad \forall N\in\N_{0}.
\end{equation}
We consider two cases based on the values of $\xi_{1}$ and $\xi_{2}$.
\begin{enumerate}[I.]
\item
If $\sgn(\xi_{2})=\sgn(\xi_{1})$, then
\begin{equation}
|\xi_{1}+\xi_{2}| \geq \max\{|\xi_{1}|,|\xi_{2}|\},
\end{equation}
which implies that
\begin{equation}
\paren*{1+|\xi_{1}+\xi_{2}| + |\uxi_{3;k}|}^{-N} \lesssim_N \paren*{1+|\uxi_{k}|}^{-N}.
\end{equation}
Hence, if $\varepsilon>0$ is sufficiently small so that $\sgn(\xi_{1}')=\sgn(\xi_{2}')$ for all $\uxi_{k}'\in B(\uxi_{k},\varepsilon)$, then
\begin{equation}
\left|\mathcal{F}\paren*{\delta(x_{1}-x_{2})\varphi}(\lambda\uxi_{k}')\right| \lesssim_{N} \paren*{1+\lambda|\uxi_{k}|}^{-N}, \qquad \forall \uxi_{k}'\in B(\uxi_{k},\varepsilon), \enspace \lambda \in [0,\infty).
\end{equation}
\item
If $\sgn(\xi_{2})=-\sgn(\xi_{1})$, then without loss of generality suppose that $|\xi_{1}|>|\xi_{2}|$. Then for $\varepsilon>0$ sufficiently small, we have that there exists $\theta\in (0,1)$ such that
\begin{equation}
\frac{|\xi_{2}'|}{|\xi_{1}'|} \geq \theta, \qquad \forall \uxi_{k}'\in B(\uxi_{k},\varepsilon).
\end{equation}
So by the reverse triangle inequality,
\begin{equation}
\paren*{1+\lambda |\xi_{1}'+\xi_{2}'| + \lambda |\uxi_{3;k}'|}^{-N} \lesssim_{\theta,N} \paren*{1+\lambda|\uxi_{k}|}^{-N}, \qquad \forall \uxi_{k}'\in B(\uxi_{k},\varepsilon), \enspace \lambda \in[0,\infty).
\end{equation}
\end{enumerate}

Now suppose that $(\ux_{k},\uxi_{k}) \in\R^{k}\times (\R^{k}\setminus\{0\})$, $\xi_{1}+\xi_{2}= 0$, and $\uxi_{3;k}\neq 0\in\R^{k-2}$. We claim that such a point is a regular directed point. We consider two cases based on the magnitude of $|\xi_{2}|$ relative to $|\uxi_{3;k}|$.
\begin{enumerate}[I.]
\item
If $|\xi_{1}| \leq |\uxi_{3;k}|$, then for $\varepsilon>0$ sufficiently small,
\begin{equation}
\paren*{1+\lambda|\xi_{1}'+\xi_{2}'| + \lambda|\uxi_{3;k}'|}^{-N} \lesssim_{N} \paren*{1+\lambda|\uxi_{k}|}^{-N}, \qquad \forall \uxi_{k}'\in B(\uxi_{k},\varepsilon), \enspace \lambda \in [0,\infty).
\end{equation}
\item
If $|\xi_{1}|>|\uxi_{3;k}|$, then for $\varepsilon>0$ sufficiently small, there exists $\theta\in (0,1)$ such that
\begin{equation}
\frac{|\uxi_{3;k}'|}{|\xi_{1}'|} \geq \theta, \qquad \forall \uxi_{k}'\in B(\uxi_{k},\varepsilon).
\end{equation}
Hence,
\begin{equation}
|\uxi_{3;k}'| \geq \frac{|\uxi_{3;k}'|}{2} + \frac{\theta}{4}\paren*{|\uxi_{1}'|+|\uxi_{2}'|},
\end{equation}
which implies that
\begin{equation}
\paren*{1+\lambda |\uxi_{3;k}'|}^{-N} \lesssim_{\theta,N} \paren*{1+\lambda|\uxi_{k}|}^{-N}, \qquad \forall \uxi_{k}'\in B(\uxi_{k},\varepsilon), \enspace \lambda\in [0,\infty).
\end{equation}
\end{enumerate}

Thus, we have shown that
\begin{equation}
\WF(\delta(x_{1}-x_{2})) \subset \{(\ux_{k},\uxi_{k})\in\R^{k}\times (\R^{k}\setminus\{0\}) : x_{1}=x_{2}, \enspace \xi_{1}+\xi_{2}=0, \enspace \text{and} \enspace \uxi_{3;k}=0\}.
\end{equation}
For the reverse inclusion, we claim that $(\ux_{k},(-\xi_{2},\xi_{2},\ul{0}_{3;k}))\in \R^{k}\times (\R^{k}\setminus\{0\})$ is not a regular directed point for $\delta(x_{1}-x_{2})$. Indeed, this claim follows from observing that for a bump function $\varphi\in C_{c}^{\infty}(\R^{k})$ about $\ux_{k}$, we have that for all $\lambda\in [0,\infty)$,
\begin{equation}
\left|\mathcal{F}\paren*{\delta(x_{1}-x_{2})\varphi}(-\lambda\xi_{2},\lambda\xi_{2},\ul{0}_{3;k})\right| = \int_{\R^{k-1}}d\ux_{2;k} \varphi(x_{2},\ux_{2;k}).
\end{equation}
\end{proof}

We now seek to systematically give meaning to the product of distributions and, in particular, preserve the property that the Fourier transform maps products to convolution. We accomplish this task with a useful criterion due to H\"{o}rmander--one which we make heavy use of in \cref{sec:con_GP_W}--for how to ``canonically'' define the product of two distributions. Before stating H\"{o}rmander's result, we need a few technical preliminaries.

For a closed cone $\Gamma\subset \R^{k}\times (\R^{k}\setminus\{0\})$, define the set
\begin{equation}
\D_{\Gamma}'(\R^{k}) \coloneqq \{u\in\D'(\R^{k}) : \WF(u) \subset \Gamma\}.
\end{equation}

\begin{lemma}
$u\in \D'(\R^{k})$ belongs to $\D_{\Gamma}'(\R^{k})$ if and only if for every $\phi\in C_{c}^{\infty}(\R^{k})$ and every closed cone $V\subset \R^{k}$ satisfying
\begin{equation}\label{eq:cone_con}
\Gamma \cap (\supp(\phi)\times V) = \emptyset,
\end{equation}
we have that
\begin{equation}
\sup_{\uxi_{k} \in V} |\uxi_{k}|^{N} |\wh{(\phi u)}(\uxi_{k})| < \infty, \qquad \forall N\in\N.
\end{equation}
\end{lemma}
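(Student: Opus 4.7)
The plan is to prove the equivalence by handling each direction separately, with the reverse implication being essentially immediate from the definitions and the forward direction requiring a standard compactness plus partition of unity argument.

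For the reverse direction, I would begin by fixing $(\ux_k^0, \uxi_k^0) \in (\R^k \times (\R^k\setminus\{0\})) \setminus \Gamma$ and showing that $(\ux_k^0, \uxi_k^0)$ is a regular directed point for $u$ in the sense of \cref{def:WF}. Since $\Gamma$ is closed and conic in the fiber variable, there exist $\varepsilon_x, \varepsilon_\xi > 0$ such that if $V \coloneqq \{\uxi_k \neq 0 : \uxi_k/|\uxi_k| \in \ol{B(\uxi_k^0/|\uxi_k^0|, \varepsilon_\xi)}\}$, then $\Gamma \cap (\ol{B(\ux_k^0, \varepsilon_x)} \times V) = \emptyset$. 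Choosing $\phi \in C_c^\infty(\R^k)$ with $\supp(\phi) \subset \ol{B(\ux_k^0, \varepsilon_x)}$ and $\phi \equiv 1$ on $B(\ux_k^0, \varepsilon_x/2)$ and applying the hypothesis directly gives $|\wh{\phi u}(\uxi_k)| \lesssim_N (1+|\uxi_k|)^{-N}$ for all $\uxi_k \in V$, which is exactly the condition required by \cref{def:WF} upon restricting to radial rays $\lambda \ueta_k$ with $\ueta_k$ in a small ball around $\uxi_k^0$.

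For the forward direction, I would fix $\phi \in C_c^\infty(\R^k)$ and a closed cone $V$ with $\Gamma \cap (\supp(\phi) \times V) = \emptyset$, and proceed by compactness. For each $(\ux_k^0, \uxi_k^0) \in \supp(\phi) \times (V \cap S^{k-1})$, the assumption $\WF(u) \subset \Gamma$ combined with \cref{def:WF} gives a ball $B(\ux_k^0, r_{\ux_k^0})$, an open conic neighborhood $W_{\ux_k^0}$ of $\uxi_k^0$, and a test function $\psi_{\ux_k^0} \in C_c^\infty(\R^k)$ with $\psi_{\ux_k^0} \equiv 1$ on $B(\ux_k^0, r_{\ux_k^0})$ such that $\wh{\psi_{\ux_k^0} u}$ is rapidly decreasing on $W_{\ux_k^0}$. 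The set $\supp(\phi) \times (V \cap S^{k-1})$ is compact, so I extract a finite subcover $\{B(\ux_k^j, r_j) \times W_j\}_{j=1}^N$ with associated cutoffs $\psi_j$, choose a smooth partition of unity $\{\chi_j\}$ on $\supp(\phi)$ subordinate to $\{B(\ux_k^j, r_j)\}$, and decompose $\phi u = \sum_j (\chi_j \phi) u$. On $\supp(\chi_j)$, we have $\psi_j \equiv 1$, so $(\chi_j \phi) u = (\chi_j \phi) (\psi_j u)$, and hence $\wh{(\chi_j \phi) u} = \wh{\chi_j \phi} * \wh{\psi_j u}$.

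The main obstacle—and the only nontrivial technical point—is the standard convolution lemma asserting that if $g \in \Sc(\R^k)$ decays rapidly in all directions and $h$ decays rapidly on an open cone $W$, then $g * h$ decays rapidly on every closed subcone of $W$; one proves this by splitting the convolution integral at $|\eta| = |\xi|/2$ and using Schwartz decay of $g$ on the far piece plus the cone decay of $h$ on the near piece. Applying this lemma with $g = \wh{\chi_j \phi}$ and $h = \wh{\psi_j u}$, each $\wh{(\chi_j \phi) u}$ is rapidly decreasing on any closed subcone of $W_j$. Since the finite union $\bigcup_j W_j$ contains the compact set $V \cap S^{k-1}$, it contains $V$ itself up to a closed conic neighborhood; partitioning $V$ into finitely many closed subcones each contained in some $W_j$ and summing the bounds yields $\sup_{\uxi_k \in V} |\uxi_k|^N |\wh{\phi u}(\uxi_k)| < \infty$ for every $N \in \N$, completing the proof.
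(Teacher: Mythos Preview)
The paper does not prove this lemma; it simply cites Lemma~8.2.1 in H\"ormander's text. Your proposal supplies the standard argument and is essentially correct, but the final step of the forward direction contains a gap.

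After extracting the finite cover $\{B_j \times W_j\}$ of $\supp(\phi) \times (V \cap S^{k-1})$ and taking a partition of unity $\{\chi_j\}$ subordinate to $\{B_j\}$, you correctly establish that each $\wh{(\chi_j \phi) u}$ decays rapidly on closed subcones of $W_j$. But to conclude that $\wh{\phi u} = \sum_{j'} \wh{(\chi_{j'} \phi) u}$ decays on a subcone $V_\ell \subset W_{j(\ell)}$, you need \emph{every} summand to decay there, not just the $j(\ell)$-th one. For indices $j'$ with $V_\ell \not\subset W_{j'}$ your argument gives no control of $\wh{(\chi_{j'} \phi) u}$ on $V_\ell$, so ``summing the bounds'' is not justified. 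The spatial partition and the frequency cover are not coordinated: $\psi_{j'}$ is identically one on $B_{j'}$, but its associated decay cone $W_{j'}$ need not contain $V_\ell$.

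The fix is to reverse the order of the two compactness reductions. For each fixed $x_0 \in \supp(\phi)$, first cover $V \cap S^{k-1}$ by finitely many cones $W_1,\ldots,W_m$ with cutoffs $\psi_1,\ldots,\psi_m$, each identically one on a common ball $B(x_0,r_0)$ (take $r_0$ to be the minimum of the radii provided by \cref{def:WF}). Then for any $\chi$ supported in $B(x_0,r_0)$ one has $(\chi\phi)u = (\chi\phi)(\psi_i u)$ for \emph{all} $i$ simultaneously, so the convolution lemma gives decay of $\wh{(\chi\phi)u}$ on every $W_i$ and hence on all of $V$. Only afterwards does one cover $\supp(\phi)$ by finitely many such balls and apply a spatial partition of unity; now each term in the sum decays on all of $V$ and the conclusion follows.
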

\begin{proof}
See Lemma 8.2.1 in \cite{HormPDOI}.
\end{proof}

It is clear that $\D_{\Gamma}'(\R^{k})$ is a subspace of $\D'(\R^{k})$. We say that a sequence $\{u_{j}\}_{j=1}^{\infty}$ in $\D_{\Gamma}'(\R^{k})$ and $u\in\D_{\Gamma}'(\R^{k})$, we say that $u_{j}\rightarrow u$ in $\D_{\Gamma}'(\R^{k})$ as $j\rightarrow\infty$ if $u_{j}\rightarrow u$ in the weak-* topology on $\D'(\R^{k})$ and for every $N\in\N$,
\begin{equation}
\sup_{\uxi_{k}\in V} |\uxi_{k}|^{N} |\wh{(\phi u)}(\uxi_{k}) - \wh{(\phi u_{j})}(\uxi_{k})| \rightarrow 0,
\end{equation}
as $j\rightarrow\infty$, for every $\phi\in C_{c}^{\infty}(\R^{k})$ and closed cone $V\subset \R^{k}$ such that \eqref{eq:cone_con} holds.

\medskip
The next lemma shows that $C_{c}^{\infty}(\R^{k})$ is sequentially dense in the space $\D_{\Gamma}'(\R^{k})$.

\begin{lemma}
For every $u\in \D_{\Gamma}'(\R^{k})$, there exists a sequence $u_{j} \in C_{c}^{\infty}(\R^{k})$ such that $u_{j}\rightarrow u$ in $\D_{\Gamma}'(\R^{k})$.
\end{lemma}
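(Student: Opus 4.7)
\medskip

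\noindent\textbf{Proof plan.} The plan is to construct the approximating sequence by the classical combination of mollification and truncation, then verify the two convergence requirements for the topology on $\D_{\Gamma}'(\R^{k})$. Fix a mollifier $\rho\in C_{c}^{\infty}(\R^{k})$ with $\int\rho=1$, and set $\rho_{\varepsilon}(x)\coloneqq \varepsilon^{-k}\rho(x/\varepsilon)$. Fix a cutoff $\chi\in C_{c}^{\infty}(\R^{k})$ with $\chi\equiv 1$ on the unit ball, and set $\chi_{R}(x)\coloneqq \chi(x/R)$. Given $u\in\D_{\Gamma}'(\R^{k})$, define
\begin{equation}
u_{R,\varepsilon}\coloneqq \chi_{R}\,(\rho_{\varepsilon}\ast u),
\end{equation}
which lies in $C_{c}^{\infty}(\R^{k})$ since $\rho_{\varepsilon}\ast u\in C^{\infty}(\R^{k})$. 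I would then extract a diagonal sequence $u_{j}\coloneqq u_{R_{j},\varepsilon_{j}}$ with $R_{j}\to\infty$ and $\varepsilon_{j}\to 0$.

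Next, I would verify the weak-$\ast$ convergence $u_{j}\to u$ in $\D'(\R^{k})$. This is standard: for any $\psi\in C_{c}^{\infty}(\R^{k})$, once $R_{j}$ is large enough that $\chi_{R_{j}}\equiv 1$ on $\supp\psi$, we have $\ipp{u_{j},\psi}_{\D'-\D}=\ipp{u,\tl{\rho}_{\varepsilon_{j}}\ast\psi}_{\D'-\D}$ with $\tl{\rho}_{\varepsilon}(x)=\rho_{\varepsilon}(-x)$, and $\tl{\rho}_{\varepsilon_{j}}\ast\psi\to\psi$ in $C_{c}^{\infty}(\R^{k})$.

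The substantive step is the wave-front-set estimate: given $\phi\in C_{c}^{\infty}(\R^{k})$ and a closed cone $V\subset\R^{k}$ with $\Gamma\cap(\supp\phi\times V)=\emptyset$, one must show
\begin{equation}
\sup_{\xi\in V}|\xi|^{N}\,\bigl|\wh{\phi u}(\xi)-\wh{\phi u_{j}}(\xi)\bigr|\xrightarrow[j\to\infty]{}0,\qquad \forall N\in\N.
\end{equation}
Choosing $\phi'\in C_{c}^{\infty}(\R^{k})$ with $\phi'\equiv 1$ on a neighborhood of $\supp\phi$ and taking $R_{j}$ large enough that $\chi_{R_{j}}\equiv 1$ on $\supp\phi'$, one reduces to estimating $\phi\cdot(\rho_{\varepsilon_{j}}\ast(\phi' u)-\phi' u)$ modulo a remainder $\phi\cdot\rho_{\varepsilon_{j}}\ast((1-\phi')u)$ that, by support considerations, is controlled uniformly. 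For the main piece, since $\phi' u\in\E'(\R^{k})$, its Fourier transform is a smooth function, and
\begin{equation}
\wh{\phi(\rho_{\varepsilon_{j}}\ast\phi' u-\phi' u)}(\xi)=(2\pi)^{-k}\int_{\R^{k}}\wh{\phi}(\xi-\eta)\bigl(\wh{\rho}(\varepsilon_{j}\eta)-1\bigr)\wh{\phi' u}(\eta)\,d\eta.
\end{equation}
I would split the $\eta$-integration into the region $|\eta|\leq\delta|\xi|$ (for small $\delta>0$), where $|\xi-\eta|\gtrsim|\xi|$ and Schwartz decay of $\wh{\phi}$ combined with polynomial growth of $\wh{\phi' u}$ yields arbitrary decay times $\|\wh{\rho}(\varepsilon_{j}\cdot)-1\|_{L^{\infty}}\to 0$; and the region $|\eta|\geq\delta|\xi|$, which I further split according to whether $\eta/|\eta|$ lies in a small conic neighborhood $V'$ of $V$ or not. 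On $\eta\in V'$, the hypothesis $\WF(u)\subset\Gamma$ together with $\Gamma\cap(\supp\phi'\times\ol{V'})=\emptyset$ (shrinking $V'$ if necessary) gives $|\wh{\phi' u}(\eta)|\lesssim_{N}(1+|\eta|)^{-N}$, which combined with the $L^{\infty}$ bound on $\wh{\rho}(\varepsilon_{j}\cdot)-1$ and a uniform bound $|\wh{\phi}(\xi-\eta)|\lesssim_{M}(1+|\xi-\eta|)^{-M}$ yields the required decay; while off $V'$, the angle between $\xi$ and $\eta$ is bounded away from $0$, so $|\xi-\eta|\gtrsim|\xi|+|\eta|$ and Schwartz decay of $\wh{\phi}$ absorbs the polynomial growth of $\wh{\phi' u}$.

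The main obstacle is the second sub-region: obtaining the quantitative improvement by $\wh{\rho}(\varepsilon_{j}\eta)-1$ in the cone $V'$ requires care, since $\wh{\rho}(\varepsilon_{j}\eta)-1$ need only vanish at $\eta=0$, not at infinity. The resolution is to exploit that $\wh{\phi' u}$ already has rapid decay on $V'$, so the integrand is uniformly integrable, and then to use dominated convergence combined with the pointwise limit $\wh{\rho}(\varepsilon_{j}\eta)-1\to 0$ (on the region $|\eta|\leq\varepsilon_{j}^{-1/2}$, say) and the uniform bound $|\wh{\rho}(\varepsilon_{j}\eta)-1|\leq 2$ (on the complement, whose contribution vanishes by $N$-fold decay). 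This diagonal argument delivers the required uniform decay on $V$ and completes the proof.
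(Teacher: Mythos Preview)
The paper does not prove this lemma; it simply cites Theorem 8.2.3 in H\"{o}rmander's \emph{The Analysis of Linear Partial Differential Operators I}. Your mollify-then-truncate construction $u_{R,\varepsilon}=\chi_{R}(\rho_{\varepsilon}\ast u)$ together with the Fourier-side conic splitting is exactly the argument H\"{o}rmander gives there, so your proposal is correct and follows the same route as the cited reference. One minor simplification: the remainder $\phi\cdot\rho_{\varepsilon_{j}}\ast((1-\phi')u)$ actually vanishes identically once $\varepsilon_{j}$ is smaller than $\operatorname{dist}(\supp\phi,\supp(1-\phi'))$, so ``controlled uniformly'' understates what you get.
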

\begin{proof}
See Theorem 8.2.3 in \cite{HormPDOI}.
\end{proof}

\begin{lemma}
Let $m,n\in\N$ and let $f:\R^{m}\rightarrow\R^{n}$ be a $C^{\infty}$ map. Define the set of normals of the map $f$ by
\begin{equation}
N_{f} \coloneqq \{(f(\ux_{m}),\ueta_{n}) \in \R^{n}\times \R^{n} : f'(\ux_{m})^{T}\ueta_{n}=0\},
\end{equation}
where $f'(\ux_{m})^{T}$ denotes the transpose of the matrix $f'(\ux_{m})$. Then the pullback distribution $f^{*}u$ can be defined in one and only one way for all $u\in\D'(\R^{n})$ with
\begin{equation}\label{eq:wf_norm}
N_{f} \cap \WF(u) = \emptyset
\end{equation}
so that $f^{*}u=u\circ f$, when $u\in C^{\infty}(\R^{n})$ and for any closed conic subset $\Gamma\subset \R^{n}\times (\R^{n}\setminus\{0\})$ satisfying $\Gamma\cap N_{f}=\emptyset$, we have a continuous map $f^{*}:\D'_{\Gamma}(\R^{n})\rightarrow \D'_{f^{*}\Gamma}(\R^{m})$, where
\begin{equation}
f^{*}\Gamma \coloneqq \{(\ux_{m},f'(\ux_{m})^{T}\ueta_{n}) : (f(\ux_{m}),\ueta_{n})\in\Gamma\}.
\end{equation}
In particular, for every $u\in\D'(\R^{n})$ satisfying \eqref{eq:wf_norm}, we have that
\begin{equation}
\WF(f^{*}u)\subset f^{*}\WF(u).
\end{equation}
\end{lemma}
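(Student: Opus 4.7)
The plan is to define $f^{*}u$ first for smooth $u$ in the obvious way, namely $f^{*}u := u\circ f \in C^{\infty}(\R^{m})$, and then to extend this definition by continuity to all of $\D'_\Gamma(\R^n)$ whenever $\Gamma$ is a closed cone with $\Gamma \cap N_f = \emptyset$. For a general $u\in \D'(\R^n)$ satisfying $N_f\cap\WF(u)=\emptyset$, I would select such a cone $\Gamma$ containing $\WF(u)$, which exists because $\WF(u)$ itself is closed and conic and disjoint from $N_f$, and then invoke the density lemma recalled immediately before this statement to produce a sequence $u_j\in C_c^\infty(\R^n)$ with $u_j\to u$ in $\D'_\Gamma(\R^n)$. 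Defining $f^{*}u$ as the limit (if it exists) of $f^{*}u_j = u_j\circ f$ in $\D'_{f^{*}\Gamma}(\R^m)$ will give the desired extension, and uniqueness will be automatic from the density statement.

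The heart of the proof will be showing that the map $u\mapsto u\circ f$ is sequentially continuous from $\D'_{\Gamma}(\R^n)$ into $\D'_{f^{*}\Gamma}(\R^m)$. By the characterization of convergence in the $\D'_\Gamma$ topology, it suffices to fix $\varphi\in C_c^\infty(\R^m)$ and a closed cone $V\subset \R^m$ with $f^{*}\Gamma \cap (\supp\varphi \times V) = \emptyset$, and show that $\F(\varphi\cdot (u_j\circ f))(\xi)$ converges to a limit uniformly in $\xi\in V$ with rapid decay in $|\xi|$. For this I would choose a cutoff $\chi\in C_c^\infty(\R^n)$ which is identically one on $f(\supp\varphi)$ and write
\begin{equation}
\F(\varphi\cdot (u_j\circ f))(\xi) = \ipp*{\chi u_j,\, I_\varphi(\xi,\cdot)}_{\Sc'(\R^n)-\Sc(\R^n)},\qquad I_\varphi(\xi,\eta) := \int_{\R^m}\varphi(x)e^{i(\eta\cdot f(x)-\xi\cdot x)}\,dx,
\end{equation}
so that estimating $\F(\varphi\cdot (u_j\circ f))(\xi)$ reduces, via the Fourier inversion formula applied to $\chi u_j$, to analyzing the oscillatory integral $I_\varphi(\xi,\eta)$ and pairing against $\widehat{\chi u_j}(\eta)$.

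The key analytic point is then a two-region splitting in $\eta$. On the region where $\xi - f'(x)^{T}\eta$ is bounded away from zero uniformly for $x\in\supp\varphi$, integration by parts in $x$ against the non-stationary phase gives arbitrary polynomial decay of $I_\varphi(\xi,\eta)$ in $(|\xi|+|\eta|)$. On the complementary region, where $\xi$ is close to $f'(x)^{T}\eta$ for some $x\in\supp\varphi$, the exclusion $f^{*}\Gamma \cap (\supp\varphi\times V)=\emptyset$ forces $(f(x),\eta)\notin \Gamma$ in a conic neighborhood, so that the rapid decay of $\widehat{\chi u_j}(\eta)$ in this direction (which follows from $\WF(\chi u_j)\subset\WF(u_j)\subset\Gamma$ together with uniform estimates provided by convergence in $\D'_\Gamma$) beats the at-most-polynomial growth of $I_\varphi(\xi,\eta)$. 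Combining the two regions gives the uniform rapid decay and identifies the limit, and specializing to $\Gamma = \WF(u)$ yields the inclusion $\WF(f^{*}u)\subset f^{*}\WF(u)$.

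The main obstacle I expect is the bookkeeping in the second region: one must simultaneously exploit that $(f(x),\eta)\notin \Gamma$ (to use $\widehat{\chi u_j}$'s decay) while also tracking the $\xi$-decay uniformly in $V$, and one must verify that the oscillatory-integral estimates on $I_\varphi$ hold with constants independent of $j$ so that the limit can actually be taken inside the $\eta$-integral. Everything else — smoothness of the compositions $u_j\circ f$, the density-based uniqueness argument, and the reduction to test pairs $(\varphi, V)$ — is essentially formal once this uniform estimate is in hand.
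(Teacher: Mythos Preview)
Your proof sketch is correct and follows the standard approach due to H\"ormander. The paper does not actually prove this lemma: its entire proof reads ``See Theorem 8.2.4 in \cite{HormPDOI}.'' What you have outlined is essentially the argument found at that reference, so there is nothing to compare beyond noting that you have supplied the details the paper chose to omit by citation.
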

\begin{proof}
See Theorem 8.2.4 in \cite{HormPDOI}.
\end{proof}

We are now prepared to state H\"{o}rmander's criterion for the existence of the product of two distributions.

\begin{prop}[H\"{o}rmander's criterion]\label{prop:H_crit}
Let $u_{1},u_{2}\in\D'(\R^{k})$, and suppose that
\begin{equation}
\WF(u_{1}) \oplus \WF(u_{2}) \coloneqq \{(\ux_{k}, \uxi_{k}) \in \R^{k} \times (\R^{k}\setminus\{0\}) : \uxi_{k} = \uxi_{1,k} + \uxi_{2,k}, \enspace (\ux_{k},\uxi_{j,k}) \in \WF(u_{j}) \enspace \text{for $j=1,2$}\}
\end{equation}
does not contain an element of the form $(\ux_{k},0)$. Then the product $u_{1}u_{2}$ can be defined as the pullback of the tensor product $u_{1}\otimes u_{2}$ by the diagonal map $d:\R^{k}\rightarrow \R^{2k}$. Moreover,
\begin{equation}
\WF(u_{1}u_{2}) \subset \WF(u_{1}) \cup \WF(u_{2}) \cup \paren*{\WF(u_{1}) \oplus \WF(u_{2})}.
\end{equation}
We refer to this definition of the product $u_1 u_2$ as the \emph{H\"ormander product}.
\end{prop}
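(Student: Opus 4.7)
The strategy is to realize $u_1 u_2$ as the pullback of the tensor product $u_1 \otimes u_2 \in \D'(\R^{2k})$ by the diagonal embedding $d:\R^k \to \R^{2k}$, $d(\ux_k) = (\ux_k, \ux_k)$, and then to invoke the pullback theorem stated immediately above \cref{prop:H_crit}. To apply that theorem I must verify the key microlocal hypothesis $N_d \cap \WF(u_1 \otimes u_2) = \emptyset$, where $N_d$ denotes the set of normals of $d$. Once this is done, the claimed WF bound will come from the general pullback inclusion $\WF(d^* v)\subset d^*\WF(v)$, and the identification with classical multiplication for smooth $u_1,u_2$ will follow from the fact that $d^* u = u\circ d$ when $u$ is smooth.

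First I would compute $N_d$. Since the Jacobian $d'(\ux_k)$ is represented by the block matrix $\begin{pmatrix} I_k \\ I_k \end{pmatrix}$, its transpose acts by $(\uxi_{1,k},\uxi_{2,k}) \mapsto \uxi_{1,k}+\uxi_{2,k}$, giving
\begin{equation*}
N_d = \{((\ux_k,\ux_k),(\uxi_{1,k},\uxi_{2,k})) : \uxi_{1,k}+\uxi_{2,k}=0\}.
\end{equation*}
Next I would bound $\WF(u_1\otimes u_2)$ via \cref{prop:wf_prop}\ref{item:wf_prop_tp}, which splits it into three pieces. In the two ``degenerate'' pieces $(\supp(u_1)\times\{0\})\times\WF(u_2)$ and $\WF(u_1)\times(\supp(u_2)\times\{0\})$, exactly one covector block vanishes while the other is a nonzero element of a wave front set, so their sum is nonzero and these pieces are disjoint from $N_d$. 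For the principal piece $\WF(u_1)\times\WF(u_2)$, a point in $N_d$ would furnish nonzero covectors $\uxi_{1,k},\uxi_{2,k}$ at a common basepoint $\ux_k$ with $(\ux_k,\uxi_{j,k})\in\WF(u_j)$ and $\uxi_{1,k}+\uxi_{2,k}=0$, thereby placing $(\ux_k,0)\in\WF(u_1)\oplus\WF(u_2)$, which contradicts the hypothesis. Hence $N_d\cap\WF(u_1\otimes u_2)=\emptyset$, and the pullback theorem produces a well-defined $u_1 u_2 \coloneqq d^*(u_1\otimes u_2) \in \D'(\R^k)$.

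Finally, for the wave front set inclusion I would apply $\WF(d^* v) \subset d^* \WF(v)$ with $v = u_1\otimes u_2$, where the pullback of covectors sends $((\ux_k,\ux_k),(\uxi_{1,k},\uxi_{2,k}))$ to $(\ux_k,\uxi_{1,k}+\uxi_{2,k})$ (retaining only those with nonzero resulting covector). Elements coming from $\WF(u_1)\times\WF(u_2)$ contribute exactly to $\WF(u_1)\oplus\WF(u_2)$, while elements from the two degenerate pieces pull back to $(\ux_k,\uxi_{2,k})\in\WF(u_2)$ and $(\ux_k,\uxi_{1,k})\in\WF(u_1)$ respectively. Combining these three contributions yields the claimed bound. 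The substantive step throughout is the microlocal compatibility check $N_d\cap\WF(u_1\otimes u_2)=\emptyset$; everything else is bookkeeping together with a direct invocation of the abstract pullback result.
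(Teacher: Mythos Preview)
Your argument is correct and is precisely the standard proof of H\"ormander's criterion: realize the product as the pullback of $u_1\otimes u_2$ by the diagonal map, verify the normal-set condition using the tensor-product bound on wave front sets, and read off the wave front set inclusion from $\WF(d^*v)\subset d^*\WF(v)$. The paper itself does not supply a proof at all---it simply writes ``See Theorem 8.2.10 in \cite{HormPDOI}''---so you have in fact filled in what the cited reference does, and your outline matches that proof.
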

\begin{proof}
See Theorem 8.2.10 in \cite{HormPDOI}.
\end{proof}

Sometimes it is easy to make an ansatz for an explicit formula for the product of two distributions, for example $\delta(x_1-x_2)\delta(x_2-x_3)$. The next lemma is useful for verifying that the ansatz indeed coincides with the product distribution defined by \cref{prop:H_crit}.

\begin{lemma}\label{lem:p_uniq}
Let $u,v\in\D'(\R^{k})$. Then there exists at most one distribution $w\in \D'(\R^{k})$ such that for every $\ux_{k}\in\R^{k}$, there exists $\phi\in C_{c}^{\infty}(\R^{k})$ which is $\equiv 1$ on $B(\ux_{k},\varepsilon)$, for some $\varepsilon>0$, and such that for every $\uxi_{k}\in\R^{k}$,
\begin{equation}
\mathcal{F}(\phi u) \cdot \mathcal{F}(\phi v)(\uxi_{k}-\cdot) \in L^{1}(\R^{k}),
\end{equation}
the map
\begin{equation}
\R^{k} \rightarrow \C, \qquad \uxi_{k} \mapsto \paren*{\F(\phi u) \ast \F(\phi v)}(\uxi_{k})
\end{equation}
is polynomially bounded, and
\begin{equation}\label{eq:pdis_con}
\mathcal{F}(\phi^{2} w)(\uxi_{k}) = (2\pi)^{-k/2}\int_{\R^{k}}d\ueta_{k}\mathcal{F}(\phi u)(\ueta_{k})\mathcal{F}(\phi v)(\uxi_{k}-\ueta_{k}).
\end{equation}
\end{lemma}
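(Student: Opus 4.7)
The plan is to reduce uniqueness to a local statement: suppose $w_1,w_2\in\D'(\R^k)$ both satisfy the hypothesis; since agreement of distributions is a local property, it suffices to show that for every $\ux_k\in\R^k$ there is a neighborhood of $\ux_k$ on which $w_1=w_2$, as varying $\ux_k$ then yields $w_1=w_2$ globally.

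Fix such a point $\ux_k$ and let $\phi_1,\phi_2\in C_c^\infty(\R^k)$ be the bumps supplied by the hypothesis for $w_1,w_2$ respectively, with $\phi_i\equiv 1$ on $B(\ux_k,\varepsilon_i)$. The key step is to show that the product bump $\phi:=\phi_1\phi_2\in C_c^\infty(\R^k)$, which is identically $1$ on $B(\ux_k,\min(\varepsilon_1,\varepsilon_2))$, serves as a common bump satisfying the hypothesis conditions for both $w_1$ and $w_2$; that is,
\begin{equation*}
\F(\phi^2 w_i)=(2\pi)^{-k/2}\F(\phi u)*\F(\phi v),\qquad i=1,2.
\end{equation*}
Once this is established, both right-hand sides coincide, so $\F(\phi^2 w_1)=\F(\phi^2 w_2)$, and Fourier injectivity on $\Sc'(\R^k)$ (applicable since $\phi^2 w_i$ has compact support, hence is tempered) gives $\phi^2 w_1=\phi^2 w_2$. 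Since $\phi^2\equiv 1$ on $B(\ux_k,\min(\varepsilon_1,\varepsilon_2))$, we conclude $w_1=w_2$ on this ball, completing the reduction.

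To verify the key step, fix $i\in\{1,2\}$ and write $\phi^2 w_i=\phi_{3-i}^2(\phi_i^2 w_i)$. Using that $\phi_{3-i}^2\in C_c^\infty(\R^k)\subset\Sc(\R^k)$ so that the standard identity $\F(fg)=(2\pi)^{-k/2}\F(f)*\F(g)$ applies, and then substituting the hypothesis \eqref{eq:pdis_con} for $\phi_i$, one obtains
\begin{equation*}
\F(\phi^2 w_i)=(2\pi)^{-k}\F(\phi_{3-i}^2)*\F(\phi_i u)*\F(\phi_i v).
\end{equation*}
Expanding $\F(\phi_{3-i}^2)=(2\pi)^{-k/2}\F(\phi_{3-i})*\F(\phi_{3-i})$, reassociating the resulting four-fold convolution, and twice using $\F(\phi_{3-i})*\F(\phi_i u)=(2\pi)^{k/2}\F(\phi u)$ (and likewise for $v$), the expression collapses to $(2\pi)^{-k/2}\F(\phi u)*\F(\phi v)$, as desired.

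The main technical obstacle lies in justifying the associativity and commutativity of these iterated convolutions, together with the pointwise $L^1$ and polynomial-boundedness clauses of the hypothesis for the product bump $\phi$. The relevant ingredients are: Fourier transforms of compactly supported smooth functions belong to $\Sc(\R^k)$, so $\F(\phi_{3-i})$ is Schwartz; Fourier transforms of compactly supported distributions are smooth and polynomially bounded (Paley--Wiener); and the rapid decay of the Schwartz factor $\F(\phi_{3-i})$ combined with the pointwise $L^1$ hypothesis for $\phi_i$ lets Fubini's theorem justify the rearrangements, yielding the analogous $L^1$ and polynomial-boundedness clauses for $\phi$.
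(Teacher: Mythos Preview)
Your proof is correct and follows essentially the same strategy as the paper's: pass to the common bump $\phi_1\phi_2$, convolve the hypothesis identity for $\phi_i$ with the Fourier transform of the remaining factor $\phi_{3-i}^2$, and use Fubini (justified by the Schwartz decay of $\F(\phi_{3-i})$ against the polynomial bounds from Paley--Wiener and the $L^1$ hypothesis) to reassociate and obtain a symmetric expression in $i$. The only cosmetic difference is that the paper first isolates an intermediate claim---for any $\psi\in C_c^\infty$ one has $\F(\psi\phi^2 w)=(2\pi)^{-k/2}\F(\psi\phi u)\ast\F(\phi v)=(2\pi)^{-k/2}\F(\phi u)\ast\F(\psi\phi v)$---and then applies it twice (moving one copy of $\phi_{3-i}$ at a time), whereas you split $\F(\phi_{3-i}^2)$ directly and reassociate the four-fold convolution in one step; the underlying Fubini justification is the same.
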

\begin{proof}
We first claim that for any $\psi\in C_{c}^{\infty}(\R^{k})$,
\begin{equation}\label{eq:pre_res}
\F(\psi \phi^{2}w)(\uxi_{k}) = (2\pi)^{-k/2}\paren*{\F(\psi\phi u_{1}) \ast \F(\phi u_{2})}(\uxi_{k}) = (2\pi)^{-k/2} \paren*{\F(\phi u_{1})\ast \F(\psi\phi u_{2})}(\uxi_{k}), 
\end{equation}
for all $\uxi_{k}\in\R^{k}$ where the integrals defining the convolutions converge absolutely for $\uxi_{k}$ fixed. Indeed, since $\wh{\psi}$ is Schwartz and $\F(\phi^{2}w)$ is analytic,
\begin{align}
\F(\psi \phi^{2}w)(\uxi_{k}) &= (2\pi)^{-k/2}\int_{\R^{k}}d\ueta_{k} \F(\psi)(\uxi_{k}-\ueta_{k}) \F(\phi^{2}w)(\ueta_{k}) \nonumber\\
&= (2\pi)^{-k/2}\int_{\R^{k}}d\ueta_{k} \F(\psi)(\uxi_{k}-\ueta_{k}) \paren*{\int_{\R^{k}}d\ueta_{k}'\F(\phi u_{1})(\ueta_{k}-\ueta_{k}') \F(\phi u_{2})(\ueta_{k}')},
\end{align}
where the integrals are absolutely convergent. Hence, by the Fubini-Tonelli theorem,
\begin{equation}
\begin{split}
&\int_{\R^{k}}d\ueta_{k} \F(\psi)(\uxi_{k}-\ueta_{k}) \paren*{\int_{\R^{k}}d\ueta_{k}'\F(\phi u_{1})(\ueta_{k}-\ueta_{k}') \F(\phi u_{2})(\ueta_{k}')} \\
&\phantom{=} = \int_{\R^{k}}d\ueta_{k}'\F(\phi u_{2})(\ueta_{k}') \paren*{\int_{\R^{k}}d\eta_{k}\F(\psi)(\uxi_{k}-\ueta_{k})\F(\phi u_{1})(\ueta_{k}-\ueta_{k}')}.
\end{split}
\end{equation}
By the translation invariance of the Lebesgue measure,
\begin{align}
\int_{\R^{k}} d\ueta_{k}\F(\psi)(\uxi_{k}-\ueta_{k})\F(\phi u_{1})(\ueta_{k}-\ueta_{k}') &= \int_{\R^{k}}d\ueta_{k}\F(\psi)(\uxi_{k}-\ueta_{k}'-\ueta_{k})\F(\phi u_{1})(\ueta_{k}) \nonumber\\
&= \paren*{\F(\psi) \ast \F(\phi u_{1})}(\uxi_{k}-\ueta_{k}') \nonumber\\
&= (2\pi)^{k/2} \F(\psi\phi u_{1})(\uxi_{k}-\ueta_{k}'),
\end{align}
where the ultimate equality follows from Fourier inversion. Therefore,
\begin{equation}
(2\pi)^{-k/2}\int_{\R^{k}}d\ueta_{k}'\F(\phi u_{2})(\ueta_{k}') \paren*{\int_{\R^{k}}d\eta_{k}\F(\psi)(\uxi_{k}-\ueta_{k})\F(\phi u_{1})(\ueta_{k}-\ueta_{k}')} = \paren*{\F(\psi\phi u_{1}) \ast \F(\phi u_{2})}(\uxi_{k}).
\end{equation}
By symmetry, we have also shown that
\begin{equation}
\F(\psi \phi^{2}w)(\uxi_{k}) = \paren*{\F(\phi u_{1}) \ast \F(\psi\phi u_{2})}(\uxi_{k}).
\end{equation}

Now suppose that $w_{1},w_{2}\in\D'(\R^{k})$ are two distributions such that there exist $\phi_{1},\phi_{2}\in C_{c}^{\infty}(\R^{k})$ so that
\begin{align}
\F(\phi_{1}^{2}w_{1}) &= \paren*{\F(\phi_{1}u_{1})\ast\F(\phi_{1}u_{2})}\\
\F(\phi_{2}^{2}w_{2}) &= \paren*{\F(\phi_{2}u_{1}) \ast \F(\phi_{2}u_{2})},
\end{align}
where the integrals defining the convolutions are absolutely convergent for fixed $\uxi_{k}$ and there exists $N_{1},N_{2}\in \N_{0}$ so that
\begin{align}
&\sup_{\uxi_{k}\in\R^{k}}\jp{\uxi_{k}}^{-N_{1}}\int_{\R^{k}}d\ueta_{k}\left|\F(\phi_{1}u_{1})(\uxi_{k}-\ueta_{k})\F(\phi_{1}u_{2})(\ueta_{k})\right| <\infty  \label{eq:poly_bd}\\
&\sup_{\uxi_{k}\in\R^{k}} \jp{\uxi_{k}}^{-N_{2}}\int_{\R^{k}}d\ueta_{k} \left|\F(\phi_{2}u_{1})(\uxi_{k}-\ueta_{k})\F(\phi_{2}u_{2})(\ueta_{k})\right| <\infty.
\end{align}
Then by \eqref{eq:pre_res},
\begin{align}
\F(\phi_{1}^{2}\phi_{2}^{2}w_{1}) = (2\pi)^{-k/2} \F(\phi_{2}) \ast \F(\phi_{2}\phi_{1}^{2}w_{1}) &= (2\pi)^{-k/2}\F(\phi_{2}) \ast \paren*{\F(\phi_{1}u_{1}) \ast \F(\phi_{1}\phi_{2}u_{2})} \nonumber\\
&= (2\pi)^{-k/2} \F(\phi_{2}\phi_{1}u_{1})\ast\F(\phi_{2}\phi_{1}u_{2}),
\end{align}
where the ultimate equality is justified since $\F(\phi_{2})$ is a Schwartz function and the fact that there exists some $N\in\N$ so that
\begin{equation}
\sup_{\uxi_{k}\in\R^{k}} \jp{\uxi_{k}}^{-N}\int_{\R^{k}}d\ueta_{k} \left|\F(\phi_{1}u_{1})(\uxi_{k}-\ueta_{k}) \F(\phi_{1}\phi_{2}u_{2})(\ueta_{k})\right| < \infty,
\end{equation}
which is a consequence of \eqref{eq:poly_bd}. Similarly,
\begin{equation}
\F(\phi_{1}^{2}\phi_{2}w_{2}) = (2\pi)^{-k/2}\F(\phi_{1}\phi_{2}u_{1}) \ast \F(\phi_{1}\phi_{2}u_{2}),
\end{equation}
which shows that $\F(\phi_{1}^{2}\phi_{2}^{2}w_{1})=\F(\phi_{1}^{2}\phi_{2}^{2}w_{2})$. By a localization argument (see, for instance, Theorem 2.2.1 in \cite{HormPDOI}), it follows that $w_{1}=w_{2}$ in $\D'(\R^{k})$, completing the proof of the lemma.
\end{proof}

Lastly, we record some basic properties of the product of two distributions, when it exists.

\begin{prop}[Properties of product]
The following properties hold:
\begin{enumerate}[(a)]
\item
If $f\in\D(\R^{k})$ and $u\in\D'(\R^{k})$, then the usual definition of the $fu$ coincides with \cref{prop:H_crit}.
\item
If $u,v,w\in\D'(\R^{k})$ and the products $uv$, $(uv)w$, $vw$, and $u(vw)$ all exists, then $u(vw) = (uv)w$. Furthermore, if $uv$ exists, then $vu$ also exists and $uv=vu$.
\item
If $u,v\in \D'(\R^{k})$ have disjoint singular supports, then $uv$ exists and is given by the product distribution guaranteed by \cref{prop:sing_prop}.
\item
If $u,v\in\D'(\R^{k})$ and $uv$ exists, then $\supp(uv)\subset \supp(u)\cap \supp(v)$.
\end{enumerate}
\end{prop}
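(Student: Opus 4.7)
The plan is to make the uniqueness criterion \cref{lem:p_uniq} the workhorse: it identifies the H\"ormander product by the local Fourier convolution identity $\F(\phi^{2}w)(\uxi_k) = (2\pi)^{-k/2}(\F(\phi u)\ast \F(\phi v))(\uxi_k)$ for suitable cutoffs $\phi$. Almost all of (a), (c), (d), and half of (b) reduce to verifying this identity for a candidate distribution. For the rest, I will lean on \cref{prop:H_crit} (definition as pullback of $u\otimes v$ by the diagonal map $d$) and the wave-front-set calculus of \cref{prop:wf_prop}.

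For (a), let $w\coloneqq fu$ denote the ordinary product. Since $f\in C^\infty$ we have $\WF(f)=\emptyset$, so $\WF(f)\oplus \WF(u)=\emptyset$ and the H\"ormander hypothesis trivially holds. For any bump $\phi$, $\phi f\in\D(\R^k)$ gives $\phi^{2}w=(\phi f)(\phi u)$, where the right-hand side is the standard action of a test function on a distribution. Then $\F(\phi f)\in\Sc(\R^k)$ is Schwartz and $\F(\phi u)$ is polynomially bounded (as $\phi u$ is compactly supported), so the convolution is absolutely convergent, polynomially bounded, and equals $(2\pi)^{k/2}\F(\phi^{2}w)$ by the standard convolution theorem. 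Thus the Fourier identity of \cref{lem:p_uniq} holds and $fu$ coincides with the H\"ormander product. For (d), if $x_0\notin\supp(u)$ pick $\phi$ supported in a neighborhood of $x_0$ disjoint from $\supp(u)$; then $\phi u\equiv 0$, so $\F(\phi u)=0$, so the right side of \eqref{eq:pdis_con} vanishes, forcing $\phi^{2}uv=0$ in a neighborhood of $x_0$. By symmetry $\supp(uv)\subset\supp(u)\cap\supp(v)$. For (c), by \cref{prop:wf_prop}\ref{item:wf_prop_ssupp}, $x\in\ssupp(u)\iff \WF_x(u)\neq\emptyset$, so disjointness of singular supports forces the sum set $\WF(u)\oplus\WF(v)$ to be empty and \cref{prop:H_crit} applies. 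To identify the H\"ormander product with the one from \cref{prop:sing_prop}, observe that around any $x_0$ at least one of $u,v$ agrees with a smooth function in a neighborhood; on that neighborhood (a) says the H\"ormander product reduces to ordinary multiplication by a smooth function, which is exactly the prescription of \cref{prop:sing_prop}. Uniqueness in \cref{prop:sing_prop} closes the argument.

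For (b), commutativity is immediate from either viewpoint: the wave-front hypothesis $\WF(u)\oplus\WF(v)=\WF(v)\oplus\WF(u)$ is symmetric, and the convolution in \eqref{eq:pdis_con} is commutative, so \cref{lem:p_uniq} yields $uv=vu$. (Equivalently, $u\otimes v$ and $v\otimes u$ differ by the coordinate flip $\sigma:\R^{2k}\to\R^{2k}$, but $d=\sigma\circ d$, so their pullbacks by $d$ agree.) The main obstacle is associativity. My plan is to reduce both $(uv)w$ and $u(vw)$ to the triple pullback $\Delta_3^{*}(u\otimes v\otimes w)$, where $\Delta_3:\R^k\to\R^{3k}$ is $x\mapsto(x,x,x)$. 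Factoring $\Delta_3=(d\otimes\mathrm{id})\circ d=(\mathrm{id}\otimes d)\circ d$ and invoking the pullback composition/transitivity built into \cref{prop:H_crit}, together with the tensor-product wave-front bound \cref{prop:wf_prop}\ref{item:wf_prop_tp}, one checks that existence of each binary product furnishes exactly the normal-set disjointness needed to perform the two-step pullback in either order; both iterations then equal $\Delta_3^{*}(u\otimes v\otimes w)$.

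The technical heart of the associativity step, and where I expect the bulk of the work to live, is verifying the wave-front-normal condition \eqref{eq:wf_norm} for each intermediate pullback under the sole hypothesis that the three binary products exist. Concretely, one must show that if $(x,\xi_1)\in\WF(u)$, $(x,\xi_2)\in\WF(v)$, $(x,\xi_3)\in\WF(w)$ never satisfy $\xi_1+\xi_2=0$, $\xi_2+\xi_3=0$, $\xi_1+\xi_3=0$ (the pairwise H\"ormander conditions), then they also never satisfy $\xi_1+\xi_2+\xi_3=0$ in a manner compatible with the iterated pullbacks; this uses that the second pullback in $(uv)w$ interacts with the partial wave-front set produced by the first pullback, which is controlled by \cref{prop:wf_prop}\ref{item:wf_prop_tp} together with the pullback wave-front bound. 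As a fallback, I would run the same program directly on the Fourier side: apply \cref{lem:p_uniq} twice to produce a well-defined triple convolution $\F(\phi u)\ast \F(\phi v)\ast \F(\phi w)$, justify the Fubini--Tonelli interchange using the polynomial bounds guaranteed by the hypothesis, and conclude $(uv)w=u(vw)$ by the uniqueness clause of \cref{lem:p_uniq}.
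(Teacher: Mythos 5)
The paper offers no proof of this proposition at all: it simply cites Theorem IX.43 of \cite{RSII}. Your proposal therefore supplies an argument the paper outsources, and in substance it reconstructs the standard Reed--Simon proof, with \cref{lem:p_uniq} playing the role of the local Fourier characterization. Parts (a), (c), (d) and the commutativity half of (b) are handled correctly, modulo one point you should make explicit: to invoke \cref{lem:p_uniq} you need to know that the H\"ormander product of \cref{prop:H_crit} (defined as the pullback of $u\otimes v$ by the diagonal) actually satisfies the local convolution identity \eqref{eq:pdis_con} with absolutely convergent, polynomially bounded convolutions. That is true, but it is a nontrivial compatibility statement between the two characterizations (pullback vs.\ Fourier), and your argument uses it silently throughout.

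The genuine gap is in your primary route to associativity. The pairwise H\"ormander conditions do \emph{not} imply that the triple pullback $\Delta_3^{*}(u\otimes v\otimes w)$ exists: one can have $(x,\xi_1)\in\WF(u)$, $(x,\xi_2)\in\WF(v)$, $(x,\xi_3)\in\WF(w)$ with $\xi_1+\xi_2\neq 0$, $\xi_2+\xi_3\neq 0$, $\xi_1+\xi_3\neq 0$ and yet $\xi_1+\xi_2+\xi_3=0$ (e.g.\ in $\R^2$, take $\xi_1=(1,0)$, $\xi_2=(0,1)$, $\xi_3=(-1,-1)$), which puts a zero covector in the obstruction set for $\Delta_3$. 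Nor does the hypothesized existence of $(uv)w$ rescue this: the bound $\WF(uv)\subset \WF(u)\cup\WF(v)\cup(\WF(u)\oplus\WF(v))$ is only an inclusion, so $(x,\xi_1+\xi_2)$ need not lie in $\WF(uv)$, and the condition for $(uv)w$ to exist gives no information about such covectors. So the claim that ``existence of each binary product furnishes exactly the normal-set disjointness needed'' is false, and the factorization $\Delta_3=(d\otimes\mathrm{id})\circ d$ cannot be the backbone of the proof under the stated hypotheses. Your fallback --- writing $\F(\phi^3(uv)w)$ and $\F(\phi^3 u(vw))$ as iterated convolutions via \cref{lem:p_uniq}, justifying the Fubini--Tonelli interchange from the absolute convergence and polynomial bounds that are part of the hypothesis that all four products exist, and concluding by the uniqueness clause --- is the correct argument and is in fact how the cited reference proceeds; it should be promoted from fallback to the main line.
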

\begin{proof}
See Theorem IX.43 in \cite{RSII}.
\end{proof}

\pagebreak
\begin{table}[h]
   \caption{Notation} 
   \label{tab:notation}
   \small % text size of table content
   \centering % center the table
   \begin{tabular}{|p{4cm}|p{12cm}|} % alignment of each column data
      \hline
   \textbf{Symbol} & \textbf{Definition} \\ 
      \hline
%   \midrule
   $\ul{x}_{k}$ or $\ul{x}_{i;i+k}$ & $(x_1, \ldots, x_k)$ or $(x_i, \ldots, x_{i+k})$ \\
      $d\ul{x}_{k}$ or $d\ul{x}_{i;i+k}$ & $ dx_1 \cdots dx_k$ or $dx_i \cdots dx_{i+k}$ \\
         $\N$ or $\N_0$ & natural numbers or natural numbers inclusive of zero\\
      $\mathbb{N}_{\leq i}$ or $\mathbb{N}_{\geq i}$  & $\{ n \in \mathbb{N} \,:\, n \leq i \}$ or $\{ n \in \mathbb{N} \,:\, n \geq i \}$ \\
      $\Ss_{k}$ & symmetric group on $k$ elements \\
      $C_c^\infty(\R^k)$ or $\D(\R^k)$ & smooth, compactly supported functions on $\R^k$\\
      $\Sc(\R^{k})$ or $\Sc_s(\R^{k})$ & Schwartz space or bosonic Schwartz space on $\R^k$: \cref{sym_schwartz} \\
      $\Sc(\R^k;\V)$ & Schwartz functions on $\R^k$ with values in $\V$:  \eqref{eq:Schw_ms}, \eqref{equ:mixed} \\
      $\Sc'(\R^{k})$ or $\Sc_s'(\R^k)$ & tempered distributions or bosonic tempered distributions on $\R^k$ \\
      $\mathcal{D}'(\R^{k})$ & distributions on $\R^k$ \\
      $\L(E, F)$ & continuous linear maps between locally convex spaces $E$ and $F$\\
      $dF$ & the G\^ateaux derivative of $F$: \cref{gateaux_deriv}\\
      $\grad$ or $\grad_{s}$, $\grad_{s,\V}$, $\grad_{s,\C}$ & the real or symplectic $L^2$ gradients: \cref{def:re_grad} and \cref{schwartz_deriv}, \cref{prop:Schw_WP_V}, \cref{schwartz2_wpoiss}\\
      $\grad_{1},\grad_{\bar{1}}, \grad_{2},\grad_{\bar{2}}$ & variational derivatives: \eqref{eq:vd_prop}, \eqref{eq:vd_prop_V} \\
       $A_{(\pi(1),\ldots, \pi(k))}^{(k)}$ & conjugation of an operator by a permutation: see \eqref{eq:op_coord} \\
       $\Sym_k(f)$ & symmetrization operator for functions: \cref{def:sym_f} \\
       $\Sym_k(A^{(k)})$, $\Sym(A)$ & symmetrization operator for operators: \cref{def:sym_A}\\
       $B_{i;j}^{\pm}, B_{i;j}$ & contraction operators: \cref{eq:B_con} \cref{eq:B_pm}\\
       $\phi^{\otimes k}$ or  $\phi^{\times k}$ & $k$-fold tensor or cartesian product of $\phi$ with itself: \eqref{tensor_def} or  \eqref{prod_coords}\\
       $\omega_{L^2}$, $\omega_{L^2,\V}$, $\omega_{L^2,\V}$ & $L^2$ symplectic forms: \eqref{l2_symp}, \eqref{symp_v}, \eqref{eq:om_L2_sq}  \\
       $\A_{\Sc}$, $\A_{\Sc,\V}$, $\A_{\Sc,\C}$ & see \eqref{equ:Asc}, \eqref{alg_v}, \eqref{eq:Asc}\\
       $\pb{\cdot}{\cdot}_{L^2}, \pb{\cdot}{\cdot}_{L^2,\V}, \pb{\cdot}{\cdot}_{L^2,\C}$  & $L^2$ Poisson brackets: \eqref{l2_bracket}, \eqref{eq:pb_L2_V}, \eqref{eq:pb_L2_C} \\
        $(\G_\infty, [\cdot, \cdot]_{\G_\infty})$ & Lie algebra of observable $\infty$-hierarchies: see discussion around \cref{prop:G_inf_br}\\
        $(\G_\infty^*, \A_\infty, \{\cdot, \cdot\}_{\G_\infty^*})$ & {Lie-Poisson manifold of density matrix $\infty$-hierarchies: \eqref{equ:poisson_def} and discussion around \cref{prop:LP}}\\
        $w_n$, $w_{n,(\psi_1,\psi_2)}$ & recursive functions: \eqref{eq:w_rec}, \eqref{eq:wn_bv_rec} \\
        $w_{n}^{(k)}$; $w_{n,j}^{(k),t}$, $w_{n,j'}^{(k),t}$ & $k$-particle component of $w_n$: \eqref{eq:wn(k)_recur}; partial transposes of $w_{n}^{(k)}$: \cref{lem:wn(k)_tran}  \\
        $I_n$, $\tl{I}_n$, $I_{b,n}$ & involutive functionals: \eqref{eq:In_def}, \eqref{tilde_in_intro}, \eqref{eq:Ibn_intro_def} \\
        $\widetilde{\W}_{n}$ & the unsymmetrized operators: \eqref{eq:Wn_rec}\\
	$\W_{n,sa}$ & the self-adjoint operators: \eqref{wn_sa_def}\\
	$\W_{n}$ & the bosonic, self-adjoint operators:  \eqref{eq:Wn_fin_def}\\
        $\H_n$ & the $n$-th Hamiltonian functional: \eqref{Hn_trace}\\
        $\Tr_{1,\ldots,N}$ & generalized trace: \cref{def:gen_trace}\\
        $\Tr_{k+1,\ldots,N}$ & generalized partial trace: \cref{prop:partial_trace}\\
        $WF(u)$ & wave front set of a distribution $u$: \cref{def:WF}\\
        \hline
%      & \\            
%   \bottomrule[\heavyrulewidth] 
   \end{tabular}
\end{table}

\bibliographystyle{siam}
\bibliography{GPHam}

\begin{thebibliography}{10}

\bibitem{AKNS74}
{\sc M.~J. Ablowitz, D.~J. Kaup, A.~C. Newell, and H.~Segur}, {\em The inverse
  scattering transform-{Fourier} analysis for nonlinear problems}, Studies in
  Applied Mathematics, 53 (1974), pp.~249--315.

\bibitem{ABGT2004}
{\sc R.~Adami, C.~Bardos, F.~Golse, and A.~Teta}, {\em Towards a rigorous
  derivation of the cubic {NLSE} in dimension one}, Asymptotic Analysis, 40
  (2004), pp.~93--108.

\bibitem{AmBre2012}
{\sc Z.~Ammari and S.~Breteaux}, {\em Propagation of chaos for many-boson
  systems in one dimension with a point pair-interaction}, Asymptot. Anal., 76
  (2012), pp.~123--170.

\bibitem{BC1984}
{\sc R.~Beals and R.~R. Coifman}, {\em Scattering and inverse scattering for
  first order systems}, Communications on Pure and Applied Mathematics, 37
  (1984), pp.~39--90.

\bibitem{BC1985}
\leavevmode\vrule height 2pt depth -1.6pt width 23pt, {\em Inverse scattering
  and evolution equations}, Communications on pure and applied mathematics, 38
  (1985), pp.~29--42.

\bibitem{BC1987}
{\sc R.~Beals and R.~R. Coifman}, {\em Scattering and inverse scattering for
  first-order systems. {II}}, Inverse Problems, 3 (1987), pp.~577--593.

\bibitem{BC1989}
{\sc R.~Beals and R.~R. Coifman}, {\em Linear spectral problems, non-linear
  equations and the $\delta$-method}, Inverse problems, 5 (1989), p.~87.

\bibitem{Bethe1931}
{\sc H.~Bethe}, {\em Zur theorie der metalle}, Zeitschrift f{\"u}r Physik, 71
  (1931), pp.~205--226.

\bibitem{Calogero1991}
{\sc F.~Calogero}, {\em Why are certain nonlinear {PDE}s both widely applicable
  and integrable?}, in What is integrability?, Springer Ser. Nonlinear Dynam.,
  Springer, Berlin, 1991, pp.~1--62.

\bibitem{CGLM2008}
{\sc P.~Comon, G.~Golub, L.-H. Lim, and B.~Mourrain}, {\em Symmetric tensors
  and symmetric tensor rank}, SIAM J. Matrix Anal. Appl., 30 (2008),
  pp.~1254--1279.

\bibitem{Davies1990}
{\sc B.~Davies}, {\em Higher conservation laws for the quantum nonlinear
  {S}chr\"{o}dinger equation}, Phys. A, 167 (1990), pp.~433--456.

\bibitem{deift_survey}
{\sc P.~Deift}, {\em {Fifty Years of KdV: An Integrable System }},
  arXiv:1902.10267.

\bibitem{DZ2003}
{\sc P.~Deift and X.~Zhou}, {\em Long-time asymptotics for solutions of the
  {NLS} equation with initial data in a weighted {S}obolev space}, vol.~56,
  2003, pp.~1029--1077.
\newblock Dedicated to the memory of J\"{u}rgen K. Moser.

\bibitem{FT07}
{\sc L.~D. Faddeev and L.~A. Takhtajan}, {\em {Hamiltonian methods in the
  theory of solitons}}, Classics in Mathematics, Springer, Berlin, english~ed.,
  2007.

\bibitem{FKP}
{\sc J.~Fr{\"{o}}hlich, A.~Knowles, and A.~Pizzo}, {\em {Atomism and
  quantization}}, J. Phys. A, 40 (2007), pp.~3033--3045.

\bibitem{FTY2000}
{\sc J.~Fr\"{o}hlich, T.-P. Tsai, and H.-T. Yau}, {\em On a classical limit of
  quantum theory and the non-linear {H}artree equation}, no.~Special Volume,
  Part I, 2000, pp.~57--78.
\newblock GAFA 2000 (Tel Aviv, 1999).

\bibitem{GGKM}
{\sc C.~S. Gardner, J.~M. Greene, M.~D. Kruskal, and R.~M. Miura}, {\em
  Korteweg-de{V}ries equation and generalization. {VI}. {M}ethods for exact
  solution}, Comm. Pure Appl. Math., 27 (1974), pp.~97--133.

\bibitem{Gaudin2014}
{\sc M.~Gaudin}, {\em The {B}ethe wavefunction}, Cambridge University Press,
  New York, 2014.
\newblock Translated from the 1983 French original by Jean-S\'{e}bastien Caux.

\bibitem{Greub1978}
{\sc W.~Greub}, {\em Multilinear algebra}, Springer-Verlag, New
  York-Heidelberg, second~ed., 1978.
\newblock Universitext.

\bibitem{HormPDOI}
{\sc L.~H{\"{o}}rmander}, {\em The Analysis of Linear Partial Differential
  Operators: Distribution Theory and Fourier Analysis}, Springer-Verlag, 1983.

\bibitem{Horvath1966}
{\sc J.~Horv{\'a}th}, {\em Topological vector spaces and distributions}, no.~v.
  1 in Addison-Wesley series in mathematics, Addison-Wesley Pub. Co., 1966.

\bibitem{HM1976}
{\sc R.~L. Hudson and G.~R. Moody}, {\em Locally normal symmetric states and an
  analogue of {de Finetti's} theorem}, Zeitschrift f{\"{u}}r
  Wahrscheinlichkeitstheorie und Verwandte Gebiete, 33 (1976), pp.~343--351.

\bibitem{KBI1993}
{\sc V.~E. Korepin, N.~M. Bogoliubov, and A.~G. Izergin}, {\em Quantum inverse
  scattering method and correlation functions}, Cambridge Monographs on
  Mathematical Physics, Cambridge University Press, Cambridge, 1993.

\bibitem{Lax68}
{\sc P.~D. Lax}, {\em Integrals of nonlinear equations of evolution and
  solitary waves}, Comm. Pure Appl. Math., 21 (1968), pp.~467--490.

\bibitem{LNR2015}
{\sc M.~Lewin, P.~T. Nam, and N.~Rougerie}, {\em {Derivation of nonlinear
  {G}ibbs measures from many-body quantum mechanics}}, J. {\'{E}}c. polytech.
  Math., 2 (2015), pp.~65--115.

\bibitem{LL1963_I}
{\sc E.~H. Lieb and W.~Liniger}, {\em Exact analysis of an interacting {B}ose
  gas. {I}. {T}he general solution and the ground state}, Phys. Rev. (2), 130
  (1963), pp.~1605--1616.

\bibitem{MW1983}
{\sc J.~Marsden and A.~Weinstein}, {\em Coadjoint orbits, vortices, and
  {C}lebsch variables for incompressible fluids}, vol.~7, 1983, pp.~305--323.
\newblock Order in chaos (Los Alamos, N.M., 1982).

\bibitem{MMW1984}
{\sc J.~E. Marsden, P.~J. Morrison, and A.~Weinstein}, {\em The {H}amiltonian
  structure of the {BBGKY} hierarchy equations}, in Fluids and plasmas:
  geometry and dynamics ({B}oulder, {C}olo., 1983), vol.~28 of Contemp. Math.,
  Amer. Math. Soc., Providence, RI, 1984, pp.~115--124.

\bibitem{MR2013}
{\sc J.~E. Marsden and T.~S. Ratiu}, {\em Introduction to mechanics and
  symmetry: a basic exposition of classical mechanical systems}, vol.~17,
  Springer Science \& Business Media, 2013.

\bibitem{MNPRS1_2019}
{\sc D.~Mendelson, A.~R. Nahmod, N.~Pavlovi{\'c}, M.~Rosenzweig, and
  G.~Staffilani}, {\em A rigorous derivation of the {Hamiltonian} structure for
  the nonlinear {Schr\"odinger} equation}, arXiv preprint arXiv:1908.03847,
  (2019).

\bibitem{MNPS2016}
{\sc D.~Mendelson, A.~R. Nahmod, N.~Pavlovi\'{c}, and G.~Staffilani}, {\em An
  infinite sequence of conserved quantities for the cubic {G}ross-{P}itaevskii
  hierarchy on {$\Bbb{R}$}}, Trans. Amer. Math. Soc., 371 (2019),
  pp.~5179--5202.

\bibitem{Milnor1984}
{\sc J.~Milnor}, {\em {Remarks on infinite-dimensional Lie groups}}, in
  Relativ. groups Topol. 2, 1984.

\bibitem{NST2014}
{\sc K.-H. Neeb, H.~Sahlmann, and T.~Thiemann}, {\em Weak poisson structures on
  infinite dimensional manifolds and hamiltonian actions}, in Lie Theory and
  Its Applications in Physics, V.~Dobrev, ed., Tokyo, 2014, Springer Japan,
  pp.~105--135.

\bibitem{Omori1979}
{\sc H.~Omori}, {\em Infinite-dimensional {L}ie groups}, vol.~158 of
  Translations of Mathematical Monographs, American Mathematical Society,
  Providence, RI, 1997.
\newblock Translated from the 1979 Japanese original and revised by the author.

\bibitem{Palais1997}
{\sc R.~Palais}, {\em The symmetries of solitons}, Bulletin of the American
  Mathematical Society, 34 (1997), pp.~339--403.

\bibitem{RSII}
{\sc M.~Reed and B.~Simon}, {\em Methods of modern mathematical physics. {II}.
  {F}ourier analysis, self-adjointness}, Academic Press [Harcourt Brace
  Jovanovich, Publishers], New York-London, 1975.

\bibitem{Schwartz1966}
{\sc L.~Schwartz}, {\em Th\'{e}orie des distributions}, Publications de
  l'Institut de Math\'{e}matique de l'Universit\'{e} de Strasbourg, No. IX-X.
  Nouvelle \'{e}dition, enti\'{e}rement corrig\'{e}e, refondue et
  augment\'{e}e, Hermann, Paris, 1966.

\bibitem{Stormer1969}
{\sc E.~St{\"o}rmer}, {\em {Symmetric states of infinite tensor products of
  {$C^*$} Algebras}}, J. Funct. Anal., 3 (1969), pp.~48--68.

\bibitem{Terng1997}
{\sc C.-L. Terng}, {\em Soliton equations and differential geometry}, J.
  Differential Geom, 45 (1997), pp.~407--445.

\bibitem{TU1998}
{\sc C.-L. Terng and K.~Uhlenbeck}, {\em Poisson actions and scattering theory
  for integrable systems}, Surveys in Differential Geometry, 4 (1998),
  pp.~315--402.

\bibitem{Thacker1978_pcl}
{\sc H.~B. Thacker}, {\em Polynomial conservation laws in (1 + 1)-dimensional
  classical and quantum field theory}, Phys. Rev. D, 17 (1978), pp.~1031--1040.

\bibitem{Treves1967}
{\sc F.~Tr{\`e}ves}, {\em Topological vector spaces, distributions and
  kernels}, Academic Press, New York-London, 1967.

\bibitem{ZS79}
{\sc V.~E. Zaharov and A.~B. {\v{S}}abat}, {\em {Integration of the nonlinear
  equations of mathematical physics by the method of the inverse scattering
  problem. {II}}}, Funktsional. Anal. i Prilozhen., 13 (1979), pp.~13--22.

\bibitem{Zhakharov91}
{\sc V.~E. Zakharov}, ed., {\em What is integrability?}, Springer Series in
  Nonlinear Dynamics, Springer-Verlag, Berlin, 1991.

\bibitem{ZS72}
{\sc V.~E. Zakharov and A.~B. Shabat}, {\em {Exact theory of two-dimensional
  self-focusing and one-dimensional self-modulation of waves in nonlinear
  media}}, {\v{Z}}. {\`{E}}ksper. Teoret. Fiz., 61 (1971), pp.~118--134.

\bibitem{Zhou1989}
{\sc X.~Zhou}, {\em Direct and inverse scattering transforms with arbitrary
  spectral singularities}, Comm. Pure Appl. Math., 42 (1989), pp.~895--938.

\bibitem{Zhou1998}
\leavevmode\vrule height 2pt depth -1.6pt width 23pt, {\em {$L^2$}-{S}obolev
  space bijectivity of the scattering and inverse scattering transforms}, Comm.
  Pure Appl. Math., 51 (1998), pp.~697--731.

\end{thebibliography}
\end{document}